% !TeX root = main.tex

\documentclass[11pt,twoside,a4paper]{article}

% !TeX root = main.tex
% LTeX: enabled=false

\usepackage[utf8]{inputenc}
\usepackage[T1]{fontenc}
\usepackage[english]{babel}
\usepackage{tikz,tikz-cd}
\usepackage{subcaption}
\usepackage{pgfplots}\pgfplotsset{compat=1.18}
\usepackage{amsmath,amssymb,amsthm}
\usepackage{mathrsfs} % for mathscr
\usepackage{graphicx} % Required for inserting images
\usepackage{ifthen}
\usepackage[a4paper,
            bindingoffset=5mm,
            margin=2.4cm,
            % left=1in,
            % right=1in,
            % top=1in,
            % bottom=2cm
            ]{geometry}
\usepackage[shortlabels]{enumitem}
\usepackage{tikz,tikz-cd}
\usepackage{mathtools}
\usepackage{apptools}%\AtAppendix{\counterwithin{theorem}{section}\counterwithin{equation}{section}}
\usepackage{todonotes}
\usepackage{hang}
\usepackage{color}
\usepackage{etoc}
\usepackage{graphicx}
    \graphicspath{{figures/}}
\usepackage[bibstyle=numeric, backend=biber, sorting=none, citestyle=numeric-comp, giveninits, url=false, maxbibnames=5, isbn=false]{biblatex} % loads etoolbox
\usepackage{csquotes}\MakeOuterQuote"
\usepackage{soul}\setstcolor{red}   
\usepackage{booktabs,multirow}
\usepackage{hyperref}
\usepackage{bookmark}% faster updated bookmarks
\usepackage{cleveref}
\usepackage{comment}

\newboolean{printversion}
\setboolean{printversion}{false}

\ifthenelse{\boolean{printversion}}{
    \hypersetup{allcolors=black,colorlinks,pdfencoding=auto,psdextra} % no citation colors in the print version
    }{
    \hypersetup{colorlinks,citecolor=blue,linkcolor=blue,pdfencoding=auto,psdextra} % pdf version hypersetup
    }

%  newpage before section beginning 
\ifthenelse{\boolean{printversion}}{
    \pretocmd{\section}{\cleardoublepage}{}{}
}{
    \pretocmd{\section}{\clearpage}{}{}
}

% geometry
\ifthenelse{\boolean{printversion}}{
    \geometry{a4paper, bindingoffset=5mm, margin=2.5cm,bottom=2.65cm} 
}{
    \geometry{a4paper, bindingoffset=0mm, margin = 2.65cm}
}

\addbibresource{refs.bib}
\renewbibmacro{in:}{%
  \ifentrytype{article}{}{\printtext{\bibstring{in}\intitlepunct}}}
\AtEveryBibitem{\clearfield{month}}
\AtEveryBibitem{\clearfield{day}}

\newtheorem{theorem}{Theorem}
%\numberwithin{theorem}{section}
\newtheorem{lemma}[theorem]{Lemma}
\newtheorem{corollary}[theorem]{Corollary}
\newtheorem{proposition}[theorem]{Proposition}

\newtheorem{definition}[theorem]{Definition}
\newtheorem{introtheorem}{Theorem} %letter-labeled thm env for intro

\newtheorem*{convention*}{Convention}

\newtheorem*{problem*}{Problem}
\newtheorem*{assumption*}{Assumption}
\theoremstyle{definition}
\newtheorem{example}[theorem]{Example}
\newtheorem{remark}[theorem]{Remark}
\newtheorem*{warning*}{Warning}

\crefname{theorem}{Thm.}{Thms.}
\crefname{introtheorem}{Thm.}{Thms.}
\Crefname{theorem}{Theorem}{Theorems}
\Crefname{introtheorem}{Theorem}{Theorems}
\crefname{proposition}{Prop.}{Props.}
\Crefname{proposition}{Proposition}{Propositions}
\crefname{lemma}{Lem.}{Lemmas}
\Crefname{lemma}{Lemma}{Lemmas}
\crefname{definition}{Def.}{Defs.}
\Crefname{definition}{Definition}{Defs.}
\crefname{corollary}{Cor.}{Cors.}
\Crefname{corollary}{Corollary}{Corollaries}
\crefname{equation}{eq.}{eqs.}
\Crefname{equation}{Equation}{Equations}
\crefname{figure}{fig.}{figs.}
\Crefname{figure}{Figure}{Figures}
\crefname{section}{Sec.}{Secs.}
\Crefname{section}{Section}{Sections}

\counterwithin{theorem}{section}\counterwithin{equation}{section}

\newcommand{\ip}[2]{\langle #1,#2\rangle}

\newcommand{\ketbra}[2]{|#1\rangle\langle#2|}
\newcommand{\ket}[1]{|#1\rangle}
\newcommand{\kettbra}[1]{\ketbra{#1}{#1}}
\newcommand{\bra}[1]{\langle#1|}
\DeclareMathOperator{\supp}{supp}
\newcommand{\norm}[1]{\lVert #1\rVert}
\newcommand{\oo}{\infty}
\newcommand{\ox}{\otimes}
\newcommand{\barox}{\mathbin{\overline\ox}}
\newcommand{\mc}{\mathcal}
\newcommand{\eps}{\varepsilon}
\newcommand{\III}{{\mathrm{III}}}
\newcommand{\II}{{\mathrm{II}}}
\newcommand{\I}{{\mathrm{I}}}

\DeclareMathOperator{\Aut}{Aut}
\DeclareMathOperator{\Out}{Out}
\DeclareMathOperator{\Inn}{Inn}
\newcommand{\abs}[1]{\lvert #1 \rvert}
\newcommand{\up}[1]{^{(#1)}}

\DeclareMathOperator{\tr}{Tr}
\DeclareMathOperator{\Tr}{Tr} %we need this command in the technical manuscript

\renewcommand{\Tilde}{\widetilde}
\renewcommand{\Hat}{\widehat}

\newcommand{\hide}[1]{}

\DeclareUnicodeCharacter{0141}{\L{}}

\def\A{{\mc A}}

\def\B{{\mc B}}
\def\CC{{\mathbb C}}

\def\H{{\mc H}}
\def\K{{\mathcal K}}

\renewcommand{\L}{{\mathcal{L}}}
\def\M{{\mc M}}
\def\N{{\mc N}}
\def\O{{\mc O}}

\def\T{{\mc T}}
\def\U{{\mc U}}
\def\V{{\mc V}}

\def\RR{{\mathbb R}}

\def\NN{{\mathbb N}}
\renewcommand\P{\mc P}
\def\ZZ{{\mathbb Z}}
\def\QQ{{\mathbb Q}}
\newcommand{\R}{\mc R}

\newcommand\nsf{normal semifinite faithful\ }

\newcommand{\states}{\mathfrak S}
\newcommand\nstates{\mathfrak{S}_{*}\hspace{-.6pt}}
\newcommand\predualB{\B_*\hspace{-.6pt}}

\def\locc{\xrightarrow{\LOCC}}
\def\barlocc{\xrightarrow{\overline\LOCC}}
\def\slocc{\xrightarrow{\SLOCC}}
\def\barslocc{\xrightarrow{\overline\SLOCC}}
\def\aff{\ \tilde{\hspace{-.5pt}\in\hspace{.5pt}}\,\;\! }
\def\acts{\curvearrowright}

\DeclareMathOperator{\lin}{span}
\DeclareMathOperator{\id}{id}

\DeclareMathOperator{\Ad}{Ad}

\DeclareMathOperator{\conv}{conv}
\DeclareMathOperator{\proj}{Proj}
\DeclareMathOperator{\diam}{diam}

\newcommand{\LOCC}{\mathrm{LOCC}}
\newcommand{\SLOCC}{\mathrm{SLOCC}}

\newcommand{\placeholder}[0]{{\,\cdot\,}}

\renewcommand{\Re}{\mathrm{Re}}

\newcommand{\HS}{\mathrm{HS}}

\DeclareMathOperator{\LHS}{LHS}
\DeclareMathOperator{\RHS}{RHS}
\DeclareMathOperator{\per}{per}

\renewcommand{\liminf}{\varliminf}

\newcommand{\lvl}[1]{{\color{magenta}#1}}

 %green was kinda hard to read

\newcommand{\qandq}{\quad\text{and}\quad}

 %{\hspace{0.7pt}{\mathbin{\vcenter{\hbox{\scalebox{0.5}{$\bullet$}}}}}\hspace{-0.7pt}}}

\def\locc{\xrightarrow{\LOCC}}
\def\lu{\mathrel{\,\xleftrightarrow{\ \overline{\mathrm{LU}}\ }}\,}
\def\barlocc{\xrightarrow{\overline\LOCC}}
\def\slocc{\xrightarrow{\SLOCC}}
\def\barslocc{\xrightarrow{\overline\SLOCC}}

\usepackage{tikz}
\newcommand{\no}{%
\tikz[scale=0.23] {
    \draw[line width=0.7,line cap=round] (0.0,0.05) to [bend left=4] (.9,1);
    \draw[line width=0.7,line cap=round] (0.1,0.95) to [bend right=2] (0.8,0.05);
}}
\newcommand{\yes}{%
\tikz[scale=0.23] {
    \draw[line width=0.7,line cap=round] (0.25,0) to [bend left=10] (1,1);
    \draw[line width=0.8,line cap=round] (0,0.35) to [bend right=1] (0.23,0);
}}

\title{Entanglement in von Neumann algebraic \\ quantum information theory}
\author{Lauritz van Luijk}
\date{\today}

\begin{document}

\thispagestyle{empty}
\null 
\vspace{33pt}
\null 
\begin{center}\bf\huge
    \rule{\textwidth}{1pt}\\[11pt]

    Entanglement in von Neumann algebraic \\ quantum information theory
    \rule{\textwidth}{1pt}
\end{center}

\vfill

\begin{center}
    Von der Fakultät für Mathematik und Physik\\
    der Gottfried Wilhelm Leibniz Universität Hannover % \\[33pt]
    zur Erlangung des Grades\\[66pt]
    Doktor der Naturwissenschaften\\
    Dr.\ rer.\ nat.
    \\[66pt]
    genehmigte Dissertation von
    \\[11pt]
    {\bf \large
    Lauritz van Luijk
    }
    \\[88pt]
\end{center}
% \vspace{11pt}

% \begin{center}
%     {\bf \large
%     M.Sc.\ Lauritz van Luijk
%     }
% \end{center}
% \vspace{11pt}

% \vfill
\begin{center}\bf \large
    2025
\end{center}

\pagenumbering{roman}

\clearpage

\section*{Abstract}

In quantum systems with infinitely many degrees of freedom, states can be infinitely entangled across a pair of subsystems. But are there \emph{different} forms of infinite entanglement? To address this, we use a rigorous framework for studying information-theoretic properties in infinite systems, formulated in terms of von Neumann algebras. We show that there are operational tasks that distinguish different forms of infinite entanglement. In fact, we find that the classification of von Neumann factors into types and subtypes is in one-to-one correspondence with a family of operational entanglement properties. For instance, Connes' classification of type III factors corresponds to the smallest achievable error when "embezzling" entanglement from the system. Our findings promote the type classification from algebraic bookkeeping to a classification of infinite quantum systems based on their operational entanglement properties.
\\

% \vfill

% \section*{Kurzzusammenfassung}

% Aber gibt es \emph{verschiedene} Formen unendlicher Verschränkung?
% Um diese und weitere Fragen beantworten zu können, wird in der Sprache von von-Neumann-Algebren eine mathematische Theorie entwickelt, die es uns erlaubt informationstheoretische Eigenschaften von Quantensystemen mit unendlich vielen Freiheitsgraden zu untersuchen.
% Es wird gezeigt, dass opera\-tionelle Aufgaben existieren, mit denen sich verschiedene Formen unendlicher Verschränkung voneinander unterscheiden lassen, und dass die Typklassifikation von von-Neumann-Algebren (Typen I, II, III und die jeweilige Subtypen) in Eins-zu-eins Korrespondenz mit operationellen Verschränkungseigenschaften stehen.
% So lässt sich beispielsweise die Connes-Klassifikation von Typ-III-Faktoren mittels der operationellen Aufgabe, Verschränkung zu veruntreuen (engl.: "embezzle entanglement"), formulieren.
% Unsere Ergebnisse erheben die Typklassifikation von einer rein algebraischen Klassifikation zu einer Klassifikation unendlicher Quantensysteme ba\-sie\-rend auf ihren operationellen Verschränkungseigenschaften.
% \\

% \noindent\textbf{Stichworte:} Verschränkung, Quanteninformation, unendlich verschränkte Zustände, von-Neu\-mann-Algebren, Veruntreuung von Verschränkung, unendlich viele Freiheitsgrade.

\clearpage
%additional emptypage in the print version
\ifthenelse{\boolean{printversion}}{
    \thispagestyle{empty}
    \null 
    \clearpage
}{}

\section*{Preface to the arXiv version}\label{sec:preface}
% \addcontentsline{toc}{subsection}{\nameref{sec:preface}}

This version is prepared for a publication on the arXiv.
The scientific content is almost identical with those of the published version \cite{van_luijk_entanglement_2025}.
I have corrected a few typos and smaller errors, updated references, and removed non-scientific materials, which were obligatory in the published version.

This work is partly based on the articles \cite{van_luijk_relativistic_2024,van_luijk_embezzlement_2024,van_luijk_pure_2024} (with Alexander Stott\-meister, Reinhard F.\ Werner, and Henrik Wilming) and \cite{van_luijk_critical_2025,van_luijk_multipartite_2025,van_luijk_large-scale_2025} (with Alexander Stott\-meister, and Henrik Wilming).
The idea of considering embezzlement of entanglement in the setting of von Neumann algebras, from which the whole project departed, is taken from unpublished notes of Uffe Haagerup, Volkher B.\ Scholz and Reinhard F.\ Werner~\cite{mbz_draft} (see dedication in \cite{van_luijk_embezzlement_2024}).
The article \cite{van_luijk_uniqueness_2025} is based on \cref{sec:haag} and was written after this thesis was completed.
The contents of \cref{sec:agents,sec:axioms} are intended for future publication.

When citing results from this thesis that have also appeared in separate articles, please cite those articles as well, so that my coauthors can receive the credit they deserve.

% \noindent 
% I prepared this version of my PhD thesis for the arXiv.
% Its scientific contents are identitcal to the submitted version, but I took out things like my CV or a publication list that were required by my University.
% The published version of the thesis can be found in the repository of Leibniz University Hanover under \textsc{DOI}: \href{https://doi.org/10.15488/19655}{10.15488/19655}.
% % If you want to cite this thesis, please refer to the version that is published in the repository of Leibniz University Hanover under \textsc{DOI}: \href{https://doi.org/10.15488/19655}{10.15488/19655}.

\clearpage
%additional emptypage in the print version
\ifthenelse{\boolean{printversion}}{
    \thispagestyle{empty}
    \null 
    \clearpage
}{}

\section*{Acknowledgements}\label{sec:acks}
% \addcontentsline{toc}{subsection}{\nameref{sec:acks}}

I have greatly enjoyed my time as a PhD student in Hanover, and I wish to express my sincere gratitude to the many people who made this possible.

Most importantly, I thank my supervisor Reinhard F.\ Werner for all he taught me, for his guidance and encouragement, and for the unconditional academic freedom I received.
You showed me by example how to be a good scientist, to always be open-minded to new ideas, and that science is best enjoyed when shared with colleagues and friends.
% to only work on things that excite me, research is fun and should never be enjoyed alone, \dots.
Asking you for a PhD position was, without a doubt, the best decision of my academic career.

I deeply thank Alexander Stottmeister, my second supervisor, and Henrik Wilming, who effectively played the same role.
% Our joint works are what I am most proud of. 
I have enjoyed our countless scientific discussions and wonderful collaborations to the fullest.
You taught me more than I can mention here.

I thank Niklas Galke for sparking my interest in mathematical physics, and all the fun we've had together---first as students and later as collaborators.
I am grateful to Daniel Burgarth for teaching me a great deal of science, your valuable advice, and for inviting me to visit you in Australia, where I spent six wonderful weeks.

I was very lucky to have been involved in many fun and fruitful collaborations in the past years.
For that I thank my collaborators, many of whom have become good friends: Wolfram Bauer, Simon Becker, Lennart Binkowski, Daniel Burgarth, Robert Fulsche, Niklas Galke, Alexander Hahn, Mattias Johnsson, Gereon Ko\ss mann, Robert Salzmann, Ren\'e Schwonnek, Alexander Stottmeister, Reinhard F.\ Werner, Henrik Wilming, and Timo Ziegler.

I thank the past and present members of the Quantum Information Group in Hanover.
I~enjoyed the warm atmosphere and our many after-work gatherings ever since I joined the group when I started working on my master thesis almost five years ago.
I especially thank Wiebke Möller for her help with bureaucratic matters on countless occasions.

I thank Stefaan Vaes for answering several technical questions on MathOverflow, and I thank him and Amine Marrakchi for reaching out to explain results about ergodic dynamical systems, which allowed us to strengthen our results on embezzlement of entanglement after the first version of \cite{van_luijk_embezzlement_2024} appeared on the arXiv.

I'd like to thank Henrik, Alex, and Reinhard who generously proofread parts of this work.
Moreover, I thank Henrik for helping me prepare the figures for this document.

Finally, I thank my family and friends for their unconditional love and support without which this work would not have been possible.
In particular, I thank my parents, Sibylle and Marinus, and my grandmother, Hildburg, for their never-ending enthusiasm for my work---even if most of it sounds like total gibberish to you.
My deep gratitude goes to my wife Alina.
You never fail to bring a smile to my face, and my love for you has only grown stronger over the years.

% You never fail to bring a smile to my face, especially in the last months of finishing this document.
% You turn grey skies into blue (James Bay, "Sunshine in the room", \emph{Changes all the time (Deluxe Edition).} Mercury Records 2025).

\null 

Thank you all.

\null

\null

I acknowledge financial support by the MWK Lower Saxony via the Stay Inspired Program (grant ID: 15-76251-2-Stay-9/22-16583/2022).

I recognize that opportunities in academia are not distributed equally, and that I have been fortunate to benefit from circumstances---including race, gender, and educational background---that have eased my path in ways not available to all.

I thank the many open source software contributors whose selfless efforts I have depended on throughout the last years. 

\clearpage
%additional emptypage in the print version
\ifthenelse{\boolean{printversion}}{
    \thispagestyle{empty}
    \null 
    \clearpage
}{}

\tableofcontents

\etocsettocstyle{}{} % from now on only local tocs

\ifthenelse{\boolean{printversion}}{
    \thispagestyle{empty}
    \null 
    \cleardoublepage
}{
    \clearpage
}

\pagenumbering{arabic}
\section{Introduction}\label{sec:intro}

Quantum entanglement lies at the heart of quantum information theory, underpinning a range of phenomena from teleportation to quantum cryptography and quantum computation \cite{bennett_teleporting_1993,gisin_cryptography_2002,jozsa_role_2003,horodecki_quantum_2009}.
While the structure and manipulation of entanglement are well understood in finite systems, many foundational and physically relevant settings involve infinitely many degrees of freedom.

In systems with infinitely many degrees of freedom, states can be infinitely entangled across a pair of subsystems. 
Examples range from the vacua of relativistic quantum field theories \cite{srednicki_entropy_1993,summers_vacuum_1985,verch_distillability_2005} to ground states in infinite spin chains \cite{keyl_entanglement_2006}.
But does that mean that these states are all the same from the point of view of entanglement theory?
More specifically:
\begin{center}\it
    Are there different kinds of infinite entanglement?\\
    If so, can we distinguish them operationally?
    % Can we distinguish them operationally?
\end{center}
The natural mathematical framework for modeling systems with infinitely many degrees of freedom is that of von Neumann algebras \cite{gabbiani_operator_1993,haag_local_1996,verch_distillability_2005,naaijkensQuantumSpinSystems2017,buchholz_universal_1987}.
The study of quantum entanglement in the von Neumann algebraic setting was initiated in the seminal works of Summers and Werner on the violation of Bell's inequalities in quantum field theory \cite{summers_vacuum_1985,summers_bells_1987,summers_maximal_1987,summers_maximal_1988,summers_bells_1995}.
While several aspects have been considered since then \cite{verch_distillability_2005,summers_independence_1990,keyl_infinitely_2003,huang_states_2008,huang_dense_2019,crann_state_2020,scholz_tsirelsons_2008,hollands_entanglement_2018}, there is, to this point, no systematic study of entanglement theory in the von Neumann algebraic setting.
This work aims to fill this gap.
In doing so, we will discover tight connections between operational entanglement properties and the classification theory for von Neumann algebras.

\null

This thesis is structured as follows:
\Cref{sec:main-results} summarizes the setup and the main results of our work; \cref{sec:vN-prelims} collects the necessary mathematical prerequisites on von Neumann algebras; \cref{sec:vNQI} explains the operational setup of von Neumann algebraic quantum information theory; \cref{sec:catalytic states} studies catalytic states of von Neumann algebras, the monopartite analog of embezzlement of entanglement; \cref{sec:bipartite systems} collects the basic structure results for bipartite systems.
Finally, with all preparations out of the way, \cref{sec:locc,sec:strong-entanglement} study entanglement theory in von Neumann algebraic bipartite systems.

\section{Main results}\label{sec:main-results}

In this section, we give an overview of the results obtained in this work.
We focus on conceptual ideas, the technical details fill the sections to come.
We start off with the setup of von Neumann algebraic quantum information theory, before turning to the central focus of this work: entanglement theory.

\paragraph{\it Setup.}

Let us consider an agent with partial access to a quantum system, potentially having infinitely many degrees of freedom.
This means that the agent can implement a subset $\O$ of the set of all operations on the full system.
The prototypical example is that of an agent who can solely act on a \emph{subsystem} of the full system.
On purely operational grounds, a subsystem is not more than the set of operations that act on it.

% In fact, considering an agent with partial access to a full quantum system may be regarded as an operational approach to old the question: 
% \emph{What is a subsystem?}

The basic assumption underlying von Neumann algebraic quantum information theory is that the agent's operations are described by a von Neumann algebra $\M$, called the \emph{observable algebra}, on the Hilbert space $\H$ of the full system.
The operations that the agent can implement are those whose Kraus operators are elements of $\M$, or equivalently, those that can be implemented by preparing an ancillary system $\K$, applying a unitary from the amplified observable algebra $\M\barox\B(\K)$, and then discarding/measuring the ancilla.
We refer to this class of operations as the \emph{$\M$-inner} operations, denoted $\O_\M$.

This assumption applies to agents who can implement operations that interact solely with specific subsystems of a full system, even if the latter has infinitely many degrees of freedom.
More precisely, the von Neumann algebraic description applies when one considers subsystems in a fixed sector, e.g., the ground state sector of a local Hamiltonian on a spin system or the vacuum sector of a field theory, of the full system \cite{haag_local_1996,naaijkensQuantumSpinSystems2017,keyl_entanglement_2006,verch_distillability_2005,buchholz_universal_1987,keyl_infinitely_2003,gabbiani_operator_1993}.
We discuss these settings in \cref{sec:examples}.

In order to further justify the assumption, we consider a set of operational axioms.
Apart from the obvious consistency assumption that the composition of implementable operations is implementable, we assume the following:

\begin{enumerate}[({A}1)]
    \item\label{axiom:limits_i}
    Limits: The agent's operations are topologically closed.

    \item\label{axiom:ancilla_i} 
    Ancillas: The agent can freely prepare, manipulate, and discard ancillary systems.

    \item\label{axiom:dilation_i} Dilations: 
    Each of the agent's channels can be implemented by preparing an ancillary system, performing a global unitary, and  discarding the ancilla.

    \item\label{axiom:reverse_i} Reversibility: For every unitary channel, the agent can also implement the inverse.
    
    \item\label{axiom:completeness_i} Postselection: An instrument that could be implemented by postselecting outcomes of implementable instruments is implementable.
\end{enumerate}

If we identify agents with the subsystems they act on, these axioms form an operational approach to the question: \emph{What is a subsystem?}

Axioms \ref{axiom:dilation_i} and \ref{axiom:reverse_i} correspond to the idea that the only source of irreversibility is the loss of information, see, e.g., \cite{bennett_demons_1987,gregoratti_quantum_2003}.
We refer to \cref{sec:vNQI} for a more detailed discussion and mathematically precise versions of these axioms, explaining, e.g., the topology in \ref{axiom:limits_i}.

If $\M$ is a von Neumann algebra on $\H$, then the set $\O_\M$ of $\M$-inner operations satisfies the axioms \ref{axiom:limits_i} to  \ref{axiom:completeness_i}.
In fact, we show the converse:

\begin{introtheorem}[See \cref{sec:axioms}]\label{thm:axioms_i}
    The axioms above hold if and only if there is a von Neumann algebra $\M$ on $\H$ such that the agent's operations are exactly the $\M$-inner ones.
\end{introtheorem}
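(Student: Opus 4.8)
This is an equivalence, and the implication that $\O_\M$ satisfies \ref{axiom:limits_i}--\ref{axiom:completeness_i} is already recorded in the discussion above; so the plan is to prove the converse: starting from an operation class $\O$ obeying the axioms, reconstruct a von Neumann algebra $\M$ with $\O=\O_\M$. The natural place to find $\M$ is the reversible part of $\O$. Let
\[
    G=\{\,u\in\U(\H):\Ad u\in\O\,\}
\]
be the group of unitaries whose conjugation channel is implementable. Composition of channels together with reversibility \ref{axiom:reverse_i} makes $G$ a subgroup of $\U(\H)$, the closure axiom \ref{axiom:limits_i} makes it closed, and $G$ is automatically invariant under scalar phases. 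I would then take $\M:=G''$ (equivalently $\M=(G')'$), which is a von Neumann algebra with $G\subseteq\U(\M)$, and prove the two inclusions $\O\subseteq\O_\M$ and $\O_\M\subseteq\O$ separately.

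For $\O\subseteq\O_\M$ the cleanest route is bimodularity over the ``untouchable'' algebra $\M'=G'$: an operation has its Kraus operators in $\M=(\M')'$ precisely when it is an $\M'$-bimodule map. The task is thus to show that every $T\in\O$ satisfies $T(a\,\cdot\,)=a\,T(\,\cdot\,)$ and $T(\,\cdot\, a)=T(\,\cdot\,)a$ for all $a\in\M'$. For unitary channels $T=\Ad u$ with $u\in G$ this is immediate, since $a\in G'$ commutes with $u$; for a general $T\in\O$ I would invoke the dilation axiom \ref{axiom:dilation_i} to write $T$ through an implementable unitary $U$ on an enlarged system $\H\barox\K$, and use the ancilla axiom \ref{axiom:ancilla_i} to see that such $U$ must commute with $a\barox 1$, whence the bimodularity descends to $T$.

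For $\O_\M\subseteq\O$ I would realize an arbitrary $\M$-inner operation explicitly. Given a contraction $a\in\M$, Kaplansky density lets me approximate it strongly by contractions in the $*$-algebra $\lin(G)$; I then dilate $a$ to a unitary $U_a\in M_2(\M)=\M\barox\B(\CC^2)$ having $a$ as a corner, implement $U_a$ on system-plus-ancilla via \ref{axiom:ancilla_i}, postselect the ancilla outcome via \ref{axiom:completeness_i} to obtain the single-Kraus operation $\rho\mapsto a\rho a^*$, and finally pass to the limit using \ref{axiom:limits_i}. Summing such corners (again by \ref{axiom:completeness_i} and \ref{axiom:ancilla_i}) produces every $\M$-inner operation, and specializing to unitary $a$ upgrades the a priori inclusion $G\subseteq\U(\M)$ to the equality $G=\U(\M)$.

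The step I expect to be the real obstacle is the coordination between $\H$ and its ancilla extensions on which both inclusions secretly depend, namely the claim that the implementable unitaries on $\H\barox\K$ are exactly $\U(\M\barox\B(\K))$. This cannot follow from $G$ alone: a strongly closed subgroup that generates $\M$ need not exhaust $\U(\M)$ (already on $M_2(\CC)$ there are proper closed irreducible phase-subgroups), so it is precisely the interplay of the ancilla axiom \ref{axiom:ancilla_i} (full control of $\K$), the dilation axiom \ref{axiom:dilation_i}, and the postselection axiom \ref{axiom:completeness_i} that must fill the gap between ``generating $\M$'' and ``implementing all $\M$-inner operations.'' Establishing this amplification identity directly from the operational axioms---rather than assuming an algebra at the outset---is, I expect, where the substantive work of the proof lies.
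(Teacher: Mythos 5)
Your proposal is a genuinely different route from the paper's --- you build $\M$ as the bicommutant $G''$ of the implementable \emph{unitaries} on $\H$, whereas the paper builds $\M$ directly from the set $K$ of Kraus operators of \emph{all} implementable channels (proving that $K$ is the unit ball of a von Neumann algebra via \cref{lem:K-basics,lem:vNA-unit-ball,cor:its-a-vNa}). Unfortunately, as you yourself concede in your final paragraph, both of your inclusions rest on an unproven claim, and that claim is not a peripheral technicality but the entire content of the theorem. For $\O\subseteq\O_\M$ you need the dilating unitary $U$ on $\H\barox\K$ to commute with $G'\ox 1$, i.e.\ $U\in G''\barox\B(\K)$; for $\O_\M\subseteq\O$ you need the $2\times 2$ unitary dilation $U_a\in\M\barox\B(\CC^2)$ of a contraction $a\in\M$ to be implementable. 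Both are instances of the ``amplification identity'' you flag as the real obstacle, and knowing that $G$ generates $\M$ gives you no purchase on it: the easy containment one gets from the ancilla axiom is $(G_\K)'\subseteq G'\ox 1$ (where $G_\K$ denotes the implementable unitaries on $\H\barox\K$), which points in the wrong direction and does not yield $G_\K\subseteq(G'\ox1)'$.

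Filling this gap is precisely what the paper's technical core does, and it requires all five axioms in a nontrivial interplay: postselection \ref{axiom:completeness_i} is used to show that the set of Kraus operators of implementable channels is closed under the linear combinations needed to pass from a generating family of unitaries to arbitrary contractions (item (vi) of \cref{lem:K-basics}); dilation \ref{axiom:dilation_i} together with reversibility \ref{axiom:reverse_i} is used to show closure under adjoints (item (iv), which is a genuinely delicate argument --- the adjoint of a Kraus operator of $T$ is exhibited as a Kraus operator of a \emph{different} implementable channel built from $u^{-1}$); and the limits axiom \ref{axiom:limits_i} combined with a Banach--Alaoglu compactness argument (\cref{lem:compactness}) gives strong closure. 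Only after these closure properties are in hand does Kaplansky's density theorem apply to identify $K$ as the unit ball of a von Neumann algebra, and only then does the matrix-element decomposition $U=\sum_{ij}U_{ij}\ox\ketbra{i}{j}$ place implementable unitaries on $\H\barox\K$ inside $\M\barox\B(\K)$ (\cref{cor:its-a-vNa}). So your outline correctly locates where the difficulty sits, but it does not resolve it; as written, the proposal is a plan whose central step is missing. If you want to salvage the $G''$-based approach, you would still have to reprove the analogues of items (iv)--(vi) of \cref{lem:K-basics} for corners of implementable dilation unitaries, at which point you have essentially reconstructed the paper's argument with $K$ replaced by a generating subset.
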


\Cref{thm:axioms_i}, shows that the mathematical assumption that the agent's operations are the $\M$-inner ones is, in fact, a fully operational assumption in disguise.

Apart from these axioms, there are other reasons to consider the setting of operations determined by a von Neumann algebra $\M$ on a Hilbert space $\H$.
% In particular, quantum systems with infinitely many degrees of freedom, such as quantum spin systems in the thermodynamic limit \cite{naaijkensQuantumSpinSystems2017}, are often modeled by von Neumann algebras. 
% In this case, the Hilbert space describes the \emph{sector}, e.g., the ground state sector.
Motivations come from symmetry considerations, e.g., in the setting of (quantum) reference frames.
Moreover, the von Neumann algebraic setting can always be regarded as an idealization: For every collection $\O$ of operations on the full system, there is a unique smallest von Neumann algebra $\M$ such that $\O\subset \O_\M$ (see \cref{sec:vNQI}).

For a good understanding of von Neumann algebraic quantum information theory, we need three main ingredients:
\begin{enumerate}[(1)]
    \item\label{it:vNQI1} extending ideas and results from finite-dimensional quantum information theory to the von Neumann algebraic setting;
    \item\label{it:vNQI2} understanding the relation between mathematical (von Neumann algebraic) properties of the observable algebra $\M$ and operational properties of the agent;
    \item\label{it:vNQI3} knowing the von Neumann algebraic properties of the observable algebras in many concrete physical models.
\end{enumerate}

With a good understanding of each of these points, we can  understand the operational properties of concrete physical models and  recognize the ultimate limits of how quantum information can be manipulated in systems with infinitely many degrees of freedom.
What makes the latter interesting is that there are quantitatively and qualitatively new information-theoretic properties in systems with infinitely many degrees of freedom, e.g., the phenomenon of `embezzlement of entanglement' (see below).
For each of the three aspects, there is important previous work, e.g., \cite{hiai_quantum_2021,verch_distillability_2005,berta_smooth_2015,fawzi_asymptotic_2025,hollands_entanglement_2018,crann_state_2020,keyl_infinitely_2003,kuramochi_accessible_2018} for \ref{it:vNQI1}, \cite{summers_maximal_1988,summers_vacuum_1985,werner_local_1987,keyl_infinitely_2003,naaijkens_subfactors_2018,fiedler_jones_2017} for \ref{it:vNQI2}, and \cite{haag_local_1996,gabbiani_operator_1993,naaijkensAnyonsInfiniteQuantum2012,jones_local_2025} for \ref{it:vNQI3}, to name a few.
The main focus of this work is on item \ref{it:vNQI2} above. We will show that the type classification of von Neumann algebras is in one-to-one correspondence with operational properties in entanglement theory.

A priori, it is not clear that there should be any interesting correspondence between mathematical properties of the observable algebras on the one hand and operational properties on the other hand.
In finite dimensions, interesting operational properties are always properties of individual states, not properties of the whole system.
In a way, this is because all finite-dimensional systems are "the same" up to the dimension of the Hilbert space. 
In the general case, however, the rich mathematical structure of von Neumann algebras creates a vast zoo of non-equivalent setups, and item \ref{it:vNQI2} above essentially asks to what degree they can be classified in terms of their operational properties.
Roughly speaking, the \emph{type classification} of von Neumann algebras takes over the role of the dimension in the usual setup of quantum information theory.
General von Neumann algebras can be decomposed into \emph{factors}, i.e., von Neumann algebras with trivial center $\M \cap \M'=\CC1$, where $\M'=\{a' : [a,a']=0 \ \forall a\in\M\}$ denotes the \emph{commutant}.
Factors are classified into types $\I$, $\II$, and $\III$, each further divided into subtypes, see \cref{fig:types}.
Type $\I$ factors are finite or infinite matrix algebras. 
They arise in cases where only finitely many degrees of freedom contribute to the entanglement across a pair of subsystems, e.g., if the full system consists of finitely many qubits or is a system of finitely many bosonic modes.
Types $\II$ and $\III$ are more exotic and only arise in systems with infinitely many degrees of freedom.

\begin{figure}[ht!]
    \centering
    \def\svgwidth{.55\textwidth}
    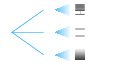
    \caption{The classification of factors into types $\I$, $\II$, $\III$, and their respective subtypes (see \cref{sec:vN-prelims} for details).}
    \label{fig:types}
\end{figure}

The simplest example of a one-to-one correspondence between an operational and an algebraic property is the \emph{no information without disturbance} principle, which states that no information can be gained about a quantum system without disturbing its state.
The principle applies to the agent's subsystem if and only if the observable algebra $\M$ has the algebraic property of being a factor.
To give a less trivial example, it was shown in \cite{scholz_tsirelsons_2008} that counterexamples to Tsirelson's problem in entanglement theory require so-called non-AFD observable algebras (see \cref{sec:vNQI}).

\paragraph{\it Entanglement theory.}

We now turn our attention to entanglement. 
We consider a bipartite system consisting of two agents, conventionally called Alice and Bob, with commuting observable algebras $\M_A$ and $\M_B$.
Let us consider operational axioms for bipartite systems.
\begin{enumerate}[(B1)]
    \item\label{axiom:tomography_i} Tomography: States of the full system can be determined uniquely with correlation experiments.
\end{enumerate}
Independently of the `no information without disturbance' principle, \ref{axiom:tomography_i} is equivalent to the statement that $\M_A$ and $\M_B$ are factors that jointly generated $\B(\H)$.
In general, \ref{axiom:tomography_i} is insufficient to imply the following:
\begin{enumerate}[resume*]
    \item\label{axiom:haag-duality_i} Duality: Alice can implement all operations that commute with all of Bob's operations and vice versa.
    \item\label{axiom:purifications_i} Uniqueness of purifications: If two pure states of the full system have the same $A$-marginal, then they are connected by a partial isometry of $B$. 
\end{enumerate}

\begin{introtheorem}[see \cref{sec:haag}]\label{thm:haag-duality_i}
    The operational axioms \ref{axiom:haag-duality_i} and \ref{axiom:purifications_i} are equivalent. They hold if and only if
    \begin{equation}
        \M_A=\M_B'.
    \end{equation}
\end{introtheorem}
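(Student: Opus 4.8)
The plan is to establish both axioms as equivalent to the single algebraic condition $\M_A=\M_B'$; their mutual equivalence then follows at once. Throughout I use that commuting observable algebras satisfy $\M_A\subseteq\M_B'$ and $\M_B\subseteq\M_A'$, and that under \ref{axiom:tomography_i} the full system is all of $\B(\H)$, so that its pure states are precisely the unit vectors of $\H$.

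For \ref{axiom:haag-duality_i}, I would first invoke two facts from the operational setup of $\M$-inner operations: that the operations commuting with every operation in $\O_{\M_B}$ are exactly the operations in $\O_{\M_B'}$, and that the assignment $\M\mapsto\O_\M$ is injective. Since $\M_A\subseteq\M_B'$ gives $\O_{\M_A}\subseteq\O_{\M_B'}$ automatically, the statement that Alice can implement \emph{all} operations commuting with Bob's reads $\O_{\M_A}=\O_{\M_B'}$, which by injectivity is equivalent to $\M_A=\M_B'$. The symmetric ``vice versa'' half of \ref{axiom:haag-duality_i} is the condition $\M_B=\M_A'$, which is the same thing after taking commutants, so \ref{axiom:haag-duality_i} holds if and only if $\M_A=\M_B'$.

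For \ref{axiom:purifications_i}, the easy direction $\M_A=\M_B'\Rightarrow\ref{axiom:purifications_i}$ is the standard fact that two unit vectors $\psi,\phi$ inducing the same state on a von Neumann algebra $\M$ are connected by a partial isometry in $\M'$: the densely defined map $a\psi\mapsto a\phi$ for $a\in\M$ is isometric because $\|a\psi\|^2=\omega_\psi(a^*a)=\omega_\phi(a^*a)=\|a\phi\|^2$, hence extends to a partial isometry in $\M'$ carrying $\psi$ to $\phi$. Taking $\M=\M_A$ and $\M'=\M_B$ yields exactly \ref{axiom:purifications_i}; note that no cyclic or separating vector is required.

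The crux is the converse $\ref{axiom:purifications_i}\Rightarrow \M_A=\M_B'$, that is, $\M_A'\subseteq\M_B$. The naive attempt---applying \ref{axiom:purifications_i} to $\xi$ and $u\xi$ for a unitary $u\in\M_A'$ and trying to identify the resulting $\M_B$-isometry with $u$ itself---only works if $\xi$ is separating for $\M_A'$, and such a vector need not exist (e.g.\ when $\M_A$ has no cyclic vector on $\H$); this is the main obstacle. I would bypass it using reflexivity of von Neumann algebras. First, since $\M_A'$ is the linear span of its unitaries and the isometry computation above shows $\xi$ and $u\xi$ share their $A$-marginal for every unitary $u\in\M_A'$, axiom \ref{axiom:purifications_i} produces $w\in\M_B$ with $w\xi=u\xi$; taking linear combinations gives $\M_A'\xi\subseteq\M_B\xi$ for every $\xi\in\H$. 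Second, I use that $a\xi\in\overline{\M_B\xi}$ for all $\xi$ forces $a\in\M_B$: for a projection $p\in\M_B'$ and $\xi\in p\H$ one has $\M_B\xi=p\,\M_B\xi\subseteq p\H$, so $a$ leaves $p\H$ invariant, whence $pap=ap$; running this for $p$ and $1-p$ yields $ap=pa$ for every projection $p\in\M_B'$, so $a\in\M_B''=\M_B$. Applying this to each $a\in\M_A'$, for which $a\xi\in\M_A'\xi\subseteq\M_B\xi$, gives $\M_A'\subseteq\M_B$ and therefore $\M_A=\M_B'$, completing the argument.
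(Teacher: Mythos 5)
Your proof is correct, and the overall architecture matches the paper's: axiom \ref{axiom:haag-duality_i} is dispatched essentially by definition of the $\M$-inner operations together with the order isomorphism $\M\subset\N\iff\O_\M\subset\O_\N$, the easy direction of \ref{axiom:purifications_i} is the standard intertwiner $a\Psi\mapsto a\Phi$ (this is \cref{cor:all-purifications} in the paper), and the converse is where the real content lies. There, however, your route through the converse differs from the paper's in a way worth noting. The paper (\cref{thm:haag_duality} via \cref{lem:hd lem}) reduces Haag duality to the statement $[\M_A\Omega]=[\M_B'\Omega]$ for all $\Omega$, i.e.\ equality of cyclic projections, and then recovers an arbitrary projection $p\in\M_A'$ as the join $\vee_n[\M_A\Omega_n]$ over an orthonormal basis of $p\H$, concluding $p\in\M_B$ since each $[\M_B'\Omega_n]\in\M_B''=\M_B$. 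You instead work at the level of operators: from $\M_A'\xi\subseteq\M_B\xi$ for every $\xi$ you deduce that each $a\in\M_A'$ preserves every $\M_B$-invariant subspace, hence commutes with every projection of $\M_B'$, hence lies in $\M_B''=\M_B$. Both arguments rest on the same underlying principle (the bicommutant theorem and the fact that a von Neumann algebra is determined by its lattice of invariant subspaces), but yours avoids the join-of-cyclic-projections decomposition and the auxiliary lemma entirely, which makes it slightly more self-contained; the paper's lemma, on the other hand, is formulated so that it simultaneously handles the approximate-unitary variant of the purification axiom (item (c) of \cref{thm:haag_duality}), which your argument does not address but which is not needed for the statement as given. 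Your identification of the separating-vector obstruction to the naive argument, and the observation that reflexivity circumvents it, is exactly the right diagnosis.
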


The property $\M_A=\M_B'$, stating that Alice's observable algebra is the commutant of Bob's, is known as \emph{Haag duality} \cite{keyl_entanglement_2006} after a similar property in quantum field theory \cite{haag_local_1996}.
From now on, we consider a bipartite system satisfying the axioms \ref{axiom:tomography_i}, \ref{axiom:haag-duality_i}, and \ref{axiom:purifications_i}.
I.e., we consider a pair $(\M_A,\M_B)$ of commuting factors in Haag duality.
We aim to answer the following question:
\begin{center}\it
    What is the relationship between pure state entanglement theory \\ and the algebraic  properties of the bipartite system?
\end{center}

To warm up, let us consider the finite-dimensional case.
Here, the Hilbert space decomposes as $\H=\H_A\ox\H_B$ and the local observable algebras are $\M_A=\B(\H_A)\ox1$ and $\M_B=1\ox\B(\H_B)$.
Up to unitary equivalence, the setup is fully determined by the dimensions $(n_A,n_B)$ of the local Hilbert spaces.
The minimal local dimension $n=\min(n_A,n_B)$ can be determined from entanglement theory: It is the maximal Schmidt rank that a pure state of the full system can have.
Moreover, the Schmidt decomposition tells us that there is no entanglement property that a pure state can have, which is not shared by a pure state on a bipartite system with local dimensions $(n,n)$.
Therefore, the minimal local dimension completely describes the collection of entanglement properties that the pure states of the full system have.

The finite-dimensional case considered in the previous paragraph is, of course, rather boring.
The reason is that finite-dimensional quantum systems only differ in their local dimensions---a fact that dramatically changes when passing to infinite quantum systems described by von Neumann algebras.
% On a finite-dimensional Hilbert space $\H$, the general form of a factor is $\M=\B(\H_1)\ox1$ for a tensor product decomposition $\H=\H_1\ox\H_2$, and the isomorphism classes of finite-dimensional factors are labeled by the dimension $\dim(\H_1)$.
% In the general setting, the local dimension $\dim(\H_1)$ is replaced by the \emph{type classification}.
As stated above, the appropriate generalization of the local dimension is the type classification of factors.
The classification, in particular the type $\III$ case \cite{connes_classification_1973,haagerup_uniqueness_2016}, is a deep mathematical result, and we will profit greatly from the tools that have been developed in this line of research.
Generalizing the minimum local dimension, we consider the minimum of the types of $\M_A$ and $\M_B$ with respect to the following naive ordering
\begin{equation}\label{eq:naive type order}
    \I_1 < \I_2 < \ldots < \I_\oo < \II_1 < \II_\oo < \III_\lambda <\III_{\lambda'} , \qquad 0 \le \lambda<\lambda'\le 1.
\end{equation}
We refer to this as the \emph{minimal} type of a bipartite system.
In general, $\M_A$ is of type $\I$, $\II$, or $\III$ if and only if $\M_B$ is. Thus, we can speak of bipartite systems of type $\I$, $\II$, or $\III$.
Moreover, in the type $\III$ case, $\M_A$ and $\M_B$ always have the same subtype. 
We can now state the main result of this work:

\begin{introtheorem}[see \cref{sec:mbz,sec:locc-types}]\label{thm:minimal-type_i}
    The minimal type is in one-to-one correspondence with operational properties in entanglement theory.
\end{introtheorem}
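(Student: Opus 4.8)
The plan is to read \cref{thm:minimal-type_i} as the assertion that the algebraic invariant "minimal type," ordered as in \eqref{eq:naive type order}, is a \emph{complete} invariant for the pure-state entanglement theory of a bipartite system $(\M_A,\M_B)$ of commuting factors in Haag duality, and that each value of the invariant is pinned down by a concrete operational task. I would establish this as a bijection between two directions. As preparation, I would first set up the von Neumann algebraic analogues of Schmidt rank, of LOCC and SLOCC convertibility, and of local-unitary equivalence $\lu$, as developed in \cref{sec:locc,sec:strong-entanglement}. The organizing principle, extending the finite-dimensional warm-up, is that the family of entanglement properties realizable by pure states of the full system is itself an invariant of the setup, and the content of the theorem is that this family depends only on the minimal type.

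For the forward direction (minimal type determines the operational properties), I would proceed along the ordering in \eqref{eq:naive type order}. For type $\I_n$ the maximal Schmidt rank of a pure state equals $n$, exactly reproducing the finite-dimensional picture, so these cases are mutually separated. The genuinely infinite regimes $\I_\oo$, $\II$, and $\III$ require new operational separators, and here I would use catalytic and embezzlement tasks: roughly, the degree to which an entangled resource can be reused or extracted without net depletion increases as one moves up the ordering. Within type $\III$ the Connes subtype $\lambda$ should be recovered quantitatively from the embezzlement error: following \cref{sec:mbz}, I would show that the smallest achievable error for embezzling a fixed target state is a strictly monotone function of $\lambda$, vanishing precisely when $\lambda=1$, so that type $\III_1$ systems are exactly the universal embezzlers.

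For the backward direction (operational properties determine the minimal type), I would verify that the data above suffices to reconstruct the invariant: the maximal Schmidt rank fixes $n$ in the type $\I$ case; the catalytic and embezzlement behaviour separates $\I_\oo$, $\II$, and $\III$; and the value of the minimal embezzlement error returns $\lambda$. At this step I would lean on the deep structural input of Connes' classification \cite{connes_classification_1973} and Haagerup's uniqueness theorem \cite{haagerup_uniqueness_2016}, which make the subtype $\lambda$ a complete isomorphism invariant inside the AFD class and thereby guarantee that equal minimal type yields the same entanglement theory up to $\lu$.

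The hardest part will be the type $\III$ case: forging the precise quantitative link between Connes' analytic invariant $\lambda$---defined through the modular automorphism group and the flow of weights---and the operational embezzlement error. The difficulty is twofold: $\lambda$ must be extracted from a purely operational optimization over local operations, and one must simultaneously show that this optimization is blind to every other feature of the system, so that no operational entanglement property refines the subtype. This is exactly where the modular-theoretic machinery and the quantitative embezzlement estimates of \cref{sec:mbz} are indispensable, converting the algebraic classification into statements about achievable entanglement manipulation.
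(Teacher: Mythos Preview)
Your overall architecture is close to the paper's, and your treatment of the type~$\III$ subtypes via an embezzlement quantifier is the right idea (this is exactly what \cref{sec:mbz} does with $\kappa_{\max}$). However, there is a genuine gap in the semifinite part of your forward direction.

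You propose to separate $\I_\oo$, $\II_1$, $\II_\oo$, and $\III$ using ``catalytic and embezzlement tasks.'' This cannot work: the paper shows (see \cref{thm:value of kappa min,thm:value of kappa max} and \cref{tab:types}) that $\kappa_{\min}(\M)=\kappa_{\max}(\M)=2$ for \emph{every} semifinite factor. Embezzlement is maximally impossible in types $\I$ and $\II$ alike, so it carries no information distinguishing $\I_\oo$ from $\II_1$ from $\II_\oo$. The paper instead uses three different operational separators for the non-$\III$ part (\cref{thm:semifinite classification}): one-shot distillable entanglement is finite for type~$\I$ and infinite for types $\II$ and $\III$; a maximally entangled pure state exists if and only if the minimal type is finite (so $\II_1$ yes, $\II_\oo$ no); and approximate LOCC trivializes on pure states exactly in type~$\III$ (\cref{lem:locc trivial in type III}). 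Only after type~$\III$ has been isolated does embezzlement enter, with $\kappa_{\max}=2\frac{1-\sqrt\lambda}{1+\sqrt\lambda}$ recovering the subtype. Note also that the quantity that pins down $\lambda$ is $\kappa_{\max}$ (the worst resource state), not a ``smallest achievable error''; $\kappa_{\min}$ is either $0$ or $2$ and does not vary continuously with $\lambda$.

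Your backward direction is also misaimed. Invoking Connes' and Haagerup's uniqueness theorems to conclude that equal minimal type yields ``the same entanglement theory up to $\lu$'' is both stronger than what the theorem asserts and restricted to the AFD case, whereas \cref{thm:minimal-type_i} is stated for arbitrary factors. The paper does not need this: once the specific list of operational properties in \cref{tab:types} has been computed as a function of minimal type (and shown to take distinct values for distinct types), the bijection is established directly, with no appeal to classification of the underlying algebras.
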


To understand entanglement operationally, we need to extend the concept of local operations and classical communication (LOCC) to the von Neumann algebraic setting \cite{verch_distillability_2005}.
After doing so, we show Nielsen's theorem \cite{nielsen_conditions_1999} for von Neumann algebraic bipartite systems, which was already generalized to the type $\II$ setting in \cite{crann_state_2020}.

In the following, we sketch the operational entanglement properties that are needed for \cref{thm:minimal-type_i} (see \cref{tab:types}).
The easiest is the \emph{one-shot entanglement}.
In the type $\I$ case, every state has finite one-shot entanglement in the sense that only a finite number of Bell pairs can be distilled from it at once.
This is quite different for bipartite systems of types $\II$ and $\III$, where \emph{all} states have infinite one-shot entanglement.
To distinguish the type $\II$ subtypes, we consider the existence of \emph{maximally entangled states}.
Type $\III$ is singled out among the other types by the \emph{trivialization of LOCC}, meaning that all pure states of the full system are approximately LOCC equivalent.
This provides a one-to-one correspondence between operational properties in entanglement theory and the minimal type, except for the subtypes of type $\III$.

\begin{table*}[t]\centering
    \setlength{\tabcolsep}{7pt}
    \renewcommand\arraystretch{1.2}
    \begin{tabular}{@{} l cc cc ccc @{}}
    \toprule
    \multirow{2}{*}[-0.5\dimexpr \aboverulesep + \belowrulesep + \cmidrulewidth]{operational property} & \multicolumn{2}{c}{type $\I$} & \multicolumn{2}{c}{type $\II$} & \multicolumn{3}{c}{type $\III$}\\
    \cmidrule(lr){2-3} \cmidrule(lr){4-5} \cmidrule(l){6-8}
    & $\I_n$       & \ $\I_\infty$   & \ $\II_1$    & \ $\II_\infty$ & \ $\III_0$ & \ $\III_\lambda$ & \ $\III_1$ \\ 
     \midrule
    one-shot entanglement           & \!\!\! $\le$$\log_2n$\!    & \!$<$$\infty$    & $\oo$&$\oo$ & $\oo$&$\oo$&$\oo$                                \\
    all pure states LOCC equivalent & \no&\no     &\no&\no      & \yes&\yes&\yes         \\ 
    maximally entangled state       & \yes &    \no          & \yes        & \no             & \yes&\yes&\yes          \\
    % some/all pure states LOCC-embezzling & \no&\no     &\no&\no      & \yes&\yes&\yes         \\ 
    % all pure states equally entangled& \no&\no&\no&\no & \yes & \yes & \yes\\
    % embezzling states             & \no&\no&\no&\no & (\,\yes\,) & \yes & \yes\\
    best embezzlement capability $\kappa_{\textit{min}}$  &  2 & 2    & 2&2       & 2 or 0 & 0 & 0 \\ 
    % worst embezzlement capability $\kappa_{\textit{max}}$ &  2 & 2    & 2&2       & 2 & \!\!$2\frac{1-\sqrt\lambda}{1+\sqrt\lambda}$\!\! & 0 \\ 
    worst embezzlement capability $\kappa_{\textit{max}}$ &  2 & 2    & 2&2       & 2 & \!\!$f(\lambda)$\!\! & 0 \\ 
    \bottomrule
    \end{tabular}
    \caption{One-to-one correspondence between entanglement properties and the minimal type. 
    In the table, we have $n\in\NN$, $\lambda\in(0,1)$, and $f(\lambda) = 2\frac{1-\sqrt\lambda}{1+\sqrt\lambda}$, which is a bijection of $[0,1]$ onto $[0,2]$.
    % The inequalities for the one-shot entanglement in type $\I$ are bounds on the maximal number of Bell pairs that can be distilled from any given state.
    % The function $f(\lambda) = 2\frac{1-\sqrt\lambda}{1+\sqrt\lambda}$ is a of $[0,1]$ onto $[0,2]$.
    % Since $\lambda\mapsto 2\frac{1-\sqrt\lambda}{1+\sqrt\lambda}$ is invertible, $\lambda$ is determined by the operational quantity $\kappa_{\textit{max}}$. 
    Some type $\III_0$ systems host embezzling states ($\kappa_{\min}=0$); the others maximally fail doing so ($\kappa_{\min}=2$).  
    }
    \label{tab:types}
\end{table*}

In order to distinguish these, we use the phenomenon of \emph{embezzlement of entanglement}, discovered by van Dam and Hayden in \cite{van_dam_universal_2003}.
Consider two agents, Alice and Bob, sharing an entangled `resource' state $\Phi_{AB}$, and a state $\Omega_{A'B'}$ with little entanglement, which they want to transform into a more entangled state $\Psi_{A'B'}$.
However, they are only allowed to act locally, may not use classical communication, and the resource system has to return to its original state.
I.e., they want to implement the state transition
\begin{equation}\label{eq:mbz_i}
    \Phi_{AB} \ox \Omega_{A'B'} \to \Phi_{AB} \ox \Psi_{A'B'}
\end{equation}
with local unitaries, at least up to arbitrary precision.
As stated, this is, of course, impossible with finite-dimensional resource systems. In fact, it is still impossible if the resource system is a von Neumann algebraic bipartite system of types $\I_\oo$ or $\II$.
In the type $\III$ case, however, embezzlement of entanglement becomes possible.
To quantify this phenomenon, we consider the worst error up to which the resource state $\Phi_{AB}$ allows embezzlement of \emph{arbitrary} entangled states $\Psi_{A'B'}$ given an arbitrary initial state $\Omega_{A'B'}$.
This quantity, denoted $\kappa(\Phi_{AB})$, is zero if and only if arbitrary entangled states can be embezzled from $\Phi_{AB}$. We refer to such states as \emph{embezzling states}.
For resource states in bipartite systems not of type $\III$, $\kappa(\Phi_{AB})$ attains its maximal value.
To discuss system properties instead of state properties, we consider the optimal values
\begin{equation}
    \kappa_{\min}(\M_A,\M_B) := \inf_{\Phi_{AB}} \ \kappa(\Phi_{AB}), \qquad \kappa_{\max}(\M_A,\M_B) := \sup_{\Phi_{AB}} \ \kappa(\Phi_{AB}),
\end{equation}
where the optimization runs over pure states of the full system.
For bipartite systems of type $\III_\lambda$ factors, $0\le \lambda\le 1$, we have
\begin{equation}\label{eq:kappa-max_i}
    \kappa_{\max}(\M_A,\M_B) = 2\;\!\frac{1-\lambda^{1/2}}{1+\lambda^{1/2}}.
\end{equation}
The RHS of \eqref{eq:kappa-max_i} is an invertible continuous function of $\lambda$. 
Its inverse allows us to recover the subtype from the operational quantifier $\kappa_{\max}$.
This completes the one-to-one correspondence mentioned in \cref{thm:minimal-type_i}, which is summarized in \cref{tab:types}, which, in addition, contains the values of $\kappa_{\min}$ showing that embezzling states exist for all subtypes, although they fail to exist in some type $\III_0$ cases.

\paragraph{\it Strong forms of infinite entanglement.}
One of the motivating questions for this work was whether there are interesting operationally distinct forms of infinite entanglement.
We have seen above that all pure states in bipartite systems of types $\II$ and $\III$ have infinite one-shot entanglement (this was first observed in \cite{keyl_infinitely_2003}).
But that does not yet mean that all states in these systems have the same entanglement properties. 
The latter would essentially require all bipartite pure states to be equivalent up to local unitaries (LU), at least up to arbitrary precision.
We call a bipartite system where this is the case \emph{LU transitive}.
These are precisely the type $\III_1$ systems:

\begin{introtheorem}[see \cref{sec:universal-mbz}]\label{thm:univ-mbz_i}
    %Consider a bipartite system $(\M_A,\M_B)$ of factors in Haag duality, i.e., $\M_A=\M_B'$. 
    The following are equivalent:
    \begin{enumerate}[(a)]
        \item the bipartite system is LU transitive;
        \item the bipartite system is a \emph{universal embezzler}: every pure state of the full system is embezzling (equivalently,  $\kappa_{\max}(\M_A,\M_B)=0$);
        \item $\M_A$, and therefore $\M_B$, are of type $\III_1$.
    \end{enumerate}
\end{introtheorem}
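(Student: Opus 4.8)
The plan is to treat (a)$\Leftrightarrow$(c) as the core of the theorem and to obtain (b) from the already-established formula \eqref{eq:kappa-max_i} for $\kappa_{\max}$ together with a short stability argument. The decisive external input is the Connes--St\o rmer homogeneity theorem: a factor $\M$ with separable predual is of type $\III_1$ if and only if for all normal states $\varphi,\psi$ and all $\eps>0$ there is a unitary $u\in\M$ with $\|\varphi\circ\Ad(u)-\psi\|<\eps$ in the predual norm. Throughout I use that $(\M_A,\M_B)$ is in standard form (part of the standing bipartite setup, with separable predual), so that: the faithful normal states of $\M_A$ are exactly the $A$-marginals $\omega^A_\psi(a)=\langle\psi,a\psi\rangle$, $a\in\M_A$, of cyclic and separating vectors $\psi\in\H$; cyclic and separating vectors are norm dense in the unit sphere; every faithful normal state has a canonical representative in the natural positive cone; and uniqueness of purifications (\cref{thm:haag-duality_i}) holds.

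For (c)$\Rightarrow$(a), fix unit vectors $\psi,\phi$; since local unitaries are norm contractions and cyclic, separating vectors are dense, it suffices to connect cyclic and separating $\psi,\phi$. Their $A$-marginals are faithful normal states, and conjugating $\psi$ by a unitary $u_A\in\M_A$ realizes exactly the inner orbit of $\omega^A_\psi$, since $\omega^A_{u_A\psi}=\omega^A_\psi\circ\Ad(u_A^*)$. By Connes--St\o rmer applied to $\M_A$, choose $u_A$ with $\|\omega^A_{u_A\psi}-\omega^A_\phi\|<\delta$. Write $\sigma=\omega^A_{u_A\psi}$, $\tau=\omega^A_\phi$ and let $\xi_\sigma,\xi_\tau$ be their natural-cone representatives; the Powers--St\o rmer inequality gives $\|\xi_\sigma-\xi_\tau\|\le\delta^{1/2}$. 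Since $u_A\psi$ and $\xi_\sigma$ share the faithful marginal $\sigma$, and $\phi,\xi_\tau$ share $\tau$, all four vectors are cyclic and separating, so uniqueness of purifications yields unitaries $v,w\in\M_B$ with $v\,u_A\psi=\xi_\sigma$ and $w\,\xi_\tau=\phi$. Then $u_B:=wv\in\M_B$ satisfies $u_Bu_A\psi=w\xi_\sigma$, and $\|w\xi_\sigma-\phi\|=\|w\xi_\sigma-w\xi_\tau\|=\|\xi_\sigma-\xi_\tau\|\le\delta^{1/2}$. Letting $\delta\to0$ gives $\psi\lu\phi$, which is (a).

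For (a)$\Rightarrow$(c), run the marginals backwards: given cyclic and separating $\psi,\phi$ and $\eps>0$, (a) provides $u_A\in\M_A$, $u_B\in\M_B$ with $\|u_Au_B\psi-\phi\|<\eps$. As $u_B\in\M_B=\M_A'$ commutes with $\M_A$ and with $u_A$, the $A$-marginal of $u_Au_B\psi$ equals $\omega^A_\psi\circ\Ad(u_A^*)$, and $\|\omega^A_{u_Au_B\psi}-\omega^A_\phi\|\le2\|u_Au_B\psi-\phi\|<2\eps$. Ranging $\psi,\phi$ over cyclic, separating vectors, $\omega^A_\psi,\omega^A_\phi$ range over all faithful normal states, so the inner orbit of every faithful normal state is norm dense in the faithful normal states, hence (approximating by faithful states) in all normal states. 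The converse half of Connes--St\o rmer then forces $\M_A$ to be of type $\III_1$, giving (c). For (b): since (b) is by definition $\kappa_{\max}(\M_A,\M_B)=0$, and \eqref{eq:kappa-max_i} exhibits $\kappa_{\max}$ as a continuous bijection of the subtype parameter onto $[0,2]$ vanishing exactly at $\lambda=1$ (with the non-$\III$ and $\III_0$ cases giving $2$), we get (b)$\Leftrightarrow$(c) at once. One can also see (c)$\Rightarrow$(b) directly: enlarging by finite-dimensional ancillas $\H_{A'},\H_{B'}$ replaces $(\M_A,\M_B)$ by the commuting pair $\M_A\barox\B(\H_{A'})$ and $\M_B\barox\B(\H_{B'})$ on $\H\ox\H_{A'}\ox\H_{B'}$, which by the tensor commutation theorem is again in Haag duality and, since tensoring a type $\III_1$ factor with a type $\I$ factor preserves the type, is again of type $\III_1$; applying (c)$\Rightarrow$(a) to it shows it is LU transitive, i.e.\ $\Phi\ox\Omega_{A'B'}\lu\Phi\ox\Psi_{A'B'}$ for pure $\Omega_{A'B'}$ and arbitrary targets $\Psi_{A'B'}$ (mixed $\Omega_{A'B'}$ reducing to this by purifying into a further local ancilla), so every $\Phi$ is embezzling.

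The main obstacle is the lifting step in (c)$\Rightarrow$(a): Connes--St\o rmer is a statement about the predual (norm closeness of states under inner conjugation), whereas LU transitivity lives at the level of vectors in $\H$. Bridging the two is where the standard-form machinery is essential---the natural positive cone together with Powers--St\o rmer to convert predual closeness into Hilbert-space closeness of the canonical representatives, and uniqueness of purifications to absorb the residual $\M_B$-freedom into a single unitary. Once this core estimate is in place, the remaining points---passing from cyclic, separating to arbitrary vectors, and from pure to mixed $\Omega_{A'B'}$---are routine density and purification arguments.
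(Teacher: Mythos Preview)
Your argument for (c)$\Rightarrow$(a) and for (b)$\Leftrightarrow$(c) is correct and is essentially the paper's: the paper packages your Powers--St\o rmer plus uniqueness-of-purifications computation into \cref{cor:local unitary equivalence} (approximate LU equivalence of vectors $\Leftrightarrow$ approximate unitary equivalence of one marginal) and then invokes Connes--St\o rmer, while (b)$\Leftrightarrow$(c) is \cref{prop:bipartite-univ-mbz}, obtained from the $\kappa$-computation you cite. Your direct (c)$\Rightarrow$(b) via type-stability under type~$\I$ tensoring is also fine and parallels \cref{lem:lu-transitive-mbz}.

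There is, however, a gap in (a)$\Rightarrow$(c). You assert that standard form is ``part of the standing bipartite setup'' and use this to guarantee that cyclic separating vectors exist, are dense, and that their $A$-marginals exhaust the faithful normal states. But the paper's standing hypotheses (\cref{sec:bipartite systems_basics}) are only Haag duality and factoriality; standard form is an additional property (\cref{sec:std bipartite systems}) that can fail --- e.g., $(\B(\CC^2)\ox 1,\,1\ox\B(\CC^3),\,\CC^2\ox\CC^3)$ has no cyclic separating vector at all, and then your argument is vacuous. The paper closes this gap in \cref{lem:lu-transitive}: LU transitivity forces all $A$-marginals \emph{of vectors in $\H$} (not yet all faithful states) to be approximately unitarily equivalent; this already excludes the semifinite case (one can always exhibit vectors in $\H$ with distinct Schmidt spectra), so $\M_A$ is type $\III$, which by \cref{lem:std if infinite} forces standard form, and only then does your marginal argument go through. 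The repair is a one-line insertion, but as written you are assuming a consequence of the conclusion.
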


Examples of these highly entangled systems can be found in well-studied physical models.
Indeed, the observable algebras in conformal field theories and relativistic quantum field theories are generally of type $\III_1$ \cite{haag_local_1996,gabbiani_operator_1993}.
Moreover, we expect the ground state sectors of critical many-body systems in 1D to yield type $\III_1$ bipartite systems in the left-right bipartition.
This has been shown rigorously for the XY model \cite{keyl_entanglement_2006}, the transverse-field Ising model \cite{van_luijk_critical_2025}, and critical free fermion chains \cite{van_luijk_critical_2025}.

Additionally, we show the (mathematical) existence of multipartite analogs of these systems.
This is remarkable given how hard it is to say anything substantial about multipartite entanglement.
To do this, we can take the multipartite embezzling family of Leung, Toner, and Watrous and form the restricted infinite Hilbert space tensor product.
For any number $N\ge 2$ of parties, this yields a collection $(\M_x)_{x=1}^N$ of pairwise commuting type $\III_1$ factors, which fulfil Haag duality in every bipartition, and satisfy the $N$-partite analog of LU transitivity.
We show that LU transitivity implies that the system is a \emph{multipartite universal embezzler}: For every pure state $\Phi\in \H$ and every pair $\Omega,\Psi\in (\CC^d)^{\ox N}$ of finite-dimensional $N$-partite pure states, the state transition $\Phi\ox\Omega\to \Phi\ox\Psi$ can be realized with local unitaries up to arbitrary precision.
In contrast to the bipartite case, we do not know of any physical realizations of multipartite universal embezzlers.

\paragraph{\it Discussion and outlook}

The results explained above allow us to fully characterize mathematical concepts such as the definition of von Neumann algebras (see \cref{thm:axioms_i}), Haag duality (see \cref{thm:haag-duality_i}), or the type classification of factors (see \cref{thm:minimal-type_i}) in terms of information-theoretic ones, and vice versa.
Since the former are known for the observable algebras in various models, we discover previously unknown operational properties of well-studied models.

For instance, we learn that the entanglement properties of the ground states of, say, the toric code \cite{kitaevFaulttolerantQuantumComputation2003} and a Levin-Wen model \cite{levinStringnetCondensationPhysical2005} with nontrivial quantum dimensions, are qualitatively different.
Indeed, if we partition the lattice into a cone and its complement, the former has type $\II_\oo$ \cite{ogata_type_2024} and, for a suitable fusion category, the latter has $\III_1$ \cite{jones_local_2025}.
Hence, embezzlement of entanglement is possible if the resource system is a suitable Levin-Wen model but impossible if it is the toric code.

It was recently argued that observable algebras in semiclassical quantum gravity can be of type $\II$, in contrast to the type $\III_1$ algebras ubiquitous in relativistic quantum field theory \cite{chandrasekaran_algebra_2023} (see also \cite{jensen_generalized_2023,fewster_quantum_2024}). 
With our results, we can interpret this type reduction from an operational entanglement point of view.
Again, the phenomenon of embezzlement of entanglement distinguishes the two cases.

It would be fascinating to further explore the correspondence between operational properties on the one hand and von Neumann algebraic properties on the other.
Concrete open problems in this direction are mentioned at several points throughout this work.

Moreover, it would be great to have a better understanding of the von Neumann algebraic properties in concrete models (see \cref{sec:many-body}).
Especially, Haag duality is only known to hold for a handful of cases.
For many important models, e.g., the Heisenberg antiferromagnet, neither the type nor Haag duality are known.

\section{Preliminaries on von Neumann algebras}\label{sec:vN-prelims}

\localtableofcontents

\null

This section collects the mathematical prerequisites for this work.
We give an overview of a substantial part of the theory of von Neumann algebras, based on Refs.~\cite{takesaki1,takesaki2,hiai_lectures_2021,haagerup_equivalence_1990}.
% Instead of going into too many details, we will often refer the reader to the literature.
This section can be skimmed through on a first read, as we will refer back to it in the later sections.
In particular, only the basic definitions are necessary to understand \cref{sec:vNQI}, where the setup of von Neumann algebraic quantum information theory is explained.
With few exceptions, this section is kept purely mathematical. Physical relevance and interpretation are discussed in the later sections, in particular, in \cref{sec:vNQI}.

Let us begin with a historic remark.
The theory of von Neumann algebras was initiated by von Neumann together with Murray in a series of papers \cite{rings_of_operators1,von_neumann_certain,rings_of_operators2,rings_of_operators3,rings_of_operators4,von_neumann_algebraical}.
Von Neumann's motivation was rooted in applications to quantum mechanics, whose mathematical formulation he had developed earlier \cite{von_neumann_mathematische_1932}. 
He was dissatisfied with the Hilbert space-based formulation of quantum mechanics because it was not singled out among the more general descriptions where a quantum system is described by what we now call a von Neumann algebra \cite{redei}.
Indeed, it is known today that this more general mathematical description of quantum systems becomes necessary for describing systems with infinitely many degrees of freedom, like quantum field theories \cite{haag_local_1996} or quantum many-body systems in the thermodynamic limit \cite{naaijkensQuantumSpinSystems2017}.  
This is precisely why we care about von Neumann algebras in this work.
Von Neumann algebras became one of the most active research areas of pure mathematics in the late 20th century.
Perhaps, the biggest breakthrough in the theory was the discovery of Tomita-Takesaki modular theory in the 1970s \cite{takesaki_tomitas_1970}.

% This section is organized as follows:
% In \cref{sec:basics}, we review the basics of von Neumann algebras and their classification.
% In \cref{sec:semifinite vNas}, we consider semifinite von Neumann algebras with a focus on $L^p$-spaces, spectral scales, and majorization theory.
% In \cref{sec:modular}, we go through the essentials of modular theory, standard forms, Haagerup $L^p$-spaces, flow of weights, and the concept of Haagerup-St\o rmer spectral states.
% In \cref{sec:useful}, we collect some tools for later usage.

\subsection{The basics}\label{sec:basics}

\subsubsection{Hilbert spaces and bounded operators}\label{sec:hilbert-spaces}

A Hilbert space $\H$ is a complex inner product space that is complete with respect to the induced norm $\norm\Psi = \ip\Psi\Psi^{1/2}$.
All Hilbert spaces in this work are assumed to be separable.
Assuming basic familiarity with Hilbert spaces, we only recall a few essentials to fix notation.
A linear operator $a:\H\to\K$ between Hilbert spaces is continuous if and only if it is bounded in the sense that $\norm{a\Psi}\le C\norm\Psi$ for some $C>0$ and all $\Psi\in\H$. 
The optimal constant in this estimate is the operator norm $\norm a$.
Perhaps the most important fact about Hilbert spaces is the Riesz representation theorem, asserting that $\Psi \mapsto \bra \Psi :=\ip\Psi{\placeholder}$ is an anti-isomorphism between a Hilbert space $\H$ and its dual $\H^*$.
As a consequence, the adjoint of a bounded operator $a:\H\to\K$ is defined as the operator $a^*:\K\to\H$ with
\begin{equation*}
    \ip{\Psi}{a\Phi}=\ip{a^*\Psi}\Phi, \qquad \Psi\in\H,\ \Phi\in\K.
\end{equation*}
A shorter way to write this formula is $ \bra\Psi a = \bra{a^*\Psi}$.
The adjoint is anti-linear and satisfies $\norm{a^*}=\norm a$, $(ab)^*=b^*a^*$.
An operator $a$ is called hermitian if $a^*=a$, a projection if $a=a^*a$, an isometry if $a^*a=1$, and unitary if $a$ and $a^*$ are both isometries. 
Moreover, an operator $a$ is positive, written $a\ge0$, if $a=a^*$ and $\ip\Psi{a\Psi}\ge0$, $\Psi\in\H$, and an operator inequality $a\ge b$ is defined via $a-b\ge0$.

Although we use the `bra' symbol $\bra\Psi$ from Dirac's bra-ket notation for the canonical covector associated with a vector $\Psi\in\H$, we do not use `ket' to denote general vectors.
Instead, we follow the convention of \cite{rfw_lecture} where the symbol $\ket \Psi$, $\Psi\in\H$, is used to denote the operator $\CC\ni\lambda\mapsto \lambda\Psi\in\H$.
This way $\ket \Psi$ and $\bra \Psi$ are adjoints of each other and the operator product $\bra\Psi\ket\Phi$ is simply the scalar $\ip\Psi\Phi$ whereas the reversed product $\ketbra\Psi\Phi$ is the rank 1 operator $\ip\Phi{\placeholder}\Psi$.
In particular, if $\Psi$ is a unit vector, then the orthogonal projection onto its span is $\kettbra\Psi$.
Moreover, we write $\ket j$, $j=0,1,\ldots$, to denote vectors of an orthonormal basis (ONB).

Given a Hilbert space $\H$, we let $\B(\H)$ denote the space of bounded operators on $\H$.
The product, adjoint, and operator norm turn $\B(\H)$ into a unital Banach *-algebra.
The unit is, of course, the identity operator on $\H$, which we simply denote by $1$ or $1_\H$ if we want to emphasize the Hilbert space.
A particularly important property of $\B(\H)$ is the monotone convergence theorem:
A uniformly bounded increasing net $(a_\alpha)$ of hermitian elements has a unique least upper bound $a$ to which it converges $\sigma$-strongly (see below). 
% The cone of positive bounded operators is denoted $\B(\H)^+$.

The trace on $\B(\H)$ is denoted $\tr$.
It is initially defined as an $\bar\RR^+$-valued functional on the cone $\B(\H)^+$ of positive bounded operators, where $\bar\RR^+=[0,\oo]$.
The trace class $\T(\H)$ is the subspace of $\B(\H)$ of operators on which the trace norm $\norm\rho_1=\tr|\rho|$ is finite, where $|\rho|=(\rho^*\rho)^{1/2}$.
This norm turns $\T(\H)$ into a Banach space on which the trace extends linearly to a continuous linear functional.
Moreover, $\T(\H)$ is a two-sided ideal in $\B(\H)$, and we have $\tr(a\rho)=\tr(\rho a)$ for $a\in\B(\H)$, $\rho\in\T(\H)$.
In fact, the bilinear map $(a,\rho)\mapsto \tr(a\rho)$ induces an isomorphism
\begin{equation*}
    \T(\H)^* \cong \B(\H),
\end{equation*}
which identifies $\T(\H)$ as the pre-dual $\predualB(\H)$ of $\B(\H)$ \cite[Sec.~II.1]{takesaki1}.
Positive operators with unit trace on $\H$ are called \emph{density operators}.
They span the trace class and, in case the Hilbert space $\H$ describes a quantum system, they describe the (mixed) states of this system (see \cref{sec:vNQI}).

Next, we recall locally convex operator topologies on $\B(\H)$.
We refer to \cite[Sec.~II.1]{takesaki1} for a detailed treatment.
The \emph{weak} operator topology on $\B(\H)$ is the initial topology of the functionals $a \mapsto \ip\Psi{a\Phi}$, $\Psi,\Phi\in\H$, and the \emph{strong} operator topology is the topology induced by the maps $a\mapsto a\Psi\in\H$.
The \emph{$\sigma$-weak} operator topology is the initial topology of the maps $a\mapsto \phi(a)$, $\phi\in\predualB(\H)$, whereas the \emph{$\sigma$-strong} topology is the initial topology of the maps $a\mapsto \phi(a^*a)$.
The \emph{{($\sigma\text-$\nobreak)}\allowbreak strong*} operator topology is defined like the ($\sigma$-)strong operator topology, but the initial topology is formed also with respect to the maps where $a^*$ is used instead of $a$.
Each of these topologies makes $\B(\H)$ a locally-convex topological vector space.
The weak, strong, and strong* topologies coincide with their respective $\sigma$-variants on norm-bounded subsets of $\B(\H)$.
The adjoint operation is continuous for the weak, $\sigma$-weak, strong*, and $\sigma$-strong* topologies, but discontinuous for the strong and $\sigma$-strong topologies.

When discussing linear maps on $\B(\H)$, we need the \emph{point-$\sigma$-weak topology}, i.e., the topology of pointwise convergence in the $\sigma$-weak operator topology, where a net of bounded linear maps $(T_\alpha)$ converges to a linear map $T$ if and only if $T_\alpha(a)$ converges to $T(a)$ in the $\sigma$-weak operator topology. 
By definition of the $\sigma$-weak operator topology, this is equivalent to $\lim_\alpha \phi(T_\alpha(a)) =\phi(T(a))$ for all $a\in\B(\H)$, $\phi\in\predualB(\H)$.

\subsubsection{Von Neumann algebras}\label{sec:von-neumann-algs}

An analogy often used to describe von Neumann algebras is that they represent "noncommutative measure spaces" \cite{connes_noncommutative_1994}. More specifically, the idea is that a von Neumann algebra is the $L^\oo$-space of a noncommutative measure space.
This metaphor is not merely due to the fact that abelian von Neumann algebras are isomorphic to actual $L^\oo$-spaces over measure spaces, but summarizes the general observation that concepts in measure theory transfer to the noncommutative setting. 
%\footnote{A more careful phrasing would be that von Neumann algebras are noncommutative versions of measurable spaces equipped with an equivalence class of ($\sigma$-finite) measure space. In the abelian case, this corresponds to  $L^\oo(X,\mu)$ only remembering the equivalence class of the measure $\mu$.}
We will encounter many examples of this in the following, including monotone convergence theorems, $L^p$-spaces, and Radon-Nikodym derivatives.

A von Neumann algebra on a Hilbert space $\H$ is a unital *-algebra $\M$ of bounded operators on $\H$ that is closed in the strong operator topology.
For instance, $\B(\H)$ is a von Neumann algebra.
If $\R$ is a set of bounded operators that is closed under adjoints, then the \emph{commutant} 
\begin{equation}
    \R' = \{ x\in\B(\H) : [a,x]=0\ \forall a\in\M\}
\end{equation}
is a von Neumann algebra, where $[x,a]:=xa-ax$.
In fact, every von Neumann algebra arises this way. 
This is a consequent of the \emph{bicommutant theorem} asserting that a *-algebra $\M$ is a von Neumann algebra if and only if it is its own bicommutant
\begin{equation}\label{eq:bicommutant-thm}
    \M = \M''.
\end{equation}
The bicommutant theorem tells us that von Neumann algebras always come in pairs $(\M,\M')$.
Another consequence is that a von Neumann algebra contains the spectral projections and the functional calculi of each of its hermitian elements.
Indeed, this follows from \eqref{eq:bicommutant-thm} because the spectral projections and functional calculi of an operator $a=a^*\in\M$ commute with all operators that commute with $a$.

Von Neumann algebras are monotonely closed: The limit of a uniformly bounded increasing net $(a_\alpha)$ of hermitian elements in $\M$ is in $\M$.
Moreover, von Neumann algebras are closed in each of the operator topologies mentioned in the previous subsection.
Each of these topologies is defined with respect to the Hilbert space $\H$. However, the norm, $\sigma$-weak, $\sigma$-strong, and $\sigma$-strong* topologies only depend on the algebraic structure of the von Neumann algebra:
If $\M_j\subset\B(\H_j)$ are von Neumann algebras and if $\pi:\M_1\to\M_2$ is a *-isomorphism, i.e., a linear bijection preserving products and the adjoint operation, then $\pi$ is bicontinuous for the norm, $\sigma$-weak, $\sigma$-strong and $\sigma$-strong* topologies.
Importantly, the positive cone $\M^+$ of a von Neumann algebra is an algebraic property as well. 
Indeed, an operator $a\in \M$ is positive if and only if $a=b^*b$ for some $b\in\M$.

The above definition takes von Neumann algebras as algebras of operators on Hilbert spaces.
In the previous paragraph, we have seen that the most relevant properties of von Neumann algebras depend only on the algebraic structure but not on the Hilbert space on which they act.
In fact, there is an equivalent approach in which von Neumann algebras are defined abstractly:
An \emph{abstract} von Neumann algebra (or W*-algebra) $\M$ is a Banach *-algebra, which satisfies the C*-property $\norm{a^*a}=\norm a^2$ and is itself the dual space of a Banach space $\M_*$, known as the predual \cite[Sec.~III.3]{takesaki1}.
Every abstract von Neumann algebra has a faithful representation, i.e., is *-isomorphic to a von Neumann algebra on a Hilbert space.
The predual of a concrete von Neumann algebra $\M\subset\B(\H)$ is the space $\M_*$ of $\sigma$-weakly continuous linear functionals on $\M$.
Conversely, the $\sigma$-weak operator topology on $\M$ is the weak*-topology induced by the predual.
In addition to the $\sigma$-weak topology, the $\sigma$-strong and $\sigma$-strong* topologies do not depend on the given representation and can be defined for abstract von Neumann algebras.
We will only consider abstract von Neumann algebras with separable preduals.
These are exactly the abstract von Neumann algebras that can be faithfully represented on a separable Hilbert space.

The elements of the predual $\M_*$ are known as \emph{normal} linear functionals.
Among general linear functionals, they are characterized as those that are compatible with monotone limits, i.e., if $(a_\alpha)$ is a bounded increasing net in $\M$ and $\phi\in \M_*$ then $\lim_\alpha \phi(a_\alpha)=\phi(\lim_\alpha a_\alpha)$.
A (normal) \emph{state} on a von Neumann algebra is a (normal) positive linear functional $\phi$ on $\M$ that is normalized $\phi(1)=1$.
If a von Neumann algebra acts on a Hilbert space $\H$, then every unit vector $\Phi\in\H$, induces a normal state on $\M$ via
\begin{equation}\label{eq:vector state}
    \phi(a) = \ip\Phi{a\Phi},\qquad a\in\M.
\end{equation}
The states form a convex set whose extremal points are called \emph{pure states}. 
In general, however, normal pure states may fail to exist.
We denote the normal state space of $\M$ by $\nstates(\M)$.
In the analogy with measure spaces, normal states correspond to absolutely continuous probability measures.

There are a few operations that combine von Neumann algebras into new ones. 
If $\M$ and $\N$ are von Neumann algebras on a Hilbert space $\H$, their intersection $\M\cap\N$ is again a von Neumann algebra.
Moreover, there is a unique smallest von Neumann algebra $\M\vee\N$ that contains both $\M$ and $\N$. It is given by $\M\vee\N=(\M\cup\N)''$.
Together with the commutant, these operations satisfy de Morgan's rule
\begin{equation}
    (\M\vee \N)' = \M'\cap \N'.
\end{equation}
However, the distributivity law fails \cite{rings_of_operators1}. I.e., in general $\M\cap (\N_1\vee \N_2) \ne (\M\cap\N_1)\vee(\M\cap\N_2)$.
Now suppose that $\M$ and $\N$ are von Neumann algebras on Hilbert spaces $\H$ and $\K$, respectively.
The \emph{direct sum} $\M\oplus\N=\{x\oplus y : x\in\M,y\in\N\}$ is a von Neumann algebra on $\H\oplus\K$ whose commutant is $\M'\oplus\N'$.
The more interesting cousin of the direct sum is the \emph{tensor product} $\M\barox\N =\{ x\ox y : x\in\M,y\in\N\}''$ acting on $\H\ox\K$ with commutant $(\M\barox\N)'=\M'\barox\N'$.
Direct sums can be generalized to so-called \emph{direct integrals} $\int^\oplus\M_x\,d\mu(x)$, where the role of the index set is now played by a measure space $(X,\mu)$ and the map $x\mapsto \M_x$ must satisfy certain measurability criteria (see \cite[Sec.~IV.8]{takesaki1} for details).
These operations can be defined on the level of abstract von Neumann algebras \cite{takesaki1}.

\subsubsection{Classification of von Neumann algebras}

We now sketch the basic structure theory of von Neumann algebras.
The \emph{center} of a von Neumann algebra $\M$ is 
\begin{equation}
    Z(\M) = \{z\in \M : [z,x]=0\ \forall x\in\M\} = \M\cap \M'.
\end{equation}
As mentioned above, abelian von Neumann algebras are *-isomorphic with $L^\oo$-spaces.
A von Neumann algebra with trivial center $Z(\M)=\CC1$ is called a \emph{factor}.
Factors are the building blocks of von Neumann algebras:
If $(X,\mu)$ is a standard measure space such that $Z(\M) \cong L^\oo(X,\mu)$, there are measurable fields of factors $\M_x\subset \B(\H_x)$, $x\in X$, such that 
\begin{equation}\label{eq:direct integral}
    \H = \int^\oplus_X\H_x\,d\mu(x),\qquad \M = \int_X^\oplus \M_x \,d\mu(x),
\end{equation}
see \cite[Sec.~IV.8]{takesaki1} for details. %\footnote{The symbol $\int^\oplus$ denotes a direct integral, a generalized direct sum where the index set is replaced by a measure space. The existence of a disintegration into factors needs the Hilbert space $\H$ to be separable, which we assumed throughout this document.}
This decomposition of $(\M,\H)$ is unique up to unitary isomorphism and the usual degeneracies arising from null sets. 
Moreover, the decomposition of $\M$ into factors is intrinsic, i.e., independent of the Hilbert space $\H$ that $\M$ acts on.
This procedure, known as \emph{disintegration}, reduces the structure theory of von Neumann algebras to that of factors.

Before discussing the classification of factors, we briefly collect a few facts about the set $\proj(\M)$ of projections in a von Neumann algebra $\M$.
It follows from the bicommutant theorem that the projection $[V]$ onto a closed subspace $V\subset\H$ is in $\M$ if and only if $V$ is an invariant subspace for $\M'$:
\begin{equation}\label{eq:projections in M}
    [V]\in\M \iff \M'V \subset V.
\end{equation}
For every projection $p\in\proj(\M)$, a von Neumann algebra also contains the complementary projection $p^\perp=1-p$, and we say that a projection $q$ is orthogonal to $p$ if $p+q$ is a projection or, equivalently, if $q\le p^\perp$.
The classification rests on the so-called \emph{Murray-von Neumann comparison theory} of projections: Projections $p$ and $q$ are said to be equivalent, written $p\sim q$, if there exists an operator $v\in\M$ such that $p=v^*v$ and $q=vv^*$, and one says that $p$ dominates $q$, written $q\preceq p$, if $q\le p'\sim p$ for some $p'$.
An operator $v$ such that $vv^*$, $v^*v$ are projections is called a \emph{partial isometry} and the set of partial isometries in $\M$ is denoted $\V(\M)$.
A projection $p$ is \emph{minimal} if $0\ne q\le p \implies q=p$, and it is \emph{finite} if $p\sim q \le p \implies p=q$.
A von Neumann algebra is said to be finite if the unit $1\in\M$ is a finite projection.

The type classification of factors goes as follows \cite{takesaki1} (see \cref{fig:types} for a visualization):
\begin{itemize}
    \item Type $\I$: The factor contains a nonzero minimal projection. Type $\I$ factors are further sub-classified into:
    \begin{itemize}
        \item Type $\I_n$, $n\in\NN\cup\{\oo\}$:  The maximal number of mutually nonzero orthogonal projections is $n$.
        Up to isomorphism, every type $\I_n$ factor is isomorphic to $\B(\H)$ for an $n$-dimensional Hilbert space $\H$.
    \end{itemize}
    \item Type $\II$: The factor contains nonzero finite projections but no nonzero minimal projections. Type $\II$ factors are sub-classified into:
    \begin{itemize}
        \item Type $\II_{1}$\hspace{.55pt} : The unit $1\in\M$ is a finite projection.
        \item Type $\II_\oo$: The unit $1\in\M$ is an infinite, i.e., non-finite, projection.
    \end{itemize}
    \item Type $\III$: The factor contains no nonzero finite projections. Equivalently, $p\sim q$ for all pairs of nonzero projections $p,q\in\M$. Type $\III$ factors are further sub-classified into:
    \begin{itemize}
        \item Type $\III_\lambda$, $0\le\lambda\le1$: The number $\lambda$ is given by 
        \begin{align}
            &\lambda=\left(\frac{2-\diam \nstates(\M)/_\sim}{2+\diam \nstates(\M)/_\sim}\right)^2,\nonumber\\
        \intertext{where} %$d=\diam\states(\M)/\!\!\sim$ is the \emph{diameter of the state space modulo unitary equivalence}  (cp.\ \cite{connes_diameters_1985,haagerup_equivalence_1990}), defined as
            \diam\,& \nstates(\M)/_\sim \ := \adjustlimits\sup_{\psi,\phi\in\nstates(\M)} \inf_{u\in\U(\M)} \norm{\psi-u\phi u^*}
        \label{eq:state space diam}
        \end{align}
        is the \emph{diameter of the state space} modulo unitary equivalence \cite{connes_diameters_1985,haagerup_equivalence_1990} with $\U(\M)$ denoting the group of unitaries in $\M$.
    \end{itemize}    
\end{itemize}
Every factor has exactly one of the specified types and exactly one of the specified subtypes.
A von Neumann algebra $\M$ is said to be of type $\rm X$, if it is a direct integral of type $\rm X$ factors (see \eqref{eq:direct integral}).
The definition of type $\III_\lambda$ factors in terms of the diameter of the state space is taken from \cite{connes_diameters_1985}, where it is shown that $\diam\nstates(\M)/_\sim \, = 2 \frac{1-\lambda^{1/2}}{1+\lambda^{1/2}}$. 
While this is very different from the usual definitions found in textbooks, which we will discuss later on, it has the benefit of being explicitly stated without introducing modular theory.
Moreover, the diameter of the state space will be important in \cref{sec:catalytic states,sec:strong-entanglement}, where we will show that it has an operational meaning in quantum entanglement theory.

The classification of factors into type and subtype is almost a complete classification of \emph{approximately finite-dimensional} (AFD) factors.
A von Neumann algebra is AFD if it is generated by an increasing sequence $\M_1\subset\M_2\subset \ldots\subset\M$ of finite-dimensional subalgebras, i.e., $\M = (\cup_n\M_n)''$.
Apart from type $\III_0$ factors, the isomorphism classes of AFD factors are in one-to-one correspondence with type and subtype \cite[Ch.~XVIII]{takesaki3}.
For approximately finite-dimensional type $\III_0$ factors, a complete invariant is given by the flow of weights (see \cref{sec:spectral states}).
To the best of our knowledge, all von Neumann algebras that appear as observable algebras in models of physical systems are AFD.\footnote{This is particularly relevant in the context of quantum entanglement theory since only non-AFD algebras can give rise to counterexamples to Tsirelson's problem, see \cref{sec:notes-vNQI}.}

\subsubsection{Weights and states}\label{sec:weights and states}

A \emph{normal semifinite weight} on a von Neumann $\M$ is an affine $\sigma$-weakly lower semicontinuous map
\begin{equation*}
    \omega : \M^+ \to \bar\RR^+
\end{equation*}
such that the left-ideal $\mathfrak n_\omega:=\{ a\in \M : \omega(a^*a)<\oo\}$ is $\sigma$-weakly dense.
In the analogy with measure spaces, weights are noncommutative generalizations of $\sigma$-finite measures.
A weight is called \emph{faithful} if $\omega(a^*a)=0\implies a=0$ and \emph{finite} if $\omega(a^*a)<\oo$ for all $a\in\M$.
A normal semifinite weight extends to a densely-defined linear functional with domain $\mathfrak m_\omega:=\lin\{ a^*b : a,b\in\mathfrak n_\omega\}$.
% In the commutative case, all normal semifinite faithful weights on $\M=L^\oo(X,\mu)$ are of the form $\omega(f) = \int_X f\,d\nu$ for some measure $\nu\ll\mu$.
Note that normal states are precisely normal semifinite weights with $\omega(1)=1$.
Like normal states, normal semifinite weights respect monotone limits, which can be seen as a generalization of the monotone convergence theorem.
If $a\in\M$, then $a\omega a^*$ denotes the normal semifinite weight
\begin{equation}
    a\omega a^*(b) = \omega(a^*ba), \qquad b\in\M^+.
\end{equation}
The \emph{support projection} $\supp(\omega)$ of a normal semifinite weight $\omega$ is defined as the smallest projection $p$ such that $\omega=p\omega p$.%
\footnote{For every collection $\{p_\alpha\}$ of projections in $\M$, there exist a unique greatest lower bound $p=:\wedge_\alpha p_\alpha \in\proj(\M)$, which satisfies $q\le p\le p_\alpha$ for all $\alpha$ and all projections that satisfy $q\le p_\alpha$ for all $\alpha$.
Similarly, there is a least upper bound $\vee_\alpha p_\alpha $ and the two are related by $(\vee_\alpha p)^{\perp} = \wedge_{\alpha} p_\alpha^\perp$.
% If $\M\subset\B(\H)$ and $V_\alpha\subset \H$ are the closed subspaces such that $p_\alpha = [V_\alpha]$, then $\vee p_\alpha = [\cup_\alpha V_\alpha]$ and $\wedge p_\alpha = [\cap_\alpha V_\alpha]$.
}
If $\omega$ is finite, the definition is equivalent to defining $\supp(\omega)$ as the smallest projection such that $\omega(p)=\omega(1)$.
If $\phi\in\M_*^+$ is implemented by a vector $\Phi\in\H$ (as in \eqref{eq:vector state}), the support projection is given by\footnote{
Since this fact will often be used in this work, we give a short proof: $[\M'\Phi]\in\M''=\M$ holds because  $\M'\Phi$ is an $\M'$-invariant subspace (see \eqref{eq:projections in M}).
The inequality $[\M'\Phi]\ge \supp(\phi)$ follows from $\phi([\M'\Phi])=1$.
If $p\in\proj(\M)$ also satisfies $\phi(p)=1$, then $\Phi=p\Phi\in p\H$, which implies $\M'\Phi\subset \M'p\H = p\M'\H \subset p\H$ and, hence, $p\ge [\M'\Phi]$.
}
\begin{equation}\label{eq:supp proj}
    \supp(\phi) = [\M'\Phi].
\end{equation}
Faithfulness of a normal semifinite weight $\omega$ is equivalent to $\supp(\omega)=1$. 
A vector $\Phi\in\H$ is called \emph{separating} for $\M$ if the normal positive linear functional $\phi$ that it implements is faithful.
Equivalently, if $\norm{a\Phi}=0\implies a=0$ for $a\in\M$.
A vector $\Phi$ is \emph{cyclic} for $\M$ if $\overline{\M\Phi} = \H$.
Being cyclic and being separating are dual notions:

\begin{lemma}[{\cite[Prop.~II.3.17]{takesaki1}}]\label{lem:cyclic separating}
    Let $\M$ be a von Neumann algebra on $\H$ and let $\Phi\in\H$.
    Then $\Phi$ is cyclic for $\M$ if and only if $\Phi$ is separating for $\M'$.
\end{lemma}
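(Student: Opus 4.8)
The plan is to prove the two implications separately, exploiting the symmetry $\M = \M''$ that makes ``cyclic for $\M$'' and ``separating for $\M'$'' genuinely dual notions.

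First I would show that being cyclic for $\M$ implies being separating for $\M'$. Assume $\overline{\M\Phi} = \H$ and take $a' \in \M'$ with $a'\Phi = 0$. For every $b \in \M$, since $a'$ commutes with $b$, we get $a' b \Phi = b\,a' \Phi = 0$, so $a'$ vanishes on the subspace $\M\Phi$. By cyclicity this subspace is dense, and since $a'$ is bounded (hence continuous) it must vanish on all of $\H$, i.e.\ $a' = 0$. Thus the vector state induced by $\Phi$ is faithful on $\M'$, which is exactly the separating property.

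For the converse, I would argue via the projection onto the cyclic subspace. Let $V := \overline{\M\Phi}$ and let $p := [V]$ be the orthogonal projection onto it. By construction $V$ is $\M$-invariant, so by the invariant-subspace characterization \eqref{eq:projections in M}, applied with the roles of $\M$ and $\M'$ interchanged (legitimate because $\M'' = \M$), the projection $p$ lies in $\M'$. Since $\M$ is unital, $\Phi = 1\Phi \in \M\Phi \subset V$, so $(1-p)\Phi = 0$. But $1 - p$ is again an element of $\M'$, and the assumption that $\Phi$ is separating for $\M'$ forces $1 - p = 0$, i.e.\ $p = 1$ and $V = \H$. Hence $\Phi$ is cyclic for $\M$.

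The only step that deserves attention, and which I would present as the crux rather than an obstacle, is the claim that the projection onto $\overline{\M\Phi}$ belongs to $\M'$; this is precisely where the bicommutant theorem enters through \eqref{eq:projections in M}, identifying ranges of projections in $\M'$ with the $\M$-invariant closed subspaces. Everything else reduces to one-line manipulations, so I expect no genuine difficulty: the lemma is essentially the operator-theoretic shadow of the algebraic duality between $\M$ and its commutant.
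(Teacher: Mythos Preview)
Your proof is correct and is essentially the standard argument (as found in the cited reference). The paper itself does not include a proof of this lemma, merely citing \cite[Prop.~II.3.17]{takesaki1}, so there is nothing further to compare.
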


A particularly important role in the theory of von Neumann algebras is played by vectors that are both cyclic and separating. \Cref{lem:cyclic separating} implies that a vector is cyclic and separating relative to $\M$ if and only if it has the same property relative to $\M'$.

The von Neumann algebra acts on the predual by left and right multiplication.
If $\phi\in\M_*$, the left multiplication $a\phi$ by an operator $a\in\M$ is defined as
\begin{equation}
    a\phi (b) = \phi(ba),\qquad b\in\M.
\end{equation}
The right multiplication is defined similarly.
The left (resp.\ right) support projection $\supp_l(\phi)$ (resp.\ $\supp_r(\phi)$) of $\phi\in\M_*$ is the smallest projection $p$ such that $p\phi =\phi$ (resp.\ $\phi p=\phi$).
There is a \emph{polar decomposition} theorem for $\M_*$ (see \cite[Sec.~III.4]{takesaki1}), which states that every $\phi\in\M_*$ is of the form 
\begin{equation}\label{eq:polar decomposition for functionals}
    \phi = v\abs \phi
\end{equation}
for a unique pair $(v,\abs\phi)$ of a partial isometry $v\in\M$ and positive linear functional $\abs\phi\in\M_*^+$ with $v^*v=\supp(\abs\phi)=\supp_r(\phi)$ and $vv^*=\supp_l(\phi)$ and a positive functional $\abs\phi\in\M_*^+$.

\subsection{Semifinite von Neumann algebras}\label{sec:semifinite vNas}

A (normal semifinite faithful) \emph{trace} $\tau$ on a von Neumann algebra $\M$ is a normal semifinite faithful weight, which is \emph{tracial} in the sense that
\begin{equation}\label{eq:tracial}
    \tau(a^*a)=\tau(aa^*),\qquad a\in\M,
\end{equation}
where both sides may be infinite.
Equivalently, traciality can be defined as unitary invariance: $\tau=u\tau u^*$ for all unitaries $u\in\M$.

% A normal semifinite  weight $\tau$ is \emph{tracial} if it satisfies $\tau(a^*a) = \tau(aa^*)$ for all $a\in\M$ or, equivalently, if $\tau(u^*bu)=\tau(b)$ for all positive elements $b\ge0$ and unitaries $u\in\U(\M)$.

% Type $\II_\oo$ are semifinite but not finite.
% A factor has type $\III$ if and only if it is not semifinite.  

The theory of traces on factors is directly connected to the Murray-von Neumann comparison theory for projections, which can be used to show that there is a map $\tau : \proj(\M)\to \bar\RR^+$ such that
\begin{equation}\label{eq:mvn-ordering-and-trace}
    p\preceq q \quad \iff \quad \tau(p)\le \tau(q),
\end{equation}
which is unique up to positive scalar multiples.
If $\M$ is a type $\I$ factor, we fix the arbitrary scaling by setting $\tau(p)=1$ for some minimal projection $p$. For $\M=\B(\H)$ this implies $\tau(p) = \tr(p)= \dim(p\H)$.
If $\M$ is a type $\II_1$ factor, it is customary to fix the scaling by setting $\tau(1)=1$.
In type $\II_\oo$, there is no canonical way to choose the scaling. 
In type $\III$ factors, where all nonzero projections are equivalent, \eqref{eq:mvn-ordering-and-trace} necessarily implies that $\tau(p)=\oo$ for all nonzero projections.
With these conventions, the classification of type $\I$ and $\II$ factors can be phrased in terms of the set of possible dimensions $D= \{\tau(p) : p\in\proj(\M)\}$ \cite[Sec.~1.4]{hiai_lectures_2021}: 
\begin{itemize}
    \item $D = \{1,\ldots,n\}$ if and only if $\M$ has type $\I_n$, $n\in\NN\cup\{\oo\}$. 
    % \item $D=\NN$ if and only if $\M$ has type $\I_\oo$,
    \item $D=[0,1]$ if and only if $\M$ has type $\II_1$,
    \item $D=\bar\RR^+$ if and only if $\M$ has type $\II_\oo$,
    \item $D=\{0,\oo\}$ if and only if $\M$ has type $\III$.
\end{itemize}
Using the spectral theorem, the map $\tau$ defined on projections can be extended to a \nsf trace, i.e., a normal semifinite faithful weight assigning the same value to $a^*a$ and to $aa^*$ for all $a\in\M$, on $\M$.
A factor is semifinite if and only if it is not of type $\III$, i.e., of type $\I$ or $\II$.
A factor is finite if and only if it is of type $\I_n$ with $n<\oo$ or of type $\II_1$.

A general von Neumann algebra is said to be \emph{(semi-)finite} if it admits a normal faithful tracial state, which is the case if and only if it is a direct integral of (semi-)finite factors.
In the case of an abelian von Neumann algebra $L^\oo(X,\mu)$, the possible \nsf traces correspond to the $\sigma$-finite measures $\nu\sim\mu$ via $\tau(f)=\int_X f\,d\nu$.

\subsubsection{\texorpdfstring{$L^p$}{Lp}-spaces}

As for measure spaces, one can define $L^p$-spaces over von Neumann algebras.
In the following, we will discuss the construction of $L^p$-spaces for semifinite von Neumann algebras, which, in particular, requires treating unbounded operators "in" von Neumann algebras. 
% $L^p$-spaces of non-semifinite von Neumann algebras can , which is considerably more involved, will be briefly summarized in \cref{sec:haagerup Lp}.

We fix a semifinite von Neumann algebra $\M$ with a \nsf trace $\tau$.
In order to define the $L^p$-spaces properly, we need to include unbounded operators.
A closed operator $a$ on a Hilbert space $\H$ that $\M$ acts on is \emph{affiliated with $\M$}, written 
\begin{equation*}
    a\aff\M,
\end{equation*}
if the spectral projections of the positive self-adjoint operator $\abs a=(a^*a)^{1/2}$ and the partial isometry $v$ appearing in the polar decomposition $a=v\abs a$ are elements of $\M$ \cite[Sec.~4.1]{hiai_lectures_2021}.
Equivalently, $a$ is affiliated with $\M$ if and only if $u'^*au'=a$ for all unitaries $u'\in\M'$.
An operator $a\aff\M$ is \emph{$\tau$-measurable} if $\lim_{t\to\oo} \tau(1_{(t,\oo)}(|a|)) \to 0$ \cite[Sec.~4.1]{hiai_lectures_2021}, where $1_A$ denotes the indicator function of a set $A$.
The set of $\tau$-measurable operators will be denoted 
\begin{equation*}
    L^0(\M,\tau)
\end{equation*}
Clearly, $a^*\aff\M$ if and only if $a\aff\M$, and the same holds for $\tau$-measurability.
If $a,b$ are $\tau$-measurable, then $a+b$ and $ab$ are closable and their closures are $\tau$-measurable, which turns $L^0(\M,\tau)$ into a *-algebra \cite[Sec.~4.1]{hiai_lectures_2021}.
% This turns $L^0(\M,\tau)$ into a *-algebra (see \cite[Prop.~4.10]{hiai_lectures_2021}).
The trace extends to positive operators $0\le a\in L^0(\M,\tau)$ via $\tau(a) = \lim_{t\to\oo} \tau(a 1_{[0,t)}(a))$.
The $L^p$-norm for $1\le p <\oo$ is defined on $L^0(\M,\tau)$ as
\begin{equation}
    \norm{a}_{L^p(\M,\tau)} = \tau(\abs a^p)^{1/p}  \in \bar\RR^+.
\end{equation}
The $L^p$-space $L^p(\M,\tau)$ is now simply the subspace of $L^0(\M,\tau)$ on which the $L^p$-norm is finite.
For $p=\oo$, we set $L^\oo(\M,\tau)=\M$.
The $L^p$-spaces enjoy the usual properties. For instance, they are complete, and Hölder's inequality holds, i.e., if $p^{-1}+q^{-1}=r^{-1}$ for $1\le p,q,r\le \oo$, then \cite[Prop.~4.43]{hiai_lectures_2021}
\begin{equation}
    \norm{ab}_{L^r(\M,\tau)}\le \norm{a}_{L^p(\M,\tau)}\norm{b}_{L^q(\M,\tau)},\qquad a,b\in L^0(\M,\tau).
\end{equation}
In particular, the $L^p$-spaces are closed under left and right multiplication with elements of $\M$.
The trace $\tau$ has a unique extension to a bounded linear function on $L^1(\M,\tau)$ such that $\abs{\tau(\rho)}\le \tau(\abs \rho)= \norm \rho_{L^1(\M,\tau)}$, $\rho\in L^1(\M,\tau)$.
For $\rho\in L^1(\M,\tau)$, the linear functional $\tau(\rho\;\!\placeholder):\M\to\CC$ is normal, i.e., an element of $\M_*$, and the map
\begin{equation}\label{eq:predual semifinite}
    L^1(\M,\tau) \ni \rho \mapsto \tau(\rho\;\!\placeholder) \in \M_*
\end{equation}
is an isomorphism of ordered Banach spaces.
Its inverse is the \emph{Radon-Nikodym derivative}:
\begin{equation}
    \phi = \tau(\rho_\phi(\placeholder)), \qquad \rho_\phi = \frac{d\phi}{d\tau}\in L^1(\M,\tau), \ \phi\in\M_*.
\end{equation}
This bijection between $\M_*^+$ and $L^1(\M,\tau)^+$ extends to weights: If $\phi$ is a normal semifinite weight, there is an operator $0\le \rho_\phi=d\phi/d\tau \aff \M$ such that $\phi(a) = \tau(\rho_\phi^{1/2}a \rho_\phi^{1/2})$, $a\ge0$.

Let us also mention that the $L^2$-space $L^2(\M,\tau)$ is a Hilbert space with the inner product
\begin{equation}
    \ip \xi\eta_{L^2(\M,\tau)} = \tau(\xi^*\eta), \qquad \xi,\eta\in L^2(\M,\tau),
\end{equation}
on which $\M$ acts by left multiplication.

We briefly discuss $L^p$-spaces for two basic examples of abelian von Neumann algebras and type $\I$ factors.
The $L^p$-spaces of an abelian von Neumann algebra $\M= L^\oo(X,\mu)$ relative to the trace $\tau = \int\placeholder\,d\mu$ are simply the standard $L^p$-spaces $L^p(X,\mu)$.
The $L^p$-spaces of a type $\I$ factor $\M = \B(\H)$ relative to the standard trace $\tr$ are the Schatten classes $L^p(\B(\H),\tr) = \T^p(\H)$ \cite[Ex.~4.34]{hiai_lectures_2021}.
In particular, $L^1(\B(\H),\tr)$ is the trace class $\T(\H)$.
% The $\tr$-measurable operators are the bounded operators $L^0(\B(\H),\tr)=\B(\H)$ \cite[Ex.~4.14]{hiai_lectures_2021}.

\subsubsection{Spectral scales and majorization}\label{sec:spectral scales}

In semifinite factors, we can use the identification $\M_*\cong L^1(\M,\tau)$ to define spectral notions for normal states relative to a choice of \nsf trace $\tau$.
We begin by discussing spectral notions of $\tau$-measurable operators.

For positive operators $0\le a\in L^0(\M,\tau)$, the \emph{distribution function} $D_a$ and the \emph{spectral scale} $\lambda_a$ are given by
\begin{equation}
    \begin{aligned}
          D_a(t) &= \tau(1_{(t,\oo)}(a)),\\
    \lambda_a(t) &= \inf\{s>0: D_a(s)\le t\},
    \end{aligned}\qquad\qquad t>0.
\end{equation}
For general $a\in L^0(\M,\tau)$, $\mu_a(t)=\lambda_{\abs a}(t)$, $t\in\RR$, are called the generalized singular numbers of $a$ \cite[Sec.~4.2]{hiai_lectures_2021}.\footnote{These generalize the concept of singular values of a matrix in the same way that the spectral scales are a generalization of the eigenvalues of a matrix, see below.}
The functions $D_a,\lambda_a:(0,\oo)\to\RR^+$ are non-increasing and right-continuous.
If $f:\RR^+\to \RR^+$ is a continuous non-decreasing function with $f(0)=0$, then $0\le f(a)\in L^0(\M,\tau)$ and
\begin{equation}\label{eq:integration with spectral scale}
    \tau(f(a)) = \int_0^\oo f(\lambda_a(t))\,dt.
\end{equation}
For instance, this implies $\norm{a}_{L^p(\M,\tau)} = \norm{\lambda_{\abs a}}_{L^p(\RR)}$ and $\tau(1_A(a)) = \abs{\lambda_a^{-1}(A)}$, where $\abs{\placeholder}$ denotes the Lebesgue measure.%
\footnote{Thus, the push-forward measure of the Lebesgue measure $dt$ by the map $\lambda_a:\RR^+\to\RR^+$ is precisely the spectral measure of $a$, evaluated with the trace $\tau$.}
We consider the spectral scale in the two basic examples:
\begin{example}\label{exa:spectral scales}
    \begin{enumerate}
        \item 
            Let $\M = L^\oo(X,\mu)$ with $\tau=\int\placeholder\,d\mu$, and let $f\in L^1(X,\mu)^+$.
            Then $\lambda_f = f^\downarrow\in L^1(\RR^+)$ is the $\mu$-equimeasurable decreasing rearrangement of $f$ \cite[Rem.~2.3.1]{fack1986generalized}.
        \item 
            Let $\M = \B(\H)$ and $\tau=\tr$.
            Let $a$ be a positive trace class operator and let $(\alpha_n)$ be its eigenvalues (ordered non-increasingly and repeated according to their multiplicity).
            Then 
            \begin{equation}
                \lambda_a(t) = \sum_n \alpha_n\, 1_{[n-1,n)}(t),\qquad t>0.
            \end{equation}
    \end{enumerate}
\end{example}

Following \cite{hiai_nakamura_maj,hiai_majorization_1987}, \emph{majorization} $\rho\preceq \sigma$ is defined for $\tau$-measurable positive operations $\rho,\sigma\in L^0(\M,\tau)^+$ via majorization of the spectral scales $\lambda_\rho \preceq\lambda_\sigma$, i.e., $\rho\preceq \sigma$ if and only if 
\begin{equation}
    \int_0^t \lambda_\rho(s)\,ds\ge \int_0^t \lambda_\sigma(s)\,ds, \qquad t>0,
\end{equation}
with equality for $t=\oo$. The latter condition is equivalent to $\tau(\rho)=\tau(\sigma)$.
Note that the definition depends on the choice of \nsf trace $\tau$.
For factors, however, majorization is intrinsically defined because there is a unique \nsf trace up to scalar multiples.

Of the many things that can be said about majorization theory in von Neumann algebras (see \cite{petz1985scale,hiai_majorization_1987,hiai_nakamura_maj,hiai1989distance,fack1986generalized,hiai_closed_1991}), we only need the structure of majorization for states on semifinite factors.
Let $\M$ be a semifinite factor with \nsf trace $\tau$.
For normal states $\psi$ on $\M$, we define 
\begin{equation}
    \lambda_\psi:=\lambda_{\rho_\psi}, \quad D_\psi = D_{\rho_\psi}, \qquad \rho_\psi = \frac{d\psi}{d\tau}.
\end{equation}
The distribution function and spectral scale relative to a scaled trace $\tau'=c\tau$, $c>0$, are dilations $D_\psi'(t)=cD_\psi(ct)$ and $\lambda'_\psi(t) = c^{-1}\lambda_\psi(c^{-1}t)$.\footnote{
Indeed, $\rho'_\psi = d\psi/d\tau' = c^{-1} d\psi/d\tau= c^{-1}\rho_\psi$, $D'_\psi(t) = \tau'(1_{(t,\oo)}(\rho_\psi')= c\tau(1_{(t,\oo)}(c^{-1}\rho_\psi)) = c D_\psi(ct)$, and $\lambda_\psi'(t) = \inf\{s>0 : D'_\psi(s)\le t\} = \inf\{c^{-1}s>0: D_\psi(s)\le c^{-1}t\} = c^{-1} \lambda_\psi(c^{-1}t)$.}
If $\phi$ is another normal state on $\M$, we define majorization $\psi\preceq\phi$ via majorization of the density operators $\rho_\psi\preceq\rho_\phi$ or, equivalently, majorization of the spectral scales $\lambda_\psi\preceq\lambda_\phi$.
Note that the majorization of states is independent from the choice of \nsf trace.

\begin{theorem}[{\cite[Thm.~2.5]{hiai_majorization_1987}}]\label{thm:majorization}
    Let $\M$ be a semifinite factor with \nsf trace $\tau$.
    Let $\psi,\phi$ be normal states on $\M$.
    The following are equivalent:
    \begin{enumerate}[(a)]
        \item $\psi \succeq \phi$, i.e., $\lambda_\psi \succeq\lambda_\phi$; % i.e., $\int_0^t \lambda_\rho(s)\,ds \ge \int_0^t\lambda_\sigma(t) \,ds$, $t>0$;
        \item $\tau(f(\rho_\psi))\ge \tau(f(\rho_\phi))$ for all continuous non-decreasing convex functions $f:\RR^+\to\RR^+$;
        \item $\psi \in \overline\conv\,\{\, u\phi u^* : u\in \U(\M)\,\}$.
    \end{enumerate}
    Moreover, the distance of the unitary orbits of $\psi$ and $\phi$ is given by:
    \begin{equation}
        \inf_{u\in\U(\M)}\, \norm{\psi-u\phi u^*} = \norm{\lambda_\psi-\lambda_\phi}_{L^1(\RR^+)} = \norm{D_\psi-D_\phi}_{L^1(\RR^+)}.
    \end{equation}
\end{theorem}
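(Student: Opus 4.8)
The plan is to pass to density operators through the isometric identification \eqref{eq:predual semifinite}, under which $\norm{\psi-u\phi u^*}=\norm{\rho_\psi-u\rho_\phi u^*}_{L^1(\M,\tau)}$, and all three conditions become statements about the positive $\tau$-measurable operators $\rho_\psi,\rho_\phi$ and their spectral scales $\lambda_\psi,\lambda_\phi$. The one structural input that replaces finite-dimensional linear algebra throughout is factoriality in the form of Murray--von Neumann comparison: in a semifinite factor two projections of equal trace are equivalent, and (being a factor) their complements are as well, so they can be intertwined by a unitary of $\M$. This is what lets me ``align'' spectra by approximation.

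For (a)$\iff$(b) I would apply the integral representation \eqref{eq:integration with spectral scale}, which rewrites $\tau(f(\rho_\psi))$ and $\tau(f(\rho_\phi))$ as the ordinary integrals $\int_0^\oo f(\lambda_\psi)$ and $\int_0^\oo f(\lambda_\phi)$ of the decreasing rearrangements on $\RR^+$ (for which we may assume $f(0)=0$); the equivalence is then the classical Hardy--Littlewood--P\'olya characterization of majorization by convex functions, with the constraint $\tau(\rho_\psi)=\tau(\rho_\phi)=1$ supplying equality of total integrals. The implication (c)$\Rightarrow$(a) is soft: every orbit point has $\lambda_{u\phi u^*}=\lambda_\phi$, and each Ky Fan functional $\rho\mapsto\int_0^t\lambda_\rho = \sup\{\tau(\rho e):e\in\proj(\M),\,\tau(e)\le t\}$ is convex and lower semicontinuous, hence cannot increase under convex combinations or closure, so every element of $\overline\conv\{u\phi u^*\}$ is majorized by $\phi$.

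The reverse implication (a)$\Rightarrow$(c) is the Schur--Horn--type direction and is where I expect real work. I would argue by Hahn--Banach separation: if $\psi\notin\overline\conv\{u\phi u^*\}$ there is a self-adjoint $a\in\M$ with $\psi(a)>\sup_u(u\phi u^*)(a)$, and adding a multiple of the unit lets us take $a\ge0$. The crux is the von Neumann trace inequality $\tau(ba)\le\int_0^\oo\lambda_b\lambda_a$ for positive $b$, whose supremum over the unitary orbit of $\rho_\phi$ equals $\int_0^\oo\lambda_\phi\lambda_a$, attained in the limit by unitaries aligning the spectral projections of $\rho_\phi$ with those of $a$ in decreasing order. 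Granting this, $\psi(a)\le\int_0^\oo\lambda_\psi\lambda_a$, while an integration by parts using $\int_0^s\lambda_\psi\le\int_0^s\lambda_\phi$ and the monotonicity of $\lambda_a$ gives $\int_0^\oo\lambda_\psi\lambda_a\le\int_0^\oo\lambda_\phi\lambda_a=\sup_u(u\phi u^*)(a)$, contradicting the separation.

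For the distance formula the lower bound is immediate from contractivity of the rearrangement map, $\norm{\lambda_a-\lambda_b}_{L^1(\RR^+)}\le\norm{a-b}_{L^1(\M,\tau)}$ for positive $a,b$ (a standard property of generalized $s$-numbers, cf.\ \cite{fack1986generalized}), applied with $b=u\rho_\phi u^*$ so that $\lambda_b=\lambda_\phi$. The matching upper bound I would obtain by discretization: approximate $\rho_\psi,\rho_\phi$ in $L^1$ by finite-spectrum operators, use equal-trace equivalence to conjugate the spectral layers of $\rho_\phi$ onto those of $\rho_\psi$ so that the two commute with eigenvalues matched in decreasing order, evaluate the $L^1$ distance inside the generated abelian algebra as $\int_0^\oo|\lambda_\psi-\lambda_\phi|$, and pass to the limit. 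The identity $\norm{\lambda_\psi-\lambda_\phi}_{L^1(\RR^+)}=\norm{D_\psi-D_\phi}_{L^1(\RR^+)}$ is then a layer-cake fact: $\lambda_\psi$ and $D_\psi$ are generalized inverses, so reflecting subgraphs across the diagonal is an area-preserving bijection carrying the symmetric difference of the $\lambda$-subgraphs onto that of the $D$-subgraphs. The main obstacle, shared by the upper bound and by the attainment of $\int_0^\oo\lambda_\phi\lambda_a$, is precisely these \emph{alignment} steps: what is an elementary eigenbasis rotation in finite dimensions must here be executed by approximation with finite-spectrum operators and repeated use of unitary equivalence of equal-trace projections, with care about the $L^1$-limits.
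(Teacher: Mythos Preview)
The paper does not prove this theorem; it is stated with a citation to \cite[Thm.~2.5]{hiai_majorization_1987} and used as a black box throughout (e.g., in \cref{thm:nielsens thm2}, \cref{prop:entanglement monotones}, and the discussion following \cref{thm:distance spectral states}). There is therefore no proof in the paper to compare against.

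That said, your outline is a faithful reconstruction of the standard argument, and is essentially how the cited reference proceeds. A few remarks on the places you flag as delicate. For (a)$\Rightarrow$(c), the separation argument is fine, but the reduction to $a\ge0$ deserves one more word in the infinite case: adding $c\cdot1$ shifts both $\psi(a)$ and every $u\phi u^*(a)$ by the same constant $c$ since $\psi,\phi$ are states, so the separation persists and $\lambda_a$ remains bounded even though $1\notin L^1(\M,\tau)$. The identity $\sup_u\tau(u\rho_\phi u^*a)=\int_0^\oo\lambda_\phi\lambda_a$ is exactly the content of the ``alignment'' you identify as the obstacle; in the literature it is usually packaged as the Ky Fan--type duality $\int_0^t\lambda_\rho=\sup\{\tau(\rho e):\tau(e)\le t\}$ together with an approximation of $a$ by simple functions of a fixed increasing family of projections as in \eqref{eq:typeII family of projections}, which is precisely the discretization you propose for the upper bound in the distance formula. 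So the two ``hard'' steps share the same mechanism, and once you have carried it out for one, the other follows. The layer-cake identity $\norm{\lambda_\psi-\lambda_\phi}_{L^1}=\norm{D_\psi-D_\phi}_{L^1}$ is indeed purely measure-theoretic and needs no operator input.
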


We conclude our discussion with a construction that shows that every possible spectral scale occurs as the spectral scale of a normal state.
Since this clear in type $\I$ (see \cref{exa:spectral scales}), we consider a type $\II$ factor $\M$ with a \nsf trace $\tau$.
Then we can find an increasing family of projections $e_t$, $0\le t \le \tau(1)$ with 
\begin{equation}\label{eq:typeII family of projections}
    \tau(e_t) = t, \qquad \vee_{s<t} e_s = e_t, \qquad e_{\tau(1)} =1
\end{equation}
(see \cite[Sec.~III.1.7]{blackadar_operator_2006} or \cite[Lem.~4.1]{haagerup_equivalence_1990} for a construction).
Now if $\lambda:(0,\oo)\to\RR^+$ is a non-increasing right-continuous function, the Lebesgue-Stieltjes integral 
\begin{equation}\label{eq:operator with given spectral scale}
    \rho= \int_0^\oo \lambda(t)\,de_t
\end{equation}
defines a positive operator $\rho\in L^0(\M,\tau)$ with $\lambda_a(t) = \lambda(t)$. 
In particular, if $\lambda$ is a probability density, then $\rho\in L^1(\M,\tau)$ and the normal state $\psi=\tr(\rho(\placeholder))$ has spectral scale $\lambda$.

\subsection{Modular theory}\label{sec:modular}

Modular theory was initiated by Tomita and Takesaki \cite{takesaki_tomitas_1970} and led to many breakthroughs in the theory of von Neumann algebras in the early 1970s.
Perhaps its most important application is the classification of type $\III$ factors, due to Connes \cite{connes_classification_1973}.
In the following, we give a brief exposition roughly following \cite{hiai_lectures_2021}.
For a complete account, we refer to  \cite{takesaki2} or \cite{stratila}.

Let $\M$ be a von Neumann algebra on $\H$ and let $\Omega$ be a cyclic separating vector.
The operator $S_\Omega x\Omega= x^*\Omega$ defines a closable anti-linear operator on $D(S_\Omega)=\M\Omega$.
% Let $\bar S_\Omega = J_\Omega \Delta_\Omega^{1/2}$ be the polar decomposition of its closure.
The positive self-adjoint operator $\Delta_\Omega$ and the conjugation\footnote{\label{foot:conjugation}
A conjugation on a Hilbert space is an anti-unitary involution, i.e., an anti-linear operator $J:\H\to\H$ such that $J^2=1$ and $\ip{J\Psi}{J\Phi}=\overline{\ip\Psi\Phi}$ for all $\Psi,\Phi\in\H$.}
$J_\Omega$ such that $\bar S_\Omega = J_\Omega \Delta_\Omega^{1/2}$ is the polar decomposition of $\bar S_\Omega$ are called the \emph{modular operator} and the \emph{modular conjugation} associated with $\Omega$.
It follows that $J_\Omega\Omega =\Delta_\Omega\Omega=\Omega$.
Tomita's Theorem (see \cite[Sec.~2.1]{hiai_lectures_2021}) asserts that
\begin{equation}\label{eq:tomitas thm}
    J_\Omega \M J_\Omega = \M',\qandq \Delta_\Omega^{it} \M \Delta_{\Omega}^{-it}=\M,\quad t\in\RR.
\end{equation}
The one-parameter group $\sigma^\omega$ of automorphisms given by
\begin{equation}\label{eq:modular group}
    \sigma^\omega_t(a) = \Delta_\Omega^{it}a\Delta_\Omega^{-it}, \qquad a\in\M,\ t\in\RR,
\end{equation}
which only depends on the marginal state $\omega$ that $\Omega$ induces on $\M$, is the so-called \emph{modular automorphism group}.
The state $\omega$ is invariant under its own modular automorphism group, i.e., $\omega\circ \sigma^\omega_t=\omega$ for all $t\in\RR$ (this follows from $\Delta_\Omega\Omega=\Omega$).%
\footnote{The modular automorphism group is uniquely determined by the properties that (a) it is point-$\sigma$-weakly continuous (see \cref{sec:hilbert-spaces}), and (b) it satisfies the $\omega$-KMS condition at inverse temperature $\beta=-1$ \cite{takesaki2}.}

Connes showed that the modular automorphism groups of different states are cocycle equivalent (see \cite[Sec.~6.2]{hiai_lectures_2021}).
To understand this, we need a few definitions.
The modular flow is a one-parameter group of automorphisms that is point-$\sigma$-weakly continuous (see \cref{sec:hilbert-spaces}).
Equivalently, we can view it as an action of the group $\RR$ on the von Neumann algebra.\footnote{An action $\theta: G\acts\M$ on a von Neumann algebra $\M$ is a group homomorphism $G\ni \theta\mapsto \theta_g\in\Aut(\M)$.}
% The modular flow is an example of an action $\theta:\RR\acts\M$ of the group $\RR$ on a von Neumann algebra, which is a one-parameter group $(\theta_t)$ of automorphisms such that $\lim_{t\to0}\norm{\omega-\omega\circ\theta_t}=0$ for all $\omega\in\M_*$.
A \emph{cocycle} for an action $\theta:\RR\acts\M$ (a $\theta$-cocyle for short) is a $\sigma$-strongly continuous one-parameter family of unitaries $(u_t)$ such that
\begin{equation}\label{eq:cocycle}
    u_{t+s} = u_t \theta_t(u_s), \qquad t,s\in\RR.
\end{equation}
If $\varphi$ is another faithful state, then the $\RR$-actions $\sigma^\varphi$ and $\sigma^\omega$ are cocycle equivalent in the sense that there exists a $\sigma^\varphi$-cocycle $(u_t)$ such that
\begin{equation}\label{eq:connes cocyle}
    u_t\sigma^\varphi_t(a)u_t^* = \sigma^\omega_t(a), \qquad a\in\M,\ t\in\RR.
\end{equation}
The so-called \emph{Connes cocycle} $(D\omega:D\varphi)_t:=u_t$ is uniquely determined by these properties.
Therefore, the equivalence class $\delta_t:=[\sigma^\omega_t]$ in the outer automorphism group $\Out(\M) = \Aut(\M) / \Inn(\M)$ does not depend on $\omega$.
This defines, for every von Neumann algebra $\M$, a unique continuous group homomorphism, 
\begin{equation}\label{eq:modular hom}
    \delta_\M: \RR \to \Out\M,
\end{equation}
called the \emph{modular homomorphism}. It can be used to define the classification of type $\III$ factors (see \cite[Sec.~I.3.1]{stratila}).
We briefly look at the concepts above in a familiar example:

\begin{example}\label{exa:modular-thry-typeI}
    Consider the Hilbert space $\HS(\K)$ of Hilbert-Schmidt operators on a Hilbert space $\K$.
    Let $\M=\B(\K)$, which acts on $\HS(\K)$ via left multiplication.
    If $\rho$ is a faithful density operator on $\K$, then its square root $\rho^{1/2}$ is a unit vector in $\HS(\K)$, which is cyclic separating for $\M$.
    The resulting modular objects are
    \begin{equation}
        J_{\rho^{1/2}}\xi = \xi^*,\qquad \Delta_{\rho^{1/2}}\xi = \rho \xi \rho^{-1}, \qquad \xi\in \HS(\K).
    \end{equation}
    The modular automorphism group acts on $\M=\B(\K)$ as $\sigma^\rho_t(a) = \rho^{it}a\rho^{-it}$, where $\omega=\tr\rho(\placeholder)$.
    If $\sigma$ is another faithful density operator, then the Connes cocycle is $(D\sigma:D\rho)_t = \sigma^{it}\rho^{-it}$.
\end{example}

\subsubsection{The standard form}\label{sec:std form}

The theory of standard forms is due to Haagerup \cite{haagerup_standard_1975}.
The standard form of a von Neumann algebra is a canonical representation, which for type $\I$ factors $\M=\B(\H)$ reduces to the left multiplication on the Hilbert space of Hilbert-Schmidt operators (see \cref{exa:modular-thry-typeI}).

% Another application of modular theory is the standard form of a von Neumann algebra:

\begin{theorem}[Haagerup \cite{haagerup_standard_1975}\footnote{In \cite{haagerup_standard_1975}, uniqueness is shown assuming \ref{it:std form4}, but in \cite[Lem.~4.9]{ando_ultraproducts_2014}, \ref{it:std form4} is shown to follow from \ref{it:std form1} to \ref{it:std form3}.}]\label{thm:std form}
    For every abstract von Neumann algebra $\M$, there is a Hilbert space $\H$ carrying a faithful representation $\M\subset\B(\H)$, a self-dual positive cone $\P\subset\H$ and a conjugation $J$ such that:
    \begin{enumerate}[(i)]
        \item\label{it:std form1} $J\M J=\M'$,
        % \item\label{it:std form2} $JzJ=z^*$ for all $z\in Z(\M)$,
        \item\label{it:std form2} $aJa\P \subset \P$ for all $a\in \M$,
        \item\label{it:std form3} $J\Omega = \Omega$ for all $\Omega\in\P$.
        % \item\label{it:std form5} for every normal state $\varphi$ on $\M$, there exists a unique vector $\Phi\in\P$ such that $\varphi(a)=\ip\Phi{a\Phi}$ for all $a\in\M$.
    \end{enumerate}
    These properties determine $(\H,J,\P)$ up to unitary equivalence.
    Moreover, the following hold:
    \begin{enumerate}[resume*]
        \item\label{it:std form4} $Jz J = z^*$ for all $z\in Z(\M)$.
        \item\label{it:std form5} For $\phi\in\M_*^+$, there exists a unique vector $\Phi\in\P$ such that $\phi(a)=\ip\Phi{a\Phi}$ for all $a\in\M$. 
        $\Phi$ is cyclic separating if and only if $\phi$ is faithful.
        If $\Psi,\Phi\in\P$ correspond to $\psi,\phi\in\M_*^+$ then
        \begin{equation}\label{eq:state-vector-estimate}
            \norm{\Psi-\Phi}^2 \le \norm{\psi-\phi}\le \norm{\Psi-\Phi}\norm{\Psi+\Phi}.
        \end{equation}
        \item\label{it:std form6} For every automorphism $\alpha$ on $\M$, there exists a unique unitary $u_\alpha$ such that $u_\alpha J = J u_\alpha$, $u_\alpha\P=\P$ and $u_\alpha a u_\alpha^* = \alpha(a)$.
        % The map $\alpha\mapsto u_\alpha$ is an anti-representation.
        \item\label{it:std form7} If $\Omega\in\P$ is cyclic separating for $\M$, then the modular conjugation is $J_\Omega=J$.
    \end{enumerate}
\end{theorem}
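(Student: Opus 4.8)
Since $\M$ has separable predual it carries a faithful normal state $\omega$; the GNS construction then yields a faithful representation $\M\subset\B(\H)$ with a cyclic vector $\Omega$ that is separating because $\omega$ is faithful. I read the entire standard form off the modular data of $\Omega$ provided by Tomita's theorem \eqref{eq:tomitas thm}. Setting $J:=J_\Omega$ and $\Delta:=\Delta_\Omega$, I take $\P$ to be the natural cone
\[
    \P \;=\; \overline{\{\,aJaJ\,\Omega : a\in\M\,\}}\;=\;\overline{\Delta^{1/4}\,\M^+\Omega}.
\]
Property \ref{it:std form1} is exactly Tomita's identity $J\M J=\M'$. For \ref{it:std form3}, the antilinear relation $J(a\eta)=(JaJ)(J\eta)$ together with $J\Omega=\Omega$, $J^2=1$, and $[a,JaJ]=0$ gives $J(aJaJ\Omega)=aJaJ\Omega$, so $J$ fixes the generators and hence all of $\P$; in particular $J\xi=\xi$ on $\P$, so the property $aJa\,\P=aJaJ\,\P$ of \ref{it:std form2} reduces to the computation $aJaJ\cdot bJbJ\,\Omega=(ab)J(ab)J\,\Omega\in\P$ (using that $JbJ\in\M'$ commutes with $a$). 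Both statements pass from the generating set to $\P$ by continuity.

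\textbf{Main obstacle.} The technical heart is to show that $\P$ is \emph{self-dual} and that $\Phi\mapsto\omega_\Phi$, $\omega_\Phi(a):=\ip{\Phi}{a\Phi}$, is a bijection of $\P$ onto $\M_*^+$ (the existence half of \ref{it:std form5}). For self-duality I would first establish the equivalent descriptions of the cone, notably $\overline{\Delta^{1/4}\M^+\Omega}=\overline{\Delta^{-1/4}\M'^+\Omega}$, which exhibits the symmetric roles of $\M$ and $\M'$, and deduce $\P=\{\xi:\ip{\xi}{\eta}\ge0\ \forall\eta\in\P\}$. The crucial analytic output is the Powers--Størmer sandwich \eqref{eq:state-vector-estimate}: the upper bound is the Cauchy--Schwarz identity $\psi(a)-\phi(a)=\tfrac12\ip{\Psi-\Phi}{a(\Psi+\Phi)}+\tfrac12\ip{\Psi+\Phi}{a(\Psi-\Phi)}$, while the lower bound $\norm{\Psi-\Phi}^2\le\norm{\psi-\phi}$ is the delicate half and relies on self-duality, decomposing the $J$-invariant vector $\Psi-\Phi$ into orthogonal cone components and separating them by a test operator in the unit ball of $\M$. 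The lower bound forces injectivity of $\Phi\mapsto\omega_\Phi$, and, since the sandwich \eqref{eq:state-vector-estimate} makes this map a homeomorphism onto its image, surjectivity onto $\M_*^+$ then follows from a closedness-and-density argument exploiting the self-dual cone. Finally, for $\Phi\in\P$ one has $J\Phi=\Phi$, hence $\overline{\M\Phi}=\overline{\M'\Phi}$, so $\Phi$ is cyclic iff separating; by \cref{lem:cyclic separating} and \eqref{eq:supp proj} this common condition is faithfulness of $\omega_\Phi$, which is the last clause of \ref{it:std form5}.

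\textbf{Uniqueness.} Given another standard form $(\H_2,J_2,\P_2)$ with representation $\pi_2$, I use the bijections of \ref{it:std form5} to set $U\Phi_\phi:=\Phi^{(2)}_\phi$ on the total set $\P$. Applying the Powers--Størmer estimate in both forms controls the inner-product geometry of the cones well enough that $U$ extends to a unitary; one then checks that $U$ intertwines the representations (both implement the same action on $\M_*^+$), satisfies $UJ=J_2U$, and maps $\P$ onto $\P_2$. This also shows a posteriori that the construction does not depend on the chosen state $\omega$.

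\textbf{Remaining items.} These follow quickly. For \ref{it:std form6}, an automorphism $\alpha$ is spatial in the standard representation; fixing the implementer by the requirement $u_\alpha\P=\P$ gives existence, and uniqueness is clean: a second such implementer differs by a unitary $w\in Z(\M)\subset\M'$, and $\omega_{w\Omega}=\omega$ (as $w$ commutes with $\M$) forces $w\Omega=\Omega$ by injectivity, hence $w=1$. Property \ref{it:std form7} is immediate from uniqueness, since a cyclic separating $\Omega\in\P$ produces, via its own modular data, a standard form $(J_\Omega,\P_\Omega)$, forcing $J_\Omega=J$. Finally \ref{it:std form4} is a modular computation: for central $z$ the modular group fixes $z$, so $\Delta^{1/2}z\Omega=z\Omega$ and $S_\Omega z\Omega=z^*\Omega$ yield $Jz\Omega=z^*\Omega$; since $JzJ\in\M'$ commutes with each $a\in\M$, $JzJ\,a\Omega=a\,Jz\Omega=z^*a\Omega$ on the dense set $\M\Omega$, whence $JzJ=z^*$.
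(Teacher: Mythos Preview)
The paper does not prove this theorem; it is stated with attribution to Haagerup \cite{haagerup_standard_1975} (with the footnote pointing to \cite{ando_ultraproducts_2014} for the fact that \ref{it:std form4} follows from \ref{it:std form1}--\ref{it:std form3}), and no argument is given beyond the citation. Your outline follows the classical Tomita--Takesaki construction of the natural cone and correctly identifies self-duality of $\P$ together with the Powers--St{\o}rmer inequality as the analytic heart of the matter, which is indeed how the original proof proceeds.
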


A triple $(\H,J,\P)$ satisfying the specified properties is called a \emph{standard form} of $\M$.
We will later see that there is a canonical way to construct a standard form, which we refer to as \emph{the} standard form and which we denote
\begin{equation*}
    (L^2(\M),J,L^2(\M)^+).
\end{equation*}
Note that items \ref{it:std form1} to \ref{it:std form3} are invariant under swapping $\M$ and $\M'$. Thus, $(\H,J,\P)$ is a standard form of $\M$ if and only if it is a standard form of $\M'$.
A representation of $\M$ on a Hilbert space $\H$ is a \emph{standard representation} if there exists a self-dual positive cone $\P$ and a conjugation $J$ such that $(\H,J,\P)$ is a standard form of $\M$.

\begin{example}\label{exa:std form typeI}
    A standard form of $\B(\H)$ is given by the action of left multiplication on the Hilbert-Schmidt class $\HS(\H)$ with the modular conjugation given by the adjoint operation $J\xi=\xi^*$ and the self-dual positive cone is given by $\P=\HS(\H)^+$, the cone of positive Hilbert-Schmidt operators.
    The vector implementing a normal state $\rho$ is the square root $\rho^{1/2}$ (cp.\ \cref{exa:modular-thry-typeI}).
\end{example}

This generalizes to semifinite von Neumann algebras:

\begin{example}\label{exa:std form semifinite}
    A standard form of a semifinite von Neumann algebra $\M$ with \nsf trace $\tau$ is given by the action of left multiplication on $L^2(\M,\tau)$ with the modular conjugation given by the adjoint operation $J\xi =\xi^*$ and the self-dual positive cone $\P = L^2(\M,\tau)^+$.
    The vector implementing a normal state $\omega$ is the square root $\rho^{1/2}$ of the density operator $\rho=d\omega/d\tau\in L^1(\M,\tau)$.
\end{example}

\begin{lemma}\label{lem:std rep}
    A von Neumann algebra $\M$ on a Hilbert space $\H$ is in standard representation if and only if it admits a cyclic separating vector.%
    \footnote{This Lemma needs our standing convention that Hilbert spaces are separable. It holds for arbitrary Hilbert spaces if the von Neumann algebra is $\sigma$-finite, which holds automatically if the Hilbert space is separable.}
\end{lemma}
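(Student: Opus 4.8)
The plan is to prove the two implications separately, using Haagerup's standard form theorem (\cref{thm:std form}) as the main input on both sides. For the forward direction, suppose $(\H,J,\P)$ is a standard form of $\M$. Since $\H$ is separable, $\M$ is $\sigma$-finite, and I would first produce a faithful normal state $\omega\in\nstates(\M)$: take a maximal family of normal states with pairwise orthogonal supports, note it is countable by $\sigma$-finiteness, and average it with weights $2^{-n}$. By \cref{thm:std form}\ref{it:std form5} the unique vector $\Omega\in\P$ implementing $\omega$ is cyclic and separating precisely because $\omega$ is faithful, which furnishes the desired vector.

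For the converse, suppose $\Omega\in\H$ is cyclic and (after normalizing) a separating unit vector, and set $\omega(a)=\ip\Omega{a\Omega}$. Then $\omega$ is a faithful normal state, and the given representation realizes $(\H,\M,\Omega)$ as the GNS triple of $\omega$. I would invoke \cref{thm:std form} to fix a standard form $(\H_0,J_0,\P_0)$ with faithful representation $\pi_0\colon\M\to\B(\H_0)$; by \ref{it:std form5} there is a cyclic separating $\Omega_0\in\P_0$ implementing $\omega$. Uniqueness of the GNS construction then yields a unitary $U\colon\H\to\H_0$ with $Ua=\pi_0(a)U$ for all $a\in\M$ and $U\Omega=\Omega_0$; in particular $U\M U^*=\pi_0(\M)$ and hence $U\M'U^*=\pi_0(\M)'$. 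Transporting the data along $U$, I would set $J:=U^*J_0U$ and $\P:=U^*\P_0$ and check that properties \ref{it:std form1}--\ref{it:std form3} together with self-duality carry over verbatim, e.g. $J\M J=U^*J_0\pi_0(\M)J_0U=U^*\pi_0(\M)'U=\M'$, and likewise $aJa\P\subset\P$ and $J\xi=\xi$ on $\P$ since $U$ intertwines the representations and preserves inner products. Thus $(\H,J,\P)$ is a standard form and the representation is standard. (Alternatively one may take $J=J_\Omega$ and $\P$ the natural positive cone of $\Omega$ from Tomita--Takesaki theory; that this choice of $J$ agrees is \cref{thm:std form}\ref{it:std form7}.)

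The genuinely delicate point is the forward direction's reliance on the existence of a \emph{faithful} normal state, which is exactly where separability of $\H$---equivalently $\sigma$-finiteness of $\M$---is indispensable and is the content of the footnote: for non-$\sigma$-finite $\M$ the standard form still exists, yet no single cyclic separating vector need exist, so the stated equivalence genuinely fails. The converse, by contrast, is essentially bookkeeping once GNS uniqueness and \cref{thm:std form} are available; the only care required is to verify that conditions \ref{it:std form1}--\ref{it:std form3}, which implicate both $\M$ and $\M'$, survive transport along $U$, and this is automatic because $U$ intertwining $\M$ forces it to intertwine $\M'$ as well.
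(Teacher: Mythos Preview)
Your proof is correct. The forward direction is essentially the paper's argument—produce a faithful normal state and take its representing vector in $\P$—though you supply more detail on why a faithful state exists and invoke item~\ref{it:std form5} directly, whereas the paper re-derives cyclicity from $J$-invariance and $J\M J=\M'$.

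The converse is where the two genuinely differ. The paper constructs the standard form directly from Tomita--Takesaki theory: it sets $J=J_\Omega$ and $\P=\overline{\{aJa\Omega:a\in\M\}}$, and checks items \ref{it:std form1}--\ref{it:std form3} (citing \cite{haagerup_standard_1975} for the self-duality of $\P$, which is the one nontrivial verification). Your primary route instead invokes the abstract existence of a standard form and transports it to $\H$ via GNS uniqueness. This is softer: it avoids verifying self-duality of the positive cone by hand, at the cost of being less explicit about what $J$ and $\P$ are on $\H$. You do mention the paper's constructive approach as an alternative, so the two treatments are compatible; your transport argument has the mild advantage of using only the statement of \cref{thm:std form} rather than any of its internal proof.
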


\begin{proof}[Proof sketch.]
    If $\Omega$ is a cyclic separating vector, then a standard form $(\H,J,\P)$ is given by 
    \begin{equation}
        J=J_\Omega,\qquad \P = \overline{\{aJa\Omega : a\in \M\}}.
    \end{equation}
    To see this, one has to check that $\P$ is a self-dual positive cone (this requires some work, see \cite[Thm.~1.1]{haagerup_standard_1975}) and that items \ref{it:std form1} to \ref{it:std form3} of \cref{thm:std form} hold.
    Item \ref{it:std form1} follows from Tomita's theorem (see \eqref{eq:tomitas thm}), and the other two items are clear.
    Conversely, if $(\H,J,\P)$ is a standard form, consider a unit vector $\Omega\in\P$ that implements a faithful state $\omega$ on $\M$.
    Then $\Omega$ is faithful for $\M$, which is equivalent to $\Omega$ being cyclic for $\M'$.
    Since $\Omega$ is in the positive cone, it is $J$-invariant which implies that it is cyclic for $\M$: $\overline{\M\Omega} = \overline{J\M'J\Omega} = \overline{J\M'\Omega} = J\H=\H$.
\end{proof}

\subsubsection{The core of a von Neumann algebra}\label{sec:core}

Roughly speaking, the core of a von Neumann algebra $\M$ is a canonical semifinite extension $\Tilde\M\supset\M$ equipped with a canonical trace and a measure scaling $\RR$-action.
This structure, which can be constructed from modular theory, allows for the construction of canonical $L^p$-spaces associated with an arbitrary von Neumann algebra (see \cref{sec:haagerup Lp}).
The latter is of great significance, in particular, for applications to quantum information theory, since the $L^p$-spaces allow for direct generalizations for expressions such as $F(\rho,\sigma) = \norm{\rho^{1/2}\sigma^{1/2}}_1$ (the fidelity of density operators $\rho$ and $\sigma$, see \cite{nielsen_quantum_2010} or \cref{sec:fidelity}) even if $\rho$ and $\sigma$ are states on a general von Neumann algebra \cite{hiai_quantum_2021}.
What is even more important for us is that the core of a von Neumann algebra gives rise to the so-called flow of weights (see \cref{sec:spectral states}), which will be an indispensable tool for understanding embezzlement of entanglement (see \cref{sec:strong-entanglement}).

\begin{theorem}[{\cite{falcone_non-commutative_2001}}]\label{thm:core}
    For every von Neumann algebra $\M$, there is a semifinite extension\footnote{Here, semifinite extension simply means that $\M\subset\Tilde\M$ is a subalgebra and that $\Tilde\M$ is semifinite.} $\Tilde\M\supset\M$, a \nsf trace $\tau$ on $\Tilde\M$, and a point-$\sigma$-weakly continuous action $\tilde\theta:\RR\acts\Tilde\M$ that scales the trace
    \begin{equation}\label{eq:core1}
        \tau\circ \tilde\theta_s = e^{-s}\tau,\qquad s\in\RR,
    \end{equation}
    and has $\M$ as its invariant subalgebra
    \begin{equation}\label{eq:core2}
        \M = \Tilde\M^{\tilde\theta} :=\{ y\in \Tilde\M : \tilde\theta_t(y)=y\ \forall t\in\RR\}.
    \end{equation}
    The triple $c(\M)=(\Tilde\M,\tilde\theta,\tau)$ can be constructed canonically and is uniquely determined by the properties above up to isomorphism and cocycle equivalence (see \eqref{eq:connes cocyle}).
    %an appropriate notion of equivalence.\footnote{The appropriate notion of equivalence is uniqueness up to an iosmorphism that preserves the trace and that preserves the $\RR$-action up to cocycle equivalence, where a pair $\sigma,\sigma'$ of $\RR$-actions on a von Neumann algebra is cocycle equivalent if there is a $\sigma$-cocycle $(u_t)$ such that $\sigma_t'=u_t^*\sigma_t(\placeholder)u_t$.}
    % This is not entirely correct. The triple $(\N,[\tilde\theta],\tau)$ is unique up to isomorphisms, where $[\tilde\theta]$ denotes the equivalence class of $\tilde\theta$ with respect to cocycle equivalence.
    % If $\theta:\RR\acts\M$ is an $\RR$-action then a one-parameter family of unitaries $(u_t)$ is a $\theta$-cocycle if $u_{t+s} = u_t \theta_t(u_s)$ for all $t,s\in\RR$.
    % In this case, $\sigma_t = u_t\theta_t(\placeholder) u_t^*$ defines another $\RR$-action on $\N$ and it is said that $\theta$ and $\sigma$ are cocycle equivalent.
    Moreover:
    \begin{enumerate}[(i)]
        \item The relative commutant of $\M$ in $\Tilde\M$ is the center: $\M'\cap\Tilde\M=Z(\Tilde\M)$
        \item $\M$ is a factor if and only if $\Tilde\theta$ is centrally ergodic, i.e., $Z(\Tilde\M)^{\theta}=\CC1$.
    \end{enumerate}
\end{theorem}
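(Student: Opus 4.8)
The plan is to realize the core concretely as the crossed product of $\M$ by its modular automorphism group and to identify the trace–scaling flow $\tilde\theta$ with the dual action. Since $\M$ has separable predual it carries a faithful normal state $\omega$ with modular group $\sigma^\omega$ (see \eqref{eq:modular group}). I would set $\Tilde\M:=\M\rtimes_{\sigma^\omega}\RR$, the von Neumann algebra on $L^2(\RR,\H)$ generated by the operators $(\pi(a)\xi)(t)=\sigma^\omega_{-t}(a)\xi(t)$ for $a\in\M$ together with the translations $(\lambda_s\xi)(t)=\xi(t-s)$, which satisfy $\lambda_s\pi(a)\lambda_s^*=\pi(\sigma^\omega_s(a))$. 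The flow $\tilde\theta$ is the dual action $\hat\sigma$ of $\hat\RR\cong\RR$, implemented by $(\mu_p\xi)(t)=e^{ipt}\xi(t)$, so that $\hat\sigma_p$ fixes $\pi(\M)$ pointwise while $\hat\sigma_p(\lambda_s)=e^{ips}\lambda_s$. Identifying $\M$ with $\pi(\M)$, this is the concrete model of the canonically constructed triple.

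Next I would verify the three defining properties. The identity $\M=\Tilde\M^{\tilde\theta}$ of \eqref{eq:core2} is exactly \emph{Takesaki duality}: the fixed-point algebra of the dual action is $\pi(\M)$. For semifiniteness and the trace, the key observation is that the \emph{dual weight} $\hat\omega$ of $\omega$ has an \emph{inner} modular group, $\sigma^{\hat\omega}_t=\Ad(\lambda_t)$; writing $\lambda_t=h^{it}$ with $0<h\aff\Tilde\M$ via Stone's theorem, the balanced weight $\tau:=\hat\omega(h^{-1/2}\,\placeholder\,h^{-1/2})$ is a normal semifinite faithful trace. This is where the hard analytic input enters (Takesaki's theorem that the modular crossed product is semifinite). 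The scaling \eqref{eq:core1} then follows from $\hat\sigma_p(\lambda_s)=e^{ips}\lambda_s$, which gives $\hat\sigma_p(h)=e^{p}h$; a direct computation yields $\tau\circ\hat\sigma_p=e^{p}\tau$, and after the reparametrization $\tilde\theta_s:=\hat\sigma_{-s}$ this is precisely $\tau\circ\tilde\theta_s=e^{-s}\tau$.

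For canonicity and uniqueness I would invoke the Connes cocycle. Any two faithful normal states $\omega,\omega'$ have cocycle-equivalent modular groups, $\sigma^{\omega'}_t=\Ad\big((D\omega':D\omega)_t\big)\circ\sigma^\omega_t$ (see \eqref{eq:connes cocyle}). The standard fact that cocycle-equivalent actions have isomorphic crossed products then produces an isomorphism $\Tilde\M_\omega\to\Tilde\M_{\omega'}$ fixing $\M$, sending $\lambda_t\mapsto\pi\big((D\omega':D\omega)_t\big)\lambda'_t$, and intertwining the dual actions; since the trace is unique up to a positive scalar, the scaling normalization in \eqref{eq:core1} pins it down. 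This delivers the asserted uniqueness up to isomorphism and cocycle equivalence.

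Finally, parts (i) and (ii). The inclusion $Z(\Tilde\M)\subset\M'\cap\Tilde\M$ is immediate from $\M\subset\Tilde\M$ (so $\Tilde\M'\subset\M'$). For the reverse I would Fourier-expand $x\in\M'\cap\Tilde\M$ as $x=\int_\RR\lambda_s\,\pi(x(s))\,ds$; commutation $[x,\pi(a)]=0$ forces $x(s)\,a=\sigma^\omega_{-s}(a)\,x(s)$ for all $a$, so each $x(s)$ must intertwine $\sigma^\omega_{-s}$ with the identity. This is possible only on the set where $\sigma^\omega_{-s}$ is inner modulo the center, and there $x(s)$ is forced into the center, whence a short calculation shows $x$ commutes with every $\lambda_t$ and lies in $Z(\Tilde\M)$. \textbf{This relative-commutant computation is the main obstacle}, as controlling precisely where the modular automorphism is inner requires the modular spectrum and, for non-factors, a central disintegration. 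Part (ii) is then formal: combining $\Tilde\M^{\tilde\theta}=\M$ with (i),
\begin{equation*}
    Z(\Tilde\M)^{\tilde\theta}=(\M'\cap\Tilde\M)^{\tilde\theta}=\M'\cap\Tilde\M^{\tilde\theta}=\M'\cap\M=Z(\M),
\end{equation*}
so $\tilde\theta$ is centrally ergodic if and only if $Z(\M)=\CC1$, i.e.\ if and only if $\M$ is a factor.
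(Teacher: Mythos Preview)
The paper does not prove this theorem; it is cited from \cite{falcone_non-commutative_2001}, and the text following the statement gives only a brief sketch of the ``traditional, more pedestrian'' construction via the crossed product $\M\rtimes_{\sigma^\omega}\RR$, the dual action, and Connes' cocycle theorem for uniqueness. Your proposal is an expanded version of exactly this sketch, so you are on the same track the paper indicates.

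A couple of comments on your expansion. Your derivation of (ii) from (i) and \eqref{eq:core2} is clean and matches the logic implicit in the paper. For (i), your ``Fourier-expansion'' argument is the right heuristic but, as you note, not rigorous as stated: elements of the crossed product do not in general admit such an integral decomposition with $\M$-valued coefficients, and the step ``$x(s)$ intertwines $\sigma^\omega_{-s}$ with the identity, hence lies in the center'' is where the real work is. The standard route is either via the operator-valued weight/Haagerup--Takesaki machinery, or to first prove $\M'\cap\Tilde\M\subset\{\lambda_s\}_{s\in\RR}''\vee Z(\M)$ using that $\sigma^\omega$ is implemented by $\lambda_s$ and that the relative commutant of $\M$ in $\Tilde\M$ is fixed by the bidual action; both are in Takesaki's volume II. Your acknowledgment that this is the main obstacle is accurate.

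One small sign/convention issue: the dual weight $\hat\omega$ is not $\hat\sigma$-invariant in all conventions, so the one-line computation ``$\hat\sigma_p(h)=e^{p}h\Rightarrow\tau\circ\hat\sigma_p=e^{p}\tau$'' needs the invariance of $\hat\omega$ (or the correct transformation law) made explicit; otherwise the scaling exponent can come out with the wrong sign.
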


Apart from a canonical way of constructing the triple $(\Tilde\M,\tilde\theta,\tau)$ in \cite{falcone_non-commutative_2001}, which involves many high-level concepts, there is the traditional, more pedestrian, way:
To start off, we have to pick a faithful normal state $\omega$ (or weight) on a von Neumann algebra $\M$.
One then considers the \emph{crossed product} $\Tilde\M=\M\rtimes_{\sigma^\omega}\RR$ relative to the modular flow $\sigma^\omega$.
In general, if $\sigma:G\acts\M$ is an action of a locally abelian group $G$, e.g., $G=\RR$, then the crossed product is a canonical extension $\M\rtimes_\sigma G\supset \M$ in which the action  $\sigma$ becomes inner, i.e., there are unitaries $(u_g)_{g\in G}$ such that $\sigma_g(a) = u_g^*a u_g$ (see \cite[Sec.~8.1]{hiai_lectures_2021}).
The crossed product naturally comes with a dual action $\hat\sigma$ of the dual group $\hat G$ whose fixed points are precisely elements of $\M$.
The modular structure admits the construction of a \nsf trace $\tau$ on $\M\rtimes_{\sigma^\omega}\RR$, which is scaled by the dual action $\tilde\theta = \hat\sigma^\omega$ of $\hat\RR=\RR$.
Moreover, the fact that all modular automorphism groups are cocycle equivalent implies that the triple $(\M\rtimes_{\sigma^\omega},\tilde\theta,\tau)$ is independent of $\omega$ up to an appropriate notion of equivalence \cite[Thm.~X.1.7]{takesaki2}.

In the following, we refer to a triple $c(\M)=(\Tilde\M,\tilde\theta,\tau)$ with the properties in \cref{thm:core} as a \emph{core} for $\M$.
The name "core" for the triple $(\Tilde\M,\tilde\theta,\tau)$ is due to the following: 
A type $\III$ von Neumann algebra $\M$ is, up to isomorphism, given as the crossed product of the  semifinite algebra $\Tilde\M$ by the action $\tilde\theta$:%
\footnote{The isomorphism in \eqref{eq:takesaki duality} is not canonical. There is, however, a canonical isomorphism $\Tilde\M\rtimes_{\tilde\theta}\RR\cong \Tilde\M \ox \B(L^2(\RR))$, which gives rise to \eqref{eq:takesaki duality} since type $\III$ von Neumann algebras absorb type $\I_\oo$ factors.}
\begin{equation}\label{eq:takesaki duality}
    \M \cong \Tilde\M \rtimes_{\tilde\theta}\RR,
\end{equation}
% In the following, we write $c(\M)=(\Tilde\M,\tilde\theta,\tau)$ to which we refer as the \emph{core} of $\M$.
The core of the trivial von Neumann algebra $\M=\CC$ is
\begin{equation}
    c(\CC)= (L^\oo(\RR),\theta,\int_{-\oo}^\oo \placeholder\, e^{-s}ds),
\end{equation}
where $\theta:\RR\acts L^\oo(\RR)$ is the action by translations $\theta_t(f)(s)=f(s-t)$.
Indeed, \eqref{eq:core1} and \eqref{eq:core2} hold trivially for $\M=\CC$ so that the claim follows from the uniqueness of the core. 
The core respects direct integration:

\begin{lemma}[{\cite[Prop.~8.1]{haagerup_equivalence_1990}}]\label{lem:core direct int}
    Let $(X,\mu)$ be a $\sigma$-finite measure space and let $x\mapsto \M_x$ be a measurable field of von Neumann algebras.
    Then $x\mapsto c(\M_x)$ is appropriately measurable in each entry. 
    If $\M = \int^\oplus_X \M_x \,d\mu(x)$ is a direct integral of von Neumann algebras, then 
    \begin{equation}\label{eq:core direct int}
         c(\M)=\int^\oplus_X c(\M_x)\,d\mu(x)
        % (\Tilde\M, \tilde \theta,\tau) = \int^\oplus_X (\Tilde\M_x,\tilde\theta_x,\tau_x)\,d\mu(x),
    \end{equation}
    is a core for $\M$, where the direct integral is entry-wise.
\end{lemma}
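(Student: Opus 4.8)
The plan is to reduce the whole statement to the uniqueness clause of \cref{thm:core}. That clause tells us that a triple $(\Tilde\M,\tilde\theta,\tau)$ is a core for $\M$ as soon as $\Tilde\M$ is a semifinite extension of $\M$, $\tau$ is a \nsf trace on $\Tilde\M$ scaled by $e^{-s}$ under a point-$\sigma$-weakly continuous $\RR$-action $\tilde\theta$ (i.e.\ \eqref{eq:core1} holds), and $\M$ is the fixed-point subalgebra $\Tilde\M^{\tilde\theta}$ (i.e.\ \eqref{eq:core2} holds). So, once the measurability issue is settled, the proof reduces to checking these four defining properties for the entry-wise assembled triple, and the conclusion \eqref{eq:core direct int} then follows because uniqueness forces it to agree with any canonically constructed core.

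Before that, I would establish the measurability of $x\mapsto c(\M_x)=(\Tilde\M_x,\tilde\theta_x,\tau_x)$ using the concrete crossed-product construction described after \cref{thm:core}. First disintegrate a single faithful normal state $\omega$ on $\M$ into a measurable field $x\mapsto\omega_x$ of faithful normal states on the fibers (a standard consequence of direct-integral theory), then set $\Tilde\M_x=\M_x\rtimes_{\sigma^{\omega_x}}\RR$ and equip it with the dual action $\tilde\theta_x$ and the canonical \nsf trace $\tau_x$. Since the modular flow $\sigma^{\omega_x}$, the crossed product, the dual action, and the trace are all built functorially from the pair $(\M_x,\omega_x)$, the measurability of $x\mapsto\omega_x$ propagates through each construction; carrying out this bookkeeping is what makes the entry-wise direct integrals $\Tilde\M=\int^\oplus_X\Tilde\M_x\,d\mu(x)$, $\tilde\theta=\int^\oplus_X\tilde\theta_x\,d\mu(x)$, and $\tau(a)=\int_X\tau_x(a_x)\,d\mu(x)$ well-defined.

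With the field in place, three of the four properties follow fiberwise. Semifiniteness of $\Tilde\M$ and the \nsf property of $\tau$ are inherited from the fibers; the trace-scaling identity is immediate and entry-wise, $\tau\circ\tilde\theta_s=\int^\oplus_X(\tau_x\circ\tilde\theta_{x,s})\,d\mu(x)=\int^\oplus_X e^{-s}\tau_x\,d\mu(x)=e^{-s}\tau$, reproducing \eqref{eq:core1}; and point-$\sigma$-weak continuity of $\tilde\theta$ passes through the integral by dominated convergence. The one genuinely nontrivial point, and the main obstacle, is the fixed-point identity \eqref{eq:core2}, namely $\Tilde\M^{\tilde\theta}=\int^\oplus_X\Tilde\M_x^{\tilde\theta_x}\,d\mu(x)=\M$. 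The subtlety is that invariance of a global element $a=\int^\oplus_X a_x\,d\mu(x)$ must be shown to be equivalent to fiberwise invariance almost everywhere: for each fixed $s$, the equation $\tilde\theta_s(a)=a$ only yields $\tilde\theta_{x,s}(a_x)=a_x$ off an $s$-dependent $\mu$-null set, so one has to pass to a countable dense set of parameters $s$ and invoke $\sigma$-weak continuity of the flow to produce a single null set valid for all $s$, thereby commuting the fixed-point construction with the direct integral. Once this measurable-selection argument is complete, $\Tilde\M^{\tilde\theta}=\int^\oplus_X\M_x\,d\mu(x)=\M$, and the uniqueness clause of \cref{thm:core} identifies the entry-wise direct integral as a core for $\M$, giving \eqref{eq:core direct int}.
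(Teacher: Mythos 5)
Your proposal is sound, but note that the paper does not actually prove this lemma: it is imported verbatim as \cite[Prop.~8.1]{haagerup_equivalence_1990}, so there is no in-paper proof to compare against. What you propose is exactly the strategy the paper \emph{does} use for the companion statement \cref{prop:semifinite amp fow}: assemble the candidate triple, verify the defining properties \eqref{eq:core1}--\eqref{eq:core2} of \cref{thm:core}, and conclude from the paper's convention that any triple with these properties \emph{is} a core. You correctly isolate the one genuinely delicate step, namely that the fixed-point algebra commutes with the direct integral, and your resolution (fiberwise invariance for each $s$ off an $s$-dependent null set, then a countable dense set of parameters plus $\sigma$-weak continuity of $s\mapsto\tilde\theta_{x,s}(a_x)$ to get a single null set) is the standard and correct one. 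Two soft spots remain at the level of detail rather than of substance: semifiniteness of $\tau=\int_X\tau_x\,d\mu$ is not automatic from fiberwise semifiniteness and requires a measurable field of finite-trace elements (e.g.\ projections $p_x$ with $\tau_x(p_x)<\oo$, suitably normalized so that $\int_X\tau_x(p_x)\,d\mu<\oo$) whose span is $\sigma$-weakly dense; and the claim that measurability "propagates through each construction" (modular flow, crossed product, dual action, dual trace) is precisely the bookkeeping that constitutes the measurability assertion of the lemma, so asserting functoriality does not by itself discharge it. Neither issue invalidates the argument; both are standard, and the second is arguably why the original reference devotes a proposition to it.
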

% \begin{proof}
%     This statement, including the measurability of $x\mapsto c(\M_x)$, is proved in \cite[Prop.~8.1]{haagerup_equivalence_1990}.
%     Assuming the measurability, it is clear that the triple defined by \eqref{eq:core direct int} satisfies the properties \eqref{eq:core1} and \eqref{eq:core2} and is, hence, a core for $\M$.
% \end{proof}

In contrast to direct sums, tensor products are, in general, not preserved by the core.
However, if one of the tensor factors is semifinite, we have:

\begin{proposition}\label{prop:semifinite amp fow}
    Let $\N$ be a semifinite von Neumann algebra with \nsf trace $\tau_0$ and let $\M$ be another von Neumann algebra. 
    Then a core of $\M\barox\N$ is
    \begin{equation}\label{eq:semifinite amp fow}
        c(\M\barox\N) = c(\M) \ox (\N,\id,\tau_0),
    \end{equation}
    where the tensor product is component-wise, e.g., ${(\M\barox\N)}^\sim = \Tilde\M\barox\N$.
\end{proposition}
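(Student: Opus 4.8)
The plan is to verify that the triple $(\Tilde\M\barox\N,\ \tilde\theta\ox\id,\ \tau\ox\tau_0)$ meets the defining properties of a core listed in \cref{thm:core}; by the uniqueness clause there, this identifies it with $c(\M\barox\N)$ up to the appropriate equivalence and so establishes \eqref{eq:semifinite amp fow}. Most of the required properties are routine. Since $\tau$ and $\tau_0$ are \nsf traces, their product $\tau\ox\tau_0$ is a \nsf trace on $\Tilde\M\barox\N$, exhibiting the latter as a semifinite extension of $\M\barox\N$. The one-parameter automorphism group $\tilde\theta_s\ox\id$ is point-$\sigma$-weakly continuous: continuity holds on the $\sigma$-weakly dense algebraic tensor product and extends to all of $\Tilde\M\barox\N$ because the maps are uniformly bounded. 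The trace-scaling property \eqref{eq:core1} follows from $\tau\circ\tilde\theta_s=e^{-s}\tau$ together with the defining property $(\tau\ox\tau_0)(a\ox b)=\tau(a)\tau_0(b)$ of the product trace, which yields $(\tau\ox\tau_0)\circ(\tilde\theta_s\ox\id)=e^{-s}(\tau\ox\tau_0)$.

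The real content is the invariance property \eqref{eq:core2}, namely $(\Tilde\M\barox\N)^{\tilde\theta\ox\id}=\M\barox\N$. The inclusion $\supseteq$ is immediate, since each $a\ox b$ with $a\in\Tilde\M^{\tilde\theta}=\M$ and $b\in\N$ is fixed and the fixed-point algebra is $\sigma$-weakly closed. For the reverse inclusion I would argue with slice maps. Letting $\H,\K$ be the Hilbert spaces on which $\Tilde\M,\N$ act, the right slice map $R_\psi=\id\ox\psi\colon\Tilde\M\barox\N\to\Tilde\M$ associated with $\psi\in\N_*$ is normal and intertwines the two flows, $R_\psi\circ(\tilde\theta_s\ox\id)=\tilde\theta_s\circ R_\psi$, as one sees by pairing with an arbitrary normal functional on $\Tilde\M$. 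Thus for a fixed point $z$ the element $R_\psi(z)$ is $\tilde\theta$-invariant, hence $R_\psi(z)\in\Tilde\M^{\tilde\theta}=\M$ for every $\psi\in\N_*$. I would then conclude $z\in\M\barox\N$ from Tomiyama's slice-map characterization of the von Neumann tensor product.

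I expect this last step to be the main obstacle. Tomiyama's characterization detects membership in $\M\barox\N\subseteq\B(\H\ox\K)$ by testing all slices $R_\psi$ with $\psi\in\B(\K)_*$, whereas the intertwining argument only controls $\psi\in\N_*$. The gap is closed by the observation that, for $z\in\Tilde\M\barox\N$, the slice $R_\psi(z)$ depends only on $\psi|_\N$: if $\chi\in\B(\K)_*$ vanishes on $\N$ then $R_\chi$ annihilates every elementary tensor $a\ox b$ (as $R_\chi(a\ox b)=\chi(b)a$), and normality together with $\sigma$-weak density forces $R_\chi(z)=0$. Hence $R_\psi(z)=R_{\psi|_\N}(z)\in\M$ for all $\psi\in\B(\K)_*$; combined with $(\phi\ox\id)(z)\in\N$ for all $\phi\in\B(\H)_*$, which holds because $z\in\Tilde\M\barox\N$, Tomiyama's theorem gives $z\in\M\barox\N$ and completes \eqref{eq:core2}. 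A shorter but less self-contained alternative would fix a faithful normal state $\omega$ on $\M$, note that the tracial weight $\tau_0$ has trivial modular flow so that $\sigma^{\omega\ox\tau_0}=\sigma^\omega\ox\id$, and then invoke the standard crossed-product factorization $(\M\barox\N)\rtimes_{\sigma^\omega\ox\id}\RR\cong(\M\rtimes_{\sigma^\omega}\RR)\barox\N$ together with the compatibility of the dual actions and traces.
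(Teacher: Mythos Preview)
Your proof is correct and follows the same strategy as the paper's: verify the defining properties of a core from \cref{thm:core} and invoke uniqueness. The paper's proof simply asserts the fixed-point identity $(\Tilde\M\barox\N)^{\tilde\theta\ox\id}=\Tilde\M^{\tilde\theta}\barox\N^{\id}=\M\barox\N$ in one line, whereas you supply a detailed slice-map justification via Tomiyama's theorem; this is more than strictly necessary but entirely sound.
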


\begin{proof}
    Clearly, $\Tilde\M\barox\N$ is a semifinite extension of $\M\ox\N$ with $\tau\ox\tau_0$-scaling action $\tilde\theta \ox\id$.
    Moreover, the fixed-point algebra is
    \begin{equation}
        (\Tilde\M\barox\N)^{\tilde\theta\ox \id} = \Tilde\M^{\tilde\theta}\barox \N^{\id} = \M\barox\N.
    \end{equation}
    Thus, \cref{thm:core} implies that $c(\M) \ox (\N,\id,\tau_0)$ is a core for $\M\barox\N$.
\end{proof}

Since $\N = \N\ox \CC$, this lets us construct a core of a semifinite von Neumann algebra $\N$ with \nsf trace $\tau_0$ via
\begin{equation}\label{eq:semifinite core}
    c(\N) = \big (\N \barox L^\oo(\RR), \id\ox\,\theta, \tau_0 \ox \int_{-\oo}^\oo \placeholder\,e^{-s} ds \big).
\end{equation}
Note that this concrete realization of the core depends on the choice of $\tau_0$ (the dependence is only up to isomorphism and cocycle-equivalence, see \cref{thm:core}).

\subsubsection{Haagerup \texorpdfstring{$L^p$}{Lp}-spaces}\label{sec:haagerup Lp}

Following \cite[Sec.~3]{haagerup_equivalence_1990}, we associate with every weight $\psi$ on $\M$ a \emph{dual weight} $\tilde\psi$ on the core $\Tilde\M$ by:
\begin{equation}
    \tilde\psi(y) = \int_{-\oo}^\oo \psi(\tilde\theta_s(y))\,ds,\qquad y\in\N^+.
\end{equation}
Relative to the trace $\tau$, the dual weight has a density $0\le h_\psi= \frac{d\tilde\psi}{d\tau}\aff\N$ (see \cref{sec:semifinite vNas}).
These constructions are covariant and affine in the sense that $(a\psi a^*+\phi)^\sim = a\tilde\psi a^*+\tilde\phi$ and $h_{a\psi a^*+\phi}=ah_\psi a^*+h_\phi$ for $a\in\M$.
Since, by definition, the dual weight $\tilde\psi$ is $\tilde\theta$-invariant, the trace-scaling property implies
\begin{equation}\label{eq:hpsi scaling}
    \tilde\theta_s(h_\psi) =e^{-s}h_\psi.
\end{equation}
% We denote by $e_\psi$ the spectral projection of $h_\psi$ onto the interval $(1,\oo)$.
% The following shows that the weight $\psi$ can be recovered from the operator $h_\psi$.
% In fact, it can be recovered from the spectral projection $e_\psi=1_{(1,\oo)}(h_\psi)$ alone as we will see in \cref{sec:spectral states}.\footnote{It may seem weird at first that all information about $\psi$, and hence $h_\psi$, is encoded in a single spectral projection. 
% Note, however, that \eqref{eq:hpsi scaling} implies that $1_{(t,\oo)}(h_\psi)= 1_{(1,\oo)}(t^{-1}h_\psi) = 1_{(1,\oo)}(\theta_{\log t}(h_\psi)) = \theta_{\log t}(e_\psi)$ for all $t>0$.
% In particular, $h_\psi = \int_0^\oo 1_{(t,\oo)}(h_\psi) =\int _0^\oo \theta_{\log t}(e_\psi)\,dt$.}

An operator $h\aff\Tilde\M$ is said to have \emph{grade $\alpha>0$} if $\tilde\theta_t(h) = e^{\alpha t} h$ for all $t\in\RR$.
The space of $\tau$-measurable operators of grade $p^{-1}$ is known as the \emph{Haagerup $L^p$-space} $L^p(\M)$.
It is clear from the definition that these are vector spaces and that the products of operators in $L^p(\M)$ with operators in $L^q(\M)$ are in $L^r(\M)$ with $r^{-1}=p^{-1}+q^{-1}$ (simply because the grade is additive under multiplication).
% The $\tau$-measurable operators of grade $p^{-1}$ are denoted
% \begin{equation}\label{eq:haagerup Lp}
%     L^p(\M) := \{ h\in L^0(\Tilde\M,\tau) : \text{$h$ has grade $p^{-1}$}\}, \qquad p>0. 
% \end{equation}
% These are the \emph{Haagerup $L^p$-spaces}. 
The structure of the $L^1$-space is described by the following:

\begin{theorem}[The Haagerup $L^1$-space {\cite[Sec.~9.1]{hiai_lectures_2021}}]\label{thm:haagerup L1}
    The operator $h_\psi$ is $\tau$-measurable if and only if $\psi$ is finite, i.e., in $\M_*^+$.
    The map $\psi\mapsto h_\psi$ is affine. Its linear extension gives a bijection:
    \begin{equation}\label{eq:haagerup L1}
        \M_* \ni \psi \mapsto h_\psi \in L^1(\M).
    \end{equation}
    The following properties hold:
    \begin{enumerate}[(i)]
        \item\label{it:haagerup L11} $h_{\psi}^* = h_{\psi^*}$, where $\psi^*\in\M_*^+$ is the linear functional $a\mapsto \overline{\psi(a^*)}$,
        \item\label{it:haagerup L12} if $a,b\in\M$, then $h_{a\psi b}=ah_\psi b$,
        \item\label{it:haagerup L13} if $\psi=v \abs\psi$ is the polar decomposition of $\psi$, then $h_\psi = v h_{\abs \psi}$ is the polar decomposition of $h_\psi$. In particular, $\abs{h_\psi}=h_{\abs\psi}$, $h_\psi \ge0 \iff \psi\in\M_*^+$, and $\supp(h_\psi)=\supp(\psi)$.
    \end{enumerate}
    Thus, if we set $\Tr h_\psi = \psi(1)$ for $\psi\in\M_*$, then 
    \begin{equation}
        \norm h_{L^1(\M)} = \tr |h|, \qquad h\in L^1(\M),
    \end{equation}
    turns $L^1(\M)$ into a Banach space, such that \eqref{eq:haagerup L1} is an isometric isomorphism.
\end{theorem}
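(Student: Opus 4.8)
The plan is to realize the map $\psi \mapsto h_\psi$ as the composite of two bijections supplied by the general theory and then to extract the measurability criterion and the algebraic identities by hand. First I would recall that $h_\psi = d\tilde\psi/d\tau$ factors as $\psi \mapsto \tilde\psi \mapsto d\tilde\psi/d\tau$. The dual-weight assignment $\psi \mapsto \tilde\psi$ is, by Takesaki's duality for the crossed product $\Tilde\M = \M\rtimes_{\sigma^\omega}\RR$ underlying \cref{thm:core}, a bijection from normal semifinite weights on $\M$ onto the $\tilde\theta$-invariant normal semifinite weights on $\Tilde\M$ (invariance is built into the definition of $\tilde\psi$, and bijectivity comes from \eqref{eq:core2}). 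The Radon–Nikodym correspondence on the semifinite algebra $(\Tilde\M,\tau)$ from \cref{sec:semifinite vNas} turns the latter into the positive self-adjoint operators $h \aff \Tilde\M$, and the trace-scaling \eqref{eq:core1} converts $\tilde\theta$-invariance of $\tilde\psi$ into the grade relation \eqref{eq:hpsi scaling}. This already exhibits $\psi\mapsto h_\psi$ as an affine bijection onto the positive operators of grade one, whose linear extension (legitimized by the polar decomposition below) I would identify with the candidate isomorphism $\M_*\cong L^1(\M)$.

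The key self-contained step, and the main obstacle, is the equivalence between finiteness of $\psi$ and $\tau$-measurability of $h_\psi$. Here I would exploit the homogeneity forced by \eqref{eq:hpsi scaling} and \eqref{eq:core1}. Since $\tilde\theta_s\big(1_{(t,\oo)}(h_\psi)\big) = 1_{(e^s t,\oo)}(h_\psi)$, applying $\tau$ and using $\tau\circ\tilde\theta_s = e^{-s}\tau$ yields $D_{h_\psi}(e^s t) = e^{-s} D_{h_\psi}(t)$, hence $D_{h_\psi}(t) = D_{h_\psi}(1)/t$ for all $t>0$. Thus the distribution function is either identically $\oo$ or a finite multiple of $1/t$, and $h_\psi$ is $\tau$-measurable precisely when the single number $D_{h_\psi}(1)$ is finite. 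I would then identify $D_{h_\psi}(1)$ with $\psi(1)$ from the normalization built into the dual weight (already verifiable on the trivial example $c(\CC)$), so that $\tau$-measurability of $h_\psi$ is equivalent to $\psi\in\M_*^+$. A point worth flagging is that this finiteness is \emph{not} $\tau$-integrability of $h_\psi$: the spectral scale $\lambda_{h_\psi}(u)=D_{h_\psi}(1)/u$ is never integrable, so $\tau(h_\psi)=\oo$, and the Haagerup trace defined by $\Tr h_\psi := \psi(1)$ is a genuinely different functional, well-defined by the bijection just established.

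Next I would establish (i)–(iii). For (ii), the one-sided covariance $h_{c\psi c^*} = c h_\psi c^*$ recorded before the theorem extends to the two-sided module law by polarization: the map $(a,b)\mapsto h_{a\psi b} - a h_\psi b$ is bilinear and vanishes whenever $b=a^*$, hence vanishes identically, giving $h_{a\psi b}=a h_\psi b$ for all $a,b\in\M$. Property (i) follows because $\psi\mapsto h_\psi$ is real-linear and carries $\M_*^+$ into positive operators, so it intertwines the two adjoint operations. For (iii) I would feed the polar decomposition $\psi = v\abs\psi$ of \eqref{eq:polar decomposition for functionals}, with $v\in\M$ a partial isometry and $v^*v=\supp\abs\psi$, through (ii) to get $h_\psi = v\,h_{\abs\psi}$; since $h_{\abs\psi}\ge0$ has support projection $\supp\abs\psi$ (the density map preserves supports), this is exactly the polar decomposition of $h_\psi$, whence $\abs{h_\psi}=h_{\abs\psi}$, the equivalence $h_\psi\ge0\iff\psi\in\M_*^+$, and $\supp h_\psi=\supp\psi$.

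Finally I would assemble the Banach-space statement. Surjectivity of $\psi\mapsto h_\psi$ onto all of $L^1(\M)$ uses (ii) once more: a general $\tau$-measurable operator $h$ of grade one polar-decomposes as $h=v\abs h$ with $\abs h$ positive of grade one (hence $h_{\abs\psi}$ for a finite $\abs\psi$ by the measurability criterion) and $v\in\Tilde\M^{\tilde\theta}=\M$, so $h=h_{v\abs\psi}$. The norm identity is then immediate: by (iii) and the definition of $\Tr$, $\norm{h_\psi}_{L^1(\M)} = \Tr\abs{h_\psi} = \Tr h_{\abs\psi} = \abs\psi(1) = \norm\psi$, using that a positive normal functional attains its norm at the unit. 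Hence $\psi\mapsto h_\psi$ is a linear isometric bijection $\M_*\to L^1(\M)$; transporting the complete norm of the predual shows that $\big(L^1(\M),\norm{\,\cdot\,}_{L^1(\M)}\big)$ is a Banach space and the map an isometric isomorphism, completing the proof.
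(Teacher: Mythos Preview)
The paper does not prove this theorem; it is stated as a preliminary result cited from \cite[Sec.~9.1]{hiai_lectures_2021}, so there is no proof in the paper to compare against. Your sketch follows the standard route (dual weights $\to$ Radon--Nikodym on the semifinite core $\to$ grade-one operators) and is broadly sound, including the nice observation that the scaling forces $D_{h_\psi}(t)=D_{h_\psi}(1)/t$ and the clean polarization/polar-decomposition arguments for (i)--(iii) and surjectivity.

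The one place where your sketch is genuinely incomplete is the identification $D_{h_\psi}(1)=\psi(1)$. Saying it is ``the normalization built into the dual weight, verifiable on $c(\CC)$'' does not establish it in general: checking $\M=\CC$ is not a proof for arbitrary $\M$, and this equality is the entire content of the measurability criterion once you have the scaling. In fact the paper isolates exactly this identity as \cref{lem:normalization}, $\tau(e_\psi)=\psi(1)$ with $e_\psi=1_{(1,\oo)}(h_\psi)$, and proves it via the integral identity $\int_\RR \tilde\theta_t(e_\psi h_\psi^{-1})\,dt=\supp(\psi)$ together with $\tilde\theta$-invariance of $\tilde\psi$. Note that proof uses $\supp(h_\psi)=\supp(\psi)$, so if you want to invoke it inside your argument you should first extract the support equality from the Radon--Nikodym correspondence (where $\supp(h_\psi)=\supp(\tilde\psi)$) and the dual-weight construction (where $\supp(\tilde\psi)=\supp(\psi)$), which is available before you need the measurability statement.
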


The map $\tr: L^1(\M)\to \CC$ is called the \emph{Haagerup trace}.
The Haagerup $L^p$-spaces are Banach spaces with the norm
\begin{equation}\label{eq:haagerup Lp norm}
    \norm{h}_{L^p(\M)} = \big(\tr|h|^p\big)^{1/p},
\end{equation}
which is justified by $\abs h^p\in L^1(\M) \iff h\in L^p(\M)$ \cite[Sec.~9.2]{hiai_lectures_2021}.
More generally, an element $h\in L^0(\M,\tau)$ is in $L^p(\M)$ if and only if its polar decomposition is $h= v\abs h$ with $v\in\M$ and $\abs h^p\in L^1(\M)$.
We refer to \cite[Sec.~9]{hiai_lectures_2021} for an overview of the properties of the Haagerup $L^p$-spaces.
We only mention the core essentials:
The Hölder inequality holds \cite[Sec.~9.2]{hiai_lectures_2021}: If $r^{-1}=p^{-1}+q^{-1}$, then
\begin{equation}
    \norm{hk}_{L^r(\M)}\le \norm h_{L^p(\M)}\norm k_{L^q(\M)}.
\end{equation}
In the case that $p^{-1}+q^{-1}=1$, the products $hk,kh\in L^1(\M)$ have the same trace $\tr(hk)=\tr(kh)$ \cite[Prop.~9.22]{hiai_lectures_2021}.
In fact, the $L^p(\M)$ and $L^q(\M)$ are dual spaces with the dual pairing $(h,k)\mapsto \tr(hk)$ if $1<p<\oo$ and $p^{-1}+q^{-1}=1$.
This defines an inner product on $L^2(\M)$, which turns it into a Hilbert space. 
The action of $\M$ by left multiplication on $L^2(\M)$ is a standard form of $\M$ with $J$ and $\P$ given by
\begin{equation}\label{eq:haagerup L2 std form}
    J\xi = \xi^*, \qquad \P = L^2(\M)^+.
\end{equation}
Another neat feature of the theory of Haagerup  $L^p$-spaces is that it lets us describe the modular objects directly through the operators $h_\psi$.
For instance, it holds that if $\psi\in\M_*^+$ is faithful, then \cite[Lem.~10.5]{hiai_lectures_2021}:
\begin{equation}
    \sigma^\psi_t(a) = h_\psi^{it} a h_\psi^{-it},\qquad a\in\M.
\end{equation}
Moreover, if $\phi$ is another faithful element $\varphi\in\M_*^+$, then Connes cocycle is given by \cite[Lem.~10.13]{hiai_lectures_2021}
\begin{equation}
    (D\psi:D\phi)_t = h_\psi^{it}h_\phi^{-it}.
\end{equation}
Another nice application is to the definition of \emph{commutativity} of positive linear functionals $\psi,\phi\in\M_*^+$.
Given the Haagerup $L^1$-space, the obvious definition is that $h_\psi$ and $h_\phi$ commute and, indeed, this is equivalent to the traditional definitions in terms of modular theory such as $\phi\circ\sigma^\psi_t=\phi$ for all $t\in\RR$ (see \cite[Prop.~10.18]{hiai_lectures_2021} for details).

Before we move on, we discuss how the Haagerup $L^p$-spaces relate to the $L^p$-spaces relative to a fixed trace on a semifinite von Neumann algebra.
Recall that we can write the core $c(\M)$ of a semifinite von Neumann algebra $\M$ with \nsf trace $\tau_0$ as $(\M\barox L^\oo(\RR),\id\ox\,\theta,\tau_0\ox\int_{-\oo}^\oo\placeholder e^{-s}ds)$ above (see \eqref{eq:semifinite core}).
Since the flow acts trivially on $\M\ox 1$, the operators $h\aff\Tilde\M$ that have grade $p^{-1}$ are precisely the operators of the form $a\ox e^{-t/p}$ (where $t$ is the "dummy variable" for functions on $\RR$).
Thus, we may write the Haagerup $L^p$-space of the semifinite von Neumann algebra $\M$ as:
\begin{equation}\label{eq:semifinite haagerup Lp}
    L^p(\M) = L^p(\M,\tau_0) \ox e^{-t/p}.
\end{equation}
If $\psi\in\M_*^+$ and if $\rho_\psi = d\psi/d\tau_0$ is its density operator in $L^1(\M,\tau_0)$, then the corresponding operator $h_\psi$ in the canonical $L^1$-space $L^1(\M)$ is given by
\begin{equation}\label{eq:semifinite h}
    h_\psi = \rho_\psi \ox e^{-t}.
\end{equation}
While the standard $L^p$-spaces relative to a trace are, of course, much more accessible, the abstract approach via Haagerup $L^p$-spaces has the structural benefit that it is canonical, i.e., does not require the choice of a trace.
The measure theory analog of this are the canonical $L^p$-spaces of nonsingular spaces, i.e., standard Borel spaces equipped with an equivalence class of, which do not depend on the choice of a measure $\sigma$-finite measures (see \cite[Sec.~4.3]{arano_ergodic_2021}).

\subsubsection{The flow of weights and the Haagerup-Størmer spectral state}\label{sec:spectral states}

The flow of weights is a canonical classical dynamical system that can be assigned to a von Neumann algebra $\M$.
Working in the category of von Neumann algebras, we describe a classical dynamical system by a flow $\theta:\RR\acts\A$ on an abelian von Neumann algebra $\A$.
The flow of weights is the central restriction of the trace-scaling flow on the core of the von Neumann algebra, i.e., if $c(\M)=(\Tilde\M,\tilde\theta,\tau)$ is the core of $\M$, then the flow of weights is:
\begin{equation}
    \A = Z(\Tilde\M), \qquad \theta_t = \tilde\theta_t \restriction Z(\Tilde\M).
\end{equation}
By \cref{thm:core}, the flow of weights is ergodic if and only if the von Neumann algebra $\M$ is a factor.
By \eqref{eq:semifinite core}, the flow of weights of all semifinite factors is given by the translations on $\RR$.
For type $\III$ factors, however, the flow of weights captures the subtype (see \cite[Sec.~XII.1]{takesaki2}):
\begin{itemize}
    \item Type $\III_0$: The flow of weights is aperiodic.\footnote{Type $\III_0$ factors are the only type $\III$ factors with aperiodic flow weights. %However, semifinite factors also have an aperiodic flow of weights (see \cref{prop:semifinite fow}). 
    Among all factors, type $\III_0$ factors are singled out by having a properly ergodic flow of weights.} 
    In this case, the flow of weights is $\A= L^\oo(X,\mu)$ and $\theta_t(f)(x)=f(F_t(x))$ for some properly ergodic flow $(F_t)$ on $(X,\mu)$.\footnote{An ergodic flow $(F_t)$ on $(X,\mu)$ is properly ergodic if $X$ is  not equal to the orbit of a single point (up to null sets) \cite[Sec.~XII.3]{takesaki2}. For instance, the irrational line flow on the torus is properly ergodic.}
    \item Type $\III_\lambda$, $0<\lambda<1$: The flow of weights is periodic with minimal period $T=-\log \lambda$. In this case, $\A \cong L^\oo(\RR/T\ZZ)$ with $\theta$ corresponding to the periodic shift.
    \item Type $\III_1$: The flow of weights is trivial $\theta\equiv\id$ and $\A=\CC$.
\end{itemize}
% Therefore, the classification of type $\III$ factors is concisely summarized in the equation
% \begin{equation}
%     \lambda = e^{-T},\qquad T= \inf\{ t>0 : \theta_t=\id\},
% \end{equation}
% with the conventions $\min\emptyset=\oo$ and $e^{-\oo}=0$.

\null

In the following, we review a construction due Haagerup and Størmer \cite{haagerup_equivalence_1990}, that associates with a normal positive linear function $\psi$ on $\M$ a positive linear functional $\hat\psi$ on the flow of weights.
If $\psi$ is a normal semifinite weight on $\M$, we denote the spectral projection of the operator $h_\psi= d\tilde\psi/d\tau$ onto the interval $(1,\oo)$ by $e_\psi$, i.e., $e_\psi = 1_{(1,\oo)}(h_\psi)$.

\begin{lemma}[{\cite[Lem.~3.1]{haagerup_equivalence_1990}}]\label{lem:normalization}
    Let $\psi$ be a normal semifinite weight on $\M$ and let $(\Tilde\M,\tilde\theta,\tau)$ be the core of $\M$.
    Then 
    \begin{equation}
        \tau(e_\psi) = \psi(1).
    \end{equation}
\end{lemma}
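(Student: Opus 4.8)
The claim is that $\tau(e_\psi)=\psi(1)$, where $e_\psi=1_{(1,\oo)}(h_\psi)$ is the spectral projection of the density $h_\psi=d\tilde\psi/d\tau$ onto $(1,\oo)$. The plan is to compute $\tau(e_\psi)$ directly using the defining properties of the core $(\Tilde\M,\tilde\theta,\tau)$, namely the trace-scaling relation $\tau\circ\tilde\theta_s=e^{-s}\tau$ from \eqref{eq:core1} together with the covariance $\tilde\theta_s(h_\psi)=e^{-s}h_\psi$ from \eqref{eq:hpsi scaling}. The key idea is that the family of spectral projections $1_{(t,\oo)}(h_\psi)$ as $t$ varies is intertwined by the flow $\tilde\theta$, and that the trace-scaling lets us convert an integral over the flow parameter into the distribution function of $h_\psi$, which in turn reconstructs $\psi(1)=\tilde\psi$ tested against the appropriate reference weight.

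\textbf{The core computation.}
First I would record how the flow moves the spectral projections. Since $\tilde\theta_s$ is a $*$-automorphism and $\tilde\theta_s(h_\psi)=e^{-s}h_\psi$, functional calculus gives
\begin{equation}
    \tilde\theta_s\big(1_{(t,\oo)}(h_\psi)\big) = 1_{(t,\oo)}(e^{-s}h_\psi) = 1_{(e^s t,\oo)}(h_\psi),\qquad s,t\in\RR.
\end{equation}
In particular, applying this to $t=1$ and using trace-scaling,
\begin{equation}
    \tau\big(1_{(e^s,\oo)}(h_\psi)\big) = \tau\big(\tilde\theta_s(e_\psi)\big) = e^{-s}\,\tau(e_\psi).
\end{equation}
Writing $D(t):=\tau(1_{(t,\oo)}(h_\psi))$ for the distribution function of $h_\psi$ relative to $\tau$, this says $D(e^s)=e^{-s}D(1)=e^{-s}\tau(e_\psi)$, i.e. $D(t)=\tau(e_\psi)/t$ for all $t>0$. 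Thus the whole distribution function is pinned down by the single number $\tau(e_\psi)$, and the task reduces to extracting $\tau(e_\psi)$ from $\psi(1)$.

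\textbf{Connecting the distribution function to $\psi(1)$.}
The remaining step is to relate $\tilde\psi$ (whose value at $1$, suitably interpreted, is $\psi(1)$) to the distribution function $D$. Recall $\tilde\psi(y)=\int_{-\oo}^\oo \psi(\tilde\theta_s(y))\,ds$ and that $h_\psi$ is its density with respect to $\tau$, so $\tilde\psi(\placeholder)=\tau(h_\psi^{1/2}\placeholder\,h_\psi^{1/2})$. The point is to test the defining formula $\tilde\psi(y)=\int_{-\oo}^\oo\psi(\tilde\theta_s(y))\,ds$ against a well-chosen $y$ in the core whose $\psi$-average over the flow reproduces the value $\psi(1)$. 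Concretely, using the model core $c(\CC)=(L^\oo(\RR),\theta,\int e^{-s}ds)$ that sits inside $c(\M)$ (the flow translates the $\RR$-variable), one chooses $y$ to be multiplication by $1_{(0,\oo)}$ in the $\RR$-direction: then $\int_{-\oo}^\oo 1_{(0,\oo)}(r-s)\,ds$ picks out a constant, and the trace-scaling normalization $\int e^{-s}\,ds$ produces exactly the factor converting $D$ into $\psi(1)$. Carrying this bookkeeping through should yield $\tau(e_\psi)=\psi(1)$.

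\textbf{The main obstacle.}
I expect the genuine difficulty to be the last step: making rigorous the identification of $\tau(e_\psi)$ with $\psi(1)$ rather than merely fixing the \emph{shape} of $D(t)$. The scaling argument in the second paragraph is clean but only determines $D$ up to the overall constant $\tau(e_\psi)$; it does not by itself compute that constant. Pinning the constant requires genuinely using the definition of the dual weight $\tilde\psi$ and the precise normalization $\tau\circ\tilde\theta_s=e^{-s}\tau$ together, most transparently by reducing to the semifinite picture \eqref{eq:semifinite haagerup Lp}--\eqref{eq:semifinite h}: there $h_\psi=\rho_\psi\ox e^{-t}$, so
\begin{equation}
    e_\psi = 1_{(1,\oo)}(\rho_\psi\ox e^{-t}),
\end{equation}
and $\tau(e_\psi)=(\tau_0\ox\!\int e^{-s}ds)(e_\psi)$ can be evaluated by first integrating out the $\RR$-variable. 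For a positive $\rho_\psi$, the condition $\rho_\psi\,e^{-t}>1$ is $t<\log\rho_\psi$ (in the joint spectral sense), so $\int_{-\oo}^{\log\lambda} e^{-s}\,ds=\lambda$ for each spectral value $\lambda$ of $\rho_\psi$; integrating against $\tau_0$ then gives $\tau(e_\psi)=\tau_0(\rho_\psi)=\psi(1)$. The honest work is to make this reduction legitimate for a general (possibly type $\III$) $\M$ via the canonical-versus-concrete comparison of cores, controlling the unbounded operator $h_\psi$ and the $\tau$-measurability of $e_\psi$ along the way.
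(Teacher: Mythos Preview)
Your scaling argument correctly shows $D(t)=\tau(e_\psi)/t$, and your semifinite computation is correct when $\M$ is semifinite. But as you yourself flag, the semifinite reduction is where the real difficulty lies, and your proposal does not resolve it: the tensor factorization $h_\psi=\rho_\psi\ox e^{-t}$ is specific to semifinite $\M$, and for type $\III$ there is no $\rho_\psi$ to integrate against. Your ``Connecting'' paragraph gestures at the right structure (a copy of $L^\infty(\RR)$ inside the core) but the proposed test function $1_{(0,\infty)}$ leads to a divergent integral, and the argument remains a heuristic rather than a proof.

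The paper's proof avoids any semifinite reduction by working directly with $h_\psi$ in the core. The missing trick is the elementary pointwise identity
\[
\int_{-\infty}^\infty e^t\,s^{-1}\,1_{(1,\infty)}(e^{-t}s)\,dt = 1_{(0,\infty)}(s),
\]
which via functional calculus and $\tilde\theta_t(h_\psi)=e^{-t}h_\psi$ yields $\int_{-\infty}^\infty \tilde\theta_t(e_\psi h_\psi^{-1})\,dt = \supp(\psi)\in\M$. Then, since $h_\psi=d\tilde\psi/d\tau$ and the dual weight satisfies $\tilde\psi(y)=\int\psi(\tilde\theta_t(y))\,dt$, one obtains for $a\in\M$
\[
\tau(a\,e_\psi)=\tilde\psi(a\,e_\psi h_\psi^{-1})=\psi\Big(a\int_{-\infty}^\infty\tilde\theta_t(e_\psi h_\psi^{-1})\,dt\Big)=\psi(a),
\]
and setting $a=1$ finishes. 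The idea you were missing is that multiplying $e_\psi$ by $h_\psi^{-1}$ converts $\tau$ into $\tilde\psi$, and the $\tilde\theta$-average of $e_\psi h_\psi^{-1}$ lands exactly on $\supp\psi\in\M$, where $\psi$ can be evaluated directly --- no semifinite model is needed.
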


Because of its central importance, we sketch the proof given in \cite{haagerup_equivalence_1990}:

\begin{proof}
    The identity $\int_{-\oo}^\oo e^t s^{-1} 1_{(1,\oo)}(e^{-t})\,dt = 1_{(0,\oo)}(s)$ shows $\int_{-\oo}^\oo e^t h_\psi^{-1} 1_{(1,\oo)}(e^{-t}h_\psi)\,dt = \supp(h_\psi)=\supp \psi$, where we used item \ref{it:haagerup L13} of \cref{thm:haagerup L1}.
    Hence, using $\tilde \theta_t(h_\psi) = e^{-t}h_\psi$, we find
    \begin{align*}
        \int_{-\oo}^\oo \tilde\theta_t(e_\psi h_\psi^{-1})\,dt = \int_{-\oo}^\oo e^t h_\psi^{-1} 1_{(1,\oo)}(e^{-t}h_\psi) \,dt =\supp(\psi).
    \end{align*}
    Since $\tilde\psi$ is $\tilde\theta$-invariant, we have 
    \begin{align*}
        \tau(a e_\psi)= \tilde\psi( a e_\psi h_\psi^{-1})  
        &= \psi\bigg( \int_{-\oo}^\oo \tilde\theta_t(ae_\psi h_\psi^{-1})\,dt\bigg)
        = \psi\bigg(a \int_{-\oo}^\oo \tilde\theta_t(e_\psi h_\psi^{-1})\,dt \bigg)= \psi(a)
    \end{align*}
    for $a\in\M$,
    where we used $1_{(0,\oo)}(h_\psi)=\supp(h_\psi)=\supp(\psi)$.
    Setting $a=1$ implies the claim.
\end{proof}

\begin{definition}
    Let $\M$ be a von Neumann algebra with flow of weights $\theta:\RR\acts\A$.
    The \emph{spectral functional} of $\psi\in\M_*^+$ is the normal positive linear functional $\hat\psi\in\A_*^+$ defined by
    \begin{equation}
        \hat\psi(a) = \tau(e_\psi a),\qquad a\in \A=Z(\Tilde\M).
    \end{equation}
\end{definition}

If $\psi$ is a state, we call $\hat \psi$ the \emph{spectral state} of $\psi$.
We list a few basic properties of the map $\psi\mapsto\hat\psi$.
By \cref{lem:normalization}, this map is norm-preserving.
Moreover, it is monotonic, i.e.,
\begin{equation}
    \psi\le \phi \implies \hat\psi\le\hat\phi.
\end{equation}
The flow of the spectral state is
\begin{equation}\label{eq:flow of the spectral state}
    \hat\psi \circ\theta_t = e^{-t}\,\Hat{e^t\psi}, \qquad t\in\RR.
\end{equation}
Monotonicity and \eqref{eq:flow of the spectral state} imply 
\begin{equation}\label{eq:theta-regular}
    \hat\psi\circ\theta_t \ge e^{-t}\hat\psi, \qquad t>0.
\end{equation}
We refer to this property as \emph{$\theta$-regularity}.
We denote the $\theta$-regular state space on $\A$ as 
\begin{equation}\label{eq:theta-regular state space}
    \states_*^\theta(\A) = \{ \chi \in \A_*^+ : \chi\circ\theta_t\ge e^{-t}\chi,\ t>0\}.
\end{equation}
The following is the main theorem of \cite{haagerup_equivalence_1990}:

\begin{theorem}[{\cite{haagerup_equivalence_1990}}]\label{thm:distance spectral states}
    The distance of the unitary orbits of $\psi,\phi\in \M_*^+$ is 
    \begin{equation}
        \inf_{u\in\U(\M)}\, \norm{\psi-u\phi u^*} = \norm{\hat\psi - \hat\phi}.
    \end{equation}
    If $\M$ is type $\III$, then $\nstates(\M)\ni\psi\mapsto\hat\psi\in\states_*^\theta(\A)$ is surjective.
    % \begin{equation}
    %  P^\theta(\A) := \{ \xi\in \A_*^+ : \chi\circ \theta_t \ge e^{-t}\chi, \ t>0\}.
    % \end{equation}
\end{theorem}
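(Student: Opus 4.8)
\emph{Strategy and reductions.} The plan is to pass to the core $c(\M)=(\Tilde\M,\tilde\theta,\tau)$ and argue with the operators $h_\psi\in L^1(\M)$. By the isometry $\M_*\cong L^1(\M)$ of \cref{thm:haagerup L1} and the covariance $h_{u\phi u^*}=uh_\phi u^*$ of item \ref{it:haagerup L12}, we have $\norm{\psi-u\phi u^*}=\tr\abs{h_\psi-uh_\phi u^*}$, and $e_{u\phi u^*}=ue_\phi u^*$. The first observation I would record is that the spectral functional is a $\U(\M)$-invariant: for $a\in\A=Z(\Tilde\M)$ and $u\in\M\subset\Tilde\M$ one has $u^*au=a$, so by traciality $\widehat{u\phi u^*}(a)=\tau(ue_\phi u^*a)=\tau(e_\phi a)=\hat\phi(a)$. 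Hence $\psi\mapsto\hat\psi$ is constant along orbits, and the theorem asserts that it is, transversally, an isometry—exactly parallel to the semifinite-factor statement \cref{thm:majorization}, with the spectral state on $\A$ replacing the spectral scale.

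\emph{Lower bound.} I would prove the contractive estimate $\norm{\hat\psi-\hat\phi}_{\A_*}\le\norm{\psi-\phi}_{\M_*}$ for all $\psi,\phi\in\M_*^+$; applied to the pair $(\psi,u\phi u^*)$ and combined with the invariance above this yields $\inf_u\norm{\psi-u\phi u^*}\ge\norm{\hat\psi-\hat\phi}$. The tool is the self-similarity $\tilde\theta_t(e_\psi)=1_{(e^t,\oo)}(h_\psi)$, which follows from \eqref{eq:hpsi scaling}; together with the layer-cake formula it gives the reconstruction $h_\psi=\int_{-\oo}^\oo e^t\,\tilde\theta_t(e_\psi)\,dt$. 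Since $\tau$ scales under $\tilde\theta$ while elements of $\A$ are moved by the flow, I would derive the estimate $\abs{\tau((e_\psi-e_\phi)a)}\le\tr\abs{h_\psi-h_\phi}$ for central contractions $a$ from this reconstruction together with the monotonicity and norm-preservation of $\psi\mapsto\hat\psi$ recorded before the theorem (the latter resting on \cref{lem:normalization}).

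\emph{Upper bound and the main obstacle.} After reducing to faithful $\psi,\phi$ via support projections, the reverse inequality requires near-optimal $u\in\U(\M)$. Because conjugation leaves $\hat\phi$ fixed, the entire state-dependence surviving under conjugation is the way $e_\psi$ and $e_\phi$ distribute over the center $\A$, and the residual $\norm{\hat\psi-\hat\phi}$ measures the part of this central distribution that no conjugation can remove. On the central region where the distributions of $e_\psi$ and $e_\phi$ already agree, the Murray--von Neumann comparison theory in the semifinite algebra $\Tilde\M$ supplies a partial isometry carrying $e_\phi$ onto $e_\psi$, and the mismatch elsewhere contributes equally to $\tr\abs{h_\psi-uh_\phi u^*}$ and to $\norm{\hat\psi-\hat\phi}$. \textbf{The main obstacle} is that this matching partial isometry must lie in $\M=\Tilde\M^{\tilde\theta}$, i.e.\ must commute with the trace-scaling flow. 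I therefore expect the core of the argument to be an approximation: implement the comparison of projections $\tilde\theta$-equivariantly on spectral slabs $\{e^{n}\le h_\psi<e^{n+1}\}$, on which the flow acts by a single shift, patch the pieces into a genuine element of $\M$, and control the error by $\norm{\hat\psi-\hat\phi}+\eps$.

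\emph{Surjectivity in type $\III$.} Given $\chi\in\states_*^\theta(\A)$ I would reconstruct $\psi$ directly. A positive grade-$1$ operator $h\aff\Tilde\M$ is equivalent to the $\tilde\theta$-decreasing family of projections $\tilde\theta_t(e)\le e$ $(t\ge0)$, $e=1_{(1,\oo)}(h)$, through $h=\int_{-\oo}^\oo e^t\tilde\theta_t(e)\,dt$, and then $\psi\leftrightarrow h_\psi=h$ has $\hat\psi(a)=\tau(ea)$. A short computation using $\tau\circ\tilde\theta_s=e^{-s}\tau$ shows that monotonicity of the family forces $\theta$-regularity of $a\mapsto\tau(ea)$, recovering \eqref{eq:theta-regular}; the crux is the converse. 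Since $\M$ is of type $\III$, its core $\Tilde\M$ is of type $\II_\oo$ and hence has an abundant projection lattice, so I would build from $\chi$ a projection $e\in\Tilde\M$ with $\tau(e\,\cdot\,)\restriction\A=\chi$ whose flow-translates decrease, using the inequality $\chi\circ\theta_t\ge e^{-t}\chi$ precisely to guarantee that the family can be arranged to be $\tilde\theta$-monotone. Finiteness $\tr h=\chi(1)$ follows from \cref{lem:normalization}, completing the construction.
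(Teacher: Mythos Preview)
The paper does not prove this theorem. It is quoted verbatim from Haagerup--Størmer \cite{haagerup_equivalence_1990} as ``the main theorem'' of that reference, and no proof or sketch is given in the paper; all subsequent sections simply invoke it as a black box. So there is no proof in the paper to compare your proposal against.

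That said, a few remarks on your sketch relative to the actual Haagerup--Størmer argument. Your lower bound strategy is correct in spirit: unitary invariance of $\hat\psi$ together with a contraction estimate $\norm{\hat\psi-\hat\phi}\le\norm{\psi-\phi}$ does yield the easy inequality. However, your stated route to that contraction (deriving $\abs{\tau((e_\psi-e_\phi)a)}\le\tr\abs{h_\psi-h_\phi}$ from the layer-cake reconstruction and the trace-scaling) is too vague to be a proof; in \cite{haagerup_equivalence_1990} the contraction is obtained by a careful monotonicity argument using the operator inequality $e_\psi-e_\phi\le e_{(\psi-\phi)_+}$ for the positive part. Your upper bound paragraph correctly identifies the obstacle---producing the matching partial isometry inside the fixed-point algebra $\M=\Tilde\M^{\tilde\theta}$ rather than merely in $\Tilde\M$---but the proposed fix (work on spectral slabs where the flow ``acts by a single shift'' and patch) is not how the argument actually runs and, as stated, does not work: the flow does not act discretely on any slab, and the patching step hides exactly the difficulty you are trying to overcome. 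Haagerup--Størmer instead prove the upper bound via a sequence of reductions (to factors, then to approximating semifinite or ITPFI subalgebras via conditional expectations) combined with a delicate comparison theorem for the projections $e_\psi$, and it occupies the bulk of their paper. Your surjectivity sketch is in the right direction.
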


In particular, \cref{thm:distance spectral states} states that two normal states $\psi,\phi$ on $\M$ are approximately unitarily equivalent if and only if $\hat\psi=\hat\phi$, where approximate unitary equivalence is defined as
\begin{equation}
    \psi\sim\phi \quad :\iff\quad \inf_{u\in\U(\M)} \,\norm{\psi-u\phi u^*}=0.
\end{equation}
Comparing \cref{thm:distance spectral states} with \cref{thm:majorization} shows that the spectral states on general von Neumann algebras play a similar role to that of the spectral scale and distribution function in semifinite factors.
In fact, the distribution function is just a special representation of the spectral states (see below).

Another important consequence of \cref{thm:distance spectral states} is that it bridges the gap between the classification of type $\III$ factors in terms of the state space diameter (see \cref{sec:basics}) and the textbook version of the classification in terms of the flow of weights, as explained at the beginning of this subsection.
Indeed, \cref{thm:distance spectral states} shows that the state space diameter (see \eqref{eq:state space diam}) of a type $\III$ factor is
\begin{equation}\label{eq:state space diam of HS}
    \diam \nstates(\M)/\!\sim\ = \diam \states_*^\theta(\A),
\end{equation}
where the diameter of the $\theta$-regular state space is taken with respect to the norm distance.
The right-hand side (RHS) of \eqref{eq:state space diam} can be computed from the flow of weights.
It is given by $2(1-e^{-T/2})/(1+e^{-T/2})$, where $T\in\bar\RR^+$ is the minimal period of $\theta$.
Thus, \eqref{eq:state space diam} follows from $\lambda=e^{-T}$ with $e^{-\oo}=0$.
% Using that the flow of weights of a type $\III_\lambda$, $0<\lambda\le1$ factor is given by rotations on $\RR/T\ZZ$, the RHS of \eqref{eq:state space diam} can be evaluated, giving $2(1-\lambda^{1/2})/(1+\lambda^{1/2})$ \cite[Prop.~10.4]{haagerup_equivalence_1990}.
% We will show (see \cref{thm:state space diam via flow}) that the RHS of can be computed equivalently as
% \begin{equation}
%     \diam\states_*^\theta(\A)= \sup_{\chi\in \states_*^\theta(\A)}\sup_{t\ge0} \,\norm{\chi-\chi\circ\theta_t}.
% \end{equation}

To better understand the Haagerup-Størmer spectral states, we consider the case of a semifinite factor $\M$ with a \nsf trace $\tau_0$.
Recall the description of the core from \eqref{eq:semifinite core} and the description of the Haagerup $L^1$-space from \eqref{eq:semifinite haagerup Lp} and \eqref{eq:semifinite h}.
If $\psi\in\M_*^+$, then \eqref{eq:semifinite h} implies $e_\psi = 1_{(1,\oo)}(\int^\oplus_\RR \rho_\psi e^{-t}\,dt) = \int_{\RR}^\oplus 1_{(e^{t},\oo)}(\rho_\psi)\,dt$ (where we changed from "dummy variable notation" to direct integral notation, i.e., $\rho_\psi\ox e^t\equiv \int^\oplus_\RR e^t \rho_\psi\,dt$).
Thus, if $f\in L^\oo(\RR)=Z(\Tilde\M)$, then
\begin{equation*}
    \hat\psi(f) = \tau(f e_\psi) 
    = \int_{-\oo}^\oo \tau_0(1_{(e^t,\oo)}(\rho_\psi)) f(t)\, e^{-t}dt 
    = \int_0^\oo D_\psi(t) f(\log t)\,dt,
\end{equation*}
where $D_\psi$ denotes the distribution function of $\psi$ relative to $\tau_0$ (see \cref{sec:spectral states}).
Therefore, if we use the map $f\mapsto f\circ\log$ to represent the flow of weights $\A = L^\oo(\RR^+)$ with dilations $\theta_t(f)(s)=f(se^{-t})$, then the density of the normal positive linear functional $\hat \psi$ with respect to the Lebesgue measure $dt$, is precisely the distribution function $D_\psi$.
Thus, \cref{thm:distance spectral states} reduces to the statement $\inf_{u\in\U(\M)}\norm{u\psi u^*-\phi}=\norm{D_\psi-D_\phi}_{L^1(\RR^+)}$ that we already know from \cref{thm:majorization}.

\subsection{Useful Lemmas and their consequences}\label{sec:useful}

In this section, we present a collection of useful tools for our later analysis of quantum entanglement and von Neumann algebraic quantum information theory.

\subsubsection{Partial isometries vs.\ unitaries}

In the following, we collect results summarizing the interplay of contractions, partial isometries, and unitaries with regard to their action on vectors and states.
Recall that we denote by $\U(\M)$ the unitary group and by $\V(\M)$ the set of partial isometries in a von Neumann algebra $\M$.
The set of contractions in $\M$, i.e., the unit ball, is denoted $B(\M)$.
We say that an operator $a$ on a Hilbert space acts isometrically on a vector $\Phi$ if $\norm{a\Phi}=\norm\Phi$.

\begin{lemma}\label{lem:isometric-contraction}
        Let $s\in \M$ be a contraction with polar decomposition $s=v\abs s$ and let $\Phi\in \H$ be a unit vector. Then
        \begin{enumerate}[(i)]
            \item\label{it:isometric-contraction1}
            If $s$ acts almost isometrically on $\Phi$, then $s\Phi\approx v\Phi$. Quantitatively, 
            \begin{equation}\label{eq:isometric-contraction1}
                \norm{s\Phi-v\Phi} \le \big( 2-2\norm{s\Phi}^2 \big)^{1/2}.
            \end{equation}
            If $s$ acts isometrically on $\Phi$, then it acts isometrically on $a'\Phi$ and $sa'\Phi=va'\Phi$ for all $a'\in\M'$.
            \item\label{it:isometric-contraction2}
            If $s$ acts almost isometrically on $\Phi$, it can be approximated with unitaries:
            If $\eps>0$ is such that $\norm{s\Phi}\ge 1-\eps$, there is a $u\in\U(\M)$ with
            \begin{equation}
                \norm{s\Phi-u\Phi}\le 3\eps^{1/2}.
            \end{equation}
        \end{enumerate}
\end{lemma}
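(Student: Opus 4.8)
\emph{Step 1: reduce to the partial isometry $v$.} Write $p=v^*v$ and $q=vv^*$ for the initial and final projection of $v$, and set $\Phi_0=(1-p)\Phi$. The plan is first to sharpen the estimate of part~\ref{it:isometric-contraction1}. Since $s\Phi-v\Phi=v(\abs s-1)\Phi$ and $\abs s=p\abs s=\abs s p$, one has the identity
\begin{equation*}
    \norm{s\Phi-v\Phi}^2=\ip\Phi{(\abs s-p)^2\Phi},
\end{equation*}
and because $(\abs s-p)^2+\abs s^2=2\abs s^2-2\abs s+p\le 1$ (check on $p\H$ and on $(1-p)\H$) together with $\abs s^2=s^*s\le p$, this yields the sharp bounds
\begin{equation*}
    \norm{s\Phi-v\Phi}^2\le 1-\norm{s\Phi}^2\le 2\eps,\qquad \norm{\Phi_0}^2=1-\ip\Phi{p\Phi}\le 1-\norm{s\Phi}^2\le 2\eps.
\end{equation*}
The structural point I would exploit is that $s\Phi$ and $v\Phi$ both lie in $q\H$, so any correction to $v\Phi$ whose range sits under $1-q$ will be orthogonal to $s\Phi-v\Phi$.

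\emph{Step 2: complete $v$ to a unitary when possible.} By the comparison theorem there is a central projection splitting $\M$ into a part where $1-p\preceq 1-q$ and a part where $1-q\preceq 1-p$; since the two cases are handled symmetrically and the resulting unitaries recombine across a central projection, I would argue on each summand and assume one comparison holds. If the defects are Murray--von Neumann equivalent, choose a partial isometry $w\in\V(\M)$ with $w^*w=1-p$, $ww^*=1-q$, so that $u:=v+w\in\U(\M)$. Then $u\Phi-v\Phi=w\Phi_0\in(1-q)\H$ is orthogonal to $s\Phi-v\Phi\in q\H$, and Pythagoras gives
\begin{equation*}
    \norm{s\Phi-u\Phi}^2=\norm{s\Phi-v\Phi}^2+\norm{w\Phi_0}^2\le 2\eps+2\eps=4\eps,
\end{equation*}
i.e.\ $\norm{s\Phi-u\Phi}\le 2\eps^{1/2}\le 3\eps^{1/2}$.

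\emph{Step 3: the obstacle and its resolution.} The main difficulty is that in infinite factors the defect projections $1-p$ and $1-q$ need not be equivalent, so $v$ admits \emph{no} unitary completion. The key tool to overcome this is that every isometry in $\M$ is a strong-operator limit of unitaries in $\M$ (its Wold decomposition exhibits it as a unitary part plus unilateral shifts built from the mutually orthogonal family $\hat v^{\,n}(1-\hat v\hat v^*)\hat v^{*n}$, and each such shift is the strong limit of the cyclic unitaries assembled from that family inside $\M$). If $1-p\preceq 1-q$, complete $v$ to an \emph{isometry} $\hat v=v+w$ with $w^*w=1-p$, $ww^*\le 1-q$; as in Step~2, $\hat v\Phi-v\Phi=w\Phi_0\perp q\H$, so $\norm{s\Phi-\hat v\Phi}\le 2\eps^{1/2}$, and approximating $\hat v$ by unitaries $u_N\to\hat v$ strongly gives $\norm{s\Phi-u_N\Phi}\le 2\eps^{1/2}+o(1)\le 3\eps^{1/2}$ for large $N$.

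\emph{Step 4: the co-isometric direction.} When instead $1-q\prec 1-p$, $v$ can only be completed to a co-isometry, and unitaries cannot approximate a co-isometry at vectors meeting its kernel; this is the genuinely delicate case and the source of the constant. I would therefore pass to the adjoint: complete $v^*$ (whose defects satisfy $1-q\preceq 1-p$) to an isometry $\widehat{v^*}$, approximate it strongly by unitaries $u_N\to\widehat{v^*}$, and set $u:=u_N^*$. Using $\widehat{v^*}(v\Phi)=v^*v\Phi=p\Phi$ one gets $u_N(v\Phi)\to p\Phi$, hence $u_N^*(p\Phi)\to v\Phi$, so that
\begin{equation*}
    \norm{u\Phi-v\Phi}\le \norm{u_N^*(p\Phi)-v\Phi}+\norm{u_N^*\Phi_0}\le o(1)+\norm{\Phi_0}\le \sqrt{2\eps}+o(1).
\end{equation*}
Here the correction $u_N^*\Phi_0$ is not orthogonal to $s\Phi-v\Phi$, so I would use the triangle inequality together with the sharp Step~1 bound $\norm{s\Phi-v\Phi}\le\sqrt{2\eps}$ to obtain $\norm{s\Phi-u\Phi}\le 2\sqrt{2\eps}+o(1)$. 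Since $2\sqrt2<3$, choosing the strong approximation fine enough gives $\norm{s\Phi-u\Phi}\le 3\eps^{1/2}$, which is exactly where the stated constant comes from. The hardest point, and the one I would be most careful about, is precisely this co-isometric case: it forces both the sharpened first estimate and the adjoint-isometry approximation, and it is the reason the clean Pythagorean constant $\sqrt6$ is relaxed to $3$.
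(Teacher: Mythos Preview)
Your proof is correct and takes a genuinely different route from the paper's. The paper argues part~\ref{it:isometric-contraction2} via the Russo--Dye theorem: writing $(1-\eps/2)s=\sum_i p_iu_i$ as a finite convex combination of unitaries in $\M$, one computes
\[
\sum_i p_i\,\norm{u_i\Phi-(1-\eps/2)s\Phi}^2=1-(1-\eps/2)^2\norm{s\Phi}^2\le 4\eps,
\]
so some $u_i$ already satisfies $\norm{u_i\Phi-(1-\eps/2)s\Phi}\le 2\eps^{1/2}$, and the triangle inequality finishes. No polar decomposition, no comparison of projections, no case split.

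Your approach instead passes through the partial isometry $v$, completes it to an isometry (or its adjoint to one) using the comparison theorem, and then invokes the fact that every isometry in $\M$ is a strong limit of unitaries in $\M$ via its Wold decomposition. The paper's argument is considerably shorter and treats all von Neumann algebras uniformly; yours is more constructive, yields the sharper estimate $\norm{s\Phi-v\Phi}^2\le 1-\norm{s\Phi}^2$ for part~\ref{it:isometric-contraction1} as a by-product, and makes transparent exactly where the constant~$3$ is needed (the co-isometric case, where the orthogonality between $q\H$ and $(1-q)\H$ is lost and one must fall back on the triangle inequality). The adjoint trick in Step~4---deducing $u_N^*(p\Phi)\to v\Phi$ from $u_N(v\Phi)\to p\Phi$ by applying the unitary $u_N^*$---cleanly circumvents the fact that co-isometries are generally not strong limits of unitaries.
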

\begin{proof}
    \ref{it:isometric-contraction1}: We set $\delta= 1-\norm{s\Phi}^2$.
    Then $1-\delta = \norm{s\Phi}^2\le  \norm{\abs s\Phi}^2 =\ip\Phi{\abs s^2\Phi} \le \ip\Phi{\abs s\Phi}$.
    Thus, we have $\norm{s\Phi-v\Phi}^2 \le \norm{v(\abs s-1)\Phi}^2 \le \norm{\abs s\Phi-\Phi}^2 \le 1+\norm{\abs s\Phi}^2-2\ip\Phi{\abs s\Phi} \le 2 (1- \ip\Phi{\abs s\Phi}) \le 2\delta$. Rearranging shows \eqref{eq:isometric-contraction1}.
    If $s$ acts isometrically and if $a'\in\M'$, we have $sa'\Phi=a's\Phi = a'v\Phi$, which implies $\norm{a'v\Phi}^2 = \ip{\Phi}{a'^*a'v^*v\Phi} = \ip\Phi{a'^*a'\Phi}=\norm{a'\Phi}$.

    \ref{it:isometric-contraction2}:
    The argument is based on \cite[Lem.~2.4]{haagerup_equivalence_1990}.
    Let $\delta=\eps/2$. 
    By the Russo-Dye Theorem \cite{russo1966note} (see also \cite{kadison_means_1985}), we have $(1-\delta)s=\sum_i p_i u_i$ for a finite probability distribution $(p_i)$ and unitaries $u_i$.
    Then
    \begin{align*}
        \sum p_i\norm{u_i\Phi - (1-\delta)s\Phi}^2 
        &= \sum p_i \big( 1 + (1-\delta)^2 \norm{s\Phi}^2  - 2(1-\delta) \Re \ip{u_i \Phi}{s\Phi} \big)\\
        &= 1 + (1-\delta)^2 \norm{s\Phi}^2  - 2(1-\delta) \Re \ip{(1-\delta)s \Phi}{s\Phi} \\
        &= 1- (1-\delta)^2 \norm{s\Phi}^2
         \le 1- (1-\eps/2)^2 (1-\eps)^2 \le 4\eps.
    \end{align*}
    Thus, there is an index $i$ such that $u=u_i$ satisfies $\norm{u\Phi-(1-\delta)s\Phi} \le 2\eps^{1/2} $ and, hence,
    \begin{equation*}
        \norm{u\Phi - s\Phi}^2 \le (\norm{u\Phi-(1-\delta)s\Phi} + \delta)^2 \le (2\eps^{1/2}+\eps/2)^2 \le 9\eps.\qedhere
    \end{equation*}
\end{proof}

As a consequence, we have:

\begin{corollary}\label{cor:isometrically acting pi's}
    \begin{enumerate}
        \item 
            If $v$ is a partial isometry in a von Neumann algebra $\M$ on $\H$ such that $v$ acts isometrically on a vector $\Phi\in\H$, then for every $\eps>0$, there exists a unitary $u\in\M$ such that $\norm{u\Phi-v\Phi}<\eps$.
        \item
            If $v$ is a partial isometry and $\phi\in\M_*^+$ such that $\phi(v^*v)=\phi(1)$ ($v\phi v^*$ and $\phi$ have the same norm), then, for every $\eps>0$ there exists a unitary $u\in\M$ such that $\norm{u\phi u^*-v\phi v^*}<\eps$.
    \end{enumerate}
\end{corollary}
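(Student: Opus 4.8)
The plan is to reduce both statements to part \ref{it:isometric-contraction2} of \cref{lem:isometric-contraction}, translating between the action of $v$ on vectors and its action on states. The first statement is essentially immediate: a partial isometry is a contraction, and the hypothesis $\norm{v\Phi} = \norm\Phi$ says precisely that $v$ acts isometrically on $\Phi$. Normalizing to the unit vector $\hat\Phi = \Phi/\norm\Phi$ (the assertion is trivial when $\Phi = 0$), we have $\norm{v\hat\Phi} = 1 \ge 1 - \eps'$ for every $\eps' > 0$, so part \ref{it:isometric-contraction2} of \cref{lem:isometric-contraction} applied with $s = v$ yields a unitary $u \in \U(\M)$ with $\norm{v\hat\Phi - u\hat\Phi} \le 3\,\eps'^{1/2}$; hence $\norm{u\Phi - v\Phi} = \norm\Phi\,\norm{u\hat\Phi - v\hat\Phi} < \eps$ once $\eps'$ is small enough.

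For the second statement I would first realize $\phi$ as a vector state in a faithful representation. Putting $\M$ into its standard form $(\H, J, \P)$ and invoking part \ref{it:std form5} of \cref{thm:std form}, there is a vector $\Phi \in \P$ with $\phi(a) = \ip\Phi{a\Phi}$. The norm condition then becomes $\norm{v\Phi}^2 = \ip\Phi{v^*v\Phi} = \phi(v^*v) = \phi(1) = \norm\Phi^2$, i.e.\ $v$ acts isometrically on $\Phi$, so the first statement supplies a unitary $u \in \U(\M)$ with $\norm{u\Phi - v\Phi}$ as small as desired.

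It remains to pass from closeness of vectors back to closeness of states. The vectors $u\Phi$ and $v\Phi$ implement $u\phi u^*$ and $v\phi v^*$, since $u\phi u^*(a) = \phi(u^*au) = \ip{u\Phi}{a\,u\Phi}$ and similarly for $v$. Combined with the elementary bound $\abs{\ip\xi{a\xi} - \ip\eta{a\eta}} \le \norm a\,\norm{\xi - \eta}\,(\norm\xi + \norm\eta)$, which follows from $\ip\xi{a\xi} - \ip\eta{a\eta} = \ip\xi{a(\xi - \eta)} + \ip{\xi - \eta}{a\eta}$, and the equalities $\norm{u\Phi} = \norm{v\Phi} = \norm\Phi$, this gives $\norm{u\phi u^* - v\phi v^*} \le 2\,\norm\Phi\,\norm{u\Phi - v\Phi}$, which is $< \eps$ once $\norm{u\Phi - v\Phi}$ is small enough. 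The one point requiring care — rather than a genuine obstacle — is to use the standard form (or another faithful vector realization of $\phi$) instead of the GNS representation of $\phi$, so that the approximating unitary genuinely lies in $\M$ rather than in a quotient.
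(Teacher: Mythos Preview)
Your proof is correct and follows essentially the same route as the paper: both parts are reduced to \cref{lem:isometric-contraction}\ref{it:isometric-contraction2}, and for the second part you pass to a vector implementation of $\phi$ in the standard form, verify that $v$ acts isometrically on that vector, apply the first part, and then convert the vector estimate back to a state estimate via the bilinear bound. The paper's argument is identical in structure; it phrases the final estimate as $\norm{v\phi v^*-u\phi u^*}\le \norm{(v-u)\Phi}\,\norm{(v+u)\Phi}$, which is the same inequality you derived.
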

\begin{proof}
    The first item is clear from \cref{it:isometric-contraction2} of \cref{lem:isometric-contraction}.
    The second item follows from the first item by considering the representing vector $\Phi\in\P$ in a standard form $(\H,J,\P)$ of $\M$ (or any other purification of $\phi$): 
    The assumption $\phi(v^*v)=\phi(1)$ implies $v^*v\ge \supp(\phi)$, which implies $\norm{v\Phi}=\norm{\Phi}$.
    Thus, there exists a unitary $u$ such that $\norm{(v-u)\Phi}< \eps/(2\norm\Phi)$.
    Consequently, we have  $\norm{v\phi v^*-u\phi u^*}\le \norm{(v-u)\Phi}\norm{(v+u)\Phi} < \eps$.
\end{proof}

The next Lemma shows that every isometrically acting isometry can be replaced by a unitary on a larger Hilbert space:

\begin{lemma}\label{lem:unitary extension}
    Let $v\in \M$ be a partial isometry.
    If $\K$ is an infinite-dimensional Hilbert space and $\Omega\in \K$ is a unit vector, then there is a unitary $u\in \M\barox\B(\K)$ such that
    \begin{equation}
        v\ox \ket\Omega = u (v^*v\ox \ket{\Omega}).
    \end{equation}
\end{lemma}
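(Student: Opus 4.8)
The plan is to construct $u$ by extending the partial isometry $V_0 := v\ox\kettbra\Omega \in \M\barox\B(\K)$ to a unitary. Write $p=v^*v$ and $q=vv^*$ for the source and range projections of $v$, and set $\N := \M\barox\B(\K)$. Then $V_0$ is a partial isometry with source projection $p\ox\kettbra\Omega$ and range projection $q\ox\kettbra\Omega$, and it already realizes the desired identity on the relevant subspace: for $\Psi\in\H$ one computes $V_0(p\Psi\ox\Omega)=v\Psi\ox\Omega$, i.e.\ $V_0(v^*v\ox\ket\Omega)=v\ox\ket\Omega$. It therefore suffices to produce a partial isometry $W\in\N$ whose source is $1-p\ox\kettbra\Omega$ and whose range is $1-q\ox\kettbra\Omega$, the orthogonal complements of the source and range of $V_0$; then I would set $u:=V_0+W$.

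The existence of such a $W$ amounts to the Murray--von Neumann equivalence $1-p\ox\kettbra\Omega \sim 1-q\ox\kettbra\Omega$ in $\N$, and the key step—the one genuine use of $\dim\K=\infty$—will be to show that each complement is equivalent to $1_\N$. Fix an orthonormal basis $\{e_n\}_{n\ge0}$ of $\K$ with $e_0=\Omega$ and let $S\in\B(\K)$ be the shift $Se_n=e_{n+1}$. Then $p\ox S\in\N$ is a partial isometry with $(p\ox S)^*(p\ox S)=p\ox 1$ and $(p\ox S)(p\ox S)^*=p\ox(1-\kettbra\Omega)$, so $p\ox 1\sim p\ox(1-\kettbra\Omega)$: the rank-one-in-$\K$ defect is absorbed. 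Adding this to the trivial equivalence $p^\perp\ox1\sim p^\perp\ox1$—legitimate since the sources $p\ox1,\,p^\perp\ox1$ are orthogonal and so are the ranges $p\ox(1-\kettbra\Omega),\,p^\perp\ox1$—yields
\begin{equation*}
    1_\N = p^\perp\ox1 + p\ox1 \;\sim\; p^\perp\ox1 + p\ox(1-\kettbra\Omega) = 1-p\ox\kettbra\Omega .
\end{equation*}
The identical computation with $q$ in place of $p$ gives $1_\N\sim 1-q\ox\kettbra\Omega$, and transitivity produces the required $W$.

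It then remains to assemble $u:=V_0+W$ and check the conclusion. Since the source $p\ox\kettbra\Omega$ of $V_0$ is orthogonal to the source $1-p\ox\kettbra\Omega$ of $W$, and likewise for the ranges, the cross terms $V_0^*W$ and $W^*V_0$ vanish, so $u^*u=p\ox\kettbra\Omega+(1-p\ox\kettbra\Omega)=1$ and $uu^*=1$; thus $u\in\U(\N)$. Finally, $W$ annihilates the range of $p\ox\kettbra\Omega$, which contains the range of $v^*v\ox\ket\Omega$, so on that subspace $u$ agrees with $V_0$, giving $u(v^*v\ox\ket\Omega)=V_0(v^*v\ox\ket\Omega)=v\ox\ket\Omega$. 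Everything outside the absorption argument is routine bookkeeping with orthogonal partial isometries; I expect the crux to be isolating and verifying the equivalence $p\ox1\sim p\ox(1-\kettbra\Omega)$, which is precisely the point where the infinite-dimensionality of $\K$ is indispensable.
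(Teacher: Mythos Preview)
Your proof is correct and follows the same overall strategy as the paper: both identify the partial isometry $V_0=v\ox\kettbra\Omega$ and show that its source and range complements are Murray--von Neumann equivalent in $\M\barox\B(\K)$, then extend to a unitary. The difference lies in how this equivalence is established. The paper first reduces to the case where $\M$ is a factor by disintegration, and then argues that $1-p\ox\kettbra\Omega$ and $1-q\ox\kettbra\Omega$ are both \emph{infinite} projections in the factor $\M\barox\B(\K)$, hence automatically equivalent. Your route is more explicit and avoids disintegration entirely: you exhibit the shift-based partial isometry $p\ox S$ to prove $p\ox 1\sim p\ox(1-\kettbra\Omega)$ directly, which yields $1_\N\sim 1-p\ox\kettbra\Omega$ (and similarly for $q$) in any von Neumann algebra $\M$. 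Your argument is thus more elementary and constructive; the paper's is terser but leans on the structure theory of factors and disintegration.
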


\begin{proof}
    By disintegration, we may assume that $\M$ is a factor.
    Recall that a partial isometry $w$ can be extended to a unitary if and only if $1-ww^*$ and $1-w^*w$ are Murray-von Neumann equivalent.
    We set $w=v\ox \kettbra\Omega$. Then $1-w^*w = 1-v^*v\ox \kettbra\Omega$ and $1-ww^*=1-vv^*\ox \kettbra\Omega$ are both infinite projections in a factor and, hence, Murray-von Neumann equivalent.
    Hence, there is a unitary $u\in\M\barox\B(\K)$ such that $w = w^*w u = uw^*w$.
    Thus, we have $(v\ox \ket\Omega)\Psi = w( \Psi\ox\Omega) = uw^*w(\Psi\ox\Omega) = u((vv^*\Psi)\ox\Omega) = u(v^*v\ox\ket\Omega)\Psi$.
\end{proof}

To state our next Lemma, we introduce a bit of notation that will be useful in later sections.
For elements $x,y$ of a metric space $(X,d)$, we write  $x\approx_\eps y$ if $d(x,y)<\eps$.
We will use this notation freely in cases where the metric is clear from the context.
For instance, for normal states $\phi,\psi$ on a von Neumann algebra $\M$, $\psi\approx_\eps\phi$ means $\norm{\psi-\phi}<\eps$.

\begin{lemma}[{\cite[Thm.~2.2 \& Lem.~2.4]{haagerup_equivalence_1990}}]\label{lem:approximate unitary equivalence}
    Let $\phi,\psi\in\nstates(\M)$. The following are equivalent:
    \begin{enumerate}[(a)]
        \item approximate unitary equivalence: for all $\eps>0$, there exist unitaries $u\in\M$ with $u\psi u^*\approx_\eps \phi$;
        \item for all $\eps>0$, there is a $v\in\V(\M)$ with $vv^*=\supp(\phi)$, $v^*v=\supp(\psi)$, and $v\psi v^* \approx_\eps \phi$;
        \item for all $\eps>0$, there is a contraction $s\in \M$ such that $s\psi s^* \approx_\eps \phi$.
    \end{enumerate}
\end{lemma}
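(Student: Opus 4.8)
The plan is to prove the cyclic implications, of which $(a)\Rightarrow(c)$ and $(b)\Rightarrow(c)$ are immediate: unitaries and partial isometries are contractions, so $s=u$ resp.\ $s=v$ already witnesses $(c)$. All the content sits in $(c)\Rightarrow(a)$ and in the passage from $(c)$ to the \emph{exact} support data of $(b)$. Throughout I would fix a standard form $(\H,J,\P)$ of $\M$ (or any common purification) and implement $\psi,\phi$ by unit vectors $\Psi,\Phi$, so that $\psi=\omega_\Psi$ and $\phi=\omega_\Phi$ with $\omega_\Xi(a)=\ip\Xi{a\Xi}$. The point is that $s\psi s^*=\omega_{s\Psi}$, which converts a hypothesis about states into one about vectors, where I can invoke the isometric-action estimates of \cref{lem:isometric-contraction}.

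For $(c)\Rightarrow(a)$: since $\psi,\phi$ are normalized, a contraction $s$ with $s\psi s^*\approx_\eps\phi$ acts almost isometrically on $\Psi$, because $\norm{s\Psi}^2=\psi(s^*s)=(s\psi s^*)(1)\ge 1-\eps$. I would feed this into \ref{it:isometric-contraction2} of \cref{lem:isometric-contraction} to get a unitary $u\in\U(\M)$ with $\norm{s\Psi-u\Psi}\le 3\eps^{1/2}$, then use the elementary bound $\norm{\omega_\Xi-\omega_{\Xi'}}\le\norm{\Xi-\Xi'}\,(\norm\Xi+\norm{\Xi'})$ to get $\norm{u\psi u^*-s\psi s^*}\le 6\eps^{1/2}$, hence $\norm{u\psi u^*-\phi}\le\eps+6\eps^{1/2}$. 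Letting $\eps\to 0$ gives $(a)$; together with the trivial $(a)\Rightarrow(c)$ this yields $(a)\Leftrightarrow(c)$.

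For $(c)\Rightarrow(b)$ the difficulty is the conditions $v^*v=\supp\psi$ and $vv^*=\supp\phi$. Writing $p=\supp\psi$, $q=\supp\phi$, the missing ingredient — and what I expect to be the main obstacle — is the Murray–von Neumann equivalence $p\sim q$; without it no $v$ as in $(b)$ can exist, so this equivalence is forced by the lemma. I would derive it from the already-proven $(a)$: disintegrating $\M$ into factors via \eqref{eq:direct integral} and arguing fiberwise, in the type $\III$ case all nonzero projections are equivalent, while in the semifinite case \cref{thm:majorization} turns approximate unitary equivalence into $\lambda_\psi=\lambda_\phi$, so that $\tau(p)=\abs{\{\lambda_\psi>0\}}=\abs{\{\lambda_\phi>0\}}=\tau(q)$ and hence $p\sim q$.

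Granting $p\sim q$, I would fix one partial isometry $v_0\in\V(\M)$ with $v_0^*v_0=p$ and $v_0v_0^*=q$, and set $\phi_0=v_0\psi v_0^*$, a state on $q\M q$ with full support $q$. Since $\psi(v_0^*v_0)=\psi(p)=\psi(1)$, \cref{cor:isometrically acting pi's} gives $\phi_0\sim\psi$ (approximate unitary equivalence), so by transitivity $\phi_0\sim\phi$. I would then absorb the remaining discrepancy into $q\M q$: for unitaries $u\in\M$ with $u\phi_0u^*\approx\phi$, the compression $\tilde w=quq\in q\M q$ is a contraction with $\tilde w\phi_0\tilde w^*=q(u\phi_0u^*)q\approx q\phi q=\phi$, so applying the already-established implication $(c)\Rightarrow(a)$ \emph{inside} $q\M q$ produces a genuine unitary $w\in\U(q\M q)$ with $w\phi_0w^*\approx\phi$. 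Then $v=wv_0$ is a partial isometry in $\M$ with $v^*v=v_0^*qv_0=p$, $vv^*=wqw^*=q$, and $v\psi v^*=w\phi_0w^*\approx\phi$, which is exactly $(b)$. The only genuinely hard step is the support equivalence $p\sim q$; everything else is bookkeeping built on \cref{lem:isometric-contraction} and \cref{cor:isometrically acting pi's}.
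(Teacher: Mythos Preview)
The paper itself does not prove this lemma; it is stated with a citation to Haagerup--St\o rmer and the text moves on. So the only question is whether your argument is correct and self-contained using the paper's toolkit, and it is.

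Your $(c)\Rightarrow(a)$ via \cref{lem:isometric-contraction}\ref{it:isometric-contraction2} is exactly the intended mechanism (that lemma is itself adapted from \cite[Lem.~2.4]{haagerup_equivalence_1990}), and your reduction of $(c)\Rightarrow(b)$ to first establishing $\supp\psi\sim\supp\phi$ and then correcting by a unitary inside $q\M q$ is clean and correct. One small point worth making explicit: your disintegration argument for $p\sim q$ tacitly uses that global approximate unitary equivalence descends fiberwise. This is true but needs the standard subsequence trick --- from $\int\norm{u_{n,x}\psi_x u_{n,x}^*-\phi_x}\,d\nu(x)\to 0$ one extracts a subsequence converging a.e., after first observing (as you implicitly do) that $\psi$ and $\phi$ agree on the center so that the fiberwise normalizations match. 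Alternatively you could avoid disintegration altogether: from $u_n\psi u_n^*\to\phi$ with $\supp(u_n\psi u_n^*)\sim p$, the lower semicontinuity of the Murray--von Neumann class under norm limits of states (this is \cref{lem:MvN lsc}, proved later but logically independent of the present lemma) gives $q\preceq p$, and symmetry yields $p\sim q$.
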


We also note the following fact about the topology and the completions of the unitary group:

\begin{lemma}[{\cite[Par.~I.3.2.9]{blackadar_operator_2006}}]\label{lem:topologies-on-the-unitary-group}
    Let $\M$ be a von Neumann algebra on $\H$.
    The weak, strong, and strong* topologies coincide on the unitary group $\U(\M)$.
    The unitary group $\U(\M)$ is a strong*-closed subspace of $\B(\H)$.
    Depending on $\M$, it can fail to be weakly-closed or strongly-closed.
    % However, the closures of $\U(\M)$ in $\M$ are different:
    % \begin{enumerate}
    %     \item the weak closure is the unit ball $B(\M)$;
    %     \item the strong closure is the set of isometries in $\M$;
    %     \item the unitary group is closed in the strong* topology.
    % \end{enumerate}
\end{lemma}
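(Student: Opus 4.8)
The plan is to handle the three assertions separately, in increasing order of subtlety, and throughout I would use that $\M$, being a von Neumann algebra, is already closed in each of the operator topologies.

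First I would show that the three topologies agree on $\U(\M)$. Since the strong* topology is finer than the strong, which in turn is finer than the weak, it suffices to prove the two reverse implications on unitary nets. For weak $\Rightarrow$ strong, suppose $u_\alpha\to u$ weakly with all $u_\alpha,u\in\U(\M)$. For any $\Psi\in\H$, unitarity gives $\norm{u_\alpha\Psi}=\norm{u\Psi}=\norm\Psi$, so
\[
    \norm{u_\alpha\Psi-u\Psi}^2 = 2\norm\Psi^2 - 2\Re\ip{u\Psi}{u_\alpha\Psi},
\]
and weak convergence forces $\ip{u\Psi}{u_\alpha\Psi}\to\norm{u\Psi}^2=\norm\Psi^2$, so the right-hand side tends to $0$. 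For strong $\Rightarrow$ strong*, I would use the identity $u_\alpha^*-u^* = u_\alpha^*(u-u_\alpha)u^*$, which holds because $uu^*=1=u_\alpha^*u_\alpha$; as $u_\alpha^*$ is isometric, $\norm{(u_\alpha^*-u^*)\Psi}=\norm{(u-u_\alpha)u^*\Psi}\to0$. Together these show that the relative weak, strong, and strong* topologies on $\U(\M)$ coincide.

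Next I would prove strong*-closedness. Let $(u_\alpha)$ be a net in $\U(\M)$ converging strong* to some $a\in\B(\H)$. Since $\M$ is strong*-closed, $a\in\M$. Strong* convergence means $u_\alpha\to a$ and $u_\alpha^*\to a^*$ strongly, and all operators involved have norm $1$; joint strong continuity of multiplication on norm-bounded sets then gives $u_\alpha^*u_\alpha\to a^*a$ and $u_\alpha u_\alpha^*\to aa^*$ strongly. But $u_\alpha^*u_\alpha=u_\alpha u_\alpha^*=1$ for every $\alpha$, whence $a^*a=aa^*=1$ and $a\in\U(\M)$.

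Finally, for the failure of weak and strong closedness, I would note that the weak topology is coarser than the strong, so weakly closed sets are strongly closed; contrapositively, it suffices to exhibit a single $\M$ for which $\U(\M)$ is not strongly closed. Take $\H=\ell^2(\NN)$ and $\M=\B(\H)$, and let $u_n\in\U(\M)$ cyclically permute $\ket0,\dots,\ket{n-1}$ while fixing $\ket k$ for $k\ge n$. For fixed $k$ and all $n>k+1$ one has $u_n\ket k=\ket{k+1}$, so $u_n$ converges strongly to the unilateral shift $S$, which is an isometry but not a unitary; hence $\U(\M)$ is not strongly closed, and a fortiori not weakly closed. The only point that needs care here is conceptual, and I would flag it explicitly: this does not contradict the coincidence of the three topologies, since that statement concerns the relative topology on $\U(\M)$, whereas closedness concerns whether ambient limits in $\B(\H)$ stay inside $\U(\M)$. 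Consistently with strong*-closedness, the same net does \emph{not} converge strong*, as $u_n^*\ket0=\ket{n-1}$ does not converge to $S^*\ket0=0$. The main obstacle is thus not any computation, all of which are short, but rather presenting this reconciliation cleanly and verifying that the counterexample net converges strongly but not strong*.
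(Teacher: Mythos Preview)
Your proof is correct and complete. The paper itself does not give a proof of this lemma at all; it simply records the statement with a citation to \cite{blackadar_operator_2006}, so there is nothing to compare against beyond noting that your argument supplies exactly the standard verification one would expect: the norm-expansion trick for weak $\Rightarrow$ strong, the algebraic identity $u_\alpha^*-u^*=u_\alpha^*(u-u_\alpha)u^*$ for strong $\Rightarrow$ strong*, joint strong continuity of multiplication on bounded sets for strong*-closedness, and the cyclic-shift counterexample in $\B(\ell^2)$ for the failure of strong (hence weak) closedness.
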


\subsubsection{Purifications}\label{sec:purifications}

A \emph{purification} of a normal positive linear functional $\phi$ on a von Neumann algebra $\M$, is a triple $(\pi,\H,\Phi)$ of a representation $(\pi,\H)$ of $\M$ and a vector $\Phi\in\H$ such that
\begin{equation*}
    \phi(a) = \ip\Phi{\pi(a)\Phi}, \qquad a\in\M.
\end{equation*}
If $(\pi,\H)$ is a fixed representation then a \emph{purification in $\H$} is a vector $\Phi\in\H$ such that $(\pi,\H,\Phi)$ is a purification.
Every normal positive linear functional $\phi$ on $\M$ has a purification.
In particular, it has a \emph{canonical purification}, namely the triple $(\pi,L^2(\M),\Omega_\phi)$ of the standard representation $\pi:\M\to \B(L^2(\M))$ with $\Omega_\phi\in L^2(\M)^+$ the unique implementing vector of $\phi$ in the positive cone.
In addition to the canonical purification, every $\phi\in\M_*^+$ has a unique minimal purification, namely its GNS representation (see below).

The following simple Lemma will become an indispensable tool for our analysis of pure state entanglement:

\begin{lemma}\label{lem:M-linear partial isometry}
    Let $(\pi_i,\H_i,\Phi_i)$, $i=1,2$, purifications of $\phi\in\M_*^+$.
    Then, there is a unique partial isometry $v':\H_1\to\H_2$ with the following properties: $v'\Phi_1=\Phi_2$, $v'^*v'=[\pi_1(\M)\Phi_1]$, $v'v'^*=[\pi_2(\M)\Phi_2]$, and 
    \begin{equation}
        \pi_2(a)v' = v'\pi_1(a), \qquad a\in\M.
    \end{equation}
\end{lemma}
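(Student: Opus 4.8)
The plan is to build $v'$ by hand on the dense cyclic subspace $\pi_1(\M)\Phi_1\subset\H_1$, and then extend it by continuity and by zero. Concretely, first I would set
\[
    v'\,\pi_1(a)\Phi_1 := \pi_2(a)\Phi_2,\qquad a\in\M,
\]
and check that this is well-defined and inner-product preserving. Everything reduces to the single computation
\[
    \ip{\pi_2(a)\Phi_2}{\pi_2(b)\Phi_2} = \ip{\Phi_2}{\pi_2(a^*b)\Phi_2} = \phi(a^*b) = \ip{\Phi_1}{\pi_1(a^*b)\Phi_1} = \ip{\pi_1(a)\Phi_1}{\pi_1(b)\Phi_1},
\]
which holds precisely because both triples purify the same $\phi$. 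Replacing $a$ and $b$ by $a-b$ yields $\norm{\pi_2(a-b)\Phi_2} = \norm{\pi_1(a-b)\Phi_1}$, which shows simultaneously that $v'$ is well-defined (the right-hand side vanishes whenever $\pi_1(a)\Phi_1=\pi_1(b)\Phi_1$) and that $v'$ is isometric.

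Next I would extend $v'$ by continuity to an isometry from $\overline{\pi_1(\M)\Phi_1} = [\pi_1(\M)\Phi_1]\H_1$ onto $\overline{\pi_2(\M)\Phi_2} = [\pi_2(\M)\Phi_2]\H_2$ (surjectivity onto the closure is automatic, since an isometry with dense range has closed range), and set $v':=0$ on the orthogonal complement. By construction this produces a partial isometry with $v'^*v' = [\pi_1(\M)\Phi_1]$ and $v'v'^* = [\pi_2(\M)\Phi_2]$, and $v'\Phi_1 = \pi_2(1)\Phi_2 = \Phi_2$. The intertwining relation is then immediate on the cyclic subspace: for $a,b\in\M$,
\[
    \pi_2(a)\,v'\,\pi_1(b)\Phi_1 = \pi_2(ab)\Phi_2 = v'\,\pi_1(ab)\Phi_1 = v'\,\pi_1(a)\,\pi_1(b)\Phi_1 .
\]

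To promote $\pi_2(a)v' = v'\pi_1(a)$ from the cyclic subspace to all of $\H_1$, I would observe that $\overline{\pi_1(\M)\Phi_1}$ is $\pi_1(\M)$-invariant, and hence — because $\pi_1$ is a $*$-representation — so is its orthogonal complement. Thus for $\xi$ in the complement one has $\pi_1(a)\xi$ again in the complement, so both $\pi_2(a)v'\xi = 0$ and $v'\pi_1(a)\xi = 0$, and the intertwining identity holds everywhere. Finally, for uniqueness, any $w'$ meeting the four conditions satisfies $w'\pi_1(a)\Phi_1 = \pi_2(a)w'\Phi_1 = \pi_2(a)\Phi_2 = v'\pi_1(a)\Phi_1$, so $w'=v'$ on $\pi_1(\M)\Phi_1$ and hence on its closure, while $w'^*w' = [\pi_1(\M)\Phi_1]$ forces $w'$ to vanish on the complement exactly as $v'$ does.

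I do not expect a genuine obstacle; this is in essence the partial-isometry refinement of the uniqueness of the GNS construction. The only point demanding care is the extension of the intertwining identity from the cyclic subspace to all of $\H_1$, which is why the invariance of the orthogonal complement — and thereby the $*$-structure of the representations together with \eqref{eq:projections in M} — is the load-bearing observation.
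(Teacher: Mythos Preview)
Your proposal is correct and follows essentially the same approach as the paper's proof: define $v'$ on $\pi_1(\M)\Phi_1$ by $v'\pi_1(a)\Phi_1=\pi_2(a)\Phi_2$, verify isometry via $\phi(a^*a)$, extend trivially to the orthogonal complement, and deduce uniqueness from the intertwining property plus the initial-space condition. If anything, you are more careful than the paper in explicitly verifying that the intertwining relation extends from the cyclic subspace to the complement.
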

\begin{proof}
    We define $v'$ on $\overline{\pi_1(\M)\Phi_1}$ as $v' \pi_1(a)\Phi_1= \pi_2(a)\Phi_2$, $a\in\M$.
    This is well-defined and isometric since $\norm{\pi_2(a)\Phi_2} = \ip{\pi_2(a)\Phi_2}{\pi_2(a)\Phi_2}^{1/2} = \phi(a^*a)^{1/2} = \norm{\pi_1(a)\Phi_1}$, $a\in\M$.
    We extend $v'$ trivially onto the orthogonal complement of $\overline{\pi_1(\M)\Phi_1}$.
    By construction, $v'$ is a partial isometry $\H_1\to\H_2$ satisfying the specified properties.
    If $u$ is another partial isometry with the specified properties and $\Phi=\pi_1(a)\Phi_1 + \Phi^\perp$, where $\Phi^\perp \perp [\pi_1(\M)\Phi_1]$, then $u\Phi=  u\pi_1(a) \Phi_1 = \pi_2(a)u\Phi_1 = \pi_2(a)\Phi_2 = u\pi_2(a)\Phi_1 = v'\Phi$, $a\in \M$.
\end{proof}

We note three important consequences of this Lemma.
The first one states that vectors in $\H$ have a canonical polar decomposition relative to a von Neumann algebra $\M$ on $\H$:

\begin{corollary}[Polar decomposition]\label{cor:vector polar decomp}
    Let $\M$ be a von Neumann algebra of operators on $\H$.
    Let $\Phi\in \H$ be a vector and let $\phi\in\M_*^+$ be the induced positive linear functional. 
    Let $\pi$ be the standard representation of $\M$ on $L^2(\M)$.
    There exists a unique partial isometry  $v':L^2(\M) \to \H$ such that
    \begin{enumerate}[(i)]
        \item $v'$ maps the canonical purification $\Omega_\phi$ to $\Phi$, i.e., $v\Omega_\phi = \Phi$;
        \item $v'$ is $\M$-linear, i.e., $v'\pi(a)= av'$ for all $a\in\M$;
        \item $v'v'^* = [\M\Phi]$, $v'^*v'=[\pi(\M)\Omega_\phi] = J\pi(\supp(\phi))J$.
    \end{enumerate}
\end{corollary}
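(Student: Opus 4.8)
The plan is to recognize this corollary as a direct instance of \cref{lem:M-linear partial isometry}. The key observation is that the vector $\Phi\in\H$ furnishes a purification $(\id,\H,\Phi)$ of $\phi$ (with $\M$ acting in its given representation on $\H$), while the canonical purification provides a second purification $(\pi,L^2(\M),\Omega_\phi)$ of the same functional $\phi$. First I would verify that these are genuinely purifications of the same $\phi$: the first by the very definition $\phi(a)=\ip\Phi{a\Phi}$, and the second because $\Omega_\phi\in L^2(\M)^+$ is by construction the implementing vector of $\phi$ (see item \ref{it:std form5} of \cref{thm:std form}).

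With both purifications in hand, I would apply \cref{lem:M-linear partial isometry} with $(\pi_1,\H_1,\Phi_1)=(\pi,L^2(\M),\Omega_\phi)$ and $(\pi_2,\H_2,\Phi_2)=(\id,\H,\Phi)$. This immediately yields a unique partial isometry $v':L^2(\M)\to\H$ satisfying $v'\Omega_\phi=\Phi$ (giving (i)), the intertwining relation $av'=v'\pi(a)$ for all $a\in\M$ (giving (ii)), and the support identities $v'v'^*=[\M\Phi]$ together with $v'^*v'=[\pi(\M)\Omega_\phi]$ (giving the first two equalities of (iii)). Uniqueness of $v'$ is inherited directly from the uniqueness clause of the lemma.

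The only genuinely new content is the final equality $[\pi(\M)\Omega_\phi]=J\pi(\supp(\phi))J$ in (iii), and this is where I would focus the argument. The plan is to conjugate by $J$ and use that $\Omega_\phi$ lies in the positive cone, hence is $J$-invariant ($J\Omega_\phi=\Omega_\phi$ by item \ref{it:std form3} of \cref{thm:std form}). Then
\begin{equation*}
    J[\pi(\M)\Omega_\phi]J = [J\pi(\M)\Omega_\phi] = [J\pi(\M)J\,\Omega_\phi] = [\pi(\M)'\Omega_\phi],
\end{equation*}
where the middle step inserts $J\Omega_\phi=\Omega_\phi$ and the last step uses Tomita's theorem $J\pi(\M)J=\pi(\M)'$ (item \ref{it:std form1} of \cref{thm:std form}). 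By the support projection formula \eqref{eq:supp proj}, applied to $\pi(\M)$ acting on $L^2(\M)$ with implementing vector $\Omega_\phi$, the right-hand side is the support of $\phi$ computed relative to $\pi(\M)$, which equals $\pi(\supp(\phi))$ since $\pi$ is a $*$-isomorphism carrying support projections to support projections. Rearranging the displayed chain then gives the claim.

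I do not anticipate a real obstacle: everything reduces to bookkeeping with \cref{lem:M-linear partial isometry} and the standard-form data. The one step demanding a little care is the identity $[JV]=J[V]J$ for the closed subspace $V=\overline{\pi(\M)\Omega_\phi}$ under the antiunitary $J$, and making sure the support-projection formula is invoked for $\pi(\M)$ on $L^2(\M)$ (whose commutant is $\pi(\M)'=J\pi(\M)J$) rather than for $\M$ on the original Hilbert space $\H$.
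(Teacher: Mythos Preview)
Your proposal is correct and follows exactly the same approach as the paper: apply \cref{lem:M-linear partial isometry} to the two purifications $(\pi,L^2(\M),\Omega_\phi)$ and $(\id,\H,\Phi)$. In fact you are more thorough than the paper's one-line proof, since you explicitly justify the final equality $[\pi(\M)\Omega_\phi]=J\pi(\supp(\phi))J$ via $J$-invariance of $\Omega_\phi$, Tomita's relation $J\pi(\M)J=\pi(\M)'$, and the support formula \eqref{eq:supp proj}; the paper simply absorbs this into the appeal to the lemma.
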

\begin{proof}
    Apply \cref{lem:M-linear partial isometry} to the canonical purification $(\pi,L^2(\M),\Omega_\phi)$ and $(\id,\H,\Phi)$.
\end{proof}

Next, we consider the structure of vectors on a Hilbert space inducing the same functional on a fixed von Neumann algebra:

\begin{corollary}\label{cor:all-purifications}
    Let $\M$ be a von Neumann algebra on $\H$.
    If $\Phi_1,\Phi_2\in \H$ are vectors of the same norm. 
    The following are equivalent:
    \begin{enumerate}[(a)]
        \item $\Phi_1$ and $\Phi_2$ induce the same functional $\phi$ on $\M$;
        \item $\Phi_1 = v'\Phi_2$ for a partial isometry $v'\in\M'$;
        \item for all $\eps>0$, there is a unitary $u'\in\M'$ such that $\Phi_1 \approx_\eps u'\Phi_2$.
    \end{enumerate}
    Therefore, if $\Phi\in\H$ is a purification of $\phi\in\M_*^+$, then the set of all purifications in $\H$ is
    \begin{equation}
        \{ v' \Phi : v'\in \V(\M'),\ \norm{v'\Phi}=\norm\Phi\} = \overline{\{u' \Phi: u'\in\U(\M')\}}.
    \end{equation}
\end{corollary}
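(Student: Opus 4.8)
The plan is to prove the cycle of implications (a) $\Rightarrow$ (b) $\Rightarrow$ (c) $\Rightarrow$ (a) and then to read off the displayed description of the set of purifications. For (a) $\Rightarrow$ (b), I would regard $(\id,\H,\Phi_2)$ and $(\id,\H,\Phi_1)$ as two purifications of the common functional $\phi$ in the ambient representation, and apply \cref{lem:M-linear partial isometry} to this ordered pair. This yields a partial isometry $v'\colon\H\to\H$ with $v'\Phi_2=\Phi_1$ satisfying the intertwining relation $\id(a)v'=v'\id(a)$, i.e.\ $av'=v'a$ for all $a\in\M$; that relation says precisely $v'\in\M'$. The one subtlety here is to feed the two purifications into \cref{lem:M-linear partial isometry} in the right order, so that one obtains $\Phi_1=v'\Phi_2$ rather than the reverse.

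For (b) $\Rightarrow$ (c), the equal-norm hypothesis is what does the work: since $\norm{v'\Phi_2}=\norm{\Phi_1}=\norm{\Phi_2}$, the partial isometry $v'\in\M'$ acts isometrically on $\Phi_2$. I would then invoke the first item of \cref{cor:isometrically acting pi's}, applied to the von Neumann algebra $\M'$, to obtain, for every $\eps>0$, a unitary $u'\in\U(\M')$ with $\norm{u'\Phi_2-v'\Phi_2}<\eps$; as $\Phi_1=v'\Phi_2$, this is (c). For (c) $\Rightarrow$ (a), I would simply compute, for $u'\in\U(\M')$ and $a\in\M$, that $\ip{u'\Phi_2}{a\,u'\Phi_2}=\ip{\Phi_2}{u'^*au'\Phi_2}=\ip{\Phi_2}{a\Phi_2}$, using $u'\in\M'$ (so $u'$ commutes with $a$) and $u'^*u'=1$; hence $u'\Phi_2$ induces the same functional as $\Phi_2$. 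Letting $\eps\to0$ and using that $\Psi\mapsto\ip{\Psi}{a\Psi}$ is norm continuous for each fixed $a$ (the relevant vectors all have norm $\norm{\Phi_2}$), I conclude $\ip{\Phi_1}{a\Phi_1}=\ip{\Phi_2}{a\Phi_2}$ for all $a\in\M$, which is (a).

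For the concluding description, the purifications of $\phi$ in $\H$ are exactly the vectors $\Psi$ with $\ip{\Psi}{a\Psi}=\phi(a)$ for all $a$, and these automatically satisfy $\norm{\Psi}=\phi(1)^{1/2}=\norm\Phi$. By the equivalence (a) $\Leftrightarrow$ (b) they are exactly the vectors $v'\Phi$ with $v'\in\V(\M')$ and $\norm{v'\Phi}=\norm\Phi$; here the norm condition guarantees, as in (b) $\Rightarrow$ (c), that $v'^*v'\Phi=\Phi$, so that $v'\Phi$ genuinely induces $\phi$. This gives the first equality. For the second, let $P$ denote the set of purifications. It is norm closed, since $\Psi\mapsto\ip{\Psi}{a\Psi}$ is norm continuous for each $a$, and $\{u'\Phi:u'\in\U(\M')\}\subset P$, so $\overline{\{u'\Phi:u'\in\U(\M')\}}\subset P$. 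Conversely, (b) $\Rightarrow$ (c) exhibits every element $v'\Phi$ of $P$ as a norm limit of vectors $u'\Phi$ with $u'\in\U(\M')$, giving $P\subset\overline{\{u'\Phi:u'\in\U(\M')\}}$; the two inclusions yield the desired equality.

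The whole argument is routine once \cref{lem:M-linear partial isometry} and \cref{cor:isometrically acting pi's} are available, so there is no deep obstacle. The only points demanding care are the correct ordering of the two purifications when extracting $v'\in\M'$ in (a) $\Rightarrow$ (b), and the repeated use of the equal-norm hypothesis to upgrade ``$v'$ is a partial isometry in $\M'$'' to ``$v'$ acts isometrically on $\Phi_2$''; without the latter, neither (b) $\Rightarrow$ (c) nor the membership $v'\Phi\in P$ in the final step would go through.
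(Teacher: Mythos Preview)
Your proof is correct and follows essentially the same route as the paper: the paper's proof is a two-line sketch invoking \cref{lem:M-linear partial isometry} for (a)$\Leftrightarrow$(b) and \cref{cor:isometrically acting pi's} for the equivalence with (c), and you have simply unpacked these into the cycle (a)$\Rightarrow$(b)$\Rightarrow$(c)$\Rightarrow$(a) with the continuity argument for (c)$\Rightarrow$(a) made explicit. Your remarks on the role of the equal-norm hypothesis and the ordering of the purifications are accurate and helpful, even if the paper leaves them implicit.
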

\begin{proof}
    Equivalence of the first two items is direct from \cref{lem:M-linear partial isometry}. Equivalence with the third item follows from \cref{cor:isometrically acting pi's}.
\end{proof}

We emphasize that the partial isometry $v'$ appearing in \cref{cor:all-purifications} necessarily acts isometrically on $\Phi_2$.
Finally, we note that among all purifications, there is a unique minimal one:

\begin{corollary}[GNS representation]\label{cor:GNS purification}
    Up to unitary equivalence, a normal positive linear functional $\phi\in\M_*^+$ has a unique purification $(\pi,\H,\Phi)$, which is minimal in the sense that $\H = \overline{\pi(\M)\Phi}$.
\end{corollary}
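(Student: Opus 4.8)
The plan is to obtain existence from the canonical purification by cutting down to the cyclic subspace, and to obtain uniqueness as an immediate application of \cref{lem:M-linear partial isometry}.

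For existence, I would start with the canonical purification $(\pi,L^2(\M),\Omega_\phi)$ discussed above. Set $\H_0 = \overline{\pi(\M)\Omega_\phi}$. Since $\pi(\M)$ is a $*$-algebra, $\H_0$ is invariant under every $\pi(a)$ as well as under $\pi(a)^*=\pi(a^*)$, so $\pi_0(a):=\pi(a)\restriction\H_0$ defines a representation of $\M$ on $\H_0$; it is still normal because the restriction of a normal representation to a jointly invariant subspace is normal. The vector $\Omega_\phi$ lies in $\H_0$ and continues to implement $\phi$, so $(\pi_0,\H_0,\Omega_\phi)$ is a purification with $\H_0=\overline{\pi_0(\M)\Omega_\phi}$, i.e.\ it is minimal.

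For uniqueness, suppose $(\pi_1,\H_1,\Phi_1)$ and $(\pi_2,\H_2,\Phi_2)$ are two minimal purifications of $\phi$. Applying \cref{lem:M-linear partial isometry} yields a partial isometry $v'\colon\H_1\to\H_2$ with $v'\Phi_1=\Phi_2$, intertwining $\pi_1$ and $\pi_2$, and with $v'^*v'=[\pi_1(\M)\Phi_1]$ and $v'v'^*=[\pi_2(\M)\Phi_2]$. Minimality means $\overline{\pi_i(\M)\Phi_i}=\H_i$, so both of these projections equal the identity; hence $v'$ is a unitary intertwiner, which is precisely a unitary equivalence of the two purifications carrying $\Phi_1$ to $\Phi_2$.

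The only mild subtlety, and the closest thing to an obstacle, is verifying that the cut-down $\pi_0$ is genuinely a normal representation and not merely a $*$-homomorphism onto the compression; this is exactly where one uses that $\H_0$ is invariant under the whole $*$-algebra $\pi(\M)$ (so that adjoints are respected) together with the fact that normality is inherited by invariant subspaces. Everything else is bookkeeping built on the already-established \cref{lem:M-linear partial isometry}.
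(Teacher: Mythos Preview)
Your proof is correct and follows precisely the same strategy as the paper: existence by restricting the canonical purification to the cyclic subspace $\overline{\pi(\M)\Omega_\phi}$, and uniqueness by applying \cref{lem:M-linear partial isometry} and noting that minimality forces the resulting intertwining partial isometry to be unitary. The paper's proof is just a two-line sketch of exactly this argument, so you have simply filled in the details (including the remark on normality of the restricted representation) that the paper leaves implicit.
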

\begin{proof}
    \emph{Uniqueness}: \Cref{lem:M-linear partial isometry} implies that any two minimal purifications are unitarily equivalent.
    \emph{Existence}: We restrict the canonical purification to the invariant subspace $\overline{\pi(\M)\Phi}$.
\end{proof}

The unique purification in \cref{cor:GNS purification} is the GNS representation \cite[Thm.~I.9.14]{takesaki1}.

\subsubsection{\texorpdfstring{$\M$}{M}-operators and inner completely positive maps}\label{sec:M-operators}

In this subsection, we consider inner completely positive maps between different amplifications of a given representation of a von Neumann algebra $\M$.
To do so, we consider properties and characterizations of the space $\M\barox \B(\K_1,\K_2)$.

A \emph{completely positive} (cp) map $T:\M\to\N$ between von Neumann algebras $\M$ and $\N$ is a linear map such that 
\begin{equation}
    T\ox \id : \M\barox \B(\K)\to \N\barox\B(\K)
\end{equation}
is positive for all (finite-dimensional) Hilbert spaces $\K$ \cite{paulsen_completely_2003}.
Equivalently, $T$ is cp if for finite collections $a_1,\ldots, a_n\in \N$, $b_1,\ldots, b_n\in \M$, we have $\sum_{ij=1}^n a_i^*T(b_i^* b_j)a_j\ge0$, see \cite[Sec.~4.3]{takesaki1}.
It is \emph{normal} if it is continuous for the respective $\sigma$-weak operator topologies.
This is the case if and only if there is a predual map $T_*:\N_*\to\M_*$.
A cp map is \emph{subunital} if $T(1)\le 1$. Every bounded cp map is subunital up to a scalar multiple, namely its norm.
In the following, we only consider cp maps between type $\I$ factors, which is the only case that we need in this work.
Every normal subunital cp map $T:\B(\H_2)\to\B(\H_1)$ has a \emph{Kraus decomposition} 
\begin{equation}\label{eq:kraus-form-math}
    T = \sum_x k_x^*(\placeholder)k_x, \qquad \sum_x k_x^*k_x \le 1, 
\end{equation}
for a collection $\{k_x\}$ of so-called \emph{Kraus operators} $k_x\in\B(\H_1,\H_2)$ \cite{kraus_general_1971}.
Conversely, \eqref{eq:kraus-form-math} defines a subunital normal cp map, which is unital if and only if equality holds on the RHS. 
Moreover, every normal cp map has a \emph{Stinespring dilation}.
The Stinespring dilation for unital cp maps together with two of its important properties is summarized in the following Lemma (see, e.g., \cite{westerbaan_paschke_2017}).
To state it, we need the \emph{cp-order}. If $T,S$ are cp maps, we say that $T$ cp-dominates $S$, written $S\le_{cp} T$, if $T-S$ is cp.
The cp-order interval $[0,T]$ is the convex set of cp maps $S\le_{cp} T$.

\begin{lemma}[Stinespring dilation]\label{lem:stinespring}
    Let $T:\B(\H_1)\to\B(\H_2)$ be normal ucp map.
    \begin{enumerate}
        \item 
        There is a Hilbert space $\K$ and an isometry $v:\H_2\to\H_1\ox\K$ such that $T = v^*(\placeholder\ox1)v$ and $\overline{(\B(\H_1)\ox1)v\H_2}=\H\ox\K$. These properties determine $(\K,v)$ up to a unitary on $\K$.
        \item
        For every isometry $v_1:\H_2\to\H_1\ox\K_1$ such that $T = v_1^*(\placeholder\ox 1)v_1$, there is an isometry $w:\K\to\K_1$ such that $v_1=(1\ox w)v$.
        \item 
        The map $q \mapsto S= v^*(\placeholder \ox q)v$ is an affine bijection between the order interval $[0,1]_{\B(\K)}$ and the cp-order interval $[0,T]_{cp}$ such that $q\le q' \iff S\le_{cp} S'$.
    \end{enumerate}
\end{lemma}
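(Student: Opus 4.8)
The plan is to build the minimal dilation by a GNS-type construction, recognize the resulting representation as an amplification using normality, and then deduce parts (2) and (3) from the intertwiner structure of amplifications together with a completely positive Radon--Nikodym argument.

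For part (1), I would equip the algebraic tensor product of $\B(\H_1)$ and $\H_2$ with the sesquilinear form $\langle a\ox\xi,\ b\ox\eta\rangle = \ip{\xi}{T(a^*b)\eta}$, which is positive semidefinite precisely because $T$ is completely positive. Quotienting by its null space and completing yields a Hilbert space $\mathcal K_0$ on which the left action $c\cdot(a\ox\xi)=(ca)\ox\xi$ defines a representation $\pi$ of $\B(\H_1)$, while $v_0\xi = 1\ox\xi$ is an isometry since $T$ is unital; by construction $T(a)=v_0^*\pi(a)v_0$ and $\mathcal K_0=\overline{\pi(\B(\H_1))v_0\H_2}$. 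The key point is that normality of $T$ makes $\pi$ normal, and every normal representation of the type $\I$ factor $\B(\H_1)$ is unitarily equivalent to an amplification $a\mapsto a\ox 1_\K$ on $\H_1\ox\K$ for a suitable multiplicity space $\K$. Transporting $\pi$ and $v_0$ along this unitary produces $v:\H_2\to\H_1\ox\K$ with $T=v^*(\placeholder\ox1)v$ and the minimality identity $\overline{(\B(\H_1)\ox1)v\H_2}=\H_1\ox\K$. Uniqueness up to a unitary on $\K$ is the usual GNS uniqueness, and it also follows a posteriori from part (2) applied symmetrically.

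For part (2), given a second dilation $v_1:\H_2\to\H_1\ox\K_1$ with $T=v_1^*(\placeholder\ox1)v_1$, I would define $W$ on the dense subspace $(\B(\H_1)\ox1)v\H_2$ by $W(a\ox1)v\xi=(a\ox1)v_1\xi$. The identity $\ip{(a\ox1)v\xi}{(b\ox1)v\eta}=\ip{\xi}{T(a^*b)\eta}=\ip{(a\ox1)v_1\xi}{(b\ox1)v_1\eta}$ shows $W$ is well-defined and isometric, so by minimality it extends to an isometry $\H_1\ox\K\to\H_1\ox\K_1$ intertwining the two amplifications, $W(a\ox1)=(a\ox1)W$. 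Since intertwiners of amplifications of $\B(\H_1)$ are of the form $1\ox w$ (a consequence of the tensor commutation formula $(\M\barox\N)'=\M'\barox\N'$, which gives $(\B(\H_1)\barox\CC1)'=\CC1\barox\B(\K)$), we get $W=1\ox w$ with $w$ an isometry, whence $v_1=Wv=(1\ox w)v$.

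For part (3), the map $q\mapsto S_q=v^*(\placeholder\ox q)v$ is affine, and for $0\le q\le1$ both $S_q$ and $T-S_q=v^*(\placeholder\ox(1-q))v$ are cp, so $S_q\in[0,T]_{cp}$; the same formula gives $q\le q'\Rightarrow S_q\le_{cp}S_{q'}$, while minimality yields injectivity and the reverse order implication (if all compressions $v^*(a\ox r)v$ of $1\ox r$ vanish, then $1\ox r=0$ on the dense range, so $r=0$). The crux is surjectivity: given $0\le_{cp}S\le_{cp}T$, I would consider the quadratic form $\sum_i(a_i\ox1)v\xi_i\mapsto\sum_{ij}\ip{\xi_i}{S(a_i^*a_j)\xi_j}$, which is nonnegative by complete positivity of $S$ and bounded above by the squared ambient norm because $T-S$ is cp; it therefore represents a positive contraction $Q$ on $\H_1\ox\K$. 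A direct computation shows $Q$ commutes with $\B(\H_1)\ox1$, so $Q=1\ox q$ with $0\le q\le1$ by the commutation theorem, and unwinding the definition gives $S=S_q$. I expect this completely positive Radon--Nikodym step to be the \emph{main obstacle}, as it is where well-definedness, boundedness, and the identification $Q=1\ox q$ all have to be verified carefully.
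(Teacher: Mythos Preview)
The paper does not prove this lemma; it states it as a known result with a reference (``see, e.g., \cite{westerbaan_paschke_2017}''). Your proposal is correct and is precisely the standard argument: the GNS/Stinespring construction for part (1), the intertwiner argument for part (2), and Arveson's completely positive Radon--Nikodym theorem for part (3). There is nothing to compare against in the paper itself.
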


We refer to isometries $v:\H_1\to\H_2\ox\K$ with $T = v^*(\placeholder\ox1)v$ as \emph{Stinespring isometries} and to the unique isometry in the first item as the \emph{minimal} Stinespring isometry
As a consequence of the cp-order isomorphism property, a Kraus decomposition (see \eqref{eq:kraus-form-math}) is in one-to-one correspondence with a family of one-dimensional projections $q_x = \kettbra{\Omega_x}$ such that $\sum_x q_x=1$ and
\begin{equation}\label{eq:kraus-from-stinespring}
    k_x= (1\ox \bra{\Omega_x})v.
\end{equation}
Moreover, the second item of \cref{lem:stinespring} shows that for every pair $v_j:\H_1\to\H_2\ox \K_j$, $j=1,2$, of Stinespring isometries, there is a partial isometry $w:\K_1\to\K_2$ such that
\begin{equation}
    v_2 = (1\ox w)v_1.
\end{equation}

For the remainder of this section, we consider a fixed von Neumann algebra $\M$ on a Hilbert space $\H$.
A cp map $T$ on $\B(\H)$ is said to be $\M$-inner if it has a Kraus representation $T = \sum_x k_x^*(\placeholder) k_x$ with Kraus operators $k_x$ in $\M$.
For our applications in \cref{sec:vNQI}, we need to generalize this notion to cp maps between Hilbert spaces of the form $\H\ox \K_j$, $j=1,2$.
To do so, we consider the strongly closed linear subspace
\begin{equation}
    \M \barox \B(\K_1,\K_2) \subset \B(\H\ox\K_1,\H\ox\K_2)
\end{equation}
generated by operators of the form $a\ox b$, $a\in\M$, $b\in \B(\K_1,\K_2)$.
We refer to the elements of $\M\barox\B(\K_1,\K_2)$ as \emph{$\M$-operators}.
Identifying, $\H$ with $\H\ox\CC$, we obtain a notion of $\M$-operators $\H\to\H\ox\K$ and $\H\ox\K\to\H$.
The following Lemma collects equivalent characterizations of the set of $\M$-operators.

\begin{lemma}\label{lem:M-operators}
    Let $a\in\B(\H\ox\K_1,\H\ox\K_2)$.
    The following are equivalent:
    \begin{enumerate}[(a)]
        \item\label{it:M-operators0} $a\in \M\barox\B(\K_1,\K_2)$.
        \item\label{it:M-operators1} $(1\ox\bra{\Omega_2}) a (1\ox\ket{\Omega_1}) \in \M$ for all $\Omega_1\in\K_1$, $\Omega_2\in\K_2$.
        \item\label{it:M-operators2} $(a'\ox 1_{\K_2})a=a(a'\ox 1_{\K_1})$ for all $a'\in\M'$.
        \item\label{it:M-operators3} $a$ can be written as an $\M$-valued matrix, i.e., if $\{\ket{i}_t\}_i$ are ONBs of $\K_t$, $t=1,2$, then 
        \begin{equation*}
            a = \sum_{ij} a_{ij} \ox \ket{i}_2\bra j_1, \qquad a_{ij}\in \M.
        \end{equation*}
        \item\label{it:M-operators4} 
            $a^*(a'\ox 1_{\K_2})a= a^*a(a'\ox 1_{\K_1})$ for all $a'\in\M'$.
    \end{enumerate}
\end{lemma}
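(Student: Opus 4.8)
The plan is to run a cycle of implications through the first four conditions and then attach (e). For the cycle I would prove (a)$\Rightarrow$(c)$\Rightarrow$(b)$\Rightarrow$(d)$\Rightarrow$(a). The implication (a)$\Rightarrow$(c) is immediate on elementary tensors $x\ox b$, since $a'x=xa'$ for $a'\in\M'$, $x\in\M$; the set of $a$ satisfying (c) is an intersection of kernels of the strongly continuous maps $a\mapsto (a'\ox1_{\K_2})a-a(a'\ox1_{\K_1})$, hence strongly closed, so (c) passes to all of $\M\barox\B(\K_1,\K_2)$. For (c)$\Rightarrow$(b) I would compress: the operator $x=(1\ox\bra{\Omega_2})a(1\ox\ket{\Omega_1})$ satisfies $a'x=xa'$ for every $a'\in\M'$, because $(1\ox\bra{\Omega_2})(a'\ox1_{\K_2})=a'(1\ox\bra{\Omega_2})$ and $(a'\ox1_{\K_1})(1\ox\ket{\Omega_1})=(1\ox\ket{\Omega_1})a'$; by the bicommutant theorem $x\in\M''=\M$. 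The step (b)$\Rightarrow$(d) is just the matrix expansion of $a$ in the chosen orthonormal bases, whose entries $a_{ij}=(1\ox\bra{i}_2)a(1\ox\ket{j}_1)$ lie in $\M$ by (b); and (d)$\Rightarrow$(a) holds because each $a_{ij}\ox\ket{i}_2\bra{j}_1$ is an elementary tensor in $\M\barox\B(\K_1,\K_2)$ and this space is strongly closed.

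It remains to weave in (e). The direction (c)$\Rightarrow$(e) is trivial: multiply the intertwining relation in (c) on the left by $a^*$. The substance of the lemma is (e)$\Rightarrow$(c), which I expect to be the main obstacle. The first move is to symmetrize (e): replacing $a'$ by $a'^*$ and taking adjoints turns (e) into $a^*(a'\ox1_{\K_2})a=(a'\ox1_{\K_1})a^*a$, and comparing with (e) itself gives $[a^*a,\,a'\ox1_{\K_1}]=0$ for all $a'\in\M'$. Hence $a^*a$ commutes with $\M'\ox1_{\K_1}$, so $a^*a\in(\M'\ox1_{\K_1})'=\M\barox\B(\K_1)$. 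Consequently $|a|=(a^*a)^{1/2}$ and its support projection $p$ lie in $\M\barox\B(\K_1)$ (functional calculus and support projections stay inside a von Neumann algebra), and I write the polar decomposition $a=u|a|$ with a partial isometry $u\colon\H\ox\K_1\to\H\ox\K_2$, $u^*u=p$.

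The crux is to show that $u$ itself satisfies (c); then $a=u|a|$ is a product of two operators satisfying (c), and (c)$\Rightarrow$(a) finishes. Writing $c=a'\ox1_{\K_1}$ and $\tilde c=a'\ox1_{\K_2}$, I would first extract from (e) the ``intertwining on the support'' identity $u^*\tilde cu=cp$: substituting $a=u|a|$ into (e) gives $|a|\,(u^*\tilde cu\,|a|-|a|c)=0$, and since the range of the bracket lands in $\ker|a|=(1-p)(\H\ox\K_1)$ while $pu^*=u^*$, one cancels $|a|$ to obtain $u^*\tilde cu\,|a|=c|a|$, whence $u^*\tilde cu=cp$ (using that $p\in\M\barox\B(\K_1)$ commutes with $c$). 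With this identity in hand the intertwining $\tilde cu=uc$ follows from a short positivity computation: expanding $(\tilde cu-uc)^*(\tilde cu-uc)$ and rewriting each of the four terms $u^*\tilde c^*\tilde cu$, $u^*\tilde c^*uc$, $c^*u^*\tilde cu$, $c^*u^*uc$ by means of $u^*\tilde cu=cp$ (applied also to $a'^*$ and $a'^*a'$), $u^*u=p$, and the commutation of $p$ with $\M'\ox1_{\K_1}$, every term equals $c^*cp$, so the alternating signs cancel and the expression is $0$; hence $\tilde cu=uc$, which is exactly (c) for $u$. The main difficulty is precisely this last leg: (e) controls only the sandwich $a^*(\,\cdot\,)a$ and a priori says nothing about $aa^*$, so one cannot symmetrize to conclude $u\in\M\barox\B(\K_1,\K_2)$ directly; it is the polar decomposition together with the telescoping positivity identity that bridges this gap.
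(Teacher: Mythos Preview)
Your proof is correct, and the overall structure matches the paper's: the same cycle through (a)--(d), the same symmetrization of (e) to get $[a^*a,\M'\ox1_{\K_1}]=0$, and the same positivity trick of showing that the would-be commutator has vanishing $*$-square. The only difference is that you take an unnecessary detour through the polar decomposition $a=u|a|$ and prove the intertwining for $u$ first; the paper applies the identical computation directly to $a$. Concretely, with $x=(a'\ox1_{\K_2})a-a(a'\ox1_{\K_1})$, the paper expands $x^*x$ into four terms, rewrites each using (e) (applied to $a'$, $a'^*$, and $a'^*a'$) together with $[a^*a,a'\ox1_{\K_1}]=0$, and finds that all four equal $a^*a(a'^*a'\ox1_{\K_1})$, so the alternating signs give zero. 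This is exactly your telescoping argument, just run on $a$ rather than on $u$; the intermediate extraction of $u^*\tilde cu=cp$ is not needed.
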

\begin{proof}
    For notational ease, we suppress the indices indicating the Hilbert spaces $\K_1,\K_2$. 
    The implications \ref{it:M-operators0} $\Rightarrow$ \ref{it:M-operators1} $\Rightarrow$ \ref{it:M-operators3} $\Rightarrow$ \ref{it:M-operators0}, and \ref{it:M-operators3} $\Rightarrow$ \ref{it:M-operators2} $\Rightarrow$ \ref{it:M-operators4} are clear.
    % \ref{it:M-operators3} $\Rightarrow$ \ref{it:M-operators2} follows from $a(a'\ox1)-(a'\ox 1)a = \sum_{ij} (a_{ij}a'-a'a_{ij}) \ox \ket{i}\bra{j}$.

    \ref{it:M-operators2} $\Rightarrow$ \ref{it:M-operators1}: 
    This follows from the bicommutant theorem: Let $a'\in\M'$, then
    \begin{align*}
        a' (1\ox \bra{\Omega_2})a(1\ox\ket{\Omega_1}) 
        &= (1\ox \bra{\Omega_2}) (a'\ox 1)a(1\ox\ket{\Omega_1})\\
        &= (1\ox \bra{\Omega_2}) a(a'\ox 1)(1\ox\ket{\Omega_1})
        = (1\ox \bra{\Omega_2})a(1\ox\ket{\Omega_1}) a'.
    \end{align*}

    \ref{it:M-operators4} $\Rightarrow$ \ref{it:M-operators2}:
    By applying the assumption to self-adjoint elements, it follows that $a^*a\in \M'\ox 1$.
    Let $a'\in\M'$. We have to show that $x=(a'\ox1)a-a(a'\ox1)$ vanishes. This follows from
    \begin{align*}
        x^*x
        &= a^*(a'^*a'\ox1)a- a^*(a'^*\ox1)a(a'\ox1) \\
        &\phantom= \ -(a'^*\ox1)a^*(a'\ox1)a +(a'^*\ox1)a^*a(a'\ox1) =0.\qedhere
    \end{align*}
\end{proof}

As a consequence of \cref{lem:M-operators}, we note the following properties of $\M\barox\B(\K_1,\K_2)$:
\begin{itemize}
    \item $\M\barox\B(\K_1,\K_2)$ is closed in the weak, ($\sigma$-)strong and ($\sigma$-)strong* topologies,
    \item $a\in \M\barox\B(\K_1,\K_2)$ if and only if $a^*\in \M\barox\B(\K_2,\K_1)$ 
    \item the class of $\M$-operators is closed under products, whenever they make sense.
\end{itemize}

We return to discussing cp maps.
We say that a cp map $T:\B(\H\ox\K_2)\to\B(\H\ox\K_1)$ is \emph{$\M$-inner} if it has a Kraus decomposition with $\M$-operators, i.e., if
\begin{equation}\label{eq:M-inner-cp}
    T= \sum_x k_x^*(\placeholder)k_x, \qquad \{k_x\}\subset \M \barox\B(\K_1,\K_2).
\end{equation}

\begin{lemma}\label{lem:winner}
    The following are equivalent for a normal ucp map $T:\B(\H\ox\K_2)\to \B(\H\ox \K_1)$:
    \begin{enumerate}[(a)]
        \item\label{it:winner1} $T$ is $\M$-inner,
        \item\label{it:winner2} $T(a'\ox 1_{\K_2}) = (a'\ox 1_{\K_1})$ for all $a'\in\M'$,
        \item\label{it:winner3} The minimal Stinespring isometry $v$ of $T$ is an $\M$-operator,
    \end{enumerate}
    In this case, all Stinespring isometries $w$ and all Kraus operators $k_x\in \B(\H\ox\K_1,\H\ox\K_2)$ for $T$ are $\M$-operators.
    Moreover, $T$ satisfies
    \begin{equation}\label{eq:winner2}
        T\big(a (a'\ox1_{\K_2})\big) = T(a)(a'\ox1_{\K_1}), \qquad a\in \M\barox\B(\K_2),\ a'\in\M'.
    \end{equation}
\end{lemma}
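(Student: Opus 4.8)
The plan is to establish the cyclic chain of implications \ref{it:winner1}$\Rightarrow$\ref{it:winner2}$\Rightarrow$\ref{it:winner3}$\Rightarrow$\ref{it:winner1}, using \cref{lem:M-operators,lem:stinespring} as the only real inputs. Throughout I write $v\colon \H\ox\K_1\to(\H\ox\K_2)\ox\K$ for the minimal Stinespring isometry of $T$, so that $T(b)=v^*(b\ox 1_\K)v$, and I regard $v$ as an operator in $\B(\H\ox\K_1,\,\H\ox(\K_2\ox\K))$ via the identification $(\H\ox\K_2)\ox\K=\H\ox(\K_2\ox\K)$. In this guise the notion of $\M$-operator from \cref{lem:M-operators} applies to $v$ with the second leg $\K_2$ enlarged to $\K_2\ox\K$, and statement \ref{it:winner3} becomes the assertion that $v\in\M\barox\B(\K_1,\K_2\ox\K)$.

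For \ref{it:winner1}$\Rightarrow$\ref{it:winner2} I would start from an $\M$-inner Kraus decomposition $T=\sum_x k_x^*(\placeholder)k_x$ with $k_x\in\M\barox\B(\K_1,\K_2)$; unitality of $T$ forces $\sum_x k_x^*k_x=T(1)=1$. Given $a'\in\M'$, condition \ref{it:M-operators2} of \cref{lem:M-operators} yields $(a'\ox 1_{\K_2})k_x=k_x(a'\ox 1_{\K_1})$ for every $x$, so that $T(a'\ox 1_{\K_2})=\sum_x k_x^*k_x\,(a'\ox 1_{\K_1})=a'\ox 1_{\K_1}$, which is \ref{it:winner2}.

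The crux is \ref{it:winner2}$\Rightarrow$\ref{it:winner3}, and here essentially all of the work has already been packaged into \cref{lem:M-operators}. The key observation is that \ref{it:winner2} is \emph{verbatim} condition \ref{it:M-operators4} of \cref{lem:M-operators} for the isometry $v$: since $a'\ox 1_{\K_2\ox\K}=(a'\ox 1_{\K_2})\ox 1_\K$, one has $v^*(a'\ox 1_{\K_2\ox\K})v=T(a'\ox 1_{\K_2})$, while $v^*v\,(a'\ox 1_{\K_1})=a'\ox 1_{\K_1}$ because $v$ is an isometry. Thus \ref{it:winner2} says precisely that $v^*(a'\ox 1_{\K_2\ox\K})v=v^*v\,(a'\ox 1_{\K_1})$ for all $a'\in\M'$, and \cref{lem:M-operators} then delivers $v\in\M\barox\B(\K_1,\K_2\ox\K)$. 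I note that this computation uses nothing about minimality, so it applies unchanged to an arbitrary Stinespring isometry $w$ of $T$; this already proves the part of the final claim asserting that all Stinespring isometries are $\M$-operators.

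Finally, for \ref{it:winner3}$\Rightarrow$\ref{it:winner1} I would recover Kraus operators from $v$ as in \eqref{eq:kraus-from-stinespring}, $k_x=(1_{\H\ox\K_2}\ox\bra{\Omega_x})v$ for an ONB $\{\Omega_x\}$ of $\K$, and verify condition \ref{it:M-operators2} of \cref{lem:M-operators}: commuting $a'\ox 1_{\K_2}$ past $1\ox\bra{\Omega_x}$ and then through $v$ gives $(a'\ox 1_{\K_2})k_x=(1_{\H\ox\K_2}\ox\bra{\Omega_x})(a'\ox 1_{\K_2\ox\K})v=(1_{\H\ox\K_2}\ox\bra{\Omega_x})v\,(a'\ox 1_{\K_1})=k_x(a'\ox 1_{\K_1})$, so each $k_x\in\M\barox\B(\K_1,\K_2)$ and $T$ is $\M$-inner. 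To see that \emph{every} Kraus decomposition of $T$ consists of $\M$-operators, I would build its canonical Stinespring isometry $w=\sum_x k_x\ox\ket{x}$, which is an $\M$-operator by the previous paragraph, and repeat the same push-through. The displayed identity \eqref{eq:winner2} follows identically: writing $T(a(a'\ox 1_{\K_2}))=v^*(a\ox 1_\K)\big((a'\ox 1_{\K_2})\ox 1_\K\big)v$ and using $(a'\ox 1_{\K_2\ox\K})v=v(a'\ox 1_{\K_1})$ produces $v^*(a\ox 1_\K)v\,(a'\ox 1_{\K_1})=T(a)(a'\ox 1_{\K_1})$. The only thing to be careful about is the bookkeeping of tensor legs; once one recognizes \ref{it:winner2} as hypothesis \ref{it:M-operators4} of \cref{lem:M-operators} for $v$, there is no analytic obstacle at all.
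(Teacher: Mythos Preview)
Your proof is correct and follows essentially the same route as the paper's: the key step is recognizing that \ref{it:winner2} is exactly condition \ref{it:M-operators4} of \cref{lem:M-operators} for any Stinespring isometry, and the remaining implications and consequences all follow from pushing $a'\ox 1$ through $\M$-operators. The only minor difference is that for \ref{it:winner3}$\Rightarrow$\ref{it:winner1} the paper simply notes that $k_x=(1\ox\bra{\Omega_x})v$ is a product of $\M$-operators and hence an $\M$-operator, whereas you re-verify the commutation condition explicitly; both are equally valid.
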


\begin{proof}
    The implication \ref{it:winner1} $\Rightarrow$ \ref{it:winner2} is clear.
    \ref{it:winner2} $\Rightarrow$ \ref{it:winner3}: If $w$ is a Stinespring isometry, then \cref{it:M-operators4} of \cref{lem:M-operators} holds and, thus, the same Lemma implies that $w$ is an $\M$-operator.

    \ref{it:winner3} $\Rightarrow$ \ref{it:winner1}: Let $v:\H\ox\K_1\to \H\ox\K_2\ox\K$ be the minimal Stinespring isometry and let $\Omega_x\in\K$ be vectors such that $\sum_x\kettbra{\Omega_x}=1$.
    Set $k_x = (1\ox\bra{\Omega_x})v$. This is by definition an $\M$-operator, and we have
    $\sum_x k_x^*(\placeholder)k_x = \sum_x v^*(\placeholder\ox \kettbra{\Omega_x})v= v^*(\placeholder\ox1)v=T$.

    If $\{k_x\}$ is a family of Kraus operators for $T$ then $w = \sum_x k_x \ox \ket{x}: \H\ox\K_1\to\H\ox\K_2\ox\ell^2_X$ is a Stinespring isometry and, hence, an $\M$-operator by the argument in \ref{it:winner2} $\Rightarrow$ \ref{it:winner3} above. Thus, each $k_x=(1\ox\bra x)w\in\M$.
    \Cref{eq:winner2} follows directly from items \ref{it:winner1} or \ref{it:winner3}.
\end{proof}

For our purposes, we need \emph{unitary} dilations in addition to the Stinespring dilation.
A unitary dilation of a normal ucp map $T:\B(\H_2)\to\B(\H_1)$ is pair $(u,\Omega)$ of a unitary $u : \H_1\ox \K_1 \to \H_2\ox\K_2$ for suitable Hilbert spaces $\K_1,\K_2$, together with a unit vector $\Omega\in\K_1$ such that 
\begin{equation}\label{eq:unitary-dilation-math}
    T= (1\ox\bra\Omega)u^* (\placeholder\ox 1)u(1\ox\ket\Omega).
\end{equation}

\begin{lemma}\label{lem:unitary-dilation}
    Let $T:\B(\H\ox\K_1)\to\B(\H\ox\K_2)$ be a normal ucp map.
    Then $T$ has a unitary dilation $(u,\Omega)$ with an $\M$-unitary $u$.
    If $\K_1=\K_2=\CC$, then $(u,\Omega)$ is of the form $\Omega\in \K$, $u\in \M\barox\B(\K)$.
\end{lemma}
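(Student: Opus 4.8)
The plan is to dilate $T$ through its Stinespring isometry and then extend that isometry to a unitary without leaving the class of $\M$-operators. Since $T$ is $\M$-inner, \cref{lem:winner} provides a minimal Stinespring isometry $v\colon\H\ox\K_2\to\H\ox\K_1\ox\K$ that is itself an $\M$-operator and satisfies $T=v^*(\placeholder\ox 1_\K)v$ for $\placeholder\in\B(\H\ox\K_1)$. First I would make the two ancillas large enough that the extension becomes automatic: enlarging the Stinespring ancilla (replacing $\K$ by $\K\oplus\K_0'$ with $\dim\K_0'=\oo$ and extending $v$ by zero) changes neither the $\M$-operator property nor the identity $T=v^*(\placeholder\ox 1)v$, since the range of $v$ stays inside the old summand; so I may assume $\dim\K=\oo$ and that $1-vv^*$ dominates $1_\H\ox 1_{\K_1}\ox P$ for an infinite-rank projection $P$ on $\K$. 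I then introduce a source ancilla $\K_1^{anc}$ with $\dim\K_1^{anc}=\oo$ and a unit vector $\Omega\in\K_1^{anc}$, so that $v$ precomposed with $1_{\H\ox\K_2}\ox\bra\Omega$ is a partial isometry with source projection $1_\H\ox 1_{\K_2}\ox\kettbra\Omega$ inside $\H\ox\K_2\ox\K_1^{anc}$.

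The goal is to extend this partial isometry to a unitary $\M$-operator $u\colon\H\ox\K_2\ox\K_1^{anc}\to\H\ox\K_1\ox\K$. To phrase the extension as a Murray--von Neumann comparison I pass to the single space $\H\ox\mathcal G$ with $\mathcal G=(\K_2\ox\K_1^{anc})\oplus(\K_1\ox\K)$ and the algebra $\N=\M\barox\B(\mathcal G)$, inside which the map above becomes a partial isometry $\hat v\in\N$ (a product of $\M$-operators) with $\hat v^*\hat v=1_\H\ox 1_{\K_2}\ox\kettbra\Omega$ and with $\hat v\hat v^*$ the range projection $vv^*$ of $v$ sitting in the $\K_1\ox\K$ summand. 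Writing $e_{in}=1_\H\ox 1_{\K_2\ox\K_1^{anc}}$ and $e_{out}=1_\H\ox 1_{\K_1\ox\K}$ for the two summand projections, I reduce to the case that $\M$, hence $\N$, is a factor by disintegration, exactly as in the proof of \cref{lem:unitary extension}. The two defect projections $e_{in}-\hat v^*\hat v=1_\H\ox 1_{\K_2}\ox(1_{\K_1^{anc}}-\kettbra\Omega)$ and $e_{out}-\hat v\hat v^*=1-vv^*$ are then both infinite projections in the factor $\N$: the first because $\dim\K_1^{anc}=\oo$, the second because it dominates the infinite projection $1_\H\ox 1_{\K_1}\ox P$ arranged above. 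Two infinite projections in a factor are Murray--von Neumann equivalent, so there is a partial isometry $w_0\in\N$ with $w_0^*w_0=e_{in}-\hat v^*\hat v$ and $w_0w_0^*=e_{out}-\hat v\hat v^*$, and $u_0:=\hat v+w_0\in\N$ satisfies $u_0^*u_0=e_{in}$, $u_0u_0^*=e_{out}$. Its restriction is a unitary $u\in\M\barox\B(\K_2\ox\K_1^{anc},\K_1\ox\K)$, i.e.\ a unitary $\M$-operator.

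It remains to verify the dilation identity. Because $w_0$ vanishes on the range of $\hat v^*\hat v$, I get $u(1_{\H\ox\K_2}\ox\ket\Omega)=\hat v(1\ox\ket\Omega)=v$, whence $T=v^*(\placeholder\ox 1_\K)v=(1\ox\bra\Omega)u^*(\placeholder\ox 1_\K)u(1\ox\ket\Omega)$; this is precisely the required form, with target ancilla $\K_2^{anc}=\K$ and $\Omega\in\K_1^{anc}$. For the special case $\K_1=\K_2=\CC$, I would simply take the source ancilla $\K_1^{anc}$ and the Stinespring ancilla $\K$ to be one and the same separable infinite-dimensional space $\K$; the construction then produces a genuine unitary $u\in\M\barox\B(\K)$ on $\H\ox\K$ together with $\Omega\in\K$, as claimed.

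The hard part will be the middle step: ensuring the two defect projections are equivalent so that the Stinespring isometry genuinely extends to a unitary inside the $\M$-operators. I expect this to be handled precisely as in \cref{lem:unitary extension}---reduce to a factor and force both defects to be infinite by choosing the source ancilla infinite-dimensional and inflating the Stinespring ancilla---rather than by any dimension count, which would be delicate across the different factor types. Everything else (that compositions of $\M$-operators are $\M$-operators, the disintegration bookkeeping, and the final algebraic check) is routine.
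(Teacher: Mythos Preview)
Your proposal is correct and follows essentially the same approach as the paper: take the minimal Stinespring isometry (an $\M$-operator by \cref{lem:winner}), embed it as a partial isometry in a larger $\M$-algebra, and extend to a unitary by arranging both defect projections to be infinite in a factor. The paper packages the extension step as a direct citation of \cref{lem:unitary extension} (after folding the input and output ancillas into a single tensor factor via rank-one bras and kets, then tensoring with $\ell^2$), whereas you unpack that argument inline via the direct sum $\mathcal G=(\K_2\ox\K_1^{\mathrm{anc}})\oplus(\K_1\ox\K)$ and an explicit inflation of $\K$; the underlying Murray--von Neumann step is identical.
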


\begin{proof}
    For notational simplicity, we write $\H_1\H_2$ for Hilbert space tensor products $\H_1\ox\H_2$ in this proof.
    Let $v:\H\K_1\to \H\K_2\K$ be the minimal Stinespring isometry of $T$, which is $\M$-inner by \cref{lem:winner}.
    Pick unit vectors $\Psi_j \in \K_j$, $j=1,2$, $\Psi\in\K$, and denote the first canonical basis vector of $\ell^2=\ell^2(\NN_0)$ by $\ket0$.
    Consider the partial isometry $ w = (v\ox \ket{\Psi_1}) (1_{\H\K_1} \ox \bra{\Psi_2\ox\Psi}) \in \M\barox\B(\K_1\K_2\K)$ (with the obvious implicit reordering of the tensor factors).
    By \cref{lem:unitary extension}, there is a unitary $u \in \M\barox\B(\K_1\K_2\K \ell^2)$ such that $w\ox \ket0 = u(ww^*\ox \ket0)$. 
    Thus, we have
    \begin{equation}\label{eq:unitary-dilation-help}
        v\ox \ket{\Psi_1}\ox\ket0 
        = (w\ox \ket0) \big(1_{\H\K_1}\ox\ket{\Psi_2\ox\Psi}\big)
        = u\big(1_{\H\K_1}\ox \ket{\Psi_2}\ox\ket\Psi \ox \ket0\big).
    \end{equation}
    Set $\tilde\K_1=\K_2\K\ell^2$, $\tilde\K_2=\K_1\K\ell^2$, $\Omega = \Psi_2\ox\Psi\ox \ket0\in \tilde\K_1$. Then $u$ is an operator $\H\K_1\tilde\K_1\to\H\K_2\tilde\K_2$ and \eqref{eq:unitary-dilation-help} implies \eqref{eq:unitary-dilation-math}.
\end{proof}

\begin{lemma}\label{lem:inner-distillation}
    The following are equivalent for a normal subunital cp map $T:\B(\K_2)\to \B(\H\ox\K_1)$:
    \begin{enumerate}[(a)]
        \item\label{it:inner-distillation1} there is an $\M$-inner subunital cp map $\hat T:\B(\H\ox\K_2)\to \B(\H\ox\K_1)$ with $T= \hat T(1_\H\ox \placeholder)$;
        \item\label{it:inner-distillation2} the range of $T$ is contained in $\M\barox\B(\K_1)$;
        \item\label{it:inner-distillation3} for some/every basis $\{\ket i\}$ of $\K_2$, there is a collection of operators $k_{x,i}\in \M\barox\B(\K_1)$ such that $T(a) = \sum_{xij} a_{ij}\, k_{x,i}^*k_{x,j}$, where $a\in\B(\K_2)$ and $a_{ij}=\bra i a \ket j$.
    \end{enumerate}
    Moreover, $T$ is unital if and only if $\hat T$ is unital if and only if $\sum_{xij} k_{x,i}^*k_{x,j}=1$.
\end{lemma}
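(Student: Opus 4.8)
The plan is to prove the cycle \cref{it:inner-distillation1} $\Rightarrow$ \cref{it:inner-distillation2} $\Rightarrow$ \cref{it:inner-distillation3} $\Rightarrow$ \cref{it:inner-distillation1}, writing $\N:=\M\barox\B(\K_1)$ throughout and recording at the outset that $\N'=\M'\ox 1_{\K_1}$, so that by \cref{lem:M-operators} an operator on $\H\ox\K_1$ lies in $\N$ iff it commutes with $\M'\ox1_{\K_1}$. For \cref{it:inner-distillation1} $\Rightarrow$ \cref{it:inner-distillation2} I would simply observe that if $\hat T$ is $\M$-inner with Kraus operators $K_x\in\M\barox\B(\K_1,\K_2)$, then $T(a)=\hat T(1_\H\ox a)=\sum_x K_x^*(1_\H\ox a)K_x$; since $1_\H\ox a$ is an $\M$-operator and the class of $\M$-operators is closed under composition (the remark after \cref{lem:M-operators}), each summand lies in $\N$, and as $\N$ is $\sigma$-weakly closed so does $T(a)$.

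The heart of the matter is \cref{it:inner-distillation2} $\Rightarrow$ \cref{it:inner-distillation3}: extracting Kraus operators that actually lie inside $\N$ rather than merely in $\B(\H\ox\K_1,\L)$. I would take a minimal Stinespring dilation (\cref{lem:stinespring}) $T(a)=V^*(a\ox 1_\L)V$ with a contraction $V:\H\ox\K_1\to\K_2\ox\L$, using that every normal representation of $\B(\K_2)$ has the form $a\mapsto a\ox1_\L$. The hypothesis that each $T(a)$ commutes with every $a'\ox1_{\K_1}\in\N'$ then feeds the standard commutant-lifting computation for the minimal dilation: one checks that $(c\ox1)V\xi\mapsto(c\ox1)V(a'\ox1)\xi$ is well defined and bounded (here one uses that the $T(c_k^*c_l)\in\N$ commute with $a'\ox1$) and commutes with $\B(\K_2)\ox1$, hence equals $1\ox\rho_0(a')$ for a normal representation $\rho_0:\M'\to\B(\L)$ with $V(a'\ox1_{\K_1})=(1\ox\rho_0(a'))V$. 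Setting $W_j:=(\bra j_{\K_2}\ox1_\L)V$ for an ONB $\{\ket j\}$ of $\K_2$, these are $\M'$-intertwiners with $T(\ket i\bra j)=W_i^*W_j$, and by minimality their ranges span $\L$. Thus $\rho_0$ is quasi-contained in the amplified identity representation $a'\mapsto a'\ox1_{\K_1}$ of $\M'$ on $\H\ox\K_1$, giving an $\M'$-intertwining isometry $u:\L\to\ell^2\ox\H\ox\K_1$. Expanding $uW_j=\sum_x\ket x\ox k_{x,j}$, the intertwining property forces $k_{x,j}\in(\M'\ox1_{\K_1})'=\N$, and $T(\ket i\bra j)=(uW_i)^*(uW_j)=\sum_x k_{x,i}^*k_{x,j}$, which is \cref{it:inner-distillation3}.

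For \cref{it:inner-distillation3} $\Rightarrow$ \cref{it:inner-distillation1} I would build $\hat T$ by hand. With $E_i:=1_\H\ox\ket i_{\K_2}$ and $P_p:=1_\H\ox\bra p_{\K_1}$ (for an ONB $\{\ket p\}$ of $\K_1$), set $K_{x,p}:=\sum_i E_iP_p\,k_{x,i}\in\M\barox\B(\K_1,\K_2)$, which are $\M$-operators as composites of $\M$-operators. Defining $\hat T:=\sum_{x,p}K_{x,p}^*(\placeholder)K_{x,p}$ makes $\hat T$ $\M$-inner by construction, and the identities $E_i^*E_{i'}=\delta_{ii'}1_\H$, $P_p^*P_p=1_\H\ox\kettbra{p}$, and $\sum_pP_p^*P_p=1$ give $\hat T(1_\H\ox a)=\sum_{x,i,i'}a_{ii'}k_{x,i}^*k_{x,i'}=T(a)$, so $T=\hat T(1_\H\ox\placeholder)$; the same identities give $\hat T(1)=\sum_{x,i}k_{x,i}^*k_{x,i}=T(1_{\K_2})\le1$, so $\hat T$ is subunital. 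The final "moreover" then drops out: any $\hat T$ with $T=\hat T(1_\H\ox\placeholder)$ satisfies $\hat T(1_{\H\ox\K_2})=T(1_{\K_2})$, and in representation \cref{it:inner-distillation3} one computes $T(1_{\K_2})=\sum_{x,i}k_{x,i}^*k_{x,i}$; hence $\hat T$ unital $\iff T$ unital $\iff\sum_{x,i}k_{x,i}^*k_{x,i}=1$.

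I expect the single genuine obstacle to be the middle implication \cref{it:inner-distillation2} $\Rightarrow$ \cref{it:inner-distillation3}. The naive Choi-matrix route (take the positive square root of $(T(\ket i\bra j))_{ij}$ inside $\M\barox\B(\K_1\ox\K_2)$ and read off its entries) works cleanly only when $\K_2$ is finite-dimensional, since otherwise the Choi element is unbounded; the robust route is therefore the commutant-lifting and module-embedding argument sketched above, whose delicate points are the boundedness of the lifted representation $\rho_0$ and its quasi-containment in a multiple of the identity representation, both of which hinge on the minimality of the Stinespring dilation.
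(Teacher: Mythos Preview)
Your proof is correct. The implications \cref{it:inner-distillation1}$\Rightarrow$\cref{it:inner-distillation2} and \cref{it:inner-distillation3}$\Rightarrow$\cref{it:inner-distillation1} match the paper's argument (your construction of $\hat T$ with the extra index $p$ is a harmless variant of the paper's $k_x=\sum_i k_{x,i}\ox\ket i\in\M\barox\B(\K_1,\K_1\ox\K_2)$, which avoids contracting the $\K_1$ factor by simply tensoring the argument with $1_{\K_1}$).

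The genuine difference is in \cref{it:inner-distillation2}$\Rightarrow$\cref{it:inner-distillation3}. You use Stinespring plus commutant-lifting to produce a normal representation $\rho_0$ of $\M'$ on the dilation space, and then invoke the structure of normal representations to embed $\L$ isometrically and $\M'$-equivariantly into $\ell^2\ox\H\ox\K_1$. This is correct, but your worry that the Choi route fails in infinite dimensions is unfounded: the paper fixes it with a simple rescaling trick. One picks scalars $\lambda_i>0$ with $\sum_i\lambda_i=1$ and forms $t=\sum_{ij}\lambda_i\lambda_j\,T(\ketbra ij)\ox\ketbra ij$, which is now a positive \emph{contraction} in $\M\barox\B(\K_1\ox\K_2)$ (it equals $(T\ox\id)(\kettbra\Omega)$ for the vector $\Omega=\sum_i\lambda_i\ket i\ox\ket i\in\K_2\ox\K_2$). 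Its square root then also lies in this algebra; reading off the $\K_2$-matrix entries $l_{x,i}\in\M\barox\B(\K_1)$ and setting $k_{x,i}=\lambda_i^{-1}l_{x,i}$ gives \cref{it:inner-distillation3} directly. This bypasses the representation-theoretic machinery (commutant lifting, quasi-containment) altogether. Your route, in return, would adapt to a general target von Neumann algebra in place of $\B(\K_2)$, where no Choi matrix is available.
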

\begin{proof}
    \ref{it:inner-distillation1} $\Rightarrow$ \ref{it:inner-distillation2} is trivial. 
    (\ref{it:inner-distillation3}, "some") $\Rightarrow$ \ref{it:inner-distillation1}:
    We set $k_x = \sum_i k_{x,i}\ox \ket i \in \M \barox\B(\K_1,\K_1\ox\K_2)$.
    Subunitality of $T$ gives $\sum_x k_x^*k_x=\sum_{x,i}k_{x,i}^*k_{x,i}\le1$ with equality if and only if $T$ is unital.
    We set 
    \begin{equation*}
        \hat T = \sum_{x} k_x^*(\placeholder \ox 1_{\K_1}\ox \placeholder)k_x: \B(\H\ox\K_2)\to \B(\H\ox\K_1).
    \end{equation*}
    By construction, $\hat T$ is an $\M$-inner subunital cp map with $\hat T(1_\H\ox \ketbra ij) = \sum_x k_x^*(1_{\H\K_1}\ox \ketbra ij)k_x = k_{x,i}^* k_{x,j}= T(\ketbra ij)$.
    Hence, the claim holds.

    \ref{it:inner-distillation2} $\Rightarrow$ (\ref{it:inner-distillation3}, "every"):
    Let $\{\ket i\}$ be a basis for $\K_2$ and let $\lambda_i>0$ be such that $\sum_i \lambda_i=1$.
    We use the Choi isomorphism.
    Consider the contraction $t=\sum_{ij} \lambda_i\lambda_j T(\ketbra ij)\ox\ketbra ij$ on $\H\ox\K_1\ox\K_2$.
    Our assumption implies $t\in\M\ox\B(\K_1\ox\K_2)$ and complete positivity of $T$ implies that is positive.
    Thus, $t$ is of the form $t = \sum_x l_{x,i}^*l_{x,j}\ox \ketbra ij$ for a collection of operators $l_{x,i} \in \M\barox\B(\K_1)$.
    Set $k_{x,i}= \lambda_i^{-1}l_{x,i}$.
    It follows that $T(\ketbra ij) = \sum_x k_{x,i}^*k_{x,j}$, which implies the claim.
\end{proof}

\section{Basics of von Neumann algebraic quantum information theory}\label{sec:vNQI}

\localtableofcontents

\null

% In this chapter, we attempt to give an overview of what we call von Neumann algebraic quantum information theory.
% % Even though many papers have been written in this direction, it seems not to have been regarded as a research field and no review of the literature exists.
% The origins of this field are two-fold. On one hand-side, we have generalizations of results in quantum information theory to von Neumann algebras \cite{crann_state_2020,hiai_quantum_2021,ohya_quantum_1993,gao_relative_2020}. 
% On the other hand, there have been works aiming at understanding the information-theoretic properties of concrete physical systems with infinitely many degrees of freedom, formulated in von Neumann algebraic language \cite{verch_distillability_2005,keyl_entanglement_2006,werner_local_1987,fiedler_jones_2017}.
In this section, we describe and discuss the framework, which we use to study quantum systems with infinitely many degrees of freedom from an information-theoretic point of view.
The technical results of this section are mostly known (a notable exception is the axiomatic approach in \cref{sec:axioms}), but the operational point of view that we adopt is new.

Two of the fundamental questions in quantum information theory are \cite{phyjou}
\begin{center}\it
    What information can we write into and read from quantum systems?
\end{center}
Depending on the context, these questions involve constraints that must be fulfilled when interacting with quantum systems.
We like to think of a hypothetical agent that has access to operations that are compatible with these constraints.
The prototypical example is a \emph{subsystem constraint}, which means that the agent can only act on a subsystem of the full system. 
We study how well agents with partial access to a full system perform at \emph{operational tasks}, e.g., distinguishing a given pair of states on the full system.
Under suitable assumptions, these agents are modeled by von Neumann algebras of operators on the full system's Hilbert space.
In particular, this assumption applies to the case where an agent has access to the operations on a subsystem.
% The study of information-theoretic properties involving agents modeled by von Neumann algebras is what we call von Neumann algebraic quantum information theory.

Our motivation for studying von Neumann algebraic quantum information theory is not mere curiosity about quantum information in a most general setting, in which quantitatively and qualitatively new effects become possible.
We are interested in information-theoretic properties of concrete physical models of systems with infinitely many degrees of freedom.
These are naturally described by von Neumann algebras \cite{naaijkensQuantumSpinSystems2017,haag_local_1996,gabbiani_operator_1993}, and it is often possible to understand their mathematical properties explicitly, e.g., the type classification \cite{haag_local_1996,gabbiani_operator_1993,naaijkensAnyonsInfiniteQuantum2012,jones_local_2025,van_luijk_critical_2025} or the index of some subfactor inclusion \cite{kawahigashi_subfactor_2005,kawahigashi_subfactor_2005,naaijkens_kosaki-longo_2013,longo_index_1989,longo_index_1990}.
A good understanding of the interplay of algebraic properties on one hand and operational ones on the other hand yields novel insights about physical models \cite{summers_vacuum_1985,van_luijk_relativistic_2024,van_luijk_critical_2025}.

% This chapter is organized as follows:
% In \cref{sec:operations-full-sys}, we recall the standard framework of how operations on quantum systems are described.
% In \Cref{sec:agents,sec:axioms}, we discuss the setup of von Neumann algebraic Quantum information theory and its operational justification.
% In \cref{sec:information-disturbance}, we consider the `no information without disturbance' principle, and in \cref{sec:independent-agents,sec:entanglement-basics}, we discuss the basics of bipartite systems and entanglement.
% In \cref{sec:fidelity}, we review how state distinguishability is quantified by the norm and by the fidelity.
% In \cref{sec:intrinsic}, we study how abstract von Neumann algebras fit into our framework, and in \cref{sec:examples}, we collect three classes of examples where physical systems are naturally described by von Neumann algebras.
% Finally, we remark on further topics and works in the literature in \cref{sec:notes-vNQI}.

% We will see that this point of view can always be regarded as an idealization.

\subsection{Operations on full quantum systems}\label{sec:operations-full-sys}

In this subsection, we recall the mathematical formulation of the operational building blocks of Quantum Theory, which remains untouched since the early days of \cite{ludwig_versuch_1964,ludwig_deutung_1970,kraus_general_1971,hellwig_pure_1969,hellwig_operations_1970,davies_operational_1970}.

Let us start with the usual opening words: A full quantum system is described by a Hilbert space $\H$.
We have added the word `full' to this standard textbook sentence, as we will later consider more general quantum systems that cannot be described in the purely Hilbert space-based formulation of quantum mechanics.
However, each of these more general quantum systems arises as a subsystem of a full quantum system, hence the name.
% Many textbooks start with sentence the lines of "a quantum system is described by a Hilbert space $\H$", which we replace by: A \emph{full} quantum system is described by a Hilbert space $\H$.
% By "a full quantum system", we mean what the other texts mean when they say "a quantum system".
% The reason is that we will, later, include more general quantum systems that do not fit into the Hilbert space framework.
% However, each quantum system, in the broader sense, can be regarded as a subsystem of a full quantum system, justifying the name.

The \emph{states} of a full quantum system are described either by density operators $\rho$ on the Hilbert space $\H$ or by normal states $\phi:\B(\H)\to\CC$ on the von Neumann algebra $\B(\H)$.
These points of view are equivalent via $\phi = \tr \rho (\placeholder)$.
We adopt the latter as it suits our purposes better, and write $\nstates(\H)$ for the normal state space of $\B(\H)$.
Binary measurements, i.e., measurements with two outcomes, labeled "yes" and "no", are described by operators $0\le f \le 1$ on $\H$ called \emph{effects}.
If a binary measurement with effect $f$ is measured when the system's state is $\phi$, the success probability, i.e., the probability of the outcome "yes", is
\begin{equation}\label{eq:probability}
    \phi(f) \in [0,1].
\end{equation}
The effect $1-f$ measures the probability of the outcome "no". Indeed, $\phi(1-f)= \phi(1)-\phi(f) = 1-\phi(f)$.
% If we identify the outcomes "yes" and "no" with the numbers 1 and 0, the success probability becomes exactly the expectation value.
There are two trivial effects: $f=0$ and $f=1$ corresponding to the trivial measurements that output "no" and "yes" deterministically for all states $\phi$.\footnote{More generally, a scalar effect corresponds to flipping a biased coin. These are precisely the effects for which the success probability does not depend on the state of the system.}

We now discuss general measurements, but restrict ourselves to the case where the outcome space $X$ is discrete (and countable).
A measurement $m$ is described by a collection $(m_x)_{x\in X}$ of effects $m_x$ describing the binary measurement "has the outcome $x$ been measured?".
Linearity of \eqref{eq:probability} in $f$ shows that the sum $\sum_{x\in A} m_x$ is the effect corresponding to whether the outcome is in the set $A$.
Since every application of the measurement results in some output, we have $\sum_x m_x=1$.
The resulting object $(m_x)$ is called a positive operator-valued measure (POVM).
In this formulation, a binary measurement is described by the POVM $(f,1-f)$.
As a consequence of \eqref{eq:probability}, the outcomes of the measurement with POVM $(m_x)_{x\in X}$ are distributed according to the probability distribution $(p_x)_{x\in X}$, $p_x = \phi(m_x)$, where $\phi$ is the state of the system.

Quantum channels on the system are described by normal unital cp maps $T$ on $\B(\H)$ (Heisenberg picture) or by normalization-preserving cp maps $T_*$ on the predual $\B_{*}\hspace{-1pt}(\H)$ (Schrö\-dinger picture).\footnote{Under the identification of $\predualB(\H)$ with the trace class $\T(\H)$, these become precisely the trace-preserving cp maps.}
The two pictures are related by the equation
\begin{equation}\label{eq:schrödinger picture}
   T_*(\phi)=\phi\circ T.
\end{equation}
For our purposes, it will be more natural to work in the Heisenberg picture; we will use the term "quantum channel" synonymously with "normal ucp map".
Every quantum channel has a unitary dilation: It can be implemented by preparing an ancillary quantum system $\K$ in a pure state $\Omega\in\K$, applying a unitary $u$ on the joint system $\H\ox\K$, and then discarding the ancilla (see \cref{lem:unitary-dilation}), i.e.,
\begin{equation}\label{eq:unitary-dilation}
    T = (1\ox \bra\Omega)u^*(\placeholder\ox1)u(1\ox\ket\Omega).
\end{equation}
Setting $v = u(1\ox\ket\Omega):\H\to\H\ox\K$, one obtains a, not necessarily minimal, Stinespring dilation $T = v^*(\placeholder\ox1)v$ (see \cref{sec:M-operators}).\footnote{While it is true that every channels has a unitary dilation, it is not guaranteed that the minimal Stinespring dilation arises this way. As an example, consider $T= v^*(\placeholder)v$ with $v$ a proper isometry, e.g., the right shift on $\ell^2$.}
Moreover, every quantum channel has a Kraus decomposition
\begin{equation}\label{eq:kraus form}
    T = \sum_x k_x^*(\placeholder) k_x,\qquad \sum_x k_x^*k_x =1,
\end{equation}
with Kraus operators $\{k_x\}$ (see \cref{sec:M-operators}).
The smallest number of Kraus operators is the \emph{Kraus rank}. Equivalently, the Kraus rank is the dimension of the Hilbert space $\K$ appearing in the minimal Stinespring dilation (see \cref{lem:stinespring}) of $T$.

% can be implemented by preparing an ancillary quantum system $\K$ in a pure state $\Omega\in\K$, applying a unitary $u$ on the joint system $\H\ox\K$, and then discarding the ancilla.
% Using the isometry $v= u(1\ox\ket\Omega)$, this results in
% \begin{equation}\label{eq:stinespring dilation}
%     T(a) = v^* (a\ox 1)v, \qquad a\in\B(\H).
% \end{equation}
% The RHS of \eqref{eq:stinespring dilation} is known as a \emph{Stinespring dilation} of $T$ \cite{stinespring1955}.
% As a consequence, every quantum channel admits a \emph{Kraus decomposition} \cite{kraus_general_1971}
% \begin{equation}\label{eq:kraus form}
%     T = \sum_x k_x^*(\placeholder) k_x,\qquad \sum_x k_x^*k_x =1,
% \end{equation}
% for a collection of operators $k_x$ satisfying the constraint on the RHS.
% Indeed, if $\{\ket x\}$ is an ONB of $\K$, we may set $k_x = (1\ox\bra x)v$. 
% A collection $\{k_x\}$ of operators defines a quantum channel via the left-hand side (LHS) of \eqref{eq:kraus form} if and only if the constraint on the RHS holds. In this case, they are called \emph{Kraus operators} for $T$.
% Kraus operators always exist but are not uniquely determined by the channel $T$.
% The smallest number of Kraus operators for $T$ is the \emph{Kraus rank}. Equivalently, the Kraus rank is the smallest dimension of the Hilbert space $\K$ of an ancillary system admitting a Stinespring dilation.

Consider again a quantum channel $T$ with unitary dilation $u\in \U(\H\ox\K)$, $\Omega\in \K$ (cp.\ \eqref{eq:unitary-dilation}).
We now consider what happens when an agent performs a measurement $(n_x)$ on the ancillary system instead of discarding it.
Since the measurement outcome depends on the state of the main system, this defines a measurement $(m_x)$ on $\H$, which is readily checked to be given by
\begin{equation}\label{eq:measurement-dilation}
    m_x = v^*(1\ox n_x)v, \qquad v= u(1\ox\ket\Omega).
\end{equation}
In addition to a measurement on $\H$, this setup includes a description of the \emph{post-measurement states}: 
If $\phi$ is the initial state of the system and if $x$ is the observed outcome, then the post-measurement state $\phi_x'$ is 
\begin{equation}\label{eq:post-measurement}
    \phi_x' = \frac1{p_x} (T_x)_*(\phi),
\end{equation}
where $p_x = \phi(m_x)$ is the probability of the outcome $x$, and $(T_x)_*$ is the Schrödinger picture version of the subunital cp map
\begin{equation}\label{eq:dilation of Tx}
    T_x(a) = v^*(a\ox n_x)v, \qquad a\in\B(\H).
\end{equation}
The maps $T_x$ satisfy $\sum_x T_x=T$ and include a description of the measurement $m_x$ via $m_x=T_x(1)$.
This structure is known as an \emph{instrument}: An instrument is a collection of cp maps $(T_x)_{x\in X}$ such that $T=\sum_x T_x$ is a quantum channel.\footnote{We have assumed that the post-measurement states are states of the given system. More generally, they could be states of other systems. In particular, measurements are instruments where the output system is trivial.}
It describes a general measurement apparatus, including a description of post-measurement states.
Importantly, every instrument is of the above form: If $T = v^*(\placeholder\ox1)v$ is a Stinespring dilation of $T$, then there exists a POVM $(n_x)$ on the ancilla such that \eqref{eq:dilation of Tx} holds.
% Moreover, every POVM $(m_x)$ on $\H$ is implemented by an instrument, e.g., the one given by $T_x = k_x^*(\placeholder) k_x$ for some operators $k_x$ with $k_x^*k_x=m_x$.
Every instrument $(T_x)$ has a Kraus representation: There exist Kraus operators $\{k_{y|x}\}$ such that
\begin{equation}\label{eq:instrument Kraus}
    T_x = \sum_{y} k_{y|x}^*(\placeholder)k_{y|x}, \qquad \sum_{y}k_{y|x}^*k_{y|x}=T_x(1).
\end{equation}
In the case that each $T_x$ has Kraus rank $1$, i.e., $T_x = k_x^*(\placeholder)k_x$ for some $k_x\in\B(\H)$, we call $(T_x)$ an \emph{instrument with Kraus rank 1}.
These instruments have the property that the post-measurement states of all pure states are pure.
By \eqref{eq:instrument Kraus}, every instrument arises from an instrument with Kraus rank 1 via coarse-graining the output space.

Finally, we discuss topologies on the set of operations.
Let us begin with effects: A net $(f_n)$ converges to an effect $f$ in expectation values, i.e., $\lim_n \phi(f_n)=\phi(f)$ for all $\phi\in\nstates(\H)$, if and only if $f_n\to f$ in the $\sigma$-weak operator topology.
Indeed, this follows because $\predualB(\H)$ is spanned by the normal state space $\nstates(\H)$.
Similarly, a net $(m_{n;x})$ of POVMs with outcome space $X$ (independent of $n$), converges to a POVM $(m_x)$ in expectation values if and only if $m_{m;x}\to m_x$ in $\sigma$-weak operator topology for all $x\in X$.
On the set of channels, we have two natural topologies:
\begin{itemize}
    \item the \emph{topology of convergence in expectation values}, where a net $(T_n)$ converges to a channel $T$ if $\lim_n \phi(T_n(a))=\phi(T(a))$ for all $a\in\B(\H)$, $\phi\in\nstates(\H)$, and 
    \item the \emph{topology of norm convergence on states}, where a net $(T_n)$ converges to a channel $T$ if $\lim_n \norm{(T_n)_*(\phi)-T_*(\phi)}=0$ for all $\phi\in\nstates(\H)$.
\end{itemize}
Clearly, the former is a stronger topology than the latter.
By the above, we identify the topology of convergence in expectation values as the \emph{point-$\sigma$-weak topology} (see \cref{sec:hilbert-spaces}).
Likewise, the topology of norm convergence on states, is the point-norm topology of the predual maps $T_*$.
Both of these topologies natural extend to the set of instruments with a fixed outcome space $X$.

\subsection{What is a subsystem and why is it a von Neumann algebra?}\label{sec:agents}

The systems that are considered in quantum (information) theory are often effective systems rather than fundamental ones.
Examples range from effective qubit systems, e.g., transmon qubits \cite{transmon}, to the code spaces in quantum error correction \cite{kitaevFaulttolerantQuantumComputation2003,kitaevAnyonsExactlySolved2006} or quasi-particles like phonons or anyons.
In the mathematical analysis, it does not matter how a system is realized physically.
An electron's spin degree of freedom is just as much a "qubit" as every other effective two-dimensional system.
Whenever emergent systems arise, they arise as subsystems of larger quantum systems that are composed of fundamental systems.

We aim for an operational treatment of quantum systems that are more general than those described by a Hilbert space alone, which we call \emph{full} quantum systems.
Here, we consider quantum systems that arise as \emph{subsystems} of full quantum systems.
As we will see later, this is the most general case.
But what qualifies as a subsystem? --- As discussed in the previous paragraph, this does not have a unique answer. 
In the following, we adopt a purely operational point of view by considering hypothetical \emph{agents with partial access to a full quantum system}.
Such agents are specified by the collection $\O$ of operations (measurements, channels, and instruments) that they can implement on the full system.
We identify a subsystem with an agent that can perform those operations on the full subsystem that only act on the subsystem.
If the set of operations of the agent behaves like the set of operations on a subsystem (this will be made precise), we regard the agent's operations as those that only act on a (virtual) subsystem of the full system.

In von Neumann algebraic quantum information theory, we consider agents with partial access to a full quantum system whose implementable operations are captured by a von Neumann algebra, called the \emph{observable algebra}, on the Hilbert space of the full quantum system.
Before we discuss the justification and limitations of this assumption, particularly in regard to describing subsystems, we describe the resulting set of implementable operations.

Let us denote by $\H$, the Hilbert space of the full quantum system, and let $\M\subset \B(\H)$ be the observable algebra associated with an agent with partial access to the full system.
We begin by specifying the agent's instruments as follows:
\begin{center}\it
    The agent can implement instruments by preparing an ancillary system with Hilbert \\ space $\K$, applying a unitary in $\U(\M \barox \B(\K))$, and then measuring the ancilla.
\end{center}
The implementable measurements and channels are derived from this: 
The agent can implement channels/measurements by discarding the classical/quantum output of an instrument they can implement.
Let us repeat the above in mathematical terms: The agent can implement all instruments $(T_x)$ on $\H$ that are of the form 
\begin{equation}\label{eq:implementable instruments}
    T_x = (1\ox \bra\Omega)u^* ( \placeholder\ox n_x)u(1\ox\ket\Omega),
\end{equation}
for some Hilbert space $\K$, a unit vector $\Omega\in \K$, a POVM $(n_x)$ on $\K$ and a unitary $u\in\M\barox\B(\K)$.
Here, we have assumed without loss of generality that the ancilla is prepared in a pure state. Indeed, if the state of the ancilla were mixed, one could pass to a purification to obtain a pure state-unitary dilation of the instrument.
The repeatable measurements and channels are then given by \eqref{eq:measurement-dilation} and \eqref{eq:unitary-dilation}, respectively.

As a consequence of the results in \cref{sec:M-operators}, in particular, \cref{lem:unitary-dilation}, we will show that the agent can implement precisely the set $\O_\M$ of \emph{$\M$-inner} operations.
We have already discussed $\M$-inner channels (under the name `normal ucp maps') in \cref{sec:M-operators}.
An instrument $(T_x)$ is $\M$-inner if each $T_x$ is an $\M$-inner cp map.
This means that the instrument has a Kraus decomposition with Kraus operators $\{k_{y|x}\}$ in $\M$ (see \eqref{eq:instrument Kraus}).
A POVM $(m_x)$ on $\H$ is $\M$-inner if $m_x\in\M$ for all $x$.

\begin{proposition}\label{prop:implementable}
    Consider an agent with observable algebra $\M$.
    Then the agent's operations, as defined above, are exactly the $\M$-inner ones.
\end{proposition}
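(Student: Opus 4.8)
The plan is to establish the two inclusions between the set of instruments the agent can implement---those of the form \eqref{eq:implementable instruments}---and the set $\O_\M$ of $\M$-inner instruments. The two workhorses are \cref{lem:unitary-dilation}, which produces $\M$-unitary dilations of $\M$-inner channels, and \cref{lem:winner} together with \cref{lem:M-operators}, which identify the Stinespring isometries and Kraus operators of $\M$-inner maps as $\M$-operators. I would also use the general instrument-dilation theory recalled in \cref{sec:operations-full-sys}: every Stinespring dilation $v$ of a channel $T$ carries a POVM $(n_x)$ on the ancilla that realizes any prescribed decomposition $T=\sum_x T_x$ via $T_x = v^*(\placeholder\ox n_x)v$ (the statement following \eqref{eq:dilation of Tx}).

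For the inclusion ``implementable $\Rightarrow$ $\M$-inner'', I would start from an instrument of the form \eqref{eq:implementable instruments} and set $v = u(1\ox\ket\Omega)\colon\H\to\H\ox\K$, so that $T_x(a)=v^*(a\ox n_x)v$. Since $u\in\M\barox\B(\K)$ and $1_\H\ox\ket\Omega\in\M\barox\B(\CC,\K)$, closure of the $\M$-operators under products (the remark after \cref{lem:M-operators}) shows $v$ is an $\M$-operator. Decomposing each effect as $n_x=\sum_y\kettbra{\eta_{y|x}}$ and using $a\ox\kettbra\eta = (1\ox\ket\eta)\,a\,(1\ox\bra\eta)$, I would read off the Kraus operators $k_{y|x}=(1\ox\bra{\eta_{y|x}})v$, giving $T_x=\sum_y k_{y|x}^*(\placeholder)k_{y|x}$. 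Each $k_{y|x}$ is a product of $\M$-operators mapping $\H\to\H$, hence an element of $\M$ by \cref{lem:M-operators}, so $(T_x)$ is $\M$-inner.

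For the converse, I would take an $\M$-inner instrument $(T_x)$, i.e.\ $T_x=\sum_y k_{y|x}^*(\placeholder)k_{y|x}$ with all $k_{y|x}\in\M$ and $\sum_{x,y}k_{y|x}^*k_{y|x}=1$. The total channel $T=\sum_x T_x$ is then an $\M$-inner normal ucp map, so \cref{lem:unitary-dilation} (in the case $\K_1=\K_2=\CC$) furnishes an $\M$-unitary dilation $(u,\Omega)$ with $u\in\M\barox\B(\K)$ and $\Omega\in\K$. Setting $v=u(1\ox\ket\Omega)$ gives a Stinespring dilation of $T$, and the instrument-dilation theory then supplies a POVM $(n_x)$ on $\K$ with $T_x=v^*(\placeholder\ox n_x)v=(1\ox\bra\Omega)u^*(\placeholder\ox n_x)u(1\ox\ket\Omega)$, which is exactly the implementable form \eqref{eq:implementable instruments}.

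The step I expect to be the main obstacle is the last one: the unitary dilation is built from the aggregate channel $T$ and is a priori unaware of the prescribed fine-graining $(T_x)$, so one must check that a single POVM on the (possibly non-minimal) dilation space $\K$ reproduces all the $T_x$ simultaneously and genuinely sums to $1$. I would handle this through the cp-order bijection of \cref{lem:stinespring}: each $T_x\le_{cp}T$ corresponds to an effect $n_x^{\min}$ on the minimal dilation space with $\sum_x n_x^{\min}=1$, and transporting these along the comparison isometry $w$ of \cref{lem:stinespring} (with $v=(1\ox w)v_{\min}$) to $\K$ yields effects $w\,n_x^{\min}\,w^*$; absorbing the defect $1-ww^*$ into a single outcome---harmless because $(1\ox(1-ww^*))v=0$---produces the required POVM on $\K$.
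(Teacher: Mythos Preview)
Your proposal is correct and follows essentially the same route as the paper's proof. The paper dispatches the implementable $\Rightarrow$ $\M$-inner direction in a single sentence (``by definition''), while you spell it out via \cref{lem:M-operators}; for the converse, both you and the paper obtain the POVM on the minimal Stinespring space via \cref{lem:stinespring} and then transport it to the unitary-dilation ancilla along the comparison isometry $w$, the only cosmetic difference being that you dump the defect $1-ww^*$ into a single outcome (using $(1\ox(1-ww^*))v=0$) whereas the paper spreads it over all outcomes via an arbitrary probability distribution $(p_x)$.
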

\begin{proof}%[Proof of \cref{prop:implementable}]
    By definition, all operations of the agent are $\M$-inner. We have to show the converse.
    The Kraus decomposition tells us that $\M$-inner channels arise from discarding the classical output of an $\M$-inner instrument.
    If $(m_x)$ is a $\M$-inner POVM, then $T_x= m_x^{1/2}(\placeholder)m_x^{1/2}$ is an $\M$-inner instrument from which $(m_x)$ arises by discarding the quantum output.
    It remains to show that every $\M$-inner instrument $(T_x)$ arises as in \eqref{eq:implementable instruments}.
    Let $v:\H\to\H\ox\K$ be the minimal Stinespring isometry of $T=\sum_x T_x$.
    By \cref{lem:stinespring}, there is a POVM $(n_x)$ on $\K$ such that $T_x= v^*(\placeholder\ox n_x)v$ for all $x$.
    By \cref{lem:unitary-dilation}, the $\M$-inner channel $T$ has a unitary dilation $T= (1\ox\bra\Omega)u^*(\placeholder\ox1)u(1\ox \ket\Omega)$ for some unit vector $\Omega\in\tilde\K$, and $u\in\U(\N\ox\B(\tilde\K))$.
    Since $u(1\ox\ket\Omega)$ is a Stinespring isometry for $T$, \cref{lem:stinespring} implies an isometry $w:\K\to\tilde\K$ with $(1\ox w)v = u(1\ox\ket\Omega)$.
    Let $(p_x)$ be a probability distribution and set $\tilde n_x= w q_x w^*+p_x (1-w^*w)$.
    Then $(\tilde n_x)$ is a POVM on $\K$ and $T_x = (1\ox \bra\Omega)u^*(\placeholder\ox \tilde n_x)u(1\ox\ket\Omega)$.
\end{proof}

We highlight some consequences of \cref{prop:implementable} before we come to the discussion of the assumption that the agent's operations are described by a von Neumann algebra.
The name `observable algebra' for the von Neumann algebra $\M$ associated with an agent is due to the following:
Since the agent's measurements are described by the POVMs in $\M$, the set of effects that the agent can implement is precisely the unit interval
\begin{equation}
    [0,1]_\M = \{ 0\le f \le 1\ :\ f\in\M\}.
\end{equation}
Hence, $\M$ is generated by the agent's observables (= measurements).
In \cref{sec:M-operators}, we discussed equivalent characterization of $\M$-inner channels.
A notable one is the following: A channel $T$ on the full system is $\M$-inner if and only if it acts trivially on the commutant $\M'$. Hence, \cref{prop:implementable} shows
\begin{equation}
    \text{$T$ is implementable} \quad \iff \quad T \restriction \M' = \id.
\end{equation}
Moreover, we see that one agent is more powerful (i.e., can implement more operations) than another one, if and only if the other agent's observable algebra is a subalgebra:
\begin{equation}\label{eq:inclusion of observable algs}
    \M\subset \N \quad \iff \quad \O_\M\subset\O_\N.
\end{equation}
Thus, the study of inclusions of von Neumann algebras corresponds to the study of a pair of agents where one is more powerful than the other.
Let us also mention that the set of $\M$-inner operations is closed in the topology of convergence in expectation values, and, hence, closed in the topology of norm convergence on states.
Indeed, if $T_n$ is a net of $\M$-inner channels that converges to $T$ in expectation values, then $T_n \restriction \M'=\id$ implies $T\restriction \M'=\id$, which shows that is $\M$-inner as well (see \cref{lem:winner}). 
The analogous statement holds for instruments.
If $(f_n)$ is a net of effects in $\M$ and $f$ is an effect such that $f_n$ converges to $f$ in expectation values, i.e., $\lim_n\phi(f_n) =\phi(f)$ for all $\phi\in\nstates(\H)$, i.e., $f_n\to f$ in the $\sigma$-weak operator topology, then $f\in \M$ since von Neumann algebras are $\sigma$-weakly closed.
Applying this to all effects in a POVM, we see that the set of measurements is closed in the topology of convergence in expectation values.

\null

Having discussed the mathematical consequences, we now come to the motivation and justification of the assumption that the operations of an agent with partial access to a full quantum system are described by a von Neumann algebra as above.
In the following, we focus on a few key points. A purely operational characterization of the assumption will be shown in \cref{sec:axioms}.

Let us begin by saying that we neither claim that every von Neumann algebra corresponds to a physically relevant setup, nor that all settings in which some agent has partial access to a quantum system should be modeled with a von Neumann algebra.\footnote{The latter is obviously false for an agent that can only choose between two measurements.}
However, we believe that many relevant settings exist in which the above description is appropriate, particularly as a description of subsystems.

Perhaps, the strongest motivation is an abundance of examples.
The prime source of nontrivial examples comes from studying sectors of quantum systems with infinitely many degrees of freedom.
The choice of sector determines a Hilbert space $\H$, describing the `full system', and subsystems, e.g., field excitations in certain localization regions, are described by von Neumann algebras on $\H$ \cite{haag_local_1996,naaijkensQuantumSpinSystems2017,gabbiani_operator_1993}.
We discuss this class of examples in detail in the settings of quantum many-body systems and quantum field theory in \cref{sec:many-body,sec:qft}, respectively.
In \cref{sec:idealized}, we study how the idealization of having infinitely many entangled states in quantum information theory can be studied rigorously in the von Neumann algebraic setting.

The existence of many examples does not rule out the possibility that the set of $\M$-inner operations might have certain undesirable features for sufficiently weird von Neumann algebras.
Moreover, it does not suddenly turn the mathematical assumption that an agent with partial access to a full quantum system is modeled by a von Neu\-mann algebra into an operational one.
We address these points by showing that this mathematical assumption is, in fact, equivalent to a set of purely operational axioms.
As the arguments require some space, we devote a separate subsection to this (see \cref{sec:axioms}).
The upshot is that we can arrive at our assumption, and, hence, at the definition of a von Neumann algebra in purely operational terms.

% We now mention a few facts about the in which the setting of $\M$-inner operations arises naturally.

Next, we show that the assumption that an agent has access to the set of $\M$-inner operations for some von Neumann algebra $\M$ can always be regarded as an idealization.

\begin{proposition}\label{prop:idealization}
    For every set $\O$ of operations on a quantum system with Hilbert space $\H$, there is a unique smallest von Neumann algebra $\M$ such that every operation in $\O$ is $\M$-inner.
\end{proposition}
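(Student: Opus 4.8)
The plan is to first solve the problem for a single operation and then glue. The central reduction is the following claim: for each individual operation $O$ (a POVM, a channel, or an instrument), there is a unique smallest von Neumann algebra $\M_O$ on $\H$ such that $O$ is $\M_O$-inner, and moreover
\[
    O \text{ is } \N\text{-inner} \iff \M_O \subset \N
\]
for every von Neumann algebra $\N$ on $\H$. Granting this, the proposition is immediate: set $\M := \big(\bigcup_{O\in\O}\M_O\big)''$, the smallest von Neumann algebra containing every $\M_O$. Since $\M_O\subset\M$, each $O$ is $\M$-inner. Conversely, if $\N$ is any von Neumann algebra making every $O\in\O$ inner, then $\M_O\subset\N$ for all $O$, hence $\bigcup_O\M_O\subset\N$ and therefore $\M=(\bigcup_O\M_O)''\subset\N''=\N$. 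Thus $\M$ is contained in every candidate, which makes it smallest; being smallest in the inclusion order, it is unique.

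It remains to construct $\M_O$ for each type of operation. For a POVM $(m_x)$ this is trivial: by definition $(m_x)$ is $\N$-inner iff $m_x\in\N$ for all $x$, so $\M_{(m_x)}=\{m_x : x\}''$ works. For a channel $T$, fix any Kraus decomposition $T=\sum_x k_x^*(\placeholder)k_x$ and put $\M_T=\{k_x : x\}''$; then $T$ is manifestly $\M_T$-inner. If, on the other hand, $T$ is $\N$-inner, then applying \cref{lem:winner} with trivial ancillas $\K_1=\K_2=\CC$ --- where the $\N$-operators from $\H$ to $\H$ are exactly the elements of $\N$ --- shows that \emph{all} Kraus operators of $T$, in particular our $k_x$, lie in $\N$; hence $\M_T\subset\N$. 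Independence of $\M_T$ from the chosen decomposition is then automatic: any two such algebras contain one another by the displayed equivalence.

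For an instrument $(T_x)$ we reduce to the channel case. Fix Kraus decompositions $T_x=\sum_y k_{y|x}^*(\placeholder)k_{y|x}$ and set $\M_{(T_x)}=\{k_{y|x} : x,y\}''$; again $(T_x)$ is $\M_{(T_x)}$-inner by construction. If $(T_x)$ is $\N$-inner, then each $T_x$ has Kraus operators in $\N$, so the total channel $T=\sum_x T_x=\sum_{x,y}k_{y|x}^*(\placeholder)k_{y|x}$, which satisfies $\sum_{x,y}k_{y|x}^*k_{y|x}=\sum_x T_x(1)=1$, is an $\N$-inner channel, and \cref{lem:winner} again forces every $k_{y|x}\in\N$; thus $\M_{(T_x)}\subset\N$. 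This completes the construction of the $\M_O$ and hence the proof.

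The only genuine work sits in the single-operation claim, and there the one substantial input is \cref{lem:winner}: the fact that $\N$-innerness of a channel, a priori the mere existence of \emph{some} Kraus decomposition inside $\N$, in fact forces every Kraus operator of every decomposition into $\N$ (equivalently, is detected by the commutant condition $T\restriction\N'=\id$, cf.\ \eqref{eq:inclusion of observable algs}). Everything else --- the passage from instruments to their total channel, the formation of $(\bigcup_O\M_O)''$ as a smallest enclosing algebra, and the uniqueness of a smallest element --- is formal.
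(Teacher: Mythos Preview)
Your proof is correct and essentially follows the same approach as the paper: both rely on the key input from \cref{lem:winner} that $\N$-innerness forces \emph{every} Kraus operator into $\N$, and then take the von Neumann algebra generated by (some choice of) Kraus operators and effects. The paper does this in one stroke, generating $\M$ directly from all Kraus operators and effects of operations in $\O$, whereas you first construct a minimal $\M_O$ for each individual operation and then form $(\bigcup_O\M_O)''$; the two constructions visibly coincide. Your reduction of instruments to their total (unital) channel before invoking \cref{lem:winner} is a nice explicit touch that the paper leaves implicit.
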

\begin{proof}
    Let $\M$ be the von Neumann algebra generated by the Kraus operators of channels or instruments in $\O$ and by the effects of measurements in $\O$.
    Clearly, this makes every operation in $\O$ an $\M$-inner one.
    Conversely, every von Neumann algebra $\N$ such that $\O\subseteq\O_\N$ must contain these elements (see \cref{sec:M-operators}). Thus, we have $\M\subset\N$.
\end{proof}

A setting in which a description via von Neumann algebras arises naturally is that of \emph{symmetries}, where the implementable operations are those that commute with a symmetry group $G$ of unitaries on $\H$.
This point of view arises in various contexts in physics, e.g., in the setting of quantum reference frames \cite{bartlett_reference_2007,fewster_quantum_2024,wigner1995messung,yanase1961optimal,araki_measurement_1960}.
The set of operations that have a unitary implementation which is covariant with respect to the trivial group representation on the ancilla system, is precisely the set of $\M$-inner operations with $\M$ the commutant of the symmetry group:

\begin{lemma}\label{lem:covariance}
    Let $G$ be a group and let $g\mapsto v_g$ be a unitary representation on $\H$. The following are equivalent for an instrument $(T_x)$:
    \begin{enumerate}[(a)]
        \item\label{it:covariance1} $(T_x)$ has a covariant unitary dilation: There is a Hilbert space $\K$, a unitary $u\in\U(\H\ox\K)$ commuting with $\{v_g\ox1\}_{g\in G}$, a POVM $(n_x)$ on $\K$, and a vector $\Omega\in\K$ such that \eqref{eq:implementable instruments} holds.
        \item\label{it:covariance2} $(T_x)$ is $\M$-inner with $\M = \{v_g : g\in G\}'$.
    \end{enumerate}
\end{lemma}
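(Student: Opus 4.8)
The plan is to reduce the lemma to \cref{prop:implementable} by means of a single algebraic identity: covariance of a dilating unitary is literally the same as that unitary being an $\M$-operator. Once this is established, both implications become one-line consequences of the fact (already proved in \cref{prop:implementable}) that $\M$-inner instruments are exactly those implementable by an $\M$-unitary dilation of the form \eqref{eq:implementable instruments}.

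First I would establish the identity. Writing $\N = \{v_g : g\in G\}''$ for the von Neumann algebra generated by the representation, we have $\M = \N'$ and $\{v_g\ox 1_\K : g\in G\}'' = \N\barox\CC1$. The commutation theorem for von Neumann algebra tensor products then gives
\begin{equation*}
    \{v_g\ox 1_\K : g\in G\}' = (\N\barox\CC1)' = \N'\barox\B(\K) = \M\barox\B(\K).
\end{equation*}
In particular, a unitary $u\in\U(\H\ox\K)$ commutes with every $v_g\ox 1_\K$ if and only if it is an $\M$-unitary, i.e.\ $u\in\M\barox\B(\K)$.

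With this in hand the two implications are short. For \ref{it:covariance1}$\Rightarrow$\ref{it:covariance2} I would start from a covariant dilation $(u,\Omega,(n_x))$ as in \eqref{eq:implementable instruments}; the identity makes $u$ an $\M$-operator, and decomposing $n_x = \sum_y \kettbra{\eta_{y|x}}$ exhibits Kraus operators $k_{y|x} = (1\ox\bra{\eta_{y|x}})\,u\,(1\ox\ket\Omega)$ which are products of $\M$-operators (note $1\ox\bra{\eta_{y|x}}$ and $1\ox\ket\Omega$ are $\M$-operators since $1\in\M$), hence lie in $\M$ by \cref{lem:M-operators}; thus $(T_x)$ is $\M$-inner. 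For \ref{it:covariance2}$\Rightarrow$\ref{it:covariance1} I would invoke \cref{prop:implementable}: by the construction carried out in that proof (which supplies, via \cref{lem:unitary-dilation} and \cref{lem:winner}, an $\M$-unitary $u$, a vector $\Omega$, and a POVM $(n_x)$) an $\M$-inner instrument is of exactly the form \eqref{eq:implementable instruments}, and by the identity such a $u$ automatically commutes with all $v_g\ox 1_\K$, so the dilation is covariant.

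The only genuinely new ingredient is the commutation identity $\{v_g\ox 1\}' = \M\barox\B(\K)$, and this is the step I would state most carefully. Since it is a standard application of the tensor-product commutation theorem, I do not expect a real obstacle here — the main work is already done in \cref{prop:implementable}, and the lemma is essentially the statement that the notions "implementable by a symmetry-covariant unitary" and "implementable by an $\M$-unitary" coincide.
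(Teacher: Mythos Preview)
Your proposal is correct and follows essentially the same route as the paper: the key identity $\{v_g\ox 1_\K\}' = \M\barox\B(\K)$ reduces covariance of the dilating unitary to it being an $\M$-operator, after which both directions are immediate from \cref{prop:implementable}. You spell out the (a)$\Rightarrow$(b) direction with an explicit Kraus decomposition, which is fine but unnecessary since that implication is already the trivial direction of \cref{prop:implementable}; the paper simply cites the proposition for both directions.
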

\begin{proof}
    Since $\{v_g\ox 1: g\in G\}= \M\barox\B(\K)$, the commutativity assumption in \ref{it:covariance1} is equivalent to $u \in \M \barox \B(\H)$. Thus, the statement follows from \cref{prop:implementable}.
\end{proof}

We now consider a general setup of approximately implementable operations.
This is an abstraction of what happens in the examples discussed in \cref{sec:many-body,sec:idealized}.
We consider a sequence (or a net) of agents $A_n$ with partial access to a full quantum system with Hilbert space $\H$, e.g., each agent could have access to a finite-dimensional subsystem.
We assume the agents to be of increasing capability, i.e., if $n\ge m$, then the agent $A_n$ can implement all operations of the agent $A_m$.
By \eqref{eq:inclusion of observable algs}, this corresponds to an increasing sequence (net) $(\M_n)$ of von Neumann algebras on $\H$.

We say that a quantum channel $T$ can be \emph{implemented approximately} by the agents $(A_n)_n$ in a topology on the set of channels, if there is a net $(T_\alpha)$ of channels such that each $T_\alpha$ is $\M_{n_\alpha}$-inner for some $n_\alpha$ (and hence $\M_k$-inner or all $k\ge n_\alpha$) and such that $T_\alpha$ converges to $T$ in the chosen topology.

\begin{lemma}\label{lem:approx-inner}
    Let $T$ be a quantum channel.
    The following are equivalent:
    \begin{enumerate}[(a)]
        \item\label{it:approx-inner1} $T$ can be implemented approximately by the agents $(A_n)$ in the topology of convergence in expectation values (= the point-$\sigma$-weak topology),
        \item\label{it:approx-inner2} $T$ can be implemented approximately by the agents $(A_n)$ in the topology of norm convergence on states,
        \item\label{it:approx-inner3} $T$ is $\M$-inner, where $\M = \bigvee_n \M_n$ the von Neumann algebra generated by the sequence (net) $(\M_n)$.
    \end{enumerate}
\end{lemma}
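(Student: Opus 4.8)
The plan is to prove the cycle $\ref{it:approx-inner3}\Rightarrow\ref{it:approx-inner2}\Rightarrow\ref{it:approx-inner1}\Rightarrow\ref{it:approx-inner3}$, with essentially all the work in the first implication. The two easy links rest on two observations. First, since $\norm{(S_*-T_*)\phi} = \sup_{\norm a\le 1}\abs{\phi(S(a))-\phi(T(a))}$, norm convergence on states is a \emph{finer} (stronger) notion than convergence in expectation values, so any net witnessing \ref{it:approx-inner2} also witnesses \ref{it:approx-inner1}; this gives $\ref{it:approx-inner2}\Rightarrow\ref{it:approx-inner1}$. Second, for $\ref{it:approx-inner1}\Rightarrow\ref{it:approx-inner3}$ I would use de Morgan's rule in the form $\M' = (\bigvee_n\M_n)' = \bigcap_n\M_n'$. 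If $T_\alpha\to T$ in expectation values with each $T_\alpha$ being $\M_{n_\alpha}$-inner, then for every $a'\in\M'\subset\M_{n_\alpha}'$ the characterization \ref{it:winner2} of \cref{lem:winner} gives $T_\alpha(a')=a'$; passing to the limit yields $\phi(T(a'))=\phi(a')$ for all $\phi\in\nstates(\H)$, hence $T(a')=a'$ for all $a'\in\M'$, which is exactly $\M$-innerness again by \cref{lem:winner}.

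The heart of the matter is $\ref{it:approx-inner3}\Rightarrow\ref{it:approx-inner2}$: from an $\M$-inner channel $T$ I must build a net of $\M_n$-inner channels converging to $T$ in the norm-on-states topology. I would begin with the minimal Stinespring isometry $v:\H\to\H\ox\K$ of $T$, which is an $\M$-operator by \cref{lem:winner}, and write it via \cref{lem:M-operators} as $v=\sum_i v_i\ox\ket i$ with $v_i\in\M$ and $\sum_i v_i^*v_i=1$. Because $\M=(\bigcup_n\M_n)''$ with the $\M_n$ increasing, the Kaplansky density theorem supplies a net of contractions $v_\beta$, each lying in some $\M_{n_\beta}\barox\B(\CC,\K)$, with $v_\beta\to v$ $\sigma$-strong${}^*$ (first truncating the possibly infinite index set $i$, so that only finitely many components need to be approximated at each stage). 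Joint $\sigma$-strong${}^*$ continuity of multiplication on bounded sets then gives $v_\beta^*v_\beta\to v^*v=1$ $\sigma$-strongly.

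Since $v_\beta$ is only a contraction, I would repair normalization by setting $c_\beta=(1-v_\beta^*v_\beta)^{1/2}\in\M_{n_\beta}$ and defining the $\M_{n_\beta}$-inner channel
\begin{equation*}
    \hat T_\beta(a) = v_\beta^*(a\ox 1_\K)v_\beta + c_\beta\, a\, c_\beta, \qquad a\in\B(\H),
\end{equation*}
whose unitality is immediate from $v_\beta^*v_\beta+c_\beta^2=1$. This channel has a genuine Stinespring isometry $w_\beta:\H\to\H\ox(\K\oplus\CC)$, $w_\beta=v_\beta\oplus(c_\beta\ox\ket 0)$, while $w=v\oplus 0$ is a Stinespring isometry for $T$. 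From $v_\beta\to v$ strongly together with $c_\beta\to 0$ strongly (the latter because $\norm{c_\beta\Psi}^2=\ip\Psi{c_\beta^2\Psi}\to 0$) I obtain $w_\beta\to w$ strongly.

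The final step converts strong convergence of the Stinespring isometries into norm convergence on states. For a fixed normal state $\phi$ with density operator $\rho=\sum_i\lambda_i\kettbra{\xi_i}$, I purify via $\Xi=\sum_i\sqrt{\lambda_i}\,\xi_i\ox\overline{\xi_i}\in\H\ox\overline\H$, so that $(\hat T_\beta)_*\phi$ and $T_*\phi$ are the reductions to the first factor of the pure states $(w_\beta\ox1)\Xi$ and $(w\ox1)\Xi$. Using monotonicity of the trace norm under partial trace and the bound $\norm{\kettbra{\psi}-\kettbra{\psi'}}_1\le 2\norm{\psi-\psi'}$ for unit vectors, I get
\begin{equation*}
    \norm{(\hat T_\beta)_*\phi - T_*\phi} \le 2\,\norm{\big((w_\beta-w)\ox 1\big)\Xi} = 2\Big(\textstyle\sum_i \lambda_i\norm{(w_\beta-w)\xi_i}^2\Big)^{1/2},
\end{equation*}
whose right-hand side tends to $0$ by dominated convergence (each summand vanishes by strong convergence and is dominated by $4\lambda_i$). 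Since the net $(w_\beta)$ is fixed independently of $\phi$, this holds simultaneously for all $\phi$, establishing \ref{it:approx-inner2}. The main obstacle is exactly this last point: upgrading the cheap weak (expectation-value) convergence to genuine norm convergence on states while remaining inside the class of \emph{channels}, which is what forces the Stinespring/purification estimate and the normalization correction instead of a naive Kraus-operator approximation.
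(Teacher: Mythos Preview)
Your argument is correct and follows essentially the same route as the paper: the easy implications $\ref{it:approx-inner2}\Rightarrow\ref{it:approx-inner1}\Rightarrow\ref{it:approx-inner3}$ are handled identically, and for $\ref{it:approx-inner3}\Rightarrow\ref{it:approx-inner2}$ both proofs use Kaplansky's density theorem to approximate a dilation of $T$ by $\M_{n_\alpha}$-contractions, repair unitality with the square root of the defect, and then pass from $\sigma$-strong${}^*$ convergence of the dilating operators to norm convergence on states. The only difference is that the paper works with a \emph{unitary} dilation $u\in\M\barox\B(\K)$ (obtained from \cref{lem:unitary-dilation}) and bounds $\norm{T_*(\phi)-(S_\alpha)_*(\phi)}\le\norm{u(\phi\ox\omega)u^*-u_\alpha(\phi\ox\omega)u_\alpha^*}$ directly, whereas you use the Stinespring isometry and a purification of $\phi$ to reach the same conclusion; these are equivalent ways of saying that strong${}^*$ convergence of bounded operators yields trace-norm convergence after conjugation. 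Your truncation-then-Kaplansky step for the possibly infinite Kraus index is fine, though the paper avoids it by applying Kaplansky directly to the $*$-algebra $(\bigcup_n\M_n)\odot\B(\K)$.
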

\begin{proof}
    \ref{it:approx-inner2} $\Rightarrow$ \ref{it:approx-inner1} is clear. \ref{it:approx-inner1} $\Rightarrow$ \ref{it:approx-inner3} follows from the point-$\sigma$-weak closure of the set of $\M$-inner channels (and the fact that all $\M_n$-inner operations are $\M$-inner).
    It remains to show \ref{it:approx-inner3} $\Rightarrow$ \ref{it:approx-inner2}:
    We let $\M_0$ denote the unital *-algebra generated by the von Neumann algebras $(\M_n)$.
    The point-$\sigma$-weak closure of the set of $\M$-inner channels implies the "only if" statement.
    To see the converse, consider a unitary dilation $T = (1\ox \bra\Omega)u^*(\placeholder\ox1)u(1\ox\ket\Omega)$ with $u\in\M\barox\B(\K)$, $\Omega\in\K$.
    By Kaplansky's density theorem \cite[Thm.~II.4.8]{takesaki2}, we can approximate $u$ with a net of contraction $(u_\alpha)$ in $\M_0\odot\B(\K)$.
    We consider the subunital cp maps $S_\alpha = (1\ox\bra\Omega)u_\alpha^*(\placeholder\ox1)u_\alpha(1\ox\ket\Omega)$ and $\omega = \ip\Omega{(\placeholder)\Omega}$.
    Then 
    \begin{equation*}
        \norm{T_*(\phi)-(S_\alpha)_*(\phi)} 
        % = \norm{u(\phi\ox\omega)u^*(\placeholder\ox 1) - u_\alpha(\phi\ox\omega)u_\alpha^*(\placeholder\ox 1) }
        \le \norm{u(\phi\ox\omega)u-u_\alpha(\phi\ox\omega)u_\alpha^*} \to 0
    \end{equation*}
    for all $\phi\in\nstates(\H)$.
    By construction each $S_\alpha$ is $\M_{n_\alpha}$-inner for some $n_\alpha$.
    However, $S_\alpha$ is not a channel since it might be subunital.
    We can, however, turn it into a subunital channel by considering $T_\alpha = S_\alpha + k_\alpha$, $k_\alpha = (1-S_\alpha(1))^{1/2} \in \M_{n_\alpha}$.
    By construction, $T_\alpha$ is an $\M_{n_\alpha}$-inner channel.
    Since $k_\alpha = [(1\ox \bra\Omega)(1-u_\alpha^*u_\alpha)(1\ox \ket\Omega)]^{1/2}$, the $\sigma$-strong* convergence $u_\alpha\to u$ implies that $k_\alpha\to 0$ $\sigma$-strongly. 
    Indeed, this follows from the strong continuity of the functional calculus (see \cite[Thm.~II.4.7]{takesaki1}) and the fact that the strong and $\sigma$-strong operator topologies coincide on the unit ball (see \cite[Lem.~II.2.5]{takesaki1}).
    Thus, we have $\norm{(S_\alpha)_*(\phi)- (T_\alpha)_*(\phi)} \to 0$ for all $\phi\in\nstates(\H)$, which shows that $T_\alpha$ converges to $T$ as specified.
\end{proof}

The assertion in \cref{lem:approx-inner} can be extended to instruments. 
In fact, the same proof applies if one accounts for the measurements $(n_x)$ on the ancillary system.
As a consequence, we obtain an approximation statement for measurements:
A measurement $(m_x)$ of the full system can be approximated by measurements of the agents $(A_n)$ in the topology of convergence in expectation values if and only if it is $\M$-inner, i.e., $m_x\in\M$ for all $x$.

\subsection{Axiomatic approach}\label{sec:axioms}

In this subsection, we consider a set of operational axioms describing what an agent with partial access to a full quantum system can do.
We show that these axioms are equivalent to the assumption that the agent's operations are captured by a von Neumann algebra as discussed in \cref{sec:agents}.
Our axioms are formulated in terms of a collection of instruments the agent can implement.
Quantum channels are identified with instruments that have a single classical outcome.
x
Our axioms allow the agent to freely act on ancillary quantum systems, including preparing and discarding ancillas. 
This means that they can implement operations whose input and output systems have Hilbert spaces $\H_1=\H\ox\K_1$ and $\H_2=\H\ox\K_2$, where we identify $\H\ox\CC$ with $\H$.
In this more general setting, a quantum instrument $(T_x)$ is a collection of subunital cp maps $T_x:\B(\H_2)\to\B(\H_1)$ and a quantum channel is a normal ucp map $T:\B(\H_2)\to\B(\H_1)$.

Our axioms are as follows:
\begin{enumerate}[({A}1)]
    \setcounter{enumi}{-1}
    \item\label{axiom:consistency}
        Consistency with the statistical interpretation:
        The set of operations is closed under composition.
        %, convex combinations, and coarse-graining of instruments.\footnote{An instrument $(T_x)$ is a coarse-graining of an instrument $(S_y)$ if there is a partition $Y=\cup_{x\in X} Y_x$ such that $T_x = \sum_{y\in Y_x}S_y$. The part about convex combinations only applies (and only makes sense) for pairs of instruments over the same outcome space.}

    \item\label{axiom:limits}
    Limits: The set of implementable channels is closed in the topology of convergence in expectation values.
    % The implementable operations are closed in the topology of pointwise convergence on states.

    \item\label{axiom:ancilla} 
    Full control over ancillary systems: 
    The agent can prepare and discard ancillary systems freely, and they can perform arbitrary operations between ancillary systems.

    \item\label{axiom:reverse} Reversibility: The agent can implement the inverse of every implementable unitary quantum channel.
    
    \item\label{axiom:dilation} Dilations: 
    Every implementable quantum channel can be implemented by preparing an ancillary system, applying a unitary quantum channel, and discarding the ancillary system.
    % Every implementable quantum channel on $\H$ can be implemented by the agent by preparing an ancilla, performing a unitary on the joint system, and then discarding the ancilla.
    % I.e., if $T$ is an implementable channel, there exists a unit vector $\Omega$ in a Hilbert space $\K$ and an implementable unitary\footnote{A unitary $u$,  more generally, an isometry, is implementable if $\Ad(u^*)$ is an implementable quantum channel.} $u$ on $\H\ox\K$ such that $T= (1\ox \bra\Omega)u^*(\placeholder\ox 1)u(1\ox\ket\Omega)$.

    % For every implementable quantum channel on $\H$, there is an implementable (unitary) dilation $(\K,\Omega,u)$, i.e., a Hilbert space $\K$, a unit vector $\Omega\in\K$, and an implementable unitary $u$ on $\H\ox\K$ such that $T= (1\ox \bra\Omega)u^*(\placeholder\ox 1)u(1\ox\ket\Omega)$.
    
    \item\label{axiom:completeness} Selection-completeness: Let $T\up y= (T_{x}\up y)$, $y\in Y$, be a finite collection of implementable instruments. Write $P_\phi(x|y)$ for the probability of the outcome $x$ if the measurement $T\up y$ was applied to the state $\phi$.
    For every "selection" map $y\mapsto x_y$ such that
    \begin{equation}\label{eq:axiom completness}
        \sum_y P_\phi(x_y | y) \le 1 \qquad \text{for all states $\phi$},
    \end{equation}
    there is an implementable instrument $(S_y)$ with outcome space $Y\cup\{\perp\}$ such that, for every input state, the probability and post-measurement state of an outcome $y\in Y$ of $(S_y)$ are the same as for the outcome $x_y$ of $T\up y$.
\end{enumerate}

Some remarks are in order. 
The convex combinations mentioned in \ref{axiom:consistency} correspond to statistical mixtures.
This assumption could be dropped here as it can be deduced from \ref{axiom:completeness}.
Axioms \ref{axiom:dilation} and \ref{axiom:reverse} both encapsulate the idea that the only source of irreversibility is the loss of information (see, e.g., \cite{bennett_demons_1987,gregoratti_quantum_2003}).
Roughly speaking, axiom \ref{axiom:completeness} says that any operation that can be implemented by means of postselection can also be implemented at once.

To state the main theorem, we need to extend the notion of $\M$-inner operations to the setting with ancillas.
An instrument $(T_x)$ with $T_x:\B(\H\ox\K_2)\to\B(\H\ox\K_1)$ is $\M$-inner if it has a Kraus decomposition \eqref{eq:instrument Kraus} with Kraus operators $k_{y|x}$ in $\M\barox\B(\K_1,\K_2)$.
In particular, this applies to quantum channels, which are identified with instruments with a single output.

\begin{theorem}\label{thm:axioms}
    The operational axioms \ref{axiom:consistency} to \ref{axiom:completeness} hold if and only if there is a von Neumann algebra $\M$ such that the agent can implement exactly those operations that are $\M$-inner.
\end{theorem}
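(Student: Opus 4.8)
The statement comprises two implications. For the direction that $\O_\M$ satisfies the axioms, I would verify each one using the calculus of $\M$-operators from \cref{sec:M-operators}. Axiom (A0) is the closure of $\M\barox\B(\K_1,\K_2)$ under products recorded after \cref{lem:M-operators}; (A1) is the point-$\sigma$-weak closure of the $\M$-inner channels noted in \cref{sec:agents}; (A2) holds because operations acting on the ancilla alone have Kraus operators in $1\ox\B(\K)\subset\M\barox\B(\K)$; (A3) follows since $u^*\in\M\barox\B(\K)$ whenever $u$ is; (A4) is exactly \cref{lem:unitary-dilation}; and (A5) holds because a selected instrument inherits a Kraus decomposition by $\M$-operators. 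This direction is routine once these lemmas are in place.

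For the converse, given a collection $\O$ satisfying (A0)--(A5), I would build $\M$ from the Kraus data of implementable operations. Let $S$ be the set of all operators $(1\ox\bra\Psi)u(1\ox\ket\Omega)$ with $u$ an implementable unitary channel on some $\H\ox\K$ and $\Psi,\Omega\in\K$ unit vectors; by the dilation axiom (A4) this is precisely the set of Kraus operators of implementable channels. The two structural facts are that $S$ is multiplicatively closed (compose dilating unitaries on a joint ancilla using (A2) and (A0), and read off the product of matrix elements) and that $S$ is \emph{self-adjoint}: if $k=(1\ox\bra\Psi)u(1\ox\ket\Omega)\in S$, then $u$ is implementable by (A4), hence $u^*$ is implementable by reversibility (A3), and $k^*=(1\ox\bra\Omega)u^*(1\ox\ket\Psi)\in S$. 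Thus $\lin S$ is a unital $*$-subalgebra of $\B(\H)$, and I set $\M:=S''=(\lin S)''$. Every implementable operation then has Kraus operators in $\M$, i.e.\ is $\M$-inner; this is the easy inclusion.

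The substantive inclusion is that every $\M$-inner operation lies in $\O$. I would reduce everything to channels: by \cref{lem:unitary-dilation} an $\M$-inner channel $T$ has an $\M$-inner unitary dilation $u\in\M\barox\B(\K)$, so it suffices to exhibit $T$ as a limit of implementable channels and invoke (A1). Here I would mirror the approximation in the proof of \cref{lem:approx-inner}: since $\lin S$ is $\sigma$-strongly dense in $\M$, Kaplansky's density theorem approximates $u$ $\sigma$-strong$^*$-ly by contractions $u_\alpha\in(\lin S)\odot\B(\K)$, from which one forms the subunital maps $S_\alpha=(1\ox\bra\Omega)u_\alpha^*(\placeholder\ox1)u_\alpha(1\ox\ket\Omega)$ and completes them to channels $T_\alpha=S_\alpha+k_\alpha^*(\placeholder)k_\alpha$ with $k_\alpha=(1-S_\alpha(1))^{1/2}$, giving $T_\alpha\to T$ in expectation values. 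Each $T_\alpha$ is assembled from finitely many elements of $S$ by products (A0), ancilla manipulations (A2), and gluing of Kraus branches into a single instrument, which is exactly what selection-completeness (A5) supplies. Once $\M$-inner channels are shown implementable, $\M$-inner instruments follow by dilating to an $\M$-inner channel and recovering the ancilla POVM via (A5), precisely as in \cref{prop:implementable}.

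The main obstacle is this last step: certifying that the approximants $T_\alpha$ lie in $\O$ and not merely in $\O_\M$. This is where all the axioms act together---(A0) and (A2) to form products and adjoin ancillas, (A5) to assemble the resulting operators into \emph{bona fide} implementable instruments and channels, and (A1) to pass to the limit---and the bookkeeping needed to keep the superpositions $u_\alpha$ realizable from $S$ is the genuinely technical part. The conceptual heart, by contrast, is the self-adjointness of $S$ driven by the interplay of (A3) and (A4): without reversibility one would obtain only a weakly closed operator system rather than a von Neumann algebra.
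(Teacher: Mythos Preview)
Your ``if'' direction is fine and agrees with the paper's. For the converse, the architecture is right---define the Kraus set $S$ (which coincides with the paper's $K$), generate $\M$, and prove both inclusions---but the step you yourself flag as ``the genuinely technical part'' is in fact a real gap, and (A5) does not fill it. Selection-completeness lets you \emph{concatenate} Kraus branches from different implementable instruments into a single one; it does not produce \emph{superpositions} of Kraus operators. Your approximants $u_\alpha\in(\lin S)\odot\B(\K)$ are linear combinations, so the Kraus operators of $S_\alpha$ lie only in $\lin S$, not in $S$, and the completion $k_\alpha=(1-S_\alpha(1))^{1/2}$ requires functional calculus that is not yet available. Nothing in your argument certifies that $T_\alpha\in\O$, so (A1) cannot be applied.

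The missing lemma is that $K$ is closed under linear combinations bounded by~$1$: if $k_1,\dots,k_n\in K$ and $\norm{\sum_i\lambda_ik_i}\le1$, then $\sum_i\lambda_ik_i\in K$. The paper proves this (\cref{lem:K-basics}\ref{it:K-basics6}) by first using (A5) to assemble the $k_i$ into a single implementable channel, then implementing its minimal Stinespring isometry via (A4) and (A2) (\cref{lem:implement stinespring}), and finally rotating the ancilla basis with an ordinary unitary---which by (A2) is implementable---to obtain unitary superpositions $\sum_i u_{1i}k_i$ as Kraus operators; a scaling trick extends this to arbitrary bounded linear combinations. Once this and strong closure are in hand, your Kaplansky approximation becomes unnecessary: \cref{lem:vNA-unit-ball} shows directly that $K$ is the unit ball of $\M$, and then every $\M$-inner channel has Kraus operators in $K$, hence is implementable by (A5) alone (\cref{lem:K-basics}\ref{it:K-basics3}).
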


We say that an isometry $v:\H\ox\K_1\to\H\ox\K_2$ is implementable by the agent if the quantum channel $T=\Ad(v^*)$ is implementable.
In particular, we use this language for unitaries. By \ref{axiom:reverse}, the set of implementable unitaries is closed under adjoints.

We emphasize that axiom \ref{axiom:dilation} also makes sense (and is assumed) in the presence of ancillas.
To be precise, the mathematical statement of \ref{axiom:dilation} in the presence of ancillary systems is: 
\begin{enumerate}
    \item[(A4)] 
    For every implementable channel $T:\B(\H\ox\K_2)\to\B(\H\ox\K_1)$, there exist Hilbert spaces $\tilde\K_1,\tilde\K_2$, a unit vector $\Omega\in\tilde\K_1$ and an implementable unitary $u:\H\ox\K_1\ox\tilde\K_1\to\H\ox\K_2\ox\tilde\K_2$ such that $T = (1\ox\bra\Omega)u^*(\placeholder\ox1)u(1\ox\ket\Omega)$.%
    \footnote{This formulation, in which the input and output and output ancillary systems $\tilde\K_1$ and $\tilde\K_2$ may be different, is equivalent to the same statement with $\tilde\K_1=\tilde\K_2$ if $\K_1=\K_2$.}
\end{enumerate}
For clarity, we formulate \ref{axiom:completeness} in explicit mathematical language:
\begin{enumerate}
    \item[(A5)] 
    If $T\up y =(T_x\up y)$, $y\in Y$, is a finite collection of implementable instruments and $y\mapsto x_y$ is a selection of outcomes with $\sum_y T_{x_y}\up y$ subunital, there is an implementable instrument $(S_y)$ with outcome space $Y\cup\{\perp\}$ and $S_y=T_{x_y}\up y$.
\end{enumerate}
This formulation is, of course, equivalent to the operational formulation presented above.

We are now starting the proof of the "only if" statement in \cref{thm:axioms}. Afterward, we show the converse, which is essentially contained in \cref{prop:implementable}.
Until further notice, we consider an agent with partial access to a full system with Hilbert space $\H$, and assume that the axioms \ref{axiom:consistency} to \ref{axiom:completeness} hold.
In the following, the Hilbert spaces of ancillary systems are denoted $\K,\K_1,\K_2,\ldots$.
We say that the agent can implement an isometry $v:\H\ox\K_1\to\H\ox\K_2$ if they can implement the quantum channel $\Ad(v^*): \B(\H\ox\K_2)\to\B(\H\ox\K_1)$.
In particular, this applies to unitaries.

\begin{lemma}\label{lem:implement identity}
    The agent can implement the identity channel on $\H$
\end{lemma}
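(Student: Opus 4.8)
The plan is to realize the identity channel on $\H$ as the composition of two operations that Axiom \ref{axiom:ancilla} supplies directly: preparing an ancilla and then discarding it. Concretely, fix an ancillary Hilbert space $\K$ and a unit vector $\Omega\in\K$. By \ref{axiom:ancilla}, the agent can implement the preparation channel $\mathcal P:\B(\H\ox\K)\to\B(\H)$, $\mathcal P(a)=(1\ox\bra\Omega)a(1\ox\ket\Omega)$, which has input system $\H$ and output system $\H\ox\K$, as well as the discarding channel $\mathcal D:\B(\H)\to\B(\H\ox\K)$, $\mathcal D(b)=b\ox1$, which has input $\H\ox\K$ and output $\H$.

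Next I would compose these two operations. Since the output system of the preparation matches the input system of the discarding, the composite is a well-defined channel with input and output $\H$, and by Axiom \ref{axiom:consistency} (closure under composition) it is implementable. It then remains to compute this composite in the Heisenberg picture, where physically performing "prepare then discard" corresponds to $\mathcal P\circ\mathcal D$ (the Heisenberg picture reverses the order of composition). A direct calculation gives
\[
  \mathcal P(\mathcal D(b)) = (1\ox\bra\Omega)(b\ox1)(1\ox\ket\Omega) = \ip\Omega\Omega\, b = b,
\]
so $\mathcal P\circ\mathcal D=\id_{\B(\H)}$, which is exactly the identity channel on $\H$. Hence the identity is implementable.

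There is essentially no deep obstacle here; the only thing to get right is the bookkeeping. One must keep track of the Heisenberg-picture direction so that attaching an ancilla and then tracing it out compose to the identity on $\H$ (rather than to some nontrivial map on $\H\ox\K$), and one must invoke \ref{axiom:ancilla} for both the preparation and the discarding as standalone implementable operations, together with \ref{axiom:consistency} to legitimize the composition. Notably, none of the dilation, reversibility, limit, or postselection axioms are needed for this elementary first step.
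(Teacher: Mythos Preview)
Your proof is correct, and it is more economical than the paper's own argument. The paper proceeds via \ref{axiom:dilation} and \ref{axiom:reverse}: starting from some implementable channel (tacitly supplied by \ref{axiom:ancilla}), axiom \ref{axiom:dilation} produces an implementable unitary on a larger system, \ref{axiom:reverse} gives its inverse, and their composition via \ref{axiom:consistency} yields the identity on that larger system; a final appeal to \ref{axiom:ancilla} then brings this down to the identity on $\H$. You short-circuit the detour by observing that \ref{axiom:ancilla} already hands you the preparation and discarding channels as standalone implementable operations, and that their Heisenberg composite is $\id_{\B(\H)}$ by a one-line computation. Both arguments ultimately rest on \ref{axiom:ancilla}, but yours dispenses with the dilation and reversibility axioms entirely, so your closing remark to that effect is on target. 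The paper's route does rehearse how \ref{axiom:dilation} and \ref{axiom:reverse} interact, which matters in later lemmas, but for this particular statement your argument is the cleaner one.
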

\begin{proof}
    By \ref{axiom:dilation}, the agent can implement a unitary on a larger system and by \ref{axiom:reverse}, they can implement the inverse of this unitary.
    Thus, by \ref{axiom:consistency}, they can implement the identity on a larger system. 
    In combination with \ref{axiom:ancilla}, this implies that they can implement the identity on $\H$. 
\end{proof}

Since the agent has full control over ancillary systems (see axiom \ref{axiom:ancilla}) and since they can implement the identity on $\H$, axiom \ref{axiom:consistency} ensures that they can implement all quantum channels of the form
\begin{equation*}
    T = \id\ox\;\! S: \B(\H\ox\K_2)\to\B(\H\ox\K_1), \qquad S:\B(\K_2)\to\B(\K_1).
\end{equation*}

\begin{lemma}\label{lem:implement stinespring}
    The agent can implement the minimal Stinespring isometry of every channel they can implement.
\end{lemma}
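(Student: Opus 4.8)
The plan is to obtain a unitary dilation of $T$ from axiom \ref{axiom:dilation}, implement the (generally non-minimal) Stinespring isometry it provides, and then \emph{shrink} its dilation space down to the minimal Stinespring space. So let $T\colon\B(\H\ox\K_2)\to\B(\H\ox\K_1)$ be an implementable channel with minimal Stinespring isometry $v\colon\H\ox\K_1\to\H\ox\K_2\ox\K$, i.e.\ $T=v^*(\placeholder\ox1)v$; implementing $v$ means implementing the channel $\Ad(v^*)\colon\B(\H\ox\K_2\ox\K)\to\B(\H\ox\K_1)$. First I would apply \ref{axiom:dilation} to get an implementable unitary $u\colon\H\ox\K_1\ox\tilde\K_1\to\H\ox\K_2\ox\tilde\K_2$ and a unit vector $\Omega\in\tilde\K_1$ with $T=(1\ox\bra\Omega)u^*(\placeholder\ox1)u(1\ox\ket\Omega)$. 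Setting $w=u(1\ox\ket\Omega)$ yields an isometry with $T=w^*(\placeholder\ox1)w$, so $w$ is a Stinespring isometry of $T$ with dilation space $\tilde\K_2$. The channel $\Ad(w^*)$ is implementable: by \ref{axiom:consistency} it is the composition of preparing the ancilla $\tilde\K_1$ in the state $\Omega$ (allowed by \ref{axiom:ancilla}) with the unitary channel $\Ad(u^*)$ furnished by \ref{axiom:dilation}.

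Next I would connect $w$ to $v$ using the second part of \cref{lem:stinespring}: there is an isometry $W\colon\K\to\tilde\K_2$ with $w=(1\ox W)v$; in particular $\operatorname{ran}(w)\subseteq\H\ox\K_2\ox W\K$. It then remains to pass from the dilation space $\tilde\K_2$ back to $\K$, that is, to realize the compression $b\mapsto(1\ox W)b(1\ox W^*)$ on the ancilla. This is the crux of the argument and its main obstacle: this compression is only \emph{subunital} (since $WW^*\le1$ when $W$ is a proper isometry), hence it is not a channel and cannot be composed via \ref{axiom:consistency} on its own.

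To circumvent this, I would complete the compression to a genuine unital ancilla channel. Fixing any state $\sigma_1$ on $\K$, set
\begin{equation*}
    \Phi\colon\B(\K)\to\B(\tilde\K_2),\qquad \Phi(c)=WcW^*+\tr(\sigma_1 c)\,(1-WW^*).
\end{equation*}
Then $\Phi(1)=WW^*+(1-WW^*)=1$, so $\Phi$ is a unital channel between ancillary systems, and $\id_{\H\ox\K_2}\ox\,\Phi$ is implementable by \ref{axiom:ancilla} together with \cref{lem:implement identity}. I then claim that $\Ad(w^*)\circ(\id_{\H\ox\K_2}\ox\,\Phi)=\Ad(v^*)$. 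Indeed, testing on $a\ox c$ with $a\in\B(\H\ox\K_2)$ and $c\in\B(\K)$, and using $w=(1\ox W)v$ and $W^*W=1$, the principal term gives $w^*(a\ox WcW^*)w=v^*(a\ox c)v$, while the correction term vanishes because $w^*\big(a\ox(1-WW^*)\big)w=v^*\big(a\ox W^*(1-WW^*)W\big)v=0$. Hence the composition equals $\Ad(v^*)$, which is unital since $v^*v=1$, and by \ref{axiom:consistency} the minimal Stinespring isometry $v$ is implementable. (Equivalently, one could realize the compression as the probability-one outcome of an implementable instrument and discard the classical label; the unital completion $\Phi$ has the advantage of keeping every step at the level of honest channels.)
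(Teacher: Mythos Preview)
Your proof is correct and follows essentially the same route as the paper: obtain a (non-minimal) implementable Stinespring isometry $w$ from the dilation axiom, relate it to the minimal one via $w=(1\ox W)v$, and then compose $\Ad(w^*)$ with an ancilla channel that compresses $\tilde\K_2$ down to $\K$. Your explicit choice $\Phi(c)=WcW^*+\tr(\sigma_1 c)(1-WW^*)$ and direct verification that the correction term is killed by $W^*(1-WW^*)W=0$ is in fact cleaner than the paper's formulation, which just asks for ``some quantum channel $S$ such that $S\ge w^*(\placeholder)w$'' and leaves the cancellation implicit.
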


\begin{proof}
    Let $T : \B(\H\ox\K_2)\to\B(\H\ox\K_1)$ be an implementable channel.
    By \ref{axiom:dilation}, there are Hilbert spaces $\tilde\K_1,\tilde\K_2$ and an implementable unitary $u:\H\ox\K_1\ox\tilde\K_1\to\H\ox\K_2\ox\tilde\K_2$ such that $T = s^*(\placeholder\ox 1_{\tilde\K_2})s$, where $s= u (1\ox\ket\Omega)$ for some unit vector $\Omega\in\tilde\K_1$.
    By \ref{axiom:consistency}, $s$ is an implementable isometry.
    Let $v:\H\ox\K_1 \to \H_2\ox\K_2\ox\K$ be the minimal Stinespring isometry of $T$.
    By \cref{lem:stinespring}, there is an isometry $w:\K \to \tilde\K_2$ such that $s = (1\ox w)v$.
    Hence, we have $v = (1\ox w^*)s$.
    Let $S: \B(\tilde \K_2) \to \B(\K)$ be some quantum channel such that $S \ge w^*(\placeholder)w$.
    By \ref{axiom:ancilla}, the agent can implement the channel $S$.
    Consequently, they can implement the product $\Ad(s^*)\circ (\id\ox S) = \Ad(s^*)\circ
    \Ad(1\ox w)= \Ad(v^*)$, i.e., they can implement the Stinespring isometry $v$.
\end{proof}

% \begin{lemma}\label{lem:implement-all-dilations}
%     The agent can implement every Stinespring isometry of every quantum channel they can implement.
% \end{lemma}
% \begin{proof}
%     Let $w:\H\to\H\ox\K_1$ be a Stinespring isometry.
%     By \cref{lem:implement stinespring} the minimal Stinespring isometry $v:\H\to\H\ox\K$ can be implemented.
%     As in the proof of \cref{lem:implement stinespring}, we use that there is an isometry $u:\K\to\K_1$ such that $w= (1\ox u)v$.
%     By \ref{axiom:ancilla}, the isometry $(1\ox u)$ can be implemented, which implies that $w$ can be implemented.
% \end{proof}

\begin{lemma}\label{lem:implement-all-instruments}
    If the agent can implement a quantum channel $T$, they can implement every instrument $(T_x)$ with $\sum_x T_x=T$.
\end{lemma}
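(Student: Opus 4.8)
The plan is to realize the instrument $(T_x)$ as a composition of two operations the agent is already known to implement: the minimal Stinespring isometry of $T$, followed by a measurement acting only on the dilating ancilla. Since $T$ is implementable, \cref{lem:implement stinespring} lets the agent implement the minimal Stinespring isometry $v:\H\ox\K_1\to\H\ox\K_2\ox\K$ of $T$, that is, the channel $\Ad(v^*)$ with $T=v^*(\placeholder\ox 1_\K)v$ (in the ancilla-free case one simply takes $\K_1=\K_2=\CC$).

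Next I would translate the decomposition $\sum_x T_x=T$ into a POVM on the Stinespring ancilla $\K$. Each $T_x$ is cp and $T-T_x=\sum_{y\neq x}T_y$ is cp, so $T_x\le_{cp}T$, i.e.\ $T_x\in[0,T]_{cp}$. By the cp-order isomorphism in the third item of \cref{lem:stinespring}, there is a unique $n_x\in[0,1]_{\B(\K)}$ with $T_x=v^*(\placeholder\ox n_x)v$; affineness and injectivity of this bijection, applied to $\sum_x T_x=T=v^*(\placeholder\ox 1)v$, force $\sum_x n_x=1$, so $(n_x)$ is a POVM on $\K$.

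Then I would exhibit a measure-and-discard instrument on the ancilla: let $(R_x)$ be given by $R_x:\B(\H\ox\K_2)\to\B(\H\ox\K_2\ox\K)$, $R_x(b)=b\ox n_x$. This measures the register $\K$ with $(n_x)$ and discards it, and indeed $\sum_x R_x(b)=b\ox 1$ is the discard (partial-trace) channel. Since the agent can implement the identity on $\H$ (\cref{lem:implement identity}) and $(R_x)$ factors as the identity on $\H\ox\K_2$ tensored with a measure-and-discard instrument on the ancilla $\K$ alone, axiom \ref{axiom:ancilla} together with consistency \ref{axiom:consistency} shows $(R_x)$ is implementable. Composing in the Heisenberg picture yields $(\Ad(v^*)\circ R_x)(b)=v^*(b\ox n_x)v=T_x(b)$, so by \ref{axiom:consistency} the agent implements $(T_x)$, as desired.

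The main obstacle is the only genuinely non-routine point: justifying that a measurement of the dilating ancilla can be appended "for free," i.e.\ that axiom \ref{axiom:ancilla} delivers the full instrument $(R_x)$ and not merely its associated channel $\sum_x R_x$. One must read \ref{axiom:ancilla} as granting instrument-level (not just channel-level) control over ancillas, and verify that $(R_x)$ is exactly an operation of the permitted form---the identity on $\H$ and $\K_2$ tensored with a measure-and-discard on $\K$. Everything else is bookkeeping around \cref{lem:stinespring} and the composition of Heisenberg maps.
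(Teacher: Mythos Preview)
Your proof is correct and follows essentially the same approach as the paper: use \cref{lem:implement stinespring} to implement the minimal Stinespring isometry, invoke the cp-order isomorphism from \cref{lem:stinespring} to find a POVM $(n_x)$ on the ancilla with $T_x=v^*(\placeholder\ox n_x)v$, and then use \ref{axiom:ancilla} and \ref{axiom:consistency} to measure and discard the ancilla. Your explicit discussion of why \ref{axiom:ancilla} grants instrument-level (not merely channel-level) control over ancillas is a point the paper leaves implicit in its one-sentence proof, but your reading is consistent with the paper's framing that the axioms are formulated in terms of instruments.
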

\begin{proof}
    This follows from \ref{axiom:ancilla}, \cref{lem:implement stinespring}, and the cp-order interval property of the Stinespring dilation (see \cref{lem:stinespring}), which implies that every instrument $(T_x)$ with $\sum_x T_x=T$ arises by measuring the ancilla system in the Stinespring dilation of $T$.
\end{proof}

\begin{lemma}\label{lem:compactness}
    The quantum channels $T:\B(\H\ox\K_1)\to\B(\H\ox\K_2)$ that the agent can implement form a compact set in the point-$\sigma$-weak topology.
\end{lemma}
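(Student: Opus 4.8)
The plan is to realize the set of implementable channels as a closed subset of an ambient space that is already point-$\sigma$-weakly compact, namely the set of \emph{all} unital completely positive maps $\B(\H\ox\K_1)\to\B(\H\ox\K_2)$, and then use that a closed subset of a compact space is compact. The closedness of the implementable set will come from axiom \ref{axiom:limits}, while the compactness of the ambient space is a standard Banach--Alaoglu/Tychonoff argument.

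First I would establish the ambient compactness. Viewing a ucp map $T$ as the element $(T(a))_{a}$ of the product $\prod_{a\in\B(\H\ox\K_1)} B_{\norm a}$, where $B_r\subset\B(\H\ox\K_2)$ denotes the $\sigma$-weakly closed ball of radius $r$, each factor is $\sigma$-weakly compact by Banach--Alaoglu (the $\sigma$-weak topology being the weak*-topology on $\B(\H\ox\K_2)=\T(\H\ox\K_2)^*$), and the product is compact by Tychonoff. The point-$\sigma$-weak topology on maps is exactly the subspace topology inherited from this product, so it remains to check that the ucp maps form a closed subset: positivity, complete positivity, and unitality are each cut out by requiring that certain $\sigma$-weakly continuous functionals of the values $T(a)$ be nonnegative or take prescribed values, and these conditions survive point-$\sigma$-weak limits because the positive cone is $\sigma$-weakly closed and $T\mapsto T(1)$ is continuous; linearity passes to the limit as well. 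Hence all ucp maps form a point-$\sigma$-weakly compact set containing the implementable channels.

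Next I would show that the implementable channels form a closed subset of this compact set. Axiom \ref{axiom:limits} already gives closedness \emph{within} the space of channels, so the remaining task is to verify that a point-$\sigma$-weak limit $S$ of implementable channels is again a genuine channel (that is, \emph{normal}) and is itself implementable. Here I would use the dilation structure: by \cref{lem:implement stinespring} and axiom \ref{axiom:dilation}, each implementable $T_\alpha$ carries an implementable Stinespring isometry; I would track these dilations along the net so as to identify the limit with a map that still admits an implementable dilation, forcing $S$ to be normal and to lie in the implementable set. This is the step I expect to be the main obstacle: unlike positivity or unitality, normality is \emph{not} automatically preserved under point-$\sigma$-weak limits (the normal states on $\B(\H\ox\K_1)$ fail to be $\sigma$-weakly closed), so normality of $S$ cannot simply be read off from the ambient compactness. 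It is precisely implementability---encoded through the dilations provided by the axioms---that must be used to rule out the escape of mass to infinity and to keep the limit inside the implementable, hence normal, channels.

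Finally, having realized the implementable channels as a closed subset of the point-$\sigma$-weakly compact set of all ucp maps, their compactness follows at once, completing the proof.
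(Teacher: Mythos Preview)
Your plan—compactness of an ambient set of ucp maps, then closedness via \ref{axiom:limits}—is exactly the paper's. The paper, however, takes the ambient set to be the set of all \emph{normal} ucp maps and asserts, citing Banach--Alaoglu and Tychonoff, that this set is already point-$\sigma$-weakly compact; it then invokes \ref{axiom:limits} directly and is done. So the paper simply declares away the normality-of-the-limit issue you flagged, rather than addressing it through dilations.

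The gap in your outline is the dilation-tracking step. The dilating spaces vary along the net, and even if one arranges a common infinite-dimensional ancilla, the relevant unitary group is not compact, so there is no evident way to extract a convergent subnet of dilations and thereby force the limit $S$ to be normal. Without that, \ref{axiom:limits} (which only asserts closedness among channels) does not by itself exclude non-normal cluster points. Your hesitation here is in fact well placed: on $\B(\ell^2)$ the normal ucp maps $T_n(a)=\ip{e_n}{ae_n}\,1$ have no normal point-$\sigma$-weak cluster point (any cluster point is $\omega(\,\cdot\,)1$ with $\omega(\kettbra{e_k})=0$ for all $k$ but $\omega(1)=1$), so the paper's terse argument does not visibly resolve this difficulty either.
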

\begin{proof}
    Recall that the unit ball of a von Neumann algebra is $\sigma$-weakly compact (this follows from the Banach-Alaoglu theorem).
    This implies the well-known fact that the set of normal ucp maps on $\B(\H)$ is compact in the point-$\sigma$-weak topology (indeed, this follows from Tychonov's theorem).
    By \ref{axiom:limits}, the set of implementable maps is a closed subset and, hence, itself compact.
\end{proof}

We say that an operator $k:\H_1\to\H_2$ is a Kraus operator of a quantum channel $T:\B(\H_2)\to\B(\H_1)$, if $T$ has a Kraus decomposition $T = \sum_x k_x^*(\placeholder)k_x$ with $k_x=k$ for some index.
The existence of a Kraus decomposition for general cp maps (see \cite{kraus_general_1971}) implies that $k$ is a Kraus operator of $T$ if and only if $k^*(\placeholder)k\le_{cp}T$.

We denote by $K(\K_1,\K_2)$ the set of contractions $k:\H\ox\K_1\to\H\ox\K_2$ that are Kraus operators of some quantum channel $T:\B(\H\ox\K_2)\to\B(\H\ox\K_1)$ that the agent can implement.
Identifying $\H$ with $\H\ox\CC$, the symbol $K(\CC,\CC)$ denotes the possible Kraus operators of implementable quantum channels that only act on the system. We introduce the shorthand $K:= K(\CC,\CC)$.

\begin{lemma}\label{lem:K-basics}
    \begin{enumerate}[(i)]
        \item\label{it:K-basics1} $K(\K_1,\K_2)$ contains all contractions of the form $1\ox k_0$, $k_0:\K_1\to\K_2$. 
        In particular, $K(\K,\K)$ contains all scalars with absolute value $\le1$.
        \item\label{it:K-basics2} If $k\in K(\K_1,\K_2)$ and $h\in K(\K_2,\K_3)$, then $hk\in K(\K_1,\K_3)$.
        \item\label{it:K-basics3} A quantum channel $T:\B(\H\ox\K_2)\to\B(\H\ox\K_1)$ can be implemented by the agent if and only if it has a Kraus decomposition with Kraus operators in $K(\K_1,\K_2)$.
        \item\label{it:K-basics4}
        If $k\in K(\K_1,\K_2)$ then $k^*\in K(\K_2,\K_1)$.
        \item\label{it:K-basics5}
        $K(\K_1,\K_2)$ is closed in the strong operator topology.
        \item\label{it:K-basics6}
        If $k_1,\ldots k_n\in K(\K_1,\K_2)$, $\lambda_2,\ldots\lambda_n\in \CC$ satisfy $\norm{\sum_i\lambda_ik_i}\le 1$, then $\sum_i\lambda_i k_i \in K$.
    \end{enumerate}
\end{lemma}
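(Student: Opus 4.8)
The six properties are exactly what is needed to recognise $K=K(\CC,\CC)$ (and its amplified versions) as the unit ball of a von Neumann algebra, so the plan is to prove each using only the relevant axiom, saving the linear-combination property for last. For (i), a contraction $1\ox k_0$ is a Kraus operator of $\id\ox S$ for any channel $S$ on the ancilla having $k_0$ as a Kraus operator, and such channels are implementable by \ref{axiom:ancilla} together with \cref{lem:implement identity} and \ref{axiom:consistency}; scalars are the special case $k_0=\lambda 1$. For (ii), I would use the composition formula: if $k$ is a Kraus operator of an implementable $T$ and $h$ of an implementable $S$, then the $h_yk_x$ are Kraus operators of $S\circ T$, which is implementable by \ref{axiom:consistency}, so $hk\in K(\K_1,\K_3)$. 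For (iv), $T$ has an implementable unitary dilation $u$ by \ref{axiom:dilation}, with $k=(1\ox\bra\xi)u(1\ox\ket\Omega)$ for unit vectors $\xi,\Omega$; by \ref{axiom:reverse} the reverse channel $\Ad(u)$ is implementable, and preparing the ancilla in $\xi$, applying $\Ad(u)$, and reading off the outcome $\Omega$ on the remaining ancilla realises $a\mapsto kak^*$ as one outcome of an implementable instrument (\cref{lem:implement-all-instruments}). Hence $k^*$ is a Kraus operator of the associated implementable channel, i.e.\ $k^*\in K(\K_2,\K_1)$.

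For (iii), the "only if" direction is immediate from the definition of $K(\K_1,\K_2)$. For "if", suppose $T=\sum_x k_x^*(\placeholder)k_x$ with all $k_x\in K(\K_1,\K_2)$. Each $k_x^*(\placeholder)k_x$ is cp-dominated by an implementable channel, hence is one outcome of an implementable instrument (\cref{lem:implement-all-instruments}). For a finite index set, \ref{axiom:completeness} applied to the selection picking these outcomes—whose cp-maps sum to the subunital $T$—produces an implementable instrument with exactly these outcomes, so $T$ is implementable after discarding the classical register. For infinitely many Kraus operators I would truncate to finite sub-sums $T_F$, which are implementable, note $T_F\to T$ in the point-$\sigma$-weak topology, and invoke \ref{axiom:limits}. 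For (v), let $k_n\to k$ strongly with $k_n\in K(\K_1,\K_2)$. The maps $\Phi_n=k_n^*(\placeholder)k_n$ occur as first components of implementable two-outcome instruments, and the set of such instruments is compact in the point-$\sigma$-weak topology by the same Banach--Alaoglu/Tychonov reasoning as \cref{lem:compactness}. A convergent subnet has an implementable limit whose first component is $\lim_n\Phi_n$; since $\ip\eta{k_n^*ak_n\xi}=\ip{k_n\eta}{ak_n\xi}\to\ip{k\eta}{ak\xi}=\ip\eta{k^*ak\xi}$, this limit equals $k^*(\placeholder)k$, so $k$ is a Kraus operator of an implementable channel and $k\in K(\K_1,\K_2)$.

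The last property is the crux, since together with (v) and Kaplansky's density theorem it is what identifies $K$ with the full unit ball of $\M:=\overline{\lin K}$ (a von Neumann algebra by (i), (ii), (iv), (v)). Two reductions are easy. Scalar multiplication ($\lambda k\in K$ for $|\lambda|\le1$) follows from (i) and (ii). Convexity of $K$ follows from the (iii)-construction plus an ancilla measurement: for an implementable channel with Kraus operators $\{\sqrt{p_i}\,k_i\}$ (subunital because $\sum_i p_i k_i^*k_i\le\sum_i p_i=1$), measuring the ancilla along the unit vector $\eta=\sum_i\sqrt{p_i}\,e_i$ yields the Kraus operator $(1\ox\bra\eta)v=\sum_i p_i k_i$. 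Writing a general $b=\sum_i\lambda_i k_i$ as $b=L\,c$ with $L=\sum_i|\lambda_i|$ and $c=\sum_i\tfrac{|\lambda_i|}{L}\big(\tfrac{\lambda_i}{|\lambda_i|}k_i\big)\in K$ a convex combination, the case $L\le1$ is settled by scalar multiplication; the only remaining point is to rescale $c$ up to $b$ while staying inside the unit ball when $L>1$.

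This rescaling is the genuine obstacle. The ancilla-measurement construction can only produce combinations $\sum_i c_i k_i$ with $\sum_i|c_i|^2\le1$, and so cannot exploit the operator-level cancellation among the $k_i$ that makes $\norm b\le1$ even when $\sum_i|\lambda_i|$ is large. I expect this step to require the operator structure rather than a dilation argument: one shows that the norm-closed convex set $K$ contains every implementable unitary and, via the Russo--Dye theorem \cite{russo1966note,kadison_means_1985} together with a Kaplansky-type density argument for the unitary group (using \ref{axiom:reverse}), that $K$ equals the closed convex hull of the implementable unitaries, which by Russo--Dye is precisely $B(\M)$. Proving that every unitary of $\M$ is implementable is the technical heart of the argument; once it is in place, $K=B(\M)$ yields (vi) and, with (i)--(v), the Kraus-operator characterisation underlying \cref{thm:axioms}.
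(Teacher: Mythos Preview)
Your arguments for (i)--(v) follow the paper's closely and are essentially correct. One detail in (iv): you claim $k=(1\ox\bra\xi)u(1\ox\ket\Omega)$ for a \emph{unit} vector $\xi$, but a Kraus operator of $T$ corresponds only to a sub-unit vector in the minimal Stinespring picture, and nothing guarantees the unitary dilation has enough extra room to promote it to a unit vector. The paper fixes this by normalizing: prepare the ancilla of the reverse channel in $\Phi=\xi/\|\xi\|$ and compensate by reading off the sub-unit vector $\|\xi\|\Omega$ on the other side, so that the resulting Kraus operator is still exactly $k^*$.

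The genuine gap is in (vi). Your proposed route---show that every unitary of $\M=\overline{\lin K}$ is implementable, then invoke Russo--Dye---merely restates the problem. A unitary $u\in\M$ is implementable exactly when $u\in K$ (by (iii), since $u$ is the sole Kraus operator of $\Ad(u^*)$), so ``every unitary of $\M$ is implementable'' is the inclusion $\U(\M)\subseteq K$, which via Russo--Dye and the convexity/closure you already have is equivalent to $B(\M)\subseteq K$, i.e.\ to (vi) itself. You have correctly identified the obstruction (the $\ell^2$-constraint $\sum_i|c_i|^2\le1$ on ancilla-measurement combinations) but not how to beat it.

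The paper's trick is a self-amplification. First sharpen your ancilla argument to its unitary-rotation form: if $\sum_i|\lambda_i|^2=1$ and $\sum_i k_i^*k_i\le1$, then by (iii) the $k_i$ sit inside a Kraus decomposition of an implementable channel, and rotating that decomposition by any unitary matrix whose first row is $(\lambda_i)$ yields a new Kraus decomposition with $\sum_i\lambda_ik_i$ as its first operator---hence $\sum_i\lambda_ik_i\in K$. Now apply this with $k_1=k_2=k$ and $\lambda_1=\lambda_2=2^{-1/2}$: provided $\|k\|\le2^{-1/2}$, one gets $\sqrt2\,k\in K$. Iterating scales up by any power of $\sqrt2$; combined with scaling down (from (i),(ii)) this gives $\lambda k\in K$ whenever $\|\lambda k\|\le1$. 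General linear combinations then follow by first shrinking all $k_i$ by a small $r>0$ so that $\sum_i r^2k_i^*k_i\le1$, normalizing $(\lambda_i)$ in $\ell^2$, applying the rotation step, and finally scaling the result back up by $\|(\lambda_i)\|_{\ell^2}/r$. No Russo--Dye or Kaplansky argument is needed at this stage; those enter only \emph{after} (vi), in \cref{lem:vNA-unit-ball}.
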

\begin{proof}
    \ref{it:K-basics1} is clear from \ref{axiom:ancilla}, and \ref{it:K-basics2} is clear from \ref{axiom:consistency} (consider the composition of channels that have $k$ and $h$ as their Kraus operators).

    \ref{it:K-basics3}: The only "only if" implication holds by definition of the set $K(\K_1,\K_2)$. 
    We show the converse using \cref{lem:implement-all-instruments} and \ref{axiom:completeness}.
    Let $(k_y)$, $k_y\in K(\K_1,\K_2)$, be a collection of Kraus operators for $T$.
    Then, by definition of $K(\K_1,\K_2)$ there exist quantum channels $S\up y$, $y\in Y$, such that $k_y$ is a Kraus operator of $S\up y$.
    By \cref{lem:implement-all-instruments}, the agent can implement the two-outcome instrument $(S_i\up y)_{i=1,2}$ with $S_1\up y = k_y^*(\placeholder)k_y$ and $S_2\up y=S\up y-S_1\up y$.
    Since $\sum_y S_1\up y(1)=\sum_y k_y^*k_y=1$, \ref{axiom:completeness} implies that there is an implementable instrument $(T_y)$ with $T_y = S_1\up y = k_y^*(\placeholder)k_y$.
    The implementability of $(T_y)$ implies that $T=\sum_y T_y$ is implementable.

    \ref{it:K-basics4}:
    Let $0\ne k\in K(\K_1,\K_2)$ and let $T$ be an implementable quantum channel with Kraus operators $k_1,k_2,k_3,\ldots$ and $k_1=k$.
    Our goal is to construct an implementable channel that has $k^*$ as one of its Kraus operators. 
    Let $v:\H\ox\K_1\to\H\ox\K_2\ox\K$ be the mnimal Stinespring isometry of $T$.
    By the cp-order interval property of the Stinespring dilation, we can find rank-1 operators $q_x = \kettbra{\Omega_x}$ with $\sum_x q_x=1$ and $k_x^*(\placeholder)k_x = v^*(\placeholder\ox q_x)$.
    Adjusting the phases of the $\Omega_x$, we have $k_x = (1\ox \bra{\Omega_x})v$.
    Now \ref{axiom:dilation} implies that there are Hilbert spaces $\tilde \K_1,\tilde\K_2$, a vector $\Omega\in \tilde\K_1$ and an implementable unitary $u:\H\ox\K_1\ox\tilde\K_1\to \H\ox\K_2\ox\tilde\K_2$ such that $T = (1\ox \bra\Omega)u^*(\placeholder) u(1\ox \ket\Omega)$.
    By \cref{lem:stinespring}, there is an isometry $w:\K \to \tilde\K_2$ such that $(1\ox w)v = u(1\ox \ket\Omega)$.
    Setting $\Omega_x' = w \Omega_x\in\tilde\K_2$, we find $k_x = (1\ox \bra{\Omega_x}) u (1\ox \ket\Omega)$.
    We set $\Phi = \norm{\Omega_1'}^{-1}\Omega_1'\in\tilde\K_2$.
    Let $\{\Psi_y\}_{y=1}^\oo$, be a collection of vectors in $\tilde\K_1$ with $\Psi_1=\norm{\Omega_1'}\Omega$ and $\sum_y \kettbra{\Psi_y}=1$.
    We set $m_y = (1\ox\bra{\Psi_y})u^*(1\ox\ket\Phi)$ and note $m_1=k_1^*=k^*$.
    By construction, 
    \begin{equation*}
        S = \sum_y m_y^*(\placeholder)m_y = \sum_y (1\ox \bra\Phi)u(\placeholder\ox \kettbra{\Psi_y})u^*(1\ox \ket\Phi) 
        =(1\ox \bra\Phi)u(\placeholder\ox 1)u^*(1\ox \ket\Phi)
    \end{equation*}
    is a quantum channel.
    The channel $S$ is implementable since the unitary $u^{-1}$ is, which is ensured by \ref{axiom:reverse}. Since $k^*$ is a Kraus operator of $S$, the claim follows.

    \ref{it:K-basics5}:
    Let $(k_n)$ be a strongly convergent net of operators in $K(\K_1,\K_2)$ and let $k$ denote its limit.
    Let $(T_n):\B(\H\ox\K_2)\to\B(\H\ox\K_1)$ be a net of implementable quantum channels such that $k_n$ is a Kraus operator of $T_n$.
    By \cref{lem:compactness}, there is a point-$\sigma$-weakly convergent subnet $(T_{n_m})$ and, by \ref{axiom:limits}, the agent can implement its limit $T$.
    Then $T_n \ge_{cp} k_n^*(\placeholder)k_n$ for all $n$ implies $T\ge_{cp} k^*(\placeholder)k$.
    Thus, $k$ is a Kraus operator of $T$ and, thus, an element of $K(\K_1,\K_2)$.

    \ref{it:K-basics6}:
    We start by showing the following two claims:
    \begin{enumerate}[({vi.}1), left=.3cm]
        \item\label{it:claim1} If $\sum_i \abs{\lambda_i}^2=1$ and $\sum_i k_i^*k_i\le 1$, then $\sum_i \lambda_i k_i\in K$.
        \item\label{it:claim2} If $k \in K(\K_1,\K_2)$, $\lambda \in \CC$ and $\norm{\lambda k}\le 1$, then $\lambda k\in K$.
    \end{enumerate}
    \ref{it:claim1}: 
    We can find a unitary $n\times n$ matrix $(u_{ij})$ with $u_{1j}=\lambda_j$.
    By \cref{it:K-basics3}, there is an implementable channel $T = \sum_i k_i^*(\placeholder)k_i + T_\perp$. Set $l_i = \sum_j u_{ij} k_j$. Then $T=\sum_i l_i^*(\placeholder)l_i + T_\perp $ shows $l_i\in K(\K_1,\K_2)$, and we have $l_1 = \sum_i \lambda_i k_i$.
\\
    \ref{it:claim2}:
    By \ref{it:claim1}, $\norm k\le 2^{-1/2}$ implies $2^{1/2} k \in K$ (take $(\lambda_i)=(2^{-1/2},2^{-1/2},0,0\ldots)$).
    By \ref{it:K-basics1} and \ref{it:K-basics2}, we have $\lambda k\in K(\K_1,\K_2)$ for all $k\in K(\K_1,\K_2)$ and $\lambda\in\CC$ with $\abs\lambda\le 1$.
    Together, these two observations imply the claim since every complex number $\lambda$ is of the form $2^{n/2}\mu$ with, $n\in\NN$, $\mu\in\CC$, and $\abs\mu\le 1$.
    \\
    We \ref{it:claim1} and \ref{it:claim2} proven, we are now in the position to show \ref{it:K-basics6}. Set $\hat k_i = rk_i$ for sufficiently small $r >0$ such that $\sum_i \hat k_i^*\hat k_i\le 1$, and set $\hat \lambda_i = \norm{(\lambda_i)}_{\ell^2}^{-1}\lambda_i$.
    Using \ref{it:claim1}, we have $\sum_i \hat\lambda_i \hat k_i \in K(\K_1,\K_2)$.
    Thus, \ref{it:claim2} implies that $\sum_i \lambda_i k_i = (\norm{(\lambda_i)}_{\ell^2}/r) \sum_i \hat\lambda_i \hat k_i \in K$.
\end{proof}

\begin{lemma}\label{lem:vNA-unit-ball}
    Let $K_0 \subset \B(\H)$ be a set of contractions with $1\in K_0$.
    Suppose that 
    \begin{itemize}
        \item $K$ is closed under products and adjoints;
        \item if $k,h\in K_0$, $\lambda,\mu\in\CC$ and $\norm{\lambda k+\mu h}\le 1$, then $\lambda k+\mu h\in K_0$;
    \end{itemize}
    Then the positive scalar multiples $\M=\RR^+\cdot K_0$ form a unital *-algebra of bounded operators whose unit ball is $K_0$.
    If $K_0$ is closed in one of the locally convex operator topologies, then $\M$ is a von Neumann algebra.
\end{lemma}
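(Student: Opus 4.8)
The plan is to show that $\M = \RR^+ \cdot K_0$ is closed under the relevant algebraic operations, that its unit ball is exactly $K_0$, and finally that closedness of $K_0$ yields a von Neumann algebra. First I would observe that every element of $\M$ is a bounded operator, since it is a positive scalar multiple of a contraction. The key structural fact to extract is that $K_0$ is precisely the unit ball of $\M$: clearly $K_0 \subset \M$ and every element of $K_0$ is a contraction, so $K_0$ is contained in the unit ball; conversely, if $a \in \M$ has $\norm{a} \le 1$, write $a = r k$ with $k \in K_0$, $r \ge 0$, and use the scaling argument (as in the proof of \cref{lem:K-basics}\ref{it:K-basics6}) together with the convex-combination hypothesis to conclude $a \in K_0$. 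The cleanest route is to note that for $\norm{a} \le 1$ we have $a = \tfrac12(a) + \tfrac12(a)$ and, more usefully, that any contraction in $\RR^+ K_0$ can be rescaled down into $K_0$ and back up while staying a contraction.

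\textbf{Algebraic closure.} Next I would verify that $\M$ is a linear subspace. Given $a = r k$ and $b = s h$ with $k, h \in K_0$ and $r, s \ge 0$, and scalars $\lambda, \mu \in \CC$, the element $\lambda a + \mu b = (\lambda r) k + (\mu s) h$ is a bounded operator; normalizing by $c := \norm{\lambda a + \mu b}$ (assume $c > 0$, else the element is $0 \in \M$ since $0 = 0 \cdot 1 \in \RR^+ K_0$), one checks $c^{-1}(\lambda a + \mu b) = (\lambda r/c) k + (\mu s/c) h$ is a contraction, hence lies in $K_0$ by the two-term linear-combination hypothesis, so the original element lies in $\M$. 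Closure under products follows from the hypothesis that $K_0$ is closed under products, together with the fact that $\RR^+$ is closed under multiplication: $(rk)(sh) = (rs)(kh)$ and $kh \in K_0$. Closure under the adjoint is immediate from $(rk)^* = r k^*$ and $k^* \in K_0$. Since $1 \in K_0 \subset \M$, the algebra is unital. Thus $\M$ is a unital $*$-subalgebra of $\B(\H)$ with unit ball $K_0$.

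\textbf{From closedness to von Neumann.} For the final claim, suppose $K_0$ is closed in one of the locally convex operator topologies (weak, strong, $\sigma$-weak, $\sigma$-strong, etc.). The strategy is to invoke the characterization that a unital $*$-subalgebra of $\B(\H)$ is a von Neumann algebra if and only if its unit ball is closed in one of these topologies — equivalently, to use Kaplansky's density theorem in reverse. The cleanest formulation is: a $*$-subalgebra $\M$ is strongly (equivalently weakly, by convexity) closed iff it equals its bicommutant, and by Kaplansky density the strong closure of $\M$ has unit ball equal to the strong closure of the unit ball of $\M$. Since the unit ball of $\M$ is $K_0$, which is assumed closed, the unit ball is already closed; therefore $\M$ is closed (a subspace whose unit ball is closed, being the union of scalar multiples of its unit ball, inherits closedness on bounded sets, which suffices for a convex balanced set together with Krein–Smulian-type arguments). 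Hence $\M$ is a von Neumann algebra.

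\textbf{Anticipated obstacle.} The main delicate point is the last step: deducing that $\M$ itself is topologically closed (hence a von Neumann algebra by the bicommutant theorem) merely from closedness of its unit ball $K_0$. The subtlety is that a linear span of scalar multiples of a closed bounded set need not obviously be closed in a non-metrizable topology. The right tool is the Krein–Smulian theorem together with Kaplansky's density theorem: one shows the weak$^*$ (= $\sigma$-weak) closure of $\M$ has the same unit ball $K_0$, so no new elements of bounded norm are added, and since every element of the closure lies at some finite norm scale, $\M$ is already closed. I expect that carefully matching the given topology to the version of Kaplansky density and Krein–Smulian needed — and handling the cases where $K_0$ is closed only in the weak but not strong topology — is where the real work lies.
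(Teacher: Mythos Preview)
Your proposal is correct and follows the same route as the paper: verify the $*$-algebra axioms directly, show $K_0$ is the unit ball via the two-term combination hypothesis (with $\lambda=r$, $\mu=0$), and then use Kaplansky's density theorem for the von Neumann algebra claim. The paper's proof is just a compressed version of yours.

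One remark: your ``anticipated obstacle'' is not a real obstacle, and Krein--Smulian is unnecessary. Kaplansky gives directly that the unit ball of $\M$ is $\sigma$-strong$^*$ dense in the unit ball of $\M''$; since $\sigma$-strong$^*$ convergence implies convergence in any of the weaker operator topologies, closedness of $K_0$ forces the unit ball of $\M''$ to equal $K_0$, and then $\M'' = \bigcup_{r\ge 0} r\cdot K_0 = \M$. This is exactly what you say in the end, but you can drop the hedging and the Krein--Smulian detour.
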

\begin{proof}
    The properties of a *-algebra are easily checked.
    That $K_0$ is exactly the unit ball of $\M$ follows from the second item.
    The second claim follows from Kaplansky's density theorem \cite[Thm.~II.4.8]{takesaki1}, which states that the unit ball of a *-algebra is $\sigma$-strongly* dense in the unit ball of the von Neumann algebra it generates. 
\end{proof}

\begin{corollary}\label{cor:its-a-vNa}
    $\M = \RR^+ \cdot K$ is a von Neumann algebra on $\H$ and the following hold:
    \begin{enumerate}[(i)]
        \item\label{it:its-a-vNa1} $K(\K_1,\K_2)$ is precisely the set of $\M$-contractions $k:\H\ox\K_1\to\H\ox\K_2$.
        \item\label{it:its-a-vNa2} The agent can implement an instrument $(T_x)$ with $T_x:\B(\H\ox\K_2)\to \B(\H\ox\K_1)$ if and only if it is $\M$-inner.
    \end{enumerate}
\end{corollary}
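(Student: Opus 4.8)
The plan is to assemble the corollary from \cref{lem:K-basics} and \cref{lem:vNA-unit-ball}, bootstrapping from the single-system set $K=K(\CC,\CC)$ to its amplifications $K(\K_1,\K_2)$. First I would apply \cref{lem:vNA-unit-ball} with $K_0=K$: its hypotheses are precisely item~\ref{it:K-basics1} of \cref{lem:K-basics} (so $1\in K$), \ref{it:K-basics2} (closed under products), \ref{it:K-basics4} (closed under adjoints), \ref{it:K-basics6} with two terms (closed under contractive linear combinations), and \ref{it:K-basics5} (strong closedness). This immediately yields that $\M=\RR^+\cdot K$ is a von Neumann algebra whose unit ball is $K$. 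Since items \ref{it:K-basics1}--\ref{it:K-basics6} are stated for arbitrary $\K_1,\K_2$, the identical verification applies to $K(\K,\K)$, so $N_\K:=\RR^+\cdot K(\K,\K)$ is also a von Neumann algebra with unit ball $K(\K,\K)$.

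The crux is to identify $N_\K$ with $\M\barox\B(\K)$. For $N_\K\subseteq\M\barox\B(\K)$ I would compute matrix elements: for $k\in K(\K,\K)$ and unit vectors $\Omega_1,\Omega_2\in\K$, the operator $(1\ox\bra{\Omega_2})\,k\,(1\ox\ket{\Omega_1})$ is a composite of elements of $K(\CC,\K)$, $K(\K,\K)$ and $K(\K,\CC)$ (the outer factors lying in $K$ by \ref{it:K-basics1}), hence lies in $K=B(\M)$ by \ref{it:K-basics2}; by \cref{lem:M-operators}\,\ref{it:M-operators1} this shows $k\in\M\barox\B(\K)$. The reverse inclusion carries the real work. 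I would show $\M\ox 1$ and $1\ox\B(\K)$ both lie in $N_\K$. The latter is immediate from \ref{it:K-basics1}. For the former, fix a contraction $a\in K$ and an orthonormal basis $\{\ket i\}$ of $\K$; each $a\ox\kettbra{i}=(1\ox\ket i)\,a\,(1\ox\bra i)$ is in $K(\K,\K)$ by \ref{it:K-basics1} and \ref{it:K-basics2}, the partial sums $a\ox P_N$ are contractions and finite linear combinations of these, hence in $K(\K,\K)$ by \ref{it:K-basics6}, and $a\ox P_N\to a\ox 1$ strongly, so $a\ox1\in K(\K,\K)$ by the strong closedness \ref{it:K-basics5}. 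As $N_\K$ is an algebra containing $\M\ox1$ and $1\ox\B(\K)$, it contains all products $a\ox b$ and therefore the von Neumann algebra $\M\barox\B(\K)$ they generate. Combining both inclusions, $N_\K=\M\barox\B(\K)$, so the unit ball of $\M\barox\B(\K)$ is exactly $K(\K,\K)$.

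For \ref{it:its-a-vNa1} with general $\K_1,\K_2$ I would reduce to the case just treated. If $k\in K(\K_1,\K_2)$, the same matrix-element computation (now with $\Omega_t\in\K_t$) places each $(1\ox\bra{\Omega_2})\,k\,(1\ox\ket{\Omega_1})$ in $\M$, so $k$ is an $\M$-contraction by \cref{lem:M-operators}\,\ref{it:M-operators1}. Conversely, given an $\M$-contraction $k:\H\ox\K_1\to\H\ox\K_2$, set $\K=\K_1\oplus\K_2$ and let $\iota_t,\pi_t$ be the ancilla-only inclusions and projections, each of the form $1\ox(\,\cdot\,)$ and hence in the appropriate $K(\cdot,\cdot)$ by \ref{it:K-basics1}. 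The extension $\tilde k=\iota_2 k\pi_1$ is an $\M$-contraction in $\B(\H\ox\K)$ (its nonzero matrix elements are those of $k$), so $\tilde k\in K(\K,\K)$ by the previous paragraph, whence $k=\pi_2\tilde k\iota_1\in K(\K_1,\K_2)$ by \ref{it:K-basics2}. This proves \ref{it:its-a-vNa1}.

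Finally, for \ref{it:its-a-vNa2} I would treat both directions via \ref{it:its-a-vNa1}. If $(T_x)$ is $\M$-inner, then $T=\sum_x T_x$ is $\M$-inner, hence by \ref{it:its-a-vNa1} together with \ref{it:K-basics3} it is implementable, and \cref{lem:implement-all-instruments} upgrades this to implementability of the full instrument $(T_x)$. Conversely, if the agent implements $(T_x)$, then discarding the classical outcome (allowed by \ref{axiom:ancilla}) makes $T=\sum_x T_x$ implementable, hence $\M$-inner; by \cref{lem:winner} its minimal Stinespring isometry $v$ is an $\M$-operator, and since each $T_x\le_{cp}T$, the cp-order description of \cref{lem:stinespring} gives $T_x=v^*(\placeholder\ox q_x)v$ with $0\le q_x\le1$, so the Kraus operators $(1\ox\bra{\omega})v$ of $T_x$ are products of $\M$-operators and thus $\M$-operators; hence each $T_x$, and so $(T_x)$, is $\M$-inner. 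I expect the main obstacle to be the reverse inclusion $\M\barox\B(\K)\subseteq N_\K$ — concretely, producing the amplified operators $a\ox 1$ inside $K(\K,\K)$, for which the finite-rank truncation combined with strong closedness is the decisive device; everything else reduces to bookkeeping with the composition and matrix-element properties already recorded in \cref{lem:K-basics} and \cref{lem:M-operators}.
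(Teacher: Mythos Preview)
Your proof is correct and follows the same overall skeleton as the paper: apply \cref{lem:vNA-unit-ball} to $K$, identify $K(\K_1,\K_2)$ with the $\M$-contractions via matrix elements for one inclusion, then handle the reverse inclusion, then deduce \ref{it:its-a-vNa2}. The only real divergence is in the reverse inclusion of \ref{it:its-a-vNa1}. The paper does not pass through the auxiliary algebra $N_\K$; instead it shows directly that each generator $k_0\ox\ket i\bra j$ (with $k_0\in K$) lies in $K(\K_1,\K_2)$ by observing that it is a Kraus operator of the implementable channel $T_0\ox S$, where $T_0$ is any implementable channel having $k_0$ as a Kraus operator and $S:\B(\K_2)\to\B(\K_1)$ is a purely ancillary channel having $\ket i\bra j$ as a Kraus operator (implementable by \ref{axiom:ancilla}); the general $\M$-contraction then follows from \ref{it:K-basics5} and \ref{it:K-basics6} exactly as in your argument. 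Your route via $N_\K=\M\barox\B(\K)$ together with the direct-sum reduction $\K=\K_1\oplus\K_2$ is slightly longer but equally valid and has the virtue of making the von Neumann algebra identity $\RR^+\cdot K(\K,\K)=\M\barox\B(\K)$ explicit. For \ref{it:its-a-vNa2}, the paper simply cites \ref{it:K-basics3} of \cref{lem:K-basics}, which as stated is only for channels; your Stinespring argument and the appeal to \cref{lem:implement-all-instruments} spell out the instrument case that the paper leaves implicit.
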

\begin{proof}
    \Cref{lem:vNA-unit-ball} shows that the properties of $K$ that are listed in \cref{lem:K-basics} make $\M = \RR^+\cdot K$ a von Neumann algebra whose unit ball is precisely $K$.

    \ref{it:its-a-vNa1}:
    The case $\K_1=\K_2=\CC$ is simply the statement above that $K$ is the unit ball of $\M$.
    By \cref{it:K-basics1} of \cref{lem:K-basics}, we have $1\ox \bra{\Omega}\in K(\K,\CC)$ and $1\ox\ket{\Omega}\in K(\CC,\K)$ for all unit vectors $\Omega\in \K$.
    Thus \cref{it:K-basics2} of \cref{lem:K-basics} imply that if $k\in K(\K_1,\K_2)$, then $(1\ox \bra{\Omega_2})k(1\ox\ket{\Omega_1})\in \M$ for all $\Omega_j\in\K_j$, $j=1,2$.
    This shows that $K(\K_1,\K_2)$ is a subset of the set of $\M$-operators.
    Conversely, let $k$ be an $\M$-operator.
    Picking bases $\{\ket{x}_j\}$ of $\K_j$, $j=1,2$, we can write $k= \sum k_{ij}\ox\ket{i}\bra{j}$ with the sum being strongly convergent.
    Thus, we can approximate $k$ with linear combinations of operators of the form $k_0\ox \ket i\bra j$ with $k_0\in \M$ a contraction.
    By \cref{it:K-basics5,it:K-basics6} of \cref{lem:K-basics}, it suffices to show that these operators are in $K(\K_1,\K_2)$.
    Let $T_0$ be a channel on $T$ which has $k_0$ as one of its Kraus operators.
    Let $S:\B(\K_2) \to \B(\K_1)$ be a quantum channel, which has $\ket{i}\bra {j}$ as one of its Kraus operators, e.g., $S = \bra{i}(\placeholder)\ket{i} 1_{\K_2}$.
    Since $k_0\ox \ketbra{\Omega_1}{\Omega_2}$ is a Kraus operator of the channel $T_0\ox S$, which is implementable by \ref{axiom:ancilla}, we have $k\in K(\K_1,\K_2)$.

    \ref{it:its-a-vNa2}:
    This follows from \cref{it:K-basics3} of \cref{lem:K-basics}.
\end{proof}

\begin{proof}[Proof of \cref{thm:axioms}]
    \Cref{cor:its-a-vNa} shows the "only if" implication of \cref{thm:axioms}.

    It remains to show the converse:
    We consider an agent who can implement precisely the $\M$-inner operations $\B(\H\ox\K_2)\to\B(\H\ox\K_1)$ for arbitrary $\K_1,\K_2$, and show that the axioms \ref{axiom:consistency} to \ref{axiom:completeness} hold:
    % We extend the definition of the class $\O_\M$ to include operations $\M$-inner operations in the presence of ancillary systems, as defined above through the extistence of a Kraus decomposition with Kraus operators in $\M\barox\B(\K_1,\K_2)$.
    Axioms \ref{axiom:consistency} holds since the class of $\M$-operators is closed under products.
    Axiom \ref{axiom:limits} is checked using \cref{lem:winner}: If $T_n:\B(\H\ox\K_2)\to \B(\H\ox\K_1)$ is a net of $\M$-inner channels converging to $T$ in the topology of convergence in expectation values (the point-$\sigma$-weak topology), then $T_n(a'\ox1_{\K_2})= a'\ox 1_{\K_1}$ for all $a'\in\M'$ implies $T(a'\ox1_{\K_2})= a'\ox 1_{\K_1}$.
    Hence, $T$ is $\M$-inner by \cref{lem:winner}.
    Axiom \ref{axiom:ancilla} is clear since all instruments $(T_x)$ of the form $T_x = \id_{\B(\H)}\ox S_x$, for an instrument $(S_x)$ with $S_x:\B(\K_2)\to\B(\K_1)$, are $\M$-inner.
    Axiom \ref{axiom:reverse} is clear: If $u\in\M\barox\B(\K_1,\K_2)$ is a unitary, then its inverse $u^{-1}=u^*$ is in $\M\barox\B(\K_2,\K_1)$.
    Axiom \ref{axiom:dilation} is shown in \cref{lem:unitary-dilation}.
    It remains to check axiom \ref{axiom:completeness}.
    Let $(T_x\up y)$ be $\M$-inner instruments with $T_x\up y :\B(\H\ox\K_2)\to\B(\H\ox\K_1)$. If \eqref{eq:axiom completness} holds for a selection map $y\mapsto x_y$, then the $\M$-inner cp map $S_Y = \sum_y T_{x_y}\up y$ is subunital.
    Let $L:\B(\H\ox\K_2)\to\B(\H\ox\K_1)$ be some $\M$-inner quantum channel, e.g., $\id_\H \ox \omega(\placeholder)1$ for some $\omega\in\nstates(\K_2)$, and set $S_\perp = (1-S_Y(1))^{1/2} L(\placeholder)(1-S_Y(1))^{1/2}$.
    Then $(S_y)_{y\in\tilde Y}$ with $\tilde Y=Y\cup\{\perp\}$ is indeed an $\M$-inner instrument.
\end{proof}

\subsection{No information without disturbance}\label{sec:information-disturbance}

We consider an agent $A$ and a referee $R$, both of whom have partial access to a quantum system with Hilbert space $\H$.
We assume that the state of the full quantum system is known to the referee but unknown to the agent.
We ask:
\begin{center}\it
    What can the agent measure that cannot be detected by the referee?
\end{center}
If the referee has access to the full quantum system, this question reduces to the classic question \emph{What can be measured without disturbing the system?} with the well-known answer \emph{Nothing} \cite{busch_no_2009,kretschmann_continuity_2008,kretschmann_information-disturbance_2008}.
% However, if the referee is less powerful, it becomes possible to secretly extract information from the system.
The assumption that the state of the full system is unknown to the agent is crucial. If the agent has partial knowledge about the system's state, e.g., knows that the state commutes with certain observables, the agent can perform undetectable operations that reveal further information.\footnote{Assume, for instance, that both the agent and the referee have full access to the system and assume that the agent knows that the system's state $\phi$ commutes with a self-adjoint operator $h$ on $\H$.
If $p$ is a spectral projection of $h$, consider the instrument $(T_1,T_2)=(p(\placeholder)p,p^\perp(\placeholder)p^\perp))$. 
Since $p$ commutes with $\phi$, we have $T_*(\phi)=\phi$, i.e., the agent can measure without disturbing the system.
}
We will characterize the information that can be extracted secretly by an agent $A$ as what is detectable with measurements in the von Neumann algebra
\begin{equation}\label{eq:information-disturbance intersection}
    \M_A\cap \M_R',
\end{equation}
where $\M_A$ and $\M_R$ are the von Neumann algebras associated with the agent and the referee, respectively.
This has two important consequences for us:

\begin{itemize}
    \item We understand the \emph{center} $Z(\M_A)=\M_A\cap\M_A'$ of the observable algebra $\M_A$ of the agent $A$ as describing the information that can be extracted without disturbance relative to the agent (see also \cite{kuramochi_accessible_2018}).
    Thus, the observable algebra $\M_A$ is a \emph{factor} if and only if the agent cannot extract information without disturbing the state, relative to the agent's observables.

    \item In \emph{cryptographic} settings, it is necessary to understand what an eavesdropper is able to learn about a given subsystem in secret.
    Here, the roles of the agent and the referee are swapped:
    We assume that the full system is in a state $\phi$, known to the referee (i.e., the referee knows the marginal state $\phi|_{\M_R}$), and that the referee encodes classical information $y$ by choosing an operation, e.g., a unitary $u_y\in\M_R$.
    The agent, who plays the role of an eavesdropper, wishes to learn the information $y$ and performs an instrument $(T_x)$.
    This attack cannot be detected by the referee precisely when $(T_x)$ is $\M_R'$-inner.
    % In particular, if $\M_R$ is a factor, then the eavesdropper cannot extract any information about the system if they only have access to observables that commute with $\M_R$.
\end{itemize}

The precise setup is as follows:
The full system is in a state $\phi$, unknown to the agent, but known to the referee (i.e., the referee knows the marginal state $\phi|_{\M_R}$).
The referee's goal is to decide whether the agent performs an instrument $(T_x)$ or not.
Since the referee would not know the agent's measurement outcome, applying the instrument would, from the referee's point of view, result in applying the quantum channel $T=\sum_x T_x$.
Thus, the referee's job is to decide whether the system's state is $\phi$ or $T_*(\phi)$.
This is possible if and only if the referee can implement a measurement whose outcomes are distributed differently for these two states.
(We consider quantitative notions of state distinguishability in \cref{sec:fidelity}.)
Since the referee can implement any POVM in $\M_R$, this is equivalent to $\phi|_{\M_R}\ne T_*(\phi)|_{\M_R}$.
Let us note this key fact as a Lemma:

\begin{lemma}\label{lem:distinguishing states with partial access}
    A pair of normal states $\phi,\psi$ on a full quantum system can be distinguished by an agent with observable algebra $\M$ if and only if $\phi|_\M\ne \psi|_\M$.
\end{lemma}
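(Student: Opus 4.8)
The plan is to translate the operational notion of distinguishability into a statement about effects, and then into a purely algebraic statement. Recall from \cref{prop:implementable} that the measurements the agent can implement are exactly the $\M$-inner POVMs, i.e.\ families $(m_x)$ of effects with all $m_x\in\M$, and that applying such a measurement to a state $\chi$ yields outcomes distributed as $(\chi(m_x))_x$. Hence the agent can distinguish $\phi$ and $\psi$ precisely when there is an $\M$-inner POVM $(m_x)$ with $(\phi(m_x))_x\ne(\psi(m_x))_x$. First I would observe that this is equivalent to the existence of a single effect $f\in[0,1]_\M$ with $\phi(f)\ne\psi(f)$: if the full outcome distribution differs, then some component $m_x$ already satisfies $\phi(m_x)\ne\psi(m_x)$ and is itself an effect in $\M$; conversely, any such $f$ yields the binary distinguishing measurement $(f,1-f)$. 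It therefore suffices to show that such an $f$ exists if and only if $\phi|_\M\ne\psi|_\M$.

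One direction is immediate by contraposition: if $\phi|_\M=\psi|_\M$, then $\phi$ and $\psi$ agree on every element of $\M$, in particular on every effect $f\in[0,1]_\M$, so no $\M$-inner measurement can tell them apart. For the converse, suppose $\phi|_\M\ne\psi|_\M$ and pick $a\in\M$ with $\phi(a)\ne\psi(a)$. I would decompose $a=a_1+\i a_2$ into its self-adjoint real and imaginary parts $a_1=\tfrac12(a+a^*)$, $a_2=\tfrac1{2\i}(a-a^*)$, both of which lie in $\M$. Since $\phi,\psi$ are positive linear functionals, they satisfy $\phi(x^*)=\overline{\phi(x)}$ and are thus real-valued on self-adjoint elements; writing $\phi(a)-\psi(a)=(\phi(a_1)-\psi(a_1))+\i(\phi(a_2)-\psi(a_2))$ shows that at least one of the self-adjoint elements $b:=a_1$ or $b:=a_2$ satisfies $\phi(b)\ne\psi(b)$, and in particular $b\ne0$. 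Finally I would rescale $b$ to an effect: as $b=b^*$ we have $-\norm b\,1\le b\le\norm b\,1$, whence
\begin{equation*}
    f=\frac{b+\norm b\,1}{2\norm b}\in[0,1]_\M,
\end{equation*}
and by linearity $\phi(f)-\psi(f)=\bigl(\phi(b)-\psi(b)\bigr)/(2\norm b)\ne0$. This $f$ is the sought distinguishing effect.

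The argument is entirely routine, so I do not expect a genuine obstacle beyond bookkeeping. The only point requiring a little care is ensuring that the distinguishing object is a \emph{bona fide} effect---that it lies in the order interval $[0,1]$ \emph{and} inside $\M$---which is exactly what the self-adjoint decomposition and the affine rescaling guarantee, using that $\M$ is a unital $*$-algebra. Equivalently, one may phrase the whole equivalence in one stroke via the observation that $[0,1]_\M$ linearly spans $\M$, so that two states agree on $\M$ if and only if they agree on all implementable effects; the explicit construction above is merely the constructive version of this fact, producing a concrete witness in the nontrivial direction.
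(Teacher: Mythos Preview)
Your proof is correct. The paper does not give an explicit proof of this lemma; it is stated as a ``key fact'' immediately following the discussion that the agent's measurements are exactly the $\M$-valued POVMs, so the statement is treated as evident from context. Your argument simply spells out the implicit reasoning---that effects in $[0,1]_\M$ linearly span $\M$---and supplies the constructive witness.
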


Returning to our problem, we conclude that $\phi|_{\M_R}\ne T_*(\phi)|_{\M_R}$ must hold for all normal states $\phi\in\nstates(\H)$. Otherwise, the agent, who does not know the state of the system, cannot have a guarantee that the referee does not know of his actions.
This property is much easier to state in the Heisenberg picture, where it simply takes the simple form $T \restriction \M_R = \id$.
By \cref{lem:implement-all-instruments}, the requirement $T\restriction \M_R$ is equivalent to $\M_R'$-innerness of $(T_x)$.
Thus, we have
\begin{equation}\label{eq:cannot-detect}
    \text{$(T_x)$ cannot be detected} \quad \iff \quad (T_x)\ \text{is $\M_R'$-inner.}
\end{equation}
If the referee has full access, i.e., $\M_R=\B(\H)$, this implies the usual statement: If an instrument $(T_x)$ does not disturb the system in the sense that $\sum_x T_x=\id$, then $(T_x)$ is inner for $\B(\H)'=\CC1$, and, therefore, the measurement that it implements cannot reveal information about the state of the system since $T_x(1)\propto 1$ for all $x$.

In the general case, where both the agent and the referee have partial access to the full system, \eqref{eq:cannot-detect}, tells us that the instruments which the agent can implement without detection from the referee are the $(\M_A\cap\M_R')$-inner ones, i.e., instruments of the form
\begin{equation}
    T_x = \sum_{y} k_{y|x}^*(\placeholder) k_{y|x}, \qquad k_{y|x} \in \M_A\cap \M_R'.
\end{equation}
If the referee and the agent have access to the same observables, i.e., if $\M_A=\M_R$, then the application of an instrument $(T_x)$ is undetectable to the agent if and only if the instrument is \emph{central} in the sense that the Kraus operators $k_{y|x}$ are contained in the center $Z(\M_A)=\M_A\cap \M_A'$.

For the case that the referee and the agent have access to the same observables, an \emph{infor\-mation-disturbance trade-off}, i.e., a quantitative version of the above statement, was obtained in \cite{kretschmann_continuity_2008,kretschmann_information-disturbance_2008}.
It would be interesting to see whether this can be generalized to the case of a referee with partial access to the full system.

\subsection{Bipartite systems, product states and local state preparation}\label{sec:independent-agents}

We want to understand when two subsystems can be regarded as independent in the sense that measurements on one subsystem can be performed jointly with measurements on the other subsystem.
This is the case if and only if all measurements on one system commute with all measurements on the other system \cite{busch_quantum_2016}.
% In this case, the observable algebras that are associated with the two subsystems commute.
More generally, we say that two agents $A$ and $B$, called Alice and Bob, which both have constrained access to the same full quantum system, are \emph{independent} if all measurements of Alice commute with all measurements of Bob.
Clearly, this is the case if and only if their observable algebras $\M_A$ and $\M_B$ commute, and it follows that
\begin{equation}
    T_A \circ T_B = T_B \circ T_A
\end{equation}
for all channels $T_{A/B}$ the agents can implement, respectively.
Of course, the analogous statement holds for instruments.
Independent agents can perform \emph{correlation experiments} on the full quantum system:
If $\phi$ is a state of the full system, and Alice and Bob perform measurements $(m_x^A)$, $(m_y^B)$, respectively, then the outcomes are distributed according to the probability distribution
\begin{equation}\label{eq:correlation function}
    p(x,y)=\phi(m_x^A\,m_y^B).
\end{equation}
If we fix the agent Alice, then the most powerful agent $B$ that is independent from Alice is the one whose observable algebra is the commutant $\M_B=\M_A'$ of Alice's observable algebra (see \eqref{eq:inclusion of observable algs}).
Equivalently, independence of a pair of agents can be defined by saying that agent neither of the two agents can detect whether the other agent has applied an operation.
Indeed, the equivalence immediately follows from the results in \cref{sec:information-disturbance} (see \eqref{eq:cannot-detect}).

A common paradigm in quantum information theory is that of \emph{separated labs}, which describes the idea that Alice and Bob are spatially separated and are free to do whatever they want in their respective labs.
For instance, it is generally assumed that the local agents can discard their share of an entangled state and locally prepare a new state, which results in a product state of the full system.
The notion of independence given above is generally not sufficient to allow for such seemingly trivial maneuvers.
In fact, there are \emph{no product states} in general:

\begin{lemma}[{\cite{summers_independence_1990}}]\label{lem:no product states}
    Let $\M_A$ and $\M_B$ be commuting factors on a Hilbert space $\H$. 
    The following are equivalent:
    \begin{enumerate}[(a)]
        \item\label{it:split1} Product states: For some pair (resp.\ all pairs) of normal states $\phi_{A/B}$ on $\M_{A/B}$, there is a state $\phi$ of the full system, such that
            \begin{equation}
                \phi(ab) = \phi_A(a)\phi_B(b), \qquad (a,b)\in\M_A\times\M_B.
            \end{equation}
        \item\label{it:split2} Tensor product: There are Hilbert spaces $\H_{A/B}$ with factors $\N_{A/B}\subset\B(\H_{A/B})$ such that 
            \begin{equation}
                (\H,\M_A,\M_B) \cong (\H_A\ox\H_B,\N_A\ox1,1\ox\N_B).
            \end{equation}
        \item\label{it:split3} Split property: here is a type $\I$ factor $\N$ such that $\M_A\subset \N\subset \M_B'$.
    \end{enumerate}
\end{lemma}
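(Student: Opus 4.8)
These conditions are best arranged into a cycle in which the structure theory of type $\I$ factors carries most of the proof and only one implication has real content. I will repeatedly use that a type $\I$ factor $\N$ on $\H$ induces a factorization $\H\cong\H_1\ox\H_2$ with $\N=\B(\H_1)\ox1$ and $\N'=1\ox\B(\H_2)$, and that any von Neumann algebra sitting inside $\B(\H_1)\ox1$ has the form $\N_0\ox1$, with $\N_0$ a factor exactly when the original algebra is. Granting this, the implications (b)$\Rightarrow$(c), (c)$\Rightarrow$(b) and (b)$\Rightarrow$(a) are routine. For (b)$\Rightarrow$(c) put $\N=\B(\H_A)\ox1$; then $\M_A\subseteq\N$ and $\N\subseteq\B(\H_A)\ox\N_B'=(1\ox\N_B)'=\M_B'$. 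For (c)$\Rightarrow$(b) factorize $\H\cong\H_1\ox\H_2$ along the given $\N$; then $\M_A\subseteq\B(\H_1)\ox1$ forces $\M_A=\N_A\ox1$ and $\M_B\subseteq\N'=1\ox\B(\H_2)$ forces $\M_B=1\ox\N_B$, with $\N_A,\N_B$ factors. For (b)$\Rightarrow$(a) (for all pairs) extend $\phi_A,\phi_B$ to normal states on $\B(\H_A)$, $\B(\H_B)$ and take their tensor product. Hence everything reduces to the single substantive implication (a)$\Rightarrow$(c).

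The mechanism behind (a)$\Rightarrow$(c) is that product states factorize in the GNS picture. Fix a normal product state $\phi$ with marginals $\phi_A,\phi_B$ and let $(\pi_A,\H_A,\xi_A)$, $(\pi_B,\H_B,\xi_B)$ be the GNS representations of $\M_A,\M_B$. Assume first the favourable case that $\phi=\omega_\Omega$ is a vector state whose vector $\Omega\in\H$ is cyclic for $\M_A\vee\M_B$. Using that $\M_A$ and $\M_B$ commute together with the product property,
\[
    \|ab\Omega\|^2=\phi(a^*a\,b^*b)=\phi_A(a^*a)\,\phi_B(b^*b)=\|\pi_A(a)\xi_A\ox\pi_B(b)\xi_B\|^2,
\]
so $\pi_A(a)\xi_A\ox\pi_B(b)\xi_B\mapsto ab\Omega$ extends to a unitary $W:\H_A\ox\H_B\to\H$ with $W(\pi_A(a)\ox1)=aW$ and $W(1\ox\pi_B(b))=bW$. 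Setting $\N:=W(\B(\H_A)\ox1)W^*$ then yields a type $\I$ factor on $\H$ with $\M_A=W(\pi_A(\M_A)\ox1)W^*\subseteq\N$ and $\N\subseteq W\big(\B(\H_A)\ox\pi_B(\M_B)'\big)W^*=\M_B'$, which is exactly (c).

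In general $\phi$ is mixed and $\M_A\vee\M_B$ need not act cyclically, so one first purifies $\phi$ to a unit vector $\Omega\in\H\ox\K$ and passes to the invariant subspace $\H'=\overline{((\M_A\vee\M_B)\ox1)\Omega}$. On $\H'$ the vector $\Omega$ is cyclic for the restricted algebra and $\omega_\Omega$ remains a product state for the restrictions of $\M_A\ox1$ and $\M_B\ox1$, so the construction above produces an intermediate type $\I$ factor on $\H'$. The genuine difficulty, and the step I expect to be the main obstacle, is the descent: one must carry this type $\I$ factor back to a type $\I$ factor on $\H$ lying between the original $\M_A$ and $\M_B'$. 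Since $\M_A$ and $\M_B$ are factors, restriction to $\H'$ is a faithful normal representation and hence quasi-equivalent to the identity representation on $\H$; the delicate part is to promote this quasi-equivalence to a spatial statement that transports a type $\I$ factor without destroying the sandwich, which forces one to match multiplicities (amplifying by a common infinite-dimensional ancilla and realising the required intertwiners by partial isometries in $\V(\M_A)$ and $\V(\M_B)$). Once a single product state is shown to force (c), the cycle (a)$\Rightarrow$(c)$\Rightarrow$(b)$\Rightarrow$(a) closes, and in particular the existence of a product state for one pair is upgraded to existence for all pairs.
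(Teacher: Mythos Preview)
Your treatment of the easy implications (b)$\Leftrightarrow$(c) and (b)$\Rightarrow$(a) is fine and agrees with the paper, which declares them ``clear''. Your GNS argument in the favourable case (vector state $\Omega$ cyclic for $\M_A\vee\M_B$) is correct and is precisely the content of the reference the paper cites for the first half of (a)$\Rightarrow$(c): it shows that a normal product state forces the canonical isomorphism $\M_A\vee\M_B\cong\M_A\barox\M_B$.

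The gap is in the descent. You correctly identify this as the obstacle, but your sketch does not close it. The issue is that the split property asks for a type $\I$ factor between $\M_A$ and the \emph{commutant} $\M_B'$, and the commutant changes with the representation: on your cyclic subspace $\H'$ the relevant commutant is $p(\M_B'\ox\B(\K))p$, not $\M_B'$, and this corner need not be isomorphic to $\M_B'$ (think of $\M_B'$ finite). Matching multiplicities for $\M_A$ and for $\M_B$ separately is easy; doing it \emph{simultaneously} so that a single unitary transports the sandwich is the substance of the result. This is exactly what the paper outsources by citing \cite[Thm.~3.9]{summers_independence_1990} for the implication ``$\M_A\vee\M_B\cong\M_A\barox\M_B\Rightarrow$ split''. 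That theorem is nontrivial; its proof (and the closely related D'Antoni--Longo argument) uses the flip trick rather than a naive multiplicity count. So your outline is pointing in the right direction, but as written it stops short of a proof where the paper simply invokes the needed black box.
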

\begin{proof}
    The implications \ref{it:split1} $\Leftarrow$ \ref{it:split2} $\Leftrightarrow$ \ref{it:split3} are clear.
    \ref{it:split1} $\Rightarrow$ \ref{it:split3}: By \cite[Ex.~IV.5.1.(b)]{takesaki1}, $\M_A\vee\M_B\cong\M_A\barox\M_B$. By \cite[Thm.~3.9]{summers_independence_1990}, this implies \ref{it:split3}.
\end{proof}

This tells us that product states only exist if and only if the observable algebras of the two agents live on distinct tensor factors of the underlying Hilbert space $\H$.
Even if product states exist, they cannot be prepared locally relative to all observables of the respective agents, unless the von Neumann algebras are type $\I$ factors.
To understand this, we begin with the following result, shown by Werner in the context of quantum field theory in \cite{werner_local_1987}, which characterizes the milder requirement of preparing a state only for a subalgebra:

\begin{lemma}[{\cite{werner_local_1987}}]\label{prop:werner state prep}
    Let $\M_0\subset \M$ be factors. The following are equivalent:
    \begin{enumerate}[(a)]
        \item\label{it:state prep1} For some/every normal state $\phi_0$ on $\M_0$ there is an inner ucp map $T:\M\to\M$ such that $T |_{\M_0}= \phi_0(\placeholder)1$
        \item\label{it:state prep2} Split property: There is a type $\I$ factor $\N$ such that $\M_0\subset\N\subset\M$.
    \end{enumerate}
\end{lemma}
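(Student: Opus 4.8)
The plan is to deduce the equivalence from the \enquote{no product states} dichotomy of \cref{lem:no product states}, applied to the \emph{commuting} factors $\M_0$ and $\M'$. These are indeed commuting factors: $\M_0$ is a factor by hypothesis, $\M'$ is a factor because $\M$ is one, and $\M_0\subset\M$ forces $\M'\subset\M_0'$, so the two algebras commute. Since $(\M')'=\M$, the split property \enquote{there is a type $\I$ factor $\N$ with $\M_0\subset\N\subset(\M')'$}, which is item \ref{it:split3} of \cref{lem:no product states} for the pair $(\M_0,\M')$, is \emph{exactly} condition \ref{it:state prep2}. Thus it suffices to relate the local preparation map of \ref{it:state prep1} to the existence of a normal product state across $\M_0$ and $\M'$.

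For \ref{it:state prep2}$\Rightarrow$\ref{it:state prep1} I would argue directly. Writing the type $\I$ factor as $\N\cong\B(\H_1)$, i.e.\ $\H\cong\H_1\ox\H_2$ with $\N=\B(\H_1)\ox1$, I extend a given normal state $\phi_0$ on $\M_0$ to a normal state $\hat\phi_0$ on $\N$ (normal states on a von Neumann subalgebra always extend, e.g.\ because they are countable sums of vector states), with density operator $\rho_0=\sum_i\lambda_i\kettbra{e_i}$ on $\H_1$. Fixing an ONB $\{g_m\}$ of $\H_1$ and setting $k_{im}=\sqrt{\lambda_i}\,\ketbra{e_i}{g_m}\ox1\in\N\subset\M$, the normal ucp map $T=\sum_{im}k_{im}^*(\placeholder)k_{im}$ is $\M$-inner, satisfies $\sum_{im}k_{im}^*k_{im}=1$, and obeys $T(a)=\tr(\rho_0 a)\,1=\phi_0(a)\,1$ for $a\in\M_0$. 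This is the Heisenberg description of the channel that discards $\N$ and re-prepares $\hat\phi_0$, and it establishes \ref{it:state prep1} for \emph{every} $\phi_0$.

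The substantive direction is \ref{it:state prep1}$\Rightarrow$\ref{it:state prep2}. Here I view $T$ as an $\M$-inner channel on $\B(\H)$ (as in \cref{sec:M-operators}); by \cref{lem:winner} such a map restricts to the identity on $\M'$ and is $\M'$-bimodular, so that $T(ab')=T(a)b'=\phi_0(a)\,b'$ for all $a\in\M_0$ and $b'\in\M'$. Choosing any normal state $\psi$ on $\B(\H)$ and putting $\omega=\psi\circ T$, I obtain a normal state on the full system with the factorization $\omega(ab')=\phi_0(a)\,\psi(b')$ --- that is, a normal product state whose marginals are $\phi_0$ on $\M_0$ and $\psi|_{\M'}$ on $\M'$. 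This is precisely hypothesis \ref{it:split1} of \cref{lem:no product states} for the pair $(\M_0,\M')$, and the implication \ref{it:split1}$\Rightarrow$\ref{it:split3} of that lemma delivers the desired type $\I$ factor $\N$ with $\M_0\subset\N\subset\M$.

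I expect the only real work to be bookkeeping: confirming that \enquote{inner ucp map} is to be read as an $\M$-inner channel on $\B(\H)$, so that it genuinely makes sense on $\M'$ and \cref{lem:winner} applies, and checking that $\omega=\psi\circ T$ is a bona fide (normal) state. The conceptual heart is the one-line observation that an inner preparation map is automatically $\M'$-bimodular and annihilates $\M_0$, which forces a product state; the heavy lifting (product state $\Rightarrow$ type $\I$ factor) is then outsourced entirely to \cref{lem:no product states}.
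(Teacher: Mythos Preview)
Your proposal is correct and follows essentially the same route as the paper's proof: both directions hinge on the observation that an $\M$-inner channel is $\M'$-bimodular (the paper uses \eqref{eq:winner2} directly, you invoke \cref{lem:winner}), so that $\psi\circ T$ is a normal product state for $(\M_0,\M')$, and then \cref{lem:no product states} does the rest; for the converse, both extend $\phi_0$ to the type $\I$ factor $\N$ and build the channel from Kraus operators in $\N$. Your version is slightly more explicit in writing out the Kraus operators, but the strategy is identical.
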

\begin{proof}
    (\ref{it:state prep1}, "some") $\Rightarrow$ \ref{it:state prep2}:
    Let $\M$ act on a Hilbert space $\H$ and extend $T$ to a normal ucp map on $\B(\H)$ by using a Kraus decomposition in $\M$. 
    Now let $\psi$ be a normal state on $\B(\H)$ and let $\phi = T_*(\psi)$ ($=\psi\circ T$).
    Then $\phi(a_0 a') = \psi(T(a_0a') = \psi(T(a_0)a') = \phi_0(a_0) \psi(a')$, $(a_0,a')\in\M_0\times \M'$ shows that $\phi$ is a product state for the commuting algebras $\M_0$ and $\M'$.
    By \cref{lem:no product states}, the split property holds.

    \ref{it:state prep2} $\Rightarrow$ (\ref{it:state prep1}, "every"):
    Given a normal state $\phi_0$ on $\M_0$, pick a normal state $\phi$ on $\N$ extending $\phi_0$.
    Since $\N$ is a type $\I$ factor, there are Kraus operators $\{k_x\}$ in $\N$ with $T(a):=\sum_x k_x^*(a) k_x=\phi(a)1$ for all $a\in\N$ and, hence, $T(a) =\phi_0(a)1$ for $a\in\M_0$.
    Clearly, $T$ is $\M$-inner and unital.
\end{proof}

As a consequence, we learn that it is impossible to prepare states with inner operations on the full von Neumann algebra, unless the latter is a type $\I$ factor:

\begin{lemma}\label{lem:local state preparability}
    Let $\M$ be a von Neumann algebra and assume that there is an inner ucp map $T:\M\to\M$ of the form $T=\phi(\placeholder)1$ for a normal state $\phi$ on $\M$.
    Then $\M$ is a type $\I$ factor.
\end{lemma}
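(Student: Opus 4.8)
The plan is to reduce the statement to Werner's result \cref{prop:werner state prep} applied with $\M_0=\M$. That Lemma presupposes that $\M_0\subset\M$ is an inclusion of \emph{factors}, so the first order of business is to verify that $\M$ itself is a factor; once this is known, taking $\M_0=\M$ makes the hypothesis of the present Lemma coincide verbatim with condition \ref{it:state prep1} of \cref{prop:werner state prep} (for the normal state $\phi_0=\phi$), and its conclusion \ref{it:state prep2} then pins down the type of $\M$.

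First I would establish factoriality by testing the inner map $T$ on the center $Z(\M)=\M\cap\M'$. Since $T$ is $\M$-inner and unital, write a Kraus decomposition $T=\sum_x k_x^*(\placeholder)k_x$ with $k_x\in\M$ and $\sum_x k_x^*k_x=1$. For $z\in Z(\M)$, the element $z$ commutes with each $k_x\in\M$, so $k_x^*zk_x=z\,k_x^*k_x$ and hence
\begin{equation*}
    T(z)=\sum_x k_x^*zk_x = z\sum_x k_x^*k_x = z.
\end{equation*}
On the other hand $T(z)=\phi(z)1$ by hypothesis, so $z=\phi(z)1$ is a scalar. As $z\in Z(\M)$ was arbitrary, $Z(\M)=\CC1$ and $\M$ is a factor.

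With $\M$ now known to be a factor, I would apply \cref{prop:werner state prep} to the (trivial) inclusion of factors $\M_0:=\M\subseteq\M$. The hypothesis supplies an inner ucp map $T:\M\to\M$ with $T\restriction_{\M_0}=T=\phi(\placeholder)1$, which is exactly condition \ref{it:state prep1}. The Lemma then yields the split property \ref{it:state prep2}: a type $\I$ factor $\N$ with $\M\subseteq\N\subseteq\M$. The sandwich forces $\N=\M$, so $\M$ is a type $\I$ factor, as claimed.

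The computation is short, and the only genuine subtlety — the step I would flag as the main obstacle — is the order of operations: \cref{prop:werner state prep} cannot be invoked until factoriality is in hand, and factoriality is precisely what the central evaluation of $T$ delivers for free. Everything after that is bookkeeping.
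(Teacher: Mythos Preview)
Your proof is correct and follows essentially the same approach as the paper: first establish factoriality by showing that the inner ucp map $T$ acts as the identity on the center (the paper phrases this as $T\restriction\M'=\id$, you verify it directly via the Kraus decomposition), then invoke \cref{prop:werner state prep} with $\M_0=\M$ to conclude type $\I$. The only difference is cosmetic---you spell out the Kraus computation explicitly where the paper cites the general bimodule property of inner maps.
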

\begin{proof}
    Since $T$ is inner, we have $T\restriction\M' = \id$, which implies that $T\restriction Z(\M) = \id$.
    Thus, we have $Z(\M) = T(Z(\M)) = \phi(Z(\M))1 = \CC1$, which shows that the center $Z(\M)$ is trivial, i.e., $\M$ is a factor.
    By \cref{prop:werner state prep} (applied to $\M_0=\M$), it follows that $\M$ is type $\I$.
\end{proof}

We return to independent agents with partial access to a full quantum system. 
The following result characterizes when correlation experiments fully determine the states of the full quantum system.
In particular, this is only possible if both $\M_A$ and $\M_B$ are factors:

\begin{lemma}\label{lem:irreducible}
    Let $\M_A$ and $\M_B$ be commuting von Neumann algebras on $\H$.
    The following are equivalent:\footnote{The Lemma and its proof generalize verbatim to any finite number of pairwise commuting von Neumann algebras.}
    \begin{enumerate}[(a)]
        \item\label{it:irreducible1} A state of the full quantum system is completely determined through correlation experiments of $A$ and $B$, i.e., if $\phi$ and $\psi$ are states of the full system, then $\psi=\phi$ if and only if 
        \begin{equation}\label{eq:irreducible1}
            \phi(m^A_x\, m^B_y) = \psi(m^A_x\, m^B_y)
        \end{equation}
        for all measurements $(m_x^A)$, $(m_y^B)$ of the respective agents.
        \item\label{it:irreducible2} The only measurements that commute with all measurements of $A$ and $B$ are the trivial ones. I.e., $\M_A'\cap\M_B'=\CC1$ or, what is equivalent,
        \begin{equation}\label{eq:irreducible2}
            \M_A \vee \M_B=\B(\H).
        \end{equation}
        \item\label{it:irreducible3} $\M_A$ and $\M_B$ are irreducible, i.e., there is no proper closed subspace of $\H$ that is both $\M_A$- and $\M_B$-invariant.
    \end{enumerate}
    In this case, $\M_A$ and $\M_B$ are necessarily factors.
\end{lemma}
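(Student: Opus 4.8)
The plan is to establish the cycle of equivalences $(a)\Leftrightarrow(b)\Leftrightarrow(c)$ and then read off the factor property from condition $(b)$. I would begin with $(b)\Leftrightarrow(c)$, which is the most structural. A closed subspace $V\subseteq\H$ is invariant under both $\M_A$ and $\M_B$ exactly when it is invariant under the generated algebra $\M_A\vee\M_B$; by \eqref{eq:projections in M} this holds iff $[V]\in(\M_A\vee\M_B)'$, and de Morgan's rule identifies $(\M_A\vee\M_B)'=\M_A'\cap\M_B'$. Since a von Neumann algebra is generated by its projections, the absence of a proper nontrivial joint invariant subspace is equivalent to $\M_A'\cap\M_B'$ containing only the trivial projections, i.e.\ $\M_A'\cap\M_B'=\CC1$; passing to commutants via the bicommutant theorem rewrites this as $\M_A\vee\M_B=\B(\H)$, which is precisely $(b)$.

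For $(b)\Rightarrow(a)$ I would first reduce the correlation condition: every self-adjoint element of a von Neumann algebra is a real-linear combination of effects, so agreement of $\phi$ and $\psi$ on all products $m_x^A m_y^B$ of effects is equivalent to agreement on all products $ab$ with $a\in\M_A$, $b\in\M_B$. The linear span of such products is a unital $*$-subalgebra (using that $\M_A$ and $\M_B$ commute) whose $\sigma$-weak closure is $\M_A\vee\M_B=\B(\H)$. Normality of $\phi$ and $\psi$ then forces $\phi=\psi$ on all of $\B(\H)$.

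The converse $(a)\Rightarrow(b)$ is where I expect the real work, and I would argue the contrapositive. If $\M_A\vee\M_B\neq\B(\H)$, then this proper $\sigma$-weakly closed subspace has, by Hahn--Banach in the predual pairing, a nonzero annihilating functional: there is a nonzero $\delta\in\T(\H)$ with $\tr(\delta\,m)=0$ for all $m\in\M_A\vee\M_B$. Since $\M_A\vee\M_B$ is $*$-closed I may take $\delta=\delta^*$, and since $1\in\M_A\vee\M_B$ I have $\tr\delta=0$. Splitting $\delta=\delta_+-\delta_-$ into positive and negative parts, vanishing trace together with $\delta\neq0$ gives $t:=\tr\delta_+=\tr\delta_->0$, so $\rho_1=\delta_+/t$ and $\rho_2=\delta_-/t$ are \emph{distinct} density operators inducing states that share every correlation function, contradicting $(a)$.

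Finally, assuming any of the equivalent conditions, the factor property follows directly from $\M_A'\cap\M_B'=\CC1$. Commutativity gives $\M_B\subseteq\M_A'$, whence $\M_A=\M_A''\subseteq\M_B'$; therefore $Z(\M_A)=\M_A\cap\M_A'\subseteq\M_B'\cap\M_A'=\CC1$, so $\M_A$ is a factor, and the symmetric argument handles $\M_B$. The point to keep airtight throughout is the reduction from POVM effects to arbitrary algebra elements, and the selection of the pre-annihilator functional as self-adjoint with vanishing trace; once that is secured, the splitting into $\rho_1,\rho_2$ produces the counterexample cleanly.
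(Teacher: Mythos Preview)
Your proof is correct; the only substantive difference from the paper is the implication $(a)\Rightarrow(b)$. The paper argues the contrapositive more directly: if $(\M_A\vee\M_B)'\ne\CC1$, pick a nontrivial unitary $u$ in this commutant and any normal state $\phi$; then $\phi$ and $u\phi u^*$ agree on all products $ab$ with $a\in\M_A$, $b\in\M_B$, so $(a)$ forces $\phi=u\phi u^*$ for every $\phi$, whence $u$ is scalar, a contradiction. This avoids Hahn--Banach and the Jordan decomposition, using only that unitaries in the commutant fix the reduced states. Your annihilator argument is equally valid and perhaps more in the spirit of duality; it also makes no use of the multiplicative structure of the commutant, so it would work verbatim if one only knew that $\M_A\vee\M_B$ is a proper $\sigma$-weakly closed operator system containing $1$. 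The remaining parts---$(b)\Leftrightarrow(c)$, $(b)\Rightarrow(a)$, and the factor conclusion---match the paper essentially line for line; your factor argument via $Z(\M_A)\subseteq\M_A'\cap\M_B'$ unpacks exactly the inclusion $Z(\M_A)\subseteq Z(\M_A\vee\M_B)$ that the paper quotes.
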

\begin{proof}
    \ref{it:irreducible2} $\Leftrightarrow$ \ref{it:irreducible3} is clear because the projections onto jointly invariant closed subspaces are precisely the projections of the von Neumann algebra $\M_A'\cap\M_B'= (\M_A\vee\M_B)'$.

    \ref{it:irreducible1} $\Rightarrow$ \ref{it:irreducible2}:
    Let $u\in (\M_A\vee \M_B)'$ be a unitary and let $\phi$ be a normal state on $\B(\H)$. 
    Then $\phi$ and $u\phi u^*$ cannot be distinguished with correlation experiments, hence \ref{it:irreducible1} forces $u=1$. Thus, $(\M_A\vee \M_B)' =\M_A'\cap\M_B'=\CC1$.

    \ref{it:irreducible2} $\Rightarrow$ \ref{it:irreducible1}:
    Let $\psi,\phi$ be normal states on $\B(\H)$.
    By assumption the *-algebra $\A$ generated by $\M_A$ and $\M_B$ is dense.
    Every element of $\A$ is a linear combination of products of the form $ab$ for effects $a\in[0,1]_{\M_A}$ and $b\in[0,1]_{\M_B}$.
    Thus, \eqref{eq:irreducible1} is equivalent to the statement that $\psi$ and $\phi$ are equal on a $\sigma$-weakly dense *-subalgebra. By normality, this is equivalent to $\psi=\phi$.

    The fact that $\M_A$ and $\M_B$ are factors follows from \eqref{eq:irreducible2} because $Z(\M_A) \subset Z(\M_A\vee \M_B) = Z(\B(\H))=\CC1$.
\end{proof}

Based on the above, we now discuss what it means to have a \emph{bipartite system}.
Of course, a precise formalization is not achievable without a proper definition of what constitutes a subsystem.
In all cases, the observables that are localized on the respective subsystem will be commuting. 
In our framework, this results in commuting von Neumann algebras $\M_A$, $\M_B$ on $\H$.
Apart from commutativity, the following two requirements are natural:
\begin{enumerate}
    \item The full system should be composed of the two subsystems, in the sense of the following equivalent properties (see \cref{lem:irreducible}):
        \begin{enumerate}[(a)]
            \item states of the full system are fully determined by correlation experiments;
            \item measurements that commute with all measurements of both subsystems are trivial.
        \end{enumerate}
    \item The two subsystems should be "complementary" in the sense that one subsystem describes precisely those observables of the full system that commute with the other system, i.e., $\M_A$ and $\M_B$ are commutants:
    \begin{equation}\label{eq:haag duality}
        \M_A=\M_B'.
    \end{equation}
\end{enumerate}
We refer to the first property as \emph{irreducibility}, which is justified by \cref{lem:irreducible}.  
The second property is known as \emph{Haag duality} \cite{keyl_entanglement_2006}, named after a similar property with the same name in quantum field theory \cite{haag_local_1996} (see also \cref{sec:qft}).
Neither of these two properties implies the other: Irreducibility requires that $\M_A$ and $\M_B$ are factors, which does not follow from Haag duality.
On the other hand, Haag duality is not automatic from irreducibility.
If Haag duality holds, then $Z(\M_A)=Z(\M_B)$, and irreducibility is equivalent to factoriality of $\M_A$ and, therefore, also equivalent to factoriality of $\M_B$.
If the first property holds, then $\M_A\subset\M_B'$ and $\M_B\subset\M_A'$ are so-called irreducible subfactor inclusions, which means that the relative commutant $\M_A'\cap\M_B'=\CC1$ is trivial (see \cref{lem:irreducible}) \cite{jones_introduction_1997,kawahigashi_subfactor_2005,jones_index_1983}.

Haag duality is sometimes described as a "technical assumption" \cite{keyl_entanglement_2006,bols_category_2025,naaijkensQuantumSpinSystems2017}.
It is the opinion of the author that this falls somewhat short.
There are cases where the failure of Haag duality is understood on physical grounds (see, e.g., \cite{naaijkens_kosaki-longo_2013,naaijkens_subfactors_2018,longo_index_1989,longo_index_1990}).
% Instead, Haag duality has an operational interpretation as so does its failure (see below).
In \cref{sec:haag}, we show that Haag duality has an operational interpretation that only refers to local operations:\footnote{This is not the case for the formulation above. Indeed, suppose that $\M_A\subset\M'_B$ is a proper inclusion. Then the above refers to those operations that are not in $\M_B$.}\textsuperscript{,}\footnote{This operational interpretation does not apply to the weaker version of Haag duality studied in \cite{ogata_classification_2021,naaijkens_split_2022,jones_dhr_2024}.}
Haag duality holds if and only if pure states with the same marginals on one subsystem are related by unitaries of the other subsystem.

We analyze bipartite systems and, in particular, the resulting entanglement properties, in \cref{sec:bipartite systems,sec:locc,sec:strong-entanglement} in great detail.
Even though we regard both of the assumptions above as natural, we often assume Haag duality without factoriality because many of our theorems will not need this assumption.

We close this subsection by combining what we learned in the above and \cref{sec:information-disturbance} about the information-theoretic properties that correspond to factoriality of the observable algebra:

\begin{corollary}
    Let $\M_A$ be a von Neumann algebra on $\H$.
    The following are equivalent:
    \begin{enumerate}[(a)]
        \item $\M_A$ is a factor;
        \item there is von Neumann algebra $\M_B$ that commutes with $\M_A$ such that correlation experiments of the pair $(\M_A,\M_B)$ (see \eqref{eq:correlation function}) provide full information about normal states on $\B(\H)$;%
        \footnote{While the choice $\M_B=\M_A'$ always is an example if $\M_A$ is a factor, it is not the unique choice for $\M_B$.
        Indeed, every irreducible subfactor $\M_B\subset\M_A'$ is just as good.}
        \item there is a von Neumann algebra $\M_B$ that commutes with $\M_A$ such that the full system does not have nontrivial operations that commute with the operations of $\M_A$ and $\M_B$;
        \item it is not possible to extract information about the state of the full system with $\M_A$-inner operations in a way that cannot be detected with measurements in $\M_A$.
    \end{enumerate}
\end{corollary}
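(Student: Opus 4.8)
The plan is to treat this corollary as a repackaging of two facts established earlier: the equivalences in \cref{lem:irreducible} handle the cluster (a)--(c), and the no-information-without-disturbance analysis of \cref{sec:information-disturbance} handles (a)~$\Leftrightarrow$~(d). In each case the real work is translating the operational wording of the conditions into the algebraic statements to which the cited results apply; once that is done, every implication is a direct citation or a one-line commutant computation.

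First I would establish (a)~$\Leftrightarrow$~(b)~$\Leftrightarrow$~(c). For (a)~$\Rightarrow$~(b), assume $\M_A$ is a factor and take $\M_B=\M_A'$. By de Morgan's rule and the bicommutant theorem, $(\M_A\vee\M_A')'=\M_A'\cap\M_A''=\M_A'\cap\M_A=Z(\M_A)=\CC1$, so $\M_A\vee\M_A'=\B(\H)$. This is condition \ref{it:irreducible2} of \cref{lem:irreducible}, hence its condition \ref{it:irreducible1} holds, which is precisely (b). Conversely, if (b) holds for some commuting $\M_B$, then the closing sentence of \cref{lem:irreducible} gives that $\M_A$ is a factor, i.e.\ (a). For (b)~$\Leftrightarrow$~(c) I would note that, for a fixed commuting pair $(\M_A,\M_B)$, the operations of the full system commuting with all operations of both agents are exactly those whose Kraus operators lie in $\M_A'\cap\M_B'=(\M_A\vee\M_B)'$; thus (c) asserts the existence of an $\M_B$ with $\M_A'\cap\M_B'=\CC1$, which is condition \ref{it:irreducible2} of \cref{lem:irreducible}, equivalent to its condition \ref{it:irreducible1}~$=$~(b). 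So (b) and (c) hold for the same choices of $\M_B$.

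Next I would prove (a)~$\Leftrightarrow$~(d). Here the agent plays both the role of the agent $A$ and of the referee $R$ of \cref{sec:information-disturbance}, so $\M_R=\M_A$ and, by \eqref{eq:cannot-detect} and the surrounding discussion, the $\M_A$-inner instruments that cannot be detected with measurements in $\M_A$ are exactly the $(\M_A\cap\M_A')$-inner ones, i.e.\ those whose Kraus operators lie in $Z(\M_A)$. Such an instrument $(T_x)$ extracts information about the state precisely when some effect $m_x=T_x(1)$ is non-scalar, since by \eqref{eq:probability} a scalar effect yields state-independent statistics. If (a) holds, then $Z(\M_A)=\CC1$, so every central instrument has scalar Kraus operators, forcing $m_x\in\CC1$ for all $x$ and hence extracting no information, which gives (d). Conversely, if $\M_A$ is not a factor, choose a projection $p\in Z(\M_A)$ with $0\ne p\ne1$ and form the central instrument $(p(\placeholder)p,\;p^\perp(\placeholder)p^\perp)$; it is $\M_A$-inner and undetectable, yet its first effect $m_1=p$ is non-scalar, so it extracts information, contradicting (d). Thus (d)~$\Rightarrow$~(a).

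The routine parts are the de Morgan identities and the appeals to \cref{lem:irreducible}. The only genuine obstacle is making the operational phrases in (c) and (d) precise before the cited results can be invoked: one must fix that ``operations commuting with the operations of $\M_A$ and $\M_B$'' means Kraus operators in $\M_A'\cap\M_B'$, and that ``extracting information undetectably'' is the statement that a $Z(\M_A)$-inner instrument admits a non-scalar effect. With these translations in place, each implication reduces either to \cref{lem:irreducible} or to the characterization \eqref{eq:cannot-detect} together with the elementary fact that a measurement reveals information exactly when its effects are non-scalar.
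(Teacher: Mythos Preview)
Your proof is correct and follows exactly the approach the paper intends: the corollary is presented as a direct combination of \cref{lem:irreducible} (for the equivalence of (a), (b), (c)) and the discussion around \eqref{eq:cannot-detect} in \cref{sec:information-disturbance} (for (a)~$\Leftrightarrow$~(d)), and the paper gives no explicit proof beyond this pointer. Your translations of the operational phrases in (c) and (d) into the algebraic conditions $\M_A'\cap\M_B'=\CC1$ and $Z(\M_A)=\CC1$ are the right ones, and the counterexample with the central projection $p$ is the natural one.
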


\subsection{Entanglement and LOCC}\label{sec:entanglement-basics}

In the following, we go through the basics of entanglement theory in the von Neumann algebraic setting. 
An in-depth analysis of bipartite entanglement under the assumption of Haag duality will be carried out in \cref{sec:bipartite systems,sec:locc,sec:strong-entanglement}.
Entanglement is nowadays understood as the resource that LOCC cannot create \cite{horodecki_quantum_2009,chitambar_quantum_2019}.
Thus, in order to talk about entanglement, we need to make sense of LOCC in our settings.

We consider a full quantum system with Hilbert space $\H$ together with a finite number of agents $A_1,\ldots A_N$, each of which has partial access to the full system.
We assume the agents to be pairwise independent, so that the corresponding von Neumann algebras $\M_j=\M_{A_j}\subset \B(\H)$ are pairwise commuting.

A general LOCC protocol is of the following form:
In the first round, one of the agents performs an instrument and then communicates whatever outcome they measure to the other agents.
In each subsequent round, some agent chooses an instrument based on the outcomes of the previous rounds, applies it, and communicates the results to the other agents. 
The protocol terminates after a finite number $r$ of rounds.
The protocol itself is the algorithm that determines the order in which the agents act and which operations they apply based on which messages they receive.
Thus, in each run and for each initial state $\psi$, an LOCC protocol produces a string $x= x_1\cdots x_r$ of outcomes and an output state $\psi_x'$.
Therefore, the protocol defines an instrument $(T_x)$ whose outcome space $X$ is the set of all possible strings $x_1\cdots x_r$.

If $\psi$ and $\phi$ are states of the full system, we say that \emph{$\psi$ can be transformed to $\phi$ with LOCC by the of agents $A_1,\ldots A_N$} if there exists an LOCC protocol with instrument $\{T_x\}_{x\in X}$ such that, on average, the instrument takes $\psi$ to $\phi$, i.e., $T_*(\psi)=\phi$, where $T=\sum_x T_x$.
We denote this by
\begin{equation*}
    \psi\locc\phi.
\end{equation*}
Note that we only ask that $\psi$ is taken to $\phi$ on average.
However, in the case that $\phi$ is a pure state, it follows that each of the post-measurement states $\psi_x'$ is equal to $\phi$.
More precisely, $\psi_x'=\phi$ if $p_x>0$ (otherwise, the post-measurement states are not defined).

As we said above, entanglement is defined as the resource that LOCC cannot create.
This defines comparative entanglement theory: A state is more entangled than a given state if the former can be reached from the latter with LOCC, at least approximately. 
We can, of course, just take this definition and apply it to a collection of pairwise independent agents with constrained access to the full system.
It is a matter of taste whether one feels comfortable in calling the resulting relation entanglement, even in cases where the agents do not correspond to independent subsystems of the whole system.
In any case, it does not really matter as the concept of LOCC transitions is meaningful even in the general setting of pairwise independent agents with constrained access to the full system.

\subsection{State distinguishability}\label{sec:fidelity}

We discuss two ways to operationally quantify the distance of quantum states relative to an agent with partial access to the full system.
These two quantifications are the \emph{norm distance} and the \emph{fidelity}.
To make sense of the latter, we discuss Uhlmann's Theorem.

\subsubsection*{Norm distance}

In order to distinguish a pair $\psi,\phi$ of states of the full system, the agent performs measurements, which result in outcome distributions that are as different as possible for the two states $\psi$ and $\phi$.
Since the agents wish to distinguish between two hypotheses, they only need to consider binary measurements.
The simplest form to do so is to find an effect $f$ for which the two success probabilities $\psi(f)$ and $\phi(f)$ are as different as possible.
Mathematically, this amounts to optimizing $\abs{\psi(f)-\phi(f)}$ over all effects that are accessible to the agent.
% \begin{equation*}
%     \sup_{f \in [0,1]_{\M}} \abs{\psi(f)-\phi(f)}.
% \end{equation*}
Using the map $f\mapsto 2f-1$, which is an affine bijection between $[0,1]_\M$ and $[-1,1]_\M$ (the set of self-adjoint contractions in $\M$), we recognize this as half the norm distance of the induced states on $\M$:
\begin{equation}
     \sup_{f\in [0,1]_\M} \abs{\psi(f)-\phi(f)} = \frac12 \sup_{f\in [-1,1]_\M} \abs{(\psi-\phi)(f)} = \frac12 \norm{\psi|_\M-\phi|_\M}.
\end{equation}
This tells us that the norm distance of states on $\M$ has an operational interpretation in terms of state distinguishability.
If the agent has full access, i.e., if $\M=\B(\H)$, this corresponds precisely to the operational interpretation of the trace distance \cite[Sec.~9.2.1]{nielsen_quantum_2010}.

A central property of the norm distance is that it is monotone under inner quantum channels. 
In fact, for normal ucp maps $T:\N\to\M$ between von Neumann algebras $\M$ and $\N$, which we call outer quantum channels (see \cref{sec:intrinsic}), it holds
\begin{equation}\label{eq:norm DPI}
    \norm{T_*(\omega_1) - T_*(\omega_2)} \le \norm{\omega_1-\omega_2} , \qquad \omega_1,\omega_2\in\nstates(\M).
\end{equation}
This is called the \emph{data processing inequality} (for the norm distance) in the quantum information theory literature \cite{nielsen_quantum_2010}.
This reflects the fact that a quantum channel cannot increase the distinguishability of two states.
Mathematically, \eqref{eq:norm DPI} is a direct consequence of the fact that ucp maps, and hence their preduals, are norm contractions.
If $\Psi$ and $\Phi$ are unit vectors and if $\psi,\phi\in\nstates(\H)$ denote the corresponding normal pure states, then
\begin{equation}\label{eq:general vector state distance}
    \norm{\psi|_\M-\phi|_\M} \le 2 \norm{\Psi-\Phi}.
\end{equation}

\subsubsection*{Fidelity}

We begin by recalling how the fidelity is defined in the usual framework of quantum information theory. 
We then follow Hiai's definition of the fidelity for normal states on a von Neumann algebra via Haagerup $L^p$-spaces \cite{hiai_quantum_2021}.
Afterward, we generalize Uhlmann's Theorem, which connects the intrinsically defined fidelity with any given representation of the von Neumann algebra.
As a consequence of Uhlmann's theorem, the equivalence with Uhlmann's original definition of the fidelity \cite{uhlmann_transition_1976} follows.
Moreover, we prove the Fuchs-van de Graaf inequalities, which we use to show that convergence in fidelity is equivalent to norm convergence.
Most, if not all, of these results here are well-known in the literature. 

The fidelity of a pair $\psi,\phi$ of normal states on $\B(\H)$ is defined as $F(\psi,\phi)=\norm{\rho_\psi^{1/2}\rho_\phi^{1/2}}_1^2$, where $\rho_\phi,\rho_\psi$ denote the corresponding density operators in $\T(\H)$ \cite{nielsen_quantum_2010}.
In the case that $\psi$ and $\phi$ are pure states with implementing vectors $\Psi,\Phi$, the Fidelity becomes
\begin{equation}
    F(\psi,\phi) = \abs{\ip\Psi\Phi}^2.
\end{equation}
This quantity has an operational interpretation in terms of state distinguishability. 
Indeed, this follows from the operational interpretation of the norm distance, here applied to pure states, since $\norm{\psi-\phi}=\norm{\kettbra\Psi-\kettbra\Phi}_1 = 2 \sqrt{1-\abs{\ip\Psi\Phi}^2} = 2\sqrt{1-F(\psi,\phi)}$.
The operational interpretation of the fidelity for pure states is extended to arbitrary normal states on $\B(\H)$ by Uhlmann's theorem and its consequences \cite{nielsen_quantum_2010,dodd_simple_2002}.

Following Hiai \cite{hiai_quantum_2021}, the fidelity of a pair $\psi,\phi$ of normal states on a von Neumann algebra $\M$ is defined through the framework of Haagerup $L^p$-spaces (see \cref{sec:haagerup Lp}) 
\begin{equation}\label{eq:def fidelity}
    F(\psi,\phi) = \norm{h_\psi^{1/2}h_\phi^{1/2}}_{L^1(\M)}^2.
\end{equation}
Since the Haagerup $L^p$-spaces are isomorphic to the Schatten classes for $\M=\B(\H)$, \eqref{eq:def fidelity} reduces to the standard definition in this case.
The fidelity is 1 if and only if the states are equal:
    $F(\psi,\phi)=1\iff \psi=\phi$.
The \emph{Fuchs-van de Graaf inequalities} \cite{fuchs_cryptographic_1999} 
\begin{equation}\label{eq:FvdG}
    1- F(\psi,\phi)^{1/2} \le \frac12 \norm{\psi-\phi} \le (1- F(\psi,\phi))^{1/2},
\end{equation}
$\psi,\phi \in \nstates(\M)$, show that convergence in fidelity is equivalent to convergence in norm. We will prove \eqref{eq:FvdG} as a consequence of Uhlmann's Theorem below.
Like the norm distance, the fidelity satisfies a data processing inequality: If $T:\N\to\M$ is a normal ucp map, then \cite[Rem.~3.15 \& Thm.~3.16]{hiai_quantum_2021}:
\begin{eqnarray}\label{eq:fidelity dpi}
    F(T_*(\psi),T_*(\phi)) \ge F(\psi,\phi)
\end{eqnarray}
for all pairs of normal states on $\M$.
In particular, the fidelity of states on $\M$ is monotone under $\M$-inner quantum channels.

We will prove two versions of Uhlmann's theorem:

\begin{theorem}\label{thm:uhlmann extensions}
    Let $\M$ be a von Neumann algebra on $\H$ and let $\psi,\phi$ be normal states on $\M$.
    Then 
    \begin{equation}\label{eq:uhlmann extension}
        F(\psi,\phi) = \sup_{\tilde\psi,\tilde\phi}\, F(\tilde\psi,\tilde\phi),
    \end{equation}
    where the optimization is over normal states $\tilde\psi,\tilde\phi$ on $\B(\H)$ that extend $\psi$ and $\phi$, respectively.
    In fact, for every fixed extension $\tilde\psi$ of $\psi$, we have $F(\psi,\phi)=\sup_{\tilde\phi} F(\tilde\psi,\tilde\phi)$.
\end{theorem}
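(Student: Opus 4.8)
The plan is to prove the two inequalities separately, obtaining the stronger (fixed-$\tilde\psi$) statement along the way. The easy inequality is pure data processing. The inclusion $\M\hookrightarrow\B(\H)$ is a normal unital $*$-homomorphism, hence a normal ucp map, and its predual is the restriction $\tilde\chi\mapsto\tilde\chi|_\M$. Applying the data-processing inequality \eqref{eq:fidelity dpi} to this map gives $F(\psi,\phi)\ge F(\tilde\psi,\tilde\phi)$ for \emph{every} pair of extensions, so that $\sup_{\tilde\phi}F(\tilde\psi,\tilde\phi)\le F(\psi,\phi)$ already for each fixed $\tilde\psi$. It remains to produce extensions that saturate this bound.

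For the reverse inequality I would reduce to a statement about purifications. Fix $\tilde\psi$ and choose a purifying vector $\Psi\in\H\ox\ell^2$, i.e.\ $\tilde\psi(x)=\ip\Psi{(x\ox1)\Psi}$ for $x\in\B(\H)$; in particular $\Psi$ purifies $\psi$ for $\N:=\M\ox1$. For any vector $\Phi'\in\H\ox\ell^2$ purifying $\phi$ for $\N$, the functional $\tilde\phi'(x)=\ip{\Phi'}{(x\ox1)\Phi'}$ is an extension of $\phi$, and since $x\mapsto x\ox1$ is ucp, \eqref{eq:fidelity dpi} yields $F(\tilde\psi,\tilde\phi')\ge|\ip\Psi{\Phi'}|^2$. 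Hence it suffices to show $\sup_{\Phi'}|\ip\Psi{\Phi'}|^2=F(\psi,\phi)$, the supremum running over all purifications $\Phi'$ of $\phi$. By \cref{cor:all-purifications} these purifications are exactly $\{v'\Phi_0:v'\in\V(\N'),\ \norm{v'\Phi_0}=\norm{\Phi_0}\}=\overline{\{u'\Phi_0:u'\in\U(\N')\}}$ for one fixed $\Phi_0$; together with \cref{cor:isometrically acting pi's} this replaces the supremum by $\sup_{u'\in\U(\N')}|\ip\Psi{u'\Phi_0}|^2$. Replacing $\Psi$ by $w'\Psi$ with $w'\in\U(\N')$ and absorbing $w'$ into $u'$ shows, moreover, that this value is independent of which purification of $\psi$ serves as $\Psi$.

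The remaining supremum I would evaluate in the standard form. Since $\M\subset\B(\H)$ is faithful, its infinite amplification on $\H\ox\ell^2$ contains a standard form of $\M$ as a subrepresentation (\cref{lem:std rep} and quasi-equivalence to the standard representation), so by the independence just established I may compute with the canonical purifications $\Omega_\psi=h_\psi^{1/2}$, $\Omega_\phi=h_\phi^{1/2}\in L^2(\M)^+$ and with $\N'\supset\M'=J\M J$. For $u'=JvJ$, $v\in\U(\M)$, equation \eqref{eq:haagerup L2 std form} gives $J\xi=\xi^*$, hence $u'\Omega_\phi=h_\phi^{1/2}v^*$, and with the inner product $\ip\xi\eta_{L^2(\M)}=\tr(\xi^*\eta)$ one finds $\ip{\Omega_\psi}{u'\Omega_\phi}=\tr(h_\psi^{1/2}h_\phi^{1/2}v^*)$. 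Taking the supremum over $v\in\U(\M)$ and invoking the Russo--Dye theorem with the duality $\norm{\,\cdot\,}_{L^1(\M)}=\sup_{v\in\U(\M)}|\tr(v\,\cdot\,)|$ yields $\sup_{u'}|\ip{\Omega_\psi}{u'\Omega_\phi}|=\norm{h_\psi^{1/2}h_\phi^{1/2}}_{L^1(\M)}=F(\psi,\phi)^{1/2}$. Squeezing with the data-processing bound of the first paragraph gives the fixed-$\tilde\psi$ equality, and \emph{a fortiori} \eqref{eq:uhlmann extension}.

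The main obstacle is precisely this representation-independence: the fixed extension $\tilde\psi$ supplies an \emph{arbitrary} purification $\Psi$ in the given representation, while the clean trace-norm computation lives in the standard form. The crux is thus the orbit description of purifications in \cref{cor:all-purifications}, combined with enough ancillary room --- infinite multiplicity on $\H\ox\ell^2$ --- to guarantee both that every normal state of $\M$ purifies there and that $\U(\N')$ realizes the optimizing $JvJ$. A secondary technical point is the non-faithful case: when $\psi$ or $\phi$ has nontrivial support $\supp\psi$, $\supp\phi$, the unitary orbits must be taken over isometrically acting partial isometries, but \cref{cor:all-purifications,cor:isometrically acting pi's} are stated exactly so as to absorb this, so no separate argument is required.
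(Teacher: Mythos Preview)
Your proof is correct and follows essentially the same route as the paper: data processing for the upper bound, the standard-form computation $\sup_{v\in\U(\M)}|\tr(h_\psi^{1/2}h_\phi^{1/2}v^*)|=\norm{h_\psi^{1/2}h_\phi^{1/2}}_{L^1(\M)}$ via Russo--Dye, and representation-independence through the orbit description of purifications (\cref{cor:all-purifications}). The only difference is organizational: the paper first isolates the purification version as \cref{thm:uhlmann} and then derives the extension statement in two lines (\cref{cor:uhlmann extensions}) by passing to a standard form of $\B(\H)$, purifying the fixed $\tilde\psi$ there, and sandwiching---whereas you merge the two steps and instead embed $L^2(\M)$ into the infinite amplification $\H\ox\ell^2$ to transport the computation. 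Both work; factoring through \cref{thm:uhlmann} avoids the appeal to the structure theory needed for that embedding.
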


A purification in $\H$ of a state $\psi$ on $\M$ is a vector $\Psi\in\H$ such that $\psi(a)=\ip\Psi{a\Psi}$, $a\in\M$.
If the states $\psi$ and $\phi$ admit purifications in $\H$, we have:

\begin{theorem}\label{thm:uhlmann}
    Let $\M$ be a von Neumann algebra on $\H$.
    Let $\psi,\phi$ be states on $\M$ that have purifications in $\H$.
    Then
    \begin{equation}\label{eq:uhlmann}
        F(\psi,\phi) = \sup_{\Psi,\Phi}\,  \abs{\ip\Psi\Phi}^2,
    \end{equation}
    where the optimization is over all purifications $\Psi$ of $\psi$ and $\Phi$ of $\phi$ in $\H$.
    If $\Psi$ is a fixed purification, we have $F(\psi,\phi)=\sup_\Phi\abs{\ip\Psi\Phi}$.
\end{theorem}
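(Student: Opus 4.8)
The plan is to fix one purification and collapse the supremum to the norm of a single normal functional on $\M'$, which can then be evaluated in the standard form. Since $\psi$ and $\phi$ are assumed to admit purifications in $\H$, I would fix purifications $\Psi$ of $\psi$ and $\Phi_0$ of $\phi$ in $\H$. By \cref{cor:all-purifications}, the set of all purifications of $\phi$ in $\H$ is the closure of $\{u'\Phi_0 : u'\in\U(\M')\}$, so for the fixed purification $\Psi$,
\begin{equation*}
    \sup_\Phi \abs{\ip\Psi\Phi}^2 = \sup_{u'\in\U(\M')}\abs{\ip\Psi{u'\Phi_0}}^2 = \norm{\omega}^2,
\end{equation*}
where $\omega\in\M'_*$ is the normal functional $\omega(x')=\ip\Psi{x'\Phi_0}$ and the last equality is Russo--Dye \cite{russo1966note} (the supremum of $\abs{\omega(u')}$ over unitaries equals the norm). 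It therefore remains to prove $\norm\omega = F(\psi,\phi)^{1/2}$; the refined claim for fixed $\Psi$ then follows, and the symmetric formula \eqref{eq:uhlmann} a fortiori.

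To compute $\norm\omega$ I would transport everything to the standard form $L^2(\M)$, where the fidelity is visibly the Haagerup $L^1$-norm. Let $\Omega_\psi=h_\psi^{1/2}$ and $\Omega_\phi=h_\phi^{1/2}$ be the canonical purifications in $L^2(\M)^+$. By the polar decomposition of vectors (\cref{cor:vector polar decomp}) there are $\M$-linear partial isometries $v_\psi,v_\phi:L^2(\M)\to\H$ with $v_\psi\Omega_\psi=\Psi$, $v_\phi\Omega_\phi=\Phi_0$, and $v_\psi^*v_\psi=[\pi(\M)\Omega_\psi]$, $v_\phi^*v_\phi=[\pi(\M)\Omega_\phi]$. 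For $x'\in\M'$, $\M$-linearity forces $v_\psi^*x'v_\phi\in\pi(\M)'=J\M J$, and
\begin{equation*}
    \omega(x') = \ip{v_\psi\Omega_\psi}{x'v_\phi\Omega_\phi} = \ip{\Omega_\psi}{(v_\psi^*x'v_\phi)\Omega_\phi}_{L^2} = \omega^\circ(v_\psi^*x'v_\phi),
\end{equation*}
where $\omega^\circ(a'):=\ip{\Omega_\psi}{a'\Omega_\phi}_{L^2}$ on $J\M J$. As $v_\psi^*x'v_\phi$ is a contraction whenever $x'$ is, this gives $\norm\omega\le\norm{\omega^\circ}$. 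Conversely, given $a'\in J\M J$ the operator $x':=v_\psi a' v_\phi^*$ again lies in $\M'$ by $\M$-linearity, is a contraction if $a'$ is, and satisfies $v_\psi^*x'v_\phi=[\pi(\M)\Omega_\psi]\,a'\,[\pi(\M)\Omega_\phi]$; since $\Omega_\psi,\Omega_\phi$ are fixed by these support projections, $\omega(x')=\omega^\circ(a')$. Hence $\norm\omega=\norm{\omega^\circ}$.

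Finally I would evaluate $\omega^\circ$ directly in the standard form: identifying $J\M J$ with $\M$ and using $J\xi=\xi^*$, the functional $\omega^\circ$ becomes $a\mapsto\tr(h_\psi^{1/2}h_\phi^{1/2}a^*)$ on $\M$, whose predual norm is $\norm{h_\psi^{1/2}h_\phi^{1/2}}_{L^1(\M)}$ by the $L^1$--$\M$ duality of \cref{thm:haagerup L1}. By the definition \eqref{eq:def fidelity} of the fidelity this equals $F(\psi,\phi)^{1/2}$, which closes the argument. The hard part is the transport step: I must ensure that the support projections $[\pi(\M)\Omega_\psi]$ and $[\pi(\M)\Omega_\phi]$ introduced when passing back and forth do not shrink the norm, which is exactly why I rely on the precise support identities in \cref{cor:vector polar decomp} together with the fact that the canonical purifications are fixed by their own supports; this is what makes $\norm\omega$ independent of the chosen purifications and of the ambient $\H$. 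For orientation, the easy inequality $\sup_\Phi\abs{\ip\Psi\Phi}^2\le F(\psi,\phi)$ can also be read off from \cref{thm:uhlmann extensions}, since every vector state $\ip\Phi{(\placeholder)\Phi}$ is a (pure) normal extension of $\phi$ to $\B(\H)$.
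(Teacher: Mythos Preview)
Your proposal is correct and follows essentially the same route as the paper: reduce the supremum over purifications to an optimization over the unit ball of $\M'$ via \cref{cor:all-purifications} and Russo--Dye (the paper packages this as \cref{lem:unit-ball-opt}), transport to the standard form using the $\M$-linear partial isometries of \cref{cor:vector polar decomp}/\cref{lem:M-linear partial isometry}, and identify the resulting functional norm with $\norm{h_\psi^{1/2}h_\phi^{1/2}}_{L^1(\M)}$. The only organisational difference is that the paper splits this into two steps (first compute $F_{L^2(\M)}$, then show $F_\H$ is representation-independent by comparing two arbitrary $\H_1,\H_2$), whereas you transport once, directly from $\H$ to $L^2(\M)$, and verify that the support projections $[\pi(\M)\Omega_\psi]$, $[\pi(\M)\Omega_\phi]$ do not shrink the norm; both arguments amount to the same computation. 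Two minor remarks: the $L^1$--$\M$ duality you invoke is not literally stated in \cref{thm:haagerup L1} (the paper cites \cite[Lem.~9.23]{hiai_lectures_2021} for it), and your closing ``for orientation'' appeal to \cref{thm:uhlmann extensions} is circular in this paper since that result is deduced from \cref{thm:uhlmann}---but you do not use it in the actual argument.
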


We emphasize that the RHSs of \cref{eq:uhlmann extension,eq:uhlmann} depend on the representation $\H$, but that the LHSs are defined intrinsically.
Uhlmann's theorem extends the operational meaning of the fidelity of normal states on $\B(\H)$ (see above) to states on a von Neumann algebra.

\Cref{thm:uhlmann} is essentially contained in the literature:
It follows from first applying \cite[Cor.~2]{alberti_note_1983} of Alberti first to the standard representation, where the equivalence of \eqref{eq:def fidelity} and the RHS of \eqref{eq:uhlmann} can be checked explicitly, and then to a general representation.%
\footnote{It might seem that \cref{thm:uhlmann} follows directly from \cite[Cor.~2]{alberti_note_1983}. Note, however, that the definition of the fidelity in \eqref{eq:uhlmann} is different from Uhlmann's definition \cite{uhlmann_transition_1976}, which is used by Alberti in \cite{alberti_note_1983}.}
In the following, we give a self-contained proof of \cref{thm:uhlmann,thm:uhlmann extensions}.

\begin{lemma}\label{lem:unit-ball-opt}
    Under the assumptions of \cref{thm:uhlmann}, let $\Psi_0,\Phi_0$ be a pair of purifications.
    Then 
    \begin{equation}\label{eq:unit ball opt}
        \sup_{\Psi,\Phi}\, \abs{\ip\Psi\Phi} 
        =\sup_{u'\in \U(\M')} \abs{\ip{\Psi_0}{u'\Phi_0}} = \sup_{s'\in B(\M')} \abs{\ip{\Psi_0}{s'\Phi_0}},
    \end{equation}
    where the optimization on the LHS is as in \cref{thm:uhlmann}.
\end{lemma}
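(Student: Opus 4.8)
Write $L := \sup_{\Psi,\Phi}\abs{\ip\Psi\Phi}$ for the left-hand side (over all purifications $\Psi$ of $\psi$ and $\Phi$ of $\phi$ in $\H$), $U := \sup_{u'\in\U(\M')}\abs{\ip{\Psi_0}{u'\Phi_0}}$ for the middle term, and $S := \sup_{s'\in B(\M')}\abs{\ip{\Psi_0}{s'\Phi_0}}$ for the right-hand side. The plan is to prove the cyclic chain $L \le S \le U \le L$, from which equality of all three follows at once. The inputs are \cref{cor:all-purifications} (every purification of $\phi$ in $\H$ has the form $v'\Phi_0$ with $v'\in\V(\M')$ acting isometrically on $\Phi_0$, and likewise for $\psi$) and the Russo–Dye theorem (already used in the proof of \cref{lem:isometric-contraction}), which identifies $B(\M')$ with the norm-closed convex hull of $\U(\M')$.

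\textbf{The inequality $L \le S$.} By \cref{cor:all-purifications}, an arbitrary purification of $\psi$ is of the form $\Psi = w'\Psi_0$ with $w'\in\V(\M')$, and an arbitrary purification of $\phi$ is $\Phi = v'\Phi_0$ with $v'\in\V(\M')$. Then
\begin{equation*}
    \ip{\Psi}{\Phi} = \ip{w'\Psi_0}{v'\Phi_0} = \ip{\Psi_0}{(w')^*v'\,\Phi_0},
\end{equation*}
and $s' := (w')^*v'\in\M'$ is a product of two partial isometries, hence a contraction, so $s'\in B(\M')$. Taking suprema gives $L\le S$.

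\textbf{The inequalities $U \le L$ and $S \le U$.} For $U\le L$, note that for any $u'\in\U(\M')$ the vector $u'\Phi_0$ is again a purification of $\phi$, since $u'$ commutes with $\M$ gives $\ip{u'\Phi_0}{a\,u'\Phi_0}=\ip{\Phi_0}{a\Phi_0}=\phi(a)$ for all $a\in\M$; and $\Psi_0$ is a purification of $\psi$. Hence each term $\abs{\ip{\Psi_0}{u'\Phi_0}}$ occurs in the family defining $L$, so $U\le L$. For $S\le U$, fix $s'\in B(\M')$ and use Russo–Dye to pick convex combinations $t'_n = \sum_i p^{(n)}_i u^{(n)}_i$ of unitaries $u^{(n)}_i\in\U(\M')$ with $t'_n\to s'$ in norm. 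Then $\abs{\ip{\Psi_0}{t'_n\Phi_0}}\le \sum_i p^{(n)}_i\,\abs{\ip{\Psi_0}{u^{(n)}_i\Phi_0}}\le U$ for every $n$, and passing to the limit yields $\abs{\ip{\Psi_0}{s'\Phi_0}}\le U$; taking the supremum over $s'$ gives $S\le U$. Combining, $L\le S\le U\le L$, so all three are equal.

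\textbf{Main obstacle.} The only nontrivial step is $S\le U$, i.e.\ that optimizing over the whole unit ball of $\M'$ gains nothing over optimizing over unitaries; this rests entirely on the Russo–Dye description of $B(\M')$ as the closed convex hull of $\U(\M')$, together with convexity of $\abs{\ip{\Psi_0}{\,\cdot\,\Phi_0}}$. The remaining steps are purely bookkeeping with the partial-isometry characterization of purifications from \cref{cor:all-purifications}, and no separability or modular-theoretic input is needed.
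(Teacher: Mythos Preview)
Your proof is correct and takes essentially the same approach as the paper: both rely on \cref{cor:all-purifications} to relate purifications to the $\M'$-action and on the Russo--Dye theorem plus convexity to pass between $\U(\M')$ and $B(\M')$. The paper is slightly more direct (it reads $L=U$ straight from the unitary-orbit description in \cref{cor:all-purifications}, then $U=S$ from Russo--Dye), whereas you organize the same ingredients into a cyclic chain $L\le S\le U\le L$ using the partial-isometry description; this is a cosmetic difference only.
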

\begin{proof}
    The first equality follows from \cref{cor:all-purifications}.
    % The inequality "$\le $" part of the second equality is clear.
    To see the second equality, we use the Russo-Dye theorem \cite{russo1966note} to replace the optimization over $B(\M)$ on the RHS by an optimization over the weakly dense subset $\conv(\U(\M'))$.
    By convexity of the function that we are optimizing, we can drop the convex hull, leaving us with an optimization over $\U(\M')$, as desired.
    % If $s' = \sum_x p_x u_x' \in \conv\U(\M')$, then $\abs{\ip{\Psi_0}{s'\Phi_0}} \le \sum_x p_x \abs{\ip{\Psi_0}{u_x'\Phi_0}} \le \sup_{u'}\abs{\ip{\Psi_0}{u'\Phi_0}}$.
\end{proof}

\begin{proof}[Proof of \cref{thm:uhlmann}]
    For the sake of this proof, we denote the RHS of \eqref{eq:uhlmann} by $F_\H(\psi,\phi)$.

    \emph{Step 1.} We assume that $\M$ acts by left multiplication on the Haagerup $L^2$-space $\H=L^2(\M)$.
    Then $h_\psi^{1/2}$ and $h_\phi^{1/2}$ are purifications of $\psi$ and $\phi$, respectively, so that \cref{lem:unit-ball-opt} implies
    \begin{align*}
        F_{L^2(\M)}(\psi,\phi)
        % =\sup_{u'\in\U(\M')} \abs{\ip{h_\psi^{1/2}}{u'h_\phi^{1/2}}_{L^2(\M)}}
        = \sup_{s'\in B(\M)} \abs{\ip{h_\psi^{1/2}}{s'h_\phi^{1/2}}_{L^2(\M)}}^2
        = \sup_{s\in B(\M)} \abs{\tr ( h_\psi^{1/2} h_\phi^{1/2} s)}^2 
        = \norm{h_\psi^{1/2}h_\phi^{1/2}}_{L^1(\M)}^2,
    \end{align*}
    where we used the following facts: the commutant of $\M$ is the action of $\M$ on $L^2(\M)$ by right multiplication \cite[Thm.~9.29]{hiai_lectures_2021}, and the $L^1$-norm of an operator $h\in L^1(\M)$ equals $\sup_{s\in B(\M)}\abs{\tr(hs)}$ \cite[Lem.~9.23]{hiai_lectures_2021}.

    \emph{Step 2.} We show that $F_\H$ does not depend on $\H$.
    Combined with the first step, this shows \eqref{eq:uhlmann}.
    Let $(\pi_i,\H_i)$, $i=1,2$, be representations of $\M$ and let $\Psi_i,\Phi_i$ be purifications of $\psi,\phi$, respectively, in $\H_i$, $i=1,2$.
    Let $u,v$ be the $\pi_1$-$\pi_2$-intertwining partial isometries from \cref{lem:M-linear partial isometry} such that $u\Psi_1 =\Psi_2$, $v\Phi_1=\Phi_2$.
    Since $uv^*$ is a contraction in the commutant of $\pi_2(\M)$, we have
    \begin{equation}
        \abs{\ip{\Psi_1}{\Phi_1}} = \abs{\ip{\Psi_2}{uv^*\Phi_2}} 
        \le \sup_{s'\in B(\pi_2(\M)')} \abs{\ip{\Psi_2}{s'\Phi_2}} = F_{\H_2}(\psi,\phi)^{1/2}.
    \end{equation}
    Thus, $F_{\H_1}(\psi,\phi) = \sup_{\Psi_1,\Phi_1} \abs{\ip{\Psi_1}{\Phi_1}}^2 \le F_{\H_2}(\psi,\phi)$.
    The converse inequality follows from the same arguments with exchanged indices $1\leftrightarrow2$.
    The last claim follows from \cref{lem:unit-ball-opt}.
\end{proof}

When $\M_0\subset\M$ is a von Neumann subalgebra and $\psi,\phi$ are normal states on $\M$, Uhlmann's Theorem implies the following special case of \eqref{eq:fidelity dpi}
\begin{equation}\label{eq:fidelity poormans dpi}
    F(\psi|_{\M_0},\phi|_{\M_0}) \ge F(\psi,\phi).
\end{equation}
Indeed, this follows directly from the fact that purifications of functional $\psi\in\M_*^+$ are, in particular, purifications of its restriction $\psi|_{\M_0}$ to a subalgebra $\M_0\subset\M$.
As a consequence, we observe:

\begin{corollary}\label{cor:uhlmann extensions}
    Let $\psi$ and $\phi$ be normal states on $\M$ and let $\N\supset \M$ be an extension of $\M$.
    For every extension $\tilde\psi$ of $\psi$ and every $\eps>0$, there exists an extension $\tilde \phi$ of $\phi$ such that
    \begin{equation}
        F(\psi,\phi) \approx_\eps F(\tilde \psi,\tilde\phi).
    \end{equation}
\end{corollary}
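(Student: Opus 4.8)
The plan is to deduce the statement from Uhlmann's theorem in the form of \cref{thm:uhlmann extensions}, combined with the data-processing inequality \eqref{eq:fidelity poormans dpi}. The key observation is that for \emph{any} extensions $\tilde\psi,\tilde\phi$ of $\psi,\phi$ to $\N$, monotonicity under restriction to the subalgebra $\M\subset\N$ already supplies the one-sided bound $F(\tilde\psi,\tilde\phi)\le F(\tilde\psi|_\M,\tilde\phi|_\M)=F(\psi,\phi)$. Hence it suffices to produce a single extension $\tilde\phi$ of $\phi$ with $F(\tilde\psi,\tilde\phi)>F(\psi,\phi)-\eps$, whence $F(\tilde\psi,\tilde\phi)\approx_\eps F(\psi,\phi)$ follows.

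To reduce to the already-proven theorem, I would faithfully and normally represent the (a priori abstract) algebra $\N$ on a Hilbert space $\H$, so that $\M\subset\N\subset\B(\H)$ and $\tilde\psi$ becomes a normal state on $\N$. Since every normal state on a von Neumann subalgebra extends to a normal state on the ambient $\B(\H)$ (it admits a representation as a countable sum of vector states; see, e.g., \cite{takesaki1}), I choose a normal extension $\hat\psi$ of $\tilde\psi$ to $\B(\H)$. Then $\hat\psi$ is in particular a fixed normal extension of $\psi$, so the fixed-extension part of \cref{thm:uhlmann extensions} applies: $F(\psi,\phi)=\sup_{\hat\phi}F(\hat\psi,\hat\phi)$, the supremum running over normal extensions $\hat\phi$ of $\phi$ to $\B(\H)$. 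I pick $\hat\phi$ with $F(\hat\psi,\hat\phi)>F(\psi,\phi)-\eps$ and set $\tilde\phi:=\hat\phi|_\N$, a normal state on $\N$ restricting to $\phi$ on $\M$.

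It remains to sandwich $F(\tilde\psi,\tilde\phi)$. The upper bound $F(\tilde\psi,\tilde\phi)\le F(\psi,\phi)$ is the restriction inequality noted above, applied to $\M\subset\N$. For the lower bound, I apply \eqref{eq:fidelity poormans dpi} to the inclusion $\N\subset\B(\H)$, obtaining $F(\tilde\psi,\tilde\phi)=F(\hat\psi|_\N,\hat\phi|_\N)\ge F(\hat\psi,\hat\phi)>F(\psi,\phi)-\eps$. Combining the two bounds yields $F(\psi,\phi)-\eps<F(\tilde\psi,\tilde\phi)\le F(\psi,\phi)$, i.e. $F(\tilde\psi,\tilde\phi)\approx_\eps F(\psi,\phi)$, as required.

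The argument is essentially bookkeeping, and I do not expect a genuine obstacle; the one point requiring care is conceptual rather than technical, namely that \cref{thm:uhlmann extensions} concerns extensions to the full algebra $\B(\H)$, whereas the corollary concerns an arbitrary intermediate algebra $\N$. The device bridging this gap is to realize $\N$ concretely, lift the fixed extension all the way up to $\B(\H)$, optimize there, and then push the resulting extension of $\phi$ back down to $\N$ by restriction, using that restriction only increases the fidelity. A representation-free alternative would place $\N$ in its standard form $L^2(\N)$ and invoke \cref{thm:uhlmann} directly: the canonical purification in the positive cone realizes the fixed extension $\tilde\psi$ by a vector $\Psi$, the fixed-purification form of \cref{thm:uhlmann} produces a purification $\Phi$ of $\phi$ with $\abs{\ip\Psi\Phi}^2>F(\psi,\phi)-\eps$, and the induced vector state $\tilde\phi=\ip\Phi{(\,\cdot\,)\Phi}$ on $\N$ then satisfies $F(\tilde\psi,\tilde\phi)\ge\abs{\ip\Psi\Phi}^2$, giving the same conclusion.
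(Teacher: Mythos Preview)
Your alternative approach—placing $\N$ in standard form, purifying the fixed extension $\tilde\psi$ by a vector $\Psi$, invoking the fixed-purification form of \cref{thm:uhlmann} for $\M$ to find a purification $\Phi$ of $\phi$ with $\abs{\ip\Psi\Phi}^2>F(\psi,\phi)-\eps$, and then sandwiching via \eqref{eq:fidelity poormans dpi}—is exactly the paper's proof.

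Your primary approach is also correct as mathematics, but in the paper's logical order it is circular: \cref{thm:uhlmann extensions} is not proved independently but is deduced \emph{from} \cref{cor:uhlmann extensions} as the special case $\N=\B(\H)$ (see the sentence immediately following the proof). So invoking \cref{thm:uhlmann extensions} to establish the corollary appeals to what you are proving. The fix is trivial and you already have it: either first prove the $\N=\B(\H)$ case directly via the standard-form argument (after which your bootstrapping to general $\N$ works), or simply adopt your alternative route. The paper chooses the latter, which avoids the detour through $\B(\H)$ and the auxiliary extension $\hat\psi$ altogether.
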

\begin{proof}
    We may take $\N$ to be in standard representation and let $\Psi$ be a purification of $\tilde\psi$.
    By \cref{thm:uhlmann}, there is a purification $\Phi$ of $\phi$ such that $F(\psi,\phi)\approx_\eps\abs{\ip\Psi\Phi}^2$.
    We let $\tilde\phi$ be denote the state that $\Phi$ induces on $\N$.
    Then \eqref{eq:fidelity poormans dpi} implies
    \begin{equation*}
        F(\psi,\phi) \approx_\eps F(\Psi,\Phi) \ge F(\tilde \psi,\tilde\phi) \ge F(\psi,\phi).  \qedhere
    \end{equation*}
\end{proof}

In particular, \cref{thm:uhlmann extensions} follows from \cref{cor:uhlmann extensions}.
We conclude by proving the Fuchs-van de Graaf inequality based on Uhlmann's Theorem.

\begin{proof}[Proof of \eqref{eq:FvdG}]
    Let $(\H,J,\P)$ be a standard form of $\M$.
    Let $\eps>0$. By \cref{thm:uhlmann}, there are purifications $\Psi,\Phi\in \H$ of the states $\psi,\phi$ such that $F(\psi, \phi) \le \abs{\ip\Psi\Phi}^2+\eps$.
    Let $\omega_{\Psi},\omega_{\Phi}\in\predualB(\H)$ denote the vector states induced by $\Psi,\Phi$, respectively.
    Then
    \begin{equation*}
        \norm{\psi-\phi}\le \norm{\omega_\Psi-\omega_\Phi} 
        = 2(1-\abs{\ip{\Psi}{\Phi}}^2)^{1/2} 
        \le 2(1-F(\psi,\phi)+\eps)^{1/2}.
    \end{equation*}
    Thus, the upper bound follows by taking $\eps\to0$.
    To see the lower bound, we consider the canonical purifications $\Omega_{\psi},\Omega_{\phi}\in\P$. 
    Using the inequality $\norm{\Omega_\psi-\Omega_\phi}^2 \le \norm{\psi-\phi}$ (see \eqref{eq:state-vector-estimate}) and \cref{thm:uhlmann}, we have
    \begin{equation*}
        1-F(\psi,\phi)^{1/2} \le 1-\ip{\Omega_\psi}{\Omega_\phi} = \frac12 \norm{\Omega_\psi-\Omega_\phi}^2 \le \frac12\norm{\psi-\phi}. \qedhere
    \end{equation*}
\end{proof}

In addition to the fidelity and the norm distance, there are many more, typically asymmetric, distance measures such as the relative entropy and other quantum divergences, most of which have been considered in the context of von Neumann algebras, see \cref{sec:notes-vNQI} or \cite{ohya_quantum_1993,hiai_quantum_2021,berta_smooth_2015,berta_renyi_2018}.

\subsection{Intrinsic description of quantum systems and outer quantum channels}\label{sec:intrinsic}

In \cref{sec:agents}, we considered subsystems of the full quantum systems and described them through their observable algebras.
Since every subsystem is a system in its own right, there should be an intrinsic description that bypasses the surrounding full quantum system.
Mathematically, it seems obvious what this description is:
We take the observable algebra $\M$ of the system and forget the Hilbert space that it acted on.
Then the states of the system are the normal states on $\M$, measurements are $\M$-valued POVMs, and operations on the quantum system are the $\M$-inner ones.
This makes sense because \cref{prop:implementable} characterizes the implementable operations in a way that is intrinsic to $\M$.
Moreover, it does not yield a larger class of systems because all abstract von Neumann algebras have faithful representations.

% In \cref{sec:agents}, we considered von Neumann algebras as the observable algebras of subsystems of full quantum systems, or more generally, as the observable algebras of agents with partial access to the full system. 
% We now want to understand, what type of information the von Neumann algebra itself carries about the subsystem (or the agent).

In order to justify such an intrinsic description of quantum systems, we need to argue that the observable algebra $\M$ of a subsystem of a full quantum system indeed captures all operational properties that are intrinsic to the subsystem.
Although this seems obvious, we give an explicit argument.
Let us consider an abstract von Neumann algebra $\M$ and a pair of full quantum systems with Hilbert spaces $\H_1$ and $\H_2$.
We assume that both of these have subsystems with observable algebras $\pi_1(\M)$ and $\pi_2(\M)$, where $\pi_j:\M\to\B(\H)$, $j=1,2$, are faithful representations of $\M$.
% We interpret these representations as two different \emph{environments} of the system with observable algebra $\M$, and we.
% We would like to understand whether there is an operational way for an agent to distinguishing the two different \emph{environments} of their  system.
As a first step, we ask whether the possible reduced states of the two setups are different.
This is not the case: Whenever a von Neumann algebra $\M$ acts on a Hilbert space $\H$, then all of its normal states are induced by normal states on the surrounding $\B(\H)$.
As a consequence, the two environments cannot be distinguished without further knowledge of the states of the respective full quantum systems.
Indeed, let $(m_x)$ be a measurement that the agent performs in the two setups, i.e., they perform $(\pi_j(m_x))$ on the system with Hilbert space $\H_j$, $j=1,2$.
It is clear that if $\phi_1,\phi_2$ are states of the respective full systems such that $\phi_1\circ\pi_1 = \phi_2 \circ\pi_2$, then the output distributions on both systems are the same
\begin{eqnarray}
    p_x = \phi_1(\pi_1(m_x)) = \phi_2(\pi_2(m_x))\qquad \forall x.
\end{eqnarray}
Thus, all measurements on system 1 can be reproduced by measurements on system 2 and vice versa. 
We conclude that the agent cannot learn anything about the environment by measuring the system.

The point of view that a von Neumann algebra describes an abstract quantum system is, of course, not new. 
In fact, it dates back all the way to von Neumann himself, for whom finding the most general mathematical description of quantum systems was one of the initial motivations for studying von Neumann algebras \cite{redei}.

We may now supplement the sentence "a full quantum system is described by a Hilbert space $\H$" with \emph{a quantum system is described by a von Neumann algebra $\M$}.
We emphasize again that we do not claim (nor believe) that every von Neumann algebra arises as the observable algebra of an actual quantum system. 
Nonetheless, there is no clear separation between those von Neumann algebras that qualify as models of physical systems and those that do not.
This makes it even more interesting to understand the interplay between operational properties and von Neumann algebraic properties of the von Neumann algebra.

% The above shows that, if we are interested only in properties that are intrinsic to a subsystem, we may ignore the Hilbert space of the surrounding full quantum system because the observable algebra itself captures all information that is intrinsic to the subsystem.
% In particular, this suggests that we can identify a more general notion of "quantum system", by decoupling the observable algebras from their representations. I.e., we can associate to a quantum system an observable algebra $\M$, which is now an abstract von Neumann algebra.
% This does not yield a different class of observable algebras because all abstract von Neumann algebras have faithful representations.
% Indeed, abstract von Neumann algebras (modulo isomorphism) exactly describe, the equivalence classes of "equivalent subsystems" that we observed above.
% The implementable operations are still the inner operations, as described by \cref{prop:implementable}, which makes no reference to a Hilbert space that the observable algebra acts on.\footnote{This is almost true. \Cref{it:implementable3c} of the characterization of implementable channels mentions the commutant of $\M$, which depends on the representation. However, \cref{it:implementable3b} offers a representation-free characterization via Kraus operators in $\M$. Moreover, \cref{lem:inner} shows that \cref{it:implementable3c} applies to all representations.}

We caution the reader that, even though abstract von Neumann algebras can be used to describe classical systems or quantum-classical hybrids \cite{dammeier_quantum-classical_2023}, this is only partially covered by our framework.
The reason is that in our framework, only inner quantum channels, which always act trivially on the center $Z(\M)$, are implementable operations.
However, in a truly classical system, the observable algebra would be abelian so that all inner operations would be trivial.
To understand this, we recall that our conclusion that only inner operations can be applied was deduced from the assumption that we are dealing with subsystems of full quantum systems (or, more generally, agents with partial access).
This assumption does not apply to classical systems.
Instead, it applies to cases where the center of the von Neumann algebras describes \emph{superselection rules} such as the parity superselection rule in fermionic systems.\footnote{The observable algebra associated with a finite number of fermionic modes is the \emph{even} part of CAR algebra. 
When considering a fixed sector of a systems with infinitely many modes,
the observable algebra is the von Neumann algebra generated by the even part of the CAR algebra in the sector under consideration.}
If one wants to describe a quantum-classical hybrid system, one has to specify which operations should be regarded as implementable since the observable algebra itself does not yet know what kind of system it describes.%
\footnote{Another aspect to be cautious about when discussing quantum-classical hybrids is that a direct integral $\int_X \M_x\,d\mu(x)$ with fibers $\M_x\cong \M$ does not carry the same information as a tensor product $L^\oo(X,\mu)\barox\M$. In both cases, it is clear what the classical subsystem is, but only in the latter case do we know what the quantum subsystem is (assuming $\M$ is a factor). Identifying the two requires choosing coordinates for every $x\in X$, and this choice is unique up to a unitary section $x_\mapsto u_x$.}

In \cref{sec:agents}, we identified the operations of an agent with observable algebra $\M$ as the $\M$-inner.
It is sometimes natural to consider a larger class of quantum channels that we call \emph{outer quantum channels}, as opposed to inner ones.
In particular, these allow us to go from one system to another.
Consider a pair of von Neumann algebras $\M$ and $\N$.
Then, in the Heisenberg picture, an outer quantum channel from is a normal ucp map
\begin{equation*}
    T: \N \to \M.
\end{equation*}
The corresponding Schrödinger picture version is that of a normalization-preserving cp map $T_*: \M_*\to \N_*$. 
The two are connected via $T_*(\phi)=\phi\circ T$, $\phi\in \M_*$.
Outer quantum channels describe abstract state transformations that respect stochastic mixtures, i.e., convex combinations of states, and are compatible with innocent bystanders.\footnote{This means that $T\ox\id$ is still an outer channel if $\id$ denotes the identity on some system $\B(\K)$, the innocent bystander, that does not take part in the transformation.}
Outer quantum channels have a dilation theory known as Paschke dilations \cite{westerbaan_paschke_2017}.
Importantly, many natural constructions of new (outer) quantum channels from old ones preserve the subset of inner channels, e.g., the Petz recovery map \cite{ohya_quantum_1993,hiai_quantum_2021} of an inner channel is easily seen to be an inner channel and the the Paschke dilation reduces to the inner Stinespring dislation for inner channels (see \cref{lem:winner}).

We believe that the implementability of outer channels $T:\N\to\M$ by an agent whose observable algebra is $\M$ (or $\N$) is highly questionable in general, even if $\M=\N$.
Implementability is, however, not strictly necessary for a quantum channel to describe operationally meaningful state transformations of a quantum system.
For instance, the channel could describe a coarse-graining procedure, or it could a form of noise that affects states if the agent does nothing for an extended period of time.
To illustrate this, we consider a class of examples where outer channels arise from symmetries:
Consider a subsystem with observable algebra $\M$ of a full quantum system with Hilbert space $\H$.
Let $U$ be unitary on $\H$ with the property that $U^*\M U=\M$.
Then $T = \Ad(U)\restriction\M$ is an automorphism of $\M$ and, hence, an outer quantum channel, which is, in general, not $\M$-inner.
For $T$ to be inner, it is necessary and sufficient that $U$ decomposes as a product $U=uu'$ of unitaries $u\up\prime\in\M\up\prime$.
A concrete example in the context of quantum field theory, where $U$ implements a Lorentz transformation, is discussed in \cref{sec:qft}.

We close our discussion of outer channels with remarks on cases, where channels between different von Neumann algebras can be regarded as implementable.
% In each of these cases, however, the channels satisfy some form of innerness with respect to a von Neumann algebra.
We fix an agent with observable algebra $\M$.
If $\M$ and $\N$ are type $\I$ factor, every outer quantum channel $T:\M\to\M$ is inner \cite[Thm.~3.3]{kraus_general_1971}.
In \cref{sec:axioms}, we have considered implementable channels $T:\M\barox\B(\K_2)\to\M\barox\B(\K_1)$ in the presence of ancillary systems, and identified them as the $\M$-inner channels.
If $\N\subset \M$ is a von Neumann subalgebra of an agent's observable algebra $\M$, then the inclusion map $j:\N\hookrightarrow\M$ corresponds to discarding all operations that are not $\N$-inner.
Since the agent can always decide to discard some of their operations, the inclusion map implementable.
Afterward, the agent's observable algebra is the subalgebra $\N$.
If the agent wishes to discard the full system, they choose $\N=\CC1$.
However, in general, the agent cannot apply an inclusion map $\M \hookrightarrow \tilde \N$ into a larger algebra, as this would give them access to new observables.
The following result shows that every channel whose output system (in the Schrödinger picture) is a full quantum system can be implemented:

\begin{lemma}
    An outer quantum channel $T:\B(\K)\to \M$ (more generally, $T:\B(\K_2)\to \M\barox\B(\K)$) can be implementable by an agent who has access to the set of $\M$-inner operations by applying an $\M$-inner operation and then discarding the system, instead of the ancilla.
\end{lemma}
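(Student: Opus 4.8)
The plan is to reduce the claim to \cref{lem:inner-distillation}, which already carries the essential content. An outer quantum channel $T:\B(\K)\to\M$ is a normal ucp map whose range lies in $\M=\M\barox\B(\CC)$, and this is exactly condition \ref{it:inner-distillation2} of \cref{lem:inner-distillation} applied with $\K_1=\CC$ and $\K_2=\K$. First I would invoke that lemma to produce an $\M$-inner subunital cp map $\hat T:\B(\H\ox\K)\to\B(\H)$ with $T=\hat T(1_\H\ox\placeholder)$. Since $T$ is unital, the ``moreover'' part of the lemma guarantees that $\hat T$ is unital as well, so $\hat T$ is a bona fide $\M$-inner quantum channel.

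By \cref{prop:implementable} (equivalently \cref{cor:its-a-vNa}), the agent can implement every $\M$-inner channel, in particular $\hat T$. It then remains to read the factorization $T=\hat T(1_\H\ox\placeholder)$ operationally. In the Schr\"odinger picture it becomes $T_*=\tr_\H\circ\,\hat T_*$, where $\tr_\H$ denotes the partial trace that discards the system $\H$: indeed, for every state $\phi$ on $\B(\H)$ and every $a\in\B(\K)$ one has
\begin{equation*}
    T_*(\phi)(a)=\phi\big(\hat T(1_\H\ox a)\big)=\hat T_*(\phi)(1_\H\ox a)=\big(\tr_\H\hat T_*(\phi)\big)(a).
\end{equation*}
Thus the agent realizes $T$ by applying the $\M$-inner channel $\hat T$, whose output is a joint state on $\H\ox\K$, and then discarding the system $\H$ instead of the ancilla $\K$, leaving a state on the full quantum system $\K$.

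For the general statement $T:\B(\K_2)\to\M\barox\B(\K)$ I would run the identical argument with $\K_1=\K$: \cref{lem:inner-distillation} supplies an $\M$-inner channel $\hat T:\B(\H\ox\K_2)\to\B(\H\ox\K)$ with $T=\hat T(1_\H\ox\placeholder)$, which the agent—who now retains the ancilla $\K$—can implement, and discarding the system $\H$ yields $T$ as above.

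There is no serious obstacle here, since the entire mathematical weight is carried by \cref{lem:inner-distillation}; the task is to phrase its consequence in operational terms. The only points requiring care are bookkeeping ones: confirming that unitality of $T$ transfers to $\hat T$ so that $\hat T$ is a genuine channel, and matching the operational phrase ``discard the system, not the ancilla'' to the partial trace $\tr_\H$ appearing in the Schr\"odinger-picture identity.
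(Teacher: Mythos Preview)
Your proposal is correct and follows the same approach as the paper: both invoke \cref{lem:inner-distillation} (condition \ref{it:inner-distillation2}) to obtain the $\M$-inner channel $\hat T$ with $T=\hat T(1_\H\ox\placeholder)$, and use the ``moreover'' clause to transfer unitality. The paper's proof is a one-liner that just cites \cref{lem:inner-distillation}; you additionally unpack the Schr\"odinger-picture identity $T_*=\tr_\H\circ\hat T_*$ to justify the phrase ``discarding the system,'' which is helpful but not strictly required.
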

\begin{proof}
    By \cref{lem:inner-distillation} there is an $\M$-inner quantum channel $\hat T:\M\barox\B(\K_2)\to\M\barox\B(\K_1)$ such that $T = \hat T(1_\H\ox \placeholder)$.
\end{proof}

\subsection{Examples of von Neumann algebraic quantum systems}\label{sec:examples}

In the following, we consider three physical scenarios in which quantum systems with infinitely many degrees of freedom are modelled with von Neumann algebras.

\subsubsection{Ground state sectors of quantum many-body systems}\label{sec:many-body}

We consider a quantum many-body system of spins localized on the sites of a lattice $\Gamma\subset \RR^d$, e.g., $\Gamma=\ZZ^d$.
To every finite region $A\subset \Gamma$, we associate the Hilbert space $\H_A=\bigotimes_{x\in A}\CC^d$ and the unital *-algebra $\A_A = \B(\H_A)$.
If $B\subset A$ is a smaller region, we use the product form $\H_A = \H_B \ox\H_{A\setminus B}$ to identify $\A_B$ with the subalgebra $\A(B)\ox 1_{A\setminus B}$ of $\A_A$.
In this way, we obtain a directed family of unital *-algebras $(\A_A)_{A\Subset \Gamma}$ (where $\Subset$ indicates a finite subset), and, upon taking their union, we get a unital *-algebra
\begin{equation}
    \A_\Gamma = \bigcup_{A\Subset \Gamma} \A_A
\end{equation}
which describes the collection of all operators with finite support.
This *-algebra can be equipped naturally with a $C^*$-norm via $\norm a = \norm{a}_{\B(\otimes_{x\in A}\CC^d)}$ for some/all $A\Subset\Gamma$ with $a\in A_A$, and if we take the completion $\bar\A_\Gamma$ what we obtain the so-called quasi-local ($C^*$-) algebra \cite{bratteli_robinson2}.\footnote{In fact, we have just described its standard construction.}
By construction, the unital *-algebra $\A_\Gamma$ and the quasi-local $C^*$-algebra $\bar\A_\Gamma$ have the same algebraic states and the same representation theory.

The algebraic structure of finitely localized operators, which only depends on the dimension $d$ of the local spins and the lattice $\Gamma$, is the same for many distinct physical models.
For instance, both the Heisenberg antiferromagnet and a collection of uncorrelated spins are described by $\Gamma=\ZZ$ and $d=2$.
Different physical models are specified by their Hamiltonians, which are formally written as
\begin{equation}
    H = \sum_{A\Subset \Gamma} \Phi(A), \qquad \Phi(A)=\Phi(A)^*\in \A_A,
\end{equation}
where the interaction terms $\Phi(A)$ decay sufficiently fast when $A$ becomes large.
To make this rigorous, one associates to every finite region $A\Subset \Gamma$, the Hamiltonian $H_A= \sum_{A'\Subset \Gamma} \Phi(A')$ and the dynamics $\tau_t^A(a) = e^{itH_A}(\placeholder)e^{-itH_A}$.
If the interaction strength $\norm{\Phi(A)}$ decays sufficiently fast at large distances, the limits $\lim_{A \uparrow\Gamma} \tau_t^A(a)$ (along the directed set of finite regions) exists in $\bar\A_\Gamma$ for all $a\in\A_\Gamma$ and define a strongly continuous dynamics $\tau : \RR \curvearrowright \bar\A_\Gamma$ (see \cite{bratteli_robinson2, nachtergaele_quasi-locality_2019, nachtergaele_quasi-locality_2022, van_luijk_convergence_2024}).
In the following, we make the additional assumption that the infinitesimal generator $\delta$ of $\tau$ contains $\A_\Gamma$ as a core on which it is given by the norm-convergent series $\delta(a) =\sum_{A\Subset\Gamma} i[\Phi(A),a] =\text{``}i[H,a]\text{''}$.\footnote{These assumptions are quite mild (see \cite{bratteli_robinson2,nachtergaele_quasi-locality_2019,nachtergaele_quasi-locality_2022}).
For instance, they hold if the interaction has finite range, i.e., $\Phi(A) = 0$ whenever the diameter of $A$ exceeds some fixed number $\ell<\oo$.}
A \emph{ground state} $\omega$ is a state $\omega$ on $\A_\Gamma$ such that 
\begin{equation}\label{eq:ground state}
    -i\omega(a^*\delta(a)) \ge 0, \qquad a\in \A_\Gamma
\end{equation}
(see \cite[Def.~1.9]{ogata_classification_2021} or \cite{bratteli_robinson2}).
The ground space (meaning the set of ground states) is a convex $w^*$-compact set.
If we assume the ground state to be unique, it is a pure state on $\A_\Gamma$.

In this work, we will mean by a \emph{sector} of the many-body system a Hilbert space $\H$ carrying an irreducible representation of $\A_\Gamma$.%
\footnote{Every representation of $\A_\Gamma$ is faithful since $\A_\Gamma$ is a simple unital *-algebra, i.e., it does not have nontrivial ideals. We also remark that every representation of the unital *-algebra $\A_\Gamma$ (by bounded operators) extends to a representation of the quasi-local algebra $\bar\A_\Gamma$.}\textsuperscript,\footnote{Two irreducible representations describe the same sector of the many-body system if they are unitarily equivalent.}
In the following, we suppress the representation and regard $\A_\Gamma$ as a subalgebra of $\B(\H)$.
A \emph{ground state sector} is a sector that arises via the GNS representation with respect to a pure ground state $\omega$ \cite{bratteli_robinson2}.
% In the ground state sector, the Hamiltonian makes sense as an unbounded self-adjoint operator $H$, not just as a formal object \cite{bratteli_robinson2}.
The normal state space $\nstates(\H)$ describes those states of the many-body system that are finite excitations above the ground state.
More precisely, $\nstates(\H)$ is the set of states of the many-body system that can be approximated by acting on the ground state with operations of finite support, where we say that an operation has support in a finite region $A\Subset\Gamma$ if it can be written with Kraus operators in $\A_A$.

From now on, we consider a fixed sector $\H$. 
For a region $A\subset \Gamma$, we think of an agent, also denoted $A$, that can implement operations with \emph{approximately finite support} in $A$.
By this, we mean operations that are finitely localized in $A$, and any limits thereof in the topology of pointwise norm-convergence on states or the topology of convergence in expectation values.
Based on \cref{lem:approx-inner}, both topologies lead to the same class of operations, namely those that are $\M_A$-inner, where $\M_A$ is the von Neumann algebra 
\begin{equation}\label{eq:vNa in many-body}
    \M_A = \A_A''.
\end{equation}
With this definition, we have $\M_\Gamma=\B(\H)$ and $\M_\emptyset =\CC1$.
We note the above as a Lemma:

\begin{lemma}\label{lem:many-body}
    An operation on the Hilbert space $\H$ of the sector is approximately finitely localized in a region $A\subset \Gamma$ if and only if it is $\M_A$-inner.
    In particular, all operations are approximately localized in $\Gamma$.
\end{lemma}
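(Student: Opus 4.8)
The plan is to reduce the statement directly to \cref{lem:approx-inner} together with a standard computation of a generated von Neumann algebra. First I would unwind the definitions. An operation is \emph{finitely localized} in $A$ precisely when it has Kraus operators in $\A_B$ for some finite $B\Subset A$, i.e.\ when it is $\M_B$-inner (for finite $B$ the represented algebra $\A_B$ is finite-dimensional, hence already weakly closed, so $\M_B=\A_B''=\A_B$). Thus the finitely localized operations in $A$ are exactly $\bigcup_{B\Subset A}\O_{\M_B}$, and the approximately finitely localized ones are, by definition, their limits in the topology of convergence in expectation values or in the topology of norm convergence on states.

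Next I would organize the finite subsets $B\Subset A$ into a net directed by inclusion, which gives a net of agents of increasing capability. Since $B\subset B'$ implies $\M_B\subset\M_{B'}$, this is an increasing net of von Neumann algebras on $\H$, so \cref{lem:approx-inner} applies verbatim (together with its stated extension to instruments and measurements). It identifies the approximately finitely localized operations --- in either of the two topologies, which is exactly why both appear in the definition --- as precisely the $\M$-inner operations for $\M=\bigvee_{B\Subset A}\M_B$.

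The only remaining point is the identification $\bigvee_{B\Subset A}\M_B=\M_A$, which is immediate from the bicommutant theorem and the definition $\A_A=\bigcup_{B\Subset A}\A_B$. Indeed $\bigvee_{B\Subset A}\M_B=\bigl(\bigcup_{B\Subset A}\A_B''\bigr)''$, and from $\A_B\subset\A_B''\subset\A_A''$ one gets $\A_A=\bigcup_{B\Subset A}\A_B\subset\bigcup_{B\Subset A}\A_B''\subset\A_A''$; taking double commutants collapses everything to $\A_A''=\M_A$. Combining this with the previous step yields the claimed equivalence.

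For the final assertion I would specialize to $A=\Gamma$: irreducibility of the sector means $\A_\Gamma'=\CC1$, so $\M_\Gamma=\A_\Gamma''=\B(\H)$, whence every operation is $\B(\H)$-inner and therefore approximately localized in $\Gamma$. I do not expect a genuine obstacle here: the substantive work --- Kaplansky density and the closure of the $\M$-inner operations --- is already done inside \cref{lem:approx-inner}. The only points requiring care are matching the informal phrase ``approximately finite support'' to the precise net-theoretic hypothesis of \cref{lem:approx-inner}, and invoking the extension from channels to instruments and measurements correctly so that the lemma covers all \emph{operations}, not only channels.
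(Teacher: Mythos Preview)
Your proposal is correct and follows exactly the approach indicated in the paper: the lemma is stated as a consequence of the discussion immediately preceding it, which already reduces the claim to \cref{lem:approx-inner} applied to the increasing net $(\M_B)_{B\Subset A}$ and the definition $\M_A=\A_A''$. Your added details---the explicit identification $\bigvee_{B\Subset A}\M_B=\M_A$ and the specialization to $A=\Gamma$ via irreducibility---are precisely what the paper leaves implicit.
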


The irreducibility of the action of $\A_\Gamma$ on $\H$ implies that $\M_A\vee \M_{A^c} = \B(\H)$ for all $A\subset\Gamma$.
By \cref{lem:irreducible}, this has the consequence that the von Neumann algebras $\M_A$ are, in fact, factors for all regions $A\subset\Gamma$.
Moreover, the factors $\M_A$ are, by construction, AFD.
The fact that finitely localized operators in disjoint regions commute implies that the factors $\M_A$ and $\M_B$ commute if $A\cap B=\emptyset$.

\begin{figure}[ht!]
    \centering
    \def\svgwidth{.4\textwidth}
    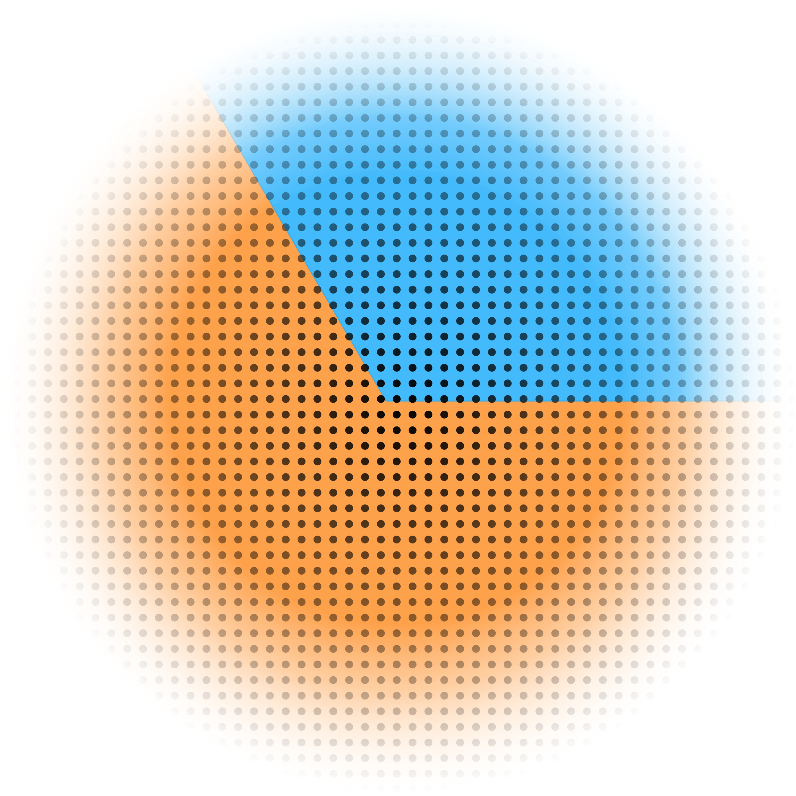
    \caption{Bipartition of an infinite quantum spin system into a cone and its complement.}
    \label{fig:many-body}
\end{figure}

Let us now discuss entanglement theory in the sector $\H$.
To do this, we fix an $N$-partition $\Gamma = A_1\cup \ldots A_N$ of the lattice, which gives us an $N$-tuple of pairwise commuting von Neumann algebras on $\H$.
We consider a pair of normal states $\psi,\phi\in\nstates(\H)$.
By \cref{lem:many-body}, the von Neumann algebraic definition of 
\begin{equation}\label{eq:many-body locc}
    \psi\locc\phi
\end{equation} 
in the sense of \cref{sec:entanglement-basics} is physically meaningful.
It encodes that $\psi$ can be reached from $\phi$ with classical operations and local operations of approximately finite support.
In this sense, the entanglement theory in the ground state is fully captured by the von Neumann algebras in \eqref{eq:vNa in many-body}.

% \begin{lemma}
%     Let $(A_x)_{x=1}^N$ be an $N$-partition of $\Gamma$ and let $\psi,\phi\in\nstates(\H)$. Then the following are equivalent:
%     \begin{enumerate}[(a)]
%         \item $\psi \barlocc\phi$ in the sense of \cref{sec:entanglement-basics} with respect to the von Neumann algebras $(\M_{A_x})_{x=1}^N$;
%         \item for every $\eps>0$, there exists an LOCC protocol with overall instrument $(T_x)$ that has finite support and that sends $\psi$ $\eps$-close to $\phi$ in the sense that $\norm{\psi'_x-\phi}<\eps$, where $\psi'_x$ denote the post measurement states of the instrument $(T_x)$ on the initial state $\psi$. 
%     \end{enumerate} 
% \end{lemma}

We conclude with a brief summary of what is known about Haag duality and the type of the factors $\M_A$, $A\subset \Gamma$, in some concrete models.
% Moreover, we discuss which consequences this has for the entanglement theory in the ground state sector.
%
\paragraph{\it Spin chains.}
For spin chains, the lattice is given by $\Gamma=\ZZ$, and we pick the bipartition $R=\NN$, $L=\ZZ\setminus\NN$.
We begin by considering the ground state sector of translation-invariant Hamiltonians with a uniform energy gap.
Building on the work of Hastings \cite{hastings_area_2007}, Matsui showed in \cite{matsui_boundedness_2013} that the factors $\M_L$ and $\M_R$ are of type $\I$.
In particular, this shows that Haag duality holds, i.e., $\M_L=\M_R'$.
We now turn to general, non-gapped, translation-invariant spin chains.
In \cite{keyl_entanglement_2006}, it was conjectured that the only possible types for $\M_L$ and $\M_R$ are type $\I$ and type $\III_1$. 
So far, no counterexamples have been found and no models violating Haag duality for the left-right bipartition are known.
The critical XX model, and the transverse-field Ising models are known to have type $\III_1$ factors $\M_{L/R}$ and to satisfy Haag duality (see \cite{keyl_entanglement_2006} and \cite{van_luijk_critical_2025}, respectively).
In both cases, this was shown by using a Jordan-Wigner isomorphism to reduce the problem to a problem about free fermion chains (see the next paragraph).
% No translation-invariant pure state on a spin chain is known where Haag duality fails.
% A claimed proof of Haag duality for general translation-invariant ground states in \cite{keyl_haag_2008} turned out to be false, as noted by two of the authors in \cite{matsui_boundedness_2013,van_luijk_pure_2024}.

\paragraph{\it Critical fermion chains.}
Following \cite{van_luijk_critical_2025}, we consider the ground state sector of a 1D fermion chain with a quadratic translation-invariant Hamiltonian
\begin{equation}
    H = \sum_{x,y} h(x-y) \,a_x^\dagger a_y.
\end{equation}
We assume here that the interactions are not too long-ranged in the sense that $\sum_{n\in\ZZ} h(n)<\oo$, and assume that the model is critical in the sense that the energy band $\hat h(k)$, i.e., the Fourier transform $\hat h(k)=\sum_n e^{ink}h(n)$, takes on both signs.
Moreover, for simplicity, let us assume that the model has a unique ground state.
Then the fermionic observable algebras $\M_L$ and $\M_R$ associated with a left-right bipartition of the chain are of type $\III_1$ and satisfy Haag duality (see \cite{van_luijk_critical_2025} for details).

\paragraph{\it Gapped systems in 2D.}

We consider ground state sectors of translation invariant gapped Hamiltonians on a 2D lattice $\Gamma = \ZZ^2$.
For simplicity, we assume the ground state to be unique.
In the following, we list some known type properties and how they depend on topological order \cite{zeng_introduction_2019}, which are compiled from the works \cite{naaijkensAnyonsInfiniteQuantum2012,fiedlerHaagDualityKitaevs2015,naaijkensQuantumSpinSystems2017,ogata_type_2024,naaijkens_split_2022,jones_local_2025,chuah_boundary_2024,bhardwaj_superselection_2025,van_luijk_large-scale_2025}.
% If the ground state is unique, and a region $A$ contains balls of arbitrary diameter, then the corresponding factor $\M_A$ is properly infinite, i.e., of type $\I_\oo$, $\II_\oo$, or $\III$ (this follows from an argument of \cite{naaijkensAnyonsInfiniteQuantum2012,keyl_entanglement_2006}, see \cite[App.~D]{van_luijk_large-scale_2025}).
% Moreover, it is known that any of these types can occur:
% In \cite{van_luijk_large-scale_2025}, a family of models with trivial topological order is constructed such that the factors $\M_A$ of all infinite regions $A\subset\Gamma$ are of type $\I_\oo$, $\II_\oo$, or $\III_\lambda$, $0\le \lambda\le 1$, depending on the parameters of the model.
% Moreover, Haag duality holds for arbitrary bipartitions in these models.

\begin{itemize}
    \item If a region $A$ is "properly infinite" in the sense that it contains balls of arbitrary diameter, then the corresponding factor $\M_A$ is properly infinite, i.e., of type $\I_\oo$, $\II_\oo$, or $\III$ (see \cite[App.~D]{van_luijk_large-scale_2025}, which uses an argument from \cite{naaijkensAnyonsInfiniteQuantum2012,keyl_entanglement_2006}).

    \item In topologically trivial systems, all types that were not excluded in the previous item occur:
    In \cite{van_luijk_large-scale_2025}, a family of models is constructed in which the factors $\M_A$ of all infinite regions $A\subset\Gamma$ have the same type, which can be chosen to be $\I_\oo$, $\II_\oo$, or $\III_\lambda$ with $0<\lambda\le1$ by adjusting the parameters of the model. Moreover, these models satisfy Haag duality in arbitrary bipartitions.

    \item In topologically ordered systems, the factors $\M_C$ associated with cones are not of type $\I$ \cite{naaijkens_split_2022}.

    \item In Kitaev's quantum double model for abelian groups, the factors associated with cones have been shown to be of $\II_\oo$ \cite{ogata_type_2024}.
    For cones, Haag duality is shown in \cite{fiedlerHaagDualityKitaevs2015}.
    For regions $A=C_1\cup C_2$ that are unions of distanced cones, Haag duality is known to fail. 
    In this case, the failure can be understood explicitly: Ribbon operators creating pairs of anyonic excitations in the respective cones are unitaries that commute with $\M_{A^c}$ but are not elements of $\M_{C_1}\vee \M_{C_2}$ \cite{fiedler_jones_2017}.

    \item In Levin-Wen models \cite{levinStringnetCondensationPhysical2005}, the observable algebras associated with cones are of type $\II_\oo$ if the quantum dimensions of the simple objects of the unitary fusion category are all $1$ \cite{jones_local_2025}.
    In all other cases, they are of type $\III$ and the subtypes can be computed from the fusion rules of the unitary fusion category (a proof due to Izumi can be found in the appendix of \cite{jones_local_2025}).
    Haag duality for cones is shown in \cite{ogata_haag_2025}.
\end{itemize}

\subsubsection{Relativistic quantum fields}\label{sec:qft}

In algebraic quantum field theory~\cite{haag_local_1996}, a fixed sector of a quantum field theory on a spacetime $\mathbb{M}$ is described by a net
\begin{equation*}
    \O \mapsto \A(\O)
\end{equation*}
of von Neumann algebras $\A(\O)$, defined for open regions $\O\subset \mathbb{M}$, on the sector's Hilbert space $\H$.
The von Neumann algebras $\A(\O)$ describe the collection of smeared quantum fields in $\O$. 
The net $\O\mapsto \A(\O)$ has to satisfy certain axioms which reflect physical properties.
For instance, the \emph{causality} axiom states that $\A(\O_1)$ and $\A(\O_2)$ commute whenever $\O_1$ and $\O_2$ are causally separated.
The axiom of \emph{isotony} states that observables in a smaller region $\O_1\subset\O_2$ are, in particular, also observables of the larger region: $\A(\O_1)\subset\A(\O_2)$.
% In case of time-translation symmetry (a timelike Killing vector field), it is also assumed that the Hilbert space $\H$ has a specified vacuum vector $\Omega$.
In general, there are a variety of further axioms, depending on the context.
We briefly discuss the most common ones:
\emph{Causal completeness} states that the observable algebras $\A(\O)$ and $\A(\O'')$ of a region $\O$ and its causal closure $\O''$ are equal ($\O'$ denotes the causal complement and, hence, $\O''$ is the causal closure).
\emph{Additivity} states that the observable algebra $\A(\O_1\cup\O_2)$ of a union is $\A(\O_1)\vee\A(\O_2)$.
\emph{Haag duality} states that the observable algebra $\A(\O')$ of the causal complement of a region $\O$ is given by the commutant of $\A(\O)$:
\begin{equation}\label{eq:HD in qft}
    \A(\O)'=\A(\O').
\end{equation}
Note that Haag duality implies both causality and causal completeness. 
These axioms (and a few more that we do not discuss here) imply that the von Neumann algebras $\A(\O)$ are AFD factors of type $\III_1$ factors~\cite{buchholz_universal_1987,baumgaertel1995oam,yngvason_role_2005}.

Due to the causality axiom, a pair $\O_A,\O_B\subset \mathbb{M}$ of causally separated regions gives rise to a pair of commuting von Neumann algebras $\M_A=\A(\O_A)$, $\M_B=\A(\O_B)$ on $\H$.
If the Haag duality axiom holds and if the regions $\O_A$ and $\O_B$ are causal complements of each other, i.e., $\O_A=\O_B'$ (see \cref{fig:qft}), then the von Neumann algebraic bipartite system $(\M_A,\M_B,\H)$ satisfies Haag duality.

\begin{figure}[ht!]
    \centering
    \def\svgwidth{.28\textwidth}
    %% Creator: Inkscape 1.4 (e7c3feb1, 2024-10-09), www.inkscape.org
%% PDF/EPS/PS + LaTeX output extension by Johan Engelen, 2010
%% Accompanies image file '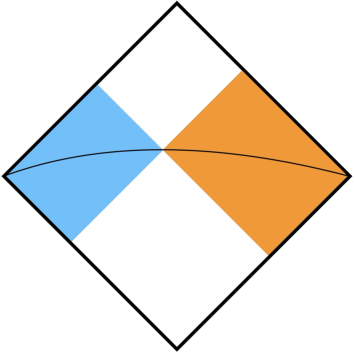' (pdf, eps, ps)
%%
%% To include the image in your LaTeX document, write
%%   \input{<filename>.pdf_tex}
%%  instead of
%%   \includegraphics{<filename>.pdf}
%% To scale the image, write
%%   \def\svgwidth{<desired width>}
%%   \input{<filename>.pdf_tex}
%%  instead of
%%   \includegraphics[width=<desired width>]{<filename>.pdf}
%%
%% Images with a different path to the parent latex file can
%% be accessed with the `import' package (which may need to be
%% installed) using
%%   \usepackage{import}
%% in the preamble, and then including the image with
%%   \import{<path to file>}{<filename>.pdf_tex}
%% Alternatively, one can specify
%%   \graphicspath{{<path to file>/}}
%% 
%% For more information, please see info/svg-inkscape on CTAN:
%%   http://tug.ctan.org/tex-archive/info/svg-inkscape
%%
\begingroup%
  \makeatletter%
  \providecommand\color[2][]{%
    \errmessage{(Inkscape) Color is used for the text in Inkscape, but the package 'color.sty' is not loaded}%
    \renewcommand\color[2][]{}%
  }%
  \providecommand\transparent[1]{%
    \errmessage{(Inkscape) Transparency is used (non-zero) for the text in Inkscape, but the package 'transparent.sty' is not loaded}%
    \renewcommand\transparent[1]{}%
  }%
  \providecommand\rotatebox[2]{#2}%
  \newcommand*\fsize{\dimexpr\f@size pt\relax}%
  \newcommand*\lineheight[1]{\fontsize{\fsize}{#1\fsize}\selectfont}%
  \ifx\svgwidth\undefined%
    \setlength{\unitlength}{170.07874016bp}%
    \ifx\svgscale\undefined%
      \relax%
    \else%
      \setlength{\unitlength}{\unitlength * \real{\svgscale}}%
    \fi%
  \else%
    \setlength{\unitlength}{\svgwidth}%
  \fi%
  \global\let\svgwidth\undefined%
  \global\let\svgscale\undefined%
  \makeatother%
  \begin{picture}(1,1)%
    \lineheight{1}%
    \setlength\tabcolsep{0pt}%
    \put(0,0){\includegraphics[width=\unitlength]{qft.eps}}%
    \put(0.19748938,0.60375124){\color[rgb]{0.97647059,0.97647059,0.97647059}\makebox(0,0)[lt]{\lineheight{1.25}\smash{\begin{tabular}[t]{l}$\mathcal{O}_A$\end{tabular}}}}%
    \put(0.62835353,0.62356747){\color[rgb]{0.97647059,0.97647059,0.97647059}\makebox(0,0)[lt]{\lineheight{1.25}\smash{\begin{tabular}[t]{l}$\mathcal{O}_B$\end{tabular}}}}%
  \end{picture}%
\endgroup%

    \caption{Penrose diagram of Minkowski spacetime with one spatial dimension, partitioned into a pair $\O_A,\O_B$ of causally separated regions with $\O_A=\O_B'$.}
    \label{fig:qft}
\end{figure}

Let us now consider the simple example of the vacuum sector of a free scalar field theory of mass $m$ on the Minkowski spacetime $\RR^{1,3}$.
The Lorentz group is implemented by a unitary representation $g\mapsto U_g$ with respect to which the local von Neumann algebras are covariant $\A(g\O)=U_g\A(\O) U_g^*$.
We now consider a region $\O$ in $\RR^{1,3}$ together with a Lorentz transformation $g$ that leaves $\O$ invariant, e.g., $g$ may be a spatial rotation and $\O$ a causal diamond over a rotation-symmetric base, or a Lorentz boost and $\O=W$ a suitable Rindler wedge with $gW=W$. 
Invariance of $\O$ under $g$ then implies $\A(\O)=U_g\A(\O)U_g^*$.
Thus, the restriction of $\Ad(U_g)$ defines an automorphism on $\M=\A(\O)$, which is, in general, non-inner.%
\footnote{As an example, consider as the region $\O$ a Rindler wedge $W$. Then the modular automorphism group of the vacuum state is known to coincide with a one-parameter group of Lorentz boosts $g_t$ with $g_tW=W$ \cite{bisognano_duality_1975}. Since the factors are type $\III_1$, the modular flow (= the Lorentz boost) is not inner \cite{takesaki2}.}
In the language of \cref{sec:intrinsic}, this provides examples of outer quantum channel that are not inner.

We caution the reader that the QFT setting described above does not fit perfectly into the framework developed in this chapter, in particular, \cref{sec:agents}.
Although the von Neumann algebras $\A(\O)$ are interpreted as the observable algebras of the fields localized in the region $\O$, we cannot assume that all $\A(\O)$-inner operations are physically implementable by an agent localized in $\O$.
This assumption necessarily leads to violations of causality as pointed out in \cite{Sorkin1993-SORIMO} (see also \cite{much_superluminal_2023}).
For instance, it implies that the outcomes of a local measurement depend on the application of a unitary in a causally separated region \cite{papageorgiou_eliminating_2024,fewster_measurement_2025}.
We refer to the works \cite{fewster2020local_measurements,fewster_measurement_2025,bostelmann_impossible_2021} for a recently proposed solution to the problem of finding a mathematical description of local operations and measurements.
Roughly speaking, their idea is to define local operations as those that can be constructed by introducing another field theory (the "probe"), allowing a localized interaction, followed by a measurement of the probe.
Note that this follows the same philosophy as our definition of implementable instruments via unitary dilations in \cref{sec:agents}, but gives a constructive description of the allowed unitaries $u$, which is in accordance with principles of relativistic field theory.
Nonetheless, the assumption that all $\A(\O)$-inner operations are implementable is often made (sometimes implicitly) \cite{haag_local_1996,summers_vacuum_1985,summers_independence_1990,summers_maximal_1988,summers_maximal_1987,summers_bells_1995,verch_distillability_2005,van_luijk_relativistic_2024,hollands_entanglement_2018}. In this case, the "operational" statements that follow should be taken with a grain of salt.
In addition to the above, there is the problem that classical communication, which the notion of LOCC and, hence, our operational understanding of entanglement, is built upon, for causally separated systems does not make any sense physically (see \cite{verch_distillability_2005} for a nice discussion of this point).

None of the problems above apply to non-relativistic quantum field theories, where the notion of localization is purely spatial (as opposed to the spatio-temporal localization in relativistic field theory).
As an example, consider conformal field theories on the circle, which describe critical many-body systems in the thermodynamic systems.

\subsubsection{Infinitely many entangled pure states}\label{sec:idealized}

% \paragraph{\it Infinitely many entangled states.}

We consider a setting in which a finite number of agents $A_1,\ldots A_N$ share an infinite number of finite-dimensional $N$-partite systems $\H_j = \H_{1|j}\ox\ldots \H_{N|j}$ in pure states 
\begin{equation}
    \Psi_j \in \H_j, \qquad j\in \NN.
\end{equation}
The full system is an $N$-partite system with infinitely many degrees of freedom, see \cref{fig:itpfi} for a visualization with $N=3$. 
We assume that agents can only act on finitely many of the infinite $N$-partite systems.
This assumption allows for a description of the full system in terms of a Hilbert space, namely the restricted infinite tensor product of the Hilbert spaces $\H_j$ relative to the sequence $\Psi_j$ \cite[Sec.~XIV.1]{takesaki3}:
\begin{equation}
    \H = \bigotimes_{j\in\NN}\, (\H_j ; \Psi_j).
\end{equation}
This Hilbert space contains the vector $\Psi= \otimes_{j\in\NN} \Psi_j$. In fact, it is the subspace of the unrestricted infinite tensor product \cite{von_Neumann_1939_infinite} that can be approximated by finite perturbations of $\Psi$.
Here, a vector $\Phi$ is a finite perturbation of $\Psi$ if $\Phi=a\Psi$ for an operator $a$ that acts trivially on all but finitely many of the multipartite systems.
With each agent $A_x$, we associate an AFD factor $\M_x$ on $\H$ via
\begin{equation}
    \M_x = \bigotimes_{j\in\NN}\, \big( \M_{x|j},\Psi_j),
\end{equation}
where $\M_{x|j}=1^{\ox j-1}\ox \B(\H_{x|j})\ox 1^{\ox N-j}$ is the observable algebra of the agent $A_x$ on the $j$th $N$-partite system (see \cite[Sec.~XIV.1]{takesaki3}).
More generally, for every collection of agents $\prod_{x\in I} A_x$, $I \subset \{1,\ldots N\}$, 
\begin{equation}
    \M_{I} = \bigvee_{x\in I}\M_{x|j} = \bigotimes_{j\in \NN} \,(\M_{I|j},\Psi_j),
\end{equation}
defines an AFD factor on $\H$, where $\M_{I|j}= \vee_{x \in I} \M_{x|j}\subset \B(\H_j)$.
Clearly, we have $\M_{\{x\}}=\M_x$, $\M_I\subset \M_J$ if $I\subset J$, and $\M_{\{1,\ldots,\}}=\B(\H)$.
The commutation theorem for infinite tensor products \cite[Thm.~XIV.1.9]{takesaki3} implies that Haag duality holds in any bipartition
\begin{equation}
    \M_I' = \M_{I^c}, \qquad I \subseteq \{1,\ldots,N\}.
\end{equation}

\begin{figure}[ht!]
    \centering
    \def\svgwidth{.6\textwidth}
    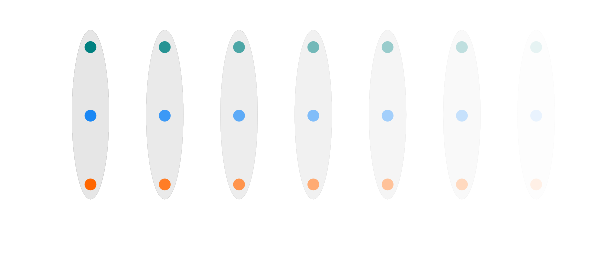
    \caption{Three parties sharing infinitely many entangled states.}
    \label{fig:itpfi}
\end{figure}

Let us consider the class of implementable operations.
As said above, we assume that the agents can only act on finitely many systems.
By \cref{lem:approx-inner}, the operations that can be implemented approximately by the agent are precisely the $\M_x$-inner ones.

The factors that we constructed above are so-called ITPFI (infinite tensor product of finite type $\I$) factors, and we may refer to $N$-partite systems constructed above as \emph{ITPFI multipartite systems}.
These form a mathematically well-understood pool of nontrivial examples, in particular, for the bipartite case where $N=2$.
Indeed, if $N=2$, the algebraic properties of $\M_1$, $\M_2$ are completely understood thanks to the work of Araki and Woods \cite{araki_classification_1968,araki_complete_1966}.
Let us consider a well-known example. If $\H_j = (\CC^2)^{\ox 2}$ and 
\begin{equation}
    \Psi_j = (1+\lambda)^{-1/2}\big(\lambda^{1/2} \ket0\ket0 +\ket1\ket1\big), \quad j\in\NN,
\end{equation}
then $\M_1$ and $\M_2=\M_1'$ are factors in standard representation.
They are of type $\I$ if $\lambda=0$, of type $\III_\lambda$ if $0<\lambda<1$, of type $\II_1$ if $\lambda=1$ and of type $\III_{\lambda^{-1}}$ if $\lambda>1$.
Even in the case of more than two parties, ITPFI systems will be very useful as they allow us to construct multipartite embezzling states (see \cref{sec:strong-entanglement}).

\subsection{Notes}\label{sec:notes-vNQI}

In the following, we briefly review some aspects that have to be mentioned, but that did not receive a detailed exposition due to the constraints of this work.
There is a vast literature that is concerned with properties of von Neumann algebras inspired by information theory.
% In many cases, these are purely mathematical properties that lack a clear operational interpretation (and there is, of course, nothing wrong with that).
% Since we cannot give a full review of the subject, it Unavoidably, we will fail to mention many interesting works.
In the following, we stick to a selection of cases that have a direct operational interpretation.
Due to the extensive literature on the subject, this list remains incomplete.

\paragraph{\it Bell's inequalities.}
Arguably, the first work in von Neumann algebraic quantum information theory is Summers and Werner's 1985 paper \cite{summers_vacuum_1985} on the topic of Bell's inequalities \cite{bell_einstein_1964} in quantum field theory.
A series of papers \cite{summers_vacuum_1985,summers_maximal_1987,summers_bells_1987,summers_bells_1995} followed in which they relate the maximal violation of Bell's inequalities to algebraic properties of the observable algebras in quantum field theory (see \cref{sec:qft}).\footnote{In passing, they discovered self-testing (see, e.g., \cite[Lem.~2.2]{summers_maximal_1988}), which was later rediscovered in finite-dimensional quantum information theory \cite{supic_self-testing_2020}.}
In their works, Summers and Werner assumed that all measurements in the local observable algebras are, in principle, physically implementable.
While we know today that this assumption leads to contradictions with causality (see the last paragraph of \cref{sec:qft}), this assumption does place their work in the context of von Neumann algebraic quantum information theory.
Their results characterize the AFD type $\II_1$ factor and its commutant in standard representation as the universal resource for Bell nonlocality, see \cite[Thm.~2.5]{summers_maximal_1988}.

\paragraph{\it Relative entropy and quantum divergences.}

In \cref{sec:fidelity}, we considered the norm distance and the fidelity, which operationally quantify the distance of states.
Both of these are symmetric, i.e., the distance from $\psi$ to $\phi$ is the same as the distance from $\phi$ to $\psi$.
However, as in classical information theory, asymmetric distance measures become necessary in hypothesis testing, where false negatives can be much worse than false positives \cite{wang_resource_2019}.
By far the most popular among these distance measures is the quantum relative entropy.
There is, however, a whole zoo of so-called quantum divergences, with operational meanings in various settings, e.g., quantum cryptography, \cite{wang_resource_2019}.
Many of these have been extended to (or were originally defined in) the von Neumann algebraic setting \cite{hiai_quantum_2021,berta_renyi_2018,berta_smooth_2015,furrer_position-momentum_2014,hiai_alpha-z-renyi_2024,ohya_quantum_1993}.
We especially recommend the monograph \cite{hiai_quantum_2021} of Hiai, which, among other things, includes a nice description of the theory of recovery maps.
Applications to explicit quantum information theoretic problems with systems described by von Neumann algebras have been considered, e.g., in \cite{berta_smooth_2015,naaijkens_subfactors_2018,berta_renyi_2018,ohya_quantum_1993}.
Let us also mention \cite{fawzi_asymptotic_2025}, where the asymptotic equipartition theorem is generalized to the von Neumann algebraic setting.

\paragraph{\it Relative entropy of entanglement.} 
The relative entropy of entanglement is a standard entanglement monotone for mixed states.
It is defined as the smallest relative entropy between a given state and the set of separable, i.e., classically correlated, states.
The relative entropy of entanglement has been considered and estimated in the context of von Neumann algebraic systems arising in quantum field theory in \cite{hollands_entanglement_2018}, where explicit estimates for the entanglement entropies in concrete models can be found.
However, it only makes sense in bipartite systems that admit normal product states, which requires the split property to hold (see \cref{lem:no product states}).
This rules out the relative entropy of entanglement as a sensible entanglement measure in the settings we are interested in, where the split property generally fails (see \cref{sec:entanglement-basics}).

\paragraph{\it Tsirelson's problem.}

Tsirelson's problem \cite{scholz_tsirelsons_2008} is a deep problem about correlation experiments, which was solved in \cite{ji_mipre_2022}.
It has sparked lots of research, especially on the intersection of operator algebras and quantum information theory in the last 20 years (see, e.g., \cite{scholz_tsirelsons_2008,junge_connes_2011,ji_mipre_2022,van_luijk_schmidt_2024,slofstra_set_2019,vidick_operator_2019,slofstra_tsirelsons_2020,liu_non-local_2022,fritz_tsirelsons_2012,ozawa_tsirelsons_2013}).
The setup is that of a bipartite system in which Alice and Bob are given access to a collection of measurements with a common outcome space $X$. 
We label Alice's measurements by $i$ and Bob's measurements by $j$.
Depending on the state of the full system and the respective choices of measurements, they observe a probability distribution $p(x,y|i,j)$ on the joint outcomes $(x,y)$.
Tsirelson's problem asks whether the correlation functions of finite-dimensional bipartite systems can approximate the correlation functions in the "commuting operator framework", which allows Alice and Bob access to pairwise commuting observables on a joint, possibly infinite-dimensional, Hilbert space.
In \cite{ji_mipre_2022}, this was settled in the negative.
Nonetheless, an explicit counterexample is not known.
% Moreover, there is no candidate for a physical model in which counterexamples could appear naturally.

Given the above and the setup of von Neumann algebraic quantum information theory, the natural question is which algebraic properties a pair of commuting von Neumann algebras $(\M_A,\M_B)$ on a Hilbert space $\H$ must have, such that it allows for counterexamples to Tsirelson's problem.
It is known that counterexamples cannot be found if $\M_A$ and $\M_B$ are AFD \cite{scholz_tsirelsons_2008} (see \cite[Prop.~52]{van_luijk_schmidt_2024} for a detailed proof).
In fact, counterexamples can only be found if $\M_A$ and $\M_B$ are not Connes embeddable \cite{junge_connes_2011}.

\paragraph{\it Subfactors and index.}

A subfactor is an inclusion of factors $\N\subset \M$ \cite{jones_index_1983}.
In his seminal paper \cite{jones_index_1983}, Jones defined an index $[\M:\N]$ for type $\II_1$ subfactor inclusions, which Kosaki and Longo generalized to subfactors of arbitrary type \cite{kosaki_extension_1986,longo_index_1989}.
A subfactor is irreducible if $\N'\cap\M=\CC1$.
For non-irreducible subfactors, the index depends on the choice of a conditional expectation $\M\to\N$.
Roughly speaking, this means an outer quantum channel that describes how to extend states on $\N$ to states on $\M$.
For subfactors of type $\II_1$, there is a canonical conditional expectation, and for irreducible subfactors, there is a unique one (if it exists).
The theory of subfactors and their index developed into one of the main research areas in operator algebras in the 1990s (see \cite{jones_introduction_1997} and the references therein).
In von Neumann algebraic quantum information theory, subfactors, or more generally, inclusions of von Neumann algebras, occur when we consider two agents with partial access to a full quantum system such that one agent is strictly more powerful than the other one (see \eqref{eq:inclusion of observable algs}).
The index can then be shown to be an operational measure of how much more the powerful agent can do \cite{naaijkens_subfactors_2018}. 
In \cite{fiedler_jones_2017} this idea was considered in the context of secret sharing, where the more powerful agent tries to hide information from the less powerful one.
In \cite{gao_relative_2020}, the index is characterized using relative entropies and other quantum divergences between states on $\N$ and $\M$ with the help of a conditional expectation, although this work does not apply to the type $\III$ case.
The index has also appeared in work on quantum teleportation and dense coding in \cite{conlon_quantum_2023} (see also \cite{huang_dense_2019}).

We mention an open problem, also discussed in \cite{van_luijk_schmidt_2024}:
If $(\M_A,\M_B)$ is an irreducible pair of commuting factors on a Hilbert space $\H$ (see \cref{lem:irreducible}), does the index of the irreducible subfactor $\M_A\subset\M_B'$ have an operational significance in entanglement theory?
We can answer this affirmatively for the case where the index is trivial, i.e., $[\M_B':\M_A]=1$.
Indeed, a trivial index is the same as Haag duality $\M_A=\M_B'$, which we show to have an operational interpretation in \cref{sec:haag}.

\paragraph{\it Sufficiency.}
Further following the idea that an inclusion $\N\subset\M$ of von Neumann algebras corresponds to a setting with two agents, where one agent has access to more operations than the other one, it makes sense to ask whether the weak agent has access to sufficiently many observables in order to distinguish a given pair $(\omega_0,\omega_2)$ of states as good as the strong agent.
This is precisely the question, whether the weaker agent's observable algebra $\N$ is sufficient in the sense(s) of Umegaki, Petz, Jencova and others (see \cite{ohya_quantum_1993,hiai_quantum_2021,jencova_sufficiency_2006,petz_sufficiency_1988,kuramochi_minimal_2017,luczak_quantum_2014,luczak_aspects_2021} and the references therein).
There is a unique minimal sufficient subalgebra. Petz showed that this is precisely the von Neumann algebra generated by the Connes cocycles $(D\omega_0:D\omega_1)_t$, $t\in\RR$.
Instead of asking for the sufficiency of subalgebras, one can more generally ask for the sufficiency of a quantum channel, meaning that the channel is reversible relative to the given states.
We refer to the references above for details.
In the case of abelian von Neumann algebras, these notions of sufficiency reduces to the notions of sufficient statistics in classical probability theory \cite{jencova_sufficiency_2006}.

\section{Catalytic states on von Neumann algebras}\label{sec:catalytic states}

\localtableofcontents

\null

In this section, we consider approximate unitary equivalence of states.
We define a notion of catalytic states on von Neumann algebras, and we see in \cref{sec:strong-entanglement} that catalytic states are precisely the marginal states of embezzling states on bipartite quantum systems.
Many of our results on embezzlement of entanglement will be proved by reducing them to statements about catalytic states.
The results in this section are based on \cite{van_luijk_embezzlement_2024} and rely on insights from \cite{haagerup_equivalence_1990,haagerup_classification_2007,mbz_draft}.

\null

Approximate unitary equivalence of a pair of normal states $\omega_1,\omega_2$ on a von Neumann algebra $\N$ is defined as
\begin{equation}
    \omega_1\sim\omega_2 \quad :\iff\quad \inf_{u\in\U(\N)}\ \norm{\omega_1-u\omega_2 u^*}=0.
\end{equation}
If $\N$ is an observable algebra of a quantum system, approximate unitary equivalence means that, up to an arbitrarily small error, the given states can be transformed into each other with invertible quantum channels.
In finite type $\I$ factors, approximate unitary equivalence implies exact unitary equivalence. This is in general false, even for factors of type $\I_\oo$.

To decide approximate unitary equivalence, we need \emph{spectral} information of the two states.
Indeed, in type $\I$ factors, a pair of states is approximately unitarily equivalent if and only if they have the same spectrum (with equal multiplicities).
This generalizes to general von Neumann algebras if we replace the spectrum by the Haagerup-Størmer spectral state (see \cref{sec:spectral states}).
% If $\N$ is a type $\I$ factor, then a pair of states $\omega_1,\omega_2$ on $\N$ is approximately unitarily equivalent if and only if $\omega_1$ and $\omega_2$ have the same spectrum (with equal multiplicities).
% By \cref{thm:majorization}, this generalizes to semifinite factors if we replace the notion of spectrum with that of spectral scales (relative to an arbitrary choice of scaling of the trace).
% \Cref{thm:distance spectral states} generalizes this statement even further:
% On an arbitrary von Neumann algebra $\N$, states $\omega_1,\omega_2$ are approximately unitarily equivalent if and only if they have the same spectral functional: $\hat\omega_1=\hat\omega_2$.
We say that a nonzero positive linear functional $\psi$ on a von Neumann algebra $\M$ is a \emph{catalyst} for a pair of normal states $\omega_1,\omega_2$ on a von Neumann algebra $\N$ if $\psi\ox\omega_1\sim\psi\ox\omega_2$ but $\omega_1\not\sim\omega_2$.
We refer to \cite{lipka-bartosik_catalysis_2024} for a recent review of catalysis in quantum information theory (where catalysis is considered in a variety of resource theories).
In the language of \cite{lipka-bartosik_catalysis_2024}, we could refer to the above notion of catalysis also as "embezzlement of (im)purity".

\begin{definition}\label{def:universal catalyst}
    A normal state $\psi$ on a von Neumann algebra $\M$ is \emph{universally catalytic} if 
    \begin{equation}\label{eq:universal catalyst}
        \psi\ox\omega_1 \sim\psi\ox\omega_2
    \end{equation}
    for all pairs of normal states $\omega_1,\omega_2$ on a type $\I_\oo$ factor.
\end{definition}

We will see later that universally catalytic states satisfy \eqref{eq:universal catalyst} for all pairs of normal states on an arbitrary AFD factor, even though the definition only asks for type $\I_\oo$ factors.
We quantify how close a given normal positive linear functional $\psi\in\M_*^+$ is to being universally catalytic by considering the worst-case error:
\begin{equation}\label{eq:def kappa}
    \kappa(\psi):= \adjustlimits \sup_{\omega_1,\omega_2} \inf_{u} \norm{\psi\ox \omega_1 - u(\psi\ox\omega_2)u^*},
\end{equation}
where the optimizations are over normal states on a type $\I_\oo$ factor $\B(\K)$ and unitaries in $\M\barox\B(\K)$, respectively.

The main theorem of this section states that $\kappa(\psi)$ measures the deviation of the Haagerup-Størmer spectral functional $\hat\psi$ from being invariant under the flow of weights:

\begin{theorem}\label{thm:kappa}
    Let $\psi$ be a normal positive linear functional on a von Neumann algebra $\M$. Then
    \begin{equation}
        \kappa(\psi) = \sup_{t>0} \,\norm{\hat\psi-\hat\psi\circ\theta_t}.
    \end{equation}
\end{theorem}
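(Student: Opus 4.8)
The plan is to evaluate the inner infimum in \eqref{eq:def kappa} with \cref{thm:distance spectral states} and then optimize over the ancilla states. Fix an infinite-dimensional $\K$ and set $\N=\M\barox\B(\K)$. Since $\B(\K)$ is semifinite, \cref{prop:semifinite amp fow} gives the core $c(\N)=c(\M)\ox(\B(\K),\id,\tr)$; in particular $\Tilde\N=\Tilde\M\barox\B(\K)$, the trace is $\tau\ox\tr$, the trace-scaling flow is $\tilde\theta\ox\id$, and the center is $Z(\Tilde\N)=Z(\Tilde\M)\barox\CC1\cong\A$. Hence the flow of weights of $\N$ is again $\theta:\RR\acts\A$, and \cref{thm:distance spectral states} applied to $\N$ yields
\begin{equation*}
    \inf_{u\in\U(\N)}\norm{\psi\ox\omega_1-u(\psi\ox\omega_2)u^*}=\norm{\widehat{\psi\ox\omega_1}-\widehat{\psi\ox\omega_2}},
\end{equation*}
where the spectral functionals are computed in $\N$ but live on the same $\A$ as $\hat\psi$.

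The technical heart of the argument is to express $\widehat{\psi\ox\omega}$ through $\hat\psi$. Because the flow $\tilde\theta\ox\id$ is trivial on the $\B(\K)$-leg, the dual weight of a product factorizes, $(\psi\ox\omega)^\sim=\tilde\psi\ox\omega$, so taking densities with respect to $\tau\ox\tr$ gives $h_{\psi\ox\omega}=h_\psi\ox\rho_\omega$ with $\rho_\omega=d\omega/d\tr$. Diagonalizing $\rho_\omega=\sum_n\mu_n\kettbra n$ and using $1_{(1,\oo)}(h_\psi\ox\rho_\omega)=\sum_n 1_{(\mu_n^{-1},\oo)}(h_\psi)\ox\kettbra n$ together with the identity $\widehat{c\psi}(a)=\tau(1_{(c^{-1},\oo)}(h_\psi)a)$ (valid since $h_{c\psi}=ch_\psi$), I obtain
\begin{equation*}
    \widehat{\psi\ox\omega}=\sum_n\widehat{\mu_n\psi}=\sum_n\mu_n\,\hat\psi\circ\theta_{\log\mu_n},
\end{equation*}
the last equality being \eqref{eq:flow of the spectral state} in the form $\widehat{c\psi}=c\,\hat\psi\circ\theta_{\log c}$. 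Thus $\widehat{\psi\ox\omega}$ is the convex combination, with weights $\mu_n$, of the flow-translates $\hat\psi\circ\theta_{\log\mu_n}$ of $\hat\psi$, all indices satisfying $\log\mu_n\le0$. Verifying the factorization $h_{\psi\ox\omega}=h_\psi\ox\rho_\omega$ cleanly through the core is the step requiring the most care; the rest is then formal.

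With this formula the upper bound is immediate. Writing $p_n=\hat\psi\circ\theta_{\log\mu_n^{(1)}}$ and $q_m=\hat\psi\circ\theta_{\log\mu_m^{(2)}}$, both families lying in $S=\{\hat\psi\circ\theta_s:s\le0\}$, and using $\sum_n\mu_n^{(1)}=\sum_m\mu_m^{(2)}=1$,
\begin{equation*}
    \widehat{\psi\ox\omega_1}-\widehat{\psi\ox\omega_2}=\sum_{n,m}\mu_n^{(1)}\mu_m^{(2)}\,(p_n-q_m),
\end{equation*}
so $\norm{\widehat{\psi\ox\omega_1}-\widehat{\psi\ox\omega_2}}\le\diam S$. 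Since each $\theta_s$ is a $*$-automorphism of $\A$, precomposition is a norm isometry of $\A_*$, whence $\diam S=\sup_{s,s'\le0}\norm{\hat\psi-\hat\psi\circ\theta_{s'-s}}=\sup_{t\in\RR}\norm{\hat\psi-\hat\psi\circ\theta_t}=\sup_{t>0}\norm{\hat\psi-\hat\psi\circ\theta_t}$, giving $\kappa(\psi)\le\sup_{t>0}\norm{\hat\psi-\hat\psi\circ\theta_t}$.

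For the matching lower bound I would take $\omega_1,\omega_2$ maximally mixed on subspaces of dimensions $N,M$. Then all eigenvalues equal $1/N$ (resp.\ $1/M$), and the formula collapses to $\widehat{\psi\ox\omega_1}=\hat\psi\circ\theta_{-\log N}$ and $\widehat{\psi\ox\omega_2}=\hat\psi\circ\theta_{-\log M}$; using the isometry once more,
\begin{equation*}
    \norm{\widehat{\psi\ox\omega_1}-\widehat{\psi\ox\omega_2}}=\norm{\hat\psi-\hat\psi\circ\theta_{\log(N/M)}}.
\end{equation*}
As $N,M$ range over $\NN$ the exponents $\log(N/M)$ are dense in $\RR$, and $t\mapsto\hat\psi\circ\theta_t$ is norm continuous (the predual action of the point-$\sigma$-weakly continuous flow $\theta$ is point-norm continuous). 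Hence $\kappa(\psi)\ge\sup_{N,M}\norm{\hat\psi-\hat\psi\circ\theta_{\log(N/M)}}=\sup_{t>0}\norm{\hat\psi-\hat\psi\circ\theta_t}$, and the two bounds coincide.
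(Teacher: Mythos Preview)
Your proof is correct and follows essentially the same strategy as the paper: identify the flow of weights of $\M\barox\B(\K)$ with that of $\M$ via \cref{prop:semifinite amp fow}, compute $\widehat{\psi\ox\omega}$ as a convex combination of flow-translates of $\hat\psi$ (the paper's \cref{prop:convolution formula}, specialized to type~$\I$), bound the difference by the orbit diameter for the upper bound, and use maximally mixed ancilla states together with the density of $\log\QQ^+$ in $\RR$ for the lower bound. The paper packages these steps slightly differently (proving a more general version, \cref{thm:kappa semifinite}, for arbitrary semifinite factors via \cref{lem:kappa upper bound,lem:kappa lower bound}), but the substance is the same.
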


We briefly sketch the proof, which will be given in detail in \cref{sec:proof of thm:kappa} below:
\Cref{thm:distance spectral states}, lets us write the RHS of \eqref{eq:def kappa} as $\sup_{\omega_1,\omega_2} \norm{(\psi\ox\omega_1)^\wedge-(\psi\ox\omega_2)^\wedge}$.
The flow of weights of $\M\barox \B(\K)$ can be identified with the flow of weights of $\M$ itself (see \cref{prop:semifinite amp fow}).
We will show that under this identification $(\psi\ox\omega)^\wedge$ becomes a smeared out version of $\hat\psi$ over its orbit:
\begin{equation}\label{eq:smearing}
    (\psi\ox\omega)^\wedge = \sum \lambda_i\, \hat\psi\circ\theta_{\log \lambda_i},
\end{equation}
where $(\lambda_i)$ are the eigenvalues of the density operator $\rho$ associated with $\omega$.
This readily implies that $\kappa(\psi)\le \sup_{t>0}\,\norm{\hat\psi-\hat\psi\circ\theta_t}$:
\begin{equation*}
    \norm{(\psi\ox\omega_1)^\wedge-(\psi\ox\omega_2)^\wedge} = \Big\|\sum \lambda_i\lambda_j (\hat\psi\circ\theta_{\log\lambda_i}-\hat\psi\circ\theta_{\log\lambda_j})\Big\|
    \le \sup_{t>0}\  \norm{\hat\psi-\hat\psi\circ\theta_t}.
\end{equation*}
For the converse, consider states of the form $\omega_1={n}^{-1} \tr(p(\placeholder))$ and $\omega_2=m^{-1}\tr(q(\placeholder))$ for finite-dimensional projection $p,q$ on $\K$ of dimensions $n$ and $m$, respectively. Then 
\begin{equation*}
    \norm{(\psi\ox\omega_1)^\wedge-(\psi\ox\omega_2)^\wedge} = \norm{\hat\psi\circ\theta_{\log n}-\hat\phi\circ \theta_{\log m}} = \norm{\hat\psi -\hat\psi \circ\theta_{\log t}}, \qquad t=\frac{m}{n}.
\end{equation*}
Since $\log \QQ^+$ is dense in $\RR$, this implies the converse inequality $\kappa(\psi)\ge \sup_{t>0}\, \norm{\hat\psi-\hat\psi\circ\theta_{\log t}}$.

In particular, \cref{thm:kappa} tells us that 
\begin{center}\it
    $\psi$ is universally catalytic if and only if $\hat\psi$ is invariant under the flow of weights.
\end{center}
This generalizes \cite[Prop.~3.2]{haagerup_classification_2007} stating that a $\hat\psi$ is invariant if and only if $\psi\ox \bra0 \placeholder\ket0 \sim \psi \ox \frac1n \tr_n$, where $\tr_n$ denotes the standard trace on $M_n(\CC)$.
It is clear that, if $\psi$ is a universally catalytic state on a von Neumann $\M$ and $\phi\sim\psi$, then $\phi$ is also universally catalytic.
If $\M$ is a factor, the converse holds as well: 
% In the case that $\M$ is a factor, the flow of weights is ergodic and, therefore, admits at most one normal invariant state.
% Thus, \cref{thm:distance spectral states} shows that all universal catalysts are approximately unitarily equivalent:

\begin{corollary}\label{cor:catalysts are unitarily equivalent}
    Let $\M$ be a factor and let $\psi,\phi$ be normal states on $\M$.
    % If $\psi$ is a universal catalyst and $\psi\sim\phi$, then $\phi$ is a universal catalyst.
    If $\psi,\phi$ are universally catalytic, then $\psi\sim\phi$.
\end{corollary}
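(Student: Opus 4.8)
The plan is to reduce the claim to the equality $\hat\psi = \hat\phi$ of the Haagerup–Størmer spectral states and then invoke \cref{thm:distance spectral states}, which gives $\inf_{u\in\U(\M)}\norm{\psi-u\phi u^*} = \norm{\hat\psi-\hat\phi}$, so that $\hat\psi=\hat\phi$ is exactly the statement $\psi\sim\phi$. Since $\psi$ and $\phi$ are universally catalytic, i.e.\ $\kappa(\psi)=\kappa(\phi)=0$, \cref{thm:kappa} shows their spectral functionals are invariant under the flow of weights: $\hat\psi\circ\theta_t=\hat\psi$ and $\hat\phi\circ\theta_t=\hat\phi$ for all $t\in\RR$. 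Because $\M$ is a factor, \cref{thm:core} tells us the flow of weights $\theta:\RR\acts\A$ is ergodic, i.e.\ $\A^\theta=\CC1$; and by \cref{lem:normalization} both $\hat\psi$ and $\hat\phi$ are \emph{states} on the abelian algebra $\A$ (of norm $\psi(1)=\phi(1)=1$). Thus the entire problem reduces to showing that an ergodic flow on an abelian von Neumann algebra admits at most one invariant normal state.

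To establish this uniqueness, I would set $\chi:=\tfrac12(\hat\psi+\hat\phi)$, again a normal $\theta$-invariant state on $\A$, and first argue that $\chi$ is faithful: its support projection $\supp(\chi)\in\A$ is $\theta$-invariant (since $\chi$ is), hence lies in $\A^\theta=\CC1$ by ergodicity, and is nonzero because $\chi$ is a state, so $\supp(\chi)=1$. Working with the faithful state $\chi$, the domination $\hat\psi\le 2\chi$ yields, by the Radon–Nikodym theorem on the abelian algebra $\A$, a unique $g\in\A$ with $0\le g\le 2$ and $\hat\psi(a)=\chi(ga)$ for all $a\in\A$. Using $\hat\psi\circ\theta_t=\hat\psi$ and $\chi\circ\theta_t=\chi$ one computes, for all $a\in\A$ and $t\in\RR$,
\[
    \chi(\theta_t(g)\,a)=\chi\big(\theta_t(g\,\theta_{-t}(a))\big)=\chi(g\,\theta_{-t}(a))=\hat\psi(\theta_{-t}(a))=\hat\psi(a)=\chi(ga),
\]
and faithfulness of $\chi$ (applied with $a=\theta_t(g)-g$) forces $\theta_t(g)=g$. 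Hence $g\in\A^\theta=\CC1$ is a scalar, and comparing norms via $\norm{\hat\psi}=\norm{\chi}=1$ gives $g=1$, i.e.\ $\hat\psi=\chi$. The identical argument for $\phi$ gives $\hat\phi=\chi$, so $\hat\psi=\hat\phi$ and therefore $\psi\sim\phi$.

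The manipulation of support projections and the abelian Radon–Nikodym derivative are routine; the conceptual core, and the only place factoriality is used, is the passage from ergodicity to uniqueness of the invariant normal state. I expect the main subtlety to be the role of \emph{normality}: a properly ergodic flow—as occurs for type $\III_0$, where universally catalytic states can genuinely exist—may carry many mutually singular invariant measures, and it is precisely the absolute continuity encoded in normality, together with ergodicity forcing invariant densities to be constant, that singles out a unique one. In the degenerate regimes this collapses to familiar facts: for type $\III_1$ one has $\A=\CC$, making $\hat\psi=\hat\phi$ immediate, while for semifinite factors no invariant normal state exists and the statement is vacuous; the substantive cases are type $\III_\lambda$ with $0<\lambda<1$ and those type $\III_0$ factors that do admit universally catalytic states.
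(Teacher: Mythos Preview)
Your proof is correct and follows exactly the same route as the paper: reduce to uniqueness of the $\theta$-invariant normal state on the ergodic flow of weights, then apply \cref{thm:distance spectral states}. The only difference is that the paper simply asserts the uniqueness of the invariant normal state as a known fact, whereas you supply a clean self-contained argument for it via the Radon--Nikodym derivative against the average.
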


\Cref{cor:catalysts are unitarily equivalent} follows from the fact that an ergodic flow admits at most one invariant normal state. 
Indeed, this implies that all universally catalytic states have the same spectral state, which implies the claim by \cref{thm:distance spectral states}.

Next, we use optimize $\kappa$ over the normal state space in order to obtain algebraic invariants:
% In order to obtain algebraic invariants for the von Neumann algebra $\M$, we define algebraic invariants
\begin{equation}\label{eq:kappa max and min}
    \kappa_{\max}(\M) = \sup_{\psi\in \states_*(\M)}\, \kappa(\psi),
    \qquad \kappa_{\min}(\M) = \inf_{\psi\in\states_*(\M)}\, \kappa(\psi).
\end{equation}
% where the optimizations are over the normal states on $\M$.
In \cref{sec:ergodic flows}, we completely determine these algebraic invariants by examining the corresponding invariants for ergodic flows on abelian von Neumann algebras.
The results are summarized in the following:

\begin{theorem}\label{thm:value of kappa min}
    Let $\M$ be a von Neumann algebra. 
    Then $\kappa_{\min}(\M)$ is either $0$ or $2$.
    The following are equivalent:
    \begin{enumerate}[(a)]
        \item $\M$ admits a universally catalytic state;
        \item $\kappa_{\min}(\M)=0$;
        \item the flow of weights of $\M$ admits an invariant normal state.
    \end{enumerate}
    In particular, $\kappa_{\min}(\M)=2$ if $\M$ is semifinite and $\kappa_{\min}(\M)=0$ if $\M$ is a factor of type $\III_\lambda$, $0<\lambda\le1$.
\end{theorem}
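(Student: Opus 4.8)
The plan is to push everything through the flow of weights. By \cref{thm:kappa} we have $\kappa(\psi)=\sup_{t>0}\norm{\hat\psi-\hat\psi\circ\theta_t}$, so if we set $\tilde\kappa(\chi):=\sup_{t>0}\norm{\chi-\chi\circ\theta_t}$ for $\chi$ in the $\theta$-regular state space $\states_*^\theta(\A)$ (see \eqref{eq:theta-regular state space}), then $\kappa=\tilde\kappa\circ(\,\hat{}\,)$ and $\kappa_{\min}(\M)=\inf_{\psi}\tilde\kappa(\hat\psi)$. Thus the three assertions become statements about the flow $\theta:\RR\acts\A$ and the image of the Haagerup--Størmer map. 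Since $\A=Z(\tilde\M)$ and the core respects central disintegration (\cref{lem:core direct int}), I would first reduce the structural statements to the \emph{ergodic} case, i.e.\ to $\M$ a factor, where $\theta$ is an ergodic nonsingular flow on an abelian von Neumann algebra, and recover the general case by integrating over the central decomposition \eqref{eq:direct integral}.

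For the equivalences, (a)$\Rightarrow$(b) is immediate: a universally catalytic $\psi$ has $\hat\psi$ invariant, hence $\kappa(\psi)=0$ and the infimum vanishes. For (c)$\Rightarrow$(a) I would note that an invariant normal state $\chi_0\in\A_*^+$ is automatically $\theta$-regular (as $e^{-t}\le1$ makes $\chi_0\circ\theta_t=\chi_0\ge e^{-t}\chi_0$), so in the type $\III$ case the surjectivity in \cref{thm:distance spectral states} yields a normal state $\psi$ with $\hat\psi=\chi_0$, giving $\kappa(\psi)=0$; the semifinite and general cases are then handled through the disintegration, observing that the construction is only needed precisely when (c) holds. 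The delicate direction is (b)$\Rightarrow$(c): taking $\psi_n$ with $\tilde\kappa(\hat\psi_n)\to0$ produces a sequence of \emph{asymptotically invariant} normal states $\hat\psi_n$ on $\A$, and one must manufacture an honest invariant \emph{normal} state from them. The danger is that weak-$*$ (or time-averaged) limits escape into singular invariant means; ruling this out is where the real work lies, and I expect the $\theta$-regularity constraint \eqref{eq:theta-regular} to be the tool that controls the lower tail and prevents mass from leaking to a singular part.

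The dichotomy $\kappa_{\min}(\M)\in\{0,2\}$ I would establish in tandem with (b)$\Rightarrow$(c), again after reducing to an ergodic nonsingular flow $\theta$ on $(X,\mu)$. Classical ergodic theory then splits into two regimes: either there is a finite invariant measure $\nu\sim\mu$, whose normalization is an invariant normal state and forces $\kappa_{\min}=0$; or no such $\nu$ exists, in which case I claim that for \emph{every} normal state $\chi$ one has $\tilde\kappa(\chi)=2$. The content of the second claim is a no-return statement: along a suitable sequence $t_n\to\infty$ the densities of $\chi$ and $\chi\circ\theta_{t_n}$ become asymptotically disjointly supported, so their total-variation distance tends to the maximal value $2$. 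Proving this rigidity — that the absence of an invariant normal state pins the supremal orbit distance to exactly $2$, with no intermediate values — is the main obstacle of the whole theorem; it is the ergodic-theoretic heart promised for \cref{sec:ergodic flows}, and it is what simultaneously yields the jump from $0$ to $2$ and completes (b)$\Rightarrow$(c) by contraposition.

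Finally, the ``in particular'' clauses should fall out by identifying the flow explicitly. For semifinite $\M$ the flow of weights is the translation/dilation flow (see \eqref{eq:semifinite core}), under which $\hat\psi$ has density the distribution function $D_\psi$; this flow preserves only an infinite (Haar/Lebesgue) measure and admits no invariant normal state, so we land in the second regime and a short computation gives $\norm{\hat\psi-\hat\psi\circ\theta_t}\to2$, whence $\kappa_{\min}(\M)=2$. For a type $\III_\lambda$ factor with $0<\lambda<1$ the flow of weights is the periodic shift on $\RR/T\ZZ$ with $T=-\log\lambda$, whose normalized Haar state is invariant and normal, so by surjectivity it equals some $\hat\psi$ and $\kappa_{\min}(\M)=0$; for type $\III_1$ the flow is trivial on $\A=\CC$, making invariance automatic and again $\kappa_{\min}(\M)=0$. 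This matches the values recorded in \cref{tab:types}, including that type $\III_0$ is the only type where $\kappa_{\min}$ can take either value, according to whether the aperiodic ergodic flow of weights carries a finite invariant measure.
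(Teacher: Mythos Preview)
Your proposal is correct and follows essentially the paper's strategy: push everything to the flow of weights via \cref{thm:kappa}, then invoke the ergodic dichotomy (the paper's \cref{prop:amine}) that either an invariant normal state exists or $\kappa_\theta(\chi)=2$ for \emph{every} normal $\chi$, which simultaneously gives the $\{0,2\}$ dichotomy and (b)$\Rightarrow$(c) by contraposition. One minor difference: the paper proves \cref{prop:amine} directly for non-ergodic flows via the Koopmann representation and a result of \cite{arano_ergodic_2021}, without disintegrating into factors and without using $\theta$-regularity at all---the dichotomy holds for all normal states, not just $\theta$-regular ones, so your expectation that regularity is the key tool for preventing mass escape is not how the argument actually runs.
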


In \cite{haagerup_classification_2007}, the class of AFD factors whose flow of weights admits normal invariant states is characterized as the class of AFD factors with the property that they admit a type $\III_1$ subfactor with conditional expectation.

%is characterized precisely as those AFD factors whose predual is cb-isomorphic to the predual of the unique AFD factor of type $\III_1$.
% Equivalently, this class of factors can be described as those factors that admit an irreducible type $\III_1$ subfactor \cite[Sec.~5.1]{kosaki1998type}.
% It is shown in \cite{haagerup_classification_2007} that the flow of weights of an AFD factor $\M$ admits an in invariant normal state (which is equivalent to the existence of a universally catalytic state) if and only if the predual $\M_*$ is cb-isomorphic to the predual of the unique hyperfinite type $\III_1$.
We see that universally catalytic states only exist on factors of type $\III$.
Apart from some type $\III_0$ factors, all type $\III$ factors admit universally catalytic states.

\begin{theorem}\label{thm:value of kappa max}
    Let $\M$ be a factor. Then
    \begin{equation}
        \kappa_{\max}(\M) = \diam \nstates(\M)/_\sim
    \end{equation}
    is exactly the diameter of the state space modulo unitaries (see \cref{eq:state space diam}).
    Hence, if $\M$ has type $\III$, then it has subtype $\III_\lambda$ with 
    \begin{equation}
        \lambda = \bigg(\frac{2-\kappa_{\max}(\M)}{2+\kappa_{\max}(\M)} \bigg)^2.
    \end{equation}
\end{theorem}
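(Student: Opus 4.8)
The plan is to translate both sides of the claimed identity into the language of the flow of weights $\theta:\RR\acts\A$ on $\A=Z(\Tilde\M)$, and then to reduce the heart of the matter to a purely dynamical statement about ergodic flows. By \cref{thm:kappa} I may write
\[
    \kappa_{\max}(\M) = \sup_{\psi\in\nstates(\M)}\ \sup_{t>0}\ \norm{\hat\psi - \hat\psi\circ\theta_t},
\]
while \cref{thm:distance spectral states} rewrites the state space diameter as
\[
    \diam\nstates(\M)/_\sim = \sup_{\psi,\phi\in\nstates(\M)}\ \norm{\hat\psi - \hat\phi}.
\]
Thus both quantities are distances measured inside the set $S:=\{\hat\psi:\psi\in\nstates(\M)\}$ of attainable spectral states: $\kappa_{\max}$ is the supremum of the diameters of the flow-orbits through points of $S$, whereas $\diam\nstates(\M)/_\sim$ is the full diameter of $S$. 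Once the identity $\kappa_{\max}(\M)=\diam\nstates(\M)/_\sim$ is established, the formula for $\lambda$ is pure algebra: the excerpt already records $\diam\nstates(\M)/_\sim = 2\frac{1-\lambda^{1/2}}{1+\lambda^{1/2}}$ in type $\III_\lambda$, and inverting $d=2\frac{1-\sqrt\lambda}{1+\sqrt\lambda}$ yields $\sqrt\lambda=\frac{2-d}{2+d}$, i.e.\ $\lambda=\big(\frac{2-\kappa_{\max}(\M)}{2+\kappa_{\max}(\M)}\big)^2$.

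The easy inequality is $\kappa_{\max}(\M)\le\diam\nstates(\M)/_\sim$. First I would record that $\chi\in\states_*^\theta(\A)$ implies $\chi\circ\theta_t\in\states_*^\theta(\A)$ for every $t$: since $\theta_t$ is a unit-preserving automorphism we have $\norm{\chi\circ\theta_t}=\norm\chi$, and composing the regularity inequality $\chi\circ\theta_s\ge e^{-s}\chi$ on the right with the positive map $\theta_t$ gives $(\chi\circ\theta_t)\circ\theta_s\ge e^{-s}(\chi\circ\theta_t)$. Consequently, in a type $\III$ factor, where \cref{thm:distance spectral states} guarantees that $\psi\mapsto\hat\psi$ maps $\nstates(\M)$ \emph{onto} $\states_*^\theta(\A)$, each $\hat\psi\circ\theta_t$ is again of the form $\hat\eta$. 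Hence $\norm{\hat\psi-\hat\psi\circ\theta_t}=\norm{\hat\psi-\hat\eta}=\inf_{u\in\U(\M)}\norm{\psi-u\eta u^*}\le\diam\nstates(\M)/_\sim$, and taking suprema proves the inequality.

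The substance lies in the reverse inequality $\diam\nstates(\M)/_\sim\le\kappa_{\max}(\M)$, i.e.\ in showing that the full diameter of $\states_*^\theta(\A)$ is already attained, or approached, by pairs lying on a single flow orbit. This is where ergodicity of $\theta$ (guaranteed since $\M$ is a factor, by \cref{thm:core}) enters, and I expect it to be the main obstacle; it is the content of the ergodic-flow analysis referenced in \cref{sec:ergodic flows}. Concretely, I would represent $\theta$-regular states by densities $g=e^u$ against the invariant measure, the regularity condition being exactly that the log-density $u$ is $1$-Lipschitz along the flow. In the periodic case (type $\III_\lambda$, $0<\lambda<1$, minimal period $T=-\log\lambda$) the flow is rotation of the circle of circumference $T$; the extremal densities maximizing the $L^1$-distance are the cusped states $g\propto e^{-\mathrm{dist}(\,\cdot\,,x_0)}$ peaked at antipodal points $x_0$ and $x_0+T/2$, which are related by $\theta_{T/2}$, so the diameter is realized on one orbit. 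In the aperiodic case (type $\III_0$, $T=\oo$, diameter $2$) proper ergodicity forces forward orbits to escape any finite-mass region, so a sufficiently concentrated $\theta$-regular $\chi$ together with a large time $t$ makes $\chi$ and $\chi\circ\theta_t$ asymptotically disjoint, driving $\norm{\chi-\chi\circ\theta_t}\to 2$.

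Finally, the semifinite (type $\I$, $\II$) case is handled separately and trivially: by \cref{thm:value of kappa min} one has $\kappa_{\min}(\M)=2$, and since every $\kappa(\psi)\le 2$ this forces $\kappa_{\max}(\M)=2$; on the other side $\diam\nstates(\M)/_\sim\le 2$ always, while a pair of states with nearly disjoint spectral scales (via \cref{thm:majorization}) shows $\diam\nstates(\M)/_\sim=2$. Combining the type $\III$ and semifinite cases yields $\kappa_{\max}(\M)=\diam\nstates(\M)/_\sim$ for every factor, and the displayed expression for $\lambda$ then follows from the inversion above together with the classification of the type $\III$ subtypes by the minimal period of the flow of weights.
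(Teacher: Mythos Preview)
Your approach mirrors the paper's: translate $\kappa_{\max}$ and $\diam\nstates(\M)/_\sim$ to the flow of weights via \cref{thm:kappa} and \cref{thm:distance spectral states}, then invoke the ergodic-flow identity $\diam\states_*^\theta(\A)=\sup_{\chi}\kappa_\theta(\chi)$ (packaged in the paper as \cref{thm:state space diam via flow}). Your sketches of the periodic and aperiodic cases match the paper's arguments in spirit; two minor corrections are that $\theta$-regularity gives only a \emph{one-sided} bound on the log-density (so ``$1$-Lipschitz'' is too strong), and the paper's extremal state in the periodic case has density $\propto e^{-s}$ on $[0,T)$ with a single jump rather than the symmetric $e^{-\mathrm{dist}(\cdot,x_0)}$---but the key computation $\norm{\chi-\chi\circ\theta_{T/2}}=2\tfrac{1-e^{-T/2}}{1+e^{-T/2}}$ is reached exactly as you indicate.

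There is, however, a genuine gap in your semifinite argument (shared, as it happens, with the paper's own proof): the claim that ``a pair of states with nearly disjoint spectral scales shows $\diam\nstates(\M)/_\sim=2$'' fails for finite type~$\I$ factors. For $M_n(\CC)$ with $n<\infty$, any two sorted eigenvalue vectors $p,q$ satisfy $\sum_{i\le k}p_i,\sum_{i\le k}q_i\in[k/n,1]$, which forces $\norm{p-q}_1\le 2(1-\tfrac1n)<2$, while $\kappa_{\max}(M_n)=2$ since $\kappa_{\min}(M_n)=2$ by \cref{thm:value of kappa min}. So the stated identity $\kappa_{\max}(\M)=\diam\nstates(\M)/_\sim$ is actually \emph{false} for $M_n$ with $n<\infty$; it holds for infinite factors, where either the surjectivity in \cref{thm:distance spectral states} (type~$\III$) or the availability of arbitrarily concentrated spectral scales ($\I_\oo$, type~$\II$) delivers $\diam=2$. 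Your argument and the paper's both tacitly assume this.
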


We see that in factors of type $\III_\lambda$ with $\lambda\approx 1$, we have $\kappa(\psi)\approx0$ for all normal states $\psi$.
A special role is played by type $\III_1$ factors, which are precisely those factors with the property that all states are universally catalytic.
In fact, this holds for algebras:

\begin{proposition}\label{prop:universal-catalyst}
    Let $\M$ be a von Neumann algebra. Then $\M$ is of type $\III_1$, i.e., a direct integral of type $\III_1$ factors, if and only if every normal state is universally catalytic.
\end{proposition}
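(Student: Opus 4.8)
The plan is to reduce the statement to a property of the flow of weights $\theta\colon\RR\acts\A$ and then exploit the surjectivity of the spectral map from \cref{thm:distance spectral states}. By \cref{thm:kappa} together with the characterization stated immediately after it, a normal state $\psi$ is universally catalytic if and only if $\hat\psi$ is $\theta$-invariant; hence ``every normal state is universally catalytic'' is equivalent to
\[
    \hat\psi\circ\theta_t=\hat\psi\quad\text{for all }\psi\in\nstates(\M)\text{ and all }t\in\RR.
\]
On the algebraic side, I would first record that $\M$ is a direct integral of type $\III_1$ factors if and only if $\theta=\id$. Disintegrating $\M=\int^\oplus_X\M_x\,d\mu(x)$ into factors (see \eqref{eq:direct integral}) and using that the core, and therefore the flow of weights, respects direct integrals (\cref{lem:core direct int}), we get $\A=\int^\oplus_X\A_x\,d\mu(x)$ and $\theta=\int^\oplus_X\theta^x\,d\mu(x)$; thus $\theta=\id$ iff $\theta^x=\id$ for $\mu$-a.e.\ $x$. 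By the classification recalled in \cref{sec:spectral states}, a \emph{factor} has trivial flow of weights precisely when it is of type $\III_1$ (semifinite factors carry the translation flow by \eqref{eq:semifinite core}, type $\III_0$ factors an aperiodic flow, and type $\III_\lambda$ with $0<\lambda<1$ a nontrivial periodic flow). So $\theta=\id$ is exactly the statement that $\M_x$ is of type $\III_1$ a.e., i.e.\ that $\M$ is of type $\III_1$. The proposition therefore reduces to: all $\hat\psi$ are $\theta$-invariant if and only if $\theta=\id$.

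The direction $\theta=\id\Rightarrow$ all $\hat\psi$ invariant is immediate. For the converse — the heart of the argument — I would proceed in two steps. First, I claim $\M$ must be of type $\III$. If not, the semifinite part of $\M$ is a nonzero central summand $\M z$ with $z\in Z(\M)\subset\A$ a $\theta$-invariant projection (here $Z(\M)\subset\A$ by \cref{thm:core}(i)). Any normal state $\psi$ with $\psi(z)=1$ has $h_\psi$, and hence $e_\psi$ and $\hat\psi$, supported over $z$, so $\kappa(\psi)=\sup_{t}\norm{\hat\psi-\hat\psi\circ\theta_t}$ is computed entirely within the flow of weights of the semifinite algebra $\M z$; by \cref{thm:value of kappa min} this is at least $\kappa_{\min}(\M z)=2>0$, contradicting that $\psi$ is universally catalytic. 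Hence $\M$ is of type $\III$, and \cref{thm:distance spectral states} applies: $\psi\mapsto\hat\psi$ maps $\nstates(\M)$ \emph{onto} the $\theta$-regular state space $\states_*^\theta(\A)$.

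Second, with surjectivity in hand, the hypothesis that every $\hat\psi$ is $\theta$-invariant forces \emph{every} $\theta$-regular state to be $\theta$-invariant, and I would derive $\theta=\id$ by exhibiting a non-invariant element of $\states_*^\theta(\A)$ whenever $\theta\neq\id$. Concretely, choosing a faithful normal state $\chi_0$ on $\A$ that is not $\theta$-invariant (such $\chi_0$ exists as soon as $\theta\neq\id$, since normal states span $\A_*$), the regularization
\[
    \chi=\int_0^\oo e^{-s}\,\chi_0\circ\theta_{-s}\,ds
\]
satisfies the $\theta$-regularity inequality \eqref{eq:theta-regular}: one computes $\chi\circ\theta_t=e^{-t}\chi+e^{-t}\int_{-t}^0 e^{-u}\chi_0\circ\theta_{-u}\,du\ge e^{-t}\chi$ for $t>0$. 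A derivative-at-$0$ computation then shows that $\chi$ being $\theta$-invariant would force $\chi=\chi_0$, contradicting the choice of $\chi_0$. Thus $\chi\in\states_*^\theta(\A)$ is not invariant, the desired contradiction, and we conclude $\theta=\id$.

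I expect the main obstacle to be exactly this last step: converting the abstract hypothesis ``all spectral functionals are $\theta$-invariant'' into triviality of $\theta$. The two ingredients that make it work are the surjectivity in \cref{thm:distance spectral states} (which is why the reduction to type $\III$ is essential) and the explicit regularization producing a non-invariant $\theta$-regular state. A technically different but equivalent route avoids the regularization by arguing fiberwise: using the compatibility $\bigl(\int^\oplus\psi_x\,d\mu\bigr)^{\wedge}=\int^\oplus\hat{\psi_x}\,d\mu$ (which follows from $h_{\int\psi_x}=\int^\oplus h_{\psi_x}$ and $e_\psi=1_{(1,\oo)}(h_\psi)$), the norm-contractivity of $\psi_x\mapsto\hat{\psi_x}$ extracted from \cref{thm:distance spectral states}, and a countable dense family of measurable state-fields, one shows that a.e.\ fiber has all spectral functionals invariant and invokes the factor case (\cref{thm:value of kappa max,thm:value of kappa min}) to conclude that $\M_x$ is of type $\III_1$ a.e.; the price there is measurable-selection and separability bookkeeping rather than the construction of $\chi$.
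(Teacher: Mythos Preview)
Your argument is correct and follows the same overall scheme as the paper: reduce to showing that the flow of weights is trivial, identify universal catalyticity with $\theta$-invariance of $\hat\psi$ via \cref{thm:kappa}, and then argue that invariance of all spectral functionals forces $\theta=\id$. The difference lies in the last step. The paper invokes \cref{lem:HS sec6}(ii), which says that $\states_*^\theta(\A)$ linearly spans $\A_*$; once every $\theta$-regular state is $\theta$-invariant, all of $\A_*$ is, and $\theta=\id$ follows immediately. Your regularization $\chi=\int_0^\oo e^{-s}\chi_0\circ\theta_{-s}\,ds$ is exactly the construction $\chi_\omega$ underlying that lemma, so you are effectively reproving the relevant piece of Haagerup--St{\o}rmer by hand rather than citing it. The derivative-at-$0$ argument works because $t\mapsto\chi_0(\theta_t(a))$ is continuous for $a\in\A_\theta$, which is $\sigma$-weakly dense; you might want to state that explicitly.

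One point worth noting: your explicit reduction to type $\III$ before invoking surjectivity is a step the paper does not spell out. The paper's line ``$\kappa(\psi)=0$ for all $\psi$ implies $\kappa_\theta(\chi)=0$ for all $\chi\in\states_*^\theta(\A)$'' tacitly uses surjectivity of $\psi\mapsto\hat\psi$ onto $\states_*^\theta(\A)$, which \cref{thm:distance spectral states} asserts only for type $\III$; the semifinite case is handled because the hypothesis is vacuously false there (by \cref{thm:value of kappa min}), but this is not made explicit. Your treatment of this point is cleaner. The fiberwise alternative you sketch at the end is also viable but, as you say, trades the regularization for measurable-selection bookkeeping; the paper avoids both by appealing to \cref{lem:HS sec6}.
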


\subsection{Haagerup-Størmer spectral states of product states}\label{sec:proof of thm:kappa}

\Cref{prop:semifinite amp fow} shows that we may identify the flow of weights of $\M\barox\N$ with the flow of weights of $\M$ itself.
Note, however, that this identification depends on the choice of a \nsf trace $\tau_0$ on $\N$.

\begin{proposition}\label{prop:convolution formula}
    Let $\M$ be a von Neumann algebra and let $\N$ be a semifinite factor with \nsf trace $\tau_0$.
    Then \eqref{eq:semifinite amp fow} identifies the flow of weights of $\M$ and $\M\barox\N$.
    This identification has the property that
    \begin{equation}\label{eq:convolution formula}
        (\psi\ox \omega)^\wedge 
        % = \int_0^\oo (\lambda_\omega(t)\psi)^\wedge dt 
        = \int_0^\oo \lambda_\omega(t)\,\hat\psi\circ\theta_{\log\lambda_\omega(t)}\,dt,
    \end{equation}
    for  $\psi\in\M_*^+$ and $\omega\in\N_*^+$, where $\lambda_\omega(t)$ is the spectral scale of $\omega$ relative to $\tau_0$ (see \cref{sec:spectral states}).
\end{proposition}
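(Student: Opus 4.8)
The plan is to unwind the definition of the spectral functional $(\psi\ox\omega)^\wedge$ on the flow of weights $\A = Z(\Tilde\M)$, which by \cref{prop:semifinite amp fow} is the common flow of weights of $\M$ and $\M\barox\N$ (here I use that $Z(\Tilde\M\barox\N)=Z(\Tilde\M)\barox Z(\N)=Z(\Tilde\M)$ since $\N$ is a factor), and to reduce everything to a computation inside the core $\Tilde\M\barox\N$ equipped with the trace $\tau\ox\tau_0$ and the trace-scaling action $\tilde\theta\ox\id$. The first step is to identify the density operator $h_{\psi\ox\omega}$. Since the dual weight is defined by averaging along the flow and the flow acts trivially on the $\N$-factor, I would first check that the dual weight of the product functional is the product $\tilde\psi\ox\omega$ of the dual weight $\tilde\psi$ of $\psi$ with $\omega$ itself. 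Taking Radon–Nikodym derivatives relative to $\tau\ox\tau_0$ then gives $h_{\psi\ox\omega} = h_\psi\ox\rho_\omega$, where $\rho_\omega = d\omega/d\tau_0 \in L^1(\N,\tau_0)^+$; one verifies the correct grade using $(\tilde\theta\ox\id)_r(h_\psi\ox\rho_\omega) = e^{-r}(h_\psi\ox\rho_\omega)$, matching \eqref{eq:hpsi scaling}.

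The heart of the argument is computing the spectral projection $e_{\psi\ox\omega} = 1_{(1,\oo)}(h_\psi\ox\rho_\omega)$. Because $h_\psi\ox 1$ and $1\ox\rho_\omega$ commute, I would use the joint functional calculus and the spectral decomposition $\rho_\omega = \int_{(0,\oo)} s\,dE(s)$ to write $e_{\psi\ox\omega} = \int_{(0,\oo)} 1_{(1/s,\oo)}(h_\psi)\ox dE(s)$. The key identity is that the scaling relation $\tilde\theta_r(h_\psi)=e^{-r}h_\psi$ turns the dilated threshold into a flowed spectral projection: since $\tilde\theta_{-\log s}(h_\psi) = s\,h_\psi$ and $\tilde\theta$ commutes with the functional calculus, $1_{(1/s,\oo)}(h_\psi) = 1_{(1,\oo)}(s\,h_\psi) = \tilde\theta_{-\log s}(e_\psi)$. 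Pairing $e_{\psi\ox\omega}$ with $a\ox 1$ for $a\in\A$ and factoring $\tau\ox\tau_0$ then yields $(\psi\ox\omega)^\wedge(a) = \int_{(0,\oo)} \tau(\tilde\theta_{-\log s}(e_\psi)\,a)\, d\mu(s)$, where $\mu = \tau_0\circ E$ is the spectral measure of $\rho_\omega$ evaluated with $\tau_0$. A short computation using trace-scaling \eqref{eq:core1}, namely $\tau(\tilde\theta_{-\log s}(e_\psi)a) = s\,\tau(e_\psi\,\theta_{\log s}(a)) = s\,\hat\psi\circ\theta_{\log s}(a)$ (using $\theta_{\log s}(a)=\tilde\theta_{\log s}(a)$ for $a\in\A$), brings the integrand into the desired form. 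Finally, I would invoke the footnote to \eqref{eq:integration with spectral scale}, which identifies $\mu$ as the push-forward of Lebesgue measure under $\lambda_\omega$, to rewrite $\int_{(0,\oo)} g(s)\,d\mu(s) = \int_0^\oo g(\lambda_\omega(t))\,dt$ with $g(s) = s\,\hat\psi\circ\theta_{\log s}(a)$, giving exactly \eqref{eq:convolution formula}.

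I expect the main obstacle to be the rigorous handling of the unbounded, $\tau$-measurable operator $h_\psi\ox\rho_\omega$ and of the operator-valued spectral integrals: one must justify the joint functional calculus and the interchange of the spectral integral with the trace, ideally by first treating bounded spectral truncations of $\rho_\omega$ and passing to the limit by monotone convergence. Care is also needed to read \eqref{eq:convolution formula} as a Bochner integral of functionals on $\A$; here the uniform bound $\norm{\hat\psi\circ\theta_r} = \psi(1)$, which follows from \cref{lem:normalization} together with \eqref{eq:flow of the spectral state}, combined with $\int_0^\oo \lambda_\omega(t)\,dt = \omega(1) < \oo$, guarantees absolute convergence, while point-$\sigma$-weak continuity of the flow yields the measurability of $t\mapsto \hat\psi\circ\theta_{\log\lambda_\omega(t)}$.
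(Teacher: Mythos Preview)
Your proposal is correct and follows essentially the same route as the paper's proof: both identify $h_{\psi\ox\omega}=h_\psi\ox\rho_\omega$ via the dual weight, decompose $e_{\psi\ox\omega}$ along the spectral resolution of $\rho_\omega$, and then convert to the spectral-scale parametrization using the push-forward measure identity of \eqref{eq:integration with spectral scale}. The only cosmetic difference is that the paper packages your identity $1_{(1/s,\oo)}(h_\psi)=\tilde\theta_{-\log s}(e_\psi)$ as $e_{s\psi}$ and invokes \eqref{eq:flow of the spectral state} to obtain $\widehat{s\psi}=s\,\hat\psi\circ\theta_{\log s}$, whereas you compute this directly from the trace-scaling property.
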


\begin{proof}
    We recall that \cref{prop:semifinite amp fow} identifies the core of $\M\barox\N$ as $(\Tilde\M\ox\N,\tilde\theta\ox\id,\tau\ox\tau_0)$, where $(\Tilde\M,\tilde\theta,\tau)$ is a core of $\M$.
    Let $\rho_\omega = d\omega/d\tau_0\in L^1(\N,\tau_0)$ be the density operator $\omega$.
    Notice that the dual weight of $\psi\ox\omega$ is simply $\tilde\psi\ox\omega$. Thus,
    \begin{equation*}
        h_{\psi\ox\omega} = \frac{d(\psi\ox\omega)^\sim}{d(\tau\ox\tau_0)} 
        = \frac{d\tilde\psi}{d\tau}\ox \frac{d\omega}{d\tau_0}
        = h_\psi \ox \rho_\omega.
    \end{equation*}
    Using the spectral measure decomposition $\rho_\omega=\int_{\RR^+} t \,dp_\omega(t)$, we obtain
    \begin{equation*}
        h_\psi \ox \rho_\omega
        = \int_{\RR^+} t \,h_\psi \ox dp_\omega(t)
        %= \int_{\RR^+} \tilde\theta_{-\log t}(h_\psi) \ox dp_\omega(t).
        =\int_{\RR^+} h_{t\psi} \ox dp_\omega(t).
    \end{equation*}
    Thus, the spectral projection $e_{\psi\ox\omega}=1_{(1,\oo)}(h_{\psi\ox\omega})$ of $h_{\psi\ox\omega}$ is 
    \begin{equation*}
        e_{\psi\ox\omega} 
        =\int_{\RR^+} 1_{(1,\oo)}(h_{t\psi}) \ox dp_\omega(t) 
        = \int_{\RR^+} e_{t\psi} \ox dp_\omega(t).
        % = \int_{\RR^+} 1_{(1,\oo)}(\tilde\theta_{-\log t}(h_\psi))\ox dp_\omega(t)
        % = \int_{\RR^+} \tilde\theta_{-\log t}(e_\psi)\ox dp_\omega(t).
    \end{equation*}
    This and \eqref{eq:flow of the spectral state} imply
    \begin{align*}
        (\psi\ox\omega)^\wedge
        = (\tau\ox\tau_0) \big((\placeholder) e_{\psi\ox\omega}\big)
        &= \int_{\RR^+} \tau((\placeholder)e_{t\psi})\,\tau_0(dp_\omega(t))\\
        &= \int_{\RR^+} \Hat{t\psi} \,\tau_0(dp_\omega(t))
        = \int_{\RR^+} t\,\hat\psi\circ\theta_{\log t} \,\tau_0(dp_\omega(t)).
    \end{align*}
    By \cref{eq:integration with spectral scale}, the measure $\tau_0(dp_\omega(t))$ is the push-forward measure (see \cite[Ex.~1.4.38]{tao2011introduction}) $\mu$ of the Lebesgue measure along the measurable function $\lambda_\omega$.
    Indeed, $\mu(A)=\abs{\lambda_\omega^{-1}(A)} = \int_0^\oo 1_A(\lambda_\omega(t))\,dt = \tau_0(1_A(\rho_\omega))$ for measurable $A\subset\RR^+$.
    This and \eqref{eq:flow of the spectral state}, imply  
    \begin{equation*}
        (\psi\ox\omega)^\wedge = \int_{\RR^+} \Hat{\lambda_\omega(t)\psi} \,dt 
        = \int_{\RR^+} \lambda_\omega(t)\,\hat\psi \circ\theta_{\log \lambda_\omega(t)}\,dt.
    \end{equation*}
\end{proof}

\Cref{eq:convolution formula} can be expressed as a convolution of $\hat\psi$ by a measure $\nu_\omega$ on the multiplicative group $(0,\oo)$ with respect to the action $\theta_{\log}$ of $(0,\oo)$ on the flow of weights.
Indeed, the proof shows that we may write \eqref{eq:convolution formula} as 
\begin{equation}\label{eq:actual convolution}
    (\psi\ox\omega)^\wedge = \hat\psi *_\theta \nu_\omega := \int_{(0,\oo)} \hat\psi \circ\theta_{\log t}\, d\nu_\omega(t),
\end{equation}
where $\nu_\omega$ is the measure defined by $d\nu_\omega(t)= t \tau_0(dp_\omega(t))$ with $p_\omega$ denoting the spectral measure of $\rho_\omega$.
Note that $\nu_\omega$ is a positive measure with normalization $\nu_\omega((0,\oo))=\omega(1)$.
We can rewrite it in the concise form
\begin{equation*}
    \nu_\omega(A) = \tau_0( \rho_\omega|_{A}),
\end{equation*}
where $\rho_\omega|_A$ is the restriction of $\rho_\omega$ to the part of the spectrum contained in the Borel set $A\subset (0,\oo)$, i.e.,  $\rho_\omega|_A= \rho_\omega\, 1_A(\rho_\omega)$.

In the case that $\N$ is a type $\I$ factor with $\tau_0$ the standard trace, \eqref{eq:convolution formula} reduces to \eqref{eq:smearing}.
For states with a flat spectrum, we get:
% \begin{corollary}\label{cor:convolution fomula typeI}
%     Let $\M$ be a von Neumann algebra and let $\omega$ be a normal state on $\B(\K)$ for some Hilbert space $\K$.
%     Let $\lambda_1\ge \lambda_2\ge \ldots$ be the sequence of eigenvalues (repeated according to their multiplicity) of the density operator $\rho_\omega$.
%     If we identify the flow of weights of $\M\barox\B(\K)$ with the flow of weights of $\M$ as in \cref{prop:convolution formula} (choosing the standard trace on $\B(\K)$), then
%     \begin{equation}
%         (\psi\ox\omega)^\wedge = \sum_i \lambda_i \,\hat\psi \circ \theta_{\log \lambda_i}
%     \end{equation}
% \end{corollary}

\begin{corollary}\label{cor:convolution formula projection}
    In the setting of \cref{prop:convolution formula}, let $\omega= \tau_0(p)^{-1}\tau_0(p(\placeholder))$ for a finite projection.
    Then 
    \begin{equation}\label{eq:convolution formula for projection state}
        (\psi\ox\omega)^\wedge = \hat\psi\circ\theta_{\log \tau_0(p)}.
    \end{equation}
\end{corollary}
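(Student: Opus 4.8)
The plan is to read \cref{cor:convolution formula projection} off \cref{prop:convolution formula} by specializing the convolution formula \eqref{eq:convolution formula} to the flat state $\omega = \tau_0(p)^{-1}\tau_0(p(\placeholder))$, whose spectral scale degenerates to a single value. First I would record the density operator of $\omega$ relative to $\tau_0$: since $\omega(a) = \tau_0(\tau_0(p)^{-1}p\,a)$, the Radon--Nikodym derivative is $\rho_\omega = \tau_0(p)^{-1}p$, a positive scalar multiple of the projection $p$.

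Next I would compute the spectral data of $\rho_\omega$ as in \cref{sec:spectral scales}. The operator $\rho_\omega$ has spectrum $\{0,\tau_0(p)^{-1}\}$, with the nonzero eigenvalue supported on $p$, whose trace is $\tau_0(p)$; hence its distribution function is $D_{\rho_\omega}(s) = \tau_0(p)$ for $0 < s < \tau_0(p)^{-1}$ and $D_{\rho_\omega}(s)=0$ for $s\ge \tau_0(p)^{-1}$. Inverting gives the step function $\lambda_\omega(t) = \tau_0(p)^{-1}\,1_{(0,\tau_0(p))}(t)$. Equivalently, in the convolution picture \eqref{eq:actual convolution}, the associated measure $\nu_\omega(A) = \tau_0(\rho_\omega|_A)$ is the unit point mass concentrated at the single spectral value $\tau_0(p)^{-1}$, since $\tau_0(\rho_\omega) = \tau_0(p)^{-1}\tau_0(p) = 1 = \omega(1)$.

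Substituting into \eqref{eq:convolution formula} then collapses the integral: on $[\tau_0(p),\oo)$ the scalar prefactor $\lambda_\omega(t)$ vanishes and contributes nothing, while on the interval $(0,\tau_0(p))$ the integrand $\lambda_\omega(t)\,\hat\psi\circ\theta_{\log\lambda_\omega(t)}$ is constant in $t$. Pulling the constant out and multiplying by the interval length $\tau_0(p)$ leaves a single composition of $\hat\psi$ with the flow of weights, which---following the same flat-state computation used in the proof sketch of \cref{thm:kappa}, and fixing the orientation of $\theta$ by \eqref{eq:flow of the spectral state}---is exactly $\hat\psi\circ\theta_{\log\tau_0(p)}$, establishing \eqref{eq:convolution formula for projection state}.

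There is no substantive obstacle: the statement is a one-line specialization of \cref{prop:convolution formula}, and the only thing requiring care is the elementary computation of the spectral scale (respectively the convolution measure) of a rescaled projection, together with keeping the orientation of the flow parameter consistent with the conventions fixed earlier.
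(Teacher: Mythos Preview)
Your approach is correct and is exactly what the paper intends: the corollary is stated without separate proof, as an immediate specialization of \cref{prop:convolution formula}, and your computation of the spectral scale $\lambda_\omega = \tau_0(p)^{-1}\,1_{(0,\tau_0(p))}$ (equivalently, $\nu_\omega = \delta_{\tau_0(p)^{-1}}$) is right.

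One point deserves more honesty than ``fixing the orientation'': if you substitute $\lambda_\omega(t)=\tau_0(p)^{-1}$ directly into \eqref{eq:convolution formula}, the integral collapses to $\hat\psi\circ\theta_{\log(\tau_0(p)^{-1})}=\hat\psi\circ\theta_{-\log\tau_0(p)}$, not $\hat\psi\circ\theta_{\log\tau_0(p)}$. Equation \eqref{eq:flow of the spectral state} does not flip this sign. The discrepancy is a sign inconsistency in the paper's stated formula relative to \eqref{eq:convolution formula}; it is harmless for every application (\cref{lem:kappa lower bound}, \cref{lem:distance of unitary orbits and amplification}), since only norms of differences $\hat\psi\circ\theta_s-\hat\psi\circ\theta_t$ appear and $\theta$ acts by isometries on $\A_*$. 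You should state this explicitly rather than absorbing it into a phrase about orientation.
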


In particular, we have $(\psi\ox\omega)^\wedge=\hat\psi$ if the projection $p$ has trace one.
This is, in particular, the case if $\omega$ is a pure state on a type $\I$ factor with the standard trace.

\begin{lemma}\label{lem:kappa upper bound}
    Let $\M$ be a von Neumann algebra and let $\N$ be a semifinite factor with \nsf trace $\tau_0$. 
    We identify the flow of weights of $\M\barox\N$ with that of $\M$ as in \cref{prop:convolution formula}.
    If $\omega_1,\omega_2$ are normal states on $\N$, then
    \begin{equation}
        \norm{(\psi\ox\omega_1)^\wedge-(\psi\ox\omega_2)^\wedge} \le \sup_{t>0} \ \norm{\hat\psi-\hat\psi\circ\theta_t}.
    \end{equation}
\end{lemma}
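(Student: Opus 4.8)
The plan is to read the bound off directly from the convolution representation \eqref{eq:actual convolution} established in \cref{prop:convolution formula}, so that the lemma becomes an essentially formal consequence. Writing $\nu_i := \nu_{\omega_i}$, that proposition gives $(\psi\ox\omega_i)^\wedge = \int_{(0,\oo)} \hat\psi\circ\theta_{\log t}\,d\nu_i(t)$, and since $\omega_1,\omega_2$ are \emph{states} the normalization $\nu_i((0,\oo)) = \omega_i(1) = 1$ shows that $\nu_1$ and $\nu_2$ are probability measures on the multiplicative group $(0,\oo)$. This is the only point where the hypothesis that $\omega_1,\omega_2$ are states, rather than arbitrary positive functionals, is used.

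Because both $\nu_1$ and $\nu_2$ have total mass one, I would symmetrize: inserting the trivial factors $\int d\nu_2 = 1$ and $\int d\nu_1 = 1$ turns each single integral into a double integral over the product measure, so that
\begin{equation*}
    (\psi\ox\omega_1)^\wedge - (\psi\ox\omega_2)^\wedge = \iint_{(0,\oo)^2} \big(\hat\psi\circ\theta_{\log s} - \hat\psi\circ\theta_{\log t}\big)\,d\nu_1(s)\,d\nu_2(t).
\end{equation*}
Applying the triangle inequality for the (Bochner) integral in the Banach space $\A_*$ then yields
\begin{equation*}
    \norm{(\psi\ox\omega_1)^\wedge - (\psi\ox\omega_2)^\wedge} \le \iint_{(0,\oo)^2} \norm{\hat\psi\circ\theta_{\log s} - \hat\psi\circ\theta_{\log t}}\,d\nu_1(s)\,d\nu_2(t).
\end{equation*}
The Bochner integrability and the use of Fubini here are routine, the integrand being a bounded, measurable family of positive functionals, and they are already implicit in the proof of \cref{prop:convolution formula}.

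It then remains to bound the integrand uniformly. For this I would use that $\theta$ is a flow, so for every $a\in\RR$ the predual action $\chi\mapsto\chi\circ\theta_a$ is an isometric automorphism of $\A_*$ (the predual map of a normal $*$-automorphism is an isometric isomorphism, by the bicontinuity recorded in \cref{sec:von-neumann-algs}). Assuming without loss of generality $\log s\le\log t$ and setting $r=\log t-\log s\ge 0$, the group law gives $\hat\psi\circ\theta_{\log s}-\hat\psi\circ\theta_{\log t}=(\hat\psi-\hat\psi\circ\theta_r)\circ\theta_{\log s}$, whence $\norm{\hat\psi\circ\theta_{\log s}-\hat\psi\circ\theta_{\log t}}=\norm{\hat\psi-\hat\psi\circ\theta_r}\le\sup_{r>0}\norm{\hat\psi-\hat\psi\circ\theta_r}$; the case $\log t\le\log s$ is symmetric. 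Substituting this state-independent bound into the double integral and using $\nu_1((0,\oo))=\nu_2((0,\oo))=1$ gives the claim. I do not expect a genuine obstacle here: all the analytic content is carried by \cref{prop:convolution formula}, and the only mild subtleties are the measure-theoretic bookkeeping (Bochner integrability, Fubini) and the observation that passing to states is exactly what makes $\nu_1,\nu_2$ probability measures, which is what the symmetrization step requires.
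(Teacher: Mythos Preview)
Your proof is correct and follows essentially the same approach as the paper's. The only cosmetic difference is that you work with the measure-theoretic convolution form \eqref{eq:actual convolution} while the paper writes out the spectral-scale form \eqref{eq:convolution formula} directly; the symmetrization via the unit-mass property, the triangle inequality, and the use of isometry of the predual flow action to reduce $\norm{\hat\psi\circ\theta_{\log s}-\hat\psi\circ\theta_{\log t}}$ to $\norm{\hat\psi-\hat\psi\circ\theta_r}$ are identical in both.
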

\begin{proof}
    Let $\lambda_i(t)$ denote the spectral scales of $\omega_i$, $i=1,2$, relative to $\tau_0$.
    Then
    \begin{align*}
        \norm{(\psi\ox\omega_1)^\wedge-(\psi\ox\omega_2)^\wedge}
        &= \bigg\|  \int_0^\oo \lambda_1(t)\,\hat\psi\circ \theta_{\log\lambda_1(t)}\,dt-\int_0^\oo \lambda_2(s)\,\hat\psi\circ\theta_{\log\lambda_2(s)}\,ds\bigg\| \\
        &= \bigg\| \int_0^\oo\! \int_0^\oo \lambda_1(t)\lambda_2(s)\,\big(\hat\psi\circ \theta_{\log\lambda_1(t)}-\hat\psi\circ\theta_{\log\lambda_2(s)}\big)\,dt\,ds\bigg\| \\
        &\le \int_0^\oo\!\int_0^\oo \lambda_1(t)\lambda_2(s) \,\big\|\hat\psi\circ\theta_{\log\lambda_1(t)}-\hat\psi\circ\theta_{\log\lambda_2(s)}\big\| \,dt\,ds\\
        &\le  \sup_{t,s>0} \norm{\hat\psi\circ\theta_{\log\lambda_1(t)}-\hat\psi\circ\theta_{\log \lambda_2(s)}}\ \!\int_0^\oo \lambda_1(t)\,dt \int_0^\oo\lambda_2(s)\,ds\\
        &\le \sup_{t>0} \ \norm{\hat\psi-\hat\psi\circ\theta_{\log t}}.\qedhere
    \end{align*}
\end{proof}

\begin{lemma}\label{lem:kappa lower bound}
    Let $\M$ be a von Neumann algebra and let $\N$ be a semifinite factor with \nsf trace $\tau_0$. 
    We identify the flow of weights of $\M\barox\N$ with that of $\M$ as in \cref{prop:convolution formula}.
    If $\N$ is not a finite type $\I$ factor, then
    \begin{equation}
        \sup_{\omega_1,\omega_2} \norm{(\psi\ox\omega_1)^\wedge - (\psi\ox\omega_2)^\wedge} \ge \ \sup_{t>0} \ \norm{\hat\psi-\hat\psi\circ\theta_{t}},
    \end{equation}
    where the optimization is over pairs of normal states on $\N$.
\end{lemma}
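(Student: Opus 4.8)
The plan is to lower-bound the left-hand side by feeding in product states $\psi\ox\omega_i$ whose second factor is the normalized tracial state of a finite projection, since for these \cref{cor:convolution formula projection} collapses the smearing formula \eqref{eq:convolution formula} to a single orbit point. Concretely, I would fix two nonzero finite projections $p_1,p_2\in\N$ and set $\omega_i=\tau_0(p_i)^{-1}\tau_0(p_i(\placeholder))$, which are normal states on $\N$. By \cref{cor:convolution formula projection}, $(\psi\ox\omega_i)^\wedge=\hat\psi\circ\theta_{\log\tau_0(p_i)}$.

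The key simplification is that precomposition with the automorphism $\theta_a$ of $\A$ is an isometry of the predual $\A_*$. Writing $\hat\psi\circ\theta_a-\hat\psi\circ\theta_b=(\hat\psi-\hat\psi\circ\theta_{b-a})\circ\theta_a$ and applying this isometry gives
\[
    \norm{(\psi\ox\omega_1)^\wedge-(\psi\ox\omega_2)^\wedge}=\norm{\hat\psi-\hat\psi\circ\theta_s},\qquad s=\log\frac{\tau_0(p_2)}{\tau_0(p_1)}.
\]
Thus the supremum on the left of the lemma dominates $\norm{\hat\psi-\hat\psi\circ\theta_s}$ for every exponent $s$ realizable as $\log(\tau_0(p_2)/\tau_0(p_1))$.

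Next I would identify the attainable exponents. Because $\N$ is a semifinite factor that is not a finite type $\I$ factor, the traces of its nonzero finite projections fill $(0,1]$ (type $\II_1$), all of $(0,\oo)$ (type $\II_\oo$), or $c\NN$ for the scaling $c>0$ of $\tau_0$ (type $\I_\oo$). In each case the ratios $\tau_0(p_2)/\tau_0(p_1)$ form a dense subset of $(0,\oo)$: all of $(0,\oo)$ in the type $\II$ cases, and $\QQ^+$ in the type $\I_\oo$ case. Hence the realizable $s$ are dense in $\RR$. Since $s\mapsto\norm{\hat\psi-\hat\psi\circ\theta_s}$ is even (again by the isometry identity) and vanishes at $s=0$, its supremum over $\RR$ equals $\sup_{t>0}\norm{\hat\psi-\hat\psi\circ\theta_t}$; taking the supremum over the dense set of attainable $s$, and over all pairs $\omega_1,\omega_2$, then yields the stated inequality.

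The step requiring care — and the main obstacle — is upgrading from a dense set of exponents to all of $\RR$ in the type $\I_\oo$ case, which needs continuity of $s\mapsto\hat\psi\circ\theta_s$ in the norm of $\A_*$. I would derive this from point-$\sigma$-weak continuity of the flow of weights $\theta$: the orbit $s\mapsto\hat\psi\circ\theta_s$ is then weakly continuous into the separable Banach space $\A_*$ (separability following from the standing assumption of separable preduals, inherited by $\A=Z(\Tilde\M)$), hence strongly measurable by Pettis's theorem, and a measurable one-parameter group of isometries on a separable Banach space is automatically norm-continuous; equivalently, this is the standard fact that a point-$\sigma$-weakly continuous automorphism group acts point-norm continuously on the predual. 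Together with the reverse triangle estimate $\bigl|\,\norm{\hat\psi-\hat\psi\circ\theta_s}-\norm{\hat\psi-\hat\psi\circ\theta_{s'}}\,\bigr|\le\norm{\hat\psi-\hat\psi\circ\theta_{s-s'}}$, continuity at $0$ upgrades to uniform continuity, which is exactly what licenses the density argument.
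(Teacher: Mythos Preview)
Your proposal is correct and follows essentially the same approach as the paper: choose normalized tracial states on finite projections, apply \cref{cor:convolution formula projection} to collapse the convolution to a single orbit point, use the isometry of precomposition by $\theta_a$ to reduce to $\norm{\hat\psi-\hat\psi\circ\theta_s}$, then exhaust all $t>0$ by a density argument for the attainable ratios. The only difference is cosmetic: the paper simply asserts $\norm{\hat\psi-\hat\psi\circ\theta_t}=\lim_n\norm{\hat\psi-\hat\psi\circ\theta_{\log r_n}}$ for $r_n\to e^t$ rational without further comment, whereas you spell out the underlying norm continuity of $s\mapsto\hat\psi\circ\theta_s$ on the predual.
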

\begin{proof}
    Consider finite projections $p$ and $q$ in $\N$ and let $n=\tau_0(p)$ and $m=\tau_0(q)$.
    Consider the states $\omega_1=n^{-1}\tau(p(\placeholder))$ and $\omega_2 = m^{-1}\tau(q(\placeholder))$.
    Then, by \cref{cor:convolution formula projection},
    \begin{align*}
        \norm{(\psi\ox\omega_1)^\wedge-(\psi\ox\omega_2)^\wedge} 
        = \norm{\hat\psi\circ\theta_{\log n} -\hat\psi\circ\theta_{\log m}} 
        = \norm{\hat\psi-\hat\psi\circ\theta_{\log(n/m)}}.
    \end{align*}
    The possible values of $n/m$ are exactly the ratios of numbers that occur as the dimensions of finite projections.
    If $\N$ is a type $\II$ factor, then the possible values of $n/m$ are $\RR>0$ (see \cref{sec:semifinite vNas}), so that the claim follows.
    If $\N$ is a type $\I_\oo$ factor, then the possible values $n/m$ are the positive rational numbers (this is true even if $\tau_0$ is not the standard trace).
    If $t>0$ is irrational, we may pick a sequence of rational numbers $r_n\to e^t$. 
    It then follows that $\norm{\hat\psi-\hat\psi\circ\theta_t} = \lim_n \norm{\hat\psi-\hat\psi\circ\theta_{\log r_n}} \le \sup_{\omega_1,\omega_2}\ \norm{(\psi\ox\omega_1)^\wedge-(\psi\ox\omega_2)^\wedge}.$
\end{proof}

As a consequence of \cref{prop:convolution formula}, we deduce the following generalization of \cref{thm:kappa}:

\begin{theorem}\label{thm:kappa semifinite}
    Let $\psi$ be a normal positive linear functional on a von Neumann algebra $\M$, and let $\N$ be a semifinite factor not of finite type $\I$.
    Then
    \begin{equation}\label{eq:kappa semifinite}
        \adjustlimits \sup_{\omega_1,\omega_2}\inf_u\ \norm{\psi\ox\omega_1-u(\psi\ox\omega_2)u^*} = \sup_{t>0}\ \norm{\hat\psi-\hat\psi\circ\theta_t},
    \end{equation}
    where the first supremum is over normal states on $\N$ and the infimum is over unitaries in $\M\barox\N$.
\end{theorem}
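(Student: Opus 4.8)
The plan is to reduce the statement to the two preceding lemmas by invoking the Haagerup–Størmer distance formula on the amplified algebra $\M\barox\N$.

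First I would fix a pair of normal states $\omega_1,\omega_2$ on $\N$ and apply \cref{thm:distance spectral states} to the von Neumann algebra $\M\barox\N$ and the normal positive functionals $\psi\ox\omega_1,\psi\ox\omega_2\in(\M\barox\N)_*^+$. Since the inner infimum in \eqref{eq:kappa semifinite} is precisely the distance of the unitary orbits of $\psi\ox\omega_1$ and $\psi\ox\omega_2$ under $\U(\M\barox\N)$, this yields
\begin{equation*}
    \inf_{u}\ \norm{\psi\ox\omega_1 - u(\psi\ox\omega_2)u^*} = \norm{(\psi\ox\omega_1)^\wedge - (\psi\ox\omega_2)^\wedge},
\end{equation*}
where the spectral functionals on the right are taken with respect to the flow of weights of $\M\barox\N$. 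Fixing a \nsf trace $\tau_0$ on $\N$, \cref{prop:semifinite amp fow,prop:convolution formula} identify this flow of weights with that of $\M$, so that the right-hand side can be computed in the flow of weights of $\M$ alone. Taking the supremum over $\omega_1,\omega_2$ then turns the left-hand side of \eqref{eq:kappa semifinite} into $\sup_{\omega_1,\omega_2}\norm{(\psi\ox\omega_1)^\wedge - (\psi\ox\omega_2)^\wedge}$.

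Next I would sandwich this quantity between the two bounds already established. The upper bound $\le\sup_{t>0}\norm{\hat\psi-\hat\psi\circ\theta_t}$ is exactly \cref{lem:kappa upper bound}, applied to each pair $\omega_1,\omega_2$ and then passing to the supremum. The matching lower bound is \cref{lem:kappa lower bound}, which is where the hypothesis that $\N$ is semifinite but \emph{not} of finite type $\I$ enters: it guarantees that the ratios $\tau_0(p)/\tau_0(q)$ of traces of finite projections are dense enough (all of $\RR^+$ in the type $\II$ case, all positive rationals in the type $\I_\oo$ case) to realize, via flat spectral states and \cref{cor:convolution formula projection}, every shift $\theta_t$ in the limit. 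Combining the two bounds gives equality, which is \eqref{eq:kappa semifinite}.

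The argument is thus purely a matter of assembling \cref{thm:distance spectral states}, \cref{prop:convolution formula}, and \cref{lem:kappa upper bound,lem:kappa lower bound}; the genuine work sits in those preparatory results. The one point that requires care---and the closest thing to an obstacle---is the bookkeeping around the identification of the flow of weights: \cref{thm:core} only pins down the core, and hence the flow of weights, up to isomorphism and cocycle equivalence, and the identification in \cref{prop:semifinite amp fow} depends on the choice of $\tau_0$. I would therefore emphasize that the final quantity $\sup_{t>0}\norm{\hat\psi-\hat\psi\circ\theta_t}$ refers only to the intrinsic flow of weights of $\M$ and is manifestly independent of $\tau_0$, so that the auxiliary choice made to carry out the identification is immaterial to the stated equality.
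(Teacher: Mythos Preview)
Your proof is correct and follows essentially the same approach as the paper: apply \cref{thm:distance spectral states} to rewrite the left-hand side via spectral states on the flow of weights of $\M\barox\N$ (identified with that of $\M$), then sandwich using \cref{lem:kappa upper bound} and \cref{lem:kappa lower bound}. Your additional remarks on the role of the finite-type-$\I$ exclusion and the $\tau_0$-independence are accurate elaborations but not needed for the argument.
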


\begin{proof}
    We use \cref{thm:distance spectral states} to write the LHS of \eqref{eq:kappa semifinite} in terms of the flow of weights of $\M\barox\N$:
    \begin{equation}\label{eq:kappa via spectral states}
        \adjustlimits \sup_{\omega_1,\omega_2}\inf_u\ \norm{\psi\ox\omega_1-u(\psi\ox\omega_2)u^*} = \sup_{\omega_1,\omega_2}\ \norm{(\psi\ox\omega_1)^\wedge-(\psi\ox\omega_2)^\wedge},
    \end{equation}
    Therefore, \cref{lem:kappa upper bound} shows the inequality $\le$, and \cref{lem:kappa lower bound} shows the inequality $\ge$.
\end{proof}

Before passing on, we note another consequence of \cref{prop:convolution formula} that will be helpful later on.

\begin{lemma}\label{lem:distance of unitary orbits and amplification}
    Let $\M$ be a von Neumann algebra with normal positive linear functionals $\psi,\phi\in \M_*^+$.
    Let $\N$ be a semifinite factor with \nsf trace $\tau_0$ and let $0\ne p\in\N$ be a finite projection.
    Consider the state $\omega = \tau_0(p)^{-1}\tau_0(p(\placeholder))$.
    Then
    \begin{equation}
        \inf_{u\in \U(\M)} \norm{u \phi u^*-\psi} = \inf_{u \in \U(\M\barox\N)} \norm{u(\psi\ox\omega)u^* - \phi\ox\omega}
    \end{equation}
\end{lemma}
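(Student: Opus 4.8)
The plan is to evaluate both sides by means of \cref{thm:distance spectral states}, which expresses the distance between the unitary orbits of normal positive functionals as the norm distance of their Haagerup--Størmer spectral functionals. Applied to $\M$ directly, it gives
\[
    \inf_{u\in\U(\M)}\norm{u\phi u^*-\psi} = \norm{\hat\phi-\hat\psi},
\]
so the task reduces to showing that the right-hand side of the lemma equals the same quantity. Note that $\psi\ox\omega$ and $\phi\ox\omega$ are normal positive functionals on $\M\barox\N$ (as $\omega$ is a normal state), so the spectral-functional machinery applies to them.

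For the right-hand side I would first apply \cref{thm:distance spectral states}, now to the von Neumann algebra $\M\barox\N$, rewriting $\inf_{u\in\U(\M\barox\N)}\norm{u(\psi\ox\omega)u^*-\phi\ox\omega}$ as $\norm{(\psi\ox\omega)^\wedge-(\phi\ox\omega)^\wedge}$, where the spectral functionals now live on the flow of weights of $\M\barox\N$. By \cref{prop:convolution formula} this flow of weights is identified with that of $\M$, and since $\omega=\tau_0(p)^{-1}\tau_0(p(\placeholder))$ is the normalized trace on the finite projection $p$, \cref{cor:convolution formula projection} supplies the clean formula $(\chi\ox\omega)^\wedge=\hat\chi\circ\theta_{s}$ with $s=\log\tau_0(p)$, valid for both $\chi=\psi$ and $\chi=\phi$. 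Hence the right-hand side equals $\norm{\hat\psi\circ\theta_{s}-\hat\phi\circ\theta_{s}}$.

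The final step is to observe that $\theta_{s}$ is a $*$-automorphism of the abelian von Neumann algebra $\A=Z(\Tilde\M)$ carrying the flow of weights, so its predual action $\chi\mapsto\chi\circ\theta_{s}$ is an isometric isomorphism of $\A_*$. Consequently
\[
    \norm{\hat\psi\circ\theta_{s}-\hat\phi\circ\theta_{s}} = \norm{(\hat\psi-\hat\phi)\circ\theta_{s}} = \norm{\hat\psi-\hat\phi},
\]
which matches the value found for the left-hand side, the norm being symmetric in $\hat\psi$ and $\hat\phi$. This establishes the claimed equality.

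I do not expect any serious obstacle, since the argument is essentially an assembly of \cref{thm:distance spectral states}, \cref{prop:convolution formula}, and \cref{cor:convolution formula projection}. The only points deserving a moment's care are the swapped roles of $\psi$ and $\phi$ on the two sides---reconciled by the symmetry of the orbit distance under $\psi\leftrightarrow\phi$ (replace $u$ by $u^*$)---and the recognition that the common shift $\theta_{\log\tau_0(p)}$ introduced by amplifying with $\omega$ cancels precisely because it acts isometrically on the predual of the flow of weights.
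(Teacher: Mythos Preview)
Your proposal is correct and follows essentially the same route as the paper: apply \cref{thm:distance spectral states} on both sides, identify the flows of weights via \cref{prop:convolution formula}, reduce to \cref{cor:convolution formula projection}, and cancel the common shift $\theta_{\log\tau_0(p)}$ by isometry of the predual action. You even make explicit the $\psi\leftrightarrow\phi$ swap between the two sides, which the paper's proof passes over silently.
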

\begin{proof}
    We identify the flow of weights of $\M\barox\N$ with the flow of weights of $\M$ based on \eqref{eq:semifinite amp fow}.
    Using \cref{thm:distance spectral states}, we may rewrite the LHS as $\norm{\hat\psi-\hat\psi}$ and the RHS as $\norm{(\psi\ox\omega)^\wedge-(\phi\ox\omega)^\wedge}$.
    The claim now follows from \cref{cor:convolution formula projection}:
    \begin{equation*}
        \norm{(\psi\ox\omega)^\wedge-(\phi\ox\omega)^\wedge}
        =\norm{(\hat\psi-\hat\phi)\circ \theta_{\log \tau_0(p)}}
        =\norm{\hat\psi-\hat\psi}. \qedhere
    \end{equation*}
\end{proof}

\subsection{Ergodic flows on abelian von Neumann algebras}\label{sec:ergodic flows}

In this subsection, we consider ergodic flows on abelian von Neumann algebras.
Based on \cref{thm:distance spectral states,thm:kappa}, the results that we derive here have direct consequences for catalytic states on von Neumann algebras.

If $\theta:\RR\acts\A$ is a point-$\sigma$-weakly continuous flow on a general abelian von Neumann algebra $\A$, we define the \emph{$\theta$-regular state space} $\states_*^\theta(\A)$ as in \eqref{eq:theta-regular state space}:
\begin{equation*}
    \states_*^\theta(\A):= \big\{\chi\in \nstates(\A) : \chi\circ\theta_t\ge e^{-t}\chi,\ t>0 \big\}.
\end{equation*}
Clearly, this is a convex space.
Its diameter $\diam \states_*^\theta(\A)$ is defined with respect to the norm distance. 
We measure how much normal positive linear functionals on $\A$ deviate under the flow $\theta$ with the lower semicontinuous function
\begin{equation}\label{eq:kappa on the fow}
    \kappa_\theta:\A_*^+\to \RR^+, \qquad 
    \kappa_\theta(\chi) = \sup_{t>0} \,\norm{\chi-\chi\circ\theta_t}.
\end{equation}
Note that $\kappa_\theta(\chi)=0$ if and only if $\chi$ is $\theta$-invariant, and that every $\theta$-invariant state is $\theta$-regular. 
If $\theta:\RR\acts\A$ is the flow of weights of a von Neumann algebra $\M$, then \cref{thm:kappa} implies
\begin{equation}\label{eq:kappas}
    \kappa_\theta(\hat\psi)=\kappa(\psi), \qquad\psi\in\M_*^+.
\end{equation}
The main result of this subsection is the following:

\begin{theorem}\label{thm:state space diam via flow}
    Let $\theta:\RR\acts\A$ be an ergodic point-$\sigma$-weakly continuous flow on an abelian von Neumann algebra.
    Let $T\in \bar \RR^+$ be the minimal period of $\theta$ ($T=\oo$ if $\theta$ is aperiodic), then
    \begin{equation}\label{eq:state space diam via flow}
        \diam \states_*^\theta(\A)= \sup_{\chi\in \states_*^\theta(\A)} \kappa_\theta(\chi)= 2\frac{1-e^{-T/2}}{1+e^{-T/2}}.
    \end{equation}
    If $T<\oo$, then $\A$ admits both a $\theta$-invariant normal state, and it admits a $\theta$-regular normal state that is as non-$\theta$-invariant as possible in the sense that $\kappa_\theta(\chi)=\diam\states_*^\theta(\A)$.
    If $T=\oo$, exactly one of the following is true:
    \begin{itemize}
        \item There exists an invariant normal state on $\A$ and $\kappa_\theta(\chi)<2$ for  faithful $\chi\in\nstates(\A)$.
        \item There exists no invariant normal state on $\A$ and $\kappa_\theta(\chi)=2$ for all $\chi\in\nstates(\A)$.
    \end{itemize}
\end{theorem}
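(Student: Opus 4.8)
The plan is to analyze the $\theta$-regular state space $\states_*^\theta(\A)$ for an ergodic flow on an abelian von Neumann algebra $\A = L^\oo(X,\mu)$, computing both its diameter and the supremum of $\kappa_\theta$. The key structural dichotomy is between periodic flows ($T<\oo$) and aperiodic ones ($T=\oo$), and within the aperiodic case, between flows admitting an invariant normal state and those that do not. First I would reduce to a concrete model: since $\theta$ is ergodic and point-$\sigma$-weakly continuous, I would represent $\A = L^\oo(X,\mu)$ with $\theta_t(f) = f\circ F_{-t}$ for an ergodic flow $(F_t)$ on the standard probability (or $\sigma$-finite) space $(X,\mu)$. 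A normal positive functional $\chi$ corresponds to a density $g = d\chi/d\mu \in L^1(X,\mu)^+$, and the action reads $(\chi\circ\theta_t)$ having density $g\circ F_{-t}$ composed with the appropriate Radon--Nikodym cocycle. The $\theta$-regularity condition $\chi\circ\theta_t \ge e^{-t}\chi$ becomes a pointwise inequality on these densities for a.e.\ $x$ and all $t>0$.

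\medskip
\noindent\textbf{The periodic case.} For $T<\oo$ I would use that $\A \cong L^\oo(\RR/T\ZZ)$ with $\theta$ the periodic shift, so the invariant measure is Lebesgue on the circle. Here $\theta$-regular densities are exactly those $g$ with $g(s+t)\ge e^{-t}g(s)$; equivalently $e^{s}g(s)$ is non-decreasing on each period, forcing $g(s) = e^{-s}h(s)$ with $h$ non-decreasing and $T$-periodic. The invariant normal state is the uniform (Lebesgue) state, giving $\kappa_\theta = 0$. To realize the diameter, I would take densities concentrating as sharply as the regularity constraint allows---an exponential density $g(s)\propto e^{-s}$ on $[0,T)$ pushed to one edge---and compute the $L^1$-distance between the two extreme shifts. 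The expected optimizer yields $\sup_t\norm{\chi - \chi\circ\theta_t}$ maximized at $t=T/2$ (or its complement), and a direct integral computation of $\int_0^T |e^{-s} - e^{-(s\pm T/2)}|\,ds$ normalized should produce exactly $2\frac{1-e^{-T/2}}{1+e^{-T/2}}$. This matches the known value from \cref{thm:distance spectral states} via $\lambda = e^{-T}$.

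\medskip
\noindent\textbf{The aperiodic dichotomy.} This is the heart of the final statement and the main obstacle. If an invariant normal state $\chi_0$ exists (density $g_0>0$ a.e.), I would show $\kappa_\theta(\chi)<2$ for every faithful $\chi$: faithfulness gives $g>0$ a.e., and I would argue that $\norm{\chi - \chi\circ\theta_t}$ stays bounded away from $2$ uniformly in $t$ by comparing $\chi$ to the invariant reference $\chi_0$ and using that the overlap $\int \min(g, g\circ F_{-t})\,d\mu$ cannot collapse to $0$---invariance of $g_0$ prevents the mass from escaping to infinity in the flow direction. The delicate point is ruling out the supremum equalling $2$ even without it being attained; here $\theta$-regularity $\chi\circ\theta_t\ge e^{-t}\chi$ provides the crucial one-sided control forcing a definite residual overlap. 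Conversely, if no invariant normal state exists, I expect that for \emph{every} normal state the orbit wanders off so that $\norm{\chi - \chi\circ\theta_t}\to 2$ as $t\to\oo$ (total separation of supports in the limit). The argument I would use: non-existence of an invariant normal state means the flow is \emph{dissipative} or at least that no $L^1$-density is invariant, which by an ergodic-theoretic argument (e.g.\ a mean-ergodic/wandering-set argument, or the absence of a finite invariant measure absolutely continuous w.r.t.\ $\mu$) forces $\int\min(g,g\circ F_{-t})\,d\mu\to 0$ along a sequence $t\to\oo$. The hard part will be making this limiting disjointness rigorous while respecting the $\theta$-regularity constraint---I would likely invoke the Hopf decomposition and the fact that an ergodic conservative flow with no finite invariant density exhibits exactly this asymptotic orthogonality, so that $\sup_t\norm{\chi-\chi\circ\theta_t}=2$ for all $\chi\in\nstates(\A)$, and in particular $\diam\states_*^\theta(\A)=2$, consistent with $T=\oo$ giving $2\frac{1-e^{-\oo/2}}{1+e^{-\oo/2}}=2$.
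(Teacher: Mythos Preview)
Your periodic case is essentially the paper's argument: reduce to $L^\oo(\RR/T\ZZ)$, note the invariant state is uniform, exhibit the exponential density $p(s)\propto e^{-s}$, and compute $\norm{\chi-\chi\circ\theta_{T/2}}$ explicitly. The paper also cites \cite[Lem.~10.2 \& 10.3]{haagerup_equivalence_1990} for the value of $\diam\states_*^\theta(\A)$ itself, so you should either derive that or cite it.

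The aperiodic case has two genuine gaps.

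\emph{First gap.} When an invariant normal state exists you only argue that $\kappa_\theta(\chi)<2$ for faithful $\chi$; you never show that $\sup_{\chi\in\states_*^\theta(\A)}\kappa_\theta(\chi)=2$ (and hence $\diam\states_*^\theta(\A)=2$) in this case. This is not free: every \emph{normal} $\theta$-regular state has $\kappa_\theta<2$, so the supremum is not attained in $\states_*^\theta(\A)$. The paper's device is to pass from $\A$ to the norm-continuous $C^*$-subalgebra $\A_\theta=\{a:\lim_{t\to0}\norm{a-\theta_t(a)}=0\}$, show $\sup_{\states_*^\theta(\A)}\kappa_\theta=\sup_{\states^\theta(\A_\theta)}\kappa_\theta$ by a weak*-approximation argument, and then use a \emph{pure} state on $\A_\theta$ with aperiodic orbit (whose existence takes work: lower semicontinuity of the minimal period, plus ergodicity). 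For such a pure state the computation $\norm{\chi_\omega-\chi_\omega\circ\theta_t}=2-2e^{-t}$ gives the value $2$ in the limit. Your proposal has no analogue of this step.

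\emph{Second gap.} Your ``overlap cannot collapse to $0$'' argument for $\kappa_\theta(\chi)<2$ is not a proof. The existence of an invariant density $g_0$ does not by itself prevent a \emph{different} density $g$ from becoming asymptotically singular to its own translates---you need a mechanism linking the two. The paper's argument (communicated by Marrakchi) goes through the Koopman representation $U_t$ on $L^2(X,\mu)$: writing $g_\chi=(d\nu_\chi/d\mu)^{1/2}$, one has $\kappa_\theta(\chi)=2$ iff $\inf_t\ip{g_\chi}{U_tg_\chi}=0$, and then invokes \cite[Prop.~4.5]{arano_ergodic_2021} to show this happens for some faithful $\chi$ iff no invariant normal state exists. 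The converse direction (no invariant state $\Rightarrow$ $\kappa_\theta(\chi)=2$ for all $\chi$) follows from the faithful case by a density argument using that $U_{t_n}\to0$ weakly on all of $L^2$. Your Hopf-decomposition/dissipativity sketch is pointing in the right direction but does not supply the actual mechanism; an ergodic flow with no finite invariant measure need not be dissipative (it can be conservative of type $\III$), so the wandering-set picture is misleading.
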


We begin by following \cite[Sec.~6]{haagerup_equivalence_1990} in noting that
\begin{equation}
    \A_\theta = \{a\in \A : \lim_{t\to0}\,\norm{a-\theta_t(a)}=0\}
\end{equation}
is a $\sigma$-weakly dense $C^*$-subalgebra.
Thus, $\nstates(\A)\subset \states(\A_\theta)$ is weak*-dense, where $\states(\A_\theta)$ denotes the state space of the $C^*$-algebra $\A_\theta$.
For $\omega\in\states(\A_\theta)$, we set
\begin{equation}
    \chi_\omega = \int_0^\oo e^{-t} \,\omega\circ\theta_{-t}\,dt \in \states(\A_\theta).
\end{equation}
Note that $\chi_\omega\in\nstates(\A)$ if $\omega\in\nstates(\A)$. 

\begin{lemma}[{\cite[Sec.~6]{haagerup_equivalence_1990}}]\label{lem:HS sec6}
    \begin{enumerate}[(i)]
        \item 
        $\omega \mapsto \chi_\omega$ is an affine bijection of the convex sets $\states(\A_\theta)$ and  $\states^\theta(\A_\theta) := \{\chi\in\states(\A_\theta) : \chi\circ\theta_t\ge e^{-t}\chi,\ t>0\}$.
        \item
        $\A_*$ is linearly spanned by $\states_*^\theta(\A)$.
    \end{enumerate}
\end{lemma}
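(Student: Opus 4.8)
The plan is to realize $\omega\mapsto\chi_\omega$ as precomposition with a resolvent operator and to invert it through the generator of the flow. Since $\A_\theta$ is by definition the algebra of norm-continuous vectors, the restriction of $\theta$ to $\A_\theta$ is a $C_0$-group of $*$-automorphisms with densely defined generator $\delta$ on a unital $*$-subalgebra $D(\delta)\subset\A_\theta$. Writing $R(a)=\int_0^\infty e^{-t}\theta_{-t}(a)\,dt$ as a norm-convergent Bochner integral in the $C^*$-algebra $\A_\theta$, one has $\chi_\omega=\omega\circ R$, and the standard resolvent formula identifies $R=(1+\delta)^{-1}$; in particular $R$ is injective with norm-dense range $D(\delta)$. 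First I would record the shift identity
\[
    e^{s}\,\chi_\omega\circ\theta_s-\chi_\omega=\int_0^s e^{v}\,\omega\circ\theta_v\,dv,\qquad s>0,
\]
obtained by substitution in the defining integral. Its right-hand side is positive on $\A_\theta^+$, which gives $\chi_\omega\circ\theta_s\ge e^{-s}\chi_\omega$, i.e.\ $\chi_\omega\in\states^\theta(\A_\theta)$. Affinity of $\omega\mapsto\chi_\omega$ is immediate, and injectivity follows because $\chi_\omega=\omega\circ R$ determines $\omega$ on the dense set $D(\delta)=\operatorname{ran}R$.

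The hard part of (i) is surjectivity, and this is where $\theta$-regularity is used. Given $\chi\in\states^\theta(\A_\theta)$, I would define $\omega_0:=\chi\circ(1+\delta)$ on $D(\delta)$, so that $\omega_0\circ R=\chi$ by construction, and then show $\omega_0$ is a state. Unitality is clear since $\delta(1)=0$. Positivity is the crux: for $b\in D(\delta)$ with $b\ge0$,
\[
    \omega_0(b)=\chi(b)+\chi(\delta b)=\lim_{t\to0^+}\frac{\chi(\theta_t(b))-(1-t)\chi(b)}{t}\ge\lim_{t\to0^+}\frac{(e^{-t}-1+t)\chi(b)}{t}=0,
\]
where the inequality uses the regularity bound $\chi(\theta_t(b))\ge e^{-t}\chi(b)$. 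With positivity in hand, applying $\omega_0$ to $\norm{a}^2\,1-a^*a\in D(\delta)$ (which is positive and in $D(\delta)$) together with the Cauchy–Schwarz inequality for positive functionals gives $\abs{\omega_0(a)}\le\norm{a}$. Hence $\omega_0$ is bounded of norm one with $\omega_0(1)=1$, extends to a state $\omega$ on $\A_\theta$, and satisfies $\chi_\omega=\chi$. Constructing this inverse $\chi\mapsto\chi\circ(1+\delta)$ and checking it lands in the states is the main obstacle of the lemma.

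For (ii) I would first reduce to normal states: $\A_*$ is the linear span of $\nstates(\A)$, so it suffices to put each $\phi\in\nstates(\A)$ into the span of $\states_*^\theta(\A)$. Both $\chi_\phi$ and $\chi_\phi\circ\theta_s$ (for $s>0$) are $\theta$-regular normal states, so the shift identity above, now with $\omega=\phi$, shows $\int_0^s e^{v}\,\phi\circ\theta_v\,dv\in\lin\,\states_*^\theta(\A)$ for every $s>0$. Invoking the standard fact that for a $\sigma$-weakly continuous flow the predual action $t\mapsto\phi\circ\theta_t$ is a $C_0$-group (norm-continuous), I would divide by $s$ and let $s\to0^+$ to recover $\phi$ as a norm limit, so that $\phi\in\overline{\lin}\,\states_*^\theta(\A)$ and therefore $\overline{\lin}\,\states_*^\theta(\A)=\A_*$.

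I would flag that this argument yields only the norm-closed span, and that this is unavoidable in general: already for translation on $\A=L^\oo(\RR)$ the $\theta$-regular positive densities are those $\phi$ with $e^{x}\phi(x)$ non-decreasing, whose differences are exactly the weighted bounded-variation densities, so no genuine finite linear combination can reach all of $L^1(\RR)$. Accordingly I read "spanned" as norm-dense span, which is the form in which the statement is used in the subsequent computation of $\diam\states_*^\theta(\A)$ in \cref{thm:state space diam via flow}.
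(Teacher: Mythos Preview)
Your argument for (i) is correct and goes well beyond the paper, which only sketches the easy direction---your shift identity is exactly the paper's computation that $\chi_\omega\in\states^\theta(\A_\theta)$. The resolvent identification $R=(1+\delta)^{-1}$ and the recovery of $\omega$ via $\chi\circ(1+\delta)$ on $D(\delta)$, with positivity forced by the $\theta$-regularity bound, is clean and correct; the one point worth making explicit is that $D(\delta)$ is a unital $*$-subalgebra (so that $\|a\|^2 1-a^*a\in D(\delta)$ and Cauchy--Schwarz applies), which follows from the Leibniz rule for the generator of an automorphism group. The paper defers all of this to \cite{haagerup_equivalence_1990}, so there is no in-text proof to compare against.

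Your reading of (ii) is sharp. The literal linear-span statement is false, and your counterexample for $\A=L^\infty(\RR)$ with translations is valid: any finite combination $\sum_i c_i\psi_i$ of $\theta$-regular densities has $s\mapsto e^s\sum_i c_i\psi_i(s)$ equal to a finite combination of non-decreasing functions, hence locally of bounded variation, which excludes for instance $\phi(s)=e^{-s}g(s)1_{[0,1]}(s)$ with $g$ continuous but of infinite variation on $[0,1]$. The norm-closed span you establish is exactly what is needed---though the application in the paper is in the proof of \cref{prop:universal-catalyst} rather than \cref{thm:state space diam via flow}, and there the dense span suffices because $\chi\mapsto\chi\circ\theta_t-\chi$ is norm-continuous on $\A_*$.
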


We briefly sketch the easy part of the proof. Namely, we check that $\chi_\omega\in\states^\theta(\A_\theta)$ for all $\omega\in\states(\A_\theta)$:
If $t>0$, then
\begin{equation*}
    e^t \,\chi_\omega \circ\theta_t =\int_0^\oo\! e^{t-s} \omega\circ\theta_{t-s} \,ds 
    =  \int_{-t}^\oo\! e^{-s} \omega\circ\theta_{-s} \,ds
    \ge \int_0^\oo\! e^{-s}\omega\circ\theta_{-s} \,ds= \chi_\omega.
\end{equation*}

\begin{lemma}\label{lem:sup kappa C*}
    $\sup_{\chi\in\states_*^\theta(\A)} \kappa_\theta(\chi) = \sup_{\chi\in \states^\theta(\A_\theta)} \kappa_\theta(\chi)$.
\end{lemma}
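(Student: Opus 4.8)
The plan is to prove the two inequalities separately, placing both suprema inside the common space $\states(\A_\theta)$ of states on the $C^*$-algebra $\A_\theta$.

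First I would record the elementary but crucial observation that for a \emph{normal} functional $\chi\in\A_*$ the norm is unchanged by restriction to $\A_\theta$. Indeed, since $\A_\theta$ is $\sigma$-weakly dense, Kaplansky's density theorem makes its unit ball $\sigma$-weakly dense in the unit ball of $\A$, and a normal functional is $\sigma$-weakly continuous, so $\sup_{a\in\A_\theta,\,\norm a\le1}\abs{\chi(a)}=\norm\chi$. Because each $\theta_t$ maps $\A_\theta$ into itself, the same holds for the normal functional $\chi-\chi\circ\theta_t$, and hence $\kappa_\theta(\chi)$ takes the same value whether computed on $\A$ or on $\A_\theta$. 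Restriction therefore sends $\states_*^\theta(\A)$ into $\states^\theta(\A_\theta)$ while preserving the value of $\kappa_\theta$, which gives the inequality $\le$ immediately.

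For the reverse inequality $\ge$ I would use the affine bijection $\omega\mapsto\chi_\omega$ of \cref{lem:HS sec6}. The key point is that it restricts to a map $\nstates(\A)\to\states_*^\theta(\A)$: for normal $\omega$ the functional $\chi_\omega(a)=\int_0^\oo e^{-t}\,\omega(\theta_{-t}(a))\,dt$ is again normal, and it is $\theta$-regular by the computation recorded just after \cref{lem:HS sec6}. Given any $\chi\in\states^\theta(\A_\theta)$ I would write $\chi=\chi_\omega$ and, using the weak*-density of $\nstates(\A)$ in $\states(\A_\theta)$, pick a net $\omega_\alpha\in\nstates(\A)$ with $\omega_\alpha\to\omega$ weak*. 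One checks that $\omega\mapsto\chi_\omega$ is weak*-to-weak* continuous (for fixed $a\in\A_\theta$ the quantity $\omega_\alpha(\theta_{-t}(a))$ converges pointwise in $t$ and is dominated by $\norm a$, so $\chi_{\omega_\alpha}(a)\to\chi_\omega(a)$ by dominated convergence), so that $\chi_{\omega_\alpha}\to\chi$ weak* with each $\chi_{\omega_\alpha}\in\states_*^\theta(\A)$.

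The proof then closes by lower semicontinuity. I would observe that $\kappa_\theta$ is weak* lower semicontinuous on $\states(\A_\theta)$: for fixed $t>0$ and $a\in\A_\theta$ the map $\chi\mapsto\chi(a-\theta_t(a))$ is weak*-continuous, so $\chi\mapsto\norm{\chi-\chi\circ\theta_t}$ is a supremum of continuous functions, and $\kappa_\theta$ is a further supremum over $t>0$. Hence $\kappa_\theta(\chi)\le\liminf_\alpha\kappa_\theta(\chi_{\omega_\alpha})\le\sup_{\states_*^\theta(\A)}\kappa_\theta$, and taking the supremum over $\chi\in\states^\theta(\A_\theta)$ yields $\ge$. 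The main obstacle I anticipate is precisely the normality of $\chi_\omega$: since the flow is only point-$\sigma$-weakly continuous, $t\mapsto\omega\circ\theta_{-t}$ need not be norm-continuous, so the defining integral cannot be read as a Bochner integral in $\A_*$. I would resolve this with a monotone-convergence argument — a bounded increasing net $a_\beta\nearrow a$ is carried by each $\theta_{-t}$ to an increasing net, and the integral respects the resulting monotone limit — which shows that the merely weak* (pointwise) definition of $\chi_\omega$ already produces a normal functional.
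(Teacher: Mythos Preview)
Your argument is correct and follows the same route as the paper: write $\chi=\chi_\omega$, approximate $\omega$ weak* by normal states $\omega_\alpha$, and use weak*-lower semicontinuity to pass to the limit. The paper's version is terser in two ways: it takes the normality of $\chi_\omega$ for normal $\omega$ as already noted (your monotone-convergence argument is a correct justification), and instead of invoking lower semicontinuity of the full $\kappa_\theta$ it first fixes a near-optimal $t>0$ and uses lower semicontinuity of the single norm $\chi\mapsto\norm{\chi-\chi\circ\theta_t}$, which suffices and avoids the extra supremum.
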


\begin{proof}
    The inequality "$\le$" is trivial.
    Given $\eps>0$, let $t>0$, $\chi\in\states^\theta(\A_\theta)$ be such that $\norm{\chi-\chi\circ\theta_t}$ is $\eps$-close to the supremum on the RHS. 
    By \cref{lem:HS sec6}, $\chi=\chi_\omega$ for a unique $\omega\in\states(\A_\theta)$.
    Let $(\omega_\alpha)$ be a net of normal states on $\A$ that weak*-converges to $\omega$.
    Then $\chi_{\omega_\alpha}\in \states^\theta_*(\A)$. This and weak*-lower semicontinuity of the norm on $\A_\theta^*$ imply
    \begin{equation*}
        \RHS -\eps \le \norm{\chi_\omega-\chi_\omega\circ\theta_t} \le \liminf_\alpha \norm{\chi_{\omega_\alpha}-\chi_{\omega_\alpha}\circ\theta_t} \le \LHS.
        \qedhere
    \end{equation*}
\end{proof}

\Cref{lem:sup kappa C*} reduces much of the analysis that we have to do to the point-norm continuous flow on $\A_\theta$.
In general, if $\theta:\RR\acts\B$ is a point-norm continuous action on a $C^*$-algebra $\B$, then the map $\B^*\times\RR\ni (\omega,t)\mapsto \omega\circ\theta_t\in\B^*$ is jointly continuous when the dual space is equipped with the weak*-topology.%
\footnote{
This follows from the triangle inequality: If $(\omega_\alpha,t_\alpha)\to(\omega,t)$ then $\abs{(\omega\circ\theta_t- \omega_\alpha\circ\theta_{t_\alpha})(a)}
    \le \abs{(\omega-\omega_\alpha)(\theta_t(a))} + \norm\omega \norm{(\theta_t-\theta_{t_\alpha})(a)}\to 0$.

}

% Further following \cite[Sec.~6]{haagerup_equivalence_1990}, we note that, by the Gelfand-Naimark theorem, $\A_\theta = C(X)$ for a compact Hausdorff space $X$, namely $X$ is the set of pure states on $\A_\theta$ equipped with the weak*-topology.
% The norm-continuous action $\theta$ on $\A_\theta$, defines a continuous flow $F:\RR\acts X$ via $F_t(\omega) = \omega \circ\theta_t$.
% The map $(\omega,t)\mapsto \omega\circ\theta_t$ is jointly continuous since
% \begin{align*}
%     \abs{(\omega\circ\theta_t- \nu\circ\theta_s)(f)}
%     \le \abs{(\omega-\nu)(\theta_t(f))} + \norm{(\theta_t-\theta_s)(f)}. 
% \end{align*}

\begin{lemma}\label{lem:per is lsc}
    Let $\B$ be a $C^*$-algebra and let $\theta:\RR\acts\B$ be point-norm continuous $\RR$-action.
    For a state $\omega$ on $\B$, denote by 
    \begin{equation}
        \per_\theta(\omega) = \inf\{t>0 : \omega\circ\theta_t =\omega\} \qquad (\text{with}\ \inf \emptyset=\oo)
    \end{equation}
    its minimal period under $\theta$.
    Then $\per_\theta:\states(\B)\to\bar\RR^+$ is weak*-lower semicontinuous.
    % Then the function $\per_\theta:\states(\B)\to\bar\RR^+$ that assigns to a state $\omega$ on $\B$ the minimal period of its orbit
\end{lemma}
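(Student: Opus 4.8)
The plan is to prove that every sublevel set $\{\omega\in\states(\B):\per_\theta(\omega)\le c\}$, $c\in[0,\oo)$, is weak*-closed; this is equivalent to weak*-lower semicontinuity of $\per_\theta$. The single tool I would use throughout is the joint continuity of $(\omega,t)\mapsto\omega\circ\theta_t$ on $\states(\B)\times\RR$ for the weak*-topology, established in the footnote above. Specializing it to a fixed state shows that $t\mapsto\omega\circ\theta_t$ is weak*-continuous, so the stabilizer $G_\omega=\{t\in\RR:\omega\circ\theta_t=\omega\}$ is the preimage of the weak*-closed singleton $\{\omega\}$, hence a closed subgroup of $\RR$. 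As the closed subgroups of $\RR$ are $\{0\}$, $\RR$, and $p\ZZ$ ($p>0$), the value $\per_\theta(\omega)$ is $0$ exactly when $\omega$ is $\theta$-invariant, is $\oo$ when $G_\omega=\{0\}$, and otherwise equals the generator $p$ of $G_\omega=p\ZZ$; in particular, when finite and positive it is attained, i.e.\ $\omega\circ\theta_{\per_\theta(\omega)}=\omega$.

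For $c=0$ the sublevel set is the set of $\theta$-invariant states, which is closed: if $\omega_\alpha\to\omega$ with each $\omega_\alpha$ invariant, then for every $t$ joint continuity gives $\omega\circ\theta_t=\lim_\alpha\omega_\alpha\circ\theta_t=\lim_\alpha\omega_\alpha=\omega$. Now fix $c>0$ and let $(\omega_\alpha)$ be a net in the sublevel set with $\omega_\alpha\to\omega$ weak*. For each $\alpha$ I would choose $q_\alpha\in(0,c]$ with $\omega_\alpha\circ\theta_{q_\alpha}=\omega_\alpha$: if $\per_\theta(\omega_\alpha)>0$ this is the attained minimal period, and if $\omega_\alpha$ is invariant any value works, say $q_\alpha=c$. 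Since $(0,c]$ is relatively compact, I may pass to a subnet along which $q_\alpha\to q\in[0,c]$, and then joint continuity applied to $(\omega_\alpha,q_\alpha)\to(\omega,q)$ yields $\omega\circ\theta_q=\lim_\alpha\omega_\alpha\circ\theta_{q_\alpha}=\lim_\alpha\omega_\alpha=\omega$. If $q>0$, then $q\in G_\omega$ and hence $\per_\theta(\omega)\le q\le c$, as desired.

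The step I expect to be the main obstacle is the borderline case $q=0$, where the relation $\omega\circ\theta_q=\omega$ is vacuous and gives no immediate control on $\omega$. The resolution is that minimal periods shrinking to zero force the limit to be invariant. Iterating $\omega_\alpha\circ\theta_{q_\alpha}=\omega_\alpha$ gives $\omega_\alpha\circ\theta_{kq_\alpha}=\omega_\alpha$ for all $k\in\ZZ$; since $q_\alpha\to0$ the subgroups $q_\alpha\ZZ$ become dense, so for fixed $t\in\RR$ the choice $k_\alpha=\lfloor t/q_\alpha\rfloor$ satisfies $k_\alpha q_\alpha\to t$. Joint continuity applied to $(\omega_\alpha,k_\alpha q_\alpha)\to(\omega,t)$ then gives $\omega\circ\theta_t=\lim_\alpha\omega_\alpha\circ\theta_{k_\alpha q_\alpha}=\lim_\alpha\omega_\alpha=\omega$, and since $t$ was arbitrary, $\omega$ is $\theta$-invariant, so $\per_\theta(\omega)=0\le c$. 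In either case $\omega$ lies in the sublevel set, which is therefore weak*-closed, and lower semicontinuity follows.
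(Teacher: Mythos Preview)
Your proof is correct and follows the same overall strategy as the paper: show that each sublevel set $\{\per_\theta\le c\}$ is weak*-closed by selecting, along the net, periods $q_\alpha$ and invoking the joint continuity of $(\omega,t)\mapsto\omega\circ\theta_t$ on a subnet where the periods converge.

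The one genuine difference is how the borderline $q=0$ is handled. You treat it as a separate case: when $q_\alpha\to0$ you take integer multiples $k_\alpha q_\alpha$ to approximate an arbitrary $t$ and deduce that the limit state is $\theta$-invariant. The paper avoids this case entirely by choosing, for each $\alpha$, a period $t_\alpha$ in the interval $(c/2,c]$ rather than the minimal period: since any positive period $s_\alpha\le c$ has a power-of-two multiple landing in $(c/2,c]$, one can always arrange $t_\alpha\in(c/2,c]$, so the subnet limit $t$ lies in $[c/2,c]$ and is automatically nonzero. This buys the paper a shorter argument with no case distinction; your approach is slightly longer but arguably more direct, and the extra case is handled cleanly.
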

\begin{proof}
    We show that the sublevel sets of $\per_\theta$ are weak*-closed.
    Let $T>0$ and let $(\omega_\alpha)$ be a weak*-Cauchy net of states on $\B$ with minimal period $\le T$.
    For each $\alpha$, let $t_\alpha$ be a period for $\omega_\alpha$ in the interval $(T/2,T]$.
    To see that these exist, take any period $s_\alpha\in (0,T]$ of $\omega_\alpha$, which exist since $\per_\theta(\omega_\alpha)\le T$, and let $t_\alpha = 2^{n_\alpha}s_\alpha$, where $n_\alpha$ is the smallest integer $\ge\log_2(T/s_\alpha)-1$.
    We can pass to a subnet of $(\omega_\alpha)$ such that $t = \lim_\alpha t_\alpha$ exists since $[T/2,T]$ is compact.
    We claim that $t$ is a period for $\omega=\lim_\alpha\omega_\alpha$. 
    Indeed, joint continuity of $(s,\nu)\mapsto \nu\circ\theta_s$ implies
    \begin{equation}
        \omega\circ \theta_t = \lim_\alpha \omega_\alpha\circ \theta_{t_\alpha} = \lim_\alpha \omega_\alpha = \omega.
    \end{equation}
    Since $t\le T$ is a period for $\omega$, the minimal period of $\omega$ is bounded by $T$.
\end{proof}

\begin{lemma}\label{lem:aperiodic pure state}
    Let $\theta:\RR\acts\B$ be a point-norm continuous flow on an abelian $C^*$-algebra $\B$.
    Let $\omega$ be a pure state that is aperiodic under $\theta$.
    Then $\chi=\int_0^\oo e^{-t}\omega\circ\theta_{-t}dt \in\states^\theta(\B)$ satisfies
    \begin{equation}
        \lim_{t\to\oo}\,\norm{\chi-\chi\circ\theta_t} =2.
    \end{equation}    
\end{lemma}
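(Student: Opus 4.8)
The plan is to dualize the whole picture via Gelfand theory. Since $\B$ is abelian, I would write $\B\cong C_0(X)$ for a locally compact Hausdorff space $X$, under which the point-norm continuous flow $\theta$ is induced by a continuous flow $(F_t)$ on $X$ via $\theta_t(f)=f\circ F_{-t}$, and the pure state $\omega$ becomes an evaluation $\delta_x$ at some $x\in X$. Point-norm continuity of $\theta$ forces the orbit map $\gamma:\RR\to X$, $\gamma(u)=F_u(x)$, to be continuous (for each $f$, $u\mapsto f(\gamma(u))=\theta_{-u}(f)(x)$ is continuous, which is the initial topology of $X$). Crucially, aperiodicity of $\omega$, i.e. $\per_\theta(\omega)=\oo$, is exactly injectivity of $\gamma$: $F_u(x)=x$ forces $u=0$, so $\gamma(u)=\gamma(u')$ forces $u=u'$.

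With this dictionary the two functionals become explicit probability measures carried by the orbit. I would compute $\chi(f)=\int_0^\oo e^{-u}f(\gamma(u))\,du$ and, after the substitution $u=t-s$, $\chi\circ\theta_s(f)=\int_{-s}^\oo e^{-(u+s)}f(\gamma(u))\,du$; thus $\chi$ is the push-forward along $\gamma$ of $e^{-u}1_{[0,\oo)}(u)\,du$ and $\chi\circ\theta_s$ the push-forward of $e^{-(u+s)}1_{[-s,\oo)}(u)\,du$, both probability measures. The upper bound $\norm{\chi-\chi\circ\theta_s}\le 2$ is then automatic since both are states.

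For the matching lower bound I would exploit that, as $s\to\oo$, the bulk of $\chi$ sits near $u=0$ while the bulk of $\chi\circ\theta_s$ sits near $u=-s$, and injectivity of $\gamma$ keeps these apart in $X$. Concretely, fix $R>0$ and $s>R$; the parameter intervals $[0,R]$ and $[-s,-s+R]$ are disjoint, so $K_1=\gamma([0,R])$ and $K_2=\gamma([-s,-s+R])$ are disjoint compact subsets of $X$. By Urysohn's lemma for locally compact Hausdorff spaces there is $f\in C_0(X)$ with $-1\le f\le 1$, $f|_{K_1}=1$ and $f|_{K_2}=-1$ (take $f=g_1-g_2$ with $g_i\in C_c(X)$ bump functions equal to $1$ on $K_i$ and vanishing on $K_{3-i}$, so that $\supp g_1\cap K_2=\emptyset$ and $\supp g_2\cap K_1=\emptyset$). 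Splitting each integral at the edge of its bump and bounding $f\ge -1$ on $[R,\oo)$, resp. $f\le 1$ on $[-s+R,\oo)$, gives $\chi(f)\ge 1-2e^{-R}$ and $\chi\circ\theta_s(f)\le -1+2e^{-R}$, whence $\norm{\chi-\chi\circ\theta_s}\ge(\chi-\chi\circ\theta_s)(f)\ge 2-4e^{-R}$ for every $s>R$. Since $\norm{\chi-\chi\circ\theta_s}\le 2$ always, letting $R\to\oo$ yields $\lim_{s\to\oo}\norm{\chi-\chi\circ\theta_s}=2$.

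The main obstacle I anticipate is not the final estimate but making the reduction airtight: verifying that a point-norm continuous $\RR$-action on $C_0(X)$ genuinely arises from a continuous flow on $X$, so that the orbit segments $K_1,K_2$ are compact, and that aperiodicity delivers \emph{global} injectivity of $\gamma$ rather than mere local injectivity, so that disjoint parameter intervals map to disjoint compact sets that a single continuous contraction can separate. Once aperiodicity has been translated into this separation statement, the exponential weight $e^{-u}$ does the rest automatically, and the exact value $2(1-e^{-s})$ even drops out if one keeps track of both bounds.
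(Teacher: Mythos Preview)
Your proof is correct, and the core insight---aperiodicity of $\omega$ means the orbit $u\mapsto\omega\circ\theta_{-u}$ is injective, so mass placed on disjoint parameter intervals lives on disjoint pieces of the spectrum---is exactly the one the paper uses. The execution differs: rather than constructing a separating test function via Urysohn, the paper rewrites $\chi-\chi\circ\theta_t$ algebraically as
\[
\int_0^\oo (1-e^{-t})e^{-s}\,\omega\circ\theta_{-s}\,ds \;-\; \int_{-t}^0 e^{-s-t}\,\omega\circ\theta_{-s}\,ds,
\]
observes that these two positive functionals are supported on disjoint orbit segments (parameters $[0,\oo)$ versus $[-t,0)$), hence orthogonal, and therefore the norm of the difference equals the sum of the norms, yielding the exact value $2(1-e^{-t})$ in one line. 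Your Urysohn argument makes the orthogonality step fully explicit at the cost of only getting a lower bound $2-4e^{-R}$ (though as you note, the exact value is recoverable). Regarding your stated obstacle: you do not actually need a jointly continuous flow on all of $X$. You only use that $u\mapsto\omega\circ\theta_{-u}$ is weak*-continuous (which follows from point-norm continuity of $\theta$) and injective (which is aperiodicity); this already makes $\gamma([0,R])$ and $\gamma([-s,-s+R])$ disjoint compacta in $X$, so Urysohn applies without further ado.
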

\begin{proof}
    We follow ideas from \cite[Secs.~6 \& 10]{haagerup_equivalence_1990}.
    If $t>0$, we have
    \begin{align*}
        \norm{\chi-\chi\circ\theta_t} 
        &= \norm{\int_0^{\oo}e^{-s}\omega\circ\theta_{-s} ds-\int_{-t}^\oo e^{-s-t}\omega\circ\theta_{-s} ds}\\
        &= \norm{\int_0^\oo (1-e^{-t})e^{-s} \omega\circ\theta_{-s} ds - \int_{-t}^0 e^{-s-t} \omega\circ\theta_{-s} ds}.
    \intertext{The assumption that $\omega$ is aperiodic and the fact that distinct pure states on abelian $C^*$-algebras are orthogonal imply that this is the norm difference of a pair of orthogonal positive linear functionals on $\B$. Therefore, we can continue the calculation as}
        &= \norm{\int_0^\oo (1-e^{-t})e^{-s} \omega\circ\theta_{-s} ds} + \norm{\int_{-t}^0 e^{-s-t} \omega\circ\theta_{-s} ds}\\
        &= (1-e^{-t})\, \norm{\int_{-t}^0 e^{-s} \omega\circ\theta_{-s} ds} + e^{-t}\,\norm{\int_{-t}^0 e^{-s}\omega\circ\theta_{-s} ds}\\
        &= 1-e^{-t} + e^{-t}\int_0^t e^{s} ds = 1-e^{-t} +e^{-t} (e^t-1) =2 - 2e^{-t} \to 2.\qedhere
    \end{align*}
\end{proof}

Next, we need a dichotomy for $\RR$-actions on abelian von Neumann algebras based on \cite[Sec.~4.5]{arano_ergodic_2021}.
The next Proposition and its proof were communicated to the author by Amine Marrakchi:

\begin{proposition}\label{prop:amine}
    Let $\theta:\RR\acts\A$ be a (not necessarily ergodic) $\RR$-action on an abelian von Neumann algebra.
    The following are equivalent:
    \begin{enumerate}[(a)]
        \item\label{it:amine1} there exists no $\theta$-invariant normal state on $\A$;
        \item\label{it:amine2} there is a faithful normal state $\chi$ on $\A$ such that $\kappa_\theta(\chi)=2$;
        \item\label{it:amine3} for every normal state $\chi$ on $\A$ it holds that $\kappa_\theta(\chi)=2$.
    \end{enumerate}
\end{proposition}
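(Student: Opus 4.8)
The plan is to reduce the statement to the classical ergodic theory of nonsingular flows. Represent $\A \cong L^\infty(X,\mu)$ with $\mu$ a faithful normal state (a probability measure, which exists by our separability convention), so that $\theta$ becomes a nonsingular flow $(T_t)$ on $(X,\mu)$, a normal state $\chi$ corresponds to an absolutely continuous probability measure $\nu_\chi = f_\chi\,\mu$ with $f_\chi \in L^1(\mu)^+$, the functional $\chi\circ\theta_t$ corresponds to the push-forward $(T_{-t})_*\nu_\chi$, and an invariant normal state corresponds to an absolutely continuous invariant probability measure. Since the predual norm is the $L^1$/total-variation norm, one has
\begin{equation*}
    \norm{\chi-\chi\circ\theta_t} = 2\Big(1-\int_X \min(f_\chi, f_{\chi\circ\theta_t})\,d\mu\Big),
\end{equation*}
so that $\kappa_\theta(\chi)<2$ is equivalent to a \emph{uniform overlap bound}: there is $c>0$ with $\int_X\min(f_\chi, f_{\chi\circ\theta_t})\,d\mu \ge c$ for all $t>0$. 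The cyclic scheme I would prove is (a)$\Rightarrow$(c)$\Rightarrow$(b)$\Rightarrow$(a).

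The implication (c)$\Rightarrow$(b) is immediate from the existence of a faithful normal state on $\A$. For (b)$\Rightarrow$(a) I would argue by contraposition: assuming an invariant normal state $\chi_0$ exists, with (invariant) support set $X_0$ on which $\nu_0:=\nu_{\chi_0}\sim\mu|_{X_0}$, every faithful normal $\chi$ has $\nu_\chi|_{X_0}=\psi\,\nu_0$ with $\psi>0$ $\nu_0$-a.e. Because $\nu_0$ is genuinely invariant, push-forward merely transports the density, $(T_{-t})_*(\psi\,\nu_0)=(\psi\circ T_t)\,\nu_0$, so the overlap restricted to $X_0$ equals $\int_{X_0}\min(\psi,\psi\circ T_t)\,d\nu_0 \ge \delta\,\nu_0(B_\delta\cap T_{-t}B_\delta)$ with $B_\delta=\{\psi\ge\delta\}$. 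Choosing $\delta$ with $\nu_0(B_\delta)>1/2$ (possible since $\nu_0(\{\psi>0\})=1$) yields $\nu_0(B_\delta\cap T_{-t}B_\delta)\ge 2\nu_0(B_\delta)-1>0$ uniformly in $t$, whence $\kappa_\theta(\chi)<2$. Thus an invariant normal state forces every faithful normal state to have $\kappa_\theta<2$, which negates (b).

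The real work is (a)$\Rightarrow$(c), once more by contraposition: from a single normal state $\chi$ with $\kappa_\theta(\chi)<2$ I must manufacture an invariant normal state. The device is Krylov--Bogolyubov: the Cesàro averages $\nu_S := \frac1S\int_0^S (T_{-t})_*\nu_\chi\,dt$ are normal states, and any weak${}^*$ limit point $\Lambda\in\A^*$ is $\theta$-invariant (from the standard estimate $\norm{\nu_S\circ\theta_s-\nu_S}\le 2s/S$). The obstacle, and the heart of the proposition, is that weak${}^*$ limits of normal states need not be normal, so $\Lambda$ could be singular and useless. The uniform overlap bound is precisely what rescues this. Fixing a truncation set $A=\{\delta\le f_\chi\le M\}$ with $\int_A f_\chi\,d\mu$ close to $1$, the overlap bound gives $\int_A f_{\chi\circ\theta_t}\,d\mu\ge c'$ for all $t$, while the truncated densities $g_t:=\min(f_\chi,f_{\chi\circ\theta_t})\,1_A$ obey $0\le g_t\le f_\chi$ and therefore form a \emph{uniformly integrable} family dominated by a fixed $L^1$ function. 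By Dunford--Pettis the averages $G_S:=\frac1S\int_0^S g_t\,dt$ have a weakly $L^1$-convergent subnet with limit $g_\infty\ge 0$ and $\int_X g_\infty\,d\mu\ge c'>0$; since $g_t\,\mu \le (T_{-t})_*\nu_\chi$ as measures, passing to the limit yields $g_\infty\,\mu\le\Lambda$, so the normal part $\Lambda_n$ of $\Lambda$ dominates the nonzero normal functional $g_\infty\,\mu$ and is nonzero. Finally, because automorphisms preserve the canonical normal/singular (Takesaki) decomposition, invariance of $\Lambda$ descends to $\Lambda_n$, and $\Lambda_n/\Lambda_n(1)$ is the sought invariant normal state, contradicting (a).

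I expect the main difficulty to be exactly this last point: guaranteeing that the time-averaged limit state retains a normal component. The key insight is that $\kappa_\theta(\chi)<2$ supplies a fixed, $L^1$-dominated, uniformly integrable pool of overlapping mass that cannot escape into a purely singular limit. The remaining verifications (the total-variation formula, the measurability and continuity needed to form the Cesàro integrals, and the invariance of the normal/singular decomposition under $\theta_s$) I would treat as routine and standard.
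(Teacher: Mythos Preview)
Your proof is correct and takes a genuinely different route from the paper's. The paper works in the Koopmann $L^2$ picture: it passes to square roots $g_\chi=(d\nu_\chi/d\mu)^{1/2}\in L^2(X,\mu)^+$ and the unitary implementation $U_t$, reduces $\kappa_\theta(\chi)=2$ to $\inf_t\langle g_\chi,U_tg_\chi\rangle=0$ via the Fuchs--van de Graaf type estimate (their Lemma~\ref{lem:the-standard-estimate}), cites \cite[Prop.~4.5]{arano_ergodic_2021} for (a)$\Leftrightarrow$(b), and then proves (b)$\Rightarrow$(c) by showing that $\langle g_\chi,U_{t_n}g_\chi\rangle\to0$ for a faithful $\chi$ forces $U_{t_n}\to0$ weakly on all of $L^2$. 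Your cycle is (a)$\Rightarrow$(c)$\Rightarrow$(b)$\Rightarrow$(a) entirely in $L^1$: the contrapositive of (a)$\Rightarrow$(c) is a Krylov--Bogolyubov argument, with Dunford--Pettis applied to the dominated family $g_t\le f_\chi$ ensuring that the weak$^*$ limit of the Ces\`aro averages retains a nonzero normal part; the contrapositive of (b)$\Rightarrow$(a) is the elementary overlap bound on the support of the invariant state. Your argument is self-contained and avoids the external reference the paper relies on for the hard direction; the paper's $L^2$ route is slicker once that reference is granted and makes the propagation (b)$\Rightarrow$(c) a one-line weak-operator argument. Two small remarks: the truncation to $A=\{\delta\le f_\chi\le M\}$ is not actually needed, since domination by $f_\chi\in L^1$ already gives uniform integrability; and the norm-continuity of $t\mapsto\chi\circ\theta_t$ in the predual (needed for the Bochner integrals) follows from the standard fact that point-$\sigma$-weak continuity of $\theta$ is equivalent to strong continuity of the predual action.
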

\begin{proof}
    Let $(X,\mu)$ be a standard measure space such that $\A=L^\oo(X,\mu)$.
    Let $\sigma:\RR\acts X$ be the $\mu$-nonsingular action such that $\theta_t(f)(x) = f( \sigma_t(x))$.
    We denote by $\sigma_t(\mu)$ the push-foward measure, i.e., $\sigma_t(\mu)(A) = \mu(\sigma_t^{-1}(A))=\mu(\sigma_{-t}(A))$, $t\in\RR$.
    While $\sigma_t$ does not preserve $\mu$, it does preserve the measure class of $\mu$, i.e.,  $\sigma_t(\mu)\sim \mu$, $t\in\RR$.
    We define the positive measurable function $m_t = \sqrt{d\sigma_t(\mu)/d\mu}: X \to \RR^+$.
    The operator $U_tf(x) = m_t \cdot (f\circ \sigma_{-t})$ is a unitary operator on $L^2(X,\mu)$, which implements the action $\theta:\RR\acts\A$ in the sense that
    \begin{equation}
        (U_t^* f U_t g) = \theta_t(f)\cdot g, \qquad t\in\RR
    \end{equation}
    (this is the so-called Koopmann representation of the dynamical system $\sigma:\RR\acts X$).
    For every $\chi\in\A_*^+$, there is a unique positive $\mu$-absolutely continuous measure $\nu_\chi$ on $X$ such that $\chi(f) = \int f d\nu_\chi$, $f\in L^\oo(X,\mu)=\A$.
    Note that the $1$-norm distance of measures is the norm distance on $\A_*$, i.e., $\norm{\chi-\chi'} =\norm{\nu_\chi-\nu_{\chi'}}_1$, and that the measure corresponding to $\chi\circ \theta_t$ is the push-forward measure $\sigma_t(\nu_\chi)$.\footnote{The $1$-norm is defined on signed measures as is twice the total variation norm, i.e., $\norm\mu_1 = 2 \sup_A \abs{\mu(A)}$, where the optimization is over measurable subsets $A\subset X$.}
    Thus, we have $\frac12 \kappa_\theta(\chi) = \sup_{t\in\RR} \norm{\nu_\chi - \sigma_t(\nu_\chi)}$.
    We set $g_\chi = (d\nu_\chi/d\mu)^{1/2} \in L^2(X,\mu)^+$, so that $\nu_\chi(f) = \ip{g_\chi}{fg_\chi}_{L^2}$.
    It follows that $U_tg_\chi = g_{\chi\circ\theta_t}$.
    Then \cref{lem:the-standard-estimate} below implies that 
    \begin{equation}\label{eq:equivalent-orthogonality}
        \inf_{t\in\RR}\, \ip{g_\chi}{U_tg_\chi}_{L^2} = 0  \iff \kappa_\theta(\chi) = \sup_{t\in\RR} \, \norm{\nu_\chi- \sigma_t(\nu_\chi)}_{1} = 2.
    \end{equation}
    Thus, \ref{it:amine1} $\Leftrightarrow$ \ref{it:amine2} follows from \cite[Prop.~4.5]{arano_ergodic_2021}.
    Next, we show \ref{it:amine3} $\Rightarrow$ \ref{it:amine2} based on an argument from \cite[Lem.~4.4]{arano_ergodic_2021}: 
    Let $t_n$ be such that $\ip{g_\chi}{U_{t_n}g_\chi}_{L^2}\to 0$.
    Since $\chi$ is faithful, $g_\chi\in L^2(X,\mu)^+$ has full support.
    For $f_1,f_2\in L^\oo(X,\mu)^+$, set $g_j= f_jg_\chi\in L^2(X,\mu)^+$.
    Then $g_j\le \norm{f_j}g_\chi$ implies $0\le \ip{g_1}{U_{t_n} g_2}_{L^2} \le \norm{f_1}\norm{f_2} \ip{g_\chi}{U_{t_n}g_\chi}\to0$.
    If we decompose non-positive functions $f_1,f_2\in L^\oo(X,\mu)$ into linear combinations of positive functions, it follows that $\ip{g_1}{U_{t_n}g_2}\to 0$ for all $g_1,g_2\in L^\oo(X,\mu)g_\chi$.
    Since $g_\chi$ has full support, the latter is a dense subspace of $L^2(X,\mu)$. Thus, $U_{t_n}\to 0$ in the weak operator topology.
    In particular, we get $\ip{g_{\chi'}}{U_{t_n}g_{\chi'}}\to 0$ for all $\chi'\in\nstates(\A)$.
    The claim follows from \eqref{eq:equivalent-orthogonality}.
\end{proof}

% The previous proof uses the following estimates, which are surely well-known, even though we could not locate them in the literature.

\begin{lemma}\label{lem:the-standard-estimate}
    Let $(X,\mu)$ be a $\sigma$-finite measure space and let $\nu,\nu'$ be $\mu$-absolutely continuous probability measures. Let $g\up\prime = (d\nu\up\prime/d\mu)^{1/2} \in L^2(X,\mu)^+$.
    Then
    \begin{equation}
        1-\delta \le \ip g{g'}_{L^2}\le (1-\delta^2)^{1/2}, \qquad \delta =\frac12\norm{\nu-\nu'}_1.
    \end{equation}
\end{lemma}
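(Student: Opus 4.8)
The plan is to reduce everything to elementary manipulations of the densities $g,g'\in L^2(X,\mu)^+$, exploiting that $\nu,\nu'$ are probability measures so that $\norm{g}_{L^2}^2=\nu(X)=1$ and likewise $\norm{g'}_{L^2}^2=1$. Writing $s=\ip{g}{g'}_{L^2}$, the two polarization identities $\int(g-g')^2\,d\mu=2(1-s)$ and $\int(g+g')^2\,d\mu=2(1+s)$ will carry most of the argument. The first step is to rewrite the total-variation quantity $\delta$ in terms of the densities: since $g^2-g'^2=d\nu/d\mu-d\nu'/d\mu$ is the density of the signed measure $\nu-\nu'$, whose total mass vanishes, the definition of the $1$-norm (as twice the supremum of $\abs{(\nu-\nu')(A)}$ over measurable $A$) gives $\delta=\tfrac12\norm{\nu-\nu'}_1=\tfrac12\int\abs{g^2-g'^2}\,d\mu=\tfrac12\int\abs{g-g'}\,(g+g')\,d\mu$, the supremum being attained on the set where $g^2\ge g'^2$.

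For the lower bound $1-\delta\le s$, I would use the pointwise inequality $\abs{g-g'}\le g+g'$, valid because $g,g'\ge0$. Multiplying by $\abs{g-g'}$ gives the pointwise estimate $(g-g')^2\le\abs{g-g'}\,(g+g')$; integrating and dividing by two yields $1-s=\tfrac12\int(g-g')^2\,d\mu\le\tfrac12\int\abs{g-g'}\,(g+g')\,d\mu=\delta$, which is exactly the asserted lower bound.

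For the upper bound $s\le(1-\delta^2)^{1/2}$, I would apply the Cauchy--Schwarz inequality to the factorization $\delta=\tfrac12\int\abs{g-g'}\,(g+g')\,d\mu$, obtaining $\delta\le\tfrac12\norm{g-g'}_{L^2}\norm{g+g'}_{L^2}=\tfrac12\sqrt{2(1-s)}\,\sqrt{2(1+s)}=\sqrt{1-s^2}$, where the last equality again invokes the two polarization identities. Squaring and rearranging gives $s^2\le1-\delta^2$, which is the claimed bound.

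Since all of these estimates are elementary, there is no genuine obstacle; the only point requiring a little care is the identification of $\delta$ with $\tfrac12\int\abs{g-g'}\,(g+g')\,d\mu$. Here one must confirm that the extremizing set in the definition of the total variation is $\{g^2\ge g'^2\}$ and that, because both measures have total mass one, the integral of $g^2-g'^2$ vanishes, so the two halves of the Hahn decomposition contribute equally and the supremum equals $\tfrac12\int\abs{g^2-g'^2}\,d\mu$.
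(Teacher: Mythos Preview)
Your argument is correct. The lower bound is the same as the paper's: the paper simply cites the ``standard estimate'' $\norm{g-g'}_{L^2}^2\le\norm{\nu-\nu'}_1$ without proof, and your pointwise inequality $(g-g')^2\le\abs{g-g'}(g+g')$ is exactly the content of that estimate written out.

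For the upper bound, however, your route is genuinely different. The paper embeds the problem into $\B(L^2(X,\mu))$: it forms the rank-one projections $\rho=\kettbra g$, $\rho'=\kettbra{g'}$, observes that $(\nu-\nu')(f)=\tr[(\rho-\rho')f]$ for multiplication operators $f$, deduces $\norm{\nu-\nu'}_1\le\norm{\rho-\rho'}_1$, and then invokes the explicit pure-state formula $\norm{\kettbra g-\kettbra{g'}}_1=2(1-\ip g{g'}^2)^{1/2}$. This is essentially the Fuchs--van de Graaf inequality specialized to pure states, in keeping with the paper's quantum-information framing. Your Cauchy--Schwarz argument on the factorization $\delta=\tfrac12\int\abs{g-g'}(g+g')\,d\mu$ is shorter, entirely classical, and avoids any operator-theoretic detour; it delivers the same bound with less machinery. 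The paper's approach, on the other hand, makes transparent why the inequality sits inside the general fidelity/trace-distance picture developed elsewhere in the text.
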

\begin{proof}
    The lower bound follows from the standard estimate $\norm{g-g'}_{L^2}^2 \le \norm{\nu-\nu'}_1$: 
       $1-\delta \le 1-\frac12\norm{g-g'}_{L^2}^2 = 1- (1-\ip{g}{g'}_{L^2}) = \ip{g}{g'}_{L^2}$.
    To obtain the upper bound, consider the density operators $\rho\up\prime=\kettbra{g\up\prime}$ on $L^2(X,\mu)$.
    Then $(\nu-\nu')(f) = \tr(\rho-\rho')f$, $f\in L^\oo(X,\mu)$, implies
    \begin{align*}
        \delta=2\norm{\nu-\nu'} 
        \le 2\norm{\kettbra g-\kettbra{g'}}_1 
        = (1-\ip g{g'}^2)^{1/2}.
    \end{align*}
    Rearranging shows the upper bound.
\end{proof}

% \begin{lemma}\label{lem:the-standard-estimate}
%     Let $(X,\mu)$ be a measure space and let $\nu,\nu'$ be $\mu$-absolutely continuous probability measures. Let $g\up\prime = (d\nu\up\prime/d\mu)^{1/2} \in L^2(X,\mu)^+$.
%     Then
%     \begin{equation}
%         \eps \le \ip{g}{g'}_{L^2(X,\mu)} \le 2\eps^{1/2}, \qquad \eps = 1-\frac12\norm{\nu-\nu'}_1.
%     \end{equation}
% \end{lemma}
% \begin{proof}
%     We may assume without loss of generality that $X=\supp(\nu)\cup\supp(\nu')$.
%     The lower bound follows from the standard estimate $\norm{g-g'}_{L^2}^2 \le \norm{\nu-\nu'}_1$: 
%     \begin{equation*}
%         \eps = 1- \frac12\norm{\nu-\nu'}_1 \le 1-\frac12\norm{g-g'}_{L^2}^2 = 1- (1-\ip{g}{g'}_{L^2}) = \ip{g}{g'}_{L^2}.
%         % \ip{g}{g'}_{L^2} = 1- (1-\ip{g}{g'}_{L^2})=1-\frac12\norm{g-g'}_{L^2}^2 \ge 1- \frac12\norm{\nu-\nu'}_1 = \eps.
%     \end{equation*}
%     To see the upper bound, we set $Y = \supp(\nu-\nu')_+$ and define $\eps>0$ such that $\frac12\norm{\nu-\nu'}_1 = 1-\eps$.
%     The fact that $\frac12\norm{\nu-\nu'}_1 = \nu(Y)-\nu'(Y) = \nu'(Y^c)-\nu(Y^c)$ implies $\nu(Y) \ge 1-\eps$ and $\nu'(Y^c)\ge 1-\eps$.
%     We note that $\norm{g-g\chi_{Y}}_{L^2}^2 = \nu(Y^c)\le \eps$ and $\norm{g'-g'\chi_{Y^c}}_{L^2}^2 = \nu'(Y) \le \eps$.
%     Since $g\chi_Y$ and $g'\chi_{Y^c}$ are orthogonal vectors with norm $\le 1$, we find
%     \begin{equation*}
%         \ip{g}{g'}_{L^2} \le \ip{g\chi_Y}{g'}_{L^2}+ \eps^{1/2} \le \ip{g\chi_Y}{g'\chi_{Y^c}}_{L^2} + 2\eps^{1/2} = 2\eps^{1/2}.\qedhere
%     \end{equation*}
% \end{proof}

\begin{proof}[Proof of \cref{thm:state space diam via flow} for periodic flows]
    The case where $T=0$ is trivial.
    Let $0<T<\oo$ be the minimal period of $\theta$. 
    By \cite[Lem.~10.2 \& 10.3]{haagerup_equivalence_1990}, $\diam\states_*^\theta(\A) = \diam\states(\A_\theta) = 2 (1-e^{-T/2})/(1+e^{-T/2})$.
    It is clear that $\kappa_\theta(\chi)\le \diam\states_*^\theta(\A)$ for all $\chi\in\states_*^\theta$.
    It remains to show that there exists a normal state $\chi$ on $\A$ such that $\kappa_\theta(\chi)=2 (1-e^{-T/2})/(1+e^{-T/2})$.
    By \cite[Lem.~10.2]{haagerup_equivalence_1990}, we may assume $\A_\theta = C(\RR/T\ZZ)$ with $\theta_t(f)=f\circ\sigma_t$, $\sigma_t(s)=s-t$ with addition mod $T$.
    Periodicity of the flow implies that there is a unique $\theta$-invariant normal state $\omega$ on $\A$, which is necessarily faithful.\footnote{
    Indeed, take any normal state $\omega_0$ on $\A$ and set $\omega= T^{-1}\int_0^T \omega_0\circ\theta_tdt$.
    Invariance is clear, faithfulness holds because $\supp(\omega)$ is a $\theta$-invariant nonzero projection, hence the identity, and uniqueness holds because $d\omega'/d\omega$ is $\theta$-invariant, hence constant, for every other $\theta$-invariant normal state $\omega'$.}
    It follows that $\omega(f) = T^{-1}\int_0^T f(t)dt$, $f\in C(\RR/T\ZZ)$ because the only $\sigma$-invariant Borel probability measure on $\RR/T\ZZ$ is the uniform distribution.
    Faithfulness of $\omega$ on $\A$ and $\sigma$-weak-density of $\A_\theta\subset \A$ lets us identify $\A$ with $L^\oo(\RR/T\ZZ)$.
    Let $\chi$ be the normal state on $\A=L^\oo(\RR/T\ZZ)$ with
    \begin{equation*}
        \chi(f) = \int_0^T p(s)f(s)\,ds, \qquad p(s) = (1-e^{-T})^{-1} e^{-s}.
    \end{equation*}
    We claim that the state $\chi\circ\theta_t$ has the density $g_t(s) p(s)$, where 
    \begin{equation*}
        g_t = e^{-t}1_{[0,T-t)} + e^{T-t} 1_{[T-t,T)}, \qquad t\in\RR,
    \end{equation*}
    with addition modulo $T$.
    Indeed, 
    \begin{align*}
        \chi( \theta_t(f)) %&= (1-e^{-T})^{-1} \int_0^T f(s-t)e^{-s}ds \\
        &= (1-e^{-T})^{-1} \bigg( \int_t^T f(s-t) e^{-s}ds + \int_0^t f(s+T-t) e^{-s}ds\bigg)\\
        &= (1-e^{-T})^{-1} \bigg( \int_0^{T-t} f(s) e^{-s-t}ds + \int_{T-t}^T f(s) e^{-s-t+T}ds\bigg)
        = \int_0^T f(s)g_t(s) p(s)\,ds.
    \end{align*}
    Since $g_t \ge e^{-t}$, it follows that $\chi$ is $\theta$-regular.
    Next, we compute the norm distance of $\chi$ and $\chi\circ\theta_{T/2}$:
    \begin{align*}
        \norm{\nu-\nu\circ\theta_{T/2}} &= \norm{p- g_tp}_{L^1}
        = (1-e^{-T})^{-1}\int_0^T \abs{1-g_t(s)} e^{-s}ds \\
        &= (1-e^{-T})^{-1}\bigg(\int_0^{T/2} \abs{1-e^{-T/2}} e^{-s}ds +  \int_{T/2}^T \abs{1-e^{T/2}} e^{-s}ds \bigg)\\
        &= (1-e^{-T})^{-1}\Big( (1-e^{-T/2})(1-e^{-T/2})+ (e^{T/2}-1)(e^{-T/2}-e^{-T}) \Big)\\
        % &= (1-e^{-T})^{-1}(1-e^{-T/2})\Big( (1-e^{-T/2})+ (e^{T/2}-1)e^{-T/2}) \Big)\\        &= (1-e^{T/2})^{-1}(1+e^{T/2})^{-1}  (1-e^{-T/2})2(1-e^{-T/2})
        &=2 (1-e^{-T/2})/(1-e^{T/2}).
    \end{align*}
    It follows that
    \begin{equation*}
        2 \frac{1-e^{-T/2}}{1-e^{T/2}} = \diam\states_*^\theta(\A) \ge \kappa_\theta(\chi) \ge \norm{\nu-\nu\circ\theta_{T/2}} =2 \frac{1-e^{-T/2}}{1-e^{T/2}}.
    \end{equation*}
    Thus, we have equality everywhere, which finishes the proof.
\end{proof}

\begin{proof}[Proof of \cref{thm:state space diam via flow} for aperiodic flows]
    Clearly, we have $2\ge \diam\states_*^\theta(\A) \ge \sup_{\chi\in\states_*^\theta(\A)} \kappa(\chi)$.
    We have to show that the RHS equals $2$.

    \emph{Case 1: There exists an invariant normal state.}
    The claim that $\kappa_\theta(\chi)<2$ for all faithful normal states $\chi$ on $\A$ follows from \cref{prop:amine}.
    It remains to show $\sup_{\chi\in\states_*^\theta(\A)} \kappa(\chi)=2$.
    By \cref{lem:sup kappa C*}, we can equivalently optimize over states $\chi\in\states(\A_\theta)$.
    We claim that $\A_\theta$ necessarily admits aperiodic pure states. 
    In this case, our claim is implied by \cref{lem:aperiodic pure state}.
    To see that $\A_\theta$ has aperiodic pure states, we identify $\A_\theta = C(X)$ with $X$ being the set of pure states equipped with the weak*-topology.
    On the set of pure states, the flow acts as $\omega\mapsto\omega\circ\theta_t$, which defines a continuous flow $\sigma:\RR\acts X$.
    This flow is ergodic since the only $\sigma$-invariant continuous functions on $X$ are the constant functions (because $(\A_\theta)^\theta\subset\A^\theta = \CC1$).
    Let $\omega$ be an invariant normal state on $\A$.
    Ergodicity implies that $\omega$ is faithful since $\supp(\omega)$ is a $\theta$-invariant nonzero projection.
    Let $\mu$ be the $\sigma$-invariant Borel measure on $X$ such that $\omega(f)=\int_Xf\,d\mu$, $f\in C(X)=\A_\theta$.
    Since $\omega$ is faithful and $\A_\theta$ is $\sigma$-weakly dense in $\A$, we may identify $\A=L^\oo(X,\mu)$.
    By \cref{lem:per is lsc}, the minimal periodicity $\per_\sigma:X\to \bar\RR^+$, $\per(x)=\inf\{t>0: \sigma_t(x)=x\}$ is lower semicontinuous, hence Borel-measurable.
    Since $\per_\sigma$ is a $\sigma$-invariant function, i.e., $\per_\sigma(\sigma_t(x))=\per_\sigma(x)$ for all $t\in\RR$, ergodicity implies that it is constant $\mu$-almost everywhere, i.e., $\mu$-almost every point $x\in X$ has period $T$ for some fixed $T\in\bar\RR^+$.
    If $T<\oo$, then $\theta$ would be $T$-periodic on $\A$, which contradicts our assumption.
    Therefore, $\mu$-almost every pure state $\omega$ on $\A_\theta$ is aperiodic.
    
    \emph{Case 2: There exists no invariant normal state.}
    \cref{prop:amine} implies that every normal state $\chi$ on $\A$ has the property $\kappa_\theta(\chi)=2$.
    Since $2\ge \diam\states_*^\theta(\A) \ge \kappa(\chi)=2$ for every $\chi\in\states_*^\theta(\A)$, this implies $\diam\states_*^\theta(\A)=2$.
\end{proof}

We now prove \cref{thm:value of kappa min,thm:value of kappa max,prop:universal-catalyst} based on the above:

\begin{proof}[Proof of \cref{thm:value of kappa min}]
    Let $\theta:\RR\acts\A$ be the flow of weights of $\M$.
    \Cref{thm:kappa} implies $\kappa_{\min}(\M) = \inf \kappa_\theta(\chi)$, where the infimum ranges over $\theta$-regular normal states $\chi$ on $\A$.
    Thus, the equivalence between the existence of a $\theta$-invariant normal state and $\kappa_{\min}(\M)=0$ follows from \cref{prop:amine}.
    Since every $\theta$-invariant normal state is $\theta$-regular, it is the spectral state of a state on $\M$, which by \cref{thm:kappa} is universally catalytic.
    This finishes the proof.
\end{proof}

\begin{proof}[Proof of \cref{thm:value of kappa max}]
    Let $\theta:\RR\acts\A$ be the flow of weights of $\M$.
    The statement in \cref{thm:value of kappa max} directly follow from those in \cref{thm:state space diam via flow} because of the following:
    \begin{itemize}
        \item $\psi\mapsto\hat\psi$ is an isometric bijection $\nstates(\M)/_\sim\, \to \states_*^\theta(\A)$ (see \cref{thm:distance spectral states}),
        \item $\kappa(\psi) = \kappa_\theta(\hat\psi)$ and $\psi$ is a universalö catalytic if and only if $\hat\psi$ is $\theta$-invariant (see \cref{thm:kappa}),
        \item the flow of weights is periodic with minimal period $0\le T<\oo$ if and only if $\M$ is a type $\III_\lambda$ factor with $\lambda = e^{-T}>0$. \qedhere
    \end{itemize}
    % In the following, we will repeatedly use the fact that $\psi\mapsto\hat\psi$ is an isometric bijection $\nstates(\M)/_\sim\, \to \states_*^\theta(\A)$ (see \cref{thm:distance spectral states}).
    % By \cref{eq:kappas}, $\kappa_{\min}(\M) = \inf_{\chi\in\states_*^\theta(\A)} \kappa_\theta(\chi)$, which is either $0$ or $2$ by \cref{thm:state space diam via flow}.
    % With the help of \cref{eq:kappas}, \cref{thm:state space diam via flow} imply
    % \begin{equation*}
    %     \kappa_{\max}(\M)  = \sup_{\chi\in\states_*^\theta(\A)} \kappa_\theta(\chi) = \diam\states_*^\theta(\A) = \diam\nstates(\M)/_\sim,
    %     % \qquad \kappa_{\min}(\M) = \inf_{\chi\in\states_*^\theta(\A)} \kappa_\theta(\chi).
    % \end{equation*}
    % where we used the fact that $\psi\mapsto\hat\psi$ is an isometric bijection $\nstates(\M)/_\sim\, \to \states_*^\theta(\A)$ (see \cref{thm:distance spectral states}).
\end{proof}

\begin{proof}[Proof of \cref{prop:universal-catalyst}]
    The fact that the flow of weights of a direct integral is the direct integral of the corresponding flows of weights (see \cref{lem:core direct int}) implies that a von Neumann algebra $\M$ is of type $\III_1$ if and only if its flow of weights $\theta:\RR\acts\A$ is trivial, i.e., $\theta\equiv\id$ (in this case, $\A = Z(\M)$).
    If the flow of weights is trivial, then \cref{thm:kappa} implies $\kappa(\psi)=0$ for all $\psi\in\M_*^+$.
    Conversely, if $\kappa(\psi)=0$ for all $\psi\in\M_*^+$, then \cref{thm:kappa} implies that $\kappa_\theta(\chi)=0$ for all $\chi\in\states_*^\theta(\A)$.
    By \cref{lem:HS sec6}, $\A_*$ is spanned by $\states_*^\theta(\A)$.
    Thus, $\chi\circ\theta_t=\chi$ for all $\chi\in\A_*$, which implies $\theta_t=\id$, for all $t\in\RR$.
\end{proof}

\subsection{Catalysis of state transitions on general factors}\label{sec:catalysis general factors}

\begin{proposition}\label{prop:transitions on AFD factors}
    Let $\psi$ be a normal positive linear functional on a von Neumann algebra $\M$, and let $\N$ be an AFD factor.
    If $\omega_1,\omega_2$ are normal states on $\N$, then
    \begin{equation}
        \inf_{u \in \U(\N)} \ \norm{u(\psi\ox\omega_1)u^* - \psi\ox\omega_2} \le \kappa(\psi).
    \end{equation}
\end{proposition}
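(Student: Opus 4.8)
The plan is to identify the orbit distance with a distance of spectral states and then reduce an arbitrary AFD factor to the finite-dimensional case, where the convolution formula already does all the work. Throughout, the relevant conjugations are by unitaries of the amplified algebra $\M\barox\N$ (this is the group that makes the product functionals $\psi\ox\omega_i$ comparable, matching the definition of $\kappa$ and of approximate unitary equivalence), and the content of the statement is that these unitaries may be taken inside $\M\barox F$ for a finite-dimensional subfactor $F\subset\N$. First I would record the easy half: whenever $\N$ is a \emph{semifinite} factor with \nsf trace $\tau_0$, \cref{thm:distance spectral states} gives
\begin{equation*}
    \inf_{u\in\U(\M\barox\N)}\norm{u(\psi\ox\omega_1)u^*-\psi\ox\omega_2} = \norm{(\psi\ox\omega_1)^\wedge-(\psi\ox\omega_2)^\wedge},
\end{equation*}
where $\wedge$ is taken with respect to the flow of weights of $\M\barox\N$, which by \cref{prop:semifinite amp fow} is identified with that of $\M$. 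Then \cref{lem:kappa upper bound}, whose proof imposes \emph{no} restriction on the type of $\N$ (in contrast to the matching lower bound), bounds the right-hand side by $\sup_{t>0}\norm{\hat\psi-\hat\psi\circ\theta_t}=\kappa(\psi)$, using \cref{thm:kappa}. In particular this already settles the case of a full matrix factor $F\cong M_k(\CC)$.

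The main step is to pass from an arbitrary AFD factor $\N$ to such a matrix factor. Using that $\N$ is AFD, I would choose a finite-dimensional subfactor $F\cong M_k(\CC)\subset\N$ together with a normal conditional expectation $E\colon\N\to F$ such that $\norm\psi\,\norm{\omega_i-\omega_i\circ E}<\eps$ for $i=1,2$; write $\mu_i=\omega_i|_F$ and $\tilde\omega_i=\mu_i\circ E$. The reduction rests on the bimodule property of $\id_\M\ox E$: for $u\in\U(\M\barox F)$ and $X\in\M\barox\N$ one has $(\id_\M\ox E)(u^*Xu)=u^*(\id_\M\ox E)(X)u$, whence
\begin{equation*}
    u(\psi\ox\tilde\omega_1)u^* = \big(u(\psi\ox\mu_1)u^*\big)\circ(\id_\M\ox E),\qquad \psi\ox\tilde\omega_2=(\psi\ox\mu_2)\circ(\id_\M\ox E).
\end{equation*}
Since $\nu\mapsto\nu\circ(\id_\M\ox E)$ is isometric on $(\M\barox F)_*$ (it is the predual of a normal ucp map whose restriction to $\M\barox F$ is the identity), conjugating by unitaries from the subalgebra $\M\barox F$ reduces the orbit distance of $\psi\ox\tilde\omega_1,\psi\ox\tilde\omega_2$ to that of $\psi\ox\mu_1,\psi\ox\mu_2$ over $\U(\M\barox F)$. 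As $F$ is a matrix factor, the easy half above bounds this last quantity by $\kappa(\psi)$.

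Finally I would assemble the estimate. Since $\M\barox F\subset\M\barox\N$, restricting the infimum and applying the triangle inequality together with $\norm{\psi\ox\omega_i-\psi\ox\tilde\omega_i}=\norm\psi\,\norm{\omega_i-\omega_i\circ E}<\eps$ gives
\begin{equation*}
    \inf_{u\in\U(\M\barox\N)}\norm{u(\psi\ox\omega_1)u^*-\psi\ox\omega_2}\le 2\eps+\inf_{u\in\U(\M\barox F)}\norm{u(\psi\ox\tilde\omega_1)u^*-\psi\ox\tilde\omega_2}\le 2\eps+\kappa(\psi),
\end{equation*}
and letting $\eps\to0$ yields the claim.

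The hard part is the finite-dimensional approximation used above: for a \emph{general} AFD factor one must produce a matrix subfactor $F$ carrying a normal conditional expectation with $\omega_i\circ E\approx_\eps\omega_i$. I would obtain this by fixing a faithful normal state $\varphi$, using the AFD structure to choose a generating tower $F_1\subset F_2\subset\cdots$ of matrix subfactors that are globally invariant under the modular flow $\sigma^\varphi$, so that Takesaki's theorem supplies $\varphi$-preserving normal conditional expectations $E_m\colon\N\to F_m$, and then invoking noncommutative martingale convergence, $\norm{\omega_i\circ E_m-\omega_i}\to0$, to pick $m$ large enough and set $F=F_m$. Securing a modular-invariant matrix tower for every AFD factor (including the type $\III_0$ and $\III_1$ cases) is the only place the classification of AFD factors enters, and is the step requiring the most care; everything else is a clean reduction to \cref{lem:kappa upper bound}.
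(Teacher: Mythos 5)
Your reduction machinery is sound: the bimodule identity for $\id_\M\ox E$, the fact that $\nu\mapsto\nu\circ(\id_\M\ox E)$ is isometric on $(\M\barox F)_*$, and the observation that only the upper-bound half of the semifinite analysis is needed (\cref{lem:kappa upper bound} together with \cref{thm:distance spectral states}, \cref{prop:semifinite amp fow}, and \cref{thm:kappa}, with no exclusion of finite type $\I$) are all correct. The genuine gap is in your final paragraph: a generating tower of finite-dimensional subfactors that is globally invariant under the modular flow $\sigma^\varphi$ of some faithful normal state does \emph{not} exist in every AFD factor. Indeed, if $F\subset\N$ is a matrix subfactor invariant under $\sigma^\varphi$, the $\varphi$-preserving conditional expectation supplied by Takesaki's theorem is an $F$-bimodule map into $F$, hence of the form $\id_F\ox\phi_1$ with respect to the splitting $\N\cong F\barox(F'\cap\N)$; the relation $\varphi=\varphi\circ E$ then forces $\varphi=(\varphi\restriction F)\ox\phi_1$ to be a product state across this splitting. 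Iterating along your tower $F_1\subset F_2\subset\cdots$ exhibits $(\N,\varphi)$ as an infinite tensor product of finite type $\I$ factors with a product state, i.e., $\N$ would be ITPFI. But Connes showed that there exist AFD factors of type $\III_0$ that are not ITPFI \cite{connes1980_non_itpfi}, so for those no faithful normal state admits such a tower, and your construction of $E$ breaks down exactly in the case where the proposition is hardest.

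The paper's proof circumvents this by reducing to \emph{semifinite} rather than finite-dimensional subfactors: the martingale argument (\cref{lem:martingale}), run at the level of Haagerup--St{\o}rmer spectral states via \cite[Sec.~7]{haagerup_equivalence_1990}, needs only a net of compatible conditional expectations $E_\alpha:\N\to\N_\alpha$ onto semifinite subfactors with $\norm{\omega\circ(\id-E_\alpha)}\to0$, and for AFD type $\III_0$ factors such a net is provided by \cite[Prop.~8.3]{haagerup_equivalence_1990}; the product-state tower you envisage is legitimate, and is exactly what the paper uses, only for type $\III_\lambda$ with $0<\lambda\le1$, where AFD factors are ITPFI. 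Your reduction step survives this weakening verbatim: with $F$ a semifinite subfactor the bimodule identity and the isometry argument are unchanged, and the required bound over $\U(\M\barox F)$ still follows from \cref{lem:kappa upper bound}, since that lemma holds for every semifinite factor. So the fix is to replace "matrix subfactor with modular-invariant tower" by "semifinite subfactor with expectations as in \cite[Prop.~8.3]{haagerup_equivalence_1990}" in the type $\III_0$ case, which essentially lands you on the paper's argument.
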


We prove \cref{prop:transitions on AFD factors} by reducing the general case to the semifinite case via a martingale Lemma inspired by \cite[Thm.~8.1]{haagerup_equivalence_1990}.
To state it, we need the concept of a conditional expectation:
A conditional expectation onto a von Neumann subalgebra $\N$ of a von Neumann algebra $\M$ is a surjective ucp map $E:\M\to\N$ such that $E^2=E$.

\begin{lemma}\label{lem:martingale}
    Let $\N$ be a factor and let $\N_\alpha$ be an increasing net of subfactors, indexed by a directed set $I$ such that $\bigcup_\alpha \N_\alpha$ is $\sigma$-weakly dense in $\N$.
    Let $E_\alpha$, $\alpha\in I$, be conditional expectations onto $\N_\alpha$, such that $E_\alpha E_\beta = E_\alpha$ for all $\alpha\le\beta$, and such that
    \begin{equation*}
        \lim_\alpha\ \norm{\psi\circ (\id-E_\alpha)} = 0, \qquad \psi\in\M_*.
    \end{equation*}
    If \cref{prop:transitions on AFD factors} holds for all $\N_\alpha$, then it also holds for $\N$.
\end{lemma}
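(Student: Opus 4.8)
The plan is to use the conditional expectations $E_\alpha$ to transport the problem from the factor $\N$ down to the subfactors $\N_\alpha$, where the desired estimate holds by assumption, and then to invoke the martingale convergence hypothesis to absorb the error introduced by this transport. The key observation is that each $E_\alpha$ amplifies to a normal conditional expectation $T_\alpha := \id_\M\ox E_\alpha : \M\barox\N \to \M\barox\N_\alpha$, whose predual $(T_\alpha)_*$ is a norm contraction and which is covariant under conjugation by unitaries of $\N_\alpha$: since conditional expectations are $\N_\alpha$-bimodule maps, $E_\alpha(u^*cu)=u^*E_\alpha(c)u$ for $u\in\U(\N_\alpha)$, $c\in\N$, and hence $T_\alpha\big((1\ox u^*)\,a\,(1\ox u)\big)=(1\ox u^*)\,T_\alpha(a)\,(1\ox u)$ for all $a\in\M\barox\N$.

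First I would fix normal states $\omega_1,\omega_2$ on $\N$ and $\eps>0$, and replace each $\omega_i$ by $\omega_i\circ E_\alpha$. These satisfy the compatibility $\psi\ox(\omega_i\circ E_\alpha)=(\psi\ox\omega_i|_{\N_\alpha})\circ T_\alpha$, so applying the hypothesis (\cref{prop:transitions on AFD factors}) to the subfactor $\N_\alpha$ with the restricted states $\omega_i|_{\N_\alpha}$ produces a unitary $u_\alpha\in\U(\N_\alpha)\subset\U(\N)$ with
\[
    \norm{(1\ox u_\alpha)(\psi\ox\omega_1|_{\N_\alpha})(1\ox u_\alpha^*)-\psi\ox\omega_2|_{\N_\alpha}}_{(\M\barox\N_\alpha)_*}\le\kappa(\psi)+\eps .
\]
Composing the functional inside the norm with $T_\alpha$ and using both the covariance of $T_\alpha$ and the contractivity of $(T_\alpha)_*$, this passes to $\N$:
\[
    \norm{(1\ox u_\alpha)\big(\psi\ox(\omega_1\circ E_\alpha)\big)(1\ox u_\alpha^*)-\psi\ox(\omega_2\circ E_\alpha)}_{(\M\barox\N)_*}\le\kappa(\psi)+\eps .
\]

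To conclude, I would run a triangle-inequality argument replacing $\omega_i\circ E_\alpha$ by $\omega_i$. Conjugation by $1\ox u_\alpha$ is an isometry of the predual, and product functionals satisfy $\norm{\psi\ox\rho}_{(\M\barox\N)_*}\le\norm\psi\,\norm\rho$, so the two correction terms are bounded by $\norm\psi\,\norm{\omega_i\circ(\id-E_\alpha)}$, $i=1,2$. The martingale hypothesis $\lim_\alpha\norm{\omega_i\circ(\id-E_\alpha)}=0$ forces these to vanish for large $\alpha$, giving
\[
    \inf_{u\in\U(\N)}\,\norm{(1\ox u)(\psi\ox\omega_1)(1\ox u^*)-\psi\ox\omega_2}\le\kappa(\psi)+3\eps
\]
for suitable $\alpha$; letting $\eps\to0$ yields the claim. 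The main obstacle—really the step demanding care rather than difficulty—is verifying the two structural facts underpinning the reduction: the covariance identity $T_\alpha\circ\Ad(1\ox u^*)=\Ad(1\ox u^*)\circ T_\alpha$ for $u\in\U(\N_\alpha)$, which is what allows the subfactor unitary furnished by the proposition to lift to a genuine unitary of $\N$ implementing the transition, and the compatibility $\psi\ox(\omega_i\circ E_\alpha)=(\psi\ox\omega_i|_{\N_\alpha})\circ T_\alpha$ combined with the contraction property of $(T_\alpha)_*$. Both rest on $E_\alpha$ being a \emph{normal} $\N_\alpha$-bimodule projection and on the amplification $\id_\M\ox E_\alpha$ being a well-defined normal conditional expectation onto $\M\barox\N_\alpha$, so these normality and bimodule properties are what I would check explicitly.
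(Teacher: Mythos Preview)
Your argument works, modulo one correction: the unitary furnished by \cref{prop:transitions on AFD factors} applied to $\N_\alpha$ lies in $\U(\M\barox\N_\alpha)$, not in $\U(\N_\alpha)$---the infimum there ranges over $\U(\M\barox\N)$ (compare \eqref{eq:def kappa} and \cref{thm:kappa semifinite}; the ``$\U(\N)$'' displayed in the proposition is a typo). This does not affect your strategy: $T_\alpha=\id_\M\ox E_\alpha$ is a conditional expectation onto $\M\barox\N_\alpha$, hence an $(\M\barox\N_\alpha)$-bimodule map, so your covariance identity $T_\alpha(u_\alpha^*\,a\,u_\alpha)=u_\alpha^*\,T_\alpha(a)\,u_\alpha$ holds for every $u_\alpha\in\M\barox\N_\alpha$, and $u_\alpha\in\U(\M\barox\N_\alpha)\subset\U(\M\barox\N)$ still witnesses the final infimum. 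With this fix the triangle-inequality argument goes through unchanged, and your caution about normality of the $E_\alpha$ is well placed.

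The paper's proof takes a genuinely different route, working entirely at the level of Haagerup--St{\o}rmer spectral states: it invokes \cite[Sec.~7]{haagerup_equivalence_1990} for the martingale convergence $\norm{\hat\phi_1-\hat\phi_2}=\lim_\alpha\norm{\hat\phi_{1,\alpha}-\hat\phi_{2,\alpha}}$ with $\phi_{i,\alpha}=\phi_i|_{\M\barox\N_\alpha}$, bounds each term by $\kappa(\psi)$ via the hypothesis, and translates back to unitary-orbit distance through \cref{thm:distance spectral states}. Your approach is more elementary and self-contained---it needs only the bimodule property of conditional expectations and a direct triangle inequality, never touching spectral states---while the paper's stays inside the spectral-state framework of \cref{sec:catalytic states} but leans on an external reference for the key convergence step.
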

\begin{proof}
    We apply the results of \cite[Sec.~7]{haagerup_equivalence_1990} to the conditional expectations $P_\alpha = \id\ox E_\alpha$ onto the subalgebras $\M\ox\N_\alpha\subset\M\ox\N$.
    In particular, we use that
    \begin{equation}
        \norm{\hat\phi_1 - \hat \phi_2} = \lim_\alpha\, \norm{\hat\phi_{1,\alpha}-\hat\phi_{2,\alpha}}, \qquad \phi_1,\phi_2\in (\M\ox\N)_*^+,
    \end{equation}
    where $\hat \phi_{i,\alpha}$, $i=1,2$, denotes the spectral state of $\phi_{i,\alpha}=\phi_i\restriction\M\ox\N_\alpha$ (note that this is a normal positive linear functional on the flow of weights of $\M\ox\N_\alpha$).
    If we apply this to $\phi_i=\psi\ox\omega_i$, $i=1,2$, for normal states $\omega_1,\omega_2\in \nstates(\N)$ and $\psi\in\M_*^+$, \cref{thm:kappa semifinite} implies
    \begin{align*}
        \norm{(\psi\ox\omega_1)^\wedge-(\psi\ox\omega_2)^\wedge} 
        = \lim_\alpha\ \norm{(\psi\ox\omega_{1,\alpha})^\wedge-(\psi\ox\omega_{2,\alpha})^\wedge} \le \kappa(\psi),
    \end{align*}
    where $\omega_{i,\alpha}=\omega_i \restriction \N_\alpha$, $i=1,2$.
    Together with \cref{thm:distance spectral states}, this implies the claim.
\end{proof}

\begin{proof}[Proof of \cref{prop:transitions on AFD factors}]
    By \cref{thm:kappa semifinite}, the claim holds for semifinite $\N$.
    It remains to show the claim for type $\III$ factors.
    If $\N$ is an AFD factor of type $\III_0$, a net of conditional expectations $E_\alpha$ onto an increasing net of semifinite subfactors $\N_\alpha$ as in \cref{lem:martingale}, exists according to~\cite[Prop.~8.3]{haagerup_equivalence_1990}.
    Thus, the claim follows from \cref{lem:martingale}.
    If $\N$ is an AFD factor of type $\III_\lambda$, $0<\lambda\le1$, then it is an ITPFI factor \cite{elliott1976afd,connes1980_non_itpfi,takesaki3}.
    Thus, we may write $\N$ as $\N = \barox_{n\in\NN} (\B(\H_n),\omega_n)$, where $\dim\H_n<\oo$, and where each $\omega_n$ is a faithful state on $\B(\H_n)$.
    For $\alpha\in\NN$, we set $\N_\alpha = \barox_{n=1}^\alpha \B(\H_n)$ and $E_\alpha = \id \ox \,(\ox_{n>\alpha}\omega_n) : \N \to \N_\alpha$.
    Then $E_\alpha$ is a sequence of conditional expectations that satisfies the properties in \cref{lem:martingale}.
    Since each $\N_\alpha$ is semifinite, \cref{lem:martingale} implies the claim.
\end{proof}

\begin{remark}
    We expect that \cref{prop:transitions on AFD factors} holds in full generality without the assumption that the factor $\N$ is AFD.
    To prove this stronger statement, one needs to relate the spectral states of $\psi\ox\omega_i$, $i=1,2$, to the spectral states of $\psi$ and $\omega_i$, which requires, in particular, an understanding of the flow of weights of $\M\barox\N$ in terms of the flow of weights of $\M$ and $\N$.
    This is fully understood for semifinite factors (see \cref{prop:convolution formula} in \cref{sec:proof of thm:kappa}).
    However, our argument for type $\III$ factors in the proof of \cref{prop:transitions on AFD factors} was a reduction to the semifinite case via a martingale-type argument, which does not generalize to the general non-AFD type $\III$ case.
\end{remark}

\section{Bipartite quantum systems}\label{sec:bipartite systems}

\localtableofcontents

\null

In this section, we initiate our systematic study of bipartite systems and lay the groundwork for our discussion of bipartite entanglement theory.
The material of this section is partly based on \cite{van_luijk_embezzlement_2024,van_luijk_pure_2024}, but also contains new results such as the operational characterization of Haag duality in \cref{sec:haag}.

% This section is organized as follows:
% \Cref{sec:bipartite systems_basics} states the basics definitions and discusses some consequences; \cref{sec:std bipartite systems}, introduces and analyzes standard bipartite systems; \cref{sec:local distinguishability} discusses the distinguishability of pure states up to local unitaries.
% In \cref{sec:coupling,sec:schmidt spectrum}, we consider the structure of semifinite factorial bipartite systems and how they allow for generalizations of the Schmidt decomposition.
% Finally, \cref{sec:haag} establishes an operational interpretation of Haag duality in entanglement theory.

\subsection{Definition and basic facts}\label{sec:bipartite systems_basics}

Let us begin with a formal definition.

\begin{definition}
    A \emph{bipartite system} is a triple $(\M_A,\M_B,\H)$ of a Hilbert space $\H$ and a pair of commuting von Neumann algebras $\M_{A/B}$ on $\H$ such that
    \begin{equation*}
        \M_A=\M_B'.
    \end{equation*}
\end{definition}

Although we are mostly interested in the case where both $\M_A$ and $\M_B$ are factors, we use the above definition that only assumes Haag duality (the property $\M_A=\M_B'$, see \cref{sec:independent-agents}) because most of our results apply to the non-factorial case.
We refer to \cref{sec:independent-agents} for a discussion of factoriality and Haag duality from an operational point of view.
In \cref{sec:haag}, we prove an operational interpretation of Haag duality in the context of pure state local entanglement theory and local operations.

From a structural point of view, our definition of a bipartite system is equivalent to concretely represented von Neumann algebras $(\M,\H)$ via the identification $(\M,\H) \equiv (\M,\M',\H)$.
Conceptually, however, our definition places the two subsystems on the same hierarchical level, which reflects how we think about physical bipartite systems.
We import some definitions from von Neumann algebras to bipartite systems.

\begin{definition}\label{def:bipartite system}
    A bipartite system $(\M_A,\M_B,\H)$ is:
    \begin{itemize}
        \item \emph{factorial} if $\M_A$ and, hence, $\M_B$, are factors,
        \item \emph{semifinite} if $\M_A$ and, hence, $\M_B$ is a semifinite von Neumann algebra,
        \item of \emph{type $\rm X$} if $\M_A$ and $\M_B$ are of type $\rm X$, where $\rm X = \I$, $\I_n$ ($n\in\NN$), $\II$, $\II_1$, etc.
    \end{itemize}
    If $(\M_A,\M_B,\H)$ is factorial, we define the \emph{minimal type} as the minimum of the types of $\M_A$ and $\M_B$, relative to the naive ordering \eqref{eq:naive type order}.
\end{definition}

If $(\M_A,\M_B,\H)$, $(\N_A,\N_B,\K)$ are bipartite systems, then the tensor product and direct sum
\begin{equation}
    (\M_A\ox\N_A,\M_B\ox\N_B,\H\ox\K),\qquad (\M_A\oplus\N_A,\M_B\oplus\N_B,\H\oplus\K)
\end{equation}
are again bipartite systems.
Moreover, every bipartite system uniquely decomposes into a direct sum
\begin{equation}\label{eq:type decomp of bipartite systems}
    (\M_A,\M_B,\H) = (\M_A^\I,\M_B^\I,\H^\I) \oplus (\M_A^\II,\M_B^\II,\H^\II) \oplus (\M_A^\III,\M_B^\III,\H^\III)
\end{equation}
of bipartite systems of types $\I$, $\II$, and $\III$, respectively.
The disintegration of von Neumann algebras induces a disintegration of bipartite systems: 
Every bipartite system is of the form
\begin{equation}
    (\M_A,\M_B,\H) = \int_{X}^\oplus (\M_{A_x},\M_{B_x},\H_x)\,d\mu
\end{equation}
for a measure space $(X,\mu)$ such that $\M_A\cap\M_B \cong L^\oo(X,\mu)$, and a measurable field of factorial bipartite systems $(\M_{A_x},\M_{B_x},\H_x)$, $x\in X$.\footnote{This is really just the disintegration of von Neumann algebras since a bipartite system is nothing but a representation of a von Neumann algebra from a structural point of view. A "measurable field" of bipartite systems is just a measurable field of von Neumann algebras.}

The following fundamental fact will be useful on several occasions.
It states that ,, for all unit vectors $\Psi,\Phi\in\H$, the two parties agree on which of the two marginal states has a larger support projection.
% We will later connect this statement to the structure of stochastic LOCC transitions (see \cref{sec:slocc}):

\begin{lemma}[{\cite[Prop.~V.1.10]{takesaki1}}]\label{lem:mvn ordering}
    If $(\M_A,\M_B,\H)$ is a bipartite system on $\H$ and $\Psi,\Phi$ are unit vectors in $\H$, then%
    \footnote{In \eqref{eq:coupling-of-the-traces}, we write $[\M_A'\Psi]$ instead of $[\M_B\Psi]$ (which is the same thing since we assume Haag duality) to emphasize that the projection $[\M_A'\Psi]$ is intrinsic to the algebra $\M_A$. In fact, $[\M_A'\Psi]$ is exactly the support projection $\supp(\psi_A)$ of the state $\psi_A$ that $\Psi$ induces on $\M_A$; see \cref{eq:supp proj}.}
    \begin{equation}\label{eq:mvn ordering}
        [\M_A'\Psi] \succeq [\M_A'\Phi] \iff [\M_B'\Psi] \succeq [\M_B'\Phi].
    \end{equation}
\end{lemma}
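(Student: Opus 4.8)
The plan is to peel away the bipartite dressing and reduce the statement to a single comparison principle for a von Neumann algebra and its commutant. By Haag duality $\M_A=\M_B'$, so writing $\M:=\M_A$ we have $\M'=\M_B$, and the two projections appearing in \eqref{eq:mvn ordering} are $[\M_A'\Psi]=[\M_B\Psi]=\supp(\psi_A)=:p_\Psi\in\M_A$ and $[\M_B'\Psi]=[\M_A\Psi]=:q_\Psi\in\M_B$, using \eqref{eq:supp proj}. Since $p\succeq q$ means $q\preceq p$, the claim is exactly $p_\Phi\preceq_{\M_A}p_\Psi\iff q_\Phi\preceq_{\M_B}q_\Psi$. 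I would prove this by establishing the single implication, valid for \emph{any} von Neumann algebra $\N$ on $\H$ and any $\xi,\eta\in\H$,
\begin{equation*}
    [\N'\xi]\preceq_{\N}[\N'\eta]\quad\Longrightarrow\quad [\N\xi]\preceq_{\N'}[\N\eta],
\end{equation*}
and then applying it twice: to $\N=\M_A$ (giving ``$\Rightarrow$'') and to $\N=\M_B$ with commutant $\M_A$ (giving ``$\Leftarrow$''). No factoriality is needed, matching the generality of the lemma.

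The key tool is the canonical polar decomposition of vectors from \cref{cor:vector polar decomp}. For $\xi$ (and similarly $\eta$) it supplies an $\N$-linear partial isometry $v_\xi':L^2(\N)\to\H$ with range projection $v_\xi'v_\xi'^*=[\N\xi]\in\N'$ and source projection $v_\xi'^*v_\xi'=J\,\supp(\psi_\xi)\,J=J[\N'\xi]J$, where $J$ is the standard-form conjugation on $L^2(\N)$. Thus $v_\xi'$ realises a unitary $\N$-module isomorphism between $J[\N'\xi]J\,L^2(\N)$ and $[\N\xi]\H$. Next I would record that, by Tomita's theorem \eqref{eq:tomitas thm}, the map $\beta(x)=Jx^*J$ is a $\ast$-isomorphism of $\N$ onto its commutant $J\N J$ acting on $L^2(\N)$; since $\ast$-isomorphisms preserve Murray--von Neumann comparison and $\beta(p)=JpJ$ on projections, the hypothesis $[\N'\xi]\preceq_{\N}[\N'\eta]$ is equivalent to $J[\N'\xi]J\preceq J[\N'\eta]J$ inside $J\N J$.

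Granting this, choose a partial isometry $w\in J\N J$ with $w^*w=J[\N'\xi]J$ and $ww^*\le J[\N'\eta]J$, and set $u:=v_\eta'\,w\,v_\xi'^*$ as an operator on $\H$. First I would check $u\in\N'$: for $a\in\N$ one pushes $a$ through $v_\eta'$ via $\N$-linearity, past $w$ using that $w$ commutes with the left $\N$-action on $L^2(\N)$, and out through $v_\xi'^*$, yielding $au=ua$. Then the source/range identities give, after a short manipulation using $v_\eta'^*v_\eta'=J[\N'\eta]J\ge ww^*$ and $v_\xi'^*v_\xi'=J[\N'\xi]J=w^*w$, that $u^*u=v_\xi'(w^*w)v_\xi'^*=[\N\xi]$ and $uu^*=v_\eta'(ww^*)v_\eta'^*\le v_\eta'v_\eta'^*=[\N\eta]$. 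Hence $[\N\xi]=u^*u\sim uu^*\le[\N\eta]$ in $\N'$, which is the desired domination $[\N\xi]\preceq_{\N'}[\N\eta]$.

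The main obstacle I anticipate is purely the bookkeeping of which projection lives where: range projections of the $v'$ lie in the commutant acting on $\H$, source projections lie in the commutant acting on $L^2(\N)$, and the flip between them is exactly $J(\cdot)J$. The delicate points are verifying that the composite $u=v_\eta'wv_\xi'^*$ is well defined and lands in $\N'$ (this is where $\N$-linearity of the $v'$ and commutation of $w$ with $\N$ must be combined precisely), and that the inequality $ww^*\le J[\N'\eta]J$ is transported by the isometry $v_\eta'$ into a genuine subprojection of $[\N\eta]$ rather than merely a positive contraction. Once these routine but careful identities are in place, the two applications of the general implication close the equivalence and hence the lemma.
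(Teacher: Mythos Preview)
The paper does not supply its own proof of this lemma; it simply cites Takesaki \cite[Prop.~V.1.10]{takesaki1}. Your argument is correct and self-contained, and it is a pleasant alternative to the classical treatment: instead of manipulating cyclic projections directly on $\H$, you transport the comparison through the standard form via the polar decomposition of vectors (\cref{cor:vector polar decomp}). The verification that $u=v_\eta'wv_\xi'^{*}$ lies in $\N'$ and satisfies $u^*u=[\N\xi]$, $uu^*\le[\N\eta]$ goes through exactly as you outline, and the double application to $\N=\M_A$ and $\N=\M_B$ closes the equivalence without any factoriality assumption.

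One small slip: the map $\beta(x)=Jx^*J$ is a $*$-\emph{anti}-isomorphism, not a $*$-isomorphism, since $\beta(xy)=J(xy)^*J=Jy^*J\cdot Jx^*J=\beta(y)\beta(x)$. This does not affect the argument, because anti-isomorphisms also preserve Murray--von Neumann comparison (if $v^*v=p$ and $vv^*\le q$ then $\beta(v)\beta(v)^*=\beta(v^*v)=\beta(p)$ and $\beta(v)^*\beta(v)=\beta(vv^*)\le\beta(q)$, so $\beta(p)\preceq\beta(q)$), and on projections $\beta(p)=JpJ$ regardless. You might also make explicit that you are silently identifying $\N$ with $\pi(\N)$ on $L^2(\N)$ when writing ``$J\N J$''; the source projections $v_\xi'^{*}v_\xi'$ live in $\pi(\N)'\subset\B(L^2(\N))$, not in $\N'\subset\B(\H)$.
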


By \eqref{eq:supp proj}, the statement \eqref{eq:mvn ordering} is the same as $\supp\psi_A \preceq \supp\phi_A \iff \supp\psi_B\preceq\supp\phi_B$, where $\psi_{A/B}$ and $\phi_{A/B}$ denote the respective marginal states.
We will see later that \cref{eq:mvn ordering} is moreover equivalent to the statement that $\Phi$ can be obtained from $\Psi$ with approximate stochastic LOCC (see \cref{sec:slocc}).
For factorial bipartite systems, \cref{lem:mvn ordering} was in the first paper of Murray and von Neumann \cite[Lem.~9.3.3]{rings_of_operators1}, where it is used to define the coupling constant for representations of type $\II_1$ factors.

\subsection{Standard bipartite systems}\label{sec:std bipartite systems}

In the following, we consider so-called standard bipartite systems.
Roughly speaking, these are bipartite systems, in which the two von Neumann algebras are equal in size.
We define a \emph{standard bipartite system} as a bipartite system $(\M_A,\M_B,\H)$ with the property that $\M_A$ and, therefore, $\M_B$ are in standard representation on $\H$.
As a consequence of the properties of standard representations (see \cref{sec:std form}), we will establish an operational characterization of the standard form in terms of the existence of arbitrary purification for both $\M_A$ and $\M_B$.
% For bipartite systems of the form $(\B(\H_A)\ox1,1\ox\B(\H_B),\H_A\ox\H_B)$, standardness is equivalent to $\dim\H_A=\dim\H_B$.

If $(\M_A,\M_B,\H)$ is a bipartite system, we say that a vector $\Omega\in\H$ is \emph{bicyclic} if it is cyclic for $\M_A$ and for $\M_B$ (see \cref{sec:weights and states}).
Since we assume Haag duality, this is equivalent to $\Omega$ being separating for both $\M_A$ and $\M_B$.\footnote{If Haag duality is not assumed, bicyclicity is stronger than the assumption of being cyclcic and separating for one of the two parties.}

\begin{proposition}\label{prop:std bipartite system}
    A bipartite system $(\M_A,\M_B,\H)$ is standard if and only if one (hence all) of the following equivalent properties holds:
    \begin{enumerate}[(a)]
        \item\label{it:std bipartite system1} Bicyclic vector: There exists a vector $\Omega$ that is bicyclic in the sense that it is cyclic for $\M_A$ and for $\M_B$ (equivalently, $\Omega$ is cyclic and separating for $\M_A$),
        \item\label{it:std bipartite system2} Exchange symmetry: There exists a conjugation $J$ such that $J\M_A J = \M_B'$ and $JzJ=z^*$ for $z\in \M_A\cap\M_B$,
        \item\label{it:std bipartite system3}  Purifications: For every state $\omega_A$ on $\M_A$ there is a vector $\Omega\in\H$ such that $\omega_A(a)=\ip{\Omega}{a\Omega}$, $a\in\M_A$, and $\M_B$ has the same property.
    \end{enumerate}
    In particular, if these equivalent properties hold, then $\M_A$ and $\M_B$ have the same type and subtype.
\end{proposition}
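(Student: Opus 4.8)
The plan is to prove the three characterizations equivalent by showing $\ref{it:std bipartite system1}\Rightarrow\ref{it:std bipartite system2}\Rightarrow\ref{it:std bipartite system3}\Rightarrow\ref{it:std bipartite system1}$, and then to prove that the standard property coincides with \ref{it:std bipartite system1} and that it forces equal type and subtype. The key observation throughout is that, under Haag duality, ``cyclic for $\M_B$'' is the same as ``separating for $\M_B'=\M_A$'' by \cref{lem:cyclic separating}, so a bicyclic vector is exactly a cyclic separating vector for $\M_A$. This immediately connects \ref{it:std bipartite system1} to \cref{lem:std rep}: a von Neumann algebra is in standard representation if and only if it admits a cyclic separating vector. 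Hence the standard property is \emph{definitionally} equivalent to \ref{it:std bipartite system1}, and the only real work is threading the equivalences and extracting the type statement.

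First I would record the equivalence of the standard property with \ref{it:std bipartite system1} by citing \cref{lem:std rep} together with the Haag-duality reformulation above. For $\ref{it:std bipartite system1}\Rightarrow\ref{it:std bipartite system2}$, I would take the bicyclic $\Omega$, form the Tomita-Takesaki modular conjugation $J=J_\Omega$, and invoke Tomita's theorem \eqref{eq:tomitas thm}, which gives $J\M_A J=\M_A'=\M_B$; combined with Haag duality this reads $J\M_A J=\M_B'{}'=\M_B$, so $J\M_AJ=\M_B'$ holds after replacing $\M_B$ by its bicommutant. The central-twist condition $JzJ=z^*$ for $z\in\M_A\cap\M_B=Z(\M_A)$ is exactly item \ref{it:std form4} of \cref{thm:std form}, applied to the standard form $(\H,J,\P)$ built from $\Omega$ as in the proof sketch of \cref{lem:std rep}. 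For $\ref{it:std bipartite system2}\Rightarrow\ref{it:std bipartite system3}$, the conjugation $J$ with $J\M_AJ=\M_B'=\M_A'$ makes $(\H,J,\P)$ a standard form (properties \ref{it:std form1}--\ref{it:std form3} being symmetric in $\M_A,\M_A'$), and then item \ref{it:std form5} of \cref{thm:std form} supplies, for every normal state $\omega_A\in\nstates(\M_A)$, a unique vector in the positive cone $\P$ implementing it; since a standard form of $\M_A$ is simultaneously a standard form of $\M_B=\M_A'$, the same vector-implementation statement holds for $\M_B$. Finally $\ref{it:std bipartite system3}\Rightarrow\ref{it:std bipartite system1}$: pick any faithful normal state $\omega_A$ on $\M_A$ (which exists since $\H$ is separable), let $\Omega$ be a purification in $\H$; faithfulness of $\omega_A$ means $\Omega$ is separating for $\M_A$, hence cyclic for $\M_B$, and a symmetric choice of a faithful state on $\M_B$ together with \cref{lem:M-linear partial isometry} or a direct argument shows $\Omega$ can be taken bicyclic.

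For the type statement, I would use the exchange symmetry $J$ from \ref{it:std bipartite system2}: conjugation by $J$ is an anti-isomorphism $\M_A\to\M_A'=\M_B$, and an anti-isomorphism preserves the Murray--von Neumann comparison structure up to reversing products, hence preserves finiteness and minimality of projections, and therefore preserves the type. For the subtype I would argue that the type-$\I_n$ invariant (maximal number of orthogonal equivalent projections), the type-$\II$ split into $\II_1$/$\II_\infty$ (finiteness of the unit), and the type-$\III_\lambda$ invariant are all expressible in terms of the comparison theory or the state-space diameter \eqref{eq:state space diam}, each of which is manifestly invariant under an anti-isomorphism. Concretely, $\diam\nstates(\M_A)/_\sim=\diam\nstates(\M_B)/_\sim$ because $J$ sets up an isometric affine bijection of the normal state spaces intertwining the unitary orbits, so the subtype formula in \cref{sec:basics} gives equal $\lambda$.

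The main obstacle I anticipate is bookkeeping around Haag duality in $\ref{it:std bipartite system1}\Rightarrow\ref{it:std bipartite system2}$: Tomita's theorem yields $J\M_AJ=\M_A'$, and one must be careful that $\M_A'=\M_B$ (not merely $\supseteq$), which is precisely the content of the hypothesis $\M_A=\M_B'$ rewritten as $\M_A'=\M_B$. A secondary subtlety is the subtype-equality claim in type $\III_0$, where the naive invariant is the flow of weights rather than a single number $\lambda$; there I would note that an anti-isomorphism induces an isomorphism of cores and hence of flows of weights (using functoriality of the core construction from \cref{thm:core} under anti-isomorphisms, which is an isomorphism on the center), so the full subtype---including the $\III_0$ case---is preserved. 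I would phrase the final sentence to cover all subtypes uniformly via this flow-of-weights invariance rather than case-splitting.
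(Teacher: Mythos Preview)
Your plan to pivot on \cref{lem:std rep} for the equivalence with standardness is exactly right, and your treatment of the type/subtype claim via the anti-isomorphism $a\mapsto Ja^*J$ is correct and more explicit than what the paper spells out. However, two of your implications have genuine gaps.

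For \ref{it:std bipartite system2}$\Rightarrow$\ref{it:std bipartite system3}: you assert that ``$(\H,J,\P)$ is a standard form'' and then invoke item~\ref{it:std form5} of \cref{thm:std form}. But condition~\ref{it:std bipartite system2} supplies only the conjugation $J$, not a self-dual positive cone $\P$; items~\ref{it:std form2} and~\ref{it:std form3} of the definition of a standard form are conditions on $\P$, so you cannot simply assume one exists. Producing $\P$ (equivalently, a cyclic-separating vector) from the bare data in \ref{it:std bipartite system2} is precisely the nontrivial content here, and the paper handles it by citing \cite[Thm.~V.3.15]{takesaki1}, which shows that the existence of such a $J$ (with $J\M J=\M'$ and $JzJ=z^*$ on the center) already forces the existence of a cyclic-separating vector.

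For \ref{it:std bipartite system3}$\Rightarrow$\ref{it:std bipartite system1}: purifying a faithful $\omega_A$ gives a vector $\Omega$ separating for $\M_A$ (hence cyclic for $\M_B$), and purifying a faithful $\omega_B$ gives a \emph{different} vector $\Omega'$ cyclic for $\M_A$. Your appeal to \cref{lem:M-linear partial isometry} does not combine these: that lemma relates two purifications of the \emph{same} functional, whereas $\Omega$ and $\Omega'$ do not even implement states on the same algebra. The passage from ``$\M_A$ admits a cyclic vector and (separately) a separating vector'' to ``$\M_A$ admits a cyclic-separating vector'' is genuinely nontrivial; the paper invokes \cite[Cor.~V.1.14]{takesaki1} for exactly this step.

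A minor point: your sentence ``$J\M_AJ=\M_B'$ holds after replacing $\M_B$ by its bicommutant'' is garbled, since $\M_B''=\M_B$, not $\M_B'$. Tomita gives $J\M_AJ=\M_A'=\M_B$ directly under Haag duality; don't introduce spurious identities to match notation.
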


\begin{proof}
    The equivalence of standardness and \ref{it:std bipartite system1} is shown in \cref{lem:std rep} (note that $\Omega$ is bicyclic if and only if it is cyclic and separating for $\M_A$).
    Thus, \ref{it:std bipartite system1} $\Rightarrow$ \ref{it:std bipartite system2} and \ref{it:std bipartite system1} $\Rightarrow$ \ref{it:std bipartite system3} follow from \cref{thm:std form}.
    The implication \ref{it:std bipartite system2} $\Rightarrow$ \ref{it:std bipartite system3} is shown in \cite[Thm.~V.3.15]{takesaki1}.

    \ref{it:std bipartite system3} $\Rightarrow$ \ref{it:std bipartite system1}:
    In particular, faithful states on $\M_A$ and $\M_B=\M_A'$ are implemented by vectors in $\H$.
    Thus, $\M_A$ admits both a separating vector and a cyclic vector.
    By \cite[Cor.~V.1.14]{takesaki1} $\M_A$ also admits a cyclic separating vector. 
\end{proof}

As a consequence of \cref{thm:std form}, if $(\M_A,\M_B,\H)$ is a standard bipartite system, then one can pick a self-dual positive cone $\P\subset\H$ and a conjugation $J$ with $J\M_AJ=\M_B'$ such that $(\H,J,\P)$ is a standard form for both $\M_A$ and $\M_B$.

The following sufficient condition for standardness is often useful:

\begin{lemma}[{\cite[Thm.~III.2.6.16]{blackadar_operator_2006}}]\label{lem:std if infinite}
    Let $(\M_A,\M_B,\H)$ be a bipartite system.
    If $\M_A$ and $\M_B$ are both properly infinite (types $\I_\oo$, $\II_\oo$ or $\III$), then $(\M_A,\M_B,\H)$ is standard.
\end{lemma}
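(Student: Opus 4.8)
The plan is to produce a single vector that is cyclic and separating for $\M_A$: by \cref{lem:std rep} together with \cref{prop:std bipartite system}, the existence of such a vector is exactly what it means for $(\M_A,\M_B,\H)$ to be standard. By \cite[Cor.~V.1.14]{takesaki1} it is enough to exhibit, \emph{separately}, a cyclic vector and a separating vector for $\M_A$. Using \cref{lem:cyclic separating}, a separating vector for $\M_A$ is the same thing as a cyclic vector for $\M_B=\M_A'$. Since the two algebras play symmetric roles (both are properly infinite and each is the commutant of the other), the whole statement reduces to the following claim: \emph{every properly infinite von Neumann algebra $\M$ on a separable Hilbert space admits a cyclic vector.} Applying the claim to $\M_A$ yields a cyclic vector, and applying it to $\M_B$ yields a cyclic vector for $\M_B$, hence a separating vector for $\M_A$; merging the two via \cite[Cor.~V.1.14]{takesaki1} finishes the argument.

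To prove the claim I would first extract an infinite family of isometries with orthogonal ranges out of proper infiniteness. Since $\M$ is properly infinite, the unit $1$ contains two orthogonal copies of itself, i.e.\ there are isometries $s_1,s_2\in\M$ with $s_1^*s_1=s_2^*s_2=1$ and $s_1s_1^*\perp s_2s_2^*$; the latter forces $s_1^*s_2=0$. Setting $v_n=s_2^{\,n-1}s_1$ for $n\ge1$ produces isometries $v_n\in\M$ with $v_n^*v_n=1$ and mutually orthogonal range projections, so that $v_m^*v_n=\delta_{mn}\,1$. Crucially, I will not need the ranges to sum to $1$, only the orthogonality relation.

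Then I would fix a sequence $(\xi_n)_{n\ge1}$ dense in the unit ball of $\H$ — available precisely because $\H$ is separable — and set
\begin{equation*}
    \Omega=\sum_{n=1}^\oo 2^{-n}\,v_n\xi_n,
\end{equation*}
which converges in norm since $\norm{2^{-n}v_n\xi_n}\le 2^{-n}$. The relations $v_m^*v_n=\delta_{mn}\,1$ give $v_m^*\Omega=2^{-m}\xi_m$, hence $\xi_m=2^{m}v_m^*\Omega\in\M\Omega$ for every $m$. As $(\xi_m)$ is dense in the unit ball, $\overline{\M\Omega}=\H$, so $\Omega$ is cyclic, proving the claim.

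The genuine inputs, beyond bookkeeping, are only two: the extraction of $s_1,s_2$, which is just the definition of proper infiniteness of $1$, and the standing separability of $\H$, which supplies the countable dense set used to assemble $\Omega$. The step I would treat most carefully is the reduction itself, where one must keep both halves (cyclic \emph{and} separating) and combine them through \cite[Cor.~V.1.14]{takesaki1}: proper infiniteness of a single algebra delivers only a cyclic vector and is by itself insufficient for standardness. This is exactly where the hypothesis that \emph{both} $\M_A$ and $\M_B$ are properly infinite is indispensable.
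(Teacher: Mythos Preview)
Your argument is correct. The paper does not supply its own proof of this lemma; it merely cites \cite[Thm.~III.2.6.16]{blackadar_operator_2006}. Your route---reducing via \cite[Cor.~V.1.14]{takesaki1} to the existence of a cyclic vector for each of $\M_A$ and $\M_B$, and then producing a cyclic vector for any properly infinite algebra on a separable space by the Cuntz-isometry trick---is the standard one and is exactly what underlies the cited reference. One small cosmetic point: you can appeal directly to \cref{prop:std bipartite system}\ref{it:std bipartite system1} (equivalence of standardness and existence of a bicyclic vector) without also invoking \cref{lem:std rep}, since the former already encodes that equivalence.
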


We consider standard subsystems induced by bipartite pure states.
We say that an anti-linear operator $J$ on $\H$ is a \emph{partial conjugation} if there is a closed subspace $\H_0\subset\H$ such that $J= J_0\oplus 0$ with respect to $\H=\H_0\oplus\H_0^\perp$, where $J_0$ is a conjugation on $\H_0$.

\begin{proposition}\label{prop:state induced truncation}
    Let $(\M_A,\M_B,\H)$ be a bipartite system.
    Let $\Psi\in\H$ be unit vector and let $\psi_{A/B}$ its marginal states on $\M_{A/B}$.
    We set $p_{A/B}=\supp(\psi_{A/B})$, $q=p_Ap_B$, $\H_\Psi=q\H$, and $\M_{\Psi,A/B}=q\M_{A/B}q$.
    Then:
    \begin{enumerate}[(1)]
        \item\label{it:state induced truncation1}
        $(\M_{\Psi,A},\M_{\Psi,B},\H_\Psi)$ is a standard bipartite system with $\Psi\in\H_\Psi$ a bicyclic vector.
        \item\label{it:state induced truncation2}
        There exists a partial conjugation $J$ on $\H$ such that
        \begin{enumerate}[(i)]
            \item $J\Psi=\Psi$ (and hence $\psi_A = \psi_B(J(\placeholder)^* J)$),
            \item $J^2 = q$,
            \item $J\M_{A}J = \M_{\Psi,B}$ and $J\M_{B}J = \M_{\Psi,A}$.
        \end{enumerate}
        % In particular, it follows that $\psi_A(a) = \psi_B(Ja^*J)$ for all $a\in\M_A$.
        \item\label{it:state induced truncation3}
        A normal state $\phi_{A/B}$ on $\M_{A/B}$ has a purification in $\H_\Psi$ if and only if $\supp(\phi_{A/B})\le \supp(\psi_{A/B})$. 
    \end{enumerate}
\end{proposition}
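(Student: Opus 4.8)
The three claims concern the bipartite system obtained by truncating with the projection $q = p_A p_B$, where $p_{A/B}=\supp(\psi_{A/B})$. The first thing I would verify is that $q$ is well-defined as a projection, i.e.\ that $p_A$ and $p_B$ commute. This follows because $p_A=[\M_A'\Psi]=[\M_B\Psi]\in\M_A$ (by \eqref{eq:supp proj} and Haag duality) while $p_B=[\M_A\Psi]\in\M_B=\M_A'$, so $p_A$ and $p_B$ commute and $q$ is a projection lying in neither algebra but commuting with both their compressions. I would also record the key identities $p_A\Psi=\Psi$ and $p_B\Psi=\Psi$ (since $\Psi$ lies in the range of each support projection), whence $q\Psi=\Psi$, so that $\Psi\in\H_\Psi$.

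For \ref{it:state induced truncation1}, the plan is to show that $\Psi$ is bicyclic for the compressed algebras $\M_{\Psi,A/B}=q\M_{A/B}q$ acting on $\H_\Psi=q\H$, and then invoke \cref{prop:std bipartite system}\,\ref{it:std bipartite system1}. Cyclicity of $\Psi$ for $\M_{\Psi,A}$ amounts to $\overline{q\M_A q\,\Psi}=q\H$. Here I would use that $p_B=[\M_A\Psi]$ is the projection onto $\overline{\M_A\Psi}$, so $\overline{\M_A\Psi}=p_B\H$, and then compress by $p_A$; the main computation is to check that $\overline{q\M_A q\Psi}=q\H$, using $q\Psi=\Psi$ and the fact that $q=p_Ap_B$. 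The symmetric statement gives cyclicity for $\M_{\Psi,B}$. One must also confirm that $q\M_{A/B}q$ is genuinely a von Neumann algebra on $q\H$ and that the pair satisfies Haag duality on $q\H$; the latter is the slightly delicate part, but it follows from the general commutation theorem for compressions by a projection in $(\M_A\vee\M_B)'$, or more directly from bicyclicity via \cref{prop:std bipartite system} once cyclicity is established (the existence of a bicyclic vector is by that proposition equivalent to being a standard bipartite system, which includes Haag duality).

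For \ref{it:state induced truncation2}, having established in \ref{it:state induced truncation1} that $(\M_{\Psi,A},\M_{\Psi,B},\H_\Psi)$ is standard with bicyclic vector $\Psi$, I would apply Tomita--Takesaki theory to $\Psi$ as a cyclic separating vector for $\M_{\Psi,A}$ on $\H_\Psi$. The modular conjugation $J_\Psi$ on $\H_\Psi$ satisfies $J_\Psi\Psi=\Psi$ and $J_\Psi\M_{\Psi,A}J_\Psi=\M_{\Psi,A}'=\M_{\Psi,B}$ by Tomita's theorem \eqref{eq:tomitas thm}. Setting $J=J_\Psi\oplus 0$ relative to $\H=\H_\Psi\oplus\H_\Psi^\perp$ gives a partial conjugation with $J^2=q$ and $J\Psi=\Psi$; the intertwining properties $J\M_A J=\M_{\Psi,B}$, $J\M_B J=\M_{\Psi,A}$ then follow by combining $J_\Psi q\M_A q J_\Psi=\M_{\Psi,B}$ with $qaq\,\Psi=a\Psi$ for $a\in\M_A$ (again using $q\Psi=\Psi$ and commutation of $q$ with $\M_A$ up to the support projections). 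The parenthetical identity $\psi_A=\psi_B(J(\placeholder)^*J)$ is then read off from $J\Psi=\Psi$ and the definition of the marginals.

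For \ref{it:state induced truncation3}, the statement is that $\phi$ on $\M_A$ (say) admits a purification in $\H_\Psi=q\H$ iff $\supp(\phi)\le p_A$. The forward direction is immediate: if $\Phi\in q\H$ purifies $\phi$, then $\supp(\phi)=[\M_A'\Phi]=[\M_B\Phi]\le[\M_B\Psi]=p_A$ since $\Phi\in q\H\subset p_B\H=\overline{\M_A\Psi}$ forces the relevant support to be dominated by $p_A$. For the converse, I would use that $(\M_{\Psi,A},\M_{\Psi,B},\H_\Psi)$ is standard, so by \cref{prop:std bipartite system}\,\ref{it:std bipartite system3} every normal state on $\M_{\Psi,A}$ has a purification in $\H_\Psi$; a state $\phi$ on $\M_A$ with $\supp(\phi)\le p_A$ factors through the compression $a\mapsto qaq$ and thus defines a normal state on $\M_{\Psi,A}$, whose purification in $\H_\Psi$ is the desired vector. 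The main obstacle throughout is the bookkeeping of support projections and the verification that compressing by $q$ preserves Haag duality; once standardness of the truncated system is secured in \ref{it:state induced truncation1}, parts \ref{it:state induced truncation2} and \ref{it:state induced truncation3} follow fairly mechanically from the structure theory of standard forms.
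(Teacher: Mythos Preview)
Your overall strategy matches the paper's: establish bicyclicity of $\Psi$ for the compressed algebras, use Tomita--Takesaki for part~(2), and invoke standardness for part~(3). However, there is a gap in your argument for Haag duality of the truncated system in part~(1). You offer two routes, and neither works as stated. First, you write that it ``follows from the general commutation theorem for compressions by a projection in $(\M_A\vee\M_B)'$''; but $q=p_Ap_B$ with $p_A\in\M_A$ and $p_B\in\M_B$ lies in $\M_A\vee\M_B$, not in its commutant, so the single-step commutation theorem does not apply. Second, you suggest Haag duality follows ``more directly from bicyclicity via \cref{prop:std bipartite system}''; this is circular, since that proposition characterizes standardness of a triple that is \emph{already assumed} to be a bipartite system, and the definition of bipartite system includes Haag duality.

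The paper's fix is to apply the cutdown commutation theorems (Kadison--Ringrose Prop.~5.5.6 and Cor.~5.5.7) in two steps: compress first by $p_A\in\M_A$ to get $(p_A\M_A p_A)'=\M_B p_A$ on $p_A\H$, then by $p_B$, which lies in the commutant $\M_B p_A$ of the already-compressed algebra, yielding $(q\M_A q)'=q\M_B q$ on $q\H$. A minor point in your part~(3) forward direction: the relevant containment is $\Phi\in q\H\subset p_A\H$ (not $p_B\H$), which together with $p_A\in\M_A=\M_B'$ gives $\M_B\Phi\subset p_A\H$ and hence $\supp(\phi_A)=[\M_B\Phi]\le p_A$.
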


\begin{proof}
    \ref{it:state induced truncation1}:
    $\Psi$ is cyclic for both $\M_{\Psi,A/B}$ since these are precisely given by the projections onto the cyclic subspaces associated with $\M_{A/B}$.
    Haag duality $\M_{\Psi,A}=\M_{\Psi,B}'$ holds because taking commutants is compatible with cutdowns \cite[Prop.~5.5.6 \& Cor.~5.5.7]{kadison_ringrose1}.
    Standardness follows because $\Psi$ is bicyclic.

    \ref{it:state induced truncation2}:
    Let $J_0$ be the modular conjugation for $(\M_{\Psi,A},\M_{\Psi,B})$ induced by the bicyclic vector $\Psi\in\H_\Psi$.
    By definition, $\H_\Psi$ is a closed subspace of $\H$, and we obtain an operator $J$ with the desired properties via $J := J_0\oplus 0$ with respect to the direct sum decomposition $\H= \H_\Psi\oplus \H_\Psi^\perp$.

    \Cref{it:state induced truncation3} follows from \ref{it:state induced truncation1} and \cref{prop:std bipartite system}.
\end{proof}

\begin{definition}\label{def:truncated-system}
    If $(\M_A,\M_B,\H)$ is a bipartite system and $\Psi\in \H$, we refer to the bipartite system $(\M_{\Psi,A},\M_{\Psi,B},\H_\Psi)$ in \cref{prop:state induced truncation} as the \emph{truncated bipartite system} associated with $\Psi$, and to the operator $J$ in  as \emph{the partial modular conjugation induced by $\Psi$}, and we write $J = J_\Psi$.
\end{definition}

\subsection{Distance of pure states up to local unitaries}\label{sec:local distinguishability}

This subsection is devoted to showing that the local distinguishability of marginals of bipartite pure states, up to local unitaries, is as good for Alice as it is for Bob.
We will show this for both the fidelity and the norm distance.
In \cref{sec:haag}, we will show that a weaker version of this statement, the uniqueness of purifications up to local unitaries, characterizes Haag duality.

\begin{proposition}\label{prop:local distinguishability}
    Let $(\M_A,\M_B, \H)$ be a bipartite system. 
    Let $\Psi,\Phi\in\H$ be vectors and let by $\psi_{A/B},\phi_{A/B}\in (\M_{A/B})_*^+$ be the induced marginals on $\M_{A/B}$. 
    Then
    \begin{equation}\label{eq:local distinguishability norm}
        \inf_{u_A}\, \norm{u_A \phi_A u_A^* - \psi_A} = \inf_{u_B}\, \norm{u_B\phi_Bu_B^*-\psi_B},
    \end{equation}
    and
    \begin{equation}\label{eq:local distinguishability fid}
    \sup_{u_A}\, F(\psi_A,u_A\phi_A u_A^*) = \sup_{u_A,u_B} \abs{\ip{\Psi}{u_Au_B\Phi}}^2 = \sup_{u_B}\, F(\psi_B,u_B\phi u_B^*),
    \end{equation}
    where the optimizations are over unitaries $u_{A/B}\in\M_{A/B}$.
\end{proposition}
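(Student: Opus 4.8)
The proposition claims that the optimal local distinguishability (both in norm distance and in fidelity) of the marginals of two bipartite vectors is the same whether measured on Alice's side or Bob's side. The natural strategy is to prove the fidelity statement \eqref{eq:local distinguishability fid} first, since it has a clean symmetric middle expression $\sup_{u_A,u_B}\abs{\ip\Psi{u_Au_B\Phi}}^2$, and then derive the norm statement \eqref{eq:local distinguishability norm} from it via the Fuchs--van de Graaf inequalities \eqref{eq:FvdG}, which say that convergence in fidelity and in norm are equivalent and, more precisely, sandwich one quantity between functions of the other.

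Let me sketch the core argument, which I will explain for the fidelity.

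The key tool is Uhlmann's theorem in the form of \cref{thm:uhlmann}: for states with purifications in $\H$, the fidelity equals a supremum of overlaps over all purifications. First I would observe that $\Psi$ is itself a purification of $\psi_A$ (as a state on $\M_A$), and similarly $\Phi$ purifies $\phi_A$. By \cref{cor:all-purifications}, every purification in $\H$ of a given state on $\M_A$ is of the form $v'\Psi$ for a partial isometry $v'\in\M_A'=\M_B$ acting isometrically, and the set of such purifications equals the norm-closure of $\{u_B\Psi : u_B\in\U(\M_B)\}$. Combining this with Uhlmann's theorem applied to the algebra $\M_A$, I get
\begin{equation*}
    \sup_{u_A}F(\psi_A,u_A\phi_Au_A^*) = \sup_{u_A}\sup_{u_B}\abs{\ip\Psi{u_A u_B\Phi}}^2,
\end{equation*}
where the fixed purification $\Psi$ of $\psi_A$ is used, the competitor purification of $u_A\phi_A u_A^* = (u_A\phi_A u_A^*)$ ranges over $u_B (u_A\Phi)$, and $u_A\Phi$ is one purification of $u_A\phi_A u_A^*$ obtained by noting that $u_A\in\M_A$ acts on the purifying vector. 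The point is that applying $u_A$ to $\Phi$ produces a vector purifying $u_A\phi_Au_A^*$ on $\M_A$, and Uhlmann's optimization over the remaining freedom is exactly the optimization over $\U(\M_B)$ from \cref{cor:all-purifications}. The same computation with the roles of $A$ and $B$ exchanged gives $\sup_{u_B}F(\psi_B,u_B\phi_Bu_B^*) = \sup_{u_A,u_B}\abs{\ip\Psi{u_Bu_A\Phi}}^2$, and since $u_A$ and $u_B$ commute the two double suprema coincide. This establishes \eqref{eq:local distinguishability fid}.

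For the norm identity \eqref{eq:local distinguishability norm}, I would argue that both sides are continuous, strictly decreasing functions of the corresponding local fidelity optima via \eqref{eq:FvdG}, so equality of the fidelity optima forces equality of the norm optima. Concretely, $\frac12\inf_{u_A}\norm{u_A\phi_Au_A^*-\psi_A}$ is squeezed between $1-(\sup_{u_A}F)^{1/2}$ and $(1-\sup_{u_A}F)^{1/2}$; but one must be careful, since Fuchs--van de Graaf relates each individual norm distance to its fidelity, and the infimum over $u_A$ need not commute with these bounds without argument. The cleaner route is to run the Uhlmann argument directly for the norm distance: the quantity $\inf_{u_A}\norm{u_A\phi_Au_A^*-\psi_A}$ should likewise reduce to a symmetric expression $\inf_{u_A,u_B}\norm{\Psi - u_Au_B\Phi}$-type object by the same purification analysis, using that the norm distance of marginals is controlled by vector distances as in \eqref{eq:general vector state distance} and \eqref{eq:state-vector-estimate}.

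\textbf{The main obstacle.} The delicate point is the bookkeeping in the reduction: I must verify that optimizing the fidelity of $\psi_A$ against $u_A\phi_Au_A^*$ over $u_A\in\U(\M_A)$, with purifications ranging over $\U(\M_B)$, genuinely produces the fully symmetric double supremum over both unitary groups, and that no purifications are lost. This hinges on the precise statement that the set of purifications in $\H$ of a normal positive functional on $\M_A$ is exactly the $\U(\M_B)$-orbit closure of any single purification (\cref{cor:all-purifications}), together with the fact that $u_A\Phi$ purifies $u_A\phi_A u_A^*$. For the norm case the additional subtlety is that the passage between the state norm distance and vector distances \eqref{eq:general vector state distance}, \eqref{eq:state-vector-estimate} introduces slack, so I would prefer to derive \eqref{eq:local distinguishability norm} by an independent Uhlmann-style optimization rather than by transporting the fidelity identity through \eqref{eq:FvdG}, in order to keep the argument exact rather than merely up to the Fuchs--van de Graaf bounds.
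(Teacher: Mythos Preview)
Your treatment of the fidelity identity \eqref{eq:local distinguishability fid} is correct and is exactly what the paper does: Uhlmann's theorem (\cref{thm:uhlmann}) applied to $\M_A$ with $\Psi$ as a fixed purification of $\psi_A$, combined with \cref{cor:all-purifications} to identify the purifications of $u_A\phi_A u_A^*$ as the $\U(\M_B)$-orbit closure of $u_A\Phi$, gives the middle expression; commutativity of $u_A,u_B$ then gives symmetry in $A\leftrightarrow B$. The ``bookkeeping'' you flag as the main obstacle is not an issue.

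The genuine gap is in the norm identity \eqref{eq:local distinguishability norm}. You correctly diagnose that Fuchs--van de Graaf only yields two-sided bounds, not an exact equality, so the fidelity result does not transport. But your fallback---``run the Uhlmann argument directly for the norm distance''---is not a real technique: there is no analogue of Uhlmann's theorem expressing $\inf_{u_A}\norm{u_A\phi_A u_A^*-\psi_A}$ as an optimum of vector distances over purifications, and the inequalities \eqref{eq:general vector state distance}, \eqref{eq:state-vector-estimate} again only give slack. The symmetric expression $\inf_{u_A,u_B}\norm{\Psi-u_Au_B\Phi}$ you gesture at is not known to equal either side.

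The paper's argument for \eqref{eq:local distinguishability norm} is entirely different and uses modular theory rather than Uhlmann. First one reduces to the case where the bipartite system is standard: by \cref{cor:all-purifications}, both sides are invariant under replacing $\Psi,\Phi$ by vectors in their closed LU orbits, and one may take $\Psi,\Phi\in\P$ in the positive cone. Then the modular conjugation $J$ satisfies $J\M_A J=\M_B$ and fixes $\Psi,\Phi$, so conjugating the optimization $\sup_{a\in B(\M_A)}|\ip\Psi{u_A^*au_A\Psi}-\ip\Phi{a\Phi}|$ by $J$ literally swaps $A$ and $B$. For non-standard systems one amplifies by $\B(\K)\otimes 1$ and $1\otimes\B(\K)$ with $\K$ infinite-dimensional; by \cref{lem:std if infinite} the amplified system is standard, and \cref{lem:distance of unitary orbits and amplification} (which rests on the Haagerup--St{\o}rmer spectral state machinery of \cref{thm:distance spectral states}) shows the amplification does not change either infimum. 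This last step is the nontrivial input you are missing.
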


In \cref{eq:local distinguishability fid}, $F$ denotes the fidelity (see \cref{sec:fidelity}).
As a consequence of \eqref{eq:local distinguishability fid}, we note that a pair of vectors is approximately equivalent up to  local unitaries (LU) if and only if the respective marginals are  approximately unitarily equivalent:\footnote{This uses that convergence in Fidelity is equivalent to norm convergence, which follows from the Fuchs-van de Graaf inequalities, see \eqref{eq:FvdG}.}

\begin{corollary}\label{cor:local unitary equivalence}
    Let $(\M_A,\M_B,\H)$ be a bipartite system.
    Let $\Psi,\Phi\in\H$ be vectors and let by $\psi_{A/B},\phi_{A/B}\in (\M_{A/B})_*^+$ be the induced marginals on $\M_{A/B}$.
    The following are equivalent:
    \begin{enumerate}[(a)]
        \item $\Psi$ and $\Phi$ are approximately LU equivalent, i.e., for all $\eps>0$ there are unitaries $u_A\in\M_A$ and $u_B\in\M_B$ such that $u_Au_B\Psi \approx_\eps\Phi$;
        \item $\psi_A\sim\phi_A$, i.e., for all $\eps>0$, there exists a unitary $u_A\in\M_A$ such that $u_A\psi_Au_A^*\approx_\eps\phi_A$;
        \item $\psi_B\sim\phi_B$.
    \end{enumerate}
\end{corollary}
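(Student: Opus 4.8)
The plan is to derive everything from equation \eqref{eq:local distinguishability fid} of \cref{prop:local distinguishability} together with the Fuchs--van de Graaf inequalities \eqref{eq:FvdG}; the corollary is then pure bookkeeping, so I will not reprove the proposition. First I would reduce to the case where $\Psi$ and $\Phi$ are unit vectors: if $\norm\Psi\neq\norm\Phi$ then all three statements fail, since the local-unitary action in (a) preserves the norm of $\Psi$ and unitary conjugation in (b)/(c) preserves the total mass $\psi_A(1)=\norm\Psi^2$ and $\phi_A(1)=\norm\Phi^2$. So assume $\norm\Psi=\norm\Phi=1$, whence $\psi_{A/B},\phi_{A/B}\in\nstates(\M_{A/B})$ and $F$, $\norm{\placeholder}$ make sense.

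Next I would rewrite condition (a) as a statement about overlaps. For unit vectors and unitaries $u_A\in\M_A$, $u_B\in\M_B$ one has $\norm{u_Au_B\Psi-\Phi}^2 = 2-2\Re\ip{\Phi}{u_Au_B\Psi}$. Since $\M_A$ contains all scalar phases $\e^{\i\alpha}1$, the phase of $\ip{\Phi}{u_Au_B\Psi}$ can be absorbed into $u_A$, so $\inf_{u_A,u_B}\norm{u_Au_B\Psi-\Phi}=0$ is equivalent to $\sup_{u_A,u_B}\abs{\ip{\Phi}{u_Au_B\Psi}}=1$. Using commutativity of $\M_A$ and $\M_B$ and the fact that $u\mapsto u^*$ is a bijection of each unitary group, $\abs{\ip{\Phi}{u_Au_B\Psi}}=\abs{\ip{\Psi}{u_A^*u_B^*\Phi}}$ ranges over the same set as $\abs{\ip{\Psi}{u_Au_B\Phi}}$; hence
\[
    \text{(a)} \iff \sup_{u_A,u_B}\,\abs{\ip{\Psi}{u_Au_B\Phi}}^2 = 1 .
\]
This is exactly the middle quantity in \eqref{eq:local distinguishability fid}.

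Then I would invoke \eqref{eq:local distinguishability fid} to identify this middle supremum with both $\sup_{u_A}F(\psi_A,u_A\phi_Au_A^*)$ and $\sup_{u_B}F(\psi_B,u_B\phi_Bu_B^*)$, so all three suprema equal $1$ simultaneously, and this common condition is equivalent to (a). Finally, to connect fidelity back to the norm formulation of (b) and (c), I would use the Fuchs--van de Graaf inequalities \eqref{eq:FvdG}: from $1-F^{1/2}\le\tfrac12\norm{\psi_A-u_A\phi_Au_A^*}\le(1-F)^{1/2}$ it follows that $\sup_{u_A}F(\psi_A,u_A\phi_Au_A^*)=1$ if and only if $\inf_{u_A}\norm{\psi_A-u_A\phi_Au_A^*}=0$, which is precisely (b) (after replacing $u_A$ by $u_A^*$ to match the normalization in the statement); the identical argument on the $B$-side gives (c). Chaining these equivalences yields (a) $\iff$ (b) $\iff$ (c).

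The genuine mathematical work lives entirely in \cref{prop:local distinguishability}, which is assumed; the only points in the corollary requiring a little care are the normalization reduction at the start and the fidelity-to-norm translation via \eqref{eq:FvdG}, and neither is a real obstacle. I do not anticipate any hard step, provided the phase-absorption and $u\mapsto u^*$ manipulations are spelled out cleanly.
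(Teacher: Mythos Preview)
Your proposal is correct and follows exactly the approach indicated by the paper, which simply cites \eqref{eq:local distinguishability fid} together with the Fuchs--van de Graaf inequalities \eqref{eq:FvdG} (in a footnote) without spelling out any details. Your additional bookkeeping about normalization, phase absorption, and the $u\mapsto u^*$ swap is sound and just makes explicit what the paper leaves implicit.
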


\Cref{cor:local unitary equivalence} will allow us to show that embezzlement of entanglement is equivalent to universal catalyticity, its single-system analog.
We now come to the proof of \cref{prop:local distinguishability}

\begin{proof}[Proof of \cref{prop:local distinguishability}]
    We begin by showing \cref{eq:local distinguishability norm}:

    \emph{Step 1.} 
    If $\Omega\in\H$ is a vector, we denote by $\mathrm{LU}(\Omega)$ its closed local unitary orbit, i.e., the closure of the set of vectors $u_Au_B\Omega$, where $u_{A/B}\in\M_{A/B}$ are unitaries.
    We note that both the LHS and the RHS of \eqref{eq:local distinguishability norm} remain unchanged if, instead of $\Psi, \Phi$, we take other vectors $\Psi'\in \mathrm{LU}(\Psi)$, $\Phi'\in \mathrm{LU}(\Phi)$.

    \emph{Step 2.} We assume that $(\M_A,\M_B,\H)$ is a standard bipartite system.
    Let $J$ be a conjugation and let $\P\subset\H$ be a $J$-invariant self-dual positive cone such that $(\H,J,\P)$ is a standard form for $\M_A$ (and hence $\M_B$), see \cref{sec:std form}.
    By step 1 and \cref{cor:all-purifications}, we may assume $\Psi,\Phi\in\P$.
    Then
    \begin{align*}
        \inf_{u_A} \norm{u_A\phi_A u_A^*-\psi_A} 
        &= \adjustlimits \inf_{u_A} \sup_a \abs{\ip \Psi{u_A^*au_A\Psi} - \ip\Phi{a\Phi}} \\
        &= \adjustlimits \inf_{u_A} \sup_a \abs{\ip \Psi{Ju_A^*J JaJ Ju_AJ\Psi} - \ip\Phi{JaJ\Phi}} \\
        &= \adjustlimits \inf_{u_B} \sup_b \abs{\ip \Psi{u_B^*bu_B\Psi} - \ip\Phi{b\Phi}} 
        = \inf_{u_b} \norm{u_B\phi_B u_B^* - \psi_B,}
    \end{align*}
    where the suprema are over the set of contractions in the respective algebra.

    \emph{Step 3.} We show the claim by reducing it to the case where the bipartite system is standard.
    Let $\K$ be an infinite-dimensional separable Hilbert space, let $\Omega\in\H$ be a unit vector, and let $\omega$ be the induced normal state on $\B(\K)$.
    By \cref{lem:std if infinite}, $(\M_A\ox \B(\K)\ox1,\M_B\ox1\ox\B(\K),\H\ox\K^{\ox2})$ is a standard bipartite system.
    The marginals of $\tilde\Psi=\Psi\ox\Omega^{\ox2}$ and $\tilde\Phi=\Phi\ox\Omega^{\ox2}$ are $\tilde\psi_{A/B}=\psi_{A/B}\ox \omega$ and $\tilde\phi_{A/B}=\phi_{A/B}\ox\omega$.
    \Cref{lem:distance of unitary orbits and amplification}, implies that
    \begin{equation*}
        \inf_{u_{A/B}}\,\norm{u_{A/B}\phi_{A/B} u_{A/B}^*-\psi_{A/B}} 
        =\inf_{\tilde u_{A/B}} \,\norm{\tilde u_{A/B}\tilde\phi_{A/B}\tilde u_{A/B}^* -\tilde\psi_{A/B}},
    \end{equation*}
    where $\tilde u_{A/B}$ ranges over unitaries in $\M_{A/B}\ox\B(\K)$.
    By step 2, we have 
    \begin{equation*}
        \inf_{\tilde u_A}\,\norm{\tilde u_A\tilde\phi_A\tilde u_A^*-\tilde\psi_A} = \inf_{\tilde u_B}\,\norm{\tilde u_B\tilde\phi_B\tilde u_B^*-\tilde\psi_B},
    \end{equation*}
    which finishes the proof.

    \Cref{eq:local distinguishability fid} follows from Uhlmann's Theorem (\cref{thm:uhlmann}) and \cref{cor:all-purifications}.
\end{proof}

\subsection{The coupling constant of semifinite bipartite systems}\label{sec:coupling}

In this subsection, we investigate factorial semifinite bipartite systems.
In particular, we discuss how traces on $\M_A$ and $\M_B$ are coupled to each other and how the coupling constant $c(\M_A:\M_B)$ measures the relative size of $\M_A$ and $\M_B$.
At the end of the subsection, we comment on the non-factorial case.
Mathematically, the material of this subsection is well-known and goes back all the way to Murray and von Neumann \cite{rings_of_operators1,rings_of_operators2}.
However, coming from bipartite pure state entanglement theory, we will look at these classical results from new perspectives.

Recall that a semifinite factor $\M$ (i.e., a type $\I$ or $\II$ factor) has a \nsf $\tau$ which is uniquely determined up to scaling.
In case of a factorial semifinite bipartite system $(\M_A,\M_B,\H)$, this means that we have, a priori, two \nsf traces $\tau_{A/B}$ that are only  determined up to scaling.
However, we can canonically fix their relative scaling.
This is a consequence of the following classic structure result on the representations of semifinite factors due to Murray-von Neumann \cite{rings_of_operators1,rings_of_operators2} (see also \cite[Thm.~V.3.8]{takesaki1}):

\begin{lemma}\label{lem:murray vN}
    Let $\M$ be a semifinite factor on $\H$. 
    For each pair of \nsf traces $\tau,\tau'$ on $\M$ and $\M'$, there exists a number $0<C<\oo$ such that $\tau[\M'\Psi] =C \cdot \tau'[\M\Psi]$ for all $\Psi\in\H$. 
    In particular, $[\M'\Psi]$ is a finite projection in $\M$ if and only if $[\M\Psi]$ is a finite projection in $\M'$.
\end{lemma}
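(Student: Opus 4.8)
The plan is to reduce the ``in particular'' clause to the scaling identity and then to establish that identity by recognizing both sides as dimension functions on Murray--von Neumann equivalence classes.

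\textbf{Reductions.} For any $\Psi\in\H$ the subspace $\overline{\M'\Psi}$ is $\M'$-invariant, so $[\M'\Psi]\in\M''=\M$, and by \eqref{eq:supp proj} it is the support projection of the state $\Psi$ induces on $\M$; symmetrically $[\M\Psi]\in\M'$. In a semifinite factor a projection is finite if and only if its trace is finite (see \cref{sec:semifinite vNas}). Hence, once the identity $\tau[\M'\Psi]=C\,\tau'[\M\Psi]$ is known for some $0<C<\oo$, the equivalence ``$[\M'\Psi]$ finite $\iff[\M\Psi]$ finite'' follows at once, and everything reduces to the scaling identity. Define $d,d':\H\to\bar\RR^+$ by $d(\Psi)=\tau[\M'\Psi]$ and $d'(\Psi)=\tau'[\M\Psi]$.

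\textbf{Invariances and the induced map on classes.} Both functions are invariant under the local unitary group $\U(\M)\cdot\U(\M')$: for $v'\in\U(\M')$ one has $\M'v'\Psi=\M'\Psi$, so $[\M'(v'\Psi)]=[\M'\Psi]$; for $u\in\U(\M)$ one has $[\M'(u\Psi)]=u[\M'\Psi]u^*$ (as $u$ commutes with $\M'$), which has the same trace by traciality; the statements for $d'$ are symmetric. Thus $d(\Psi)$ depends only on the $\M$-class of $[\M'\Psi]$ and $d'(\Psi)$ only on the $\M'$-class of $[\M\Psi]$. Regarding $(\M,\M',\H)$ as a bipartite system (Haag duality $\M=\M''$ is automatic), \cref{lem:mvn ordering} gives
\begin{equation*}
    [\M'\Psi]\succeq[\M'\Phi]\iff[\M\Psi]\succeq[\M\Phi],
\end{equation*}
and exchanging $\Psi,\Phi$ yields the analogous equivalence for $\sim$. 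Since in a factor MvN classes are labelled by their trace, $[\M'\Psi]\mapsto[\M\Psi]$ descends to a well-defined, strictly order-preserving bijection $\bar\Theta$ between the ranges of $d$ and $d'$ in $\bar\RR^+$, and proving $d=C\,d'$ is equivalent to showing $\bar\Theta$ is multiplication by $C^{-1}$.

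\textbf{Additivity and the main obstacle.} Order-preservation constrains $\bar\Theta$ but does not alone force linearity; for that it remains to show $\bar\Theta$ is additive. The mechanism is that if $\Psi_1,\Psi_2$ have supports orthogonal \emph{simultaneously} in $\M$ and in $\M'$, then using $[\M\Psi_i]\Psi=\Psi_i$ one checks $[\M'(\Psi_1+\Psi_2)]=[\M'\Psi_1]+[\M'\Psi_2]$ and likewise in $\M$, so $d$ and $d'$ are additive on such pairs. One can arrange this because an $\M$-unitary applied to $\Psi_2$ moves $[\M'\Psi_2]$ within its $\M$-class while fixing $[\M\Psi_2]$, and an $\M'$-unitary does the opposite, so the two supports rotate independently. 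The genuine difficulty is that orthogonalization can fail for lack of room (e.g.\ when traces exceed $\tau(1)$ in the finite case); this is removed by first amplifying with an infinite type $\I$ factor $\B(\K)$, where both algebras become properly infinite and any two projections of finite trace can be separated, and then tracking how $d,d'$ rescale under amplification and under the reduction back to $\H$. This bookkeeping---the content of Murray--von Neumann coupling theory (\cite[Thm.~V.3.8]{takesaki1})---forces $\bar\Theta$ to be an additive, order-preserving map of the dimension range, hence multiplication by some $C^{-1}\in(0,\oo)$, giving $d=C\,d'$. I expect this simultaneous orthogonalization, together with the amplification-and-reduction bookkeeping, to be the main obstacle; an alternative that sidesteps it is to compute $d$ and $d'$ explicitly in the standard form $L^2(\M,\tau)$ through the left and right supports of an $L^2$-element, and then transfer to a general representation by amplifying and cutting by a projection in the commutant.
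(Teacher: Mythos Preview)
The paper does not give a proof of this lemma at all: it is stated with attribution to \cite[Thm.~V.3.8]{takesaki1} as a classical fact of Murray--von Neumann coupling theory and used as a black box. So there is no ``paper's proof'' to compare against.

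Your outline is a reasonable sketch of the standard argument. You correctly set up the invariance of $d,d'$ under local unitaries, correctly invoke \cref{lem:mvn ordering} (which is logically prior, being \cite[Prop.~V.1.10]{takesaki1}) to produce the order-preserving bijection $\bar\Theta$, and correctly identify the crux as additivity via simultaneous orthogonalization together with the amplification/reduction bookkeeping---which is exactly the content of the Takesaki reference you cite. Since you ultimately defer the technical core to the same source the paper invokes, your proposal and the paper are effectively aligned: both treat this as a classical result. One small clarification on your additivity step: orthogonality of the $\M'$-projections $[\M\Psi_i]$ alone already gives $[\M'(\Psi_1+\Psi_2)]=[\M'\Psi_1]\vee[\M'\Psi_2]$ (since $[\M\Psi_1]\in\M'$ picks out $\Psi_1$ from the sum); the separate orthogonality of the $[\M'\Psi_i]$ is what converts the join into a sum, making $\tau$ additive.
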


% While the individual scaling of the traces on $\M_A$ and $\M_B$ will often be irrelevant for us, their relative scaling is not. 
We make the following convention, which we call the "trace coupling convention", regarding the joint scaling of the traces on a factorial semifinite bipartite system:

\begin{convention*}
    For a factorial semifinite bipartite system $(\M_A,\M_B,\H)$, the traces $\tau_A$ and $\tau_B$ are chosen such that 
    \begin{equation}\label{eq:coupling-of-the-traces}
        \tau_A([\M_A'\Psi])=\tau_B([\M_B'\Psi])\qquad \forall \Psi\in\H.
    \end{equation}
\end{convention*}

In the case of $\I$ factors $\M_A=\B(\H_A)\ox1$, $\M_B=1\ox\B(\H_B)$ on $\H=\H_A\ox\H_B$, our convention says that the projections onto the subspaces $V_A\ox \H_B$ and $\H_A\ox V_B$ have the same trace if and only if $\dim V_A=\dim V_B$.
In particular, the convention is satisfied by the standard traces (i.e., minimal projections have trace 1).
In the study of type $\II_1$ factors and their representation, the different convention that all traces are unital is usually used \cite{jones_index_1983}.

Next, we consider standard bipartite systems, where we show that the trace coupling convention agrees with the trace coupling induced by the modular conjugation:

\begin{lemma}\label{lem:std trace coupling}
    Let $(\M_A,\M_B,\H)$ be a factorial semifinite bipartite system.
    If the bipartite system is standard with conjugation $J$ (as in \cref{prop:std bipartite system}), then the trace coupling convention is equivalent to
    \begin{equation}\label{eq:std trace coupling}
        \tau_A (a) = \tau_B(Ja^*J), \qquad a\in\M_A.
    \end{equation}
\end{lemma}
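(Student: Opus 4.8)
The plan is to show that the two conditions in \cref{lem:std trace coupling} — the abstract trace coupling convention \eqref{eq:coupling-of-the-traces} and the $J$-twisted coupling \eqref{eq:std trace coupling} — determine the same relative scaling of $\tau_A$ and $\tau_B$. Since, by \cref{lem:murray vN}, both $\tau_A$ and $\tau_B$ are unique up to positive scalar multiples, each of the two conditions pins down the relative scaling of the pair $(\tau_A,\tau_B)$ by a single positive constant. Thus it suffices to show that if $\tau_A$ and $\tau_B$ satisfy \eqref{eq:std trace coupling}, then they also satisfy \eqref{eq:coupling-of-the-traces}; the converse then follows by the uniqueness of the scaling.

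First I would observe that the map $a\mapsto \tau_B(Ja^*J)$ is indeed a \nsf trace on $\M_A$. This uses Tomita's theorem (see \eqref{eq:tomitas thm}, or rather the standard form property $J\M_AJ=\M_B'=\M_B$ — note $\M_B=\M_A'$ here so $J\M_AJ=\M_B$), so that $a\mapsto Ja^*J$ is a $*$-anti-isomorphism $\M_A\to\M_B$; since $\tau_B$ is a trace on $\M_B$, pulling it back along an anti-isomorphism again yields a trace, and normality, semifiniteness and faithfulness are preserved. Hence by uniqueness the prescription \eqref{eq:std trace coupling} is a legitimate normalization of $\tau_A$ relative to $\tau_B$. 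The substance of the lemma is therefore the single identity \eqref{eq:coupling-of-the-traces}, which I would verify on the support projections $[\M_A'\Psi]=[\M_B\Psi]$ and $[\M_B'\Psi]=[\M_A\Psi]$.

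The key computation is as follows. Fix $\Psi\in\H$ and let $\psi_{A/B}$ be the induced marginals, with $p_A=[\M_A'\Psi]=\supp(\psi_A)\in\M_A$ and $p_B=[\M_B'\Psi]=\supp(\psi_B)\in\M_B$ (using \eqref{eq:supp proj} and Haag duality). By \cref{prop:std bipartite system}\ref{it:std bipartite system2} and the characterization of the standard form, the conjugation $J$ maps the support projection of a vector functional relative to $\M_A$ to the support projection relative to $\M_B$; concretely, I would show $J p_A J = p_B$ by checking that $J[\M_B\Psi]J=[\M_A\Psi]$. This is where I would use \ref{it:std form3} of \cref{thm:std form} together with the fact (from the proof of \cref{lem:std rep}) that for $\Omega\in\P$ one has $J\Omega=\Omega$: after replacing $\Psi$ by a representing vector in the positive cone $\P$ — which changes neither side of \eqref{eq:coupling-of-the-traces} nor the support projections, by \cref{cor:all-purifications} and \cref{prop:state induced truncation} — the relation $J\Psi=\Psi$ gives $J\overline{\M_B\Psi}=\overline{J\M_BJ\,J\Psi}=\overline{\M_A\Psi}$, hence $Jp_AJ=p_B$. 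Applying \eqref{eq:std trace coupling} to the projection $a=p_A$ and using $p_A^*=p_A$ then yields
\begin{equation*}
    \tau_A([\M_A'\Psi]) = \tau_A(p_A) = \tau_B(Jp_AJ) = \tau_B(p_B) = \tau_B([\M_B'\Psi]),
\end{equation*}
which is precisely \eqref{eq:coupling-of-the-traces}.

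The main obstacle I anticipate is the bookkeeping around the positive cone and the support projections: one must argue cleanly that $Jp_AJ=p_B$, and this genuinely needs the vector to be taken in $\P$ so that $J\Psi=\Psi$ holds, rather than for an arbitrary representative of the local unitary orbit. A secondary subtlety is that $\Psi$ need not be separating (its marginals can have nontrivial support projections), so I would either pass to the truncated bipartite system of \cref{prop:state induced truncation}, where $\Psi$ becomes bicyclic and the partial conjugation $J_\Psi$ implements the exchange symmetry compatibly with $J$, or work directly with the identity $J\overline{\M_B\Psi}=\overline{\M_A\Psi}$ which holds without any cyclicity hypothesis. Once $Jp_AJ=p_B$ is established the remainder is the short display above, and the equivalence of the two normalizations follows from the uniqueness of \nsf traces up to scaling.
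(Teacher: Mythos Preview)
Your proof is correct and follows essentially the same route as the paper: reduce to a vector $\Psi\in\P$ so that $J\Psi=\Psi$, compute $Jp_AJ=p_B$ from $J\M_BJ=\M_A$, and conclude via the uniqueness of the trace scaling. One small imprecision: replacing $\Psi$ by its positive-cone representative does change the $\M_B$-support projection $p_B$ (only its Murray--von Neumann class, and hence its $\tau_B$-value, is preserved via \cref{lem:mvn ordering}), but this is harmless since your uniqueness-of-scaling reduction means verifying \eqref{eq:coupling-of-the-traces} for a single well-chosen $\Psi$ suffices.
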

\begin{proof}
    Let $(\H,J,\P)$ be a standard form of $\M_A$.
    Let $\omega_A$ be a state on $\M_A$ with a finite support projection, and let $\Omega\in\P$ be the purification.
    Note that $\supp(\omega_A) = [\M_A'\Omega] = [J \M_B'J\Omega] = [J\M_B'\Omega] = [J\M_B'\Omega]J = J \supp(\omega_B)J$, where $\omega_B$ is the marginal state of $\Omega$ on $\M_B$.
    This gives
    \begin{equation*}
        \tau_B([\M_B' \Omega]) =\tau_B(J[\M_A'\Omega]J).
    \end{equation*}
    Thus, the trace coupling convention (which sets the LHS equal to $\tau_A([\M_A'\Omega])$) is equivalent to \eqref{eq:std trace coupling} (which sets the RHS equal to the same value).
\end{proof}

A benefit of our convention is that it also makes sense when one or both of the factors is non-finite, which is not the case for the convention where all traces are taken to be unital (see above).

\begin{definition}[Coupling constant for bipartite systems]\label{def:coupling constant}
    The coupling constant of a factorial semifinite bipartite system $(\M_A,\M_B,\H)$ is defined as
    \begin{equation}\label{eq:coupling constant}
        c(\M_A:\M_B) := \frac{\tau_A(1)}{\tau_B(1)} \in [0,\oo],
    \end{equation}
    for some, hence all, choice of coupled traces $(\tau_A,\tau_B)$.
    For a factorial bipartite system of type $\III$, we set $c(\M_A:\M_B)=1$.
\end{definition}

In \eqref{eq:coupling constant} and below, we use the following conventions for degenerate ratios:
\begin{equation}\label{eq:ratio convention}
    \frac \oo r = \oo,\quad \frac r\oo=0, \qandq \frac\oo\oo=1, \qquad \text{ for $0<r<\oo$.}
\end{equation}
The coupling constant measures the size of $\M_A$ relative to $\M_B$.
It is related to the Murray-von Neumann coupling constant $c(\M_A,\H) \equiv \dim_{\M_A}\H$ (cp.\ \cite{rings_of_operators1, jones_index_1983}) via
\begin{equation}
    c(\M_A:\M_B) = \frac1{\dim_{\M_A}\H} = \dim_{\M_B}\H.
\end{equation}
It is a direct consequence of the definition that
\begin{equation}
    c(\M_B:\M_A) = c(\M_A:\M_B)^{-1},
\end{equation}
which also holds if the coupling constants are $0$ or $\oo$ if the convention \eqref{eq:ratio convention} is used.
In case of a tensor product bipartite system with Hilbert space $\H = \H_A\ox\H_B$, we have 
\begin{equation}
    c\big(\B(\H_A)\ox1:1\ox \B(\H_B)\big) = \frac{  \dim\H_A}{\dim\H_B}.
\end{equation}

\begin{lemma}[{\cite[Sec.~V.3]{takesaki1}}]\label{lem:coupling constant}
    Let $(\M_A,\M_B,\H)$ be a factorial bipartite system.
    \begin{enumerate}[(i)]
        \item\label{it:coupling constant1} $c(\M_A:\M_B)\ge 1$ (resp.\ $\le 1$) if and only if  $\M_A$ (resp.\ $\M_B$) admits a cyclic vector.

        \item\label{it:coupling constant2} $(\M_A,\M_B,\H)$ is standard if and only if $c(\M_A:\M_B)=1$.

        \item\label{it:coupling constant3} Suppose $(\N_A,\N_B,\K)$ is a bipartite system such that $\N_A\cong \M_B$. Then $(\M_A,\M_B,\H) \cong (\N_A,\N_B,\K)$ holds if and only if
            \begin{equation}
                c(\M_A:\M_B) = c(\N_A:\N_B).
            \end{equation}
    \end{enumerate}
\end{lemma}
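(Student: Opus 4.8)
The plan is to reconstruct the classical Murray--von Neumann coupling theory \cite[Sec.~V.3]{takesaki1} from the tools assembled above, using the support projections of vector states as the bookkeeping device. Throughout, fix a coupled pair of traces $(\tau_A,\tau_B)$ as in the trace coupling convention, and for a unit vector $\Psi$ write $p_A=[\M_A'\Psi]=\supp\psi_A$ and $p_B=[\M_B'\Psi]=[\M_A\Psi]=\supp\psi_B$. The convention \eqref{eq:coupling-of-the-traces} is precisely the statement $\tau_A(p_A)=\tau_B(p_B)$ for every $\Psi$, and its consistency is guaranteed by \cref{lem:murray vN} together with the standard truncated system of \cref{prop:state induced truncation} and \cref{lem:std trace coupling}. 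In the type $\III$ case all statements are vacuous or trivial: there $c=1$ by \cref{def:coupling constant}, and \cref{lem:std if infinite} already makes the system standard, so $\M_A$ and $\M_B$ both admit cyclic vectors.

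For (i), the ``only if'' direction is immediate from the convention: if $\Psi$ is cyclic for $\M_A$ then $p_B=[\M_A\Psi]=1$, whence $\tau_A(p_A)=\tau_B(1)$ and therefore $\tau_A(1)\ge\tau_A(p_A)=\tau_B(1)$, i.e.\ $c(\M_A:\M_B)\ge 1$; the $\M_B$/$c\le 1$ claim follows by interchanging the roles of $A$ and $B$. The ``if'' direction is the crux. Assuming $\tau_A(1)\ge\tau_B(1)$, I would construct a vector $\Psi$ with $p_B=1$ (equivalently, a vector separating for $\M_B$). The idea is to exhaust the identity of $\M_B$ by support projections: starting from an arbitrary vector and using the comparison theory of \cref{lem:mvn ordering} together with the fact that $\tau_A$ takes every value in $[0,\tau_A(1)]$ on projections of $\M_A$, one enlarges $p_B$ in countably many orthogonal steps, each step consuming a portion of $A$-side trace that remains available because $\tau_B(1)\le\tau_A(1)$. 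Separability of $\H$ should let this exhaustion terminate with $p_B=1$. This is the step I expect to require the most care, since one must keep the partial supports compatible---orthogonal on the $B$-side while remaining subequivalent on the $A$-side---and rule out a trace deficiency in the limit.

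Part (ii) then follows formally. If $c(\M_A:\M_B)=1$, both inequalities of (i) hold, so $\M_A$ has a cyclic vector and, via $c\le1$ (i.e.\ $\M_B$ cyclic), also a separating vector; invoking the same fact used in the proof of \cref{prop:std bipartite system} (Takesaki's \cite[Cor.~V.1.14]{takesaki1}) upgrades these into a single bicyclic vector, hence standardness. Conversely, a standard system has a bicyclic vector by \cref{prop:std bipartite system}, which is simultaneously cyclic for $\M_A$ and for $\M_B$, forcing $c\ge 1$ and $c\le 1$ by (i), so $c=1$.

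For (iii), the ``only if'' direction is routine: a unitary $U$ implementing $(\M_A,\M_B,\H)\cong(\N_A,\N_B,\K)$ carries a coupled pair of traces on $(\M_A,\M_B)$ to a coupled pair on $(\N_A,\N_B)$---because the coupling condition \eqref{eq:coupling-of-the-traces} is expressed through the spatial data $[\M_A'\Psi]$, $[\M_B'\Psi]$, which $U$ preserves---so the ratio $c$ is an isomorphism invariant. For the ``if'' direction, under the hypothesis that the local algebras agree abstractly ($\M_A\cong\N_A$) and $c(\M_A:\M_B)=c(\N_A:\N_B)$, I would realize both systems as reductions of a common model, namely the standard form of $\M_A$ amplified by a type $\I$ factor and then cut down by a projection in the commutant. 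The coupling constant equals the trace of this cutting projection relative to the coupled trace, so equal coupling constants make the two cutting projections Murray--von Neumann equivalent; transporting this equivalence and the fixed isomorphism $\M_A\cong\N_A$ through the uniqueness of the standard form (\cref{prop:std bipartite system}) then yields the spatial isomorphism $U$. As in (i), the main obstacle is the concrete vector/projection construction inside the amplified standard model; everything else is bookkeeping with the coupled traces, and the argument recovers \cite[Sec.~V.3]{takesaki1}.
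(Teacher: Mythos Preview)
Your proposal is correct and reconstructs exactly the classical Murray--von Neumann arguments that the paper invokes; the paper itself does not give any argument here but simply cites \cite[Thm.~V.3.13]{takesaki1} for (i), notes that (i)$\Rightarrow$(ii), and cites \cite[Thm.~V.3.11]{takesaki1} and \cite[Cor.~V.3.2]{takesaki1} for the semifinite and type~$\III$ cases of (iii). (Incidentally, you wrote the hypothesis in (iii) as $\M_A\cong\N_A$ rather than the paper's $\N_A\cong\M_B$; your reading is the one that makes the conclusion work, and the paper's version appears to be a typo.)
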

\begin{proof}
    Clearly \ref{it:coupling constant1} $\Rightarrow$ \ref{it:coupling constant2}.
    \Cref{it:coupling constant1} holds trivially in the type $\III$ case.
    The semifinite case of the \cref{it:coupling constant1,it:coupling constant3} are shown in \cite[Thm.~V.3.13]{takesaki1} and \cite[Thm.~V.3.11]{takesaki1}, respectively.\footnote{To be precise, the results of \cite{takesaki1} are only shown under the assumption of finiteness. The reason for this is that \cite{takesaki1} only defines the coupling constant under this assumption, where it is defined via \cref{lem:murray vN} with the convention that the traces are unital. With our convention of defining the coupling constant, the proofs of \cite{takesaki1} apply to the non-finite case.}
    The type $\III$ case of the third item is shown in \cite[Cor.~V.3.2]{takesaki1}.
\end{proof}

To state our next result, we need conventions for traces on cut-downs:
If $\M_A$ is semifinite and $p_A\in\proj(\M_A)$ is a projection, then the cut-down $p_A \M_A p_A$ is a semifinite von Neumann algebra on $p_A\H$, which we equip with the restricted trace. 
The commutant of $p_A\M_Ap_A$, taken as a von Neumann algebra on $p_A\H$, is precisely $p_A\M_Bp_A$ \cite[Cor.~5.5.7]{kadison_ringrose1}.
If $p_B$ is a projection in $\M_B$, then $p_B\M_Ap_B$ is a von Neumann algebra on $p_B\H$ whose commutant is $p_B\M_Bp_B$ \cite[Prop.~5.5.6]{kadison_ringrose1}. 
In this case, we equip the cut-down $p_B\M_A p_B$ with the trace induced by this isomorphism $a\mapsto p_B a$.
Hence, if $q= p_Ap_B$ with $p_{A/B}$ as above, the commutant of $q\M_A q\subset \B(q\H)$ is $q\M_Bq$, so that $(q\M_Aq,q\M_Bq,q\H)$ is a bipartite system.
It can readily be checked that these conventions imply that $\tau_{q\M_j q}([(q\M_jq)'\Psi]) =\tau_{\M_j}([\M_j'\Psi])$, $j=A,B$, for all $\Psi\in q\H$, where we introduced the notation $\tau_{q\M_j q},\tau_{\M_j}$ to distinguish the original from the induced trace.
% Indeed, it suffices to check the two cases $q= p_A$ and $q=p_B$ separately, both of which are clear.

\begin{lemma}\label{lem:truncated traces}
     Let $(\M_A,\M_B,\H)$ be a factorial semifinite bipartite system and let $p_{A/B}\in\M_{A/B}$ be projections. Set $q=p_Ap_B$, $\K=q\H$, and $\N_{A/B}=q\M_{A/B}q$.
    Then the induced traces on $\N_A$ and $\N_B$ satisfy the trace coupling convention, and 
    \begin{equation}\label{eq:coupling constant truncated}
        c(\N_A:\N_B) = \frac{\tau_A(p_A)}{\tau_B(p_B)}.
    \end{equation}
    % with the convention \eqref{eq:ratio convention}.
    In particular, $(\N_A,\N_B,q\H)$ is standard if and only if $\tau_A(p_A)=\tau_B(p_B)$.
\end{lemma}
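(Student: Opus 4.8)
The plan is to verify the two numerical trace identities $\tau_{\N_A}(1_{\N_A}) = \tau_A(p_A)$ and $\tau_{\N_B}(1_{\N_B}) = \tau_B(p_B)$ directly from the cut-down conventions fixed just before the lemma, then to obtain the trace coupling convention for $(\N_A,\N_B,q\H)$ from the identity $\tau_{q\M_jq}([(q\M_jq)'\Psi]) = \tau_{\M_j}([\M_j'\Psi])$ recorded there, and finally to read off the standardness statement from \cref{lem:coupling constant}. First I would record that $(\N_A,\N_B,q\H)$ is again a factorial semifinite bipartite system (assuming $q\neq 0$; the case $q=0$ is trivial): reducing the factor $\M_A$ by $p_A\in\M_A$ gives the semifinite factor $p_A\M_A p_A$, and reducing further by the commutant projection $q=p_Ap_B\in p_A\M_B p_A=(p_A\M_A p_A)'$ gives $\N_A = q\M_A q$, which is $*$-isomorphic to $p_A\M_A p_A$ and hence a semifinite factor; symmetrically for $\N_B$, and $\N_A' = q\M_B q = \N_B$ on $q\H$ by the cut-down commutation relations already cited.

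The key step is the trace computation. By convention the induced trace on $\N_A$ is obtained by restricting $\tau_A$ to the corner $p_A\M_A p_A$ and then transporting it along the reduction isomorphism $\phi\colon p_A\M_A p_A \to \N_A$, $\phi(x)=qx=qxq$, associated with the commutant projection $q$; transport means $\tau_{\N_A}(\phi(x)) = \tau_A(x)$ for $x\in p_A\M_A p_A$. Since $q\le p_A$ we have $\phi(p_A) = qp_A = q = 1_{\N_A}$, so
\begin{equation*}
    \tau_{\N_A}(1_{\N_A}) = \tau_{\N_A}(\phi(p_A)) = \tau_A(p_A),
\end{equation*}
and exchanging the roles of $A$ and $B$ yields $\tau_{\N_B}(1_{\N_B}) = \tau_B(p_B)$.

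Next I would establish the coupling convention and compute the constant. For $\Psi\in q\H$ the recorded identity gives $\tau_{\N_j}([\N_j'\Psi]) = \tau_{\M_j}([\M_j'\Psi])$ for $j=A,B$ (using $\N_A'=\N_B$, $\N_B'=\N_A$ on $q\H$); since $(\tau_A,\tau_B)$ already satisfy the trace coupling convention on all of $\H\supseteq q\H$, it follows that $\tau_{\N_A}([\N_A'\Psi]) = \tau_{\N_B}([\N_B'\Psi])$ for every $\Psi\in q\H$. Thus $(\tau_{\N_A},\tau_{\N_B})$ is an admissible pair of coupled traces, and by \cref{def:coupling constant} together with the preceding computation
\begin{equation*}
    c(\N_A:\N_B) = \frac{\tau_{\N_A}(1_{\N_A})}{\tau_{\N_B}(1_{\N_B})} = \frac{\tau_A(p_A)}{\tau_B(p_B)},
\end{equation*}
interpreted via the ratio conventions \eqref{eq:ratio convention} when a trace value is infinite. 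The final assertion is then immediate from \cref{lem:coupling constant}\ref{it:coupling constant2}: the factorial bipartite system $(\N_A,\N_B,q\H)$ is standard iff $c(\N_A:\N_B)=1$, i.e.\ iff $\tau_A(p_A)=\tau_B(p_B)$ (again read through \eqref{eq:ratio convention}, so that both values being $\oo$ counts as equality).

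The main obstacle is purely bookkeeping rather than conceptual: one must make the induced-trace conventions genuinely unambiguous, checking that reduction by a projection in the commutant of a factor is a trace-\emph{preserving} $*$-isomorphism (legitimizing the transport step and making the combined $q=p_Ap_B$ cut-down independent of the order in which one reduces by $p_A$ and $p_B$), and that this is compatible with ordinary restriction on the algebra side. Once that is pinned down, the infinite cases contribute no real difficulty, as they are absorbed entirely into the ratio conventions \eqref{eq:ratio convention}.
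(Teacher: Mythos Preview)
Your proposal is correct and follows the same approach as the paper. The paper's proof is extremely terse: it invokes the recorded identity $\tau_{\N_j}([\N_j'\Psi])=\tau_{\M_j}([\M_j'\Psi])$ for the coupling convention and then says the coupling constant formula ``follows directly from the definition''; you simply unpack that last step by explicitly verifying $\tau_{\N_A}(1_{\N_A})=\tau_A(p_A)$ via the transport isomorphism, which is exactly what the paper leaves implicit.
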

\begin{proof}
    As stated above, we have $\tau_{\N_j}([\N_j'\Psi])=\tau_{\M_j}([\M_j'\Psi])$, $j=A,B$, for $\psi\in \K$.
    Thus, the claim follows from the assumption that the trace coupling convention holds for the traces on $\M_{A/B}$.
    \Cref{eq:coupling constant truncated} follows directly from the definition of the coupling constant.
\end{proof}

% \begin{corollary}\label{cor:rotate-into-std-subspace}
%     Let $(\M_A,\M_B,\H)$ be a factorial semifinite bipartite system with $c(\M_A:\M_B)>1$.
%     Let $p_A\in \M_A$ be a projection with $\tau_A(p_A) = \tau_B(1)$ (such a projection exists).
%     Then, for every vector $\Psi\in\H$, there is a partial isometry $v_A\in \M_A$, which acts isometrically on $\Psi$, and satisfies $v_A\Psi \in p_A\H$.
% \end{corollary}
% \begin{proof}
%     The projection $p_A$ can, for instance, be constructed by picking a vector $\Omega$ which is cyclic for $\M_A$, and setting $p_A=[\M_A'\Omega]$.
%     Next, we show that $[\M_A'\Psi]\preceq p_A$. Indeed, $\tau_A( [\M_A'\Psi]) = \tau_B([\M_B'\Psi]) \le \tau_B( 1) = \tau_A(p_A)$.
%     Therefore, there is a partial isometry $v_A$ with $v_A^*v_A=[\M_A'\Psi]$ and $v_Av_A^*\le p_A$.
%     Hence, we have $v_A\Psi \in p_A\H$.
%     % The assumption that $c(\M_A:\M_B)>1$ implies that $\M_B$ is finite. 
%     % Hence, $v_Av_A^*$ is a finite projection in $\M_A$ (indeed, $\tau_A(v_Av_A^*) = \tau_A(v_A^*v_A)= \tau_B(1) <\oo$).
%     % Therefore, $v_A$ can be extended to a unitary $u_A$ in $\M_A$, and we get $u_A\Psi = v_A\Psi \in p_A\H$.
% \end{proof}
% \Cref{cor:rotate-into-std-subspace} implies the following:
% If a bipartite system is not standard, and if we can truncate the larger of the two factors to obtain a standard bipartite system, then every unit vector $\Psi$ can be mapped into the subspace $\H_\Psi$ of the standard subsystem $(\M_{A,\Psi},\M_{B,\Psi},\H_\Psi)$ by a local unitary of the larger of the two factors.

\begin{lemma}\label{lem:rank condition for purification}
    Let $(\M_A,\M_B,\H)$ be a factorial semifinite bipartite system with coupled \nsf traces $(\tau_A,\tau_B)$ and let $\psi_A$ be a normal state on $\M_A$.
    Then $\psi_A$ has a purification in $\H$ if and only if 
    \begin{equation}\label{eq:rank condition for purification}
        \tau_A(\supp(\psi_A))\le \tau_B(1),
    \end{equation}
    where both sides may be infinite.
\end{lemma}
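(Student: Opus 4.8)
The statement characterizes exactly which normal states $\psi_A$ on $\M_A$ admit a purification in $\H$ in terms of a trace inequality. The plan is to reduce to the two standard structural facts already established: the rank/support characterization of purifications via the truncated bipartite system (\cref{prop:state induced truncation}, part \ref{it:state induced truncation3}) and the Murray–von Neumann comparison theory for projections coupled to the trace. The key insight is that a purification of $\psi_A$ in $\H$ exists if and only if the support projection $\supp(\psi_A)\in\M_A$ is ``small enough'' to be accommodated by the representation, which is precisely what the coupling of traces measures.

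\textbf{Key steps.} First I would fix coupled \nsf traces $(\tau_A,\tau_B)$ satisfying the trace coupling convention \eqref{eq:coupling-of-the-traces}, and set $p_A=\supp(\psi_A)\in\proj(\M_A)$. The idea is to relate the existence of a purifying vector $\Psi\in\H$ to a Murray–von Neumann comparison of $p_A$ with the unit $1\in\M_A$ relative to the commutant $\M_B$. Recall from \eqref{eq:supp proj} that if $\Psi$ is a purification of $\psi_A$, then $\supp(\psi_A)=[\M_A'\Psi]=[\M_B\Psi]$; conversely, a state $\psi_A$ has a purification in $\H$ precisely when there is a vector $\Psi$ whose induced support projection on $\M_A$ equals $p_A$. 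The second step is to translate the inequality \eqref{eq:rank condition for purification} into a comparison statement: by the correspondence \eqref{eq:mvn-ordering-and-trace} between the trace and Murray–von Neumann dominance, $\tau_A(p_A)\le\tau_B(1)$ is equivalent to the existence of a projection $p_B\le 1$ in $\M_B$ (necessarily $p_B=1$ since $\M_B$ is a factor, or a subprojection thereof) such that $p_A\preceq$ the projection realizing $\tau_B(1)$ under the coupling. Concretely, using \cref{lem:murray vN} and the trace coupling convention, the condition $\tau_A(p_A)\le\tau_B(1)$ says exactly that there is a vector $\Psi$ with $[\M_A'\Psi]=p_A$ and $[\M_B'\Psi]\le 1$, the latter being automatic.

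\textbf{Carrying out the two directions.} For the ``only if'' direction: given a purification $\Psi$, the projection $[\M_B'\Psi]\in\M_B$ satisfies $\tau_B([\M_B'\Psi])=\tau_A([\M_A'\Psi])=\tau_A(p_A)$ by the coupling convention \eqref{eq:coupling-of-the-traces}, and since $[\M_B'\Psi]\le 1$ we get $\tau_A(p_A)=\tau_B([\M_B'\Psi])\le\tau_B(1)$. For the ``if'' direction, assuming $\tau_A(p_A)\le\tau_B(1)$, I would use the structure of semifinite factors to produce a projection $q_B\in\M_B$ with $\tau_B(q_B)=\tau_A(p_A)$ (possible by \eqref{eq:mvn-ordering-and-trace} and the fact that the trace takes all values in the relevant dimension range $D$; see \cref{sec:semifinite vNas}), and then invoke \cref{lem:murray vN} to find a vector $\Psi$ with $[\M_A'\Psi]=p_A$ and $[\M_B'\Psi]=q_B$. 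That $\Psi$ purifies \emph{some} state with support $p_A$ is immediate; one then adjusts within the cyclic subspace (using \cref{prop:state induced truncation}, which shows the truncated system $(\M_{\Psi,A},\M_{\Psi,B},\H_\Psi)$ is standard and that purifications with the prescribed support exist there) to realize $\psi_A$ itself rather than merely a state with the same support.

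\textbf{Main obstacle.} The delicate point is the boundary/infinite cases governed by the ratio conventions \eqref{eq:ratio convention}, particularly when $\tau_B(1)=\oo$ (type $\II_\oo$ or $\I_\oo$ with $\M_B$ properly infinite) or when $\tau_A(p_A)=\oo$. When $\tau_B(1)=\oo$ the inequality is satisfied for every $\psi_A$, and I expect the cleanest argument is to invoke \cref{lem:std if infinite}: if both algebras are properly infinite the system is already standard, so every normal state is purifiable by \cref{prop:std bipartite system}, part \ref{it:std bipartite system3}. The hard part will be handling the mixed cases where $\M_A$ and $\M_B$ have different finiteness status, making sure the comparison $p_A\preceq\,$(something in $\M_A$) translates correctly through the coupling \eqref{eq:coupling-of-the-traces} into the existence of the requisite projection in $\M_B$, and that the degenerate trace ratios are interpreted consistently. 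Once the comparison-theoretic core is isolated, the equivalence should follow cleanly from \cref{lem:murray vN}, \cref{prop:state induced truncation}, and the trace coupling convention.
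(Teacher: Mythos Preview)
Your ``only if'' direction is correct and matches the paper. The gap is in the ``if'' direction: you invoke \cref{lem:murray vN} to \emph{produce} a vector $\Psi$ with $[\M_A'\Psi]=p_A$ and $[\M_B'\Psi]=q_B$, but that lemma says nothing of the sort---it only asserts the existence of a constant $C$ making $\tau_A[\M_A'\Psi]=C\cdot\tau_B[\M_B'\Psi]$ for \emph{all} $\Psi$. It does not tell you which projections arise as cyclic supports, and indeed the existence of such a $\Psi$ is essentially equivalent to the statement you are trying to prove.

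The paper sidesteps this entirely. Rather than first finding a vector and then invoking \cref{prop:state induced truncation}, it cuts down by the \emph{projections} directly: pick $p_B\in\M_B$ with $\tau_B(p_B)=\tau_A(p_A)$, form the reduced system $(p_A\M_Ap_A,\,p_B\M_Bp_B,\,p_Ap_B\H)$, and apply \cref{lem:truncated traces} to see its coupling constant is $\tau_A(p_A)/\tau_B(p_B)=1$, hence it is standard by \cref{lem:coupling constant}. Then \cref{prop:std bipartite system} gives a purification of $\psi_A$ (viewed as a state on $p_A\M_Ap_A$) in $p_Ap_B\H\subset\H$. This is a single uniform argument with no need for the case distinctions you anticipate in your ``Main obstacle'' paragraph. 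Your overall architecture (reduce to a standard subsystem, then purify) is the right one, but the tool that makes the reduction work is \cref{lem:truncated traces}, not \cref{lem:murray vN}.
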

\begin{proof}
    If $\Psi$ is a purification in $\H$, then \eqref{eq:coupling-of-the-traces} implies that $\supp(\psi_{A})$ and $\supp(\psi_B)$ have the same trace. Both of them are obviously bounded by the trace of the identity operator.
    Now suppose that \eqref{eq:rank condition for purification} holds.
    Set $p_A=\supp(\psi_A)$ and let $p_B$ be a projection in $\M_B$ with $\tau_B(p_B)=\tau_A(p_A)$.
    Then \cref{lem:truncated traces} implies that the reduced bipartite system on $p_Ap_B\H$ is standard.
    Thus, \cref{prop:std bipartite system} implies that $\psi_A$ has a purification in the reduced bipartite system and, therefore, also in the original one.
\end{proof}

\begin{remark}[Non-factorial bipartite systems]\label{rem:non-factors1}
    The results presented above can be generalized to the case where $\M_A$ and $\M_B=\M_A'$ are general von Neumann algebras instead of factors.
    For this, traces and coupling constants have to be replaces by so-called center-valued traces and coupling functions (see \cite[Def.~V.2.33, Def.~V.3.9]{takesaki1}).
    This essentially amounts to a disintegration over the center $L^\oo(X,\mu)=Z(\M_A)=Z(\M_B)$ and then applying the above results to the factorials bipartite systems associated with the points $x\in X$.
\end{remark}

\subsection{Schmidt spectrum and entanglement entropies}\label{sec:schmidt spectrum}

We continue our discussion of semifinite factorial bipartite systems.
We use the trace coupling discussed in the previous subsection to generalize the notion of the Schmidt spectrum of bipartite pure states to the von Neumann algebraic setting. 
As an application, we can define entanglement monotones such as the Schmidt rank or the entanglement entropy.

Before discussing the generalization to the von Neumann algebraic case, let us briefly recall the Schmidt spectrum for bipartite systems of the form $\H=\H_A\ox\H_B$.
The story starts with the Schmidt decomposition:
If $\Psi$ is a state vector in $\H_A\ox\H_B$, then there exists a unique number $r(\Psi)\in \NN\cup\{\oo\}$, called the Schmidt rank (after Erhard Schmidt \cite{erhard_schmidt}), a unique decreasing $\ell^1$-sequence $(\lambda_n)_{n=1}^{r(\Psi)}$, called the Schmidt spectrum, and orthonormal families of vectors $\{\Phi_n^{A/B}\}_{n=1}^{r(\Psi)} \subset \H_{A/B}$ such that
\begin{equation}\label{eq:schmidt decomp}
    \Psi = \sum_{n=1}^{r(\Psi)} \lambda_n^{1/2} \,\Phi^A_n\ox\Phi^B_n.
\end{equation}
% The reduced density operators $\rho_{A/B}=\tr_{\H_{B/A}}\kettbra\Psi$ are given by $\rho_{A/B} = \sum_n \lambda_n \kettbra{\Phi^{A/B}_n}$.
Entanglement monotones such as the entanglement entropy, which is defined as
\begin{equation}\label{eq:entanglement entropy}
    H(\Psi) := \sum_{n=1}^{r(\Psi)} \eta(\lambda_n) = \tr \eta(\rho_{A/B}), \qquad \eta(t) = -t\log t,
\end{equation}
only make sense because we know from the Schmidt decomposition that $\rho_A$ and $\rho_B$ have the same spectrum, namely the Schmidt spectrum of $\Psi$, which implies that \eqref{eq:entanglement entropy} is independent of the choice $A/B$.

In the following, we will see that the above generalizes to semifinite bipartite systems.
In particular, we will be able to generalize the usual entanglement monotones (e.g., the Schmidt rank and entanglement entropies) to the setting of  semifinite bipartite systems.
It will be shown in \cref{sec:locc} that these generalizations are indeed entanglement monotones.

Let $(\M_A,\M_B,\H)$ be a general bipartite system and let $\Psi\in\H$ be a unit vector. We denote by $\psi_{A/B}$ its marginal states.
By \cref{prop:state induced truncation}, $\Psi$ is a bicyclic vector for the truncated bipartite system $(\M_{\Psi,A},\M_{\Psi,B},\H_\Psi)$ associated with $\Psi$ (see \cref{def:truncated-system}).
The modular conjugation $J_\Psi$ induces an anti-isomorphism
\begin{equation*}
    j_\Psi : \M_{\Psi,A}\to\M_{\Psi,B}
\end{equation*}
such that 
\begin{equation}\label{eq:corresponding marginals}
    \psi_B(j_\Psi(a))= \psi_A(a^*),\qquad a\in \M_{\Psi,A}.
\end{equation}
Since the truncated bipartite system is standard, we may regard $\H_\Psi$ as a Haagerup $L^2$-space $L^2(\M_{\Psi,A})$ on which $\M_{\Psi,A}$ and $\M_{\Psi,B}=\M_{\Psi,A}'$ act by left/right multiplication and in which the vector $\Psi$ is a positive self-adjoint operator with full support.
This is the generalization of the Schmidt decomposition, which works for general bipartite systems.
In the case of a type $\I$ bipartite system on $\H=\H_A\ox\H_B$, it reduces to the Schmidt decomposition considered above, where $\H_\Psi$ becomes the subspace spanned by the ONB $\{\Phi_n^A\ox\Phi_m^B\}_{n,m=1}^{r(\Psi)}$ (see \eqref{eq:schmidt decomp}).

We now use the above generalization to define a notion of Schmidt spectrum.
We consider a factorial semifinite bipartite system $(\M_A,\M_B,\H)$ with coupled \nsf traces $\tau_A$ and $\tau_B$ (see \cref{sec:coupling}).
We define the \emph{Schmidt spectral scale} (or \emph{Schmidt scale} for short) of the unit vector $\Psi$ as
\begin{equation}\label{eq:Schmidt spectrum}
    \lambda_\Psi(t) := \lambda_{\psi_A}(t)=\lambda_{\psi_B}(t), \qquad t>0,
\end{equation}
where $\lambda_{\psi_{A/B}}$ denote the spectral scales of $\psi_{A/B}$ relative to the traces $\tau_{A/B}$ (see \cref{sec:semifinite vNas}).
The Schimdt scale depends only on the joint scaling of the coupled traces $(\tau_A,\tau_B)$.

\begin{lemma}
    The Schmidt scale in \eqref{eq:Schmidt spectrum} is well-defined.
\end{lemma}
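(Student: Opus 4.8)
The claim is that the Schmidt scale $\lambda_\Psi$ defined in \eqref{eq:Schmidt spectrum} is well-defined, i.e.\ that the two candidate definitions agree:
\begin{equation*}
    \lambda_{\psi_A}(t) = \lambda_{\psi_B}(t), \qquad t>0.
\end{equation*}
The plan is to show that the two marginal states $\psi_A$ and $\psi_B$ have the same spectral scale relative to the coupled traces $\tau_A$ and $\tau_B$. The natural tool is the partial modular conjugation $J_\Psi$ from \cref{prop:state induced truncation}, which provides an anti-isomorphism $j_\Psi:\M_{\Psi,A}\to\M_{\Psi,B}$ intertwining the marginals via \eqref{eq:corresponding marginals}.

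First I would reduce the statement to the truncated bipartite system $(\M_{\Psi,A},\M_{\Psi,B},\H_\Psi)$. The spectral scales $\lambda_{\psi_{A/B}}$ depend only on the (truncated) marginals together with their support projections, and by the conventions for traces on cut-downs discussed before \cref{lem:truncated traces}, the induced traces on $\M_{\Psi,A}$ and $\M_{\Psi,B}$ satisfy the trace coupling convention \eqref{eq:coupling-of-the-traces}. Crucially, computing the spectral scale of $\psi_A$ on $\M_A$ relative to $\tau_A$ gives the same function as computing it for the restricted state on $\M_{\Psi,A}$ relative to the restricted trace, because the extra part of the spectrum of the density operator is supported off $\supp(\psi_A)$ and contributes only zeros. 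So it suffices to work in the standard bipartite system $(\M_{\Psi,A},\M_{\Psi,B},\H_\Psi)$, where $\Psi$ is bicyclic.

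Next, in the standard system I would use the anti-isomorphism $j_\Psi$. By \eqref{eq:corresponding marginals}, we have $\psi_B\circ j_\Psi = \psi_A(\,\cdot^*)$, so $j_\Psi$ carries the density operator $\rho_{\psi_A}=d\psi_A/d\tau_A$ to $\rho_{\psi_B}=d\psi_B/d\tau_B$ up to the adjoint. Because $j_\Psi$ is a trace-preserving anti-isomorphism (this is exactly the content of \cref{lem:std trace coupling}, $\tau_A(a)=\tau_B(J a^* J)$, applied to the standard truncated system), it preserves distribution functions: for any $t>0$, the projection $1_{(t,\oo)}(\rho_{\psi_A})$ maps under $j_\Psi$ to a projection of equal trace, namely $1_{(t,\oo)}(\rho_{\psi_B})$. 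Hence $D_{\psi_A}(t)=\tau_A(1_{(t,\oo)}(\rho_{\psi_A})) = \tau_B(1_{(t,\oo)}(\rho_{\psi_B})) = D_{\psi_B}(t)$ for all $t>0$, and since the spectral scale is the generalized inverse of the distribution function (see \cref{sec:spectral scales}), equality of the distribution functions yields $\lambda_{\psi_A}=\lambda_{\psi_B}$.

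The main obstacle I anticipate is bookkeeping the trace normalizations carefully: one must confirm that the anti-isomorphism $j_\Psi$ induced by $J_\Psi$ genuinely intertwines the two \emph{coupled} traces, not merely two traces in some unspecified scaling. This is precisely where the trace coupling convention \eqref{eq:coupling-of-the-traces} and \cref{lem:std trace coupling} are indispensable: an arbitrary choice of scaling would introduce a dilation factor $c$ in the distribution functions (as recorded in the footnote computation $D_\psi'(t)=cD_\psi(ct)$ in \cref{sec:spectral scales}), and the two scales would agree only up to that dilation. Once the coupling is correctly invoked so that $j_\Psi$ is genuinely trace-preserving, the equality of spectral scales is immediate. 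A secondary point to verify is that passing from the original system to the truncated one does not alter the spectral scales, which follows because the marginals and their supports are unchanged and the induced trace restricts compatibly.
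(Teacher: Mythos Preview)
Your proposal is correct and follows essentially the same approach as the paper's proof: reduce to the truncated standard bipartite system, use \cref{lem:truncated traces} to verify that the induced traces remain coupled, apply \cref{lem:std trace coupling} so that $j_\Psi$ is trace-preserving, and combine with \eqref{eq:corresponding marginals} to conclude that the distribution functions (hence spectral scales) agree. Your write-up is somewhat more explicit about the distribution-function step and the invariance under truncation, but the ingredients and logic are the same.
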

\begin{proof}
    We consider the truncated bipartite system $(\M_{\Psi,A},\M_{\Psi,B},\H_\Psi)$ as above.
    \Cref{lem:truncated traces} implies that the truncated traces $\tau_{\Psi,A}$ and $\tau_{\Psi,B}$ satisfy the trace coupling convention.
    By \cref{prop:state induced truncation,lem:std trace coupling}, $j_\Psi$ identifies the truncated traces:
    \begin{equation*}
        \tau_{\Psi,A}(a^*) = \tau_{\Psi,B}(j_\Psi(a)), \qquad a\in\M_{\Psi,A}.
    \end{equation*}
    Together with \eqref{eq:corresponding marginals}, this shows $\lambda_{\psi_A}(t)=\lambda_{\psi_B}(t)$.
\end{proof}

By construction, we can compute arbitrary trace-functions of the marginal states of $\Psi$ in terms of the Schmidt scale. 
A trace-function of an operator $\rho$ is an expression of the form $\tau(f(\rho))$ for some Borel function $f:\RR\to\RR^+$:

\begin{corollary}\label{cor:equal trace functions}
    Let $f:\RR^+\to\RR^+$ be a Borel measurable function. Then
    \begin{equation}\label{eq:equal trace functions}
        \int_{\RR^+} f(\lambda_\Psi(t))\,dt = 
        \tau_A\bigg(f\bigg(\frac{d\psi_A}{d\tau_A}\bigg)\bigg) =  \tau_B\bigg(f\bigg(\frac{d\psi_B}{d\tau_B}\bigg)\bigg),
    \end{equation}
    where all terms may be infinite.
\end{corollary}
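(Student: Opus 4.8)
The plan is to reduce everything to the defining property of the spectral scale, namely the integration formula \eqref{eq:integration with spectral scale}, applied to the marginals $\psi_A$ and $\psi_B$. Recall that the Schmidt scale $\lambda_\Psi$ was \emph{defined} in \eqref{eq:Schmidt spectrum} as $\lambda_\Psi(t)=\lambda_{\psi_A}(t)=\lambda_{\psi_B}(t)$, and that the preceding Lemma has already established that this definition is consistent, i.e., that the spectral scales of $\psi_A$ relative to $\tau_A$ and of $\psi_B$ relative to $\tau_B$ genuinely coincide. So the heart of the matter is already settled; what remains is to connect the (common) spectral scale to the trace-functions on either side.

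First I would treat the case of a continuous non-decreasing $f$ with $f(0)=0$ directly: for such $f$, \eqref{eq:integration with spectral scale} gives, with $\rho_A = d\psi_A/d\tau_A$,
\begin{equation*}
    \tau_A(f(\rho_A)) = \int_0^\oo f(\lambda_{\rho_A}(t))\,dt = \int_0^\oo f(\lambda_\Psi(t))\,dt,
\end{equation*}
using $\lambda_{\rho_A}=\lambda_{\psi_A}=\lambda_\Psi$ by \eqref{eq:Schmidt spectrum}. The identical computation on the $B$-side yields the same integral, which proves \eqref{eq:equal trace functions} in this restricted class. The only subtlety is that the stated Lemma allows an \emph{arbitrary} Borel measurable $f:\RR^+\to\RR^+$, not merely continuous non-decreasing ones, so the final step is to bootstrap from the nice class to all Borel $f$.

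The main obstacle will be exactly this bootstrapping, and I would handle it via the pushforward-measure interpretation already recorded in the text (the footnote after \eqref{eq:integration with spectral scale}): the pushforward of Lebesgue measure $dt$ under $\lambda_\Psi:\RR^+\to\RR^+$ is precisely the spectral measure of $\rho_A$ evaluated with $\tau_A$. Concretely, for an indicator $f=1_B$ of a Borel set $B\subset\RR^+$, one has $\int_0^\oo 1_B(\lambda_\Psi(t))\,dt = \abs{\lambda_\Psi^{-1}(B)} = \tau_A(1_B(\rho_A))$, again straight from \eqref{eq:integration with spectral scale} (as already noted there for $\tau(1_A(a))=\abs{\lambda_a^{-1}(A)}$). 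Since this identity holds for all indicators and both sides are $\sigma$-additive in $B$, standard measure-theoretic extension (indicators, then non-negative simple functions by linearity, then arbitrary non-negative Borel $f$ by monotone convergence) gives $\int_0^\oo f(\lambda_\Psi(t))\,dt = \tau_A(f(\rho_A))$ for every Borel $f:\RR^+\to\RR^+$, with both sides allowed to be $+\oo$. Running the same argument on the $B$-side completes the proof of \eqref{eq:equal trace functions}. I expect no genuine difficulty beyond organizing this extension cleanly; the nontrivial content—the coincidence of the two one-sided spectral scales—was supplied by the prior Lemma and ultimately by \cref{lem:truncated traces} together with the trace-identification via $j_\Psi$.
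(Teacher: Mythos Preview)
Your proposal is correct. The paper states this corollary without proof, treating it as immediate from the preceding Lemma (which establishes $\lambda_{\psi_A}=\lambda_{\psi_B}$) together with \eqref{eq:integration with spectral scale} and its pushforward-measure interpretation in the accompanying footnote; your argument simply makes explicit the monotone-class extension from indicators to general Borel $f$ that the paper leaves implicit.
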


The above allows us to generalize the Schmidt rank and the entanglement entropies from finite-dimensional quantum information theory \cite{horodecki_quantum_2009,nielsen_quantum_2010,nielsen_conditions_1999}:
The Schmidt rank of a unit vector $\Psi\in \H$ is defined as
\begin{equation}\label{eq:schmidt rank}
    r(\Psi) := |\supp(\lambda_\Psi)\;\!|,
\end{equation}
where $\abs\placeholder$ denotes the Lebesgue measure. By \cref{cor:equal trace functions}, we have
\begin{equation}
    r(\Psi) = \tau_A\bigg(\supp\bigg(\frac{d\psi_A}{d\tau_A}\bigg)\!\bigg) 
    =\tau_B\bigg(\supp\bigg(\frac{d\psi_B}{d\tau_B}\bigg)\!\bigg).
\end{equation}
The \emph{entanglement entropy} of $\Psi$ is defined as
\begin{equation}
    H(\Psi) := \int_{\RR^+} \eta(\lambda(t))\,dt,
\end{equation}
The \emph{$\alpha$-Renyi entanglement entropy} $H_\alpha(\Psi)$, $\alpha\in (0,1)\cup(1,\oo)$ of a state vector $\Psi\in\H$ is defined as
\begin{equation}
    H_\alpha(\Psi) := \frac1{1-\alpha}\log \int_{\RR^+} \lambda_\Psi(t)^{\alpha}\,dt.
\end{equation}
In \cref{sec:locc}, these quantities will be shown to actually be entanglement monotones.

In general, there is no canonical choice for the joint scaling of the two traces. 
If we scale both traces by a factor $c>0$, the trace coupling conventions continues to hold, but the entanglement entropies are shifted by $\log c$
\begin{equation}
    H_\alpha(\Psi) \longrightarrow H_\alpha(\Psi) +  \log c.
\end{equation}
However, differences of entanglement entropies remain unaffected and have an intrinsic meaning\footnote{The story is similar for the Schmidt rank: Scaling the trace means scaling all Schmidt ranks, but ratios are unaffected. This fits into the entropy picture because the logarithm of the Schmidt rank is the max-entropy.}:
\begin{equation}
    H_\alpha(\Psi)-H_\alpha(\Phi) \longrightarrow H_\alpha(\Psi) -H_\alpha(\Phi).
\end{equation}
This is analogous to the role of entropy in classical statistical mechanics, where only entropy differences are meaningful (this can be traced back to the fact that the Lebesgue measure on phase space has no canonical normalization).
Another similarity with the entropies of continuous distributions on the phase space is that the entanglement entropies above can become negative.
However, as in classical statistical physics, the sign is irrelevant as the entropy is only meaningful up to an additive constant.

\subsection{An operational characterization of Haag duality}\label{sec:haag}

In this subsection, we provide an operational interpretation of Haag duality in terms of pure states and local operations.
In \cref{cor:all-purifications}, we saw that vectors in a Hilbert space $\H$ induce the same marginal state on an observable algebra $\M$ if and only if they are related by unitaries in the commutant $\M'$ (up to arbitrarily small error).
Here, we show the converse:
% If $\M_A$ and $\M_B$ are commuting von Neumann algebras then Haag duality $\M_A=\M_B'$ holds if and only if bipartite pure states with equal $A$-marginal are related by $B$-unitaries:
% In \cref{prop:all purifications}, we showed that for a bipartite system $(\M_A,\M_B,\H)$, all purifications of a fixed state of Alice are unique up to unitaries of Bob.

\begin{theorem}\label{thm:haag_duality}
    Let $\M_A,\M_B$ be commuting von Neumann algebras on a Hilbert space $\H$.
    The following are equivalent: 
    \begin{enumerate}[(a)]
        \item\label{it:hd1} Haag duality $\M_A=\M_B'$;
        \item\label{it:hd2} if $\Psi,\Phi\in\H$ are unit vectors with $\psi_A = \phi_A$, then $\Psi = v_B\Phi$ for a partial isometry $v_B\in\M_B$;
        \item\label{it:hd3} if $\Psi,\Phi\in\H$ are unit vectors with $\psi_A = \phi_A$, then, for all $\eps>0$, there exists a unitary $u_B\in\M_B$ such that $\Psi\approx_\eps u_B\Phi$.
    \end{enumerate}
\end{theorem}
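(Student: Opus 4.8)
The plan is to prove the cyclic chain of implications \ref{it:hd1} $\Rightarrow$ \ref{it:hd2} $\Rightarrow$ \ref{it:hd3} $\Rightarrow$ \ref{it:hd1}, since the first two implications are essentially known and the real content lies in the converse \ref{it:hd3} $\Rightarrow$ \ref{it:hd1}.

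\emph{The easy implications.} For \ref{it:hd1} $\Rightarrow$ \ref{it:hd2}, assume Haag duality $\M_A = \M_B'$. If $\Psi,\Phi$ induce the same state $\psi_A = \phi_A$ on $\M_A$, then by \cref{cor:all-purifications} (applied with the von Neumann algebra $\M_A$, whose commutant is $\M_A' = \M_B$) there is a partial isometry $v' \in \M_A' = \M_B$ with $\Psi = v'\Phi$. This is exactly \ref{it:hd2}. For \ref{it:hd2} $\Rightarrow$ \ref{it:hd3}, suppose $\Psi = v_B\Phi$ for a partial isometry $v_B \in \M_B$. Since $\norm\Psi = \norm\Phi$, the partial isometry $v_B$ acts isometrically on $\Phi$, so the first item of \cref{cor:isometrically acting pi's} yields, for every $\eps > 0$, a unitary $u_B \in \M_B$ with $\norm{u_B\Phi - v_B\Phi} = \norm{u_B\Phi - \Psi} < \eps$, which is \ref{it:hd3}.

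\emph{The hard implication.} The crux is \ref{it:hd3} $\Rightarrow$ \ref{it:hd1}. We always have the inclusion $\M_A \subseteq \M_B'$ from commutativity, so the task is to show that \ref{it:hd3} forces the reverse inclusion $\M_B' \subseteq \M_A$. I would argue by contraposition: assume $\M_A \subsetneq \M_B'$ is proper and produce a pair of unit vectors with equal $\M_A$-marginals that are \emph{not} in the same closed $\U(\M_B)$-orbit. The natural strategy is to exploit the fact that $\M_A' = \M_B''= (\M_B')' \supsetneq \M_B$ strictly (taking commutants reverses the proper inclusion), so there is an operator in $\M_A' \setminus \M_B$. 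The plan is to pick a unit vector $\Phi$ and compare its $\U(\M_B)$-orbit closure with its $\U(\M_A')$-orbit closure. By \cref{cor:all-purifications} applied to $\M_A$, the set of all vectors inducing the marginal $\phi_A$ is exactly $\overline{\{u'\Phi : u' \in \U(\M_A')\}}$. If \ref{it:hd3} held for this $\Phi$, every such purification would also lie in $\overline{\{u_B\Phi : u_B \in \U(\M_B)\}}$, forcing the two orbit closures to coincide for all choices of $\Phi$.

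\emph{Executing the contradiction.} The main obstacle is converting the strict inclusion $\M_B \subsetneq \M_A'$ into a concrete vector-level separation that survives taking orbit closures. The cleanest route, I expect, is to reduce to a \emph{standard} situation: after amplifying by an infinite-dimensional ancilla (using \cref{lem:std if infinite} to make both algebras properly infinite, hence the system standard, as in Step 3 of the proof of \cref{prop:local distinguishability}), choose $\Phi$ to be a bicyclic vector so that it is cyclic and separating for $\M_A$. Then $\overline{\U(\M_A')\Phi}$ and $\overline{\U(\M_B)\Phi}$ are both closures of orbits of a cyclic-separating vector, and equality of these orbit closures for \emph{all} such $\Phi$ should be promoted, via a density/polarization argument together with the characterization \eqref{eq:projections in M} of projections, to the operator identity $\M_B = \M_A'$, i.e.\ Haag duality. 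Concretely, I would show that if $\overline{\U(\M_B)\Phi} = \overline{\U(\M_A')\Phi}$ for a cyclic separating $\Phi$, then every $u' \in \U(\M_A')$ can be approximated on $\Phi$ by unitaries in $\M_B$; using cyclicity of $\Phi$ for $\M_A$ and that $u'$ commutes with $\M_A$, this approximation extends from the single vector $\Phi$ to the dense subspace $\M_A\Phi$, yielding $u' \in \M_B$ by strong closedness of $\U(\M_B)$ on the relevant bounded set (\cref{lem:topologies-on-the-unitary-group}). Since the unitaries generate, this gives $\M_A' \subseteq \M_B$ and hence Haag duality. The delicate point to get right is the passage from approximation on the single vector $\Phi$ to approximation in the strong operator topology, where separating-ness of $\Phi$ for $\M_B$ (equivalently cyclicity for $\M_A$) is what makes the estimate uniform.
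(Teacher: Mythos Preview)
Your easy implications and the reduction of \ref{it:hd3} to the orbit-closure equality $\overline{\U(\M_A')\Phi} = \overline{\U(\M_B)\Phi}$ for every $\Phi \in \H$ are correct and match the paper's route exactly.

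The gap is in your final step. Your plan to amplify and then work with a single bicyclic vector does not go through: the orbit-closure equality was established only for vectors in $\H$, and hypothesis \ref{it:hd3} does not transfer to the amplified space. The $\M_A \otimes \B(\K) \otimes 1$-marginal of a generic vector in $\H \otimes \K \otimes \K$ is not reducible to $\M_A$-marginals of vectors in $\H$, so you cannot invoke \ref{it:hd3} for your bicyclic $\Omega$. (Two smaller slips: \cref{lem:std if infinite} is stated for bipartite systems, which in this paper presupposes Haag duality; and ``separating for $\M_B$ equivalently cyclic for $\M_A$'' already uses $\M_B' = \M_A$.) Your single-vector argument is sound \emph{when} $\M_A$ has a cyclic vector in $\H$---in that case the bootstrap from $\Phi$ to $\M_A\Phi$ and then strong closure of $\M_B$ (not of $\U(\M_B)$, which need not be strongly closed) works---but in general no such vector exists, and amplifying to manufacture one loses the hypothesis.

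The paper avoids this via \cref{lem:hd lem}, using the orbit equality for \emph{all} vectors at once rather than a distinguished one: from $\overline{\U(\M_A')\Omega} = \overline{\U(\M_B)\Omega}$ one gets $[\M_A'\Omega] = [\M_B\Omega]$ for every $\Omega$ by taking closed linear spans; each such projection then lies in $(\M_A')' = \M_A$, and since every projection $p \in \M_B'$ decomposes as $\bigvee_n [\M_B\Omega_n]$ over a basis $\{\Omega_n\}$ of $p\H$, one concludes $\M_B' \subseteq \M_A$. No cyclic vector, no amplification.
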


\begin{lemma}\label{lem:hd lem}
    Let $\M_A,\M_B$ be commuting von Neumann algebras on a Hilbert space $\H$.
    The following are equivalent:
    \begin{enumerate}[(a)]
        \item\label{it:hd lem1} Haag duality $\M_A=\M_B'$;
        \item\label{it:hd lem2} $[\M_A\Omega] = [\M_B'\Omega]$ for all $\Omega\in\H$
        \item\label{it:hd lem3} $\overline{\U(\M_A)\Omega} \, =\, \overline{\U(\M_B')\Omega}$ for all $\Omega\in\H$.
    \end{enumerate}
\end{lemma}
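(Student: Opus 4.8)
The plan is to establish the cycle of implications \ref{it:hd lem1} $\Rightarrow$ \ref{it:hd lem2} $\Rightarrow$ \ref{it:hd lem3} $\Rightarrow$ \ref{it:hd lem1}, exploiting the fact that $\M_A \subseteq \M_B'$ always holds (since $\M_A$ and $\M_B$ commute), so that Haag duality is the \emph{reverse} inclusion $\M_B' \subseteq \M_A$. Throughout I would use the characterization \eqref{eq:projections in M} of projections lying in a von Neumann algebra, namely $[V] \in \M \iff \M' V \subseteq V$, together with the formula $\supp(\phi) = [\M'\Omega]$ from \eqref{eq:supp proj}.

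For \ref{it:hd lem1} $\Rightarrow$ \ref{it:hd lem2}: assuming $\M_A = \M_B'$, the cyclic projection $[\M_A\Omega]$ is literally $[\M_B'\Omega]$ by substitution, so the two projections coincide trivially. For \ref{it:hd lem2} $\Rightarrow$ \ref{it:hd lem3}: the closures $\overline{\U(\M_A)\Omega}$ and $\overline{\M_A\Omega}$ have the same projections, since by \cref{cor:isometrically acting pi's} (or directly by the Russo–Dye/Kaplansky argument giving that the unit ball is the closed convex hull of unitaries) the closed unitary orbit of $\Omega$ spans the same closed subspace as $\M_A\Omega$ up to the obvious normalization; more carefully, $\overline{\U(\M_A)\Omega}$ is the closure of the unitary orbit and sits inside $[\M_A\Omega]\H$, and one checks both closures equal the range of the cyclic projection. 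Applying \ref{it:hd lem2} to both $\M_A$ and $\M_B'$ then yields \ref{it:hd lem3}. I expect the comparison between the unitary-orbit closure and the full cyclic subspace to require a small but careful argument, and this is the step I would write out most explicitly.

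The main obstacle is the implication \ref{it:hd lem3} $\Rightarrow$ \ref{it:hd lem1}, where one must recover an \emph{algebraic} identity of von Neumann algebras from a statement about orbit closures of vectors. The strategy I would pursue: take any $x \in \M_B'$ (a contraction, WLOG), and aim to show $x \in \M_A$, which by the bicommutant theorem means showing $x$ commutes with every $y \in \M_B = \M_A'$. Fix a vector $\Omega$ and consider $\Omega$ together with $x\Omega$; since $x \in \M_B'$, the states induced on $\M_B$ by suitable vectors built from $\Omega$ and $x\Omega$ will agree, allowing one to invoke \ref{it:hd lem3} to approximate the action of $x$ by unitaries $u_B' \in \M_B'$ lying arbitrarily close to $\M_A$. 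The cleanest route is probably to show that \ref{it:hd lem3} forces $\overline{\M_A\Omega} = \overline{\M_B'\Omega}$ for all $\Omega$ (equivalent to \ref{it:hd lem2}), and then to apply this with $\Omega$ ranging over a cyclic-and-separating vector for the relevant reduced system: if $\Omega$ is separating for $\M_B'$, then $[\M_B'\Omega] = 1$, forcing $[\M_A\Omega]=1$, i.e. $\Omega$ is cyclic for $\M_A$; feeding this into \eqref{eq:projections in M} applied to the subspaces $\M_A\Omega$ identifies the commutant structure and pins down $\M_B' \subseteq \M_A$.

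Concretely, to close the loop I would argue as follows: given \ref{it:hd lem3}, hence \ref{it:hd lem2}, let $p \in \proj(\M_B')$ be arbitrary and pick $\Omega$ in the range of $p$ that is cyclic for $\M_B'p$ on $p\H$; then $[\M_B'\Omega] = p$, so \ref{it:hd lem2} gives $[\M_A\Omega] = p \in \M_A$. Thus every projection of $\M_B'$ that arises as such a cyclic projection lies in $\M_A$; a standard exhaustion/supremum argument over a maximal family of such cyclic vectors (using separability of $\H$ to get a countable decomposition $1 = \sum_n p_n$ with each $p_n$ a cyclic projection for $\M_B'$) shows that a generating set of projections of $\M_B'$ lies in $\M_A$, whence $\M_B' = (\proj(\M_B'))'' \subseteq \M_A$. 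Combined with the automatic inclusion $\M_A \subseteq \M_B'$, this gives Haag duality. The delicate point, and where I would spend the most care, is ensuring the exhaustion produces \emph{enough} projections of $\M_B'$ inside $\M_A$ to generate all of $\M_B'$, rather than merely a proper subalgebra.
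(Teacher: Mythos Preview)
Your plan has two genuine gaps. First, the direct step \ref{it:hd lem2} $\Rightarrow$ \ref{it:hd lem3} does not work as written: the closed unitary orbit $\overline{\U(\M_A)\Omega}$ lies in a sphere and is never equal to the linear subspace $[\M_A\Omega]\H$. What is true is that $\overline{\lin}\,\U(\M_A)\Omega = \overline{\M_A\Omega}$, since unitaries span the algebra; but knowing the cyclic subspaces of $\M_A$ and $\M_B'$ agree does not by itself force their unitary orbits to agree. The paper therefore runs the cycle the other way, \ref{it:hd lem1} $\Rightarrow$ \ref{it:hd lem3} $\Rightarrow$ \ref{it:hd lem2} $\Rightarrow$ \ref{it:hd lem1}: the first implication is trivial and the second is exactly the span observation just mentioned.

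Second, and more seriously, in your argument for \ref{it:hd lem2} $\Rightarrow$ \ref{it:hd lem1} the projections live in the wrong commutant. For any von Neumann algebra $\N$ one has $[\N\Omega]\in\N'$ by \eqref{eq:projections in M}; hence $[\M_B'\Omega]\in\M_B$ and $[\M_A\Omega]\in\M_A'$. So for $p\in\proj(\M_B')$ you can achieve $[\M_B'\Omega]=p$ only when $p$ is central, and even then the conclusion $[\M_A\Omega]=p$ places $p$ in $\M_A'$, not in $\M_A$. The fix is to swap the roles: take $p\in\proj(\M_A')$, pick an orthonormal basis $\{\Omega_n\}$ of $p\H$, and observe that $p=\bigvee_n[\M_A\Omega_n]$ because $p\H$ is $\M_A$-invariant. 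Assumption \ref{it:hd lem2} then gives $[\M_A\Omega_n]=[\M_B'\Omega_n]\in\M_B$, hence $p\in\M_B$. Thus $\proj(\M_A')\subset\M_B$, so $\M_A'=\M_B$, i.e.\ $\M_A=\M_B'$. This is precisely the paper's argument; it also records the equivalence $[\M_A\Omega]\in\M_B \iff [\M_A\Omega]=[\M_B'\Omega]$, whose forward direction follows from $\overline{\M_B'\Omega}\supseteq\overline{\M_A\Omega}\supseteq\M_B'\overline{\M_A\Omega}\supseteq\M_B'\Omega$.
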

\begin{proof}
    \ref{it:hd lem1} $\Rightarrow$   \ref{it:hd lem3} is trivial.
    \ref{it:hd lem3} $\Rightarrow$ \ref{it:hd lem2} follows from the fact that a von Neumann algebra is spanned by its unitary group.
    It remains to show \ref{it:hd lem2} $\Rightarrow$ \ref{it:hd lem1}:
    Recall if $p$ is projection and $\N$ is a von Neumann algebra on $\H$, then $p\in \N \iff \N'p\H \subset p\H$ ($p\H$ is invariant for $\N'$).
    We start by observing the following:
    \begin{equation}\label{eq:supp in B2}
        [\M_A\Omega]\in \M_B \quad\iff \quad [\M_A\Omega]=[\M_B'\Omega].
    \end{equation}
    Indeed, "$\Leftarrow$" is clear, and the converse follows from $\overline{\M_B'\Omega} \supseteq \overline{\M_A\Omega} \supseteq \M_B'\overline{\M_A\Omega} \supseteq \M_B'\Omega $.
    By \eqref{eq:supp in B2}, our assumption implies $[\M_A\Omega]\in \M_B$ for all $\Omega\in\H$.
    Thus, if $p\in \proj(\M_A')$ and $\{\Omega_n\}$ is a basis of $p\H$,
    then $p = [\cup_n \M_A\Omega_n] = \vee_n [\M_A\Omega_n] = \vee_n [\M_B'\Omega_n]$ shows that $p \in \M_B$.
    Since $p$ was arbitrary, this implies $\M_B=\M_A'$.
\end{proof}

\begin{proof}[Proof of \cref{thm:haag_duality}]
    \ref{it:hd1} $\Rightarrow$ \ref{it:hd2} is shown in \cref{cor:all-purifications}, and
    \ref{it:hd2} $\Rightarrow$ \ref{it:hd3} follows from \cref{cor:isometrically acting pi's}.

    \ref{it:hd3} $\Rightarrow$ \ref{it:hd1}:
    We check condition \ref{it:hd lem3} of \cref{lem:hd lem}.
    Let $\Psi\in \H$ be a unit vector.
    For any unitary $u_{A'}\in \U(\M_A')$, the vector $\Phi = u_{A'}\Psi$ is such that $\psi_A = \phi_A$.
    Using our assumption, we obtain $\Psi \in \overline{\U(\M_B)\Phi}$.
    Since $u_{A'}$ was an arbitrary unitary, this implies $\overline{\U(\M_A')\Phi}= \overline{\U(\M_B)\Phi}$.
\end{proof}

\begin{example}\label{exa:haag-duality-failure}
    We consider an example due to Naaijkens \cite{fiedler_jones_2017,naaijkens_subfactors_2018,naaijkens_kosaki-longo_2013} of a setup where Haag duality is known to fail.
    We consider the toric code on an infinite 2D spin lattice $\Gamma = \ZZ^2$ (see \cite{naaijkensAnyonsInfiniteQuantum2012} and \cref{sec:many-body}). Let $C_1,C_2$ be two disjoint cones.
    We define $A = C_1\cup C_2$ and $B= \Gamma\setminus A$.
    Then Haag duality fails $\M_A\subsetneq \M_B'$.
    % Using the structure of the toric code, one can define unitaries that create a pair of anyonic excitations that are localized in the respective cones. 
    % By construction, these unitaries commute with $\M_B$. They are, however, not elements of $\M_A = \M_{C_1}\vee\M_{C_2}$.
    Let $\Phi\in \H$ denote the ground state vector.
    \lvl{Let $\ell$ be a line from a point in $C_1$ to a point in $C_2$ and let $u_\ell$ be the corresponding line operator, see \cite{naaijkensAnyonsInfiniteQuantum2012}.
    Since the system is topologically ordered, the excited state $\Psi = u_\ell\Phi$ only depends on the endpoints of the line.
    With line operators in $\M_A$, these endpoints can be moved around freely.
    The topological excitations, however, cannot be removed with unitaries in $\M_A$.
    Therefore, $u_A\Psi$ remains orthogonal to $\Phi$ for all $u_A\in\M_A$.    
    In particular, this shows that $\Psi$ cannot be approximated by acting on $\Phi$ with unitaries from $\M_A$. 
    We claim that $\Psi$ and $\Phi$ have the same $B$-marginal.
    Let $b \in \M_B$ be an operator of finite support.
    Then the given geometry lets us pick a line $\ell'$ with the same start and end points as $\ell$ that does not pass through support of $b$.
    Then $u_{\ell'}$ commutes with $b$ and $\Psi = u_{\ell'}\Phi$, imply that $\ip\Psi{b\Psi} = \ip\Phi{b\Phi}$.
    Since this holds for all operators with finite support in $B$, we conclude that $\Psi$ and $\Phi$ indeed induce the same states on $\M_B$.
    }
\end{example}

\section{Bipartite pure state entanglement theory}\label{sec:locc}

\localtableofcontents

\null

In this section, we consider pure state entanglement theory in von Neumann algebraic bipartite systems.
We characterize both LOCC and stochastic LOCC transitions, and show the non-type $\III$ part of \cref{thm:minimal-type_i} in \cref{sec:main-results}.
The section is largely based on the article \cite{van_luijk_pure_2024}, and extends ideas from \cite{verch_distillability_2005,crann_state_2020,werner_local_1987}.

% Recall that, if $\psi,\phi$ are general normal states on $\B(\H)$, then $\psi\locc\phi$ is defined via the existence of an LOCC protocol whose overall instruments $(T_x)$ is such that $(T_x)_*(\psi)=p_x\phi$ for all $x$, where $p_x=\psi(T_x(1))$ is the probability of the outcome $x$.
% Now let $k_{y|x}$ be Kraus operators for the instrument $(T_x)$ and assume that $\phi$ is a pure state.
% Then purity of $\phi$ implies
% \begin{equation}\label{eq:pure state locc}
%     k_{y|x}\psi k_{y|x}^* = p_{y|x} \phi, \qquad p_{y|x}=\psi(k_{y|x}^*k_{y|x})
% \end{equation}
% for all $y$ and $x$. 
% If $\Psi$ and $\Phi$ are unit vectors in $\H$ implementing the states $\psi,\phi$, then \eqref{eq:pure state locc} becomes $k_{y|x}\Psi = p_{y|x}^{1/2} \Phi$ for all $y,x$, where the phases of the Kraus operators are chosen suitably.

\noindent
Before going into details, we discuss a few generalities.

As long as we are considering LOCC protocols, we assume that all instruments involved have Kraus rank 1, i.e., that every instrument $(T_x)$ is of the form $T_x = k_x^*(\placeholder)k_x$ for Kraus operators $\{k_x\}$.
We can make this assumption without loss of generality because general instruments arise from coarse-graining of instruments with Kraus rank 1 (see \eqref{eq:instrument Kraus}). 
Therefore, instruments with Kraus rank 1 only yield more information, which the agents can then choose to ignore.
Since the set of instruments with Kraus rank 1 is closed under products, the Kraus rank 1 property is passed on to the total instrument $(T_x)$ of an LOCC protocol.

Consider a pair $\psi,\phi\in\nstates(\H)$ of pure states with implementing vectors $\Psi,\Phi\in\H$. 
In \cref{sec:entanglement-basics}, we defined what it means that $\psi$ can be transformed to $\phi$ with LOCC 
We write
\begin{equation*}
    \Psi\locc\Phi,
\end{equation*}
to denote that $\psi$ can be transformed to $\phi$ with LOCC, as defined in \cref{sec:entanglement-basics}.
Note that $T$ is a quantum channel on $\H$ and if $\{k_x\}$ is a collection of Kraus operators for $T$ such that $T_*(\psi)=\phi$, then $k_x\Psi$ is equal to $p_x^{1/2}\Phi$ up to phase, where $p_x= \psi(k_x^*k_x)=\norm{k_x\Psi}^2$, for all $x$.
By adapting the phase of the Kraus operators, we can assume $k_x\Psi=p_x^{1/2}\Phi$ for all $x$.
We thus have the following:

\begin{lemma}
    Let $\Psi,\Phi\in\H$ be unit vectors. Then $\Psi\locc\Phi$ if and only if there is an LOCC protocol whose total instrument has Kraus operators $\{k_x\}$ such that
    \begin{equation}
        k_x \Psi = p_x^{1/2}\Phi, \quad p_x=\norm{k_x\Psi}^2 \qquad \forall x.
    \end{equation}
\end{lemma}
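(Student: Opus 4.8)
The plan is to treat this lemma as a direct translation of the definition of $\locc$ into the language of implementing vectors, leaning on the two observations made just above: that we may restrict to LOCC protocols whose total instrument $(T_x)$ has Kraus rank $1$, and that such an instrument is given by single Kraus operators $k_x$ with $T_x = k_x^*(\placeholder)k_x$ and $\sum_x k_x^*k_x = 1$. First I would pass to the Schrödinger picture: writing $\psi,\phi$ for the normal states induced by $\Psi,\Phi$, the density operator of $T_*(\psi)$ is $\sum_x k_x\kettbra\Psi k_x^* = \sum_x \kettbra{k_x\Psi}$, while the outcome probabilities are $p_x = \psi(k_x^*k_x) = \norm{k_x\Psi}^2$. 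The whole argument then reduces to comparing $\sum_x \kettbra{k_x\Psi}$ with $\kettbra\Phi$.

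For the forward implication I would assume $\Psi \locc \Phi$, so that some such protocol satisfies $T_*(\psi)=\phi$, i.e.\ $\sum_x \kettbra{k_x\Psi} = \kettbra\Phi$. Since every summand is positive, each obeys $\kettbra{k_x\Psi} \le \kettbra\Phi$. The key step is a support argument: a positive operator dominated by the rank-one projection $\kettbra\Phi$ must annihilate $\Phi^\perp$ and hence have range contained in $\CC\Phi$. Applied to $\kettbra{k_x\Psi}$ this forces $k_x\Psi = \alpha_x\Phi$ with $\abs{\alpha_x}^2 = \norm{k_x\Psi}^2 = p_x$, and absorbing the phase of $\alpha_x$ into $k_x$ (which leaves $T_x = k_x^*(\placeholder)k_x$ unchanged) yields $k_x\Psi = p_x^{1/2}\Phi$. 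This is exactly the reasoning sketched in the paragraph preceding the lemma, now made precise.

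The converse is a one-line computation. Given a protocol with Kraus operators satisfying $k_x\Psi = p_x^{1/2}\Phi$, the density operator of $T_*(\psi)$ is $\sum_x \kettbra{k_x\Psi} = \big(\sum_x p_x\big)\kettbra\Phi$, and unitality $\sum_x k_x^*k_x = 1$ gives $\sum_x p_x = \ip\Psi{\big(\sum_x k_x^*k_x\big)\Psi} = \norm\Psi^2 = 1$; hence $T_*(\psi) = \phi$ and $\Psi\locc\Phi$. I do not anticipate a serious obstacle here: the only genuinely nontrivial point is the rank-one domination argument in the forward direction, and the essential structural input—that the total instrument may be taken of Kraus rank $1$—has already been established, so that each outcome produces a pure post-measurement vector proportional to a single $k_x\Psi$ rather than a mixture.
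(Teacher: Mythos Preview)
Your proposal is correct and follows essentially the same approach as the paper: the paper's proof is the brief paragraph preceding the lemma, which asserts that if $T_*(\psi)=\phi$ then each $k_x\Psi$ equals $p_x^{1/2}\Phi$ up to phase and then absorbs the phases into the Kraus operators. You have simply filled in the details of that assertion via the rank-one domination argument $\kettbra{k_x\Psi}\le\kettbra\Phi$, and supplied the (essentially trivial) converse direction explicitly.
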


Next, we will show that we may assume that an agent $A$ with von Neumann algebra $\M_A$ can apply subnormalized operations to states on which these are normalization preserving.
A subnormalized quantum channel on a system with Hilbert space $\H$ is a normal subunital cp map $T$ on $\B(\H)$ (Heisenberg picture) or a normalization-nonincreasing cp map $T_*$ on the predual $\predualB(\H)$ (Schrödinger picture) with the two versions related by \eqref{eq:schrödinger picture}.
A subnormalized channel on $\B(\H)$ is inner for a von Neumann algebra $\M$ if it admits a Kraus decomposition \eqref{eq:kraus form} with operators $\{k_x\}$ in $\M$ (of course, $\sum k_x^*k_x=1$ is then replaced by $\sum k_x^*k_x=T(1)$).

\begin{lemma}\label{lem:subnormalized channels}
    Let $\M$ be a von Neumann algebra on $\H$ and let $\psi,\phi\in\nstates(\H)$.
    If there is an $\M$-inner subnormalized channel $T$ on $\B(\H)$ such that $T_*(\psi)=\phi$, then there is an $\M$-inner channel $S$ with $S_*(\psi)=\phi$.
\end{lemma}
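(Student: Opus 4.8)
The plan is to turn $T$ into a genuine channel by appending a single extra Kraus operator that is annihilated by $\psi$, so that the action on $\psi$ is unchanged while the total map becomes unital. First I would fix a Kraus decomposition $T=\sum_x k_x^*(\placeholder)k_x$ with $\{k_x\}\subset\M$ and set $a:=T(1)=\sum_x k_x^*k_x$. Since $T$ is subunital we have $0\le a\le 1$, and since each $k_x\in\M$ the ($\sigma$-strongly convergent, norm-bounded) sum lies in $\M$, so $a\in\M$. The key preliminary observation is that the hypothesis $T_*(\psi)=\phi$ forces $\psi$ to vanish on $1-a$: evaluating at the unit gives $\psi(a)=\psi(T(1))=T_*(\psi)(1)=\phi(1)=1$, and since $\psi(1)=1$ and $1-a\ge0$, this yields $\psi(1-a)=0$.

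Next I would define $S:=T+R$, where $R:=(1-a)^{1/2}(\placeholder)(1-a)^{1/2}$. Because $1-a\ge0$ lies in $\M$, its square root $(1-a)^{1/2}\in\M$ by functional calculus, so $R$ is $\M$-inner and hence so is $S$. Moreover $S(1)=a+(1-a)=1$, so $S$ is a normal unital cp map, i.e.\ an $\M$-inner quantum channel. It then remains only to check that the added term does not alter the state, that is $R_*(\psi)=0$; for then $S_*(\psi)=T_*(\psi)+R_*(\psi)=\phi$, as required.

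The one substantive step is showing $R_*(\psi)=0$, which is where the vanishing $\psi(1-a)=0$ enters. Writing $\psi=\tr(\rho\,(\placeholder))$ for the density operator $\rho\in\T(\H)$, the identity $\tr(\rho(1-a))=\psi(1-a)=0$ gives $(1-a)^{1/2}\rho^{1/2}=0$, whence $(1-a)^{1/2}\rho=0$; consequently, for every $b\in\B(\H)$,
\begin{equation*}
    R_*(\psi)(b)=\tr\!\big(\rho\,(1-a)^{1/2}b(1-a)^{1/2}\big)=\tr\!\big((1-a)^{1/2}\rho\,(1-a)^{1/2}b\big)=0.
\end{equation*}
Equivalently, and without invoking $\rho$, one can argue via the support projection $p=\supp(\psi)$: from $\psi(\placeholder)=\psi(p\,(\placeholder)\,p)$ and positivity, $\psi(1-a)=0$ gives $p(1-a)p=0$, hence $(1-a)^{1/2}p=0$, and the same cancellation follows.

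I do not expect a genuine obstacle here; the only points worth flagging are the routine verification that $a\in\M$ (so that its square root is again an $\M$-element, keeping $S$ inner) and the elementary fact that vanishing of $\psi$ on the positive operator $1-a$ propagates to the off-diagonal term $R$. In particular, appending Kraus operators beyond the single square root is unnecessary: the lone $(1-a)^{1/2}$ already restores unitality without disturbing $\psi$.
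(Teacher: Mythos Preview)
Your proof is correct and follows essentially the same approach as the paper: both define $S=T+k_0^*(\placeholder)k_0$ with $k_0=(1-T(1))^{1/2}\in\M$ and argue that the extra term vanishes on $\psi$ because $\psi(1-T(1))=0$. You spell out the vanishing via the density operator and support projection, whereas the paper phrases it more tersely as a consequence of $S_*$ being normalization-preserving, but the content is identical.
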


\begin{proof}
    We set $k_0= (1-T(1))^{1/2}\in \M$ and $S = T + k_0^*(\placeholder)k_0$.
    Clearly, $S$ is an inner ucp map with $S_*(\psi)= \phi + k_0 \psi k_0^*$.
    Since $S_*$ is normalization preserving, we have $\norm{k_0 \phi k_0^*} = k_0 \phi k_0^* (1) = S_*(\phi)(1) -T_*(\phi) =\phi(1)-\psi(1)=0$, which shows $S_*(\psi)=\phi$.
\end{proof}

Based on the Lemma, we will make statements like "the agent applies a partial isometry $u$ to the state $\Psi$". This is justified by the Lemma if $u$ acts isometrically on $\Psi$, i.e., if $u^*u\Psi=\Psi$, because the agent can always implement an extension $\tilde T$ of the subnormalized channel $u^*(\placeholder)u$, which gives the same output state.
The choice of this extension is irrelevant as long as we do not simultaneously consider other initial states.
The same reasoning can be applied to instruments.

\subsection{One-way LOCC}\label{sec:owlocc}

A one-way LOCC protocol is an LOCC protocol that with two rounds. 
Thus, it consists of an instrument of the first agent, who then communicates the outcome to the second agent, who performs an operation conditioned on what they received (it is irrelevant whether the second agent communicates the outcome since the first agent is not allowed to act afterward).

One of the fundamental results of pure state entanglement theory is that, if an LOCC protocol takes a pure state $\Psi\in\H$ to a pure state $\Phi\in\H$, then there exists a one-way LOCC protocol that does the same thing \cite{lo_concentrating_2001}.
An approximate version of this was generalized to the case of semifinite von Neumann algebraic bipartite systems in \cite[Thm.~4.11]{crann_state_2020}.
In this subsection, we show the following exact version of the statement for general bipartite systems:

\begin{proposition}\label{prop:owlocc}
    Let $(\M_A,\M_B,\H)$ be a bipartite system and let $\Psi,\Phi\in \H$ be unit vectors.
    If $\Psi\locc\Phi$, then there is a one-way LOCC protocol taking $\Psi$ to $\Phi$, in which Alice performs an instrument with Kraus operators $\{k_x\}$ and Bob performs a conditional partial isometry $u_x'$ such that
    \begin{equation}
        u_x'k_x\Psi = p_x^{1/2}\Phi.
    \end{equation}
    The partial isometry $u_x'$ acts isometrically on $k_x\Psi$.
\end{proposition}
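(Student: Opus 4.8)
The plan is to realize the one-way protocol through the marginal criterion for purifications. First I would collect the given LOCC protocol into its total (Kraus rank $1$) instrument: for a fixed outcome string $x$ the total Kraus operator is the product of the local Kraus operators applied along the protocol, and since these factors lie alternately in $\M_A$ and $\M_B$ — which commute — they may be regrouped as $g_x = a_x b_x$ with $a_x\in\M_A$, $b_x\in\M_B$, where $\sum_x g_x^* g_x = 1$ and $g_x\Psi = p_x^{1/2}\Phi$, $p_x = \norm{g_x\Psi}^2$. The one-way protocol we seek amounts to an $\M_A$-instrument $\{k_x\}$ together with partial isometries $u_x'\in\M_B$ with $u_x' k_x\Psi = p_x^{1/2}\Phi$. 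By \cref{cor:all-purifications} (applied with $\M=\M_A$, so $\M_A'=\M_B$), such a $u_x'$ — automatically acting isometrically on $k_x\Psi$ — exists as soon as $k_x\Psi$ and $p_x^{1/2}\Phi$ induce the \emph{same} marginal on $\M_A$. Hence it suffices to construct Alice's operators $k_x\in\M_A$ so that the $\M_A$-marginal of $k_x\Psi$ equals that of $p_x^{1/2}\Phi$.

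To build $k_x$ I would pass to the standard form via the polar decomposition of $\Psi$ relative to $\M_A$ (\cref{cor:vector polar decomp}): the $\M_A$-linear partial isometry $v'\colon L^2(\M_A)\to\H$ with $v'\Omega_{\psi_A}=\Psi$ intertwines left multiplication and, because of its support properties, preserves $\M_A$-marginals; it therefore transports the problem into $L^2(\M_A)$, where $\Psi$ is represented by $h^{1/2}$ with $h=h_{\psi_A}$ and $\M_B$ acts by right multiplication. The decisive step is a \emph{two-step} replacement of Bob's operator: instead of reproducing the vector $b_x\Psi$ on Alice's side — whose exact "modular transpose" $\sigma^{\psi_A}_{-i/2}(\placeholder)$ is only affiliated to $\M_A$ and generally unbounded — I reproduce only its $\M_A$-marginal $\nu_x(c)=\ip\Psi{c\,b_x^* b_x\Psi}$. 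Since $b_x$ is bounded, $\nu_x\le\norm{b_x}^2\psi_A$, and this domination is precisely what yields a \emph{bounded} realizing operator: writing the density of $\nu_x$ as $h^{1/2}\hat b_x h^{1/2}$ for a positive $\hat b_x\in\M_A$, the positive-cone purification of $\nu_x$ is $e_x\Psi$ with $e_x=(h^{1/2}\hat b_x h^{1/2})^{1/2}h^{-1/2}\in\M_A$, which is bounded exactly because $\nu_x\le\norm{b_x}^2\psi_A$. Setting $k_x:=a_x e_x\in\M_A$ then gives $k_x\Psi$ the same $\M_A$-marginal as $a_x b_x\Psi=p_x^{1/2}\Phi$, since matching the $\M_A$-marginals of $e_x\Psi$ and $b_x\Psi$ forces agreement of the marginals of $a_x e_x\Psi$ and $a_x b_x\Psi$.

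The main obstacle is to turn $\{k_x\}$ into a genuine (sub)instrument, i.e.\ to verify $\sum_x k_x^* k_x\le 1$; once this holds, \cref{lem:subnormalized channels} upgrades Alice's subnormalized channel to one with the same action on $\Psi$, and \cref{cor:all-purifications} supplies the isometric corrections $u_x'\in\M_B$, completing the one-way protocol. In the $L^2(\M_A)$ picture this reduces to the operator inequality $\sum_x m_x a_x^* a_x m_x\le h$ with $m_x=(h^{1/2}\hat b_x h^{1/2})^{1/2}$, which I would deduce from the normalization $\sum_x g_x^* g_x=1$ — equivalently $\sum_x a_x^* a_x(\placeholder)\hat b_x=\mathrm{id}$ on $L^2(\M_A)$, whence $\sum_x \hat b_x a_x^* a_x=1$ — by an operator-convexity/Cauchy–Schwarz estimate (it is an equality when $a_x$, $\hat b_x$ and $h$ commute). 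I expect this inequality, together with the care needed when $\psi_A$ is not faithful (so that $h^{-1/2}$ is only a generalized inverse on $\supp\psi_A$ and the supports of the $\nu_x$ must be controlled), to be the technically delicate heart of the argument. The remaining bookkeeping, in particular that each $u_x'$ acts isometrically on $k_x\Psi$, is then immediate from \cref{cor:all-purifications}.
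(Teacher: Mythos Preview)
The inequality you single out as ``the technically delicate heart'' is in fact false, and with it your candidate $\{k_x\}$ fails to be a subinstrument. Take $\H_A=\H_B=\CC^2$, $\Psi=\tfrac1{\sqrt2}(\ket{00}+\ket{11})$, and the two-round protocol in which Alice applies the projective instrument $\{\kettbra0,\kettbra1\}$ and Bob, conditioned on Alice's outcome $i$, applies $\{\kettbra0,\kettbra1\}$ if $i=0$ but $\{\kettbra+,\kettbra-\}$ if $i=1$. Here $h=\rho_{\psi_A}=\tfrac12 1$, so each $\hat b_x$ equals the corresponding rank-one projection and $e_x=\hat b_x$; your recipe $k_x=a_x e_x$ then gives
\[
\sum_x k_x^*k_x \;=\; \kettbra0 + \tfrac12\kettbra+ + \tfrac12\kettbra- \;=\; \kettbra0+\tfrac12\,1,
\]
which has eigenvalue $\tfrac32>1$. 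Equivalently $\sum_x m_x a_x^*a_x m_x = \tfrac12\kettbra0+\tfrac14\,1\not\le \tfrac12\,1=h$. No Cauchy--Schwarz or operator-convexity manoeuvre rescues this; the inequality genuinely fails as soon as $\hat b_x$ and $a_x^*a_x$ do not commute.

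The conceptual error is that you simulate $b_x$ relative to the \emph{initial} vector $\Psi$, whereas in the protocol Bob acts on the \emph{current} post-measurement state, which already carries the imprint of Alice's earlier rounds. The paper's proof exploits exactly this structure: it does not collapse the protocol but proceeds round by round, invoking \cref{cor:simulation} (with the roles of $A$ and $B$ swapped) at each of Bob's rounds to trade his instrument for an Alice instrument plus a Bob partial isometry, relative to that round's incoming state. Because \cref{cor:simulation} itself guarantees $\sum_j k_j'^*k_j'\le1$ at every step, the product of Alice's instruments across rounds is automatically subnormalized. Collapsing first, as you do, destroys the nesting that makes this telescoping work.
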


Recall from \cref{lem:subnormalized channels} that it makes sense to say that Bob performs a partial isometry on a vector, provided that it acts isometrically.
For the proof, we use the following Lemma:

\begin{lemma}\label{lem:simulation}
    Let $\M$ be a von Neumann algebra on $\H$, and let $\pi$ be the representation of $\M$ that puts it in standard form on $L^2(\M)$.
    For a given pair $a,\Psi$ of an operator $a\in\M$ and a vector $\Psi\in\H$, 
    \begin{itemize}
        \item let $w':L^2(\M)\to \H$ be the partial isometry appearing in the polar decomposition $\Psi = w'\Omega_\psi$ of $\Psi$ (see \cref{sec:purifications}),
        \item let $v'\in \pi(\M')$ be the partial isometry appearing in the polar decomposition of $\pi(a)\Omega_\psi = v' \Omega_{a\psi a^*}$ of $\pi(a)\Omega_\psi \in L^2(\M)$,
        \item let $u\in \M$ be the partial isometry with $\pi(u^*)=j(v')$, and let $a' = w'j(\pi(u^*a)) w'^*=w'v'j(\pi(a))w'^* \in\M'$.
    \end{itemize}
    Then $u$ is isometric on $a\Psi$ and 
    \begin{equation}\label{eq:simulation}
        a\Psi = u a' \Psi.
    \end{equation}
\end{lemma}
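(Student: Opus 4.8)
The plan is to unwind every definition and reduce the statement to two vector identities in $L^2(\M)$, using only that $w'$ is $\M$-linear and that $J$ fixes the positive cone. Throughout I write $j(x)=JxJ$ for the conjugate-linear $*$-isomorphism exchanging $\pi(\M)$ and its commutant $\pi(\M)'$ attached to the standard form, so that $j^2=\id$, $j(xy)=j(x)j(y)$ and $j(x)^*=j(x^*)$; by construction $\pi(u^*)=j(v')$, hence $\pi(u)=j(v')^*=j(v'^*)$. The two structural inputs I would invoke are, from \cref{cor:vector polar decomp}, the relations $w'\pi(x)=xw'$ for $x\in\M$ and $w'^*w'=[\pi(\M)\Omega_\psi]$, and the in-$L^2(\M)$ polar decomposition $\pi(a)\Omega_\psi=v'\Omega_{a\psi a^*}$ with $v'\in\pi(\M)'$ and $v'^*v'=[\pi(\M)\Omega_{a\psi a^*}]=J\pi(\supp(a\psi a^*))J$. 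Finally, since $\Omega_\psi$ and $\Omega_{a\psi a^*}$ both lie in $L^2(\M)^+$, they are fixed by $J$.

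First I would compute $a\Psi$. Using $\M$-linearity, $a\Psi=aw'\Omega_\psi=w'\pi(a)\Omega_\psi$, and the polar decomposition then gives
\begin{equation*}
a\Psi=w'v'\Omega_{a\psi a^*}.
\end{equation*}
Next I compute $a'\Psi$. Since $w'^*w'=[\pi(\M)\Omega_\psi]$ fixes $\Omega_\psi$, we have $a'\Psi=w'v'\,j(\pi(a))\,\Omega_\psi$. The crux is evaluating $j(\pi(a))\Omega_\psi=J\pi(a)J\Omega_\psi$: applying $J\Omega_\psi=\Omega_\psi$, then $\pi(a)\Omega_\psi=v'\Omega_{a\psi a^*}$, then $J\Omega_{a\psi a^*}=\Omega_{a\psi a^*}$ together with $Jv'J=j(v')=\pi(u^*)$, this collapses to $\pi(u^*)\Omega_{a\psi a^*}$. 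Substituting, commuting $\pi(u^*)\in\pi(\M)$ past $v'\in\pi(\M)'$, and pulling $\pi(u^*)$ through $w'$ by $\M$-linearity yields
\begin{equation*}
a'\Psi=w'\pi(u^*)v'\Omega_{a\psi a^*}=u^*w'v'\Omega_{a\psi a^*}=u^*(a\Psi).
\end{equation*}

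It remains to convert $a'\Psi=u^*(a\Psi)$ into $a\Psi=ua'\Psi$ and to extract the isometry statement. For this I would identify the range projection of $u$: from $\pi(u)=j(v'^*)$ and $\pi(u^*)=j(v')$ we get $\pi(uu^*)=j(v'^*v')=j\big(J\pi(\supp(a\psi a^*))J\big)=\pi(\supp(a\psi a^*))$, so $uu^*=\supp(a\psi a^*)$. Because $a\Psi$ induces the functional $a\psi a^*$ on $\M$, its support projection equals $[\M'(a\Psi)]$ by \cref{eq:supp proj} and therefore fixes $a\Psi$; hence $uu^*(a\Psi)=a\Psi$. This gives both conclusions simultaneously: $ua'\Psi=uu^*(a\Psi)=a\Psi$, and $u^*u(a'\Psi)=u^*uu^*(a\Psi)=u^*(a\Psi)=a'\Psi$, i.e.\ $u$ acts isometrically on $a'\Psi$ (equivalently, $u^*$ acts isometrically on $a\Psi$), which is the form needed for the one-way LOCC application in \cref{prop:owlocc}.

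The only real difficulty is the bookkeeping in the crux computation of $j(\pi(a))\Omega_\psi$: one must keep straight which vectors are $J$-fixed, that $j$ is conjugate-linear yet multiplicative and $*$-preserving, and that $j$ interchanges $\pi(\M)$ and $\pi(\M)'$ so that $\pi(u^*)$ and $v'$ commute. Once $a'\Psi=u^*(a\Psi)$ and $uu^*=\supp(a\psi a^*)$ are established, the rest is a formal manipulation of partial-isometry relations, involving no estimates or limiting arguments.
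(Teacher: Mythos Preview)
Your proof is correct and follows essentially the same route as the paper's: both hinge on the identity $j(\pi(a))\Omega_\psi=\pi(u^*)\Omega_{a\psi a^*}$, obtained from $J$-invariance of $\Omega_\psi$ and $\Omega_{a\psi a^*}$, combined with $\M$-linearity of $w'$. The only organizational difference is that the paper first treats the case $\H=L^2(\M)$ and then transports via $w'$, arriving at $a\Psi=ua'\Psi$ through the chain $a\Omega_\psi=v'\Omega_{a\psi a^*}=v'uj(a)\Omega_\psi=uj(u^*a)\Omega_\psi$, whereas you compute directly in general to get $a'\Psi=u^*(a\Psi)$ and then close the loop via the explicit identification $uu^*=\supp(a\psi a^*)$; you also correctly note that what is actually proved (and used downstream) is that $u$ is isometric on $a'\Psi$, equivalently $u^*$ on $a\Psi$.
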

\begin{proof}
    \emph{Step 1.} We assume $\H = L^2(\M)$, $\pi=\id$, and $\Psi=\Omega_\psi$.
    % Since $a\Omega_\psi\in L^2(\M)$ is a purification of $a\psi a^*$, there is a partial isometry $u'\in\M'$ such that $a\Omega_\psi = u'\Omega_{a\psi a^*}$ (see  \cref{cor:all-purifications}). 
    % We set $u^*= j(u')\in\M$.
    Since $\Omega_{a\psi a^*}$ is $J$-invariant, $v'^*a\Omega_\psi = \Omega_{a\psi a^*}$ implies $uj(a)\Omega_\psi = \Omega_{a\psi a^*}$.
    Since $v'^*$ acts isometrically on $a\Omega_\psi$, $u=j(v'^*)$ acts isometrically on $j(a)\Omega_\psi$ and hence on $a'\Omega_\psi = j(u^*a)\Omega_\psi$.
    Thus, we have
    \begin{equation*}
        a\Omega_\psi = v'\Omega_{a\psi a^*} = v'uj(a) \Omega_\psi =u j(u^*a)\Omega_\psi = ua'\Omega_\psi.
    \end{equation*}
    
    \noindent
    \emph{Step 2.}
    In the general case, step 1 implies that $\pi(a)\Omega_\psi = \pi(u)j(\pi(u^*a))\Omega_\psi$.
    We have
    \begin{equation*}
        a\Psi = w'\pi(a)\Omega_\psi 
        = w' \pi(u)j(\pi(u^*a))w'^*\Psi
        = u w'j(\pi(u^*a))w'^* \Psi =ua'\Psi.
    \end{equation*}
    Since $\pi(u)$ is isometric on $\pi(j(u^*a))\Omega_\psi$, $u$ is isometric on $a'\Psi = w'\pi(j(u^*a))\Omega_\psi$.
\end{proof}

\begin{corollary}\label{cor:simulation}
    Let $(\M_A,\M_B,\H)$ be a bipartite system and let $\Psi\in \H$ be a vector.
    For every collection $\{k_x\} \subset \M_A$ of Kraus operators of Alice, there exist partial isometries $u_x\in\M_A$ of Alice and a collection $\{k_x'\} \subset \M_B$ of Kraus operators of Bob such that
    \begin{equation}\label{eq:simulation x}
        k_x\Psi = u_x k_x'\Psi.
    \end{equation}
    The partial isometries $u_x$ act isometrically on $k_x'\Psi$.
\end{corollary}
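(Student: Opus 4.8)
The plan is to derive Corollary~\ref{cor:simulation} as a direct consequence of Lemma~\ref{lem:simulation} applied with $\M = \M_A$. The point is that Lemma~\ref{lem:simulation} already delivers, for a single operator $a\in\M_A$ and the fixed vector $\Psi$, a partial isometry $u\in\M_A$ acting isometrically on $a'\Psi$ together with an operator $a'\in\M_A'$ such that $a\Psi = ua'\Psi$. Since the bipartite system satisfies Haag duality, $\M_A' = \M_B$, so the operator $a'$ produced by the lemma is exactly a Kraus operator of Bob. Thus for each index $x$ I would simply apply Lemma~\ref{lem:simulation} to the operator $a = k_x\in\M_A$, obtaining $u_x\in\M_A$ and $k_x' := a'\in\M_A' = \M_B$ with
\begin{equation*}
    k_x\Psi = u_x k_x'\Psi,
\end{equation*}
and with $u_x$ isometric on $k_x'\Psi$. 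This is precisely \eqref{eq:simulation x}.

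The one subtlety to address is that the corollary speaks of \emph{collections} $\{k_x\}$ and $\{k_x'\}$ of Kraus operators, i.e.\ families satisfying the subunitality/normalization constraints $\sum_x k_x^*k_x \le 1$ (resp.\ $=1$), whereas Lemma~\ref{lem:simulation} is stated for a single operator. So after applying the lemma index-by-index, I would verify that the resulting family $\{k_x'\}\subset\M_B$ is again a legitimate Kraus family. The cleanest way is to check that the map $\sum_x k_x'^*(\placeholder)k_x'$ is subunital (unital if $\{k_x\}$ was). Because $u_x$ acts isometrically on $k_x'\Psi$, we have $\norm{k_x'\Psi} = \norm{u_x k_x'\Psi} = \norm{k_x\Psi}$, so the weights $p_x = \norm{k_x\Psi}^2$ are preserved; summability $\sum_x p_x \le 1$ therefore transfers automatically. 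For the operator-level normalization one can either invoke Lemma~\ref{lem:subnormalized channels}, which says that a subnormalized inner channel that is normalization-preserving on $\Psi$ can be completed to an honest inner channel giving the same output, or simply note that the corollary only needs \eqref{eq:simulation x} to hold vectorwise, so the family $\{k_x'\}$ may be taken as part of an instrument obtained by completing with a single extra Kraus operator absorbing the defect.

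I would therefore structure the proof as: (i) fix $x$, apply Lemma~\ref{lem:simulation} with $\M=\M_A$ and $a=k_x$ to obtain $u_x\in\M_A$ and $a'\in\M_A'$; (ii) use Haag duality $\M_A'=\M_B$ to rename $k_x' := a'$ as a Bob operator; (iii) record that $u_x$ acts isometrically on $k_x'\Psi$ and that $k_x\Psi = u_xk_x'\Psi$, both furnished by the lemma; (iv) remark that the family $\{k_x'\}$ has the correct normalization, either via Lemma~\ref{lem:subnormalized channels} or by completing the instrument, using that the probabilities $p_x$ are unchanged. The main obstacle, such as it is, is purely bookkeeping: making sure the index-by-index construction assembles into a genuine Kraus family rather than just a pointwise identity, and confirming that Haag duality is the precise hypothesis that turns the commutant-valued output $a'$ of Lemma~\ref{lem:simulation} into an operation implementable by Bob. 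No genuinely new estimate is required; the content is entirely carried by Lemma~\ref{lem:simulation} and the identification $\M_A'=\M_B$.
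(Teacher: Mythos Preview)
Your steps (i)--(iii) are exactly the paper's approach, but step (iv) has a genuine gap. The issue is that ``Kraus operators'' means $\sum_x k_x'^* k_x' \le 1$ as an \emph{operator} inequality, and your argument only establishes the scalar inequality $\sum_x \|k_x'\Psi\|^2 = \sum_x \|k_x\Psi\|^2 \le 1$. Both of your proposed fixes are circular: Lemma~\ref{lem:subnormalized channels} takes a \emph{subunital} cp map as input, and ``completing with a single extra Kraus operator absorbing the defect'' means adjoining $(1-\sum_x k_x'^*k_x')^{1/2}$, which requires the defect to be positive in the first place.

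The paper closes this gap by unpacking the explicit formula $k_x' = w'\,v_x'\,j(\pi(k_x))\,w'^*$ from Lemma~\ref{lem:simulation} and computing directly:
\[
\sum_x k_x'^*k_x' = \sum_x w'\,j(\pi(k_x))^*\,v_x'^*\,w'^*w'\,v_x'\,j(\pi(k_x))\,w'^* \le w'\,j\Big(\pi\Big(\sum_x k_x^*k_x\Big)\Big)\,w'^* = w'w'^* \le 1,
\]
using $v_x'^*w'^*w'v_x' \le 1$. The key structural point you miss by treating Lemma~\ref{lem:simulation} as a black box is that $w'$ (coming from the polar decomposition of $\Psi$) and $j$ are the \emph{same} for every $x$; only $v_x'$ varies. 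This is what makes the sum telescope against $\sum_x k_x^*k_x \le 1$. There is no shortcut here: the operator bound genuinely requires looking inside the construction.
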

\begin{proof}
    If we define $w',v'_x,u_x,k_x'$ as in \cref{lem:simulation} (applied to $a=k_x$), then \eqref{eq:simulation x} holds.
    We only have to check that $\sum k_x'^*k_x'\le 1$ ($\le$ is enough by \cref{lem:subnormalized channels}).
    We have
    \begin{align*}
        \sum_x k_x'^*k_x' = \sum_x w'j(\pi(k_x))^*v_x'^*w_x'^*w_x'v_x'j(\pi(k_x))w'^* 
        &\le \sum_x w'j(\pi(k_x^*k_x))w'^*  = w'w'^* \le 1,
    \end{align*}
    where we used $v_x'^*w_x'^*w_x'v_x'\le 1$.
\end{proof}

\begin{proof}[Proof of \cref{prop:owlocc}]
    We consider the LOCC protocol, which takes $\Psi$ to $\Phi$.
    Whenever it is Bob's turn, we use \cref{cor:simulation} to replace the instrument that Bob chooses in the given round by an instrument of Alice, followed by a conditional partial isometry of Alice without changing the post-measurement states of the initial state $\Psi$.
    This constructs a new instrument in which all communication goes from Alice to Bob and in which $B$ only acts with partial isometries.
    Each of these partial isometries acts isometrically on the post-measurement state of the previous round.
    Thus, we can define a new protocol in which Alice performs the product instrument of all her instruments in the first round and then sends the output to Bob, who performs the product $v_x'$ all these partial isometries in the second round.
    By construction, $v_x'$ is a contraction in $\M_B$ that acts isometrically on $\Psi$. Thus, we can replace it by an isometrically acting isometry $u_x$ \cref{lem:isometric-contraction}.
    By construction, this one-way protocol has the same post-measurement states as the initial one (on the input state $\Psi$).
\end{proof}

\subsection{Stochastic LOCC}\label{sec:slocc}

The idea of stochastic LOCC (SLOCC) is to consider a relaxed version of LOCC transitions in which it is only required that the state transitions are possible with some nonzero probability, see \cite{horodecki_quantum_2009} for details.
We begin our consideration of SLOCC in the multipartite case.
Afterward, we characterize the approximate SLOCC transitions in bipartite systems and show that the Schmidt rank is a complete monotone for approximate SLOCC transitions. 

Consider a quantum system with Hilbert space $\H$ together with a collection of agents $A_1,\ldots A_N$ that have partial access to the full quantum system with Hilbert space $\H$.
We assume that the operations that can be implemented by these agents are described by von Neumann algebras $\M_1,\ldots \M_N$ on $\H$.
Moreover, we assume that these agents are pairwise independent, which translates to pairwise commutativity of the von Neumann algebras $\M_j$, $j=1,\ldots N$.
If $\psi,\phi\in\nstates(\H)$ are states of the full system, we say that \emph{$\psi$ can be transformed to $\phi$ with SLOCC} by the of agents $A_1,\ldots A_N$, if there exists an LOCC protocol with overall instrument $(T_x)$ such that the probabilities $p_x = \psi(T_x(1))$ for which $\psi\circ T_x = p_x \phi$ add up to a nonzero number, called the overall transition probability.
In this case, we write
\begin{equation*}
    \psi\slocc \phi.
\end{equation*}
Moreover, we say that $\psi$ can be transformed to $\phi$ by means of \emph{approximate SLOCC}, denoted
\begin{equation*}
    \psi\barslocc\phi,
\end{equation*}
if, for every $\eps>0$, there is a state $\phi'\approx_\eps\phi$ such that $\psi\slocc\phi'$.
It is, at first, not clear that approximate SLOCC does not trivialize. After all, it includes cases where it is not possible to perform the desired state transition with any nonzero probability. 
If $\psi$ and $\phi$ are pure states, we use the same notations for the implementing vectors $\Psi,\Phi$.

Next, we show the following Lemma, which, in particular, states that a pure state SLOCC transition never requires communication between the agents:
\begin{lemma}\label{lem:slocc light}
    Let $\Psi,\Phi\in \H$ be unit vectors. Then 
    \begin{equation}
        \Psi\slocc \Phi \quad \iff \quad \exists a_j\in \M_j\  \text{s.t.}\ \Phi = a_1\cdots a_N \Psi.
    \end{equation}
\end{lemma}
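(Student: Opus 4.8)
The plan is to prove both implications directly, with the reverse direction essentially packaging the $a_j$ as local Kraus operators and the forward direction extracting and regrouping a single favourable branch of the protocol.

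For the implication ``$\Leftarrow$'', suppose $\Phi = a_1\cdots a_N\Psi$ with $a_j\in\M_j$. Since $\Phi$ is a unit vector, no $a_j$ can be zero, so I would normalize by setting $\tilde a_j = a_j/\norm{a_j}$, a contraction in $\M_j$. Each agent $A_j$ can then implement the $\M_j$-inner, Kraus-rank-$1$ instrument with the two outcomes $\tilde a_j^*(\placeholder)\tilde a_j$ and $c_j^*(\placeholder)c_j$, where $c_j=(1-\tilde a_j^*\tilde a_j)^{1/2}\in\M_j$. Composing these $N$ instruments (in any order, with no communication) is an LOCC protocol, and I would single out the outcome string in which every agent obtains their $\tilde a_j$-branch. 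Because the $\M_j$ pairwise commute, the Kraus operator of this branch equals $\tilde a_1\cdots\tilde a_N$ regardless of the order of the rounds, so it sends $\Psi$ to $\big(\prod_j\norm{a_j}^{-1}\big)\,a_1\cdots a_N\Psi = \big(\prod_j\norm{a_j}^{-1}\big)\Phi$, with success probability $\prod_j\norm{a_j}^{-2}>0$ and post-measurement vector $\Phi$. Hence $\Psi\slocc\Phi$.

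For ``$\Rightarrow$'', assume $\Psi\slocc\Phi$ via a protocol with total instrument $(T_x)$; by the standing Kraus-rank-$1$ convention of this section (see \eqref{eq:instrument Kraus}) each $T_x=k_x^*(\placeholder)k_x$. The definition of a pure-state SLOCC transition furnishes at least one outcome string $x$ with $p_x>0$ and $\psi\circ T_x=p_x\phi$, which at the vector level reads $k_x\Psi = e^{i\theta}p_x^{1/2}\Phi$ for some phase. The operator $k_x$ is, by construction of an LOCC protocol, an ordered product $k_x=b_rb_{r-1}\cdots b_1$ of local Kraus operators, each $b_t$ lying in the algebra $\M_{j_t}$ of the agent acting in round $t$. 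The central step is then to regroup this product: for each $j$ let $a_j$ be the product of exactly those $b_t$ with $j_t=j$, taken in their original time order. Since factors coming from distinct agents commute, pulling all of agent $j$'s factors together (moving them only past factors of other agents) preserves the relative order within each block and yields $k_x=a_1a_2\cdots a_N$ with $a_j\in\M_j$. Absorbing the scalar $e^{-i\theta}p_x^{-1/2}$ and the phase into $a_1$ gives $\Phi=a_1'a_2\cdots a_N\Psi$ with $a_1'\in\M_1$, as required.

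The substantive content is the commutativity-based regrouping in the forward direction, which is also what makes the lemma assert that communication is dispensable: an arbitrary, possibly highly adaptive, protocol collapses on a single successful branch to a plain product of one operator per agent, which the reverse direction realizes with purely local, one-round operations. The remaining points—normalizing the $a_j$ to genuine Kraus operators and tracking the scalar and phase factors—are routine bookkeeping; I expect the only place demanding care to be the verification that regrouping commuting blocks leaves the intra-agent order intact, so that each $a_j$ is a well-defined element of $\M_j$ rather than of some larger algebra.
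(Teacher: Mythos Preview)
Your proof is correct and follows essentially the same approach as the paper: for ``$\Leftarrow$'' you rescale each $a_j$ to a contraction and build a two-outcome local instrument per agent, and for ``$\Rightarrow$'' you select a successful branch and regroup the product of round-by-round Kraus operators into one factor per agent using pairwise commutativity. The paper's version is terser---it simply asserts that each total Kraus operator is a product $a_{1,x}\cdots a_{N,x}$ with $a_{j,x}\in\M_j$ without spelling out the regrouping---but the argument is the same.
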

\begin{proof}
    "$\Rightarrow$": 
    If $\Psi\slocc\Phi$, there must be at least one nonzero Kraus operator $k_x$ of a total instrument associated with an LOCC protocol such that $k_x\Psi \propto\Phi$.
    Since each Kraus operator of the total instrument is a product $k_x= a_{1,x}\cdots a_{N,x}$ of operators from the local algebras $a_{j,x}\in\M_j$, the claim follows.

    "$\Leftarrow$": Let $k_{1,j} \propto a_j$ be nonzero rescalings such that $\norm{k_{1,j}}\le 1$ for each $j$.
    Now consider the LOCC protocol where each agent applies the instrument with Kraus operators $(k_{1,j},k_{2,j})$ (no communication is needed).
    By construction, this LOCC protocol implements the state transition from $\Psi$ to $\Phi$ with nonzero probability.
\end{proof}

We now restrict our attention to bipartite systems. 

\begin{theorem}\label{thm:slocc}
    Let $(\M_A,\M_B,\H)$ be a bipartite system and let $\Psi,\Phi$ be unit vectors in $\H$. 
    The following are equivalent
    \begin{enumerate}[(a)]
        \item\label{it:slocc1} $\Psi \barslocc \Phi$;
        \item\label{it:slocc2} $\supp(\phi_{A/B})\preceq \supp(\psi_{A/B})$;
        % \item\label{it:slocc3} $r(\Psi) \ge r(\Phi)$
        \item\label{it:slocc3} for each $\eps>0$, there exists a unitary $u_A\in \M_A$ and $k_B\in\M'$ such that $u_Ak_B\Psi\approx_\eps\Phi$.
    \end{enumerate}
\end{theorem}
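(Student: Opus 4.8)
The plan is to prove the equivalences by exploiting the structural results established earlier, particularly \cref{lem:slocc light}, \cref{lem:mvn ordering}, and the purification results from \cref{sec:coupling,sec:std bipartite systems}. The cleanest route is to show the cycle \ref{it:slocc3} $\Rightarrow$ \ref{it:slocc1} $\Rightarrow$ \ref{it:slocc2} $\Rightarrow$ \ref{it:slocc3}. The implication \ref{it:slocc3} $\Rightarrow$ \ref{it:slocc1} is essentially immediate from the definition of approximate SLOCC together with \cref{lem:slocc light}: if $u_Ak_B\Psi\approx_\eps\Phi$, then $u_A\in\M_A$ and $k_B\in\M_B$ (after rescaling to a contraction) realize an exact SLOCC transition $\Psi\slocc\Phi'$ for the nearby state $\Phi'\propto u_Ak_B\Psi$, and letting $\eps\to0$ gives $\Psi\barslocc\Phi$.

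For \ref{it:slocc1} $\Rightarrow$ \ref{it:slocc2}, I would first handle the exact case. By \cref{lem:slocc light}, an exact SLOCC transition $\Psi\slocc\Phi'$ means $\Phi'=a_Aa_B\Psi$ for some $a_A\in\M_A$, $a_B\in\M_B$. The key observation is that applying an operator from $\M_B$ cannot increase the support projection of the $A$-marginal in the Murray--von Neumann order: since $a_B$ commutes with $\M_A$ and acts as an element of the commutant, the support of $\phi'_A$ satisfies $\supp(\phi'_A)=[\M_A'a_Aa_B\Psi]\preceq[\M_A'\Psi]=\supp(\psi_A)$, using that $[\M_A'\,c\,\Psi]\preceq[\M_A'\Psi]$ for any $c$ (the range projection of $a_Aa_B$ restricted appropriately only shrinks the cyclic subspace under $\M_A'$). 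Passing to the approximate case requires that the Murray--von Neumann ordering relation is preserved under the relevant limits; here I would invoke the fact, recorded via \eqref{eq:supp proj} and \cref{lem:mvn ordering}, that the comparison $\supp(\phi_A)\preceq\supp(\psi_A)$ is equivalent to the $B$-side comparison, and that the sublevel sets defined by $\preceq$ are suitably closed under the norm-approximation $\Phi'\approx_\eps\Phi$. I expect this approximation/closedness step to be the main obstacle, since the Murray--von Neumann order is not obviously continuous under norm perturbations of the vector; the resolution should come from combining \cref{prop:local distinguishability} with the characterization \eqref{eq:mvn ordering}.

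For the final implication \ref{it:slocc2} $\Rightarrow$ \ref{it:slocc3}, I would use the purification machinery. Assuming $\supp(\phi_A)\preceq\supp(\psi_A)$, there is a partial isometry $w_A\in\M_A$ with $w_A^*w_A=\supp(\phi_A)$ and $w_Aw_A^*\le\supp(\psi_A)$. The strategy is to first rotate $\Psi$ by a unitary $u_A\in\M_A$ so that $\supp((u_A\psi_A u_A^*))$ dominates $\supp(\phi_A)$ with the supports now comparable and aligned, and then produce $\Phi$ up to $\eps$ by acting with an element of $\M_B$. Concretely, after passing to the truncated standard bipartite system associated with $\Psi$ (via \cref{prop:state induced truncation}), both $\phi_A$ and $\psi_A$ admit purifications, and \cref{cor:all-purifications} together with \cref{cor:isometrically acting pi's} lets me connect the two purifying vectors by a unitary in the commutant $\M_B$ up to arbitrary precision. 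Interleaving this with the local unitary $u_A$ needed to match supports yields the desired $u_Ak_B\Psi\approx_\eps\Phi$. The delicate point is ensuring the support-comparison hypothesis precisely supplies enough room on the $B$-side to realize the purification-to-purification map; this is exactly what \cref{lem:rank condition for purification} and \cref{prop:state induced truncation}\ref{it:state induced truncation3} are designed to guarantee.

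Throughout, I would rely on \cref{lem:subnormalized channels} to justify treating the $\M_B$-operator $k_B$ as a genuine (sub-)normalized local operation acting isometrically on the relevant vectors, so that the construction legitimately describes an SLOCC protocol rather than an abstract operator identity. The overall architecture thus reduces the theorem to three facts already in hand: the no-communication characterization of pure-state SLOCC, the Murray--von Neumann comparison of marginal supports, and the existence and uniqueness-up-to-commutant-unitaries of purifications in standard (truncated) bipartite systems.
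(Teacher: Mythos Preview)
Your cycle $(c)\Rightarrow(a)\Rightarrow(b)\Rightarrow(c)$ is a legitimate strategy, and your sketch of the exact case of $(a)\Rightarrow(b)$ is essentially correct. However, there is a genuine gap at the approximation step, which you rightly flag as ``the main obstacle'': \cref{prop:local distinguishability} is not the right tool. That proposition compares norm distances and fidelities of unitary orbits of marginals; it says nothing about the Murray--von Neumann order on support projections and will not give you closedness under $\Phi'\to\Phi$. What is actually needed is a lower-semicontinuity statement of the form: if $\phi_n\to\phi$ in norm with $\supp(\phi_n)\preceq p$ for all $n$, then $\supp(\phi)\preceq p$. The paper isolates exactly this as \cref{lem:MvN lsc} and proves it by disintegrating into factors, where it reduces (via spectral scales and \cref{thm:majorization}) to lower semicontinuity of the Lebesgue measure of the support on $L^1(\RR^+)$. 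Without this lemma your argument does not close.

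The paper's logical structure also differs from yours in a way that buys some simplification. Rather than going $(a)\Rightarrow(b)$ directly, the paper proves $(a)\Rightarrow(c)$ using the simulation lemma (\cref{lem:simulation}): given $\Phi\approx a_Aa_B\Psi$, one rewrites the action of $a_A$ on $\Psi$ as a partial isometry in $\M_A$ times an operator in $\M_B$, upgrades the partial isometry to a unitary via \cref{cor:isometrically acting pi's}, and absorbs the two $\M_B$-factors into a single $k_B$. Then $(c)\Rightarrow(b)$ becomes cleaner because the $B$-marginal of $u_Ak_B\Psi$ is exactly $k_B\psi_Bk_B^*$, whose support is $\preceq\supp(\psi_B)$ trivially; the passage to the limit still invokes \cref{lem:MvN lsc}. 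For $(b)\Rightarrow(c)$ the paper is also more direct than your truncated-system route: choose a partial isometry $v_A\in\M_A$ witnessing the comparison, note that the cyclic subspace $[\M_Bv_A\Psi]$ dominates $\supp(\phi_A)=[\M_B\Phi]$, conclude $\Phi\in\overline{\M_Bv_A\Psi}$, and replace $v_A$ by a unitary via \cref{cor:isometrically acting pi's}. Your purification-based approach through \cref{prop:state induced truncation} and \cref{lem:rank condition for purification} could probably be made to work, but it is heavier than needed.
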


In \cref{it:slocc2}, "$\preceq$" denotes the Murray-von Neumann ordering of projections (see \cref{sec:von-neumann-algs}).
Before we come to the proof, we evaluate the above for factorial bipartite systems:

\begin{corollary}\label{cor:factor slocc}
    Let $(\M_A,\M_B,\H)$ be a factorial bipartite system and let $\Psi,\Phi\in\H$ be unit vectors.
    Then:
    \begin{enumerate}
        \item If $\M_A$ and, therefore, $\M_B$ are of type $\III$, then $\Psi\barslocc\Phi$.
        \item If $\M_A$ and, therefore $\M_B$ are semifinite, then
            \begin{equation}\label{eq:factor-slocc}
                \Psi\barslocc \Phi \quad \iff\quad r(\Psi)\ge r(\Phi),
            \end{equation}
            where $r$ denotes the Schmidt rank relative to some choice of coupled \nsf traces $\tau_{A/B}$ on $\M_{A/B}$ (see \cref{sec:schmidt spectrum}).
    \end{enumerate}
\end{corollary}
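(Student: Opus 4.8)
The plan is to derive \cref{cor:factor slocc} directly from \cref{thm:slocc} by specializing the Murray--von Neumann comparison condition \ref{it:slocc2} to factors. The key observation is that in a factor, the comparison theory of projections is totally ordered: for any two projections $p,q$, either $p\preceq q$ or $q\preceq p$, and the equivalence class of a projection is entirely captured by its trace value $\tau(p)\in\bar\RR^+$ via the relation \eqref{eq:mvn-ordering-and-trace}, namely $p\preceq q \iff \tau(p)\le\tau(q)$. So the whole argument reduces to translating the support-projection condition $\supp(\phi_{A/B})\preceq\supp(\psi_{A/B})$ into trace language.

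For the first item (type $\III$), I would argue as follows. In a type $\III$ factor, \emph{all} nonzero projections are Murray--von Neumann equivalent (this is the defining property recalled in the type classification in \cref{sec:von-neumann-algs}, and restated via $D=\{0,\oo\}$ in \cref{sec:semifinite vNas}). Hence for any unit vectors $\Psi,\Phi$, the support projections $\supp(\phi_A)$ and $\supp(\psi_A)$ are both nonzero and therefore equivalent, so in particular $\supp(\phi_A)\preceq\supp(\psi_A)$. By the equivalence \ref{it:slocc1} $\Leftrightarrow$ \ref{it:slocc2} of \cref{thm:slocc}, this gives $\Psi\barslocc\Phi$ unconditionally.

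For the second item (semifinite), I would use the Schmidt rank as defined in \eqref{eq:schmidt rank}: by \cref{cor:equal trace functions}, the Schmidt rank satisfies $r(\Psi)=\tau_A(\supp(\psi_A))=\tau_B(\supp(\psi_B))$, where $\tau_{A/B}$ are coupled \nsf traces. The comparison condition \ref{it:slocc2} of \cref{thm:slocc}, $\supp(\phi_A)\preceq\supp(\psi_A)$, is by \eqref{eq:mvn-ordering-and-trace} equivalent to $\tau_A(\supp(\phi_A))\le\tau_A(\supp(\psi_A))$, i.e.\ to $r(\Phi)\le r(\Psi)$. Combining with \cref{thm:slocc} yields \eqref{eq:factor-slocc}. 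One should note that the statement is independent of the joint scaling of the coupled traces, since rescaling by $c>0$ multiplies all Schmidt ranks by $c$ and hence preserves the inequality $r(\Psi)\ge r(\Phi)$; I would remark on this to justify the phrase ``relative to some choice.''

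I do not anticipate a serious obstacle here, as the corollary is essentially a dictionary translation of \cref{thm:slocc}. The only point requiring mild care is the bookkeeping for possibly infinite trace values: when $\M_{A/B}$ is of type $\I_\oo$ or $\II_\oo$, the supports can have infinite trace, and one must check that the equivalence $\tau_A(\supp(\phi_A))\le\tau_A(\supp(\psi_A)) \iff \supp(\phi_A)\preceq\supp(\psi_A)$ from \eqref{eq:mvn-ordering-and-trace} is understood with values in $\bar\RR^+$ (which it is, as stated there). Beyond that, the proof is a direct application of the already-established comparison machinery and the defining formula for the Schmidt rank.
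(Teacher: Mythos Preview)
Your proposal is correct and matches the paper's intended approach: the corollary is stated without a separate proof, being an immediate specialization of \cref{thm:slocc} via the comparison theory of projections in factors (the paper only adds the one-line remark that the RHS of \eqref{eq:factor-slocc} is independent of the choice of coupled traces). Your handling of both the type $\III$ case (all nonzero projections equivalent) and the semifinite case (translating $\preceq$ into a trace inequality via \eqref{eq:mvn-ordering-and-trace} and identifying the trace of the support with the Schmidt rank) is exactly what is needed.
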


The results of \cref{sec:schmidt spectrum} imply that the RHS of \eqref{eq:factor-slocc} does not depend on the choice of the coupled traces.
We need the following Lemma, which is surely known but which we could not locate in the literature.

\begin{lemma}\label{lem:MvN lsc}
    Let $\M$ be a von Neumann algebra.
    If $\phi_n,\phi\in\M_*^+$ and if $\supp(\phi_n)\preceq p$ for a projection $p\in\M$, then $\phi_n\to\phi$ implies $\supp(\phi)\preceq p$.
\end{lemma}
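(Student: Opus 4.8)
The plan is to reduce the statement to the case of a factor by disintegration, dispose of the type $\III$ contribution trivially, and handle the semifinite factors by transporting everything to spectral scales, where norm convergence becomes $L^1$-convergence and the conclusion becomes an elementary semicontinuity statement for supports of nonincreasing functions.

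First I would disintegrate. Writing $\M = \int_X^\oplus \M_x\,d\mu(x)$ as in \eqref{eq:direct integral}, the projections decompose as $p = \int_X^\oplus p_x\,d\mu(x)$ and $\supp(\phi) = \int_X^\oplus \supp(\phi_x)\,d\mu(x)$, the functionals as $\phi_n = \int_X^\oplus \phi_{n,x}\,d\mu(x)$ and $\phi = \int_X^\oplus \phi_x\,d\mu(x)$ with $\norm{\phi_n-\phi} = \int_X \norm{\phi_{n,x}-\phi_x}\,d\mu(x)$, and the Murray--von Neumann order is pointwise, i.e.\ $\supp(\phi_n)\preceq p$ in $\M$ if and only if $\supp(\phi_{n,x})\preceq p_x$ in $\M_x$ for $\mu$-a.e.\ $x$. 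Since $\norm{\phi_n-\phi}\to0$ means $L^1$-convergence of the functions $x\mapsto\norm{\phi_{n,x}-\phi_x}$, I would pass to a subsequence so that $\phi_{n_k,x}\to\phi_x$ in norm for a.e.\ $x$; intersecting countably many full-measure sets, for a.e.\ $x$ one has simultaneously $\supp(\phi_{n_k,x})\preceq p_x$ for all $k$ and $\phi_{n_k,x}\to\phi_x$. It then suffices to prove the factorial statement pointwise, since reassembling $\supp(\phi_x)\preceq p_x$ over a.e.\ $x$ yields $\supp(\phi)\preceq p$.

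At a fixed $x$ the easy cases are immediate. If $p_x=0$ then $\supp(\phi_{n_k,x})=0$, so $\phi_{n_k,x}=0$ and hence $\phi_x=0$; if $\M_x$ is of type $\III$ and $p_x\ne0$, then all nonzero projections of $\M_x$ are equivalent, so $\supp(\phi_x)\preceq p_x$ holds automatically. The remaining, and main, case is a semifinite factor $\M_x$ with \nsf trace $\tau$. Here \eqref{eq:mvn-ordering-and-trace} turns $\supp(\phi_{n_k,x})\preceq p_x$ into $\tau(\supp\phi_{n_k,x})\le\tau(p_x)=:R$, and $\tau(\supp\psi)$ equals the Lebesgue measure of $\{t>0:\lambda_\psi(t)>0\}$, i.e.\ the length of the support of the spectral scale. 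I would then invoke the final identity of \cref{thm:majorization}: $\norm{\lambda_{\phi_{n_k,x}}-\lambda_{\phi_x}}_{L^1(\RR^+)} = \inf_{u}\norm{\phi_{n_k,x}-u\phi_x u^*} \le \norm{\phi_{n_k,x}-\phi_x}\to0$, so the spectral scales converge in $L^1(\RR^+)$. Each $\lambda_{\phi_{n_k,x}}$ is nonincreasing and vanishes on $[R,\oo)$ (its support has measure $\le R$), so its $L^1$-limit $\lambda_{\phi_x}$ vanishes a.e.\ on $[R,\oo)$; being nonincreasing and right-continuous, $\lambda_{\phi_x}$ vanishes identically there, whence $\tau(\supp\phi_x)\le R=\tau(p_x)$ and thus $\supp(\phi_x)\preceq p_x$ once more by \eqref{eq:mvn-ordering-and-trace}.

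The main obstacle I anticipate is the bookkeeping of the disintegration: justifying that support projections, Murray--von Neumann comparison, and the predual norm all disintegrate pointwise, and extracting a single full-measure set on which every pointwise hypothesis holds for every index simultaneously. The analytic heart of the argument, the $L^1$-lower semicontinuity of the support length of the spectral scale, is short once \cref{thm:majorization} is available.
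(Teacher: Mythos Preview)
Your proof is correct and follows essentially the same route as the paper: reduce to factors by disintegration, dispose of type $\III$ trivially, and in the semifinite case pass to spectral scales via \cref{thm:majorization}, where the statement becomes lower semicontinuity of the support length under $L^1$-convergence. The paper compresses the disintegration into a single sentence (``By disintegration, we may assume $\M$ to be a factor'') and then phrases the semifinite step as $\liminf_n|\supp\lambda_{\phi_n}|\ge|\supp\lambda_\phi|$, invoking lower semicontinuity of $|\supp(\cdot)|$ on $L^1(\RR^+)$; your version is more explicit on both counts, using that the spectral scales are nonincreasing and supported in $[0,R]$ to conclude directly that the $L^1$-limit is too.
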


\begin{proof}
    By disintegration, we may assume $\M$ to be a factor.
    If $\M$ is type $\III$, the statement is trivial because all nonzero projections are Murray-von Neumann equivalent.
    For semifinite $\M$, consider a \nsf trace $\tau$ on $\M$.
    If $\lambda_{\phi_n},\lambda_\phi$ denote the respective spectral scales relative of $\tau$ (see \cref{sec:semifinite vNas}), then \cref{thm:majorization} implies $\lambda_{\phi_n}\to\lambda_\phi$ in $L^1$-norm. We find
    \begin{equation}
        \liminf_n \tau(\supp(\phi_n)) = \liminf_n |\supp(\lambda_{\phi_n})| \ge |\supp(\lim_n \lambda_{\phi_n})| =|\supp(\lim_n\lambda_\phi)| = \tau(\supp(\phi)),
    \end{equation}
    where $\abs\placeholder$ denotes the Lebesgue measure on $\RR^+$ and where we used the lower-semicontinuity of the measure of the support on $L^1(\RR^+)$.
    Thus, $p \succeq \supp(\phi_n)$ implies $\tau(p) \ge \liminf_n \tau(\supp(\phi_n))\ge \tau(\supp(\phi))$, which implies $p\succeq \supp(\phi)$ because $\M$ is a factor.
\end{proof}

\begin{proof}[Proof of \cref{thm:slocc}]
    The two versions of \ref{it:slocc2} are equivalent by \cref{lem:mvn ordering}.
    \ref{it:slocc3} $\Rightarrow$ \ref{it:slocc1} follows from \cref{lem:slocc light}.
    \ref{it:slocc1} $\Rightarrow$ \ref{it:slocc3}: 
    Let $\eps>0$, then \cref{lem:slocc light} implies that there are operators $a_{A/B}$ such that $\Phi \approx_{\eps/2} a_Aa_B\Psi$.
    We may assume $\norm{a_A}\le 1$.
    Applying \cref{lem:simulation}, we get an operator $b_B$ and a partial isometry $v_A$, which acts isometrically on $b_B\Psi$, such that $v_Ab_B\Psi = a_B\Psi$.
    By \cref{cor:isometrically acting pi's}, there is a unitary $u_A$ such that $u_Ab_B\Psi \approx_{\eps/2} v_Ab_B\Psi$.
    Setting $k_B = b_B a_B$, we have $u_Ak_B\Psi \approx_\eps \Phi$.

    \ref{it:slocc3} $\Rightarrow$ (\ref{it:slocc2}, "$B$"):
    Let $u_{A,\eps},k_{B,\eps}$ be as in the statement and set $\Phi_\eps=u_{A,\eps}k_{B,\eps}$. 
    The $B$-marginal of $\Phi_\eps$ is $\phi_{B,\eps}=k_{B,\eps}\psi_Bk_{B,\eps}^*$, which converges in norm to $\phi$ as $\eps\to0$ and satisfies $\supp(\phi_{B,\eps}) \preceq \supp(\psi_B)$ for all $\eps>0$.
    By \cref{lem:MvN lsc}, we get $\supp(\phi_B)\preceq\supp(\psi_B)$.

    \ref{it:slocc2} $\Rightarrow$ \ref{it:slocc3}:
    Let $v_A$ be a partial isometry such that $v_Av_A^*=\supp(\psi_A)$, $v_A^*v_A\ge \supp(\phi_A)$.
    Then $[\M_B v_A\Psi] = [\M_A'v_A\Psi] = \supp(v_A\psi_Av_A^*) = v_A^*v_A \ge \supp(\phi_A) = [\M_A'\Phi]$ implies that $\Phi$ is in the closure of $\M_B v_A\Psi$.
    By \cref{cor:isometrically acting pi's}, we can approximate $v_A$ with unitaries in $\M_A$ acting on $\Psi$, which implies the claim.
\end{proof}

\subsection{Nielsen's theorem}\label{sec:nielsen}

Entanglement theory for bipartite mixed states is extremely hard \cite{horodecki_quantum_2009}.
Essentially, this follows from the problem that the class of LOCC protocols is intractable.
For pure states, the story is quite different as the reduction to one-way LOCC \cite{lo_concentrating_2001} shows.
As Nielsen noted in \cite{nielsen_conditions_1999} (see also \cite{nielsen_quantum_2010,crann_state_2020}), the latter allows for a complete characterization of pure state LOCC transitions in terms of the marginal states.
Nielsen's Theorem was generalized to semifinite factors in \cite{crann_state_2020}.
In the following, we begin by showing an exact version of Nielsen's theorem for general bipartite systems.
Afterward, we characterize approximate LOCC transitions in factorial bipartite systems in terms of majorization for the Schmidt scales.

\begin{theorem}[Nielsen's Theorem]\label{thm:nielsens thm1}
    Let $(\M_A,\M_B,\H)$ be a bipartite system and let $\Psi,\Phi$ be unit vectors in $\H$. 
    Then 
    \begin{equation}\label{eq:nielsens thm1}
        \Psi\locc \Phi \quad\iff\quad \psi_{A/B} = \sum_x p_x \,u_x\phi_{A/B} u_x^*
    \end{equation}
    for a probability distribution $(p_x)$ and a collection of partial isometries $u_x\in\M_{A/B}$.
\end{theorem}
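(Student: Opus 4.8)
The plan is to prove the two implications separately, and to treat only the $B$-marginal version of the right-hand side, since $\Psi\locc\Phi$ is symmetric under exchanging the two parties and so the $A$-version follows by relabelling.

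\emph{The implication ``$\Rightarrow$''.} Here I would feed $\Psi\locc\Phi$ into the one-way reduction of \cref{prop:owlocc}: it produces Alice-Kraus operators $\{k_x\}\subset\M_A$ with $\sum_x k_x^*k_x=1$ and partial isometries $u_x'\in\M_B$, isometric on $k_x\Psi$, with $u_x'k_x\Psi=p_x^{1/2}\Phi$, so that $k_x\Psi=p_x^{1/2}(u_x')^*\Phi$. Since each $k_x\in\M_A$ commutes with every $b\in\M_B$, I compute, for $b\in\M_B$,
\begin{equation*}
    \psi_B(b)=\ip{\Psi}{\Big(\textstyle\sum_x k_x^*k_x\Big)b\,\Psi}=\sum_x\ip{k_x\Psi}{b\,k_x\Psi}=\sum_x p_x\,\phi_B\big(u_x'\,b\,(u_x')^*\big),
\end{equation*}
using $k_x^*k_xb=k_x^*bk_x$ in the middle. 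Setting $v_x=(u_x')^*\in\V(\M_B)$ and noting $\sum_x p_x=\sum_x\norm{k_x\Psi}^2=1$, this is exactly $\psi_B=\sum_x p_x\,v_x\phi_B v_x^*$, as claimed. The same argument with the roles of Alice and Bob exchanged gives the $A$-version.

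\emph{The implication ``$\Leftarrow$''.} Suppose $\psi_B=\sum_x p_x u_x\phi_B u_x^*$ with $u_x\in\V(\M_B)$; I must realise this decomposition by a measurement of Alice followed by a local correction. Passing to the truncated standard bipartite system $(\M_{\Psi,A},\M_{\Psi,B},\H_\Psi)$ of \cref{prop:state induced truncation}, in which $\Psi$ is bicyclic, each summand $\sigma_x:=p_x u_x\phi_B u_x^*$ satisfies $0\le\sigma_x\le\psi_B$. Because $\Psi$ is cyclic for $\M_B$, hence separating for $\M_A=\M_B'$, the Radon--Nikodym theorem in the commutant supplies operators $h_x\in\M_A$ with $0\le h_x\le1$ and $\sigma_x(b)=\ip{\Psi}{b\,h_x\Psi}$ for $b\in\M_B$, while $\sum_x\sigma_x=\psi_B$ forces $\sum_x h_x=\supp(\psi_A)$. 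Thus $\{k_x:=h_x^{1/2}\}\subset\M_A$ is the Kraus-rank-$1$ form of an (Alice) instrument, and since $h_x$ commutes with $\M_B$, the $B$-marginal of $k_x\Psi$ equals $\sigma_x$, which is also the $B$-marginal of $p_x^{1/2}u_x\Phi$; moreover $\norm{k_x\Psi}^2=\sigma_x(1)=\norm{p_x^{1/2}u_x\Phi}^2$. By \cref{cor:all-purifications} (via Haag duality $\M_B'=\M_A$) together with \cref{cor:isometrically acting pi's}, there is a partial isometry $w_x\in\M_A$, isometric on $k_x\Psi$, with $k_x\Psi=w_x\,(p_x^{1/2}u_x\Phi)$. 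Applying $w_x^*$ (Alice) and then $u_x^*$ (Bob), interpreted as subnormalized operations on the relevant branch via \cref{lem:subnormalized channels}, sends $k_x\Psi$ to $p_x^{1/2}\Phi$, giving a one-way LOCC protocol that realises $\Psi\locc\Phi$.

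\emph{The main obstacle} is making the converse exact in full generality. Two points need care. First, one must ensure that the correction reaches $\Phi$ \emph{exactly} on every branch: this requires the $u_x$ to be isometric on $\supp(\phi_B)$ and a bookkeeping of support projections, using that $\psi_B=\sum_x p_x u_x\phi_B u_x^*$ only forces $\supp(\phi_B)\preceq\supp(\psi_B)$ (so $\Phi$ itself may lie partly outside $\H_\Psi$, and one works with an $\M_B$-partial-isometric copy of $\Phi$). Second, one must guarantee that the convex-combination hypothesis can be taken as a genuine (countable) equality with partial isometries rather than only the closure statement available from \cref{thm:majorization}; reconciling this with the exactness of the target is the crux. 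I would prefer the Radon--Nikodym route above to the classical Nielsen measurement operators $k_x=p_x^{1/2}\rho_{\phi_B}^{1/2}u_x^*\rho_{\psi_B}^{-1/2}$, precisely because the latter involve the generally unbounded $\rho_{\psi_B}^{-1/2}$ and presuppose a trace, which is unavailable in the type $\III$ case.
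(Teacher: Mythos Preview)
Your forward implication is essentially the paper's argument: the paper simply invokes \cref{prop:owlocc} with the roles of $A$ and $B$ swapped, which yields the $A$-version directly rather than the $B$-version you obtain.

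For the converse, your route is correct but genuinely different from the paper's. The paper constructs Alice's Kraus operators \emph{explicitly} via modular theory: writing $\psi_x=p_xu_x\phi u_x^*$, it sets $k_x=u_x^*(D\psi_x:D\psi)_{-i/2}$ using the analytic continuation of the Connes cocycle, verifies $\sum_x k_x^*k_x=\supp(\psi)$ first in the standard form (\cref{lem:nielsen std}), and then transfers to a general representation via the polar decomposition of vectors (\cref{cor:vector polar decomp}, \cref{cor:nielsens thm1}). Your approach instead extracts $h_x\in\M_A$ from $\sigma_x\le\psi_B$ by Sakai's Radon--Nikodym theorem in the commutant and then appeals to uniqueness of purifications (\cref{cor:all-purifications}) for the local correction. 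This is more elementary—no modular theory or analytic continuation is needed—at the cost of being non-constructive, whereas the paper's formula for $k_x$ is explicit. Two small clean-ups: you need not truncate to $\H_\Psi$, since the Radon--Nikodym derivative already lies in $p_A\M_Ap_A\subset\M_A$ with $p_A=[\M_B\Psi]=\supp(\psi_A)\in\M_A$, which dissolves your worry about $\Phi\notin\H_\Psi$; and your second ``obstacle'' about closure versus exact convex combinations is a non-issue here, since the hypothesis of the theorem is precisely the exact equality—the closure variant is the separate \cref{thm:nielsens thm2}.
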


Note that the equality on the RHS ensures that the partial isometries act isometrically on $\phi_{A/B}$ (see \cref{cor:isometrically acting pi's}).

\begin{lemma}\label{lem:nielsen thm1}
    If $\Psi\locc\Phi$, there are partial isometries $u_x\in \M_A$ and a probability distribution $(p_x)$ such that
    \begin{equation*}
        \psi_A = \sum_x p_x \,u_x\phi_A u_x^*.
    \end{equation*}
\end{lemma}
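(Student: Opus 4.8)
The plan is to prove Lemma \ref{lem:nielsen thm1} by invoking the one-way LOCC reduction (\cref{prop:owlocc}), which already does the heavy lifting, and then reading off the marginal identity on Alice's side. So the first step is: since $\Psi\locc\Phi$, \cref{prop:owlocc} gives a one-way protocol in which Alice performs an instrument with Kraus operators $\{k_x\}\subset\M_A$ and Bob applies conditional partial isometries $u_x'\in\M_B$ such that $u_x'k_x\Psi = p_x^{1/2}\Phi$, where $p_x=\norm{k_x\Psi}^2$ and $(p_x)$ is a probability distribution (because $\sum_x k_x^*k_x=1$ and $\sum_x p_x=\sum_x\ip\Psi{k_x^*k_x\Psi}=1$).

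The key idea is that Bob's partial isometries $u_x'\in\M_B$ act trivially on Alice's marginal. Concretely, for each $x$ let $\Psi_x := p_x^{-1/2}k_x\Psi$ (defined for those $x$ with $p_x>0$); then $u_x'\Psi_x = \Phi$. I would compute the $A$-marginal of $\Phi$ in terms of $\Psi_x$: for $a\in\M_A$,
\begin{equation*}
    \phi_A(a) = \ip{\Phi}{a\Phi} = \ip{u_x'\Psi_x}{a\,u_x'\Psi_x} = \ip{\Psi_x}{u_x'^*u_x'\,a\Psi_x},
\end{equation*}
using that $a\in\M_A$ commutes with $u_x'\in\M_B$. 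Since $u_x'$ acts isometrically on $k_x\Psi$, hence on $\Psi_x$, we have $u_x'^*u_x'\Psi_x=\Psi_x$, so $\ip{\Psi_x}{u_x'^*u_x'a\Psi_x}=\ip{\Psi_x}{a\Psi_x}$. Therefore $\phi_A = (\Psi_x)_A$ for every $x$ with $p_x>0$, i.e., $\Phi$ and each normalized post-measurement vector $\Psi_x$ induce the same state on $\M_A$.

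The next step is to express $\psi_A$ as the claimed convex combination. For $a\in\M_A$,
\begin{equation*}
    \psi_A(a) = \ip\Psi{a\Psi} = \sum_x \ip\Psi{k_x^*k_x\,a\Psi} = \sum_x \ip{k_x\Psi}{a\,k_x\Psi} = \sum_x p_x\,\ip{\Psi_x}{a\Psi_x},
\end{equation*}
where I used $\sum_x k_x^*k_x=1$ and $[k_x,a]$ need not vanish, so I must be careful: this computation only works because $k_x\in\M_A$ and $a\in\M_A$, and I am using $\ip\Psi{k_x^*\,a\,k_x\Psi}$, which requires $k_x^*a k_x = k_x^* k_x a$ only when $a$ commutes with $k_x$—it does not in general. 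The correct move is instead to recognize that the $A$-marginal of the (sub)state $k_x\psi k_x^*$ is $k_x\psi_A k_x^*$, so $\psi_A = \sum_x k_x\psi_A k_x^*$ follows directly from $\sum_x k_x^* k_x=1$ applied in the functional form $\psi_A(a)=\sum_x\psi_A(k_x^* \cdot\, \text{?})$; the clean statement is that $(k_x)$ defines a channel on $\M_A$ and its predual sends $\psi_A\mapsto\sum_x k_x\psi_Ak_x^*$.

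The main obstacle—and the step I would handle most carefully—is converting the post-measurement data into the precise form $\psi_A=\sum_x p_x u_x\phi_Au_x^*$ with partial isometries $u_x\in\M_A$ and establishing that $k_x\psi_A k_x^* = p_x\, u_x\phi_A u_x^*$. Here I would apply the polar-decomposition / purification machinery: since $\Psi_x$ and $\Phi$ have equal $A$-marginals $\phi_A$, \cref{cor:all-purifications} (applied with the roles arranged so that equal $\M_A$-marginals give a partial isometry in $\M_A'=\M_B$)—or rather its analogue characterizing how the Kraus operator $k_x$ transforms the marginal—shows that $k_x\Psi$ is, up to the scalar $p_x^{1/2}$ and a partial isometry, a purification of $\phi_A$. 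Writing $k_x = v_x|k_x|$ and tracking the marginal gives the desired partial isometry $u_x\in\M_A$ with $k_x\psi_Ak_x^* = p_x\,u_x\phi_A u_x^*$, and summing over $x$ yields \eqref{eq:nielsens thm1} on the $A$-side. The delicate point throughout is bookkeeping the isometric action (ensuring each $u_x$ genuinely acts isometrically on $\phi_A$, which is exactly the content guaranteed by the equality constraint noted after the theorem statement) rather than any deep structural difficulty.
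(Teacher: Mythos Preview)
Your approach has a genuine gap. You apply \cref{prop:owlocc} directly, obtaining Kraus operators $k_x\in\M_A$ and partial isometries $u_x'\in\M_B$. Since $u_x'$ does not affect the $A$-marginal, you correctly deduce that each normalized $\Psi_x=p_x^{-1/2}k_x\Psi$ has $A$-marginal $\phi_A$, i.e., $k_x\psi_A k_x^*=p_x\phi_A$. But summing this over $x$ gives $\sum_x k_x\psi_A k_x^*=\phi_A$, which is the \emph{wrong direction}: it expresses $\phi_A$ as the output of a channel applied to $\psi_A$, not $\psi_A$ as a convex combination of partial-isometric conjugates of $\phi_A$. Your claim that ``$\psi_A=\sum_x k_x\psi_A k_x^*$ follows directly from $\sum_x k_x^*k_x=1$'' is simply false---unitality of a channel does not mean it fixes every state---and the subsequent polar-decomposition manoeuvre cannot repair this, since from $k_x\psi_A k_x^*=p_x\phi_A$ you can only ever recover $\phi_A$ on the left-hand side of any resulting sum.

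The paper avoids this by applying \cref{prop:owlocc} \emph{with $A$ and $B$ swapped}: then the Kraus operators $k_x$ lie in $\M_B$ and the partial isometries $u_x$ lie in $\M_A$, with $k_x\Psi=p_x^{1/2}u_x\Phi$. Now $k_x$ commutes with every $a\in\M_A$, so $\sum_x\ip{k_x\Psi}{a\,k_x\Psi}=\ip{\Psi}{a\sum_x k_x^*k_x\Psi}=\psi_A(a)$, while the left-hand side equals $\sum_x p_x\,(u_x\phi_A u_x^*)(a)$. The swap is the entire content of the argument; once you make it, the proof is a two-line computation with no polar decompositions needed.
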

\begin{proof}
    By \cref{prop:owlocc} (applied with $A$ and $B$ swapped), there are partial isometries $u_x\in\M_A$ and Kraus operators $k_x\in\M_B$ such that $k_x\Psi = p_x^{1/2}u_x\Phi$.
    Therefore, we have
    \begin{equation*}
        \sum_x p_x\, u_x\phi_A u_x^* = \sum_x \ip{k_x\Psi}{(\placeholder)|_{\M_A}k_x\Psi} = \ip{\Psi}{(\placeholder)|_{\M_A}\sum_x k_x^*k_x\Psi} = \psi_A.\qedhere
    \end{equation*}
\end{proof}

\begin{lemma}\label{lem:nielsen std}
    Let $\phi$ be a normal state on a von Neumann algebra $\M$, let $(p_x)$ be a probability distribution, and let $u_x\in\M_A$ be a collection of partial isometries such that $u_x\phi_Au_x^*$ is normalized.
    Set $\psi = \sum_x p_x u_x\phi u_x^* = \sum_x \psi_x$, $\psi_x=p_x u_x\phi u_x^*$, and set $k_x = u_x^* (D\psi_x : D\psi)_{-i/2}$.
    If we consider $\M$ in standard form, then
    \begin{equation}\label{eq:nielsen std}
        k_x j(u_x^*)\Omega_\psi = p_x^{1/2}\Omega_\phi, \qquad \sum_x k_x^*k_x = \supp(\psi).
    \end{equation}
\end{lemma}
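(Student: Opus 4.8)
The plan is to perform the entire computation inside the Haagerup standard form $L^2(\M)$ (see \cref{sec:haagerup Lp}), where everything reduces to explicit manipulations of the operators $h_\psi,h_\phi\in L^1(\M)$. In this representation $\M$ acts by left multiplication, the modular conjugation is $J\xi=\xi^*$, the positive cone is $L^2(\M)^+$ (see \eqref{eq:haagerup L2 std form}), and the canonical purification of a normal state is $\Omega_\psi=h_\psi^{1/2}$. Consequently $j(a)=JaJ$ acts by right multiplication, $j(a)\xi=\xi a^*$, so that $j(u_x^*)\Omega_\psi=h_\psi^{1/2}u_x$. Using the Haagerup-$L^p$ description of the Connes cocycle $(D\psi_x:D\psi)_t=h_{\psi_x}^{it}h_\psi^{-it}$ from \cref{sec:haagerup Lp} and analytically continuing to $t=-i/2$, I would take as the working formula
\[ k_x = u_x^*\,(D\psi_x:D\psi)_{-i/2} = u_x^*\,h_{\psi_x}^{1/2}\,h_\psi^{-1/2}, \]
where $h_\psi^{-1/2}$ denotes the generalized inverse supported on $\supp(\psi)$.

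Next I would simplify $k_x$. By linearity and covariance of the map $\psi\mapsto h_\psi$ (\cref{thm:haagerup L1}, item \ref{it:haagerup L12}), we have $h_{\psi_x}=p_x\,u_x h_\phi u_x^*$. The normalization hypothesis $\norm{u_x\phi u_x^*}=1$ is exactly $u_x^*u_x\ge\supp(\phi)=\supp(h_\phi)$ (using item \ref{it:haagerup L13} of \cref{thm:haagerup L1}), and this lets me verify $(u_x h_\phi u_x^*)^{1/2}=u_x h_\phi^{1/2}u_x^*$ by squaring the right-hand side and using $h_\phi^{1/2}u_x^*u_x h_\phi^{1/2}=h_\phi$. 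Hence $h_{\psi_x}^{1/2}=p_x^{1/2}u_x h_\phi^{1/2}u_x^*$, and cancelling $u_x^*u_x$ against $h_\phi^{1/2}$ produces the clean form
\[ k_x = p_x^{1/2}\,h_\phi^{1/2}\,u_x^*\,h_\psi^{-1/2}. \]

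With this in hand both assertions are short. For the first, since $k_x$ and $j(u_x^*)$ commute,
\[ k_x\,j(u_x^*)\,\Omega_\psi = p_x^{1/2}\,h_\phi^{1/2}\,u_x^*\,h_\psi^{-1/2}h_\psi^{1/2}\,u_x = p_x^{1/2}\,h_\phi^{1/2}\,u_x^*\,\supp(\psi)\,u_x, \]
and the support bookkeeping — from $\supp(\psi)\ge u_x\supp(\phi)u_x^*$ together with $u_x^*u_x\ge\supp(\phi)$ one deduces $\supp(\phi)\,u_x^*\,\supp(\psi)\,u_x=\supp(\phi)$ — collapses this to $p_x^{1/2}h_\phi^{1/2}=p_x^{1/2}\Omega_\phi$. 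For the second, $k_x^*k_x=p_x\,h_\psi^{-1/2}u_x h_\phi u_x^* h_\psi^{-1/2}$, so summing and using $\sum_x p_x u_x h_\phi u_x^*=\sum_x h_{\psi_x}=h_\psi$ gives $\sum_x k_x^*k_x=h_\psi^{-1/2}h_\psi h_\psi^{-1/2}=\supp(\psi)$.

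The $L^1$-covariance and the positive-operator algebra are routine; the step that needs genuine care is the treatment of the (in general non-faithful) cocycle $(D\psi_x:D\psi)$, since $\psi_x$ and $\psi$ need not be faithful — only $\psi_x\le\psi$ holds. The hard part will be justifying that the analytic continuation to $-i/2$ exists, equals $h_{\psi_x}^{1/2}h_\psi^{-1/2}$ with the generalized inverse, and yields a bounded element of $\M$; the boundedness follows a posteriori from $\sum_x k_x^*k_x=\supp(\psi)\le 1$, which also shows each $k_x$ is a contraction of grade $0$ and hence lies in $\M=\Tilde\M^{\tilde\theta}$ (see \eqref{eq:core2}). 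The remaining delicate point is the projection identity $\supp(\phi)\,u_x^*\,\supp(\psi)\,u_x=\supp(\phi)$, which I would establish by sandwiching $\supp(\phi)\le u_x^*\supp(\psi)u_x\le u_x^*u_x$ between $\supp(\phi)$ on both sides, forcing $\supp(\phi)\,u_x^*\supp(\psi)u_x\,\supp(\phi)=\supp(\phi)$, and then upgrading $eqe=e$ to $qe=e$ via $(1-q)^{1/2}e=0$ for $q=u_x^*\supp(\psi)u_x$ and $e=\supp(\phi)$.
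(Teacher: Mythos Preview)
Your proposal is correct and takes essentially the same approach as the paper: both work in the Haagerup $L^2$-representation, with the paper invoking \cite[Lem.~A.58]{hiai_quantum_2021} for the identity $(D\psi_x:D\psi)_{-i/2}\Omega_\psi=\Omega_{\psi_x}$ while you write out the cocycle as $h_{\psi_x}^{1/2}h_\psi^{-1/2}$ explicitly and simplify first. For $\sum_x k_x^*k_x=\supp(\psi)$, the paper verifies the identity weakly by testing against $x'\Omega_\psi,y'\Omega_\psi$ with $x',y'\in\M'$, whereas your direct computation $\sum_x k_x^*k_x=h_\psi^{-1/2}\big(\sum_x h_{\psi_x}\big)h_\psi^{-1/2}=\supp(\psi)$ is shorter; the delicate point you flag (analytic continuation of the cocycle for non-faithful $\psi_x\le\psi$) is exactly what the paper outsources to Hiai's lemma.
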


In the Lemma, $(D\psi_x : D\psi)_{-i/2}$ denotes the analytic continuation of the Connes cocycle.

\begin{proof}
    The analytic continuation of the Connes cocycle makes sense and satisfies
    \begin{equation*}
        (D\psi_x : D\psi)_{-i/2} \Omega_\psi = \Omega_{\psi_x}=\Omega_{p_x u_x\psi u_x^*}= p_x^{1/2} u_xj(u_x)\Omega_\psi.
    \end{equation*}
    This is shown in \cite[Lem.~A.58]{hiai_quantum_2021} in the context of Haagerup $L^p$-spaces (we apply the Lemma with $p=1$).
    Applying $u_x^*j(u_x^*)$ to both sides shows the first equality in \eqref{eq:nielsen std}.
    To see the second inequality, we consider $x',y'\in\M'$, we note that $\supp(k_x^*k_x)\le \supp(\psi)$.
    Now let $x',y'\in\M'$. Then
    \begin{align*}
        \ip{x'\Omega_\psi}{\sum_x k_x^* k_x y'\Omega_\psi} &= \sum_x \ip{x'k_x\Omega_\psi}{y'k_x\Omega_\psi} \\
        &= \sum_x p_x\ip{x'j(u_x)\Omega_\phi}{y'j(u_x)\Omega_\phi} \\
        &= \sum_x p_x\ip{x'u_xj(u_x)\Omega_\phi}{y'u_xj(u_x)\Omega_\phi} \\
        &= \sum_x p_x \ip{\Omega_{u_x\phi u_x}}{x'^*y'\Omega_{u_x\phi u_x^*}}\\
        &= \Big(\sum_x p_x\, u_x\phi u_x^*\Big)'(x'^*y') 
        = \psi'(x'^*y') = \ip{x'\Omega_\psi}{ y'\Omega_\psi},
    \end{align*}
    where we used the notation $\omega' = \ip{\Omega_\omega}{(\placeholder)\Omega_\omega}|_{\M'}$.
    Thus, we have $\sum_x k_x^*k_x=\supp(\psi)$.
\end{proof}

\begin{corollary}\label{cor:nielsens thm1}
    Under the assumptions of \cref{lem:nielsen std}, let $\M$ act on a Hilbert space $\H$ and let $\Psi,\Phi\in\H$ be purifications of $\psi,\phi$.
    Let $\Psi = w'\Omega_\psi$ and $\Phi = v'\Omega_\phi$ be the polar decompositions of $\Psi$ and $\Phi$ (see \cref{cor:vector polar decomp}).
    Let $u_x'=v'j(\pi(u_x^*))w'^*\in \M'$, where $\pi$ is the representation that puts $\M$ in standard form.
    Then 
    \begin{equation*}
        k_x u_x' \Psi = p_x^{1/2}\Phi.
    \end{equation*}
\end{corollary}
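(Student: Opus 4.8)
The plan is to transport the identity $k_x j(u_x^*)\Omega_\psi = p_x^{1/2}\Omega_\phi$ of \cref{lem:nielsen std}, which lives in the standard representation on $L^2(\M)$, up to the given Hilbert space $\H$ by conjugating with the $\M$-linear partial isometries $w'$ and $v'$ supplied by the polar decomposition of vectors (\cref{cor:vector polar decomp}). The whole argument is a bookkeeping exercise in the intertwining relations $v'\pi(a)=av'$ and $w'\pi(a)=aw'$, $a\in\M$, so I would not expect any genuine difficulty beyond keeping the indices and adjoints straight.

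First I would record the two facts about $w'$ that are needed. Taking adjoints in $w'\pi(a)=aw'$ and relabeling gives the companion relation $w'^*a=\pi(a)w'^*$ for all $a\in\M$. Moreover, item (iii) of \cref{cor:vector polar decomp} gives $w'^*w'=[\pi(\M)\Omega_\psi]$, and since $\Omega_\psi\in\overline{\pi(\M)\Omega_\psi}$ this projection fixes $\Omega_\psi$; hence $w'^*\Psi = w'^*w'\Omega_\psi = \Omega_\psi$. The analogous facts hold for $v'$, and in particular $v'\Omega_\phi=\Phi$.

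Next I would verify that $u_x'=v'j(\pi(u_x^*))w'^*$ indeed lands in $\M'$, as claimed in the statement. For $a\in\M$, push $a$ through $w'^*$ using $w'^*a=\pi(a)w'^*$, commute it past $j(\pi(u_x^*))\in\pi(\M)'$, and pull it back out through $v'$ using $v'\pi(a)=av'$; this yields $u_x'a=au_x'$. With this in hand the main computation is immediate: starting from $k_x u_x'\Psi = k_x v'j(\pi(u_x^*))w'^*\Psi$, substitute $w'^*\Psi=\Omega_\psi$, use $\M$-linearity $k_x v'=v'\pi(k_x)$ to obtain $v'\,\pi(k_x)j(\pi(u_x^*))\Omega_\psi$, and finally invoke \cref{lem:nielsen std} (read in the representation $\pi$, identifying $\M$ with $\pi(\M)$ and $j$ with the modular conjugation of $L^2(\M)$) to replace the inner vector by $p_x^{1/2}\Omega_\phi$. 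This gives $k_x u_x'\Psi = p_x^{1/2}v'\Omega_\phi = p_x^{1/2}\Phi$.

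The one point that requires care is getting the direction of the intertwining relations right — in particular the adjointed relation $w'^*a=\pi(a)w'^*$ and the fact that $j(\pi(u_x^*))$ commutes with $\pi(a)$ rather than with $a$ — together with confirming that $\Omega_\psi$ lies in the range of $w'^*w'$ so that $w'^*\Psi=\Omega_\psi$ holds exactly and not merely up to the support projection. Once these are pinned down, the corollary follows by the single line of substitution above, with no further input beyond \cref{lem:nielsen std} and \cref{cor:vector polar decomp}.
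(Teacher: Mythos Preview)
Your proposal is correct and is exactly the argument the paper has in mind: the corollary is stated without proof there, as it follows immediately from \cref{lem:nielsen std} by transporting along the $\M$-linear partial isometries $v',w'$ of \cref{cor:vector polar decomp}. Your write-up spells out precisely the intertwining relations and the identity $w'^*\Psi=\Omega_\psi$ that make the one-line computation go through.
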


\begin{proof}[Proof of \cref{thm:nielsens thm1}]
    The two implications are shown in \cref{lem:nielsen thm1,cor:nielsens thm1}, respectively.
\end{proof}

We now introduce a stabilized version of pure state LOCC transitions.
Let $(\M_A,\M_B,\H)$ be a bipartite system.
For unit vectors $\Psi,\Phi\in\H$, we say that $\Psi$ can be transformed to $\Phi$ with \emph{approximate LOCC}, if for all $\eps>0$, there exists a $\Psi'\approx_\eps\Psi$ that can be reached from $\Psi$ with LOCC.
We denote this
\begin{equation}
    \Psi\barlocc\Phi.
\end{equation}

For states $\psi,\phi$ on an arbitrary factor $\M$, we define majorization $\psi\preceq\phi$ via $\psi \in \overline{\conv}\{u\phi u^*:u\in\U(\M)\}$.
By \cref{thm:majorization}, this is in accordance with the usual definition of majorization, whenever defined.

\begin{theorem}\label{thm:nielsens thm2}
    Let $(\M_A,\M_B,\H)$ be a factorial bipartite system and let $\Psi,\Phi$ be unit vectors in $\H$. 
    Then 
    \begin{equation}\label{eq:nielsens thm2 majorization}
        \Psi\barlocc \Phi \quad\iff\quad \psi_{A/B} \preceq \phi_{A/B}.
    \end{equation}
    In particular, if $\M_A$ and $\M_B$ are semifinite, this is equivalent to $\lambda_\Psi \preceq \lambda_\Phi$, where the Schmidt scales are taken with respect to a coupled pair of \nsf traces $(\tau_A,\tau_B)$.
\end{theorem}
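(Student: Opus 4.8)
The plan is to prove the two implications separately, using the exact Nielsen theorem \cref{thm:nielsens thm1} as the workhorse and the convex-hull definition of majorization ($\psi_A\preceq\phi_A$ meaning $\psi_A\in\overline{\conv}\{u\phi_A u^*:u\in\U(\M_A)\}$) as the bridge. Throughout, unwinding $\barlocc$ means: for every $\eps>0$ there is a unit vector $\Phi_\eps$ with $\Phi_\eps\approx_\eps\Phi$ and an exact transition $\Psi\locc\Phi_\eps$. The final assertion about semifinite systems then follows at once from \cref{thm:majorization}, which identifies convex-hull majorization with majorization of the spectral scales, together with the definition \eqref{eq:Schmidt spectrum} of the Schmidt scale; I treat it as a trivial corollary and concentrate on the main equivalence.

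For $(\Rightarrow)$, fix $\eps>0$ and an exact transition $\Psi\locc\Phi_\eps$. \Cref{thm:nielsens thm1} yields a probability distribution $(p_x)$ and partial isometries $u_x\in\M_A$, acting isometrically on $(\phi_\eps)_A$, with $\psi_A=\sum_x p_x\,u_x(\phi_\eps)_A u_x^*$. By \cref{cor:isometrically acting pi's} each $u_x(\phi_\eps)_A u_x^*$ is a norm limit of unitary conjugates $w(\phi_\eps)_A w^*$, $w\in\U(\M_A)$, so $\psi_A\in\overline{\conv}\{u(\phi_\eps)_A u^*\}$, i.e.\ $\psi_A\preceq(\phi_\eps)_A$. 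Since $\norm{(\phi_\eps)_A-\phi_A}\le 2\norm{\Phi_\eps-\Phi}<2\eps$ by \eqref{eq:general vector state distance}, and since conjugation by a fixed unitary is isometric, the sets $\overline{\conv}\{u(\phi_\eps)_A u^*\}$ and $\overline{\conv}\{u\phi_A u^*\}$ lie within Hausdorff distance $2\eps$; as the latter is closed, letting $\eps\to 0$ gives $\psi_A\in\overline{\conv}\{u\phi_A u^*\}$, that is $\psi_A\preceq\phi_A$ (and symmetrically for $B$).

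The hard direction is $(\Leftarrow)$, where from $\psi_A\in\overline{\conv}\{u\phi_A u^*\}$ one must manufacture, for each $\eps$, a genuine pure-to-pure transition $\Psi\locc\Phi_\eps$ with $\Phi_\eps\approx_\eps\Phi$. Here the approximation cannot be pushed onto the source, because $\locc$ demands that the \emph{averaged} output be the pure state $\Phi_\eps$; by \cref{thm:nielsens thm1} this forces $\psi_A$ to be an \emph{exact} (finite or countable) mixture of isometric conjugates of $(\phi_\eps)_A$. The strategy is therefore to perturb the target marginal: starting from a finite mixture $\sum_x p_x u_x\phi_A u_x^*$ approximating $\psi_A$, I would construct a nearby state $\phi'_A$ with $\norm{\phi'_A-\phi_A}$ small --- obtained by spreading a little spectral weight into an infinite tail so as to create strict slack --- for which $\psi_A$ lies in the \emph{non-closed} convex hull $\conv\{v\phi'_A v^*\}$, and then realize $\phi'_A$ as the $A$-marginal of a vector $\Phi_\eps$ with $\Phi_\eps\approx_\eps\Phi$ using \cref{prop:local distinguishability}, \cref{thm:uhlmann}, and the spectral-scale realization \eqref{eq:operator with given spectral scale}. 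Applying \cref{thm:nielsens thm1} to the pair $(\Psi,\Phi_\eps)$ then produces the desired exact protocol. To carry this out uniformly, I would first reduce the factorial case to the semifinite one by passing to the truncated standard bipartite system of \cref{prop:state induced truncation} (legitimate because $\psi_A\preceq\phi_A$ forces $\supp\phi_A\preceq\supp\psi_A$, so $\Phi$ fits inside the support of $\Psi$), where the problem becomes the classical approximate-majorization statement for the spectral scales; the genuinely type-$\III$ contribution is handled by transporting everything to the Haagerup--Størmer spectral states on the abelian flow of weights via \cref{thm:distance spectral states} and the convolution formula \cref{prop:convolution formula}, turning it into an approximate-majorization problem for an ergodic flow.

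The main obstacle is precisely this target-perturbation step in $(\Leftarrow)$: controlling $\norm{\Phi_\eps-\Phi}$ while simultaneously guaranteeing that $\psi_A$ becomes an exact mixture of conjugates of the perturbed marginal, and doing so uniformly across the type decomposition \eqref{eq:type decomp of bipartite systems}. Everything else --- the unitary upgrade, the limit argument in $(\Rightarrow)$, and the reduction to spectral scales for the semifinite addendum --- is routine given the machinery already assembled.
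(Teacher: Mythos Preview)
Your $(\Rightarrow)$ argument is correct and makes explicit what the paper compresses into a single sentence: the paper simply asserts that the definition of $\barlocc$ together with \cref{thm:nielsens thm1} ``directly imply'' the equivalence with the closed unitary convex hull, and that in the factorial case this is the definition of majorization. You are right that the $(\Leftarrow)$ direction---producing a nearby \emph{pure} target $\Phi_\eps$ whose marginal admits an \emph{exact} decomposition realizing $\psi_A$---is not literally immediate, and the paper offers no details here.

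However, your proposed repair for $(\Leftarrow)$ contains genuine errors. First, passing to the truncated standard subsystem of \cref{prop:state induced truncation} does \emph{not} reduce the problem to the semifinite case: a corner $p\M p$ of a type~$\III$ factor by a nonzero projection is again a type~$\III$ factor, so truncation only makes the representation standard without changing the type. Second, and relatedly, the type~$\III$ contribution needs no Haagerup--St\o rmer or flow-of-weights machinery whatsoever: by \cref{lem:locc trivial in type III} (cited in the paper as \cite[Lem.~4.3]{hiai_closed_1991}), in a factorial type~$\III$ bipartite system one has $\Psi\barlocc\Phi$ for \emph{all} pairs, so $(\Leftarrow)$ is vacuous there. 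Third, from $\supp\phi_A\preceq\supp\psi_A$ (Murray--von Neumann subequivalence) you conclude that ``$\Phi$ fits inside the support of $\Psi$'', but membership $\Phi\in\H_\Psi$ requires the ordinary projection inequality $\supp\phi_{A/B}\le\supp\psi_{A/B}$, which does not follow from $\preceq$.

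The clean route is: dispose of type~$\III$ by \cref{lem:locc trivial in type III}; in the semifinite factorial case, translate $\psi_A\preceq\phi_A$ into $\lambda_\Psi\preceq\lambda_\Phi$ via \cref{thm:majorization}, then carry out your ``strict slack'' perturbation at the level of spectral scales (perturb $\lambda_\Phi$ in $L^1$ to some $\lambda'$ with the same or smaller support so that $\psi_A$ becomes a genuine finite unitary mixture of a state with scale $\lambda'$), realize $\lambda'$ via \eqref{eq:operator with given spectral scale}, and choose a purification close to $\Phi$ via Uhlmann. The convolution formula and the abelian-flow reduction are not needed at all for this theorem.
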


The majorization relation for the Schmidt scales is the usual one, i.e., $\lambda_\Psi\preceq\lambda_\Phi$ is defined as $\int_0^t\lambda_\Psi(s)\,ds \le \int_0^t\lambda_\Phi(s)\,ds$ for all $t>0$.

\begin{proof}[Proof of \cref{thm:nielsens thm2}]
    If $(\M_A,\M_B,\H)$ is a general bipartite system, then the definition of approximate LOCC transitions and \cref{thm:nielsens thm1} directly imply
    \begin{equation*}
        \Psi\barlocc\Phi \quad \iff \quad \psi_{A/B} \in \overline{\conv} \{ u\phi_{A/B} u^* : u\in \U(\M_{A/B})\}.
    \end{equation*}
    In the factorial case, the RHS is our definition of $\psi_{A/B}\preceq\phi_{A/B}$.
    The statement about the Schmidt scales follows from \cref{thm:majorization} since $\lambda_\Psi=\lambda_{\psi_A}$ and $\lambda_\Phi=\lambda_{\phi_A}$ (see \cref{sec:schmidt spectrum}).
\end{proof}

\begin{remark}
    In \cite{crann_state_2020}, only approximate LOCC transitions are considered.
    However, the definition of approximate LOCC transitions above is similar but not the same as the one considered in \cite{crann_state_2020}. 
    Their definition, which is arguably more natural, only asks that for all $\eps>0$, there is an LOCC protocol taking the pure state $\omega_\Psi=\ip\Psi{(\placeholder)\Psi}$ to a (possibly mixed) state that is $\eps$-close to $\omega_\Phi$.
    The data processing inequality show that our notion of an approximate LOCC transition implies theirs notion, but the converse is less clear.
    However, for the case of semifinite factors, their version of Nielsen's theorem \cite[Thm.~5.3]{crann_state_2020} shows that both notions are, in fact, equivalent.
    In the type $\III$ case, our notion of approximate LOCC transitions trivializes (see \cref{sec:locc-types}) and, hence, so does their notion.
    Thus, the two notions are, in fact, equivalent for factorial bipartite systems.
\end{remark}

\subsection{Pure state entanglement monotones}\label{sec:entanglement monotones}

In this subsection, we note that the usual construction of pure state entanglement monotones based on Nielsen's theorem \cite{nielsen_conditions_1999} and monotones for majorization theory extends to the case of semifinite von Neumann algebraic bipartite systems.
The basics were already discussed in \cref{sec:schmidt spectrum}, where we defined the Schmidt scale $\lambda_\Psi$ in factorial semifinite bipartite systems. 

We begin by defining entanglement monotones: 
A pure state entanglement monotone for a bipartite system $(\M_A,\M_B,\H)$ is a scalar function $E$, defined on unit vectors $\Psi\in\H$, such that
\begin{equation*}
    \Psi \barlocc \Phi \quad \implies \quad E(\Psi)\ge E(\Phi).
\end{equation*}
In \cref{sec:slocc}, we have seen that the Schmidt rank $r(\Psi)$ of a factorial semifinite bipartite system is a monotone for approximate SLOCC transition. 
Thus, it is, in particular, an entanglement monotone.

\begin{proposition}\label{prop:entanglement monotones}
    Let $(\M_A,\M_B,\H)$ be a factorial semifinite bipartite system, and let $\tau_{A/B}$ be coupled \nsf traces on $\M_{A/B}$.
    For a convex non-decreasing function $f:\RR^+\to \RR^+$ and a unit vector $\Psi\in\H$, define 
    \begin{equation}\label{eq:entanglement monotone}
        E_f(\Psi) := \int_{\RR^+} f(\lambda_\Psi(t))\,dt,
    \end{equation}
    where $\lambda_\Psi$ is the Schmidt scale of $\Psi$ (see \eqref{eq:Schmidt spectrum}).
    Then the functions $E_f$ with $f$ as above form a complete family of entanglement monotones:
    \begin{equation}
        \Psi \barlocc \Phi \quad \iff \quad E_f(\Psi)\ge E_f(\Phi)\quad \forall f.
    \end{equation}
\end{proposition}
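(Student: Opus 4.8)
The plan is to reduce the statement to a purely order-theoretic fact about the marginal states, which is already packaged in \cref{thm:majorization}, and then to identify each quantity $E_f(\Psi)$ with one of the convex trace functions appearing there. Three ingredients do all the work: \cref{thm:nielsens thm2}, which converts an approximate LOCC transition into a majorization relation between marginals; \cref{thm:majorization}, which characterizes that majorization relation through a family of convex trace inequalities; and \cref{cor:equal trace functions}, which rewrites those trace functions as integrals against the Schmidt scale.

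First I would record the key identity. By \cref{cor:equal trace functions}, for every continuous non-decreasing convex $f\colon\RR^+\to\RR^+$ and every unit vector $\Psi\in\H$,
\[
    E_f(\Psi)=\int_{\RR^+} f(\lambda_\Psi(t))\,dt
    = \tau_A\!\left(f\!\left(\tfrac{d\psi_A}{d\tau_A}\right)\right)
    = \tau_B\!\left(f\!\left(\tfrac{d\psi_B}{d\tau_B}\right)\right).
\]
Thus $E_f(\Psi)$ is exactly the trace function of the marginal $\psi_A$ (equivalently $\psi_B$) that enters condition (b) of \cref{thm:majorization}; in particular $E_f$ is well defined, symmetric in the two parties, and depends on $\Psi$ only through its Schmidt scale $\lambda_\Psi$, so the two sides of the claimed equivalence are genuinely about the same object.

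With this in hand the proposition becomes a direct composition of equivalences, read in both directions. For the monotonicity direction, assume $\Psi\barlocc\Phi$; by \cref{thm:nielsens thm2} this is equivalent to the majorization relation of the marginals $\psi_A\preceq\phi_A$ on the factor $\M_A$ (equivalently $\lambda_\Psi\preceq\lambda_\Phi$). Feeding this majorization relation into the equivalence between conditions (a) and (b) of \cref{thm:majorization} delivers the family of inequalities $\tau_A(f(\rho_{\psi_A}))\ge\tau_A(f(\rho_{\phi_A}))$ over all continuous non-decreasing convex $f$, which by the key identity is precisely $E_f(\Psi)\ge E_f(\Phi)$. This simultaneously shows that each $E_f$ is an entanglement monotone. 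The completeness direction is the same chain run backwards: if $E_f(\Psi)\ge E_f(\Phi)$ for all such $f$, then by the identity this is condition (b) of \cref{thm:majorization}, which returns the majorization $\psi_A\preceq\phi_A$, and \cref{thm:nielsens thm2} then yields $\Psi\barlocc\Phi$.

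The only points requiring care are bookkeeping rather than substance. I must check that the admissible class in the definition of $E_f$ lines up exactly with the continuous non-decreasing convex functions appearing in \cref{thm:majorization}(b) (and, for the integrals to be informative in the type $\II$ and $\I_\oo$ cases where the support of $\lambda_\Psi$ can be infinite, to note that these functions may be taken with $f(0)=0$ without weakening the family), and that \cref{cor:equal trace functions} applies to each of them even though $\tau_A$ may be infinite and $\rho_{\psi_A}$ unbounded—which it does, being stated for arbitrary Borel $f\ge 0$. I also have to confirm that the majorization of Schmidt scales in the semifinite case of \cref{thm:nielsens thm2} is literally the majorization of the marginal states, so the two theorems compose with the inequality oriented as in the statement; this is the content of \cref{sec:schmidt spectrum}, where $\lambda_\Psi=\lambda_{\psi_A}=\lambda_{\psi_B}$. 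The main thing to get right is the orientation of the majorization symbol so that the chain lands on $E_f(\Psi)\ge E_f(\Phi)$ rather than its reverse; once the conventions are fixed no genuinely new estimate is needed.
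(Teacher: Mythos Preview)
Your proposal is correct and follows essentially the same route as the paper: reduce $\Psi\barlocc\Phi$ to majorization of the Schmidt scales via \cref{thm:nielsens thm2}, and then invoke the equivalence (a)$\Leftrightarrow$(b) of \cref{thm:majorization} to characterize that majorization by the family of inequalities $E_f(\Psi)\ge E_f(\Phi)$. You are slightly more explicit than the paper in invoking \cref{cor:equal trace functions} to identify $E_f(\Psi)$ with $\tau_A(f(\rho_{\psi_A}))$, and in flagging the orientation bookkeeping, but the substance of the argument is identical.
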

\begin{proof}
    By \cref{thm:nielsens thm2}, the approximate LOCC transition is equivalent to majorization $\lambda_\Psi \preceq \lambda_\Phi$ of the Schmidt scales.
    Thus, the claim follows from \cref{thm:majorization}.
\end{proof}

\begin{corollary}\label{cor:entanglement monotones are monotone}
    The entanglement entropy and its $\alpha$-Renyi variants (see \cref{sec:schmidt spectrum}) are entanglement monotones.
\end{corollary}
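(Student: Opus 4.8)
The goal is to show that the entanglement entropy $H(\Psi)$ and its $\alpha$-Renyi variants $H_\alpha(\Psi)$ are entanglement monotones, i.e.\ that they do not increase under approximate LOCC transitions $\Psi\barlocc\Phi$. By \cref{prop:entanglement monotones}, a complete family of monotones is provided by the functionals $E_f(\Psi) = \int_{\RR^+} f(\lambda_\Psi(t))\,dt$ for convex non-decreasing $f:\RR^+\to\RR^+$. Hence my plan is to reduce the claim to \cref{prop:entanglement monotones} by writing each of the entanglement entropies, up to an affine transformation of the defining integral, in the form $E_f$ for a suitable convex non-decreasing $f$, and then checking that the affine transformations respect the direction of the monotonicity inequality.

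First I would treat the entanglement entropy $H(\Psi) = \int_{\RR^+} \eta(\lambda_\Psi(t))\,dt$ with $\eta(t) = -t\log t$. The function $\eta$ is \emph{concave}, not convex, so $H$ is not directly of the form $E_f$; instead I would set $f(t) = t\log t = -\eta(t)$, which is convex and—crucially for \cref{prop:entanglement monotones}—non-decreasing on $\RR^+$ (after the harmless adjustment near $0$, where one uses $f(0)=0$ and the standard convention $0\log 0 = 0$). Then $E_f(\Psi) = -H(\Psi)$, and the monotone property $\Psi\barlocc\Phi \implies E_f(\Psi)\ge E_f(\Phi)$ translates directly into $H(\Psi)\le H(\Phi)$. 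One subtlety worth a sentence: since majorization $\lambda_\Psi\preceq\lambda_\Phi$ preserves total mass ($\int\lambda_\Psi = \int\lambda_\Phi = 1$), the monotonicity is insensitive to adding an affine term $at+b$ to $f$, so I can freely normalize $f$ to be non-negative and non-decreasing without affecting the conclusion.

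For the $\alpha$-Renyi entropies $H_\alpha(\Psi) = \frac{1}{1-\alpha}\log \int_{\RR^+}\lambda_\Psi(t)^\alpha\,dt$, I would argue on the underlying integral $G_\alpha(\Psi) := \int_{\RR^+}\lambda_\Psi(t)^\alpha\,dt$ and then apply the logarithm and the prefactor $\frac{1}{1-\alpha}$. Here the key observation is a sign bookkeeping split into two regimes. For $\alpha>1$, the map $t\mapsto t^\alpha$ is convex and non-decreasing, so $G_\alpha = E_f$ with $f(t)=t^\alpha$, giving $G_\alpha(\Psi)\ge G_\alpha(\Phi)$; since $\frac{1}{1-\alpha}<0$ and $\log$ is increasing, this flips to $H_\alpha(\Psi)\le H_\alpha(\Phi)$. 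For $0<\alpha<1$, the map $t\mapsto t^\alpha$ is concave, so I would instead use $f(t)=-t^\alpha$ (convex, and non-decreasing since $-t^\alpha$ increases for $\alpha\in(0,1)$ only after a sign flip—here I must be careful, as $-t^\alpha$ is in fact \emph{decreasing}). The hard part will be precisely this regime: $t\mapsto t^\alpha$ for $\alpha\in(0,1)$ is concave and increasing, so neither it nor its negative is simultaneously convex and non-decreasing, and \cref{prop:entanglement monotones} as stated cannot be invoked verbatim.

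To surmount this, I would invoke the full strength of \cref{thm:majorization}, on which \cref{prop:entanglement monotones} rests: approximate LOCC is equivalent to majorization of the Schmidt scales $\lambda_\Psi\preceq\lambda_\Phi$, and part (b) of \cref{thm:majorization} characterizes majorization via $\tau(g(\rho_\psi))\ge\tau(g(\rho_\phi))$ for \emph{all} continuous non-decreasing convex $g$, which by \cref{cor:equal trace functions} equals $\int g(\lambda_\Psi)\ge\int g(\lambda_\Phi)$. The clean route for the $\alpha$-Renyi case is therefore to observe that majorization is equivalent to its reverse statement for concave functions: $\lambda_\Psi\preceq\lambda_\Phi$ implies $\int h(\lambda_\Psi)\le\int h(\lambda_\Phi)$ for every concave $h$ with the appropriate monotonicity, which follows from the convex case applied to $g=-h$ together with the conservation of total mass (absorbing the linear defect). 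Applying this with $h(t)=t^\alpha$, $\alpha\in(0,1)$, yields $G_\alpha(\Psi)\le G_\alpha(\Phi)$; then the negative prefactor $\frac{1}{1-\alpha}>0$ and monotonicity of $\log$ give $H_\alpha(\Psi)\le H_\alpha(\Phi)$, completing the argument. I expect the only genuine care needed is this convex-versus-concave sign tracking across the two $\alpha$-regimes, with everything else following mechanically from \cref{thm:nielsens thm2}, \cref{thm:majorization}, and \cref{prop:entanglement monotones}.
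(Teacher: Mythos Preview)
Your overall strategy (Nielsen $\Rightarrow$ majorization $\Rightarrow$ convexity inequalities) is the right one, but two genuine errors derail the argument.

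First, all of your conclusions land on the wrong side: you derive $H(\Psi)\le H(\Phi)$ and $H_\alpha(\Psi)\le H_\alpha(\Phi)$ whenever $\Psi\barlocc\Phi$. By the paper's definition an entanglement monotone satisfies $E(\Psi)\ge E(\Phi)$, and your own opening sentence correctly asks for ``do not increase''. A one-line sanity check exposes the sign error: take $\Psi$ maximally entangled on $\CC^2\otimes\CC^2$ and $\Phi$ a product state; then $\Psi\barlocc\Phi$ while $H(\Psi)=\log 2>0=H(\Phi)$. The correct Hardy--Littlewood--P\'olya direction is that $\lambda_\Psi\preceq\lambda_\Phi$ (meaning $\int_0^t\lambda_\Psi\le\int_0^t\lambda_\Phi$) gives $\int g(\lambda_\Psi)\le\int g(\lambda_\Phi)$ for \emph{convex} $g$ and the reverse inequality for \emph{concave} $g$. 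Hence $H=E_\eta$ with $\eta$ concave yields $H(\Psi)\ge H(\Phi)$; for $\alpha>1$ one gets $G_\alpha(\Psi)\le G_\alpha(\Phi)$, and the negative prefactor $\tfrac{1}{1-\alpha}$ flips this to $H_\alpha(\Psi)\ge H_\alpha(\Phi)$; for $\alpha\in(0,1)$ the concave power gives $G_\alpha(\Psi)\ge G_\alpha(\Phi)$ directly.

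Second, your claim that $f(t)=t\log t$ can be made non-decreasing on $\RR^+$ by adding an affine term is false: $(t\log t+at+b)'=\log t+1+a\to-\infty$ as $t\to0^+$, so no finite $a$ works. The same obstruction hits $-t^\alpha$ for $\alpha\in(0,1)$, whose derivative $-\alpha t^{\alpha-1}$ also diverges at $0$. Thus \cref{prop:entanglement monotones} as stated (requiring $f$ convex \emph{and} non-decreasing) does not literally cover the entropy or the $\alpha<1$ case via your route. You need an additional step: either observe that the non-decreasing hypothesis can be dropped for convex $f$ with $f(0)=0$ once total mass is conserved, or approximate by genuinely non-decreasing convex functions such as $f_\eps(t)=t\log(t/\eps)\,1_{[\eps,\infty)}(t)$ and pass to the limit using $\int\lambda_\Psi=\int\lambda_\Phi=1$ to cancel the diverging linear part.
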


\subsection{LOCC and the type classification of factors}\label{sec:locc-types}

In this subsection, we show that the minimal type of semifinite bipartite systems is in one-to-one correspondence with certain operational properties in entanglement theory.

Let us fix a bipartite system $(\M_A,\M_B,\H)$. We make the following definitions:
\begin{itemize}
    \item A unit vector $\Phi$ in $\K_A\ox\K_B$ can be \emph{distilled} from a unit vector $\Psi\in\H$ if there is a unit vector $\tilde\Psi\in\H$ such that
        \begin{equation*}
            \Psi\ox \ket0\ket0 \barlocc \tilde\Psi \ox\Phi,
        \end{equation*}
        for some unit vectors $\ket0\in \K_{A/B}$.
        
        \item A unit vector $\Psi\in \H$ is \emph{maximally entangled} if
        \begin{equation*}\label{eq:maximally entangled}
            \Psi\barlocc\Phi \qquad\forall \Phi,
        \end{equation*}
        where $\Phi$ ranges over unit vectors in $\H$.
        
        \item 
            The \emph{one-shot entanglement} of a unit vector $\Psi$ is the maximal log-dimension at which a (finite-dimensional) maximally entangled state can be distilled from $\Psi$.
            I.e. the one-shot entanglement is $\log_2n$, where $n$ is the largest integer, such that $\Phi_n$ can be distilled from $\Psi$, where $\Phi_n = n^{-1/2}\sum_{j=1}^n\ket j\ket j \in \CC^n\ox\CC^n$.
    % \item A unit vector $\Psi\in\H$ is \emph{LOCC-embezzling}, if 
    % \begin{equation*}
    %     \Psi\ox\Phi_1 \barlocc \Psi\ox\Phi_2
    % \end{equation*}
    % for all $n\in\NN$ and all $\Phi \in \CC^n\ox\CC^n$.

    \item \emph{LOCC trivializes}, if $\Psi\barlocc\Phi$ for all pairs of unit vectors, or, equivalently, if all unit vectors are maximally entangled.
\end{itemize}

If the system is finite-dimensional, our definition of the one-shot entanglement reduces to the \emph{min-entanglement entropy} $E_\oo(\Psi)$, which is defined as the $-\log \lambda_{\max}(\Psi)$, where $\lambda_{\max}(\Psi)$ denotes the largest Schmidt coefficient of $\Psi$.\footnote{This is a direct consequence of Nielsen's theorem.}
The main Theorem of this subsection is the following:

\begin{theorem}\label{thm:semifinite classification}
    Let $(\M_A,\M_B,\H)$ be a factorial bipartite system.
    Then the operational entanglement properties defined above only depend on the minimal type and are given by:
    \\[-14pt]

    \begin{equation}
        \setlength{\tabcolsep}{7pt}
        \renewcommand\arraystretch{1.2}
        \begin{tabular}{@{\,} l cc cc ccc @{}}
            \toprule
            operational property/ minimal type
            & $\I_n$       & \ $\I_\infty$   & \ $\II_1$    & \ $\II_\infty$ & $\III$  \\ 
            \midrule
            one-shot entanglement           & \!\!\! $\le$$\log_2n$\!    & \!$<$$\infty$    & $\oo$&$\oo$ &$\oo$                                \\
            maximally entangled state &\yes & \no & \yes & \no & \yes\\
            LOCC trivializes & \no&\no     &\no&\no      & \yes        \\ 
            \bottomrule
        \end{tabular}
    \end{equation}
    \null
\end{theorem}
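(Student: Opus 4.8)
The plan is to route everything through Nielsen's theorem (\cref{thm:nielsens thm2}) and the stochastic-LOCC criterion (\cref{cor:factor slocc}), which turn all three rows of the table into purely order-theoretic statements about the marginal states. Concretely, $\Psi\barlocc\Phi$ holds iff $\psi_A\preceq\phi_A$ (majorization of marginals, i.e.\ of Schmidt scales $\lambda_\Psi\preceq\lambda_\Phi$ in the semifinite case), distilling $\Phi_m$ is an approximate-LOCC transition in the amplified system $(\M_A\barox\B(\CC^m),\M_B\barox\B(\CC^m),\H\ox\CC^m\ox\CC^m)$, and ``maximally entangled'' / ``LOCC trivializes'' are the existence of a $\preceq$-greatest element and the triviality of $\preceq$, respectively. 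Since the minimal type is insensitive to swapping $A$ and $B$, and the Schmidt scale of $\Psi$ is the same for both parties (\cref{sec:schmidt spectrum}), I would fix coupled traces $(\tau_A,\tau_B)$ with $\M_A$ of minimal type, so that $\tau_A(1)\le\tau_B(1)$, and carry out the whole analysis in terms of the achievable Schmidt scales.

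The key structural input for types $\I$ and $\II$ is a description of which scales occur: by \cref{lem:rank condition for purification} together with the realizability of an arbitrary spectral scale as that of a normal state (\cref{sec:spectral scales}, \eqref{eq:operator with given spectral scale}), the functions $\lambda_\Psi$ arising from unit vectors $\Psi\in\H$ are exactly the non-increasing right-continuous probability densities on $(0,\oo)$ whose support has measure $\le s:=\tau_A(1)$. Each entry is then read off. Maximally entangled states: the flat density $s^{-1}\mathbf 1_{(0,s)}$ is the $\preceq$-greatest admissible scale and is itself achievable iff $s<\oo$, giving $\Phi_n$ in type $\I_n$ and the trace vector in type $\II_1$, while for $\I_\infty$ and $\II_\infty$ ($s=\oo$) no greatest scale exists. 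One-shot entanglement: distilling $\Phi_m$ amounts to solving $\lambda_\Psi\succeq m^{-1}\lambda_{\tilde\Psi}(\,\cdot\,/m)$ for some achievable $\lambda_{\tilde\Psi}$; in type $\I$ the marginal densities are bounded by $1$, forcing $\lambda_{\max}(\psi_A)\le 1/m$ and hence $m\le 1/\lambda_{\max}(\psi_A)$, which is $\le n$ in $\I_n$ and finite (though unbounded over states) in $\I_\infty$, whereas in type $\II$ the reshaping $\lambda_{\tilde\Psi}(t)=m\,\lambda_\Psi(mt)$ is admissible (its support has measure $r(\Psi)/m\le s$) and yields exact equality, so one-shot entanglement is infinite for every state. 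Non-triviality of LOCC in types $\I$ and $\II$ is witnessed by any pair of $\preceq$-incomparable scales, e.g.\ a flat one versus a strictly decreasing one.

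The type $\III$ column is the crux and the part I expect to be hardest. Here there is no trace, so I would argue directly that $\preceq$ collapses: for every normal state $\phi$ on a type $\III$ factor, $\overline{\conv}\{u\phi u^*:u\in\U(\M)\}=\nstates(\M)$. Granting this, \cref{thm:nielsens thm2} gives $\Psi\barlocc\Phi$ for all pairs, so LOCC trivializes, every state is maximally entangled, and (applying the same fact in the type $\III$ amplified system) $\Phi_m$ is distillable for all $m$; the SLOCC half is already covered by \cref{cor:factor slocc}. To prove the convex-hull identity I would work with the Haagerup--Størmer spectral states and the flow of weights $\theta:\RR\acts\A$ (\cref{thm:distance spectral states}), exploiting two features unavailable in the semifinite setting: first, the isomorphism $\M\cong\M\barox\B(\K)$, under which the amplified state $\phi\ox\omega$ is a genuine convex combination of unitary conjugates of $\phi$ while its spectral state is a flow-smearing of $\hat\phi$ (the convolution formula \cref{prop:convolution formula,cor:convolution formula projection}); and second, that all nonzero projections are equivalent, which lets compression shift spectral states in the opposite direction along $\theta$. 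Since the extreme points of the $\theta$-regular state space $\states_*^\theta(\A)$ form a single flow orbit (via the bijection $\omega\mapsto\chi_\omega$ of \cref{lem:HS sec6}), combining these shifts with ordinary mixing should exhaust a norm-dense set of spectral states, which by surjectivity of $\psi\mapsto\hat\psi$ together with the orbit-distance formula (\cref{thm:distance spectral states}) yields every normal state.

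The main obstacle is precisely this last identity. Unlike the semifinite case, where mixing only flattens the spectral scale and $\preceq$ is a genuinely non-trivial majorization order, in type $\III$ one must show that convex combinations of the unitary orbit can also \emph{sharpen} the spectral state; I expect the delicate point to be upgrading the flow-orbit and compression shifts into a clean density statement in $\states_*^\theta(\A)$ covering the properly ergodic (type $\III_0$) and periodic (type $\III_\lambda$) flows simultaneously. Once that is in hand the table follows uniformly, and the only remaining bookkeeping is to confirm that all three properties depend on $(\M_A,\M_B)$ solely through the minimal type — immediate from the $A\leftrightarrow B$ symmetry of the Schmidt scale and the fact that $\M_A$ and $\M_B$ share the coarse type (and subtype, in type $\III$).
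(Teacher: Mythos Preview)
Your semifinite analysis is correct and essentially identical to the paper's: both route through \cref{thm:nielsens thm2}, characterize achievable Schmidt scales via \cref{lem:rank condition for purification} and \eqref{eq:operator with given spectral scale}, and read off each row. Your type~$\II$ one-shot argument with the rescaled scale $\lambda_{\tilde\Psi}(t)=m\lambda_\Psi(mt)$ is exactly the paper's \cref{lem:one-shot-ent}; your maximally-entangled analysis via the flat scale $s^{-1}\mathbf{1}_{(0,s)}$ matches the paper's reduction to existence of a maximally mixed state. For type~$\I$ finiteness you give a direct eigenvalue bound $m\le 1/\lambda_{\max}(\psi_A)$, while the paper simply cites \cite{keyl_infinitely_2003}; your argument is cleaner and self-contained.

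The genuine divergence is the type~$\III$ convex-hull identity $\overline{\conv}\{u\phi u^*:u\in\U(\M)\}=\nstates(\M)$, which you correctly identify as the crux. The paper does not prove this: it is quoted directly from the literature (\cite[Lem.~4.3]{hiai_closed_1991} or \cite[Lem.~9.3]{haagerup_equivalence_1990}) in \cref{lem:locc trivial in type III}, after which the entire type~$\III$ column is immediate. Your proposed route via the convolution formula \cref{prop:convolution formula}, compression shifts, and the orbit structure of extreme points of $\states_*^\theta(\A)$ is in the spirit of the Haagerup--St\o rmer machinery, but it is considerably more involved than necessary and, as you note, leaves the density step in $\states_*^\theta(\A)$ for the properly ergodic flow unresolved. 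The actual proof in \cite{haagerup_equivalence_1990} of their Lemma~9.3 proceeds differently: it shows that for any $\psi,\phi$ one can find finite convex combinations $\sum p_i u_i\phi u_i^*$ whose spectral state approximates $\hat\psi$, using the equivalence of all nonzero projections together with the local structure of the flow rather than a global orbit-density argument. So your sketch is not wrong in spirit, but you are attempting to reprove a standard lemma that the paper (and most treatments) take off the shelf; once that citation is in hand, the ``main obstacle'' evaporates.
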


% \begin{theorem}\label{thm:semifinite classification}
%     Let $(\M_A,\M_B,\H)$ be a bipartite system. Then $\M_A$ and $\M_B$ are of 
%     \begin{itemize}
%         \item type $\I$, if no unit vector has infinite one-shot entanglement;
%         \item type $\II$, if all unit vectors have infinite one-shot entanglement, but not all unit vectors are equally entangled;
%         \item type $\III$, if LOCC trivializes but $\M_{A/B}\ne \CC$.
%     \end{itemize}
%     A maximally entangled unit vector does not exist if and only if the minimal type of the bipartite system is $\I_n$ with $n<\oo$, $\II_1$, or $\III$. 
% \end{theorem}

\begin{lemma}\label{lem:locc trivial in type III}
    Let $(\M_A,\M_B,\H)$ be a bipartite system of type $\III$ factors.
    If $\Psi,\Phi\in\H$ are unit vectors, then $\Psi$ can be transformed to $\Phi$ with approximate LOCC.
\end{lemma}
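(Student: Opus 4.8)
The plan is to reduce the statement to a majorization fact about type $\III$ factors via Nielsen's theorem and then to establish that fact directly from Murray--von Neumann comparison theory. By \cref{thm:nielsens thm2}, for a factorial bipartite system one has $\Psi\barlocc\Phi$ if and only if $\psi_{A}\preceq\phi_A$, where $\preceq$ denotes majorization on the factor $\M_A$, i.e.\ $\psi_A\in\overline{\conv}\{u\phi_A u^* : u\in\U(\M_A)\}$. Since $\M_A$ is a type $\III$ factor and $\psi_A,\phi_A$ are normal states on it, it suffices to prove the purely algebraic claim: for any normal states $\psi,\phi$ on a type $\III$ factor $\M$ one has $\psi\preceq\phi$. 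In fact I would prove the stronger statement that the closed convex hull of the unitary orbit of any normal state exhausts the whole normal state space,
\begin{equation*}
    \overline{\conv}\,\{u\phi u^* : u\in\U(\M)\} = \nstates(\M).
\end{equation*}

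The key step is to compute the support function of the left-hand set, namely to show that for every self-adjoint $a\in\M$,
\begin{equation*}
    \sup_{u\in\U(\M)}\,\phi(u^*au) = \max\spec(a).
\end{equation*}
Fix $\delta>0$ and set $M=\max\spec(a)$. The spectral projection $e=1_{(M-\delta,M]}(a)$ is nonzero, and because $\M$ is of type $\III$ every nonzero projection is Murray--von Neumann equivalent to $1$; hence there is an isometry $w\in\M$ with $w^*w=1$ and $ww^*=e$. From $w=ew$ one gets $w^*aw=w^*(eae)w$, and since $(M-\delta)e\le eae\le Me$ this yields $(M-\delta)1\le w^*aw\le M1$, so the state $\chi:=w\phi w^*$ satisfies $\chi(a)=\phi(w^*aw)\ge M-\delta$. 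As $w^*w=1\ge\supp(\phi)$, \cref{cor:isometrically acting pi's} shows that $w\phi w^*$ is approximately unitarily equivalent to $\phi$, so $\chi$ lies in the closure of $\{u\phi u^*\}$ and hence in the convex hull above. Letting $\delta\to0$ gives the displayed supremum, the reverse inequality $\phi(u^*au)\le M$ being automatic.

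Since $\sup_{\chi}\chi(a)=\max\spec(a)$ also holds for the supremum over all normal states $\chi$, the two norm-closed convex subsets $\overline{\conv}\{u\phi u^*\}$ and $\nstates(\M)$ of $\M_*$ have the same support function on every self-adjoint $a\in\M$. Because $\M$ is the norm-dual of $\M_*$, a Hahn--Banach separation argument then forces the two sets to coincide, which proves the claim and hence, via the reduction above, the lemma.

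I expect the main obstacle to be the careful justification of the supremum computation $\sup_u\phi(u^*au)=\max\spec(a)$: producing the isometry $w$ from the type $\III$ comparison theory and then invoking \cref{cor:isometrically acting pi's} to replace it by genuine unitaries acting on $\phi$. Once this is in hand, the Hahn--Banach separation and the reduction through Nielsen's theorem are routine.
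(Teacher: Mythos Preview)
Your proof is correct and follows the same overall strategy as the paper: reduce to majorization via \cref{thm:nielsens thm2}, then show that in a type $\III$ factor every normal state majorizes every other, i.e., $\overline{\conv}\{u\phi u^*:u\in\U(\M)\}=\nstates(\M)$. The paper simply cites this last fact from \cite[Lem.~4.3]{hiai_closed_1991} (or \cite[Lem.~9.3]{haagerup_equivalence_1990}), whereas you prove it from scratch.

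Your argument for the cited fact is genuinely different from what those references do (they use the spectral-scale/spectral-state machinery). Your route---compute the support function $\sup_u\phi(u^*au)=\max\spec(a)$ by pushing $\phi$ into the top spectral band via an isometry (available because all nonzero projections are equivalent to $1$ in type $\III$), upgrade the isometry to unitaries using \cref{cor:isometrically acting pi's}, then invoke Hahn--Banach separation---is short, elementary, and entirely self-contained within the paper's toolkit. The trade-off is that the cited lemmas give somewhat more (e.g., information about the structure of the closed convex hull in general factors), but for the present purpose your direct argument is cleaner and avoids an external dependency.
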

\begin{proof}
    By \cite[Lem.~4.3]{hiai_closed_1991} (or \cite[Lem.~9.3]{haagerup_equivalence_1990}), we have $\psi_A\in \overline{\conv}\{u \phi_A u^* : u\in\U(\M)\}$ for all pairs $\psi_A,\phi_A$ of normal states on $\M_A$.
    Thus, \cref{thm:nielsens thm2} implies the claim.
\end{proof}

\begin{lemma}\label{lem:one-shot-ent}
    Let $(\M_A,\M_B,\H)$ be a factorial bipartite system. If $\M_A$ and $\M_B$ are not of type $\I$, then every unit vector $\Psi\in\H$ has infinite one-shot distillable entanglement.
    If $\M_A$ and $\M_B$ are of type $\I$, no unit vector $\Psi\in\H$ has infinite one-shot entanglement.
\end{lemma}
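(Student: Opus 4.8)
The plan is to treat the two directions separately, reducing everything to the Schmidt-scale formulation of LOCC transitions established in \cref{thm:nielsens thm2} and the SLOCC/distillation characterization via the Murray--von Neumann ordering of supports. First I would fix coupled \nsf traces $(\tau_A,\tau_B)$ in the semifinite case (in the type $\III$ case distillation is immediate from \cref{lem:locc trivial in type III}, so the interesting non-type-$\I$ case is type $\II$ and type $\III$). The core observation is that distilling a maximally entangled state $\Phi_n=n^{-1/2}\sum_{j=1}^n\ket j\ket j$ on $\CC^n\ox\CC^n$ amounts, after tensoring with $\ket0\ket0$ and passing to the Schmidt scale on the amplified system $\M_A\barox\B(\CC^n)$, to finding a vector whose reduced state has a flat spectrum of $n$ equal eigenvalues that is majorized by $\lambda_\Psi$. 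Concretely, $\Phi_n$ can be distilled from $\Psi$ exactly when there exists a unit vector $\tilde\Psi$ with $\lambda_\Psi \otimes \delta_0 \succeq \lambda_{\tilde\Psi}\otimes \lambda_{\Phi_n}$, where $\lambda_{\Phi_n}$ is the flat distribution taking value $1/n$ on $[0,n)$.

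For the first assertion (non-type-$\I$ implies infinite one-shot entanglement), the key step is that when $\M_A$ is of type $\II$ or $\III$ the factor contains, for every $n$, a partition of any support projection into $n$ mutually equivalent subprojections (type $\III$ has all nonzero projections equivalent; type $\II$ admits projections of any trace in $[0,\tau(1)]$ by the family \eqref{eq:typeII family of projections}). I would use this to build, for each $n$, a vector $\Psi_n$ whose marginal has a flat spectral scale on $n$ equal blocks with support contained in $\supp(\psi_A)$; by \cref{thm:slocc} such $\Psi_n$ is reachable from $\Psi$ via approximate SLOCC, and by the majorization in \cref{thm:nielsens thm2} the maximally entangled $\Phi_n$ is distillable. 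Letting $n\to\oo$ gives infinite one-shot entanglement. The main obstacle here is handling the amplification and the bookkeeping of the Schmidt scale under tensoring with the ancilla $\CC^n$: one must verify that the flat scale of $\Phi_n$ combines with $\lambda_{\tilde\Psi}$ in the right way to be majorized by $\lambda_\Psi\otimes\delta_0$, which is exactly where the unbounded trace of $\M_A$ (type $\II_\oo$, $\II_1$ with enough room, or type $\III$) is needed to accommodate arbitrarily large flat blocks.

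For the second assertion (type $\I$ implies no vector has infinite one-shot entanglement), the situation is the familiar finite/countable matrix-algebra case: with $\M_A=\B(\H_A)$ the Schmidt scale of any $\Psi$ is the sequence of its Schmidt coefficients, which lies in $\ell^1$ and decreases to $0$. Distilling $\Phi_n$ requires, by the min-entanglement monotone $E_\oo$ (equivalently, by the majorization obstruction that the largest Schmidt coefficient cannot increase under LOCC), that $\lambda_{\max}(\Psi)\le 1/n$; since $\lambda_{\max}(\Psi)>0$ for any $\Psi$, this bounds $n$ and hence the one-shot entanglement is finite. I expect this direction to be routine given \cref{thm:nielsens thm2} and the remark identifying the one-shot entanglement with $-\log_2\lambda_{\max}$ in the finite-dimensional reductions; the only care needed is the $\I_\oo$ case, where one notes that even though arbitrarily large Schmidt rank is available, the $\ell^1$-normalization forces $\lambda_{\max}(\Psi)>0$, so the one-shot entanglement $-\log_2\lambda_{\max}(\Psi)$ remains finite for every individual $\Psi$.
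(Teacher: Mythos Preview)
Your type~$\I$ argument is correct and in fact more self-contained than the paper's (which simply cites \cite{keyl_infinitely_2003}): the observation that $\Psi\ox\ket0\ket0\barlocc\tilde\Psi\ox\Phi_n$ forces $\lambda_{\max}(\Psi)\le\lambda_{\max}(\tilde\Psi)/n\le 1/n$, hence $n\le 1/\lambda_{\max}(\Psi)<\oo$, is exactly what is needed. The type~$\III$ case is also fine, deferred to \cref{lem:locc trivial in type III} as in the paper.

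Your type~$\II$ argument has a genuine gap. The invocation of \cref{thm:slocc} is irrelevant: distillation is defined via approximate LOCC, and the fact that a flat-spectrum vector $\Psi_n$ satisfies $\supp(\psi_{n,A})\preceq\supp(\psi_A)$ (hence is SLOCC-reachable from $\Psi$) says nothing about whether $\Phi_n$ can be LOCC-distilled. You correctly state the criterion---find $\tilde\Psi$ so that $\lambda_\Psi$ and $\lambda_{\tilde\Psi\ox\Phi_n}$ stand in the right majorization relation---but the flat-blocks construction never verifies it. Moreover, the remark that ``unbounded trace'' and ``arbitrarily large flat blocks'' are where the non-type-$\I$ hypothesis enters is misleading: the argument must work uniformly in type~$\II_1$, where the trace is bounded.

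The paper's key idea is different: choose $\tilde\Psi$ so that the spectral scales match \emph{exactly}. Since $\lambda_{\tilde\Psi\ox\Phi_n}(t)=n^{-1}\lambda_{\tilde\Psi}(t/n)$, setting $\lambda_{\tilde\Psi}(t)=n\lambda_\Psi(nt)$ gives $\lambda_{\tilde\Psi\ox\Phi_n}=\lambda_\Psi$, making the majorization trivial. The non-type-$\I$ hypothesis enters only to guarantee that such a $\tilde\Psi$ exists in $\H$: the dilated scale has support of measure $n^{-1}\abs{\supp\lambda_\Psi}\le\abs{\supp\lambda_\Psi}\le\tau_A(1)$, so the corresponding state exists on $\M_A$ (via \eqref{eq:operator with given spectral scale}) and admits a purification in $\H$ by \cref{lem:rank condition for purification}. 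What is used is that type~$\II$ admits states with \emph{arbitrary} spectral scale of sufficiently small support---not that it accommodates large flat blocks.
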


\begin{proof}
    The type $\III$ case follows from \cref{lem:locc trivial in type III} and the fact that $\M_j \ox \B(\K_j)$ is of type $\III$ whenever $\M_j$ is, $j=A,B$.
    We consider the type $\II$ case.
    We set $\M_A=\M$, and pick coupled \nsf traces $(\tau,\tau')$ on $\M$ and $\M'$.
    Clearly, it is sufficient to show that a sequence of vectors with unbounded entanglement entropy can be distilled from any given vector $\Psi$.
    We let $\Phi_n$ denote the maximally entangled state in $\CC^n\ox\CC^n$.
    By \cref{thm:nielsens thm2}, we have to show that for each $n$, there exists a unit vector $\tilde\Psi$ such that
    \begin{equation*}
        \lambda_{\Psi\ox\ket0\ket0}=\lambda_\Psi \preceq \lambda_{\tilde\Psi\ox \Phi_n}.
    \end{equation*}
    Since the marginal state of $\Phi_n$ is the tracial state on $M_n(\CC)$, we have $\lambda_{\Phi_n}(t) = n^{-1} 1_{[0,n)}(t)$. 
    With this, it can readily be checked that
    \begin{equation*}
        \lambda_{\tilde\Psi\ox \Phi_n}(t) = \frac1n \lambda_{\tilde\Psi}\bigg(\frac tn\bigg).
    \end{equation*}
    We claim that, for each $n\in\NN$, there is a unit vector $\tilde\Psi$ with $\lambda_{\tilde\Psi}(t) = n \lambda_\Psi(nt)$.
    By the above, this will show the claim.
    Since $|\supp(n\lambda_\psi(n(\placeholder)))| \le |\supp(\lambda_\psi)|\le \tau(1)$, there is a normal state $\tilde\psi$ on $\M$ with spectral scale $\lambda_{\tilde\psi}= n \lambda_\Psi(nt)$ (see \cref{sec:spectral scales}).
    % Next, we claim that $\tilde\psi$ has a purification $\tilde\Psi$ in $\H$.
    The trace coupling implies $\tau(\supp(\tilde\psi)) \le \tau(\supp(\psi)) = \tau(\supp(\psi'))$, where $\psi'$ denotes the $\M'$-marginal of $\Psi$).
    Thus, \cref{lem:rank condition for purification} implies that there is a purification $\tilde\Psi\in\H$ of $\tilde\psi$. Since $\lambda_{\tilde\Psi}= \lambda_{\tilde\psi}$, this finishes the proof.

    Finiteness of the one-shot distillable entanglement for type $\I$ was shown in \cite{keyl_infinitely_2003}. 
    In fact, \cite{keyl_entanglement_2006} shows the same statement also for mixed states and allows for a larger class of maps than LOCC.
\end{proof}

The proof above also shows the following:

\begin{corollary}\label{cor:locc not trivial}
    If $(\M_A,\M_B,\H)$ is a factorial semifinite bipartite system with $\M_{A/B}\ne \CC$, there exist unit vectors $\Psi,\Phi$ such that $\Psi$ can be transformed to $\Phi$ with approximate LOCC but not vice versa.
\end{corollary}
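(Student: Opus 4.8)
The plan is to reduce the statement to a purely spectral one via Nielsen's theorem and then realize an explicit strictly comparable pair. By \cref{thm:nielsens thm2}, in a factorial semifinite bipartite system one has $\Psi\barlocc\Phi$ if and only if $\lambda_\Psi\preceq\lambda_\Phi$, i.e. $\int_0^t\lambda_\Psi(s)\,ds\le\int_0^t\lambda_\Phi(s)\,ds$ for all $t>0$. Hence it suffices to exhibit unit vectors $\Psi,\Phi\in\H$ whose Schmidt scales satisfy $\lambda_\Psi\preceq\lambda_\Phi$ but $\lambda_\Phi\not\preceq\lambda_\Psi$; for then $\Psi\barlocc\Phi$ while $\Phi$ cannot be transformed back to $\Psi$.

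The natural candidates are the "flat" scales, which are the marginals of maximally entangled states on corners of the algebra. Fix a coupled pair of \nsf traces $(\tau_A,\tau_B)$ and, for $s>0$, set $\ell_s:=s^{-1}1_{[0,s)}$, a non-increasing right-continuous function with $\int_0^\oo\ell_s=1$. An elementary computation gives $\int_0^t\ell_{2s}\le\int_0^t\ell_s$ for all $t>0$, with strict inequality at $t=s$ (where the left side equals $1/2$ and the right side equals $1$). Thus, if both $\ell_{2s}$ and $\ell_s$ are realized as Schmidt scales of genuine vectors $\Psi,\Phi\in\H$, then $\lambda_\Psi=\ell_{2s}\preceq\ell_s=\lambda_\Phi$ strictly, which is exactly what is required.

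Realizability is the crux and is where the two subtypes are handled. In the type $\I$ case $\H\cong\H_A\ox\H_B$ with $\M_A=\B(\H_A)\ox1$, $\M_B=1\ox\B(\H_B)$, and $\M_A\ne\CC\ne\M_B$ forces $\min(\dim\H_A,\dim\H_B)\ge2$; so a product vector (scale $\ell_1$) and a rank-$2$ maximally entangled vector (scale $\ell_2$) exist outright, giving the pair with $s=1$. In the type $\II$ case, pick $s>0$ small enough that $2s\le\min(\tau_A(1),\tau_B(1))$, which is possible since $\M_A$ is semifinite and not finite type $\I$. Type $\II$ factors contain projections of every trace in $[0,\tau_A(1)]$, so by \cref{eq:operator with given spectral scale} there are normal states $\psi_A,\phi_A$ on $\M_A$ with spectral scales $\ell_{2s}$ and $\ell_s$, supported on corners of trace $2s$ and $s$. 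Since $\tau_A(\supp\psi_A)=2s\le\tau_B(1)$ and $\tau_A(\supp\phi_A)=s\le\tau_B(1)$, \cref{lem:rank condition for purification} yields purifications $\Psi,\Phi\in\H$ with $\lambda_\Psi=\ell_{2s}$ and $\lambda_\Phi=\ell_s$.

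The main obstacle is precisely this purifiability bookkeeping in the type $\II$ case: the prescribed flat scales must fit within the "trace budget" determined by the coupling of $\tau_A$ and $\tau_B$, which is exactly the content controlled by \cref{lem:rank condition for purification}. An alternative that sidesteps the explicit flat states for type $\II$ is to reuse the compression construction from the proof of \cref{lem:one-shot-ent}: starting from any unit vector $\Psi\in\H$, that proof produces a purifiable $\tilde\Psi$ with $\lambda_{\tilde\Psi}(t)=2\,\lambda_\Psi(2t)$, a strictly more concentrated scale, so that $\lambda_\Psi\preceq\lambda_{\tilde\Psi}$ holds strictly and $\Psi\barlocc\tilde\Psi$ cannot be reversed. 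In either formulation, once realizability is established the majorization comparison is routine, and the conclusion follows from \cref{thm:nielsens thm2}.
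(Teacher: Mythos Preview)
Your proof is correct and essentially follows the paper's approach. The paper derives the corollary directly from the construction in the proof of \cref{lem:one-shot-ent}, which is precisely the ``compression'' alternative you describe at the end: given any $\Psi$, one produces $\tilde\Psi$ with $\lambda_{\tilde\Psi}(t)=n\,\lambda_\Psi(nt)$, so that $\lambda_\Psi\preceq\lambda_{\tilde\Psi}$ strictly and \cref{thm:nielsens thm2} yields the one-way approximate LOCC transition. Your primary argument via explicit flat scales $\ell_s,\ell_{2s}$ is a cleaner, more self-contained variant of the same idea, and your treatment of the type $\I$ case (product versus rank-$2$ maximally entangled) makes explicit what the paper leaves implicit.
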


\begin{proof}[Proof of \cref{thm:semifinite classification}]
    We show the three rows of the table separately.

    One-shot entanglement: For non-type $\I$ factors, the statement is shown in \cref{lem:one-shot-ent}.
    For systems with minimal type $\I_n$, the most entangled state that can be distilled is $\Phi_n$. Hence, the one-shot entanglement of every unit vector is $\le \log_2n$.

    Maximally entangled states: 
    In the type $\III$ case, every unit vector is maximally entangled since LOCC trivializes.
    We now consider the semifinite case. Without loss of generality, let $\M_A$ be the "smaller" of the two factors (see \cref{sec:coupling}) so that minimal type equals the type of $\M_A$.
    Thus, every normal state on $\M_A$ has a purification (see \cref{sec:coupling}). 
    Therefore, \cref{thm:nielsens thm2} implies that a unit vector $\Psi\in\H$ is maximally entangled if and only if its $\M_A$-marginal state $\psi_A$ is maximally mixed in the sense that $\psi_A \preceq \phi_A$ for all $\phi_A\in\nstates(\M_A)$.
    By \cref{thm:majorization}, a semifinite factor has a maximally mixed state if and only if it is finite, i.e., of type $\I_n$, $n\in\NN$, or of type $\II_1$.
    This shows the claim.

    LOCC trivialization: This is shown in \cref{lem:locc trivial in type III}.
\end{proof}

Instead of the triviality of LOCC, one can use LOCC-embezzlement to distinguish types $\I$ and $\II$ from type $\III$.
This will be discussed in \cref{sec:mbz} (see \cref{prop:locc-mbz}).

\section{Strong forms of infinite entanglement}\label{sec:strong-entanglement}

\localtableofcontents

\null

In this section, we study particularly strong entanglement phenomena. 
The results we obtain here allow us to distinguish different von Neumann algebraic bipartite (or multipartite) systems with operational entanglement properties, even when both systems are infinitely entangled. 

\Cref{sec:mbz} is based on the article \cite{van_luijk_embezzlement_2024} and heavily relies on the results of \cref{sec:catalytic states} (except for \cref{prop:locc-mbz}, which is taken from \cite{van_luijk_pure_2024}).
\Cref{sec:lu-transitive} generalizes results from \cite{van_luijk_multipartite_2025}, and \cref{sec:universal-mbz} generalizes results from \cite{van_luijk_embezzlement_2024}.

\subsection{Embezzlement of entanglement}\label{sec:mbz}

Embezzlement of entanglement is a phenomenon in entanglement theory that was discovered by van Dam and Hayden in \cite{van_dam_universal_2003}.
The idea of embezzlement is that an entangled resource state admits the extraction of arbitrary entangled bipartite states with local unitaries so that the resource system returns to its initial state.
More precisely, let $\Psi$ be the resource state and let us consider a finite-dimensional bipartite system in a state $\Phi_1$ with little entanglement, e.g., a product state $\Phi_1=\ket0\ket0$.
The task is then to prepare a more entangled state $\Phi_2$ with local unitaries $u_A$, $u_B$ on the joint system and without disturbing the resource state $\Psi$, i.e.,
\begin{equation}
    u_A u_B (\Psi\ox \Phi_1) = \Psi \ox \Phi_2.
\end{equation}
This appears paradoxical, since local unitaries cannot generate entanglement.
Indeed, if $\Psi$ has finite one-shot entanglement, embezzlement is impossible.
However, if the resource state has infinite one-shot entanglement, it is less clear why embezzlement should be impossible.
Indeed, we know from logic paradoxes like Hilbert's hotel \cite{gamow1988one} that the impossible can become possible in the presence of infinite resources.

Although embezzlement of entanglement is impossible with finite-dimensional resource systems, it is possible in an approximate form, where the resource state depends on the error.
Indeed, the discovery of \cite{van_dam_universal_2003} was that the sequence of bipartite states
\begin{equation*}
    \Psi_n = c_n \sum_{\alpha=1}^n \alpha^{-1/2} \, \ket\alpha\ket\alpha \in \CC^n\ox\CC^n,
\end{equation*}
where $c_n= (\sum_{\alpha=1}^n \alpha^{-1} )^{-1/2}$, admits approximate embezzlement of bipartite pure states of Schmidt rank $d$ in the sense that
\begin{equation*}
    \norm{u_Au_B (\Psi_n\ox \ket0\ket 0)-  \Psi_n\ox\Phi} \le \eps_{d,n},\qquad
    \eps_{d,n}=\left(\frac{2\log d}{\log n}\right)^{1/2},
\end{equation*}
for suitable local unitaries $u_A,u_B$.\footnote{This follows from their estimate $F(u_Au_B(\Psi_n\ox\ket0\ket0),\Psi_n\ox\Phi)^{1/2} \ge 1-\frac{\log(d)}{\log(n)}$ and the Fuchs-van de Graaf inequalities \eqref{eq:FvdG}.}
Thus, if $\Phi_1,\Phi_2$ are bipartite pure states with Schmidt rank $d$, then
\begin{equation*}
    \norm{u_Au_B(\Psi_n\ox\Phi_1)-\Psi_n\ox\Phi_2} \le 2\eps_{d,n}
\end{equation*}
for suitable local unitaries.
A sequence of entangled states $(\Psi_n)$ with the above properties for some $\eps_{d,n}$ with $\lim_n\eps_{d,n}\to0$ for all $d\in\NN$ is known as a \emph{(universal) embezzling family} \cite{leung_characteristics_2014,leung_coherent_2013}.
In finite-dimensional systems, it is impossible to perform any sort of embezzlement with an error of $\eps=0$.
In fact, the same holds in infinite-dimensional bipartite systems $\H=\H_A\ox\H_B$ in the traditional, Hilbert space-based, framework of describing quantum systems \cite{cleve_perfect_2017}.
Exact embezzlement of specific entangled states $\Phi$ has been shown to be possible with resource systems described by C*-algebras and local operations described by *-automorphisms \cite{cleve_perfect_2017,liu_non-local_2022,cleve_constant_2022}.%
\footnote{In \cite{van_luijk_embezzlement_2024}, it is shown that separability of the Hilbert space renders exact embezzlement of all bipartite pure states impossible. 
Moreover, it is shown that exact embezzlement of all bipartite pure states is possible in certain non-separable settings.}

We consider here embezzlement of entanglement with resource systems described by von Neumann algebras.
This approach was suggested in unpublished notes of Haagerup, Scholz, and Werner \cite{mbz_draft}.
As above, we restrict our attention to the bipartite case for now.

\begin{definition}
    Let $(\M_A,\M_B,\H)$ be a bipartite system. A vector $\Psi\in\H$ is \emph{embezzling} if
    \begin{equation}
        \Psi \ox \Phi_1 \lu \Psi\ox \Phi_2
    \end{equation}
    for all Hilbert spaces $\K_A,\K_B$ and all unit vectors $\Phi_1,\Phi_2\in \K_A\ox\K_B$, where LU is considered relative to the bipartite system $(\M_A\ox\B(\K_A)\ox1,\M_B\ox1\ox\B(\K_B),\H\ox\K_A\ox\K_B)$.
\end{definition}

As a consequence of \cref{cor:local unitary equivalence}, we observe that a bipartite pure state $\Psi$ is embezzling if and only if one, hence both, of its marginal states are universally catalytic (see \cref{sec:catalytic states}).

\begin{proposition}\label{prop:mbz vs catalysis}
    Let $(\M_A,\M_B,\H)$ be a bipartite system and let $\Psi\in\H$. The following are equivalent:
    \begin{enumerate}[(a)]
        \item\label{it:mbz vs catalysis1} $\Psi$ is embezzling;
        \item\label{it:mbz vs catalysis2} $\psi_A$ is a universally catalytic state on $\M_A$;
        \item\label{it:mbz vs catalysis3} $\psi_B$ is a universally catalytic state on $\M_B$.
    \end{enumerate}
\end{proposition}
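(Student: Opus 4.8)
The plan is to reduce the entire statement to \cref{cor:local unitary equivalence}, which turns approximate local-unitary equivalence of bipartite vectors into approximate unitary equivalence of their marginals, and then to recognize the latter as the defining property of universal catalyticity. Fix Hilbert spaces $\K_A,\K_B$ and unit vectors $\Phi_1,\Phi_2\in\K_A\ox\K_B$, and write $\rho_i^A=\tr_{\K_B}\kettbra{\Phi_i}$ for the reduced state that $\Phi_i$ induces on $\B(\K_A)$. A one-line computation, $\ip{\Psi\ox\Phi_i}{(a\ox b\ox1)(\Psi\ox\Phi_i)}=\psi_A(a)\,\rho_i^A(b)$, shows that the $A$-marginal of $\Psi\ox\Phi_i$ on the enlarged algebra $\M_A\ox\B(\K_A)$ is the product state $\psi_A\ox\rho_i^A$. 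Applying \cref{cor:local unitary equivalence} to the bipartite system $(\M_A\ox\B(\K_A)\ox1,\M_B\ox1\ox\B(\K_B),\H\ox\K_A\ox\K_B)$ then gives
\begin{equation*}
    \Psi\ox\Phi_1 \lu \Psi\ox\Phi_2 \quad\iff\quad \psi_A\ox\rho_1^A\sim\psi_A\ox\rho_2^A,
\end{equation*}
so that $\Psi$ is embezzling precisely when $\psi_A\ox\rho_1^A\sim\psi_A\ox\rho_2^A$ for all choices of $\K_A,\K_B,\Phi_1,\Phi_2$.

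For \ref{it:mbz vs catalysis1}$\Rightarrow$\ref{it:mbz vs catalysis2}, I would take an arbitrary pair of normal states $\omega_1,\omega_2$ on a type $\I_\oo$ factor $\B(\K)$ and choose purifying unit vectors $\Phi_1,\Phi_2\in\K\ox\K$ with $\rho_i^A=\omega_i$; these exist since $\B(\K)$ is standardly represented on $\K\ox\K$ (cf.\ \cref{prop:std bipartite system}). Taking $\K_A=\K_B=\K$, the displayed equivalence together with the embezzling hypothesis yields $\psi_A\ox\omega_1\sim\psi_A\ox\omega_2$, which is exactly the condition of \cref{def:universal catalyst}. Hence $\psi_A$ is universally catalytic.

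For \ref{it:mbz vs catalysis2}$\Rightarrow$\ref{it:mbz vs catalysis1}, the reduced states $\rho_i^A$ are normal states on the type $\I$ factor $\B(\K_A)$, which may well be \emph{finite}-dimensional. This is the one genuinely delicate point: universal catalyticity is defined only through type $\I_\oo$ factors, whereas the embezzling property must also cover finite target systems, so the definition cannot be invoked verbatim. I would resolve this with \cref{prop:transitions on AFD factors}: universal catalyticity of $\psi_A$ means $\kappa(\psi_A)=0$ (immediate from \cref{def:universal catalyst} and \eqref{eq:def kappa}), and since every type $\I$ factor $\B(\K_A)$ is AFD, \cref{prop:transitions on AFD factors} gives
\begin{equation*}
    \inf_{u\in\U(\B(\K_A))}\ \norm{u(\psi_A\ox\rho_1^A)u^*-\psi_A\ox\rho_2^A}\le\kappa(\psi_A)=0 .
\end{equation*}
As $\U(\B(\K_A))\subset\U(\M_A\barox\B(\K_A))$, the infimum defining $\sim$ is also zero, so $\psi_A\ox\rho_1^A\sim\psi_A\ox\rho_2^A$ for every $\K_A,\K_B,\Phi_1,\Phi_2$; by the displayed equivalence, $\Psi$ is embezzling.

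Finally, the definition of an embezzling vector is symmetric under exchanging the two parties (together with $\K_A\leftrightarrow\K_B$), so repeating the two arguments above with the roles of $A$ and $B$ interchanged proves \ref{it:mbz vs catalysis1}$\iff$\ref{it:mbz vs catalysis3}, and therefore \ref{it:mbz vs catalysis2}$\iff$\ref{it:mbz vs catalysis3} as well. I expect the finite-dimensional target case treated in the third paragraph to be the only real obstacle; once \cref{prop:transitions on AFD factors} is in hand, the remainder is a mechanical translation through \cref{cor:local unitary equivalence}.
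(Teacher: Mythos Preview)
Your argument is correct and matches the paper's approach, which records the proposition as an immediate consequence of \cref{cor:local unitary equivalence} without further detail. The finite-dimensional target issue you single out is glossed over in the paper; your fix via \cref{prop:transitions on AFD factors} works (note the infimum there should read $u\in\U(\M\barox\N)$, a typo in the paper), though the lighter \cref{lem:distance of unitary orbits and amplification}---tensoring by a pure state on $\ell^2$ to reach the $\I_\oo$ setting and pulling the unitary-orbit distance back---would also do the job.
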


This observation, although based on a more pedestrian argument, was already made in \cite{mbz_draft}.
On a factor $\M$, all universally catalytic states are approximately unitarily equivalent (see \cref{cor:catalysts are unitarily equivalent}). This implies:

\begin{corollary}\label{cor:lu equivalence of mbz states}
    Let $(\M_A,\M_B,\H)$ be a bipartite system.
    If $\Psi_1\in\H$ is embezzling and $\Psi_2\lu\Psi_1$, then $\Psi_2$ is embezzling as well.
    If $\M_A,\M_B$ factors on $\H$, then the converse holds:
    If $\Psi_1,\Psi_2$ are embezzling vectors with the same norm, then $\Psi_1\lu\Psi_2$.
\end{corollary}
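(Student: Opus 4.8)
The plan is to reduce the statement about embezzling \emph{vectors} to the already-established statement about universally catalytic \emph{states} via \Cref{prop:mbz vs catalysis}, and then use \Cref{cor:catalysts are unitarily equivalent} together with \Cref{cor:local unitary equivalence} to transfer the equivalence back to the level of vectors. The first assertion is the easier half: if $\Psi_1$ is embezzling and $\Psi_2 \lu \Psi_1$, I would note that LU equivalence of the two vectors implies, by \Cref{cor:local unitary equivalence}, that their $A$-marginals $\psi_{1,A}$ and $\psi_{2,A}$ are approximately unitarily equivalent, i.e.\ $\psi_{1,A}\sim\psi_{2,A}$. Since $\Psi_1$ is embezzling, \Cref{prop:mbz vs catalysis} gives that $\psi_{1,A}$ is universally catalytic; and universal catalyticity is manifestly preserved under approximate unitary equivalence (this is the remark made just before \Cref{cor:catalysts are unitarily equivalent}, following immediately from the definition \eqref{eq:universal catalyst} and the triangle inequality). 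Hence $\psi_{2,A}$ is universally catalytic, and \Cref{prop:mbz vs catalysis} again shows $\Psi_2$ is embezzling.

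For the converse, I would assume $\M_A,\M_B$ are factors and that $\Psi_1,\Psi_2$ are both embezzling with $\norm{\Psi_1}=\norm{\Psi_2}$. By \Cref{prop:mbz vs catalysis}, the $A$-marginals $\psi_{1,A}$ and $\psi_{2,A}$ are both universally catalytic normal states on the factor $\M_A$. \Cref{cor:catalysts are unitarily equivalent} then immediately yields $\psi_{1,A}\sim\psi_{2,A}$, i.e.\ the two marginals are approximately unitarily equivalent. The final step is to invoke the equivalence (a) $\Leftrightarrow$ (b) of \Cref{cor:local unitary equivalence}: approximate unitary equivalence of the $A$-marginals is equivalent to approximate LU equivalence of the vectors $\Psi_1$ and $\Psi_2$, which is exactly the relation $\Psi_1 \lu \Psi_2$.

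One point I would be careful about is the normalization hypothesis. \Cref{cor:local unitary equivalence} and the underlying \Cref{prop:local distinguishability} are stated for vectors $\Psi,\Phi\in\H$ whose induced marginals are compared, and the clean LU-equivalence statement presumes the vectors have equal norm (otherwise no unitary can map one to the other). This is precisely why the converse in the corollary carries the hypothesis ``with the same norm,'' and I would flag explicitly that without it one could only conclude equivalence up to a scalar. I do not anticipate a genuine obstacle here, since every ingredient is already in place; the only subtlety worth stating cleanly is that universal catalyticity, as a property defined purely through the approximate-unitary-equivalence relation $\sim$ in \eqref{eq:universal catalyst}, is automatically $\sim$-invariant, which is what powers the first half. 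The whole argument is therefore a short assembly of \Cref{prop:mbz vs catalysis,cor:catalysts are unitarily equivalent,cor:local unitary equivalence}, with the factoriality assumption entering exactly once, in the converse, to guarantee uniqueness of the universally catalytic state up to $\sim$.
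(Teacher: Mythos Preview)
Your proposal is correct and matches the paper's approach exactly: the paper states the corollary as an immediate consequence of \Cref{prop:mbz vs catalysis} and \Cref{cor:catalysts are unitarily equivalent} (the sentence preceding the corollary is literally ``This implies:''), with \Cref{cor:local unitary equivalence} providing the bridge between LU equivalence of vectors and approximate unitary equivalence of marginals. Your discussion of the normalization hypothesis and the $\sim$-invariance of universal catalyticity fills in precisely the details the paper leaves implicit.
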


As a consequence of \cref{eq:local distinguishability norm}, we learn that the marginals $\psi_A$ and $\psi_B$ of a bipartite pure state $\Psi$ are equally good or bad at the task of universal catalysis:
\begin{equation}\label{eq:kappa entanglement}
    \kappa(\Psi):=\kappa(\psi_A) = \kappa(\psi_B), \qquad \Psi\in\H.
\end{equation}
By \cref{cor:lu equivalence of mbz states}, $\kappa(\Psi)$ quantifies the optimal performance of the state $\Psi$ at the task of embezzling entanglement from it (see \cref{rem:quantification with f} for a discussion of the analogous quantifier in fidelity).
This quantification is operational in terms of local distinguishability, i.e., $\kappa(\Psi)$ is precisely the worst-case error for the detection of whether the embezzlement of arbitrary states $\Phi_1,\Phi_2$ was successful.
As a consequence of \eqref{eq:kappa entanglement}, the worst worst-case error among all resource states of the bipartite system is $\kappa_{\min}(\M_A,\M_B,\H) := \inf \,\kappa(\Psi)$, where the infimum is over unit vectors $\Psi\in\H$, equals the corresponding local quantities:
\begin{equation}
     \kappa_{\min}(\M_A,\M_B,\H) = \kappa_{\min}(\M_A) = \kappa_{\min}(\M_B).
\end{equation}
Similarly, we have
\begin{equation}
    \kappa_{\max}(\M_A,\M_B,\H) = \kappa_{\max}(\M_A) = \kappa_{\max}(\M_B),
\end{equation}
where $\kappa_{\max}(\M_A,\M_B,\H)= \sup \kappa(\Psi)$, where the supremum is over unit vectors $\Psi\in\H$.

The above allows us to fully characterize the invariants $\kappa_{\min}$ and $\kappa_{\max}$ for bipartite systems of factors, based on \cref{thm:value of kappa min,thm:value of kappa max}:

\begin{theorem}
    Let $(\M_A,\M_B,\H)$ be a bipartite system of factors.
    \begin{enumerate}
        \item The value of $\kappa_{\min}(\M_A,\M_B,\H)$ is either $0$ or $2$. 
            The following are equivalent:
            \begin{enumerate}[(a)]
                \item there is an embezzling vector $0\ne \Psi\in\H$;
                \item $\kappa_{\min}(\M_A,\M_B,\H)=0$;
                \item the flows of weights of $\M_A$ and, therefore, $\M_B$ admit invariant normal states.
            \end{enumerate}
            These equivalent properties hold for all factors of type $\III_{\lambda}$, $0<\lambda\le 1$. They are false in the semifinite case.
    
        \item 
            The value of $\kappa_{\max}(\M_A,\M_B,\H)$ equals the diameter of the state space of $\M_A$ (and, therefore, $\M_B$) up to unitary equivalence.
            If $\M_A$ and, therefore, $\M_B$ are of type $\III$, this gives 
            \begin{equation}
                \kappa_{\max}(\M_A,\M_B,\H) = 2 \frac{1-\sqrt\lambda}{1+\sqrt\lambda}
            \end{equation}
            where  $\lambda$ labels the subtype (i.e., $\lambda$ is such that $\M_A$ and, hence, $\M_B$ are of type $\III_\lambda$).
    \end{enumerate}
    The first item also holds if $\M_A$ and, therefore, $\M_B$ are not factors but general von Neumann algebras.
\end{theorem}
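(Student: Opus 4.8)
The statement to prove is a theorem characterizing $\kappa_{\min}$ and $\kappa_{\max}$ for a bipartite system $(\M_A,\M_B,\H)$ of factors. My strategy is to reduce everything to the already-established \emph{monopartite} results on catalytic states, namely \cref{thm:value of kappa min} and \cref{thm:value of kappa max}, via the bridge \cref{prop:mbz vs catalysis} (embezzling $\iff$ universally catalytic marginals) together with the local-distinguishability identity \eqref{eq:kappa entanglement}. The crucial preliminary observation, which I would state and justify first, is that the system-level invariants coincide with the monopartite ones:
\begin{equation*}
    \kappa_{\min}(\M_A,\M_B,\H)=\kappa_{\min}(\M_A)=\kappa_{\min}(\M_B),\qquad
    \kappa_{\max}(\M_A,\M_B,\H)=\kappa_{\max}(\M_A)=\kappa_{\max}(\M_B).
\end{equation*}
This is almost immediate from \eqref{eq:kappa entanglement}: for every unit vector $\Psi\in\H$ one has $\kappa(\Psi)=\kappa(\psi_A)=\kappa(\psi_B)$, so optimizing over unit vectors $\Psi$ and optimizing $\kappa$ over the achievable marginal states give the same value. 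Here I must be careful about which marginal states are \emph{achievable} as marginals of global pure states; this is where standardness enters.

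\textbf{First item ($\kappa_{\min}$).} Once the reduction is in place, the equivalences (a)$\iff$(b)$\iff$(c) follow by transporting \cref{thm:value of kappa min} through \cref{prop:mbz vs catalysis}: an embezzling $\Psi$ exists iff a universally catalytic state on $\M_A$ exists (a purification of such a state, which exists after suitable amplification), iff $\kappa_{\min}(\M_A)=0$, iff the flow of weights of $\M_A$ admits an invariant normal state. The consequences for types $\III_\lambda$, $0<\lambda\le1$, and the semifinite case are read off directly from the corresponding clauses of \cref{thm:value of kappa min}. The dichotomy $\kappa_{\min}\in\{0,2\}$ is inherited verbatim.

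\textbf{Second item ($\kappa_{\max}$).} Here I invoke \cref{thm:value of kappa max}, which identifies $\kappa_{\max}(\M_A)$ with $\diam\nstates(\M_A)/_\sim$, and in the type $\III_\lambda$ case with $2\frac{1-\sqrt\lambda}{1+\sqrt\lambda}$. Transporting through the reduction yields the claimed formula. I must confirm that \emph{every} normal state on $\M_A$ (or at least states realizing the diameter up to arbitrarily small error) arises as a marginal of a global pure vector $\Psi\in\H$; for the supremum this is unproblematic since I may pass to an amplified, standard bipartite system using \cref{lem:std if infinite} and \cref{lem:distance of unitary orbits and amplification}, exactly as in the proof of \cref{prop:local distinguishability} (Step 3), and these amplifications leave both $\kappa$ and the type of the factors unchanged.

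\textbf{Main obstacle and the non-factorial extension.} The genuine subtlety, and what I expect to be the hardest part, is the final sentence: \emph{the first item also holds when $\M_A,\M_B$ are general von Neumann algebras}. The clean monopartite statement \cref{thm:value of kappa min} is already phrased for general von Neumann algebras, so $\kappa_{\min}(\M_A)\in\{0,2\}$ and its characterization via an invariant normal state of the flow of weights carry over. The work is to check that the \emph{system-level} reduction $\kappa_{\min}(\M_A,\M_B,\H)=\kappa_{\min}(\M_A)$ still holds without factoriality. The argument via \eqref{eq:kappa entanglement} and \cref{prop:mbz vs catalysis} does not use factoriality, but the existence of a global purifying vector realizing a prescribed marginal must be handled through disintegration over the common center $Z(\M_A)=Z(\M_B)\cong L^\infty(X,\mu)$ (as in \cref{rem:non-factors1}), reducing to the factorial fibers where the amplification argument applies; I then assemble the fiberwise embezzling vectors into a measurable field. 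The measurability and the interchange of the fiberwise infimum with the direct integral is the step I would treat most carefully, appealing to \cref{lem:core direct int} to ensure the flow of weights disintegrates compatibly and that an invariant normal state exists globally iff it exists on almost every fiber.
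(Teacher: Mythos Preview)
Your reduction strategy is exactly what the paper does: it asserts the identities $\kappa_{\min/\max}(\M_A,\M_B,\H)=\kappa_{\min/\max}(\M_A)=\kappa_{\min/\max}(\M_B)$ as consequences of \eqref{eq:kappa entanglement} in the text preceding the theorem and then says the theorem follows from \cref{thm:value of kappa min} and \cref{thm:value of kappa max}; there is no separate proof block.

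There is, however, a gap in the way you close the purification issue. Passing to an amplified standard bipartite system produces an embezzling vector (or a vector realizing $\kappa_{\max}$) in the \emph{amplified} Hilbert space, not in $\H$. The embedding $\Psi\mapsto\Psi\otimes\Omega_A\otimes\Omega_B$ preserves $\kappa$, but that only bounds the original $\kappa_{\max}(\M_A,\M_B,\H)$ from \emph{above} by $\kappa_{\max}(\M_A)$, which is the trivial direction; and an embezzling vector in the amplified system does not hand you one in $\H$.

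The direct fix, which the paper tacitly relies on, is the dichotomy built into \cref{thm:value of kappa min} via \cref{prop:amine}: if the flow of weights of $\M_A$ has no invariant normal state then $\kappa(\psi_A)=2$ for \emph{every} normal state $\psi_A$, so $\kappa_{\min}(\M_A,\M_B,\H)=2$ regardless of which marginals are realized. In the complementary case a universally catalytic state $\psi_A$ exists, and any such state is supported on the type~$\III$ summand of $\M_A$ (the semifinite summand contributes no invariant state on its flow of weights). The type~$\III$ part $(\M_A^\III,\M_B^\III,\H^\III)$ of the bipartite system is automatically standard by \cref{lem:std if infinite}, so $\psi_A$ already has a purification in $\H^\III\subset\H$ --- no amplification needed. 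This argument via the type decomposition \eqref{eq:type decomp of bipartite systems} also disposes of the non-factorial extension in one stroke, making your disintegration-and-measurability program unnecessary.
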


Together with the results in \cref{sec:locc-types}, this proves \cref{thm:minimal-type_i} in \cref{sec:main-results}.

\begin{remark}[Quantiying embezzlement in fidelity]\label{rem:quantification with f}
    Let $(\M_A,\M_B,\H)$ be a bipartite system.
    In the above, we quantify the capability of a unit vector $\Psi\in\H$ at the task of embezzling entanglement in terms of the local distinguishability in norm.
    An alternative quantification, which is arguably more natural in the context of bipartite pure state entanglement theory, would be in terms of the fidelity 
    \begin{equation}
        \vartheta(\Psi) = \adjustlimits \inf_{\Phi_1,\Phi_2}\sup_{u_A,u_B} \,\big|\ip{\Psi\ox \Phi_1}{u_Au_B(\Psi\ox \Phi_2)}\big|^2,
    \end{equation}
    where $\Phi_1,\Phi_2$ are unit vectors in $\K_A\ox\K_B$ for infinite separable Hilbert spaces $\K_A,\K_B$, and where $u_{A/B}$ are local unitaries from the bipartite system $(\M_A\ox\B(\K_A)\ox1,\M_B\ox 1\ox\B(\K_B),\H_A\ox\K_A\ox\K_B)$.
    This quantification has the benefit that it measures directly how close the initial and final bipartite pure states are. 
    %$(\M_A,\M_B,\H)\ox(\B(\K_A)\ox1,1\ox\B(\K_B),\K_A\ox\K_B)$.
    For a normal state $\psi$ on a von Neumann algebra $\M$, let us denote by $\vartheta(\psi)$ the analog of $\kappa(\psi)$ in fidelity:
    \begin{equation}
        \vartheta(\psi) = \adjustlimits \inf_{\phi_1,\phi_2} \sup_{u} F(u (\psi\ox\phi_1))u^* , \psi\ox \phi_2),
    \end{equation}
    where the infimum is over pairs of normal states on a type $\I_\oo$ factor and the supremum is over unitaries in $\M$.
    By \cref{prop:local distinguishability}, we then have 
    \begin{equation}
        \vartheta(\Psi) = \vartheta(\psi_A) = \vartheta(\psi_B).
    \end{equation}
    It is a consequence of the Fuchs-van de Graaf inequalities \eqref{eq:FvdG} that $\vartheta(\Psi)\approx 1$ if and only if $\kappa(\Psi)\approx 0$ is embezzling and that $\vartheta(\Psi)\approx 0$ if and only if $\kappa(\Psi)\approx 2$.
    Thus, the operational quantifications of embezzlement in terms of $\kappa$ and $\vartheta$ agree in the two extremal cases.
    As with $\kappa$, we can define $\vartheta_{\max}$ and $\vartheta_{\min}$, which only depend on the bipartite system.
    The above and our results on $\kappa$ imply that $\vartheta_{\max}$ is either $0$ or $1$ with the value $0$ (resp.\ $1$) occurring if and only if $\kappa_{\min}= 2$ (resp.\ $0$).
    It would be most interesting to work out the value of $\vartheta_{\min}$ for facts of type $\III_\lambda$.
    The problem that keeps us from applying our methods to $\vartheta_{\min}$ is that the machinery of Haagerup-Størmer spectral states is, at this point, only available for the norm distance of unitary orbits.
    Extending this machinery to the fidelity and possibly other quantum divergences (see \cref{sec:notes-vNQI}) would be of independent interest.
    % Our intuition is that the Haagerup-Størmer spectral states are the correct analog of the spectrum of states, and, therefore, we expect that the distance of unitary orbits as measured by a divergence $\DD$ (see \cite{hiai_quantum_2021} for quantum $f$-divergences on von Neumann algebras) equals the divergence of the spectral states, i.e.,
    % \begin{equation}\label{eq:spectral distance in divergence}
    %     \inf_{u} \, \mathbb D(u\psi u^*,\phi) = \mathbb D(\hat\psi,\hat\phi).
    % \end{equation}
    % In particular, we expect $\inf_u F(u\psi u^*,\phi)=F(\hat\psi,\hat\phi)$ (The fidelity corresponds to a divergence via $\DD = - \log F$, see \cite[Rem.~3.15]{hiai_quantum_2021}).
    % Once this is proved, the flow of weights could then be used to compute the diameter of the state space modulo unitaries in fidelity (and possibly even in general quantum $f$-divergences).
    % If all this goes through, one would obtain our main result on embezzlement of entanglement also for the quantifier $\vartheta$.
\end{remark}

Finally, we comment on a variant of embezzlement of entanglement, where the agents are allowed to communicate classically. 
We refer to this as \emph{LOCC-embezzlement}.
LOCC-\allowbreak embezzlement in finite-dimensional systems was studied in \cite{zanoni_complete_2024}.
We show that LOCC-\allowbreak embez\-zle\-ment is possible if and only if the bipartite system is type $\III$, in which case every unit vector is an LOCC-embezzling state:

\begin{proposition}\label{prop:locc-mbz}
    Let $(\M_A,\M_B,\H)$ be a factorial bipartite system, and let $\Psi\in\H$ be a unit vector. 
    The following are equivalent:
    \begin{enumerate}[(a)]
        \item For all unit vectors $\Phi_1,\Phi_2 \in \CC^n\ox\CC^n$, we have
        \begin{equation}\label{eq:locc-mbz}
            \Psi\ox\Phi_1 \barlocc \Psi\ox\Phi_2,
        \end{equation}
        \item $\M_A$ and $\M_B$ are of type $\III$.
    \end{enumerate}
\end{proposition}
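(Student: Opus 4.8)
The plan is to prove the two implications separately, using Nielsen's theorem (\cref{thm:nielsens thm2}) together with the trivialization of LOCC in type $\III$ (\cref{lem:locc trivial in type III}); throughout I treat $\M_A,\M_B$ as factors, recalling that $\M_A$ is of type $\I$, $\II$, or $\III$ if and only if $\M_B$ is.

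For the direction (b)$\,\Rightarrow\,$(a) I would amplify and reduce to \cref{lem:locc trivial in type III}. Fixing $n$, consider the amplified bipartite system $(\M_A\ox\B(\CC^n)\ox1,\ \M_B\ox1\ox\B(\CC^n),\ \H\ox\CC^n\ox\CC^n)$. This is again factorial and satisfies Haag duality, since $(\M_A\ox\B(\CC^n)\ox1)'=\M_A'\ox\CC\ox\B(\CC^n)=\M_B\ox1\ox\B(\CC^n)$ and the tensor product of factors is a factor. Because $\M_A$ is type $\III$ and the tensor product of a type $\III$ factor with any factor is again type $\III$ (here $\M_A\ox\B(\CC^n)\cong M_n(\M_A)$ has the same type as $\M_A$), the amplified system is a type $\III$ bipartite system. \cref{lem:locc trivial in type III} then gives $\Psi\ox\Phi_1\barlocc\Psi\ox\Phi_2$ for arbitrary $\Phi_1,\Phi_2\in\CC^n\ox\CC^n$, which is exactly (a).

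For (a)$\,\Rightarrow\,$(b) I would argue by contraposition: assume $\M_A$, hence $\M_B$, is not of type $\III$, so both are semifinite, and exhibit a pair $\Phi_1,\Phi_2\in\CC^2\ox\CC^2$ with $\Psi\ox\Phi_1\not\barlocc\Psi\ox\Phi_2$. Take $\Phi_1=\ket0\ket0$ (product) and $\Phi_2=\tfrac1{\sqrt2}(\ket0\ket0+\ket1\ket1)$ (Bell), both as vectors in $\CC^2\ox\CC^2$. Working in the semifinite amplified system with the coupled traces $\tau_A\ox\tr$ and $\tau_B\ox\tr$ (which remain a coupled pair, so that the scale formulas of \cref{sec:schmidt spectrum} apply verbatim), a product factor has Schmidt rank one, so $\lambda_{\Psi\ox\Phi_1}=\lambda_\Psi$, while the computation in the proof of \cref{lem:one-shot-ent} gives $\lambda_{\Psi\ox\Phi_2}(t)=\tfrac12\lambda_\Psi(t/2)$. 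By \cref{thm:nielsens thm2}, $\Psi\ox\Phi_1\barlocc\Psi\ox\Phi_2$ would require $\int_0^s\lambda_{\Psi\ox\Phi_1}(t)\,dt\le\int_0^s\lambda_{\Psi\ox\Phi_2}(t)\,dt$ for all $s>0$. Writing $F(s)=\int_0^s\lambda_\Psi(t)\,dt$, the substitution $u=t/2$ gives $\int_0^s\lambda_{\Psi\ox\Phi_2}(t)\,dt=F(s/2)$, so the required inequality reads $F(s)\le F(s/2)$ for all $s$. Since $F$ is non-decreasing and strictly increasing wherever $\lambda_\Psi>0$ (which holds on some interval $(0,\delta)$ because $\lambda_\Psi$ is the non-increasing spectral scale of a nonzero density), one has $F(s)>F(s/2)$ for all small $s>0$; the majorization, and hence the approximate LOCC transition, fails, contradicting (a).

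The step I expect to be most delicate is orienting the majorization inequality in \cref{thm:nielsens thm2} correctly and verifying the \emph{strict} failure near $s=0$; the operational sanity check is that passing from the product $\Phi_1$ to the entangled $\Phi_2$ increases entanglement and so must be LOCC-impossible. I prefer this direct Schmidt-scale argument over the shorter route through additivity of the entanglement entropy, $H(\Psi\ox\Phi)=H(\Psi)+H(\Phi)$ (\cref{cor:entanglement monotones are monotone}), precisely because $H(\Psi)$ can be infinite in the type $\II$ case, in which the entropy monotone would fail to distinguish the two states while the majorization computation still does.
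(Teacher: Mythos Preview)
Your proof is correct and closely parallels the paper's own argument. The direction (b)$\Rightarrow$(a) is identical: amplify and invoke \cref{lem:locc trivial in type III}. For (a)$\Rightarrow$(b), both you and the paper reduce to Nielsen's theorem applied to $\Phi_1=\ket0\ket0$ and $\Phi_2=\Phi_n$, and both use the formula $\lambda_{\Psi\ox\Phi_n}(t)=n^{-1}\lambda_\Psi(t/n)$. The difference is only in how the contradiction is extracted: the paper takes \emph{all} $n$, observes that the trivial reverse majorization forces $F(s)=F(s/n)$ pointwise, deduces the functional equation $\lambda_\Psi(t)=c\,\lambda_\Psi(ct)$ for all $c>0$ via density of rationals and right-continuity, and concludes $\lambda_\Psi(t)\propto t^{-1}\notin L^1$. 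You instead fix $n=2$ and note directly that $F(s)\le F(s/2)$ is impossible for a strictly increasing $F$ near $0$. Your route is shorter; the paper's route is slightly more elaborate but arrives at the same contradiction by a different (and arguably redundant) path. Your remark about avoiding the entropy monotone because $H(\Psi)$ may be infinite in type $\II$ is well taken and is exactly why the spectral-scale argument is the right tool here.
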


\begin{proof}
    Suppose $\M_A$ and $\M_B$ are type $\III$.
    Then the same is true for $\M_A\ox M_n(\CC)$ and $\M_B\ox M_n(\CC)$.
    Hence, \cref{lem:locc trivial in type III} implies \eqref{eq:locc-mbz} for all $\Phi_1,\Phi_2$ if $\M_A$.
    
    To see the converse, we assume that $\M_A$ and $\M_B$ are semifinite and that \eqref{eq:locc-mbz} holds for all $\Phi_1,\Phi_2$.
    In particular, \eqref{eq:locc-mbz} holds for $\Phi_1=\ket 0\ket0$ and $\Phi_2 = n^{-1/2}\sum_{j=0}^{n-1}\ket j\ket j$.
    Following the arguments in the proof of \cref{lem:one-shot-ent}, this implies $\lambda_\Psi \preceq \lambda_{\Psi\ox \Phi_n}$,  $\lambda_{\Psi\ox \Phi_n}(t)= n^{-1} \lambda_\Psi(t/n)$ for all $n\in\NN$.
    Since the converse majorization holds trivially, this shows
    \begin{equation*}
        \lambda_\Psi(t) = \frac1n\lambda_\Psi\left(\frac tn\right).
    \end{equation*}
    Using this equation for two numbers $n,m\in\NN$, we get 
    \begin{equation*}
        \lambda_\Psi(t) = \frac mn\lambda_\Psi\left(\frac {mt}n\right).
    \end{equation*}
    Thus, right-continuity of the Schmidt spectral scale, which follows from right-continuity of the spectral scale of states semifinite factors (see \cref{sec:spectral scales}), implies $\lambda_\Psi(t) = c\lambda_\Psi(ct)$ for all $c>0$.
    Hence, $\lambda_\Psi(t) = t^{-1} \lambda_\Psi(1)$, which contradicts $\lambda_\Psi\in L^1(\RR^+)$, or $\lambda_\Psi =0$, which contradicts $\Psi \ne0$.
\end{proof}

\subsection{LU transitive multipartite systems}\label{sec:lu-transitive}

In the following, we consider multipartite systems in which all pure states are approximately equivalent up to local unitaries (approximately LU equivalent).
We first consider such systems abstractly and fully characterize the bipartite case.
Afterward, we use a multipartite embezzling family from \cite{leung_coherent_2013} to construct nontrivial examples.
Consider a collection $(\M_x)_{x=1}^N$, $N\in \NN$, of pairwise commuting von Neumann algebras on a Hilbert space $\H$.
We say that a pair of unit vectors $\Psi,\Phi\in \H$ is \emph{approximately LU equivalent}, denoted 
\begin{equation}
    \Psi\lu\Phi,
\end{equation}
if for all $\eps>0$, there exist unitaries $u_x\in\U(\M_x)$ such that $\norm{\Psi-\prod_x u_x \Phi}<\eps$.
We say that the collection $(\M_x)_{x=1}^N$ is \emph{LU transitive} if all pairs of unit vectors are approximately LU equivalent.
In the following, we consider a collection $(\M_x)_{x=1}^N$ as above, which we assume to be \emph{nontrivial} in the sense that $N\ge 2$ and $\M_x\ne \CC1$, $x=1,\ldots N$.

\begin{lemma}\label{lem:lu-transitive}
    If $(\M_x)_{x=1}^N$ is LU transitive, then  each $\M_x$ is a type $\III_1$ factor.
\end{lemma}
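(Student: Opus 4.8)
The statement to prove is that if a nontrivial collection $(\M_x)_{x=1}^N$ of pairwise commuting von Neumann algebras is LU transitive, then each $\M_x$ is a type $\III_1$ factor. My plan is to reduce each claim—factoriality and type $\III_1$—to the bipartite results already established, by grouping the remaining parties into a single complementary algebra.

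First I would fix an index $x$ and set $\M = \M_x$, $\N = \bigvee_{y\ne x}\M_y$. Since the $\M_y$ pairwise commute, $\N$ commutes with $\M$, so we are in the setting of two commuting algebras $(\M,\N,\H)$. The key reduction is that LU transitivity for the $N$-partite collection implies a corresponding transitivity for the coarse-grained bipartition: if $\Psi\lu\Phi$ via unitaries $u_y\in\U(\M_y)$, then grouping $\prod_{y\ne x}u_y \in \U(\N)$ (the product of commuting unitaries is a unitary in their join) shows that every pair of unit vectors is approximately LU equivalent relative to $(\M,\N)$. In particular, by \cref{cor:local unitary equivalence}, any two unit vectors have approximately unitarily equivalent $\M$-marginals; i.e., \emph{every} pair of normal states arising as marginals of unit vectors on $\M$ is approximately unitarily equivalent.

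Next I would establish factoriality. The center $Z(\M)=\M\cap\M'$ is fixed pointwise under all inner automorphisms $\Ad(u)$, $u\in\U(\M)$, so approximate unitary equivalence cannot move the restriction of a state to $Z(\M)$. If $Z(\M)\ne\CC1$, one can produce two unit vectors whose marginals differ on the center (e.g., by acting with a projection $z\in Z(\M)$, $z\ne 0,1$, that lies in $\N'=\M$... more carefully, using that marginals can realize distinct central states), contradicting LU transitivity. This forces $Z(\M)=\CC1$, so $\M$ is a factor. A cleaner route may be to invoke \cref{thm:distance spectral states}: approximate unitary equivalence of all marginal states means all their spectral states coincide, and since the spectral functional is norm-preserving and sees the flow of weights, collapsing the entire marginal state space to a single unitary orbit forces the flow of weights to be trivial—which by \cref{prop:universal-catalyst} and the discussion in \cref{sec:spectral states} is exactly the type $\III_1$ condition and simultaneously gives ergodicity, hence factoriality.

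For the type $\III_1$ conclusion, the decisive input is \cref{prop:universal-catalyst}: $\M$ is of type $\III_1$ if and only if \emph{every} normal state on $\M$ is universally catalytic. By \cref{prop:mbz vs catalysis}, a unit vector $\Psi$ is embezzling for the bipartition $(\M,\N)$ iff its marginal $\psi_\M$ is universally catalytic. So it suffices to show that LU transitivity makes every unit vector embezzling, equivalently $\kappa_{\max}(\M,\N,\H)=0$. Given arbitrary finite-dimensional ancilla states $\Phi_1,\Phi_2$, LU transitivity applied to the enlarged system $(\M\ox\B(\K_x)\, ,\, \N\ox\B(\text{rest}))$—which is again LU transitive because tensoring each party with a type $\I$ factor on its own ancilla preserves the hypothesis—yields $\Psi\ox\Phi_1\lu\Psi\ox\Phi_2$, i.e., $\Psi$ embezzles. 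The main obstacle I anticipate is the bookkeeping in this last step: one must verify that the ancillary type $\I$ factors $\B(\K_x)$ can be distributed among the parties so that the enlarged collection is still pairwise commuting and still LU transitive, and that the grouping $\prod_{y\ne x}$ genuinely lands in $\U(\N\,\overline\ox\,\B(\cdots))$. Once this is checked, $\kappa(\psi_\M)=0$ for all marginals, so by \cref{thm:value of kappa max} (with $\kappa_{\max}(\M)=0$ forcing $\lambda=1$) $\M$ is type $\III_1$, completing the proof.
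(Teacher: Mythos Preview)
Your reduction to the bipartite pair $(\M,\N)=(\M_x,\bigvee_{y\ne x}\M_y)$ and your factoriality argument via the center are fine and essentially match the paper. The problem is in the type $\III_1$ step.

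Your embezzlement route is circular. You assert that ``tensoring each party with a type $\I$ factor on its own ancilla preserves the hypothesis'' of LU transitivity, but this is exactly what the paper proves in \cref{lem:lu-transitive-mbz}, and that proof \emph{uses} the conclusion of the present lemma: it needs each $\M_x$ to be an infinite factor so that one can build $\M_x$-unitaries $u_x:\H\to\H\ox\CC^d$ conjugating the enlarged system back to the original. Without already knowing infiniteness, there is no reason the enlarged system should be LU transitive. Moreover, even if you could show every unit vector is embezzling, you would only get $\kappa(\psi_\M)=0$ for states $\psi_\M$ that arise as marginals of vectors in $\H$; to invoke \cref{prop:universal-catalyst} or conclude $\kappa_{\max}(\M)=0$ you need this for \emph{all} normal states on $\M$, which requires $\M$ to be in standard representation---again something only guaranteed once you know $\M$ is type $\III$ (\cref{lem:std if infinite}).

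The paper avoids this by arguing more directly from your own observation that all marginal states on $\M$ are approximately unitarily equivalent. First rule out semifiniteness: a semifinite factor other than $\B(\H)$ always admits two normal states with different spectral scales (e.g., different Schmidt ranks) that both arise as marginals, contradicting approximate unitary equivalence; and $\M=\B(\H)$ forces $\M_y=\CC1$ for $y\ne x$, which is excluded by nontriviality. So $\M$ is a type $\III$ factor, hence in standard representation, hence \emph{every} normal state is a marginal. Now approximate unitary equivalence of all normal states gives type $\III_1$ by Connes--St{\o}rmer (equivalently, $\diam\nstates(\M)/_\sim=0$).
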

\begin{proof}
    Since the center $Z(\M_x)$ commutes with the unitary groups $\U(\M_y)$ for all $x,y$, all pairs of approximately LU equivalent unit vectors induce the same states on each of the centers $Z(\M_x)$.
    It is easy to see that if a von Neumann algebra $\M$ acts on a Hilbert space $\H$, then $\H$ contains unit vectors that induce different states on the center $Z(\M)$, unless the center is trivial.
    Thus, LU transitivity implies that each $\M_x$ is a factor.
    To see that each $\M_x$ is of type $\III_1$, we note that LU transitivity implies that all pairs of normal states $\psi,\phi$ on $\M_x$ that admit purifications in $\H$ are approximately unitarily equivalent.
    This is clearly false for semifinite factors other than $\M = \B(\H)$. If $\M_x=\B(\H)$, we have $\M_y=\CC1$ for $y\ne x$, which we excluded. Thus, each $\M$ is a type $\III$ factor.
    Since type $\III$ factors are always in standard representation (\cref{lem:std if infinite}), we know that all normal states on $\M$ have purifications in $\H$ (see \cref{sec:std form}).
    Thus, the assumption of LU transitivity implies that all normal states on $\M$ are approximately unitarily equivalent, which is the case if and only if $\M$ is a type $\III_1$ factor \cite{connes_homogeneity_1978} (see \cref{sec:von-neumann-algs}).
\end{proof}

\begin{proposition}\label{prop:lu-transitive}
    Let $(\M_A,\M_B)$ be a pair of commuting von Neumann algebras on $\H$.
    Assume that $\M_A\ne\CC$ and $\M_B\ne \CC$.
    The following are equivalent:
    \begin{enumerate}[(a)]
        \item\label{it:lu-transitive1} $(\M_A,\M_B)$ are LU transitive,
        \item\label{it:lu-transitive2} $\M_A$ and $\M_B$ are type $\III_1$ factors in Haag duality: $\M_A=\M_B'$.
    \end{enumerate}
\end{proposition}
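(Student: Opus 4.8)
The plan is to prove \cref{prop:lu-transitive} by establishing both implications, relying heavily on the characterization of LU transitivity via approximate unitary equivalence of marginals (\cref{cor:local unitary equivalence}) and the operational characterization of Haag duality (\cref{thm:haag_duality}). For the direction \ref{it:lu-transitive2} $\Rightarrow$ \ref{it:lu-transitive1}, I would assume $\M_A$, $\M_B$ are type $\III_1$ factors with $\M_A=\M_B'$. Given any two unit vectors $\Psi,\Phi\in\H$, I need to show $\Psi\lu\Phi$. By \cref{cor:local unitary equivalence}, it suffices to show $\psi_A\sim\phi_A$ for the induced marginals on $\M_A$. Since $\M_A$ is of type $\III_1$, \cref{prop:universal-catalyst} (or more directly the fact recorded in \cref{sec:von-neumann-algs} via the state-space diameter, or \cref{thm:distance spectral states} with trivial flow of weights) tells us that \emph{all} normal states on $\M_A$ are approximately unitarily equivalent, i.e.\ $\diam\nstates(\M_A)/_\sim\,=0$. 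Hence $\psi_A\sim\phi_A$ automatically, which gives $\Psi\lu\Phi$.

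For the converse \ref{it:lu-transitive1} $\Rightarrow$ \ref{it:lu-transitive2}, the factoriality and type $\III_1$ property of each algebra is exactly \cref{lem:lu-transitive} (applied in the bipartite case $N=2$), so the only new content is deriving Haag duality $\M_A=\M_B'$. The plan here is to use the operational characterization in \cref{thm:haag_duality}, specifically condition \ref{it:hd3}: I must show that whenever $\Psi,\Phi\in\H$ are unit vectors with equal $A$-marginals $\psi_A=\phi_A$, then for every $\eps>0$ there is a unitary $u_B\in\M_B$ with $\Psi\approx_\eps u_B\Phi$. From LU transitivity I already know $\Psi\lu\Phi$, i.e.\ there exist $u_A\in\U(\M_A)$, $u_B\in\U(\M_B)$ with $u_Au_B\Phi\approx_{\eps/2}\Psi$. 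The hypothesis $\psi_A=\phi_A$ must be leveraged to remove the $u_A$ factor. The natural route is to observe that $u_Au_B\Phi$ and $u_B\Phi$ have the same $B$-marginal, while $\Psi$ and $\Phi$ have the same $A$-marginal; combining these with the local distinguishability symmetry of \cref{prop:local distinguishability} should let me replace the pair $(u_A,u_B)$ by a single $B$-unitary at the cost of another $\eps/2$.

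The step I expect to be the main obstacle is precisely this elimination of the $A$-unitary in the converse direction. A clean way to handle it is as follows. Since a priori $\M_A\subset\M_B'$ (commutativity gives one inclusion), I work with $\M_B'$ in place of a possibly larger algebra and aim to show the inclusion is an equality. Given $\psi_A=\phi_A$, consider the vector $u_A^{-1}\Psi$; it has the same $A$-marginal as $\Psi$, hence as $\Phi$. Thus it suffices to show that any two vectors with equal $A$-marginal are related (up to $\eps$) by a $\M_B$-unitary, and LU transitivity provides the relation up to a product $u_Au_B$. The key is that the residual $A$-unitary must act trivially on the relevant marginal: because $u_A^{-1}\Psi$ and $\Phi$ induce the same state on $\M_A$, and LU transitivity realizes the approximate equivalence through local unitaries, one uses \cref{cor:all-purifications} to see that vectors with equal $A$-marginal differ by a partial isometry in $\M_A'$, combined with the approximate $\M_B$-unitary freedom. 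I would verify carefully that the composed approximation stays within $\eps$ via the triangle inequality and the isometric-action lemmas (\cref{cor:isometrically acting pi's}), which guarantees the partial isometries arising can be upgraded to genuine unitaries. Once condition \ref{it:hd3} of \cref{thm:haag_duality} is verified, Haag duality follows, completing the proof.
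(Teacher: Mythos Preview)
Your overall plan coincides with the paper's: use \cref{cor:local unitary equivalence} for \ref{it:lu-transitive2}$\Rightarrow$\ref{it:lu-transitive1}, invoke \cref{lem:lu-transitive} for the type~$\III_1$ factor property, and appeal to \cref{thm:haag_duality} for Haag duality. The direction \ref{it:lu-transitive2}$\Rightarrow$\ref{it:lu-transitive1} and the type~$\III_1$ part are correct and match the paper.

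The argument for Haag duality, however, contains a genuine error. You write that $u_A^{-1}\Psi$ ``has the same $A$-marginal as $\Psi$, hence as $\Phi$.'' This is false: a unitary $u_A\in\M_A$ preserves the $\M_A'$-marginal, \emph{not} the $\M_A$-marginal. The $\M_A$-marginal of $u_A^{-1}\Psi$ is $u_A^{-1}\psi_A u_A$, which need not equal $\psi_A$. With this corrected, the downstream reasoning collapses: from $u_A u_B\Phi\approx_\eps\Psi$ and $\psi_A=\phi_A$ you only deduce $u_A\psi_A u_A^*\approx_{2\eps}\psi_A$, which does not force $u_A$ to act nearly trivially at the vector level (think of the stabilizer of a maximally mixed marginal in a finite-dimensional example). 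Your subsequent appeal to \cref{cor:all-purifications} only produces a partial isometry in $\M_A'$; replacing it by one in $\M_B$ is precisely the equality $\M_A'=\M_B$ you are trying to prove, so the step is circular.

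The paper's own proof is very terse here (``Haag duality follows from \cref{thm:haag_duality}''), so you are not overlooking a one-line trick. The honest content of this step is that LU transitivity forces, for every unit vector $\Omega$, the $\U(\M_A)$-conjugation orbit of $\omega_{B'}$ to be norm-dense in $\nstates(\M_B')$ (take $\M_B'$-marginals of $u_Au_B\Omega\approx\Psi$). Turning this into $\M_A=\M_B'$ is the part that needs an actual argument; your proposal does not supply one, and the attempted shortcut via ``same $A$-marginal'' rests on the mistaken marginal computation above.
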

\begin{proof}
    \ref{it:lu-transitive1} $\Rightarrow$ \ref{it:lu-transitive2}:
    Haag duality $\M_A=\M_B'$ follows from \cref{thm:haag_duality}.
    The type $\III_1$ factor property is shown in \cref{lem:lu-transitive}.
    \ref{it:lu-transitive2} $\Rightarrow$ \ref{it:lu-transitive1}:
    Let $\Psi,\Phi\in\H$ be unit vectors. By \cref{cor:local unitary equivalence}, $\Psi$ and $\Phi$ are approximately LU equivalent if and only if the marginal states $\psi_A$ and $\phi_A$ on $\M_A$ are approximately unitarily equivalent.
    This is the case since $\M_A$ is a type $\III_1$ factor \cite{connes_homogeneity_1978} (see \cref{sec:von-neumann-algs}).
\end{proof}

In the following, we give a construction of nontrivial LU transitive $N$-partite systems based on the embezzling family of Leung, Toner and Watrous (LTW) \cite{leung_coherent_2013}, which is the only known example of a multipartite embezzling family, and the idea of ITPFI multipartite systems discussed in \cref{sec:idealized}.

\begin{proposition}\label{prop:ltw-system}
    For each $2\le N\in \NN$, there exists an LU transitive collection $(\M_x)_{x=1}^N$ of $N$ pairwise commuting type $\III_1$ factors on a Hilbert space $\H$.
\end{proposition}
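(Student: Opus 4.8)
The plan is to construct the Hilbert space and factors concretely as an ITPFI multipartite system (following \cref{sec:idealized}) using the Leung–Toner–Watrous (LTW) embezzling family as the building block, and then to verify LU transitivity by reducing to the marginal state version via \cref{cor:local unitary equivalence}. First I would recall the LTW family: for each $N\ge 2$ there is a sequence of finite-dimensional $N$-partite pure states $\Psi_j\in\H_j=\H_{1|j}\ox\cdots\ox\H_{N|j}$ that forms a universal $N$-partite embezzling family, meaning that for every finite-dimensional $N$-partite pair $\Omega,\Xi$ the transition $\Psi_j\ox\Omega\to\Psi_j\ox\Xi$ can be achieved by local unitaries up to an error $\eps_j\to 0$. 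Using the restricted infinite tensor product relative to $(\Psi_j)$ as in \cref{sec:idealized}, I set $\H=\bigotimes_{j\in\NN}(\H_j;\Psi_j)$ with reference vector $\Psi=\otimes_j\Psi_j$, and define the pairwise commuting factors $\M_x=\bigotimes_{j}(\M_{x|j},\Psi_j)$, where $\M_{x|j}=1^{\ox j-1}\ox\B(\H_{x|j})\ox1^{\ox N-j}$. By the results recalled in \cref{sec:idealized}, these are AFD factors satisfying Haag duality in every bipartition, $\M_I'=\M_{I^c}$.

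The key step is to show that each $\M_x$ is of type $\III_1$ and that the system is LU transitive. For the type, I would invoke the Araki–Woods/Araki classification machinery for ITPFI factors: the LTW states $\Psi_j$ have Schmidt coefficients chosen so that the associated modular spectrum fills out all of $(0,\infty)$ densely enough to force the asymptotic ratio set (Araki–Woods invariant) to be $[0,\infty)$, which characterizes type $\III_1$. Concretely, I expect the LTW coefficients (proportional to $\alpha^{-1/2}$-type weights over growing ranges) to make the ratios of eigenvalues of the marginal density operators accumulate densely at every point of $(0,\infty)$, giving $r_\infty=[0,\infty)$ and hence type $\III_1$. Then LU transitivity follows by the embezzling property: given any two unit vectors $\Phi,\Phi'\in\H$, I approximate both by finite perturbations $a\Psi$, $a'\Psi$ of the reference vector (which are norm-dense in $\H$ by construction), and then use the $N$-partite embezzling property of $(\Psi_j)$, applied to the finitely many tensor factors on which $a,a'$ act nontrivially together with a large block of "fresh" reference factors, to locally transform $\Phi$ into $\Phi'$ up to arbitrary precision. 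This realizes $\Phi\lu\Phi'$, i.e., LU transitivity.

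Alternatively, and more cleanly, I would bypass the explicit embezzling estimates by first establishing type $\III_1$ and Haag duality, and then simply applying \cref{prop:lu-transitive} in each bipartition: once each $\M_x$ is a type $\III_1$ factor and Haag duality holds, \cref{cor:local unitary equivalence} shows that $\Phi\lu\Phi'$ is equivalent to approximate unitary equivalence of the $\M_x$-marginals, which holds automatically in type $\III_1$ by homogeneity of the state space \cite{connes_homogeneity_1978}. This reduces the whole statement to verifying the single algebraic fact that the ITPFI factors $\M_x$ built from the LTW family are of type $\III_1$. The main obstacle will therefore be the type computation: one must argue that the LTW embezzling coefficients yield an ITPFI factor whose Araki–Woods asymptotic ratio set is all of $[0,\infty)$. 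I would handle this either by a direct eigenvalue-ratio density argument or, more elegantly, by using the embezzling property itself to deduce that every normal state is universally catalytic (via \cref{prop:mbz vs catalysis}), which by \cref{prop:universal-catalyst} forces type $\III_1$; this self-contained route avoids invoking Araki–Woods explicitly and is the approach I would favor.
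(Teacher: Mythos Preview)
Your first approach---approximating arbitrary vectors by finite perturbations of the reference vector and using the ``fresh'' tail factors to embezzle---is essentially the paper's proof and is correct. The paper makes this precise via a tail decomposition: for any $k$ the tail vector $\Omega_{\ge k}=\otimes_{l\ge k}\Omega_l$ inherits the embezzling property, so one writes $\Psi\approx_\eps\Psi_{<k}\ox\Omega_{\ge k}$ and $\Phi\approx_\eps\Phi_{<k}\ox\Omega_{\ge k}$ and embezzles $\Psi_{<k}\leftrightarrow\Phi_{<k}$ using $\Omega_{\ge k}$. Type $\III_1$ is then deduced \emph{a posteriori} from LU transitivity via \cref{lem:lu-transitive}, not computed in advance.

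Your ``cleaner'' alternative, however, has a genuine gap for $N>2$. Both \cref{prop:lu-transitive} and \cref{cor:local unitary equivalence} are bipartite statements. Applying them to the bipartition $\{x\}$ versus $\{x\}^c$ gives unitaries $u_x\in\M_x$ and $u'\in\M_x'=\M_{\{x\}^c}=\bigvee_{y\ne x}\M_y$, but there is no reason for $u'$ to factor as a product $\prod_{y\ne x}u_y$ with $u_y\in\U(\M_y)$. Knowing that each bipartite cut is LU transitive does not imply $N$-partite LU transitivity; the decomposition of unitaries in $\M_{\{x\}^c}$ into local products is exactly the hard part, and it is not addressed by homogeneity of the type $\III_1$ state space. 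The direct embezzling argument is not a technical shortcut here---it is the mechanism that actually produces unitaries local to all $N$ factors simultaneously. For the same reason, the Araki--Woods type computation you outline is unnecessary: once LU transitivity is established by the embezzling argument, \cref{lem:lu-transitive} gives type $\III_1$ for free.
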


\begin{proof}
    We use the notation from \cite{van_luijk_multipartite_2025} to which we refer for details on the LTW construction.
    The construction yields a sequence of unit vectors $\Omega_k \in (\CC^{D_k})^{\ox N}$ in $N$-partite systems with local dimension $D_k$, $k\in \NN$, which have the following embezzlement property:
    For all $d$, unit vectors $\Psi,\Phi$ in $(\CC^d)^{\ox N}$ and $\eps>0$, there exists some index $k$ and a local unitary $u=\prod_{x=1}^Nu_x$ on $(\CC^{D_k})^{\ox N}$ such that
    \begin{equation}\label{eq:ltw-mbz-prop}
        \norm{\Omega_k \ox \Psi - u (\Omega_k\ox \Phi)}< \eps.
    \end{equation}
    The LTW family is the sequence of tensor products $\Omega\up n = \otimes_{k=1}^n \Omega_k$, $n\in\NN$.
    By \eqref{eq:ltw-mbz-prop}, the sequence $\Omega\up n$ can be used to embezzle arbitrary $N$-partite state up to arbitrarily small error.
    In the following, we will use no other property of the LTW family than the embezzling property \eqref{eq:ltw-mbz-prop}.
    The tensor product form of the sequence $\Omega\up n$ allows us to take the limit $n\to \oo$.
    As in \cref{sec:idealized}, we consider the restricted infinite Hilbert space tensor product 
    \begin{equation}\label{eq:ltw-itp}
        \H = \bigotimes_{n\in\NN} \,\big((\CC^{D_k})^{\ox N}; \Omega_k\big)
    \end{equation}
    on which we have $N$ commuting factors $\M_1,\ldots \M_N\subset \B(\H)$ corresponding to the $N$ agents.
    By construction, the vector $\Omega = \otimes_{k=1}^\oo \Omega_k$ is embezzling in the sense that for all $d\in\NN$, and all unit vectors $\Psi,\Phi\in (\CC^d)^{\ox N}$, we have
    \begin{equation}\label{eq:ltw-mbz-limit-prop}
        \Omega\ox\Psi \lu \Omega\ox\Phi,
    \end{equation}
    where the local unitaries are taken from the factors $\M_x \otimes (1^{\ox x-1}\ox M_{D_k}(\CC)\ox 1^{\ox N-x})$ on $\H\ox (\CC^d)^{\ox N}$.
    For each $k\in \NN$, we can consider the \emph{tail} Hilbert space $\H_{\ge k}$ formed as in \eqref{eq:ltw-itp} but restricting to integers $n\ge k$.
    On the tail Hilbert space, we have `tail' factors $\M_{x, \ge k}$ with $\M_{x} \cong M_{D_1}(\CC)\ox \ldots\ox M_{d_{k-1}}(\CC) \ox \M_{x, \ge k}$.
    Since \eqref{eq:ltw-mbz-prop} requires arbitrarily large indices for small errors $\eps>0$, the `tail' sequence $\{\Omega_l\}_{l\ge k}$ inherits the approximate embezzlement property.
    Thus, the unit vector $\Omega_{\ge k} = \otimes_{l\ge k} \Omega_l$ in $\H_{\ge k}$ is an embezzling vector in the sense of \eqref{eq:ltw-mbz-limit-prop}.
    We use this fact to show that the multipartite system $(\M_x)_{x=1}^N$ on $\H$ is LU transitive.
    Let $\Psi, \Phi\in \H$ and $\eps>0$ be given.
    Since $\H$ is a restricted infinite tensor product, there are an index $k$ and unit vectors $\Psi_{<k},\Phi_{<k}\in \bigotimes_{l=1}^{k-1} (\CC^{D_l})^{\ox N}$ such that 
    \begin{equation*}
        \Psi \approx_\eps \Psi_{<k}\ox \Omega_{\ge k}, \qquad \Phi \approx_\eps \Phi_{<k}\ox \Omega_{\ge k}.
    \end{equation*}
    The fact that $\Omega_{\ge k}$ is embezzling now implies 
    \begin{equation*}
        \Psi_{<k}\ox \Omega_{\ge k} \lu \Phi_{<k}\ox \Omega_{\ge k}.
    \end{equation*}
    Since this works for all $\eps>0$, it follows that $\Psi$ and $\Phi$ are approximately LU equivalent.
\end{proof}

\subsection{Universal embezzlers}\label{sec:universal-mbz}

A \emph{universal embezzler} is a system in which all pure states are embezzling states.
We begin by considering multipartite systems, showing that every LU transitive system is a multipartite embezzler.
Afterward, we focus on the bipartite case, where we show that a bipartite system is a universal embezzler if and only if it is of type $\III_1$.
As a consequence, we know that (bipartite) universal embezzlers exist in well-studied physical models, e.g., in conformal field theory \cite{gabbiani_operator_1993}, relativistic quantum field theory \cite{haag_local_1996,van_luijk_relativistic_2024}, critical spin chains \cite{keyl_entanglement_2006,van_luijk_critical_2025}, and critical fermion chains \cite{van_luijk_critical_2025}.

Consider a nontrivial (see \cref{sec:lu-transitive}) collection $(\M_x)_{x=1}^N$ of commuting von Neumann algebras on a Hilbert space $\H$.
A unit vector $\Psi\in \H$ is \emph{embezzling} if for all $d\in \NN$ and all pairs of unit vectors $\Phi_1,\Phi_2\in (\CC^d)^{\ox N}$, the vectors $\Psi\ox\Phi_1$ and $\Psi\ox\Phi_2$ are approximately LU equivalent with respect to joint the $N$-partite system on $\H\ox (\CC^d)^{\ox N}$.
The collection $(\M_x)_{x=1}^N$ is a \emph{universal embezzler} if every unit vector $\Psi\in\H$ is embezzling.
% A bipartite system $(\M_A,\M_B,\H)$ (this includes the assumption of Haag duality) is a universal embezzler if $(\M_A,\M_B)$ is.

\begin{lemma}\label{lem:lu-transitive-mbz}
    If $(\M_x)_{x=1}^N$ is LU transitive and nontrivial, then it is a universal embezzler.
\end{lemma}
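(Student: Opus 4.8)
The plan is to upgrade the statement to the stronger claim that the enlarged $N$-partite system on $\H\ox(\CC^d)^{\ox N}$, with local algebras $\M_x^{\mathrm{ext}}:=\M_x\ox(1^{\ox x-1}\ox\B(\CC^d)\ox 1^{\ox N-x})$, is \emph{itself} LU transitive. Since $\Psi\ox\Phi_1$ and $\Psi\ox\Phi_2$ are in particular unit vectors of this system, their approximate LU equivalence, and hence the embezzling property of every $\Psi$, follows at once. The route is to produce a single unitary $\Theta:\H\ox(\CC^d)^{\ox N}\to\H$ that \emph{simultaneously} intertwines all enlarged local algebras with the original ones, $\Theta\,\M_x^{\mathrm{ext}}\,\Theta^*=\M_x$ for $x=1,\dots,N$, so that the enlarged system is unitarily isomorphic to $(\M_x)_{x=1}^N$ on $\H$, which is LU transitive by hypothesis. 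LU transitivity transports across such an intertwiner: given unit vectors $\Xi,\Lambda$ in the enlarged space, one applies LU transitivity on $\H$ to $\Theta^*\Xi,\Theta^*\Lambda$ and conjugates the resulting local unitaries back by $\Theta$, obtaining unitaries in $\U(\M_x^{\mathrm{ext}})$ that witness $\Xi\lu\Lambda$ (using that $\Theta\,\U(\M_x)\,\Theta^*=\U(\M_x^{\mathrm{ext}})$ and that different parties commute).

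The construction of $\Theta$ is the technical heart, and it is here that I would use that each $\M_x$ is a type $\III_1$ factor (by \cref{lem:lu-transitive}), in particular properly infinite with all nonzero projections equivalent. The idea is to absorb the ancillas one party at a time. For fixed $x$, choose matrix units $\{e^{(x)}_{ij}\}_{i,j=0}^{d-1}\subset\M_x$ with $\sum_i e^{(x)}_{ii}=1$ (these exist as $\M_x$ is properly infinite), together with an isometry $w_x\in\M_x$ satisfying $w_x^*w_x=1$ and $w_xw_x^*=e^{(x)}_{00}$ (possible since $e^{(x)}_{00}\sim 1$ in a type $\III$ factor). Define $\Theta_x:\H\ox\CC^d_x\to\H$ by $\Theta_x(\xi\ox\ket{i})=e^{(x)}_{i0}w_x\xi$. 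A direct computation using the identity $w_x^*e^{(x)}_{00}w_x=1$ shows that $\Theta_x$ is unitary, that it acts identically on $\M_y$ for $y\ne x$ and leaves untouched the other ancilla registers, and that $\Theta_x(a\ox\ket{i}\bra{j})\Theta_x^*=e^{(x)}_{i0}w_x\,a\,w_x^*e^{(x)}_{0j}\in\M_x$ for $a\in\M_x$. Combined with $\Theta_x(\M_x'\ox 1)\Theta_x^*=\M_x'$ and the bicommutant theorem, this yields $\Theta_x\,\M_x^{\mathrm{ext}}\,\Theta_x^*=\M_x$.

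I would then compose the absorption steps, $\H\ox(\CC^d)^{\ox N}\xrightarrow{\Theta_N}\cdots\xrightarrow{\Theta_1}\H$ (each $\Theta_x$ tensored with the identity on the surviving registers), to obtain $\Theta$. Because $\Theta_x$ is built from operators in $\M_x$ and acts on only the $x$-th ancilla, it leaves $\M_y$ ($y\ne x$) and all other ancilla registers pointwise fixed; hence the absorption steps for distinct parties do not interfere, and one checks $\Theta\,\M_x^{\mathrm{ext}}\,\Theta^*=\M_x$ for every $x$ simultaneously. This completes the reduction to the hypothesis.

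The main obstacle is precisely that, in contrast to the bipartite case (\cref{prop:lu-transitive}), multipartite LU transitivity does not reduce to a single marginal — \cref{cor:local unitary equivalence} has no $N$-partite analogue — so one cannot simply argue through the types of the $\M_x^{\mathrm{ext}}$ and Haag duality. All the work lies in constructing the \emph{simultaneous} intertwiner $\Theta$ and verifying that the per-party absorption unitaries are mutually compatible; the key identity $w_x^*e^{(x)}_{00}w_x=1$ and the disjointness of the registers are exactly what make the simultaneous intertwining property hold.
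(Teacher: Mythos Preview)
Your proposal is correct and takes essentially the same approach as the paper. The paper's proof invokes, for each $x$, an ``$\M_x$-unitary $u_x:\H\to\H\ox\CC^d$'' (using that $\M_x$ is properly infinite by \cref{lem:lu-transitive}) and then forms the local product $u=\prod_x u_x$, obtaining exactly your intertwiner $\Theta^*$; your matrix-unit construction $\Theta_x(\xi\ox\ket i)=e^{(x)}_{i0}w_x\xi$ is precisely the standard way to build such an $\M_x$-unitary explicitly, so the two arguments differ only in the level of detail spelled out.
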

\begin{proof}
    By \cref{lem:lu-transitive}, each $\M_x$ is an infinite factor.
    Thus, for each $d\in \NN$, we can find an $\M_x$-unitary $u_x:\H\to\H\ox\CC^d$.
    Thus, if $(\N_x)_{x=1}^N$ denotes an finite dimensional $N$-partite system with Hilbert space $\K = (\CC^d)^{\ox N}$, then there is a local product $u = \prod_x u_x : \H \to \H \ox (\CC^d)^{\ox N}$ with $u\M_x u^* = \M_x\ox \N_x$.
    Thus, $(\M_x\ox\N_x)_{x=1}^N$ on $\H\ox(\CC^d)^{\ox N}$ is LU transitive as well.
    Consequently, if $\Phi_1,\Phi_2\in (\CC^d)^{\ox N}$ are unit vectors, then $\Psi\ox\Phi_1$ and $\Psi\ox\Phi_2$ are approximately LU equivalent.
\end{proof}

As a direct consequence of \cref{lem:lu-transitive-mbz,prop:ltw-system}, we obtain the existence of multipartite universal embezzlers:

\begin{corollary}
    For each $N\ge 2$, there exists a collection $(\M_x)_{x=1}^N$ of pairwise commuting factors, which is a universal embezzler. 
\end{corollary}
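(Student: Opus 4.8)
The statement is a direct corollary of the two results it cites, so the plan is to simply chain them together. The final claim asserts, for each $N\ge 2$, the existence of a collection $(\M_x)_{x=1}^N$ of pairwise commuting factors forming a universal embezzler. I would prove this by first invoking \cref{prop:ltw-system} to produce, for the given $N$, an LU transitive collection $(\M_x)_{x=1}^N$ of $N$ pairwise commuting type $\III_1$ factors on some Hilbert space $\H$. This collection is nontrivial in the sense of \cref{sec:lu-transitive}: we have $N\ge 2$ by hypothesis, and each $\M_x$ is a type $\III_1$ factor, hence in particular $\M_x\ne\CC1$.

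Having secured a nontrivial LU transitive collection, the second and final step is to apply \cref{lem:lu-transitive-mbz}, which states precisely that any nontrivial LU transitive collection is a universal embezzler. This immediately yields the desired conclusion. The entire argument is therefore the observation that \cref{prop:ltw-system} supplies the hypotheses of \cref{lem:lu-transitive-mbz}.

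There is essentially no obstacle here, since all the work has already been done in the two cited results; the only point requiring a word of care is verifying nontriviality. Concretely, I would note that the collection from \cref{prop:ltw-system} consists of type $\III_1$ factors, and a type $\III_1$ factor is certainly not equal to $\CC1$ (it is infinite-dimensional), so both conditions $N\ge 2$ and $\M_x\ne\CC1$ defining nontriviality in \cref{sec:lu-transitive} hold. With this remark in place, \cref{lem:lu-transitive-mbz} applies verbatim. The proof can thus be written in one or two sentences: combine \cref{prop:ltw-system} and \cref{lem:lu-transitive-mbz}, as the corollary statement itself already indicates by referencing both.

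Since the excerpt ends exactly at the statement of this corollary and its intended proof is the one-line combination ``As a direct consequence of \cref{lem:lu-transitive-mbz,prop:ltw-system}'' that already prefaces the statement, I would simply make that combination explicit, ensuring the reader sees that LU transitivity (from the LTW-based construction) plus nontriviality feeds into the universal embezzler conclusion.
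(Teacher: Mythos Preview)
Your proposal is correct and matches the paper's approach exactly: the paper states the corollary as ``a direct consequence of \cref{lem:lu-transitive-mbz,prop:ltw-system}'' and gives no further proof, so your explicit verification of nontriviality and chaining of the two cited results is precisely what is intended.
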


The converse to \cref{lem:lu-transitive-mbz} is false.
Indeed, LU transitivity requires that each $\M_x$ is a factor, which is not required for universal embezzlers.
This follows from the following structure result, which was conjectured in \cite{mbz_draft}, describing those bipartite systems (this assumes Haag duality) that are universal embezzlers.

\begin{proposition}\label{prop:bipartite-univ-mbz}
    A bipartite system $(\M_A,\M_B,\H)$ is a universal embezzler if and only if $\M_A$ and, therefore, $\M_B$ are von Neumann algebras of type $\III_1$.
\end{proposition}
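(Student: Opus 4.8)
The plan is to reduce \cref{prop:bipartite-univ-mbz} to the characterization of universally catalytic states already established in \cref{sec:catalytic states}, using the dictionary between embezzlement and catalysis provided by \cref{prop:mbz vs catalysis}. The key observation is that, by definition, a bipartite system $(\M_A,\M_B,\H)$ is a universal embezzler precisely when \emph{every} unit vector $\Psi\in\H$ is embezzling. By \cref{prop:mbz vs catalysis}, $\Psi$ is embezzling if and only if its marginal $\psi_A$ is a universally catalytic state on $\M_A$. Therefore the universal-embezzler property amounts to the statement that every normal state on $\M_A$ that arises as a marginal of a vector in $\H$ is universally catalytic.

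First I would handle the case distinction on whether $\M_A$ admits enough purifications. The cleanest route is to invoke \cref{prop:universal-catalyst}, which states that $\M_A$ is of type $\III_1$ if and only if \emph{every} normal state on $\M_A$ is universally catalytic. So the "if" direction is immediate: if $\M_A$ (hence $\M_B$, since they share the type in Haag-dual systems) is of type $\III_1$, then every normal state on $\M_A$ is universally catalytic, in particular every marginal $\psi_A$, so every unit vector $\Psi\in\H$ is embezzling by \cref{prop:mbz vs catalysis}, giving a universal embezzler. For the "only if" direction, I would argue contrapositively: suppose $\M_A$ is not of type $\III_1$. Then by \cref{prop:universal-catalyst} there exists a normal state $\psi_A$ on $\M_A$ that is \emph{not} universally catalytic. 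The remaining task is to realize such a bad state as the marginal of an actual vector $\Psi\in\H$.

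The main obstacle is precisely this realization step: I must ensure that a non-universally-catalytic normal state $\psi_A$ on $\M_A$ admits a purification in the given Hilbert space $\H$, not merely in the standard form $L^2(\M_A)$. Here I would use the type decomposition and the properties of purifications developed in \cref{sec:bipartite systems_basics,sec:std bipartite systems,sec:coupling}. In the properly infinite case (types $\I_\oo$, $\II_\oo$, or $\III$), \cref{lem:std if infinite} shows the bipartite system is standard, so by \cref{prop:std bipartite system}\ref{it:std bipartite system3} every normal state on $\M_A$ has a purification in $\H$, and I can pick $\psi_A$ to be any non-universally-catalytic state. In the semifinite cases (types $\I_n$, $\I_\oo$, $\II_1$, $\II_\oo$), \cref{thm:value of kappa min} already tells us $\kappa_{\min}(\M_A)=2$, so \emph{no} normal state is universally catalytic; it then suffices to exhibit any purifiable state, which exists by \cref{lem:rank condition for purification} (e.g.\ a state of sufficiently small support). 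The only genuinely delicate subcase is type $\III_0$ and type $\III_\lambda$ with $0<\lambda<1$: here some states are universally catalytic and others are not, but since these factors are properly infinite the system is standard by \cref{lem:std if infinite}, so \emph{all} normal states---including the non-universally-catalytic ones guaranteed by \cref{prop:universal-catalyst}---are purifiable in $\H$.

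Assembling these observations, I would conclude: if $\M_A$ is not type $\III_1$, pick a non-universally-catalytic normal state $\psi_A$ (which exists by \cref{prop:universal-catalyst}); in every non-$\III_1$ case the bipartite system is either standard (properly infinite) or semifinite with $\kappa_{\min}=2$, so $\psi_A$ (or, in the finite type $\I$/$\II_1$ cases, any purifiable state, all of which are non-catalytic) admits a purification $\Psi\in\H$; then $\Psi$ is not embezzling by \cref{prop:mbz vs catalysis}, so the system is not a universal embezzler. This contraposition closes the "only if" direction and completes the proof. The use of Haag duality enters only through the fact, recorded in \cref{sec:main-results} and \cref{prop:std bipartite system}, that $\M_A$ and $\M_B$ carry the same type, so the conclusion for $\M_A$ transfers verbatim to $\M_B$.
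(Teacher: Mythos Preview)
Your proposal is correct and follows the same reduction as the paper: embezzling $\Leftrightarrow$ marginal universally catalytic (\cref{prop:mbz vs catalysis} / \eqref{eq:kappa entanglement}), then invoke \cref{prop:universal-catalyst}. The paper's proof is much terser---it simply asserts the result ``follows directly'' from \cref{prop:universal-catalyst} via \eqref{eq:kappa entanglement}---whereas you spell out the purifiability issue that this elides, namely why in every non-$\III_1$ case a non-universally-catalytic state can actually be realized as the marginal of some vector in $\H$.

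One minor imprecision worth flagging: your invocation of \cref{lem:std if infinite} for types $\I_\oo$ and $\II_\oo$ is not quite justified, since that lemma needs \emph{both} $\M_A$ and $\M_B$ properly infinite, and the commutant of a type $\I_\oo$ or $\II_\oo$ factor may well be finite. This does no damage, however, because your semifinite clause (via $\kappa_{\min}(\M_A)=2$ from \cref{thm:value of kappa min}) already handles those types without any appeal to standardness---indeed there the existence of a purifiable state is trivial, since any unit vector in $\H$ supplies one. The genuinely delicate case is type $\III$ but not $\III_1$, and there both algebras are automatically type $\III$ (hence properly infinite), so \cref{lem:std if infinite} applies cleanly.
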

\begin{proof}
    Recall that a unit vector $\Psi\in\H$ is embezzling if and only if $\kappa(\Psi)=0$ (see \cref{sec:mbz}). Thus, $(\M_A,\M_B)$ is a universal embezzler if and only if $\kappa(\Psi)=0$ for all $\Psi\in \H$.
    Thus, by \eqref{eq:kappa entanglement}, the assertion follows directly from \cref{prop:universal-catalyst}, which is the corresponding statement about catalytic states on a single von Neumann algebra.
\end{proof}

\Cref{prop:bipartite-univ-mbz,prop:lu-transitive} imply \cref{thm:univ-mbz_i} in \cref{sec:main-results}.

\begin{remark}
    In the bipartite case, we know that LU transitivity implies Haag duality (see \cref{prop:lu-transitive}).
    It is an interesting open question whether bipartite universal embezzlers necessarily satisfy Haag duality.
\end{remark}

\begin{remark}
    The proof of \cref{prop:ltw-system} shows the following partial converse to \cref{lem:lu-transitive-mbz}:
    If $(\M_x)_{x=1}^N$ is an ITPFI multipartite system (see \cref{sec:idealized}) on $\H$, which is a universal embezzler, then $(\M_x)_{x=1}^N$ is LU transitive.
    It would be pleasing to get rid of the ITPFI assumption.
\end{remark}

\clearpage
\phantomsection
\addcontentsline{toc}{section}{References}
\emergencystretch=1em
\printbibliography

\end{document}